%% file: main.tex
\documentclass[letterpaper,twocolumn,10pt]{article}
\usepackage{usenix}
\usepackage{amsmath}
\usepackage{amsfonts}
\usepackage{amssymb}
\usepackage{amsthm}
\usepackage{subfig}
\usepackage{graphics}
\usepackage{graphicx}
\usepackage{array,makecell}
\usepackage{soul}
\usepackage[ruled,vlined,linesnumbered]{algorithm2e}
\usepackage{bbold}
\usepackage{hyperref}
\usepackage{url}
\usepackage{booktabs}
\usepackage{enumitem}
\usepackage{microtype}
\usepackage{xcolor}
\usepackage{float}  
\usepackage{tabularx}
\usepackage{etoolbox}  

\providecommand{\citep}[1]{\cite{#1}}
\providecommand{\citet}[1]{\cite{#1}}
\providecommand{\footref}[1]{\textsuperscript{\ref{#1}}}

\newif\iffullversion
\fullversiontrue
\newif\ifhighlightchanges
\highlightchangestrue
\definecolor{ChangeMark}{RGB}{0,70,200}

\newtheorem{theorem}{Theorem}
\newtheorem{proposition}[theorem]{Proposition}
\newtheorem{lemma}[theorem]{Lemma}

\newtheorem{remark}{Remark}
\newtheorem*{lemma*}{Lemma}

\newtheorem*{Proposition*}{Proposition}

\usepackage{nicefrac}

\makeatletter
\renewcommand{\Indentp}[1]{%
  \advance\leftskip by #1
  \advance\skiptext by -#1
  \advance\skiprule by #1}%
\renewcommand{\Indp}{\algocf@adjustskipindent\Indentp{\algoskipindent}}
\renewcommand{\Indm}{\algocf@adjustskipindent\Indentp{-\algoskipindent}}
\makeatother

\AtBeginEnvironment{algorithm}{%
  \setlength{\parskip}{0pt}%
  \setlength{\parindent}{0pt}%
}

\newcommand{\para}[1]{\textbf{#1.}}

\begin{document}
%
\date{}
\title{Robustness-Aware Evaluation and Enhancement of Mutation-Based Fuzzing for Bug Discovery}

\author{
{\rm Zirui Liu, Mengfan Xu, Juan Zhai, and Shiqing Ma}\\
University of Massachusetts Amherst\\
\texttt{\{zliu,mengfanxu,juanzhai,shiqingma\}@umass.edu}
\and
{\rm Barry Nelson}\\
Northwestern University\\
\texttt{nelsonb@northwestern.edu}
}



%


\maketitle

\begin{abstract}
Fuzzing is a powerful technique for discovering software vulnerabilities, and mutation-based fuzzing dominates in practice. Yet rigorous evaluation remains hard because the process is inherently random, and that randomness also limits bug detection. Prior work documents this variability empirically, but the community lacks a principled theoretical account of it with computable convergence and sample-complexity guarantees, even though it is central to fuzzer reliability.

We address both needs. First, we evaluate robustness by running the same fuzzer in independent campaigns and measuring how much its bug-trigger rate varies across runs after accounting for computational effort. We show that robustness estimated from $M$ campaigns of length $T$ has finite-trial error that decreases at the standard $M^{-1/2}$ rate, while the error from finite campaign length is bounded when bug correlations decay over time. We then introduce \emph{splitting}, a black-box wrapper that copies a fuzzer's queue state after a bug trigger and continues from that state in multiple branches, directing more effort toward the discovered region. For any realized split tree, branching cannot reduce the raw number of bug-triggering events relative to a single continuation path. The expected detection rate also increases when states receiving more branches tend to produce more bugs later. In a simplified setting, we further show that splitting reduces variance per unit compute when $p<\sqrt{2}-1$, where $p$ is the fraction of time spent in the bug region. At matched compute, splitting finds more real bugs per CPU-hour than the baseline in $38$ of $40$ Magma ground-truth cells (median $+52\%$, $34$ of $38$ individually significant) and never finds fewer distinct bugs. It converts CVE-2019-19926 from undetected ($0/20$ trials) to reliably detected ($20/20$; Fisher $p<10^{-4}$), with six additional detection improvements, five involving CVEs. On \emph{FuzzBench}, splitting finds more unique bugs in $53$ of $70$ pairs and reduces variation across campaigns in $66$ of $70$, with a median reduction of approximately $10\times$. Every split branch is counted toward the compute budget. With approximately $0.14\%$ overhead, splitting provides a practical way to both measure and improve fuzzers.

\end{abstract}


%

\input{section/introduction}

\input{section/related-work}
\input{section/problem-formulation}

\input{section/evaluation-framework}

\input{section/theoretical-analyses-evaluation}

\input{section/variance-reduction}

\input{section/numerical-experiments}

\input{section/conclusion-future-work}

\clearpage

\appendix

\section*{Ethical Considerations}

This work introduces new security-oriented metrics for assessing mutation-based fuzzing techniques. We conduct empirical evaluations of existing fuzzers on widely adopted, publicly accessible benchmark suites (e.g., FuzzBench), focusing on mature, legacy versions of these programs. All experiments are performed exclusively in non-production environments.

The study does not reveal previously unknown vulnerabilities in deployed real-world systems, nor does it create risks of negative impact on individuals. Consequently, responsible disclosure is not required, and the research poses minimal risk of infringing on individual rights. Overall, the work follows established responsible research standards and deliberately avoids any activities that could harm software users or operational systems.

\section*{Open Science}

In compliance with the open science policy, we make all research artifacts of this paper available. The artifacts comprise (i) the complete experiment data underlying every figure and table (FuzzBench per-trial logs for both the Monte Carlo and splitting arms, and the Magma ground-truth monitor data), (ii) the implementation of the offline and online splitting methods, (iii) all analysis and plotting scripts that reproduce every figure, table, and statistical test in the paper, and (iv) the complete per-fuzzer figure grids. All proofs and supporting derivations are in the appendices of this paper, not in the artifact. The artifacts are available anonymously at \url{https://anonymous.4open.science/r/split-fuzzing-artifact/README.md} and will be archived in a permanent repository upon publication.

\section{Implementation Details}\label{app:implementation}

\subsection{Offline Splitting Implementation (Section~\ref{sec:offline})}
\para{Modeling.}
We fix a benchmark–fuzzer pair $(b,f)$. Let $N_m(t)$ denote the cumulative bug count in trial $m$ of a \emph{prior, completed non-splitting campaign} at time $t \in [0,T]$---the historical data available to the offline policy---with across-trial mean curve $N(t) \;=\;\frac{1}{M}\sum_{m=1}^{M} N_m(t)$ and overall mean bug rate $r_{\mathrm{tot}} = N(T)/T$. The decision of \emph{when} to initiate Algorithm~\ref{alg:splitting} is guided by $N(\cdot)$ and $r_{\mathrm{tot}}$; the \emph{execution} of a split is triggered on an individual running trial.

\para{Sparsity score.}
We adopt splitting in regimes where bugs are relatively sparse. To determine whether bug discovery has entered a ``sparse tail'' after time $t$, we use the \textbf{tail-average rate estimator} and \textbf{relative sparsity ratio}
\begingroup\postdisplaypenalty=10000
\[
  r_{\mathrm{tail}}(t)\;=\;\frac{N(T)-N(t)}{T-t},   \rho(t)\;=\;\frac{r_{\mathrm{tail}}(t)}{r_{\mathrm{tot}}+\varepsilon},
  \qquad \varepsilon>0,
\]\endgroup
where $\rho(t)\ll 1$ indicates the remaining discovery rate is far below the overall average. To overcome temporary fluctuations, we impose a persistence (anti-spike) criterion over a future horizon $H$:
\[
  \rho_{H}(t)\;=\;\max_{u\in[t,\min\{t+H,T\}]}\rho(u).
\]
All predicates are evaluated on the experiment's discrete logging grid, and $\rho_H(t)$ only where the future interval is observable (no extrapolation). A richness-adaptive, fuzzer-conditioned schedule then sets each pair's earliest detection time $t_{b,f}$ and the ordered zone-entry times $z_{\theta_1},\dots,z_{\theta_K}$ at which the $K$ split stages fire, each the first logging-grid time after $t_{b,f}$ where the persistence-smoothed sparsity $\rho_H$ drops below the stage threshold $\theta_k$, with a minimum inter-stage gap $\Delta$; the closed-form schedule is given in Appendix~\ref{app:implementation}.

\iffullversion
\para{Richness-adaptive earliest detection time.}
The earliest detection time $t_{b,f}$ at which sparsity detection begins varies across pairs $(b,f)$; we compute it from the richness score
\[
\widetilde{B}_{b,f}=\log\big(1+N(T)\big).
\]
For each fixed fuzzer $f$, with reference quantiles $q_{\mathrm{low}}<q_{\mathrm{high}}$ of $\{\widetilde{B}_{b,f}\}_b$ across benchmarks, we normalize
\[
  u_{b,f}=\min\Big\{1,\max\Big\{0,\frac{\widetilde{B}_{b,f}-q_{\mathrm{low}}}{q_{\mathrm{high}}-q_{\mathrm{low}}}\Big\}\Big\},
\]
and map to $t_{b,f}\in[t_{\min},t_{\max}]$ by $t_{b,f}=t_{\min}+(t_{\max}-t_{\min})\cdot u_{b,f}^{\,p}$, a fuzzer-conditioned, monotone, continuous schedule.

\para{Zone-entry time and ordered stages.}
Given a nonincreasing threshold list $\Theta = (\theta_1,\ldots,\theta_K)$ for the $K$ stages at which Algorithm~\ref{alg:splitting} is applied, the first zone-entry time is
\[
  z_{\theta_1}\;=\;\min\Big\{t\in[t_{b,f},T): \rho_H(t)\le \theta_1\Big\},
\]
and for $k\ge 2$,
\[
  z_{\theta_k}\;=\;\min\Big\{t\in[\max\{t_{b,f},z_{\theta_{k-1}}+\Delta\},T): \rho_H(t)\le \theta_k\Big\},
\]
where the minimal gap $\Delta > 0$ prevents back-to-back splits in overlapping regions and keeps detections meaningfully separated; if the set is empty, stage~$k$ is skipped.
\fi

\para{Bug-triggered splitting on a running trial.}
If $z_{\theta_k}$ exists, the actual split on a running trial $i$ triggers at the first time the trial observes an additional bug,
\[
  s_{\theta_k,i}
  \;=\;
  \min\Big\{t\in(z_{\theta_k},\,T]: N_i(t)>N_i(z_{\theta_k})\Big\},
\]
skipping the stage if no such time exists. This ``zone-then-bug'' protocol separates \emph{when splitting becomes statistically justified} (based on $N(\cdot)$) from \emph{when a concrete opportunity to branch arises} (a new bug on a running trial)---key in extremely sparse regimes, where purely time-driven splitting would induce unnecessary branching.

\subsection{Online Splitting Implementation (Section~\ref{sec:ablation})}\label{app:ablation}

The online variant replaces the precomputed signal with a purely causal one: on each running trial $i$ it compares a short-window discovery rate to the trial's running average ($\rho_{\mathrm{on},i}$), enters the $K$ split stages when this past-only sparsity persists below the stage thresholds (warm-up $S=\max\{w,H\}$, minimum inter-stage gap $\Delta$), and splits on the next observed bug, using no prior trials or aggregate data; the closed-form online schedule is given in Appendix~\ref{app:implementation}.

\iffullversion
\para{Online sparsity score.}
To make the ablation independent of any precomputed aggregate signal or prior trials, the sparsity signal uses only \emph{past observations on the running trial} $i$ with bug curve $N_i(t)$: for a window length $w$ on the logging grid,
\[
\begin{aligned}
r_{\mathrm{avg},i}(t) &= \nicefrac{N_i(t)}{t+\varepsilon_t},
\qquad
r_{\mathrm{win},i}(t) = \nicefrac{N_i(t)-N_i(t-w)}{w},\\
\rho_{\mathrm{on},i}(t) &= \nicefrac{r_{\mathrm{win},i}(t)}{r_{\mathrm{avg},i}(t)+\varepsilon},
\end{aligned}
\]
with regularizers $\varepsilon_t,\varepsilon>0$; $\rho_{\mathrm{on},i}(t)\ll 1$ indicates the recent discovery rate is small relative to the trial's running average. Persistence uses a \emph{past} horizon:
\[
\rho^{-}_{H,i}(t)=\max_{u\in[t-H,t]}\rho_{\mathrm{on},i}(u),
\]

\para{Zone entry, ordered stages, \& splitting.}
Given stage thresholds $\Theta=(\theta_1,\ldots,\theta_K)$, the online variant defines zone-entry times
\emph{per running trial} using only running observations.
Let $S$ denote the earliest time at which the online statistic becomes well-defined.
Concretely, $S = \max\{w,H\}$ is the warm-up time required by the windowed difference $w$ and the
persistence horizon $H$.
Then, for each running trial $i$,
\[
\begin{aligned}
z_{\theta_1,i}&=\min\{t\ge S:\rho^{-}_{H,i}(t)\le \theta_1\},\\
z_{\theta_k,i}&=\min\{t\ge z_{\theta_{k-1},i}+\Delta:\rho^{-}_{H,i}(t)\le \theta_k\}
\ \ (k\ge2),
\end{aligned}
\]
with the same minimum gap $\Delta$ and stage-skipping rule (when the set is empty).
The execution rule remains the same: for a running trial $i$, the split at stage $k$ is triggered at the first logged time $s_{\theta_k,i}$ after $z_{\theta_k,i}$ when a new bug is observed:
\[
s_{\theta_k,i}=\min\{t>z_{\theta_k,i}: N_i(t)>N_i(z_{\theta_k,i})\}.
\]


\fi

The ablation preserves the splitting mechanism, modifies only the sparsity score, and is broadly deployable.

\section{Spectral Gap Estimation}

\begin{algorithm}[h]
\SetAlgoLined
\caption{Spectral Gap Estimation from Trajectory Data}\label{alg:spectral}
\textbf{Input:} Binary bug-output trajectories $\{Y_R^m(w,t)\}_{t=1}^T$, $Y_R^m=\mathbf 1_B(X_R^m)$, from $M$ trials\;
\textbf{Step 1:} Count the four binary transitions across all $M$ trials to form the empirical $2\times2$ law $\hat{Q}_R$\;
\textbf{Step 2:} Compute the eigenvalues of $\hat{Q}_R$; the non-unit eigenvalue is $\hat{\lambda}_2$\;
\textbf{Step 3:} $\hat{\gamma}^Y_{R} = 1 - |\hat{\lambda}_2|$\;
\textbf{Output:} Estimated projected gap $\hat{\gamma}^Y_{R}$ and relaxation time $\hat{\delta}_R = 1/\hat{\gamma}^Y_{R}$
\end{algorithm}

\subsection{Assumptions}\label{app:assumption}

We discuss the justification and scope of our assumptions on the fuzzer's transition process here.

\para{Time-homogeneity (stationarity)} We assume the transition matrix $P_R^X$ is fixed across time steps. This holds for mutation-based fuzzers whose mutation rules are fixed in code: given the current input $x$, the distribution over the next input $R(x)$ depends only on $x$ and the fuzzer's fixed mutation operator, not on the time step $t$. Coverage-guided fuzzers (e.g., AFL++, libFuzzer) maintain internal state such as a corpus and scheduling weights, but these are deterministic functions of the trajectory $\{X_R(w,s)\}_{s \leq t}$ and can be folded into an augmented state. For fuzzers or regimes that genuinely violate stationarity (e.g., time-dependent temperature schedules, LLM-based mutation, or the slow drift of the output surrogate that Section~\ref{subsec:empirical-assumptions} measures as easy bugs deplete), we handle the non-stationary case via windowed estimation with random restarting: partition the campaign into approximately stationary windows, estimate within windows, and aggregate. Section~\ref{subsec:empirical-assumptions} quantifies the practical extent of this idealization on the output surrogate.

\para{Geometric ergodicity}
The theoretical results assume that the \emph{input} chain is geometrically ergodic, so that the stationary autocovariances of the bug indicator obey a geometric envelope $|\Gamma(h)|\le C\rho^{\,h}$ (Lemma~\ref{lem:gap-transfer}); the binary output process is never assumed to be Markov. This excludes only degenerate ``trapped'' fuzzers that remain stuck in a single input region indefinitely, and random restarting (a standard feature of most fuzzers) makes it plausible within a campaign phase. Fixed implementation code makes time-homogeneity and geometric decay plausible but does not prove them: they are explicit assumptions of the theory. Ergodicity is \emph{not} required for the Markov-chain modeling itself, only for the convergence guarantees, and the autocorrelation analysis of Section~\ref{subsec:empirical-assumptions} is a model diagnostic consistent with geometric decay over the observed lag range, not a proof of geometric ergodicity.

\subsection{Additional Metrics and Results}\label{app:metrics}

Additionally, we evaluate (i) how well the computable variance estimates approximate the underlying asymptotic variance, and (ii) the discrepancy between the population variance and the same target:
\[
A_R(T) = \bigl|T\,S^2_{M,R}(T) - \sigma^2\bigr|, \quad B_R(T) = \bigl|T\, v_R(T) - \sigma^2\bigr|
\]
where $\sigma^2 = \lim_{T\to\infty}T\,v_R(T)$ and $v_R(T)=\operatorname{Var}[Z_{1,R}(T)]$ is the variance of a single campaign's outcome of a single independent campaign. It is distinct from the variance of the average over campaigns, $\operatorname{Var}[\Tilde{P}_R(T)]=v_R(T)/M$; the two differ by exactly a factor $M$, and this paper measures the former throughout.

\begin{theorem}[Finite-$M$, finite-$T$ error of $T\,S^2_{M,R}(T)$]\label{thm:main_body}
Let the $M$ root campaigns be independent and identically distributed, and write
$W_{m,T}=\sqrt T\bigl(Z_{m,R}(T)-\mathbb{E}Z_{m,R}(T)\bigr)$, $v^\star_T=T\,v_R(T)$ and $\mu_{4,T}=\mathbb{E}[W_{m,T}^4]$.
\begin{enumerate}
\item \emph{Finite-trial error.} The unbiased variance across campaigns satisfies the exact identity
\[
\operatorname{Var}\bigl(T\,S^2_{M,R}(T)\bigr)
=\frac1M\Bigl[\mu_{4,T}-\tfrac{M-3}{M-1}\,(v^\star_T)^2\Bigr],
\]
so for any $\alpha\in(0,1)$, with probability at least $1-\alpha$,
\[
\bigl|T\,S^2_{M,R}(T)-v^\star_T\bigr|
\;\le\;
E_M(\alpha):=\sqrt{\frac{\mu_{4,T}-\frac{M-3}{M-1}(v^\star_T)^2}{M\,\alpha}} .
\]
The rate is $M^{-1/2}$ and the confidence level appears explicitly.
\item \emph{Finite-horizon bias.} If the stationary autocovariances of the bug indicator obey
$|\Gamma(h)|\le C\rho^{\,h}$ for some $\rho\in(0,1)$, then, using
$T\,v_R(T)=\Gamma(0)+2\sum_{h=1}^{T-1}\bigl(1-\tfrac hT\bigr)\Gamma(h)$,
\[
\begin{aligned}
B_R(T)&=\bigl|T\,v_R(T)-\sigma^2\bigr|\\
&\le\;U_{B,R}(T):=\frac{2C\rho}{T(1-\rho)^2}+\frac{2C\rho^{T}}{1-\rho}.
\end{aligned}
\]
\item \emph{Initialization.} Let
\[
b_{\mathrm{start}}(T):=\lvert T\operatorname{Var}_{\pi_0}(Z_{1,R}(T))-T\operatorname{Var}_{\pi}(Z_{1,R}(T))\rvert
\]
be the effect of starting from $\pi_0$ rather than from the stationary law; it vanishes for a stationary start. If the start also satisfies $\lvert\operatorname{Cov}_{\pi_0}(Y_s,Y_t)-\Gamma(t-s)\rvert\le C_0\rho^{\,s-1}\rho^{\,t-s}$ for $1\le s\le t$ and some finite $C_0$, then
\[
b_{\mathrm{start}}(T)\;\le\;\frac{C_0}{T(1-\rho)}+\frac{2C_0\rho}{T(1-\rho)^{2}} .
\]
In the two-state model the same term is known exactly, with no such constant. Write $\pi_1$ for the stationary probability that a step triggers a bug, $\lambda\in(-1,1)$ for the second eigenvalue, $p_0$ for the probability that the first step triggers a bug, and $D=p_0-\pi_1$ for the difference between the two. Then\label{eq:exactstart}
\[
\begin{aligned}
b_{\mathrm{start}}(T)&=\tfrac1T\bigl|D(1-2\pi_1)L_T-D^{2}Q_T\bigr|,\\
L_T&=\sum_{t=1}^{T}(2t-1)\lambda^{t-1},\qquad
Q_T=\Bigl(\tfrac{1-\lambda^{T}}{1-\lambda}\Bigr)^{2},
\end{aligned}
\]
and $\max_{p_0\in[0,1]}b_{\mathrm{start}}(T)$ is available in closed form, so the term can be reported without estimating $p_0$.

\item \emph{Combined.} With probability at least $1-\alpha$,
\[
A_R(T)\;\le\;U_{A,R}(T):=E_M(\alpha)+b_{\mathrm{start}}(T)+U_{B,R}(T).
\] Applying part~1 at level $\alpha/|\mathcal R|$ gives the guarantee simultaneously for all $R\in\mathcal R$ with probability at least $1-\alpha$.
\end{enumerate}
(Proof in Appendix~\ref{app:pf-thm-main}.)
\end{theorem}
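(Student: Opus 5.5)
The plan follows the four parts in order, with part~4 a triangle inequality assembling the others.

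For part~1, note that $T\,S^2_{M,R}(T)$ is exactly the unbiased sample variance of the i.i.d. centered variables $W_{1,T},\dots,W_{M,T}$: the factor $T$ passes inside the squares, and centering by $\mathbb{E}Z_{m,R}(T)$ does not change a sample variance. The classical formula for the variance of the unbiased sample variance of i.i.d. data with variance $\sigma^2$ and fourth central moment $\mu_4$ is $\operatorname{Var}(S^2)=\frac1M\bigl(\mu_4-\frac{M-3}{M-1}\sigma^4\bigr)$. Substituting $\sigma^2=v^\star_T$ and $\mu_4=\mu_{4,T}$ gives the identity. To keep the argument self-contained, I would verify it by writing $S^2=\frac{1}{M(M-1)}\sum_{i<j}(W_i-W_j)^2$ as a U-statistic. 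Then I would expand its second moment by counting index pairs that share zero, one, or two indices. Unbiasedness gives $\mathbb{E}[T S^2_{M,R}(T)]=v^\star_T$, and Chebyshev's inequality at level $\alpha$ yields $E_M(\alpha)$.

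For part~2, take the stated identity $T v_R(T)=\Gamma(0)+2\sum_{h<T}(1-h/T)\Gamma(h)$ together with $\sigma^2=\Gamma(0)+2\sum_{h\ge1}\Gamma(h)$. The sum converges absolutely by the geometric envelope. Subtracting gives
\[
\sigma^2-Tv_R(T)=\tfrac{2}{T}\sum_{h=1}^{T-1}h\,\Gamma(h)+2\sum_{h\ge T}\Gamma(h).
\]
I would then bound $|\Gamma(h)|\le C\rho^h$. The first sum is bounded using $\sum_h h\rho^h=\rho/(1-\rho)^2$, and the second is the geometric tail $\rho^T/(1-\rho)$. This produces $U_{B,R}(T)$ directly.

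For part~3, write $T\operatorname{Var}_{\pi_0}(Z)=\frac1T\sum_{s,t}\operatorname{Cov}_{\pi_0}(Y_s,Y_t)$, and similarly under $\pi$ with $\Gamma(t-s)$. The difference is then at most $\frac1T\sum_{s,t}C_0\rho^{\min-1}\rho^{|t-s|}$. The diagonal terms contribute $\frac{C_0}{T}\sum_s\rho^{s-1}\le \frac{C_0}{T(1-\rho)}$. The off-diagonal terms contribute $\frac{2C_0}{T}\sum_{s}\rho^{s-1}\sum_{h\ge1}\rho^h\le\frac{2C_0\rho}{T(1-\rho)^2}$.

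The two-state exact formula is the main computational obstacle. For the two-state chain, $P(Y_t=1)=\pi_1+D\lambda^{t-1}$, and for $s\le t$,
\[
\operatorname{Cov}(Y_s,Y_t)=\mathrm{Var}(Y_s)\,\lambda^{t-s},\qquad \mathrm{Var}(Y_s)=q_s(1-q_s),\quad q_s=\pi_1+D\lambda^{s-1}.
\]
Expanding, $q_s(1-q_s)-\pi_1(1-\pi_1)=D(1-2\pi_1)\lambda^{s-1}-D^2\lambda^{2(s-1)}$. Summing against $\lambda^{|t-s|}$ over all pairs collects the linear-in-$D$ terms into $\sum_{s,t}\lambda^{\max(s,t)-1}=\sum_t(2t-1)\lambda^{t-1}=L_T$. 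The $D^2$ terms collect into $\sum_{s,t}\lambda^{s-1}\lambda^{t-1}=Q_T$. This check is the step where bookkeeping errors are most likely, so I would verify it at $T=1,2$. The maximum over $p_0$ is then the maximum of $|aD-bD^2|$ on the interval $D\in[-\pi_1,1-\pi_1]$, which is attained at an endpoint or at the vertex $D=a/(2b)$. This gives the closed form.

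For part~4, use the triangle inequality
\[
|TS^2-\sigma^2|\le|TS^2-v^\star_{T,\pi_0}|+|v^\star_{T,\pi_0}-v^\star_{T,\pi}|+|v^\star_{T,\pi}-\sigma^2|.
\]
On the part~1 event the first term is at most $E_M(\alpha)$, and the other two terms are $b_{\mathrm{start}}(T)$ and $U_{B,R}(T)$. Simultaneity over $\mathcal R$ then follows from a union bound with part~1 applied at level $\alpha/|\mathcal R|$.
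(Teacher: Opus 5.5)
Your proposal is correct and follows the paper's proof essentially step for step. The steps are the classical variance-of-the-sample-variance identity plus Chebyshev for part~1, and subtracting the two autocovariance series under the geometric envelope for part~2. Part~3 uses the diagonal/off-diagonal geometric sums, and part~4 uses the triangle inequality with a union bound. For the two-state term, your $\operatorname{Cov}(Y_s,Y_t)=q_s(1-q_s)\lambda^{t-s}$ is the same calculation as the paper's $m_s(\pi_1+(1-\pi_1)\lambda^{t-s})-m_sm_t$, and you both maximize over $p_0$ at the vertex or an endpoint. Your explicit split of the triangle inequality through $v^\star_{T,\pi_0}$ and $v^\star_{T,\pi}$ is slightly cleaner than the paper's one-line combination, because it makes clear that the part~2 identity is applied to the stationary-start variance.
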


No general envelope on the autocovariances controls $b_{\mathrm{start}}$, because the part of the start discrepancy that is quadratic in the starting offset does not decay with the mixing rate; the remark below bounds it under one extra condition, and within the two-state model Theorem~\ref{thm:main_body} gives it exactly, which is what Table~\ref{tab:spectral} evaluates.

The three terms are separate and each is carried by one quantity: $E_M(\alpha)$ is finite-trial estimation error and is the \emph{only} term in which $M$ appears; $b_{\mathrm{start}}$ is initialization; $U_{B,R}$ is finite-horizon autocovariance bias. Nothing above assumes the binary output process is itself Markov. The envelope $(C,\rho)$ is a property of the true autocovariances, supplied for a reversible input chain by Lemma~\ref{lem:gap-transfer} with $C=\pi(1)\pi(0)\le\tfrac14$ and $\rho=1-\gamma^X_R$.

\para{Computable two-state instance.} Table~\ref{tab:spectral} evaluates the same three terms inside the phase-wise two-state model, with $(C,\rho)=(\tfrac12,1-\hat\gamma^Y_R)$ and the finite-trial term at its two-state calibrated value $1/\sqrt M$. That instance is a computed inside the model interval within the stated two-state model, not a guarantee for the general chain. For a reversible input chain the projected output gap is never smaller than the input-chain gap, $\gamma^Y_R\ge\gamma^X_R$, so it is not in general a conservative substitute for it.

We remark that (i) under strict stationarity the second term carries a $1/T$ factor (it arises as $C_2/T$ in the proof), which tightens $U_{A,R}$ to within $\approx 3\times$ of the observed deviation on the low-drift pairs. We state and compute the looser form throughout, which only weakens the bound; for the lone high-drift cell (Honggfuzz/stb, $|\Delta\hat\gamma|=0.86$, $\hat\pi_1:0.39\to0.05$) we flag $|\Delta\hat\gamma|$ and treat the campaign estimate as a phase-average handled by windowed estimation (Appendix~\ref{app:assumption}), where the unconditional bound continues to certify $A_R\le U_{A,R}$. (ii) The concentration is in the number of trials $M$, at the standard $M^{-1/2}$ statistical rate; the horizon $T$ enters only through the bias terms (the spectral-gap and autocovariance-tail contributions), not through the high-probability rate.

\subsection{Assumptions and Proofs of the Core Guarantees}\label{app:sketches}

\noindent The condition on the start in Part 3 is assumed rather than derived from the mixing rate, since a mixing-rate argument controls only the part of the discrepancy that is linear in the starting offset. In the two-state model the quadratic part is $D^2Q_T/T$ with $Q_T=\bigl((1-\lambda^T)/(1-\lambda)\bigr)^2$, which tends to $D^2/T$ as $\lambda\to0$.

\para{Which form of the starting-state term the reported numbers use.} Table~\ref{tab:spectral} uses the exact two-state expression of part~3, taken at its largest over every start, so it needs neither $C_0$ nor an estimate of $p_0$. On two-state parameters spanning our fits that expression is up to $59\times$ smaller than the general bound in the same part.

\para{Why a gap-only initialization term cannot be tightened.} The $1/(\gamma^X_R)^2$ growth of the second term is the true worst-case rate, not looseness that could be removed. Take the symmetric two-state chain $P=\bigl(\begin{smallmatrix}1-\varepsilon & \varepsilon\\ \varepsilon & 1-\varepsilon\end{smallmatrix}\bigr)$ with $\varepsilon=\gamma^X_R/2$; its non-unit eigenvalue is $1-\gamma^X_R$ (gap exactly $\gamma^X_R$) and $\pi=(\tfrac12,\tfrac12)$. For the centered indicator $f(x)=\mathbf 1\{x{=}1\}-\tfrac12$, a worst-case point-mass start gives transient correlation sum $S=\sum_{k\ge1}(1-\gamma^X_R)^k=(1-\gamma^X_R)/\gamma^X_R$, so the $\pi_0$-vs-$\pi$ effort-normalized-variance discrepancy is $C_2=(S/2)^2=(1-\gamma^X_R)^2/(4(\gamma^X_R)^2)$. Hence $C_2=\Theta(1/(\gamma^X_R)^2)$ with $C_2(\gamma^X_R)^2\to\tfrac14$ as $\gamma^X_R\to0$, a matching lower bound: \emph{no} bound expressed through $\gamma^X_R$ alone can improve the $1/(\gamma^X_R)^2$ rate. Conversely a campaign that starts already in the long-run state gives discrepancy $0$, and a start within $O(\gamma^X_R)$ of stationary gives $O(1)$; thus at fixed $\gamma^X_R$ the true deviation ranges from $O(1)$ to $\Theta(1/(\gamma^X_R)^2)$ with the trajectory's distance from stationarity, which $\gamma^X_R$ does not carry. So a bound written through $\gamma^X_R$ alone cannot separate a campaign that starts in the long-run state from one that starts far from it; the remaining start-to-stationarity distance is what the exact two-state start term of Theorem~\ref{thm:main_body} supplies, and under near-stationary operation the $C_2/T$ form already lies within a small factor of it.

\noindent The identity above agrees with direct evaluation of the $O(T^2)$ double sum to $7\times10^{-16}$.

\para{The bound needs no Markov assumption on the bug indicator.}\label{thm:direct-nonlumped}
Theorem~\ref{thm:main_body} is stated through the autocovariances $\Gamma(h)$ of the bug indicator itself and a bound $|\Gamma(h)|\le C\rho^{\,h}$ on them. Nothing in it assumes that $\{Y_t\}$ is a Markov chain; the envelope is a property of the true autocovariances, whatever process produced them. Lemma~\ref{lem:gap-transfer} supplies such an envelope for a reversible input chain, with $C=\pi(1)\pi(0)\le\tfrac14$ and $\rho=1-\gamma^X_R$, so the hypothesis is met without any assumption on the output. Substituting the two-state values recovers the computable instance of Table~\ref{tab:spectral} term for term; since $\pi(1)\pi(0)\le\tfrac14<\tfrac12$, that instance is the larger of the two whenever the fitted gap does not exceed the input-chain gap.

\begin{lemma}[Geometric output-autocovariance]\label{lem:gap-transfer}
Let the input chain be reversible with stationary law $\pi$ and absolute spectral gap $\gamma\in(0,1]$. For $f=\mathbf 1_B$ and every $t\ge 0$,
$\bigl|\mathrm{Cov}_\pi(f(X_0),f(X_t))\bigr|\le \mathrm{Var}_\pi(f)\,(1-\gamma)^t$,
so $\sigma^2=\mathrm{Var}_\pi(f)+2\sum_{t\ge1}\mathrm{Cov}_\pi(f(X_0),f(X_t))\le \mathrm{Var}_\pi(f)\,\tfrac{2-\gamma}{\gamma}$ is finite, and the autocovariance-tail term of Theorem~\ref{thm:main_body} holds at this geometric rate, with no assumption that $\{Y_R(w,t)\}$ is Markov.
(Proof in Appendix~\ref{app:pf-lem-gap}.)
\end{lemma}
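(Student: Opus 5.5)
The plan is to pass to the Hilbert space $L^2(\pi)$, where reversibility makes the transition operator $P$ self-adjoint. Write $\bar f=f-\pi(f)$ for the centered indicator, which lies in the orthogonal complement $L^2_0(\pi)$ of the constants. Stationarity gives
\[
\mathrm{Cov}_\pi\bigl(f(X_0),f(X_t)\bigr)=\mathbb{E}_\pi\bigl[\bar f(X_0)\,\bar f(X_t)\bigr]=\langle \bar f,P^t\bar f\rangle_\pi ,
\]
which follows by conditioning on $X_0$ and using $\mathbb{E}[\bar f(X_t)\mid X_0=x]=(P^t\bar f)(x)$.

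\textbf{Step 1 (single-lag bound).} Because $P$ is self-adjoint and preserves $L^2_0(\pi)$, the restriction of $P$ to $L^2_0(\pi)$ has operator norm equal to its spectral radius there. By the definition of the absolute spectral gap, that norm is $\max\{|\lambda|:\lambda\neq 1\}=1-\gamma$. Cauchy--Schwarz then gives
\[
|\langle \bar f,P^t\bar f\rangle_\pi|\le \|\bar f\|_\pi\,\|P^t\bar f\|_\pi\le (1-\gamma)^t\|\bar f\|_\pi^2 .
\]
Since $\|\bar f\|_\pi^2=\mathrm{Var}_\pi(f)=\pi(B)(1-\pi(B))$, this is the displayed inequality. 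In a finite state space one can say the same thing concretely: expand $\bar f$ in an orthonormal eigenbasis of $P$, so that $\langle\bar f,P^t\bar f\rangle_\pi=\sum_j \lambda_j^t c_j^2$ with $\sum_j c_j^2=\mathrm{Var}_\pi(f)$, and bound each $|\lambda_j|^t$ by $(1-\gamma)^t$. The case $\gamma=1$ is trivial, since then $P\bar f=0$ and all lags $t\ge 1$ have zero covariance.

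\textbf{Step 2 (finite asymptotic variance).} Summing the Step 1 bound over lags gives
\[
\sigma^2\le \mathrm{Var}_\pi(f)\Bigl(1+2\sum_{t\ge1}(1-\gamma)^t\Bigr)=\mathrm{Var}_\pi(f)\Bigl(1+\tfrac{2(1-\gamma)}{\gamma}\Bigr)=\mathrm{Var}_\pi(f)\,\tfrac{2-\gamma}{\gamma}.
\]
The series is absolutely convergent because $\gamma>0$. This convergence also justifies the identification $\sigma^2=\lim_T T v_R(T)$ through the Ces\`aro formula $T\,v_R(T)=\Gamma(0)+2\sum_{h=1}^{T-1}(1-h/T)\Gamma(h)$ quoted in Theorem~\ref{thm:main_body}: the weights $(1-h/T)$ tend to $1$ and are dominated by a summable sequence, so dominated convergence applies.

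\textbf{Step 3 (connecting to the theorem).} Step 1 is exactly the envelope $|\Gamma(h)|\le C\rho^h$ assumed in part 2 of Theorem~\ref{thm:main_body}, with $C=\mathrm{Var}_\pi(f)=\pi(1)\pi(0)\le\tfrac14$ and $\rho=1-\gamma$. The bound $U_{B,R}$ therefore holds at this rate. If $\gamma=1$ the covariances vanish for $h\ge1$ and the bias term is trivially zero, so that case is covered too. Nothing in the argument uses any property of $Y_t=f(X_t)$ beyond its being a function of the Markov input chain, so no Markov assumption on the output is needed.

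The only point needing care is the identification in Step 1 of the operator norm on $L^2_0(\pi)$ with $1-\gamma$. This identification is what requires reversibility: for a non-reversible chain, spectral radius and norm can differ, and the argument would have to go through $\|P^t\|$ directly. If the state space is infinite, I would first establish the bound via the spectral theorem for the bounded self-adjoint operator $P$, using that its spectrum restricted to $L^2_0(\pi)$ lies in $[-(1-\gamma),\,1-\gamma]$.
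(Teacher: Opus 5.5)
Your proposal is correct and follows the same route as the paper's proof. Both center $f$, write the lag-$t$ covariance as $\langle \bar f, P^t\bar f\rangle_\pi$, use reversibility to get contraction by $1-\gamma$ per step on mean-zero functions, and sum the geometric series to $\mathrm{Var}_\pi(f)\,\tfrac{2-\gamma}{\gamma}$. You are simply more explicit (Cauchy--Schwarz, norm equal to spectral radius for self-adjoint $P$, the $\gamma=1$ case, dominated convergence for $\lim_T T\,v_R(T)=\sigma^2$); like the paper, you tacitly assume eigenvalue $1$ is simple, which is what makes $\max\{|\lambda|:\lambda\neq1\}$ the norm of $P$ on $L^2_0(\pi)$.
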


\noindent Three clarifications. (i) For non-reversible kernels the same qualitative conclusion, finiteness of $\sigma^2$ and a geometric autocovariance tail, holds for non-reversible kernels through the spectral gap of~\cite{choi2020metropolis}, at a possibly halved rate; in that case $\gamma^X_R$ denotes that reversibilization's gap. It is distinct from the quantity Algorithm~\ref{alg:spectral} reports, the projected bug-output gap $\hat\gamma^Y_R$. (ii) The lemma reproduces the geometric \emph{rate} of Theorem~\ref{thm:main_body}'s tail; the exact leading constant there ($c_\pi=\tfrac14+\pi(1)/(4\max\{\pi(0),\pi(1)\})\le\tfrac12$) comes from the two-state surrogate, whereas the lemma's constant is $\mathrm{Var}_\pi(f)=\pi(1)\pi(0)$. (iii) The two-state estimate $\hat\gamma$ (Algorithm~\ref{alg:spectral}) equals $\gamma$ when the bug indicator happens to behave exactly like a two-state chain, and otherwise estimates an effective decay rate; we do \emph{not} claim an analytical bound on how far it departs from one, but report that departure small and stable empirically (Section~\ref{subsec:empirical-assumptions}). Hence $U_{A,R}$ is a plug-in estimate of the population bound, not a quantity presupposing a Markov output.

\para{Assumptions for Theorem~\ref{thm:splitting_var}.}
\emph{A1 (post-split dynamics):} each post-split replica $k\in\{1,2\}$ evolves by an AR(1)-type update $X_{k,t}=\phi(X_{k,t-1})\cdot W_{k,t}+Z_{k,t}\cdot W_{k,t}$, where $W_{k,t}$ selects mutation locations and $Z_{k,t}$ injects values.
\emph{A2 (domain structure):} the bug set $B$ is a discrete set of inputs, each bug tied to specific location/value patterns.
\emph{A3 (correlation control):} $\mathbb{P}(X_{1,t}\in B,\,X_{2,t}\in B)\le(1-\rho)\,\mathbb{P}(X_{1,t}\in B)\,\mathbb{P}(X_{2,t}\in B)$ for some $\rho\in[0,1]$; a coupled-noise construction attains $\rho=\min\{1,1/(|D|\pi(1))\}$, while the independent forks our implementation uses correspond to $\rho=0$, where the variance reduction stems from the $1/k_t$ covariance averaging alone.
\emph{A4 (geometric ergodicity)} and \emph{A5 (rare-bug regime, $\pi(1)\ll\pi(0)$)}.
\emph{Clustered regime:} the autocovariance contributions to $\sigma^2$ are dominated by bug-region visits, and splitting raises $k_t$ precisely on those time indices.

\input{section/supplementary}
\input{section/exp}

\input{section/proof}

\clearpage 
\newpage 
\bibliographystyle{plainurl}
\bibliography{main}
\end{document}

%% file: section/introduction.tex
\section{Introduction}

Fuzzing, iteratively generating inputs and feeding them to a program to trigger bugs, is among the most effective techniques for discovering software vulnerabilities~\citep{klees2018evaluating, schloegel2024sok, li2021unifuzz}.
Yet fuzzing is inherently stochastic: when we evaluated AFL++ on poppler across 20 independent 23-hour campaigns, the number of unique bug signatures per trial ranged from 1 to 13, a 13-fold difference.
This variability is not an anomaly; it is the norm across fuzzers and benchmarks~\citep{klees2018evaluating, schloegel2024sok, madadi2026bugs}. It raises a fundamental question: \emph{which result should a practitioner trust, and how many trials are enough?}

Despite extensive work on fuzzer evaluation~\citep{klees2018evaluating, li2021unifuzz, schloegel2024sok}, the community lacks a principled answer. Effectiveness is measured by bug counts or code coverage and stability by confidence intervals or ad-hoc variance estimates, none of them guaranteed to capture the true stochastic variation of the fuzzing process~\citep{schloegel2024sok}; and coverage, the most common proxy, does not reliably correlate with bug-finding ability~\citep{schloegel2024sok}. A rigorous, bug-centric framework for quantifying fuzzer robustness remains an open problem.

Variability in fuzzing plays a dual role. It undermines reliability: practitioners must run many expensive trials to draw sound conclusions. But it also reflects how a fuzzer traverses the input space, jumping between regions under stochastic mutation. When bugs are clustered, a fuzzer that stays in a bug-rich region finds more bugs; one that drifts away wastes effort. So \emph{controlling} variability, not merely measuring it, can directly improve bug-finding. Yet no principled tools for this exist.


This paper addresses both. We model fuzzing as a Markov chain over the \emph{input} space to capture stochasticity and define robustness as the effort-normalized variance of the bug-detection rate. Inspired by rare-event simulation for stochastic systems, we then introduce \emph{splitting} to enhance fuzzers: when a trial discovers a bug, we carefully fork the trajectory into independent descendant chains. This black-box plug-in exploits bug-rich regions, reducing variability while concentrating effort according to how bugs cluster, thereby illuminating the dual role of variability. 

\subsection{Main Contributions}
\begin{enumerate}[leftmargin=*]
\item \textbf{Robustness metric with theoretical guarantees.} We model fuzzing as a stochastic process and define robustness by how much the bug-trigger rate varies across independent fuzzing campaigns, normalized by computational effort. With $M$ independent campaigns, the sample variance is unbiased for this finite-horizon variability, and its estimation error decreases at the standard $M^{-1/2}$ rate. We further bound the error due to finite campaign length when temporal correlations decay geometrically, without requiring the observed output (bug/no-bug)  process itself to be Markov (Theorem~\ref{thm:main_body}). For the two-state approximation used in our diagnostics, we derive an exact correction for initialization that is up to $46\times$ tighter than the worst-case bound and provides practical guidance on the number and length of trials (empirically $M\geq20$, $T\geq12$\,h). Both effectiveness and robustness are thus measured from the same bug-detection-rate object: its mean captures effectiveness, while its effort-normalized variance captures robustness.


\item \textbf{Splitting-based enhancement.} We introduce \emph{splitting}, a black-box enhancement for mutation-based fuzzing that uses bug signals directly, requires no hand-tuned importance function, and leaves fuzzer internals unchanged. To our knowledge, it is the first method to combine these properties and the first effort-reallocation method that guarantees no loss in raw bug-event count relative to a continuation path. When a bug is detected, splitting forks the current trajectory and directs additional effort from that state. We prove that, for \emph{any} realized split tree, its raw bug-event count is no smaller than that of an embedded continuation path (Theorem~\ref{bug_rate_splitting}). We further show that its expected bug-detection rate increases when states selected for more branching also tend to produce more bugs in the future (Proposition~\ref{prop:sizebias}). Empirically, splitting increases the bug-event rate in $39$ of $40$ Magma cells and the terminal weighted unique-bug rate in all $70$ FuzzBench pairs. In an idealized two-level setting, we show that splitting reduces variance per unit compute when the fraction $p$ of time spent in the bug region satisfies $p<\sqrt{2}-1$ (Theorem~\ref{thm:splitting_var}). At strictly matched CPU, the deployed adaptive tree reduces variance in $66$ of $70$ pairs, with a median reduction of approximately $10\times$. Unlike power scheduling~\citep{AFLFast, lyu2019mopt}, splitting requires no modification to fuzzer internals and can be composed with any mutation-based fuzzer. Unlike independent replication, it allocates additional effort only after a bug is observed and starts that effort from the discovered state. At matched compute, splitting finds more unique bugs than independent replication in $53$ of $70$ FuzzBench pairs and achieves a higher bug-event rate than an eight-worker synchronizing pool on all eight Magma programs (Section~\ref{subsec:corpussync}).


\item \textbf{Comprehensive empirical validation.} On \emph{Magma}~\citep{hazimeh2020magma}, splitting achieves a higher real-bug rate in $38$ of $40$ fuzzer-program pairs at matched CPU (median $+52\%$; significant in $34$ of $38$) and never finds fewer distinct bugs. It also converts CVE-2019-19926 under AFL++ from undetected ($0/20$ trials) to reliably detected ($20/20$; Fisher $p<10^{-4}$), with six additional detection improvements, five involving CVEs. On \emph{FuzzBench} \citep{fuzzbench}, spanning $7$ fuzzers, $10$ benchmarks, and 23-hour campaigns, splitting finds more unique bugs in $53$ of $70$ pairs at strictly matched CPU (sign test $p=1.7\times10^{-8}$) and reduces across-campaign variance in $66$ of $70$ pairs, with a median reduction of approximately $10\times$. In practice, splitting is a black-box wrapper with approximately $0.14\%$ overhead and supports both offline and online deployment.

\end{enumerate}

\smallskip
\noindent\fbox{\parbox{0.97\columnwidth}{\small
\textbf{In one sentence:} we turn fuzzer robustness into a \emph{measurement with a convergence guarantee}, and add \emph{splitting}, a black-box wrapper that, at equal CPU-hours, finds more real bugs and turns undetected CVEs into reliably-detected ones. 
}}

\para{Paper organization}
Sections~\ref{sec:related work}--\ref{sec:evaluation} give related work, the formulation, the evaluation framework and its guarantees; Section~\ref{sec:splitting} introduces splitting; Section~\ref{sec:exp} reports the experiments; Sections~\ref{sec:practical-guidance}--\ref{sec:conclusion} conclude.

%% file: section/related-work.tex
\section{Related Work}\label{sec:related work}

\para{Mutation-based fuzzing}
Input generation broadly falls into grammar-based~\citep{godefroid2008grammar, wang2019superion, jero2019leveraging} and mutation-based approaches; the latter dominates, iteratively modifying inputs guided by feedback such as bug signals~\citep{moukahal2021vulnerability}, execution time~\citep{petsios2017slowfuzz}, divergence~\citep{petsios2017nezha, guo2018dlfuzz}, or coverage~\citep{nagy2019full, lemieux2018fairfuzz}, with recent LLM-guided variants~\citep{meng2024large, xia2024fuzz4all}. Our framework applies to any mutation-based fuzzer as a black-box plug-in.

\para{Fuzzer evaluation}
Reliable evaluation is crucial, yet the community lacks consensus on metrics. Klees et al.~\citep{klees2018evaluating} find that higher coverage does not necessarily yield more bugs; Li et al.~\citep{li2021unifuzz} advocate richer metrics including unique bug counts; Hazimeh et al.~\citep{hazimeh2020magma} observe that bug counts are subject to substantial randomness; subsequent studies refine practices further~\citep{ren2021empirical, jiang2023evaluating, wu2022evaluating, elder2022really}. Schl{\"o}gel et al.~\citep{schloegel2024sok} conclude from a large-scale study that existing methodologies still lack rigor---surveying prior work using 1--40 trials---particularly regarding how many trials are needed. Our work directly addresses this gap with a theoretically grounded criterion for trial selection and a formal robustness metric.

\para{Robustness in fuzzing evaluation}
Prior robustness assessments use confidence intervals of code coverage~\citep{klees2018evaluating} or median/percentile statistics~\citep{schloegel2024sok, li2021unifuzz}: empirical guidelines on indirect proxies (coverage, not bug detection) that capture only across-trial variability and never treat variance as a theoretically-backed evaluation dimension. Our framework jointly handles inter-trial and intra-trial variability and computational cost, grounded in bug-centric measures with provable convergence guarantees.

\para{Markov chain modeling in fuzzing}
B{\"o}hme et al.~\cite{AFLFast} first model coverage-based greybox fuzzing as a Markov chain over coverage states, deriving the AFLFast energy schedules. Our modeling differs fundamentally: we define the chain over the \emph{input space}, treating coverage and bug-triggering as output functions of it---advantageous because input-level mutation transitions are time-homogeneous for a fixed mutation operator, whereas coverage transitions depend on the growing corpus and are not necessarily Markovian, and because any program behavior (bugs, coverage, sanitizer signals) is a deterministic function of the input chain.

\para{Benchmarking randomness}
Madadi et al.~\cite{madadi2026bugs} empirically show that fuzzer benchmarking outcomes vary substantially with trial count, duration, and benchmark selection. The analogous question in MCMC, how many chains and run how long, is answered there by batch-means and spectral variance estimators~\cite{flegal2010batch} and multi-chain diagnostics~\cite{margossian2024nested}, under variance-bounding conditions~\cite{roberts2008variance}; we adapt that machinery to fuzzing, where the chain is the mutation process and the functional is the bug-detection rate. We complement their findings with a theoretical framework that formalizes this randomness through Markov chain theory, provides convergence guarantees, and proposes splitting to reduce the variability they identify.

\para{Rare-event simulation and splitting}
Splitting methods originate in rare-event simulation~\cite{glasserman1996splitting, glasserman1999multilevel, cerou2007adaptive}, where trajectories are progressively biased toward target regions using hand-tuned importance functions. We make it practical for fuzzing: the bug-detection signal \emph{is} the importance function, removing the long-standing obstacle; it is a black-box wrapper composing with any mutation-based fuzzer; and it carries a non-regression guarantee for any realized split tree (Theorem~\ref{bug_rate_splitting}). Forking the fuzzer's \emph{full queue state} on the bug signal places it in a class distinct from restart-from-crash, AFL's crash-exploration mode (\texttt{-C})~\cite{afl}, which inherits no queue state, and corpus-synchronizing parallel fuzzing, which shares state continuously rather than on bug discovery. To our knowledge this is the first bug-signal-guided, importance-function-free splitting method for fuzzing.

%% file: section/problem-formulation.tex
\section{Problem Formulation}\label{sec:formulation}

We model fuzzing as a Markov chain over the input space, with bug-triggering as an output function; the source of randomness is the input string, so effectiveness (bug detection rate) and robustness (variance of that rate) are treated in one object.

Let the target program be \(f\) with finite input domain \(D\). For a fuzzer \(R\), let \(\{X_R(w,t)\}_{t\ge 0}\) be the input Markov chain on \(D\), where \(X_R(w,t)\) is the input at time \(t\) starting from seed \(w\). Since \(D\) is finite we use the counting measure, writing \(\int h(x)\,dx := \sum_{x\in D} h(x)\).

Execution yields trace output \(f(x)\) and behavioral output \(g(x)\); with \(B_f\) the abnormal behaviors indicating bugs, the buggy inputs are $B=\{x\in D \mid g(x)\in B_f\}$ and the output process is $Y_R(w,t)=\mathbf{1}_{\{g(X_R(w,t))\in B_f\}}\in\{0,1\}$, indicating whether the input at time \(t\) triggers a bug.

A fuzzer \(R\in\mathcal{R}\) mutates \(x\in D\) into \(R(x)\in D\) by bit-level operations (flipping, deleting, inserting, modifying bytes). Each mutation is a transition of the input chain with probability \(P_R^X(x,R(x))\), giving transition matrix \(\{P_R^X(i,j)\}_{i,j\in D}\).

Let \(\pi_R^X\) denote a stationary distribution of the input Markov chain \(X_R\), i.e., $\pi_R^X(j)=\sum_{i\in D}\pi_R^X(i)\,P_R^X(i,j),\qquad j\in D.$ At stationarity, the induced transition law of the output process is $E_{x \sim \pi_R^X}\big[P_R\big(\mathbf{1}_{\{g(x)\in B_f\}},\,\mathbf{1}_{\{g(R(x))\in B_f\}} \,\big|\, x\big)\big]$. We denote the corresponding \(2\times 2\) matrix by \(\{P_R(m,n)\}_{m,n\in\{0,1\}}\). The stationary distribution of the output process \(Y_R\) induced by \(\pi_R^X\) is then given by $\pi_R(1)=\sum_{x\in D}\pi_R^X(x)\,\mathbf{1}_{\{x\in B\}}
=\pi(1), \pi_R(0)=1-\pi_R(1) =\pi(0).$
\(Y_R\) is a deterministic function of the input chain, hence not in general Markov (a function of a Markov chain is itself Markov only in special cases). We use \(\{P_R(m,n)\}\), the stationary-induced aggregate law, as a tractable two-state surrogate; Section~\ref{subsec:empirical-assumptions} validates it: the output autocorrelation shows the geometric decay it predicts in every cell.

We start from a single seed (multi-seed in Appendix~\ref{app:discussion}). Over \(T\) rounds the fuzzer executes each input, records buggy ones, and mutates; \(H_R^T\) is the multiset of bug-triggering events (\(Y_R(w,t)=1\)) and \(I_R^{T}\) the executed inputs.

\subsection{Modeling}\label{subsec:assumptions}
\para{Mutation scope}
The input chain \(X_R\) captures \emph{any} mutation operator (bit-flips, splicing, dictionary insertion, havoc), each a transition with probability \(P_R^X(x,R(x))\); the framework only requires that \(R(x)\) be drawn from a distribution fixed in code, hence independent of the time step \(t\). For stateful fuzzers with internal state \(G_t\): if \(G_t\) depends only on \(X_R^t\) the paradigm applies verbatim; if on \(G_{t-1}\), we expand the state to \(X^R_{t+1} = \mathcal{R}(X_t^R,G_t)\). The output depends only on \(X_t^R\) either way.

\para{Scope and limitations}
LLM-based fuzzers whose mutation distributions change unpredictably over time fall outside our current scope. Our modeling requires \emph{two mild, standard conditions} from the Markov chain community~\citep{down1995exponential}, both met by practical fuzzers: \emph{time-homogeneity}, which follows from the update rules being fixed in code; and \emph{geometric ergodicity} of the input chain, which only excludes a fuzzer trapped forever in one input region, and whose observable consequence, geometric autocorrelation decay, holds in every evaluated cell (Section~\ref{subsec:empirical-assumptions}), the adaptive schedulers AFL++ and MOpt included. Output Markovianity is \emph{not} required (Section~\ref{sec:evaluation}). Regimes whose mutation distribution shifts over a campaign are handled by windowed estimation over approximately stationary segments (Appendix~\ref{app:assumption}).

\textbf{Bug detection rate.} We define the \emph{bug detection rate} (BDR), which quantifies the fraction of executed inputs that trigger bugs:
$P_R = \lim_{T \to \infty} \frac{|H_R^T|}{|I_R^T|} = \lim_{T \to \infty} \frac{1}{T}\sum_{t=1}^TY_R(w,t)$
and its finite-horizon counterpart $P_R(T) = E[\frac{|H_R^T|}{|I_R^T|}] = E[\frac{1}{T}\sum_{t=1}^TY_R(w,t)].$
It is monotone in $|H_R^T|$, the total number of bug-triggering executions, the quantity behind the bug/crash counts used in existing evaluation~\citep{li2021unifuzz, schloegel2024sok, klees2018evaluating}, while additionally charging the effort over $T$ rounds. Being a functional of the chain's realization, $|H_R^T|$ is random, and its variability has not been previously quantified; we therefore measure the stability of BDR via variance.

\textbf{Robustness.} We consider both terminating and non-terminating regimes, and begin with the non-terminating regime, focusing on the asymptotic behavior of the Markov chains. We define the performance measure in the non-terminating regime as
\begin{align*}
& S_R = \lim_{T\to\infty} Var(\tfrac{|H_R^T|}{|I_R^T|}); \\
& \sigma^2
:= \lim_{T\to\infty} T\cdot Var(\tfrac{|H_R^T|}{|I_R^T|}) = \lim_{T\to\infty} Var(\tfrac{1}{\sqrt{T}}\sum_{t=1}^T Y_R(w,t)),
\end{align*}
where \(Var(A)\) denotes the variance of \(A\). The scaling by \(T\) charges the effort required to reach a given level of variability, so the criterion reflects computational cost as well as stability. In practice mutations run for finitely many rounds, so we also define a finite-horizon measure: 
\begin{align*}
   & S_R(T) = Var(\tfrac{|H_R^T|}{|I_R^T|}); \\
   & \sigma^2(T) = T \cdot Var(\tfrac{|H_R^T|}{|I_R^T|})  = T \cdot Var(\tfrac{1}{T}\sum_{t=1}^TY_R(w,t)).
\end{align*}

\begin{remark}[BDR vs.\ unique bug counts]\label{rmk:bdr-vs-unique}
Practitioners may also care about the number of \emph{unique} bugs found, which our experiments report directly. At the theoretical level, we study the underlying bug-trigger process that drives these discoveries: splitting branches from bug-triggering states and reallocates effort toward regions with higher future bug yield. The bug-detection rate (BDR) provides an effort-normalized, steady-state characterization of this process, while unique-bug discovery additionally accounts for deduplication against the previously discovered set. The two are closely connected empirically: splitting increases bug-triggering events in 39 of 40 Magma cells and yields at least as many distinct real bugs in all 40, while bug-event and distinct-bug counts move together within trials on both Magma and FuzzBench (Section~\ref{subsec:magma}).
\end{remark}


%% file: section/evaluation-framework.tex
\section{Evaluation Framework}\label{sec:evaluation-methods}

Figure~\ref{fig:overview} provides an overview of our framework. All experiments use real fuzzer executions on real programs. The term ``Monte Carlo'' refers to the evaluation methodology (repeated independent trials), not a synthetic simulator.

\begin{figure}[t]
  \centering
  \includegraphics[width=0.92\columnwidth]{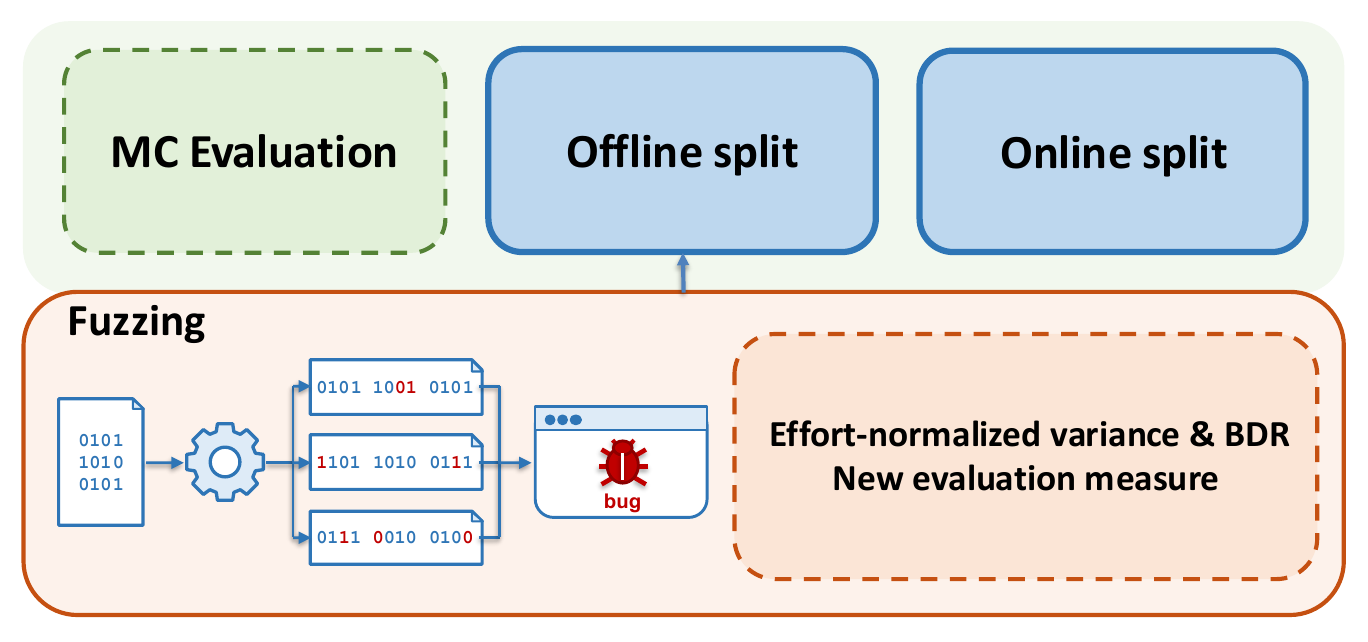}
  \caption{Overview of the framework. Real fuzzer executions generate \(M\) independent trajectories, from which we compute effort-normalized variance (robustness) and the bug-detection rate. The MC procedure evaluates fuzzers; splitting enhances them, implemented online or offline depending on prior-data availability.}
  \label{fig:overview}
\end{figure}

We now estimate the Section~\ref{sec:formulation} measures from fuzzing data.

\para{Simulation algorithm} The Monte Carlo (MC) evaluation runs $M$ independent campaigns, each replicating the fuzzing process for $T$ mutation steps from an initial input $w$: the fuzzer $R$ iteratively mutates the current input per its transition matrix and executes the result, so the $m$-th run yields one trajectory $\{(X_R^m(w,t), Y_R^m(w,t))\}_{t=1}^{T}$, and $M$ repetitions give $M$ i.i.d.\ sample trajectories (pseudocode in Algorithm~\ref{alg:mc}).

The simulation algorithm is classical; the contribution is the estimators built on its $M$ trajectories, which quantify robustness and guide the joint choice of $M$ and $T$.

\para{What counts as one repetition} Our replication unit is an independent \emph{root campaign}. For the $m$-th one, $Z_{m,R}(T)=\frac{1}{T}\sum_{t=1}^{T}Y_R^m(w,t)$ is its bug-trigger event rate, so $T\,Z_{m,R}(T)=N_T^m$ is its terminal event count.

\para{Point estimators} We construct two point estimators. The first, \(\Tilde{P}_R(T)\), is the average over campaigns of the rate and estimates the bug detection rate \(P_R\):
\begin{align*}
\widetilde{P}_R(T)
&= \frac{1}{M}\sum_{m=1}^{M}Z_{m,R}(T)
= \frac{1}{MT}\sum_{m=1}^{M}\sum_{t=1}^{T}Y_R^m(w,t),
\end{align*}
proportional to the common bug-count metric $\sum_{m} N_T^m$, while accounting for the effort $T$.

The second, our novel sample variance \emph{across campaigns}, captures how much root campaigns differ from one another:
\[
S^2_{M,R}(T)
= \frac{1}{M-1}\sum_{m=1}^{M}\bigl(Z_{m,R}(T) - \widetilde{P}_R(T)\bigr)^2 .
\]
Let $v_R(T)=\operatorname{Var}\bigl[Z_{1,R}(T)\bigr]$ denote the variance of a single campaign's outcome. Then $\mathbb{E}[S^2_{M,R}(T)]=v_R(T)$ while $\operatorname{Var}[\widetilde{P}_R(T)]=v_R(T)/M$, (every reported sample variance uses this denominator), which are two \emph{different} quantities: the first is how unstable independent campaigns are, the second how uncertain their grand mean is. Everything below evaluates the first.

\para{Evaluation metrics} Robustness is the effort-normalized variance of a single campaign's outcome $\sigma^2=\lim_{T\to\infty}T\,v_R(T)$, measured by $T \cdot S^2_{M,R}(T)$, which depends only on simulation outputs and is therefore directly computable.

We next quantify the accuracy of this estimator through
\[
A_R(T) = \bigl|T\,S^2_{M,R}(T) - \sigma^2\bigr|, \quad B_R(T) = \bigl|T\, v_R(T) - \sigma^2\bigr|,
\]
so that $A_R$ carries the finite-$M$ estimation error and $B_R$ the bias from running for a finite time rather than forever.

We also evaluate how closely the computable variance tracks the quantity it estimates, and defer that to Appendix~\ref{app:metrics}.

Lastly, we also evaluate bug-detection ability via \(\widetilde{P}_R(T)\) (BDR), equivalently the total bug count $N(T) = \sum_{m=1}^{M}N_T^m.$

%% file: section/theoretical-analyses-evaluation.tex
\section{Theoretical Guarantees for Evaluation}\label{sec:evaluation}

In this section, we evaluate the fuzzers using the proposed metrics and their theoretical guarantees. We defer detailed proofs and the supporting derivations to the appendices.\footnote{\label{fn:codebase}\url{https://anonymous.4open.science/r/split-fuzzing-artifact/README.md}}


\textbf{Our theory provides guarantees for variance reduction and robustness estimation.}
First, under an idealized two-level allocation, Theorem~\ref{thm:splitting_var} gives the closed-form sufficient condition $p<\sqrt{2}-1$ for splitting to reduce variance, where $p$ is the fraction of time spent in the bug region. Proposition~\ref{prop:tree} extends this covariance analysis to general split trees. Our matched-CPU experiments then verify that these adaptive trees reduce variance in practice. Second, Theorem~\ref{thm:main_body} decomposes the estimation error of $T S^2_{M,R}(T)$ into finite-campaign sampling error at the standard $O(M^{-1/2})$ rate, initialization error, and finite-horizon autocovariance bias. The finite-horizon bound requires only geometrically decaying autocovariances of the binary output process. For phase-specific diagnostics, its two-state specialization provides an exact initialization term that is up to $46\times$ tighter than the corresponding worst-case bound.


\begin{theorem}[Informal; formalized as Theorem~\ref{thm:main_body} in the Appendix]\leavevmode 
Write $\gamma^X_R\in(0,1]$ for the spectral gap of fuzzer $R$'s input chain: a larger gap means that the chain forgets its starting point faster. Let \(U_{A,R}(T)\) denote the explicit error margin formalized in Theorem~\ref{thm:main_body}. Ignoring constants and taking the initialization term at its largest value for a given spectral gap, this margin has order
\(
U_{A,R}(T)
=
O\!\left(
\frac{1}{\sqrt{M}}
+
\frac{1-\gamma^X_R}{(\gamma^X_R)^{2}}
+
\frac{1}{T}\frac{1}{(\gamma^X_R)^{2}}
+
\frac{(1-\gamma^X_R)^T}{\gamma^X_R}
\right).
\)
For any \(\alpha\in(0,1)\), with probability at least \(1-\alpha\), \(A_R(T)=|T\cdot S^2_{M,R}(T)-\sigma^2|\le U_{A,R}(T)\) for every fuzzer \(R\in\mathcal R\) simultaneously. The confidence level enters the finite-trial term explicitly, whose rate is \(M^{-1/2}\) for fixed \(T\); the remaining terms quantify initialization and finite-horizon effects.
(Proof in Appendix~\ref{app:pf-thm-main}.)
\end{theorem}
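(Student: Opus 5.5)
The informal statement is a packaging of Theorem~\ref{thm:main_body} together with Lemma~\ref{lem:gap-transfer}. I would therefore prove it by taking the explicit margin $U_{A,R}(T)=E_M(\alpha)+b_{\mathrm{start}}(T)+U_{B,R}(T)$ from part~4 of Theorem~\ref{thm:main_body} and bounding each of the three terms in orders of $M$, $T$ and $\gamma^X_R$. The high-probability event and the simultaneity over $\mathcal R$ come straight from part~4: part~1 is applied at level $\alpha/|\mathcal R|$ and a union bound is taken. The only remaining work is order-of-magnitude bookkeeping, with the envelope $(C,\rho)$ specialized through Lemma~\ref{lem:gap-transfer}.

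\textbf{First term: finite-trial error.} The indicator is bounded, so $W_{m,T}$ has finite fourth moment $\mu_{4,T}$. Both $\mu_{4,T}$ and $(v^\star_T)^2$ are bounded uniformly in $T$: the centered partial sums have geometrically mixing summands, so their normalized fourth moments stay bounded. Treating $\alpha$ and these moments as constants, $E_M(\alpha)=O(1/\sqrt M)$ follows from the Chebyshev form of part~1.

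\textbf{Third term: finite-horizon bias.} Lemma~\ref{lem:gap-transfer} supplies $|\Gamma(h)|\le C\rho^h$ with $C=\pi(1)\pi(0)\le\tfrac14$ and $\rho=1-\gamma^X_R$. Substituting into $U_{B,R}(T)$ gives
\[
U_{B,R}(T)=O\!\left(\frac{1}{T(\gamma^X_R)^2}\right)+O\!\left(\frac{(1-\gamma^X_R)^T}{\gamma^X_R}\right).
\]
These are exactly the third and fourth displayed terms. The bound on $\sigma^2$ in the lemma also guarantees that $\sigma^2$ is finite, so $A_R$ is well defined.

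\textbf{Second term: initialization.} I expect this to be the main obstacle, since part~3 gives either a $1/T$ bound under the extra condition $C_0$, or an exact two-state form. The statement instead quotes $(1-\gamma)/\gamma^2$ with no $1/T$, that is, the term "taken at its largest" over starts for fixed gap. My first attempt would maximize the exact expression $\tfrac1T|D(1-2\pi_1)L_T-D^2Q_T|$ over $D\in[-1,1]$. Using $L_T\le \sum_t(2t-1)\rho^{t-1}$ and $Q_T\le 1/\gamma^2$ (with $\lambda=1-\gamma$), I would bound $L_T/T\le (1-\rho)^{-2}\cdot O(1)$ and $Q_T/T\le 1/\gamma^2$.

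To obtain the gap-only form without $1/T$, I would fall back on the worst-case argument already given in Appendix~\ref{app:sketches}. The transient correlation sum $S=\sum_{k\ge1}\rho^k=(1-\gamma)/\gamma$ controls the covariance discrepancy $\sum_{s\le t}|\mathrm{Cov}_{\pi_0}-\Gamma|$. This yields a constant $C_2=O(S/\gamma+S^2)=O((1-\gamma)/\gamma^2)$. The actual term is $C_2/T$, which is bounded by $C_2$ since $T\ge1$, giving $O((1-\gamma^X_R)/(\gamma^X_R)^2)$. The symmetric two-state example in that appendix shows this rate cannot be improved.

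\textbf{Combining.} Summing the three orders gives the displayed $O(\cdot)$ for $U_{A,R}(T)$, which holds with probability at least $1-\alpha$ simultaneously over $\mathcal R$. The confidence level appears only through $E_M(\alpha)$, matching the final remark of the statement.
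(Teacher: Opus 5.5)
\textbf{Verdict: there is a genuine gap, in the initialization term.} Your architecture matches how the statement sits on the paper's proof. You take the margin $U_{A,R}=E_M(\alpha)+b_{\mathrm{start}}(T)+U_{B,R}(T)$ from part~4 of Theorem~\ref{thm:main_body}, and you apply the union bound only to the random finite-trial term. You then feed Lemma~\ref{lem:gap-transfer}, with $C=\pi(1)\pi(0)$ and $\rho=1-\gamma$ (write $\gamma=\gamma^X_R$), into part~2. That gives exactly the third and fourth displayed orders.

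\textbf{Why your initialization argument fails.} You claim the transient sum $S=(1-\gamma)/\gamma$ ``controls'' $\sum_{s\le t}\lvert\operatorname{Cov}_{\pi_0}(Y_s,Y_t)-\Gamma(t-s)\rvert$, giving a constant $C_2=O((1-\gamma)/\gamma^2)$ that depends on the gap alone. This is not proved, and the paper explicitly does not derive it. The envelope $\lvert\operatorname{Cov}_{\pi_0}(Y_s,Y_t)-\Gamma(t-s)\rvert\le C_0\rho^{s-1}\rho^{t-s}$ of part~3 is \emph{assumed rather than derived}. The reason given is that the part of the start discrepancy that is quadratic in the starting offset is not controlled by the mixing rate. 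The symmetric two-state chain of Appendix~\ref{app:sketches} is a lower-bound example showing the $1/\gamma^2$ rate cannot be improved. It is not an upper bound valid for every chain.

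An $L^2(\pi)$ gap controls the law at time $s$ only up to a factor measuring how far $\pi_0$ is from $\pi$. An example is $\sqrt{\chi^2(\pi_0,\pi)}$, which is of order $\pi(x)^{-1/2}$ for a point mass. This factor genuinely matters:
\begin{itemize}
\item Run $n$ independent copies of a symmetric two-state chain in parallel; the product chain has the same gap for every $n$.
\item Let $B$ be the majority set and start from $\pi_0=\tfrac12(\delta_{0^n}+\delta_{1^n})$.
\item The indicator then stays frozen at its initial value for order $\log n$ steps.
\item At such $T$, $\operatorname{Var}_{\pi_0}(\sum_{s\le T}Y_s)\approx T^2/4$, while $\operatorname{Var}_\pi(\sum_{s\le T}Y_s)\le T(2-\gamma)/(4\gamma)$.
\item Hence $b_{\mathrm{start}}(T)$ grows like $\log n$ at fixed $\gamma$.
\end{itemize}

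The form you derive also fails at the other end. Take $P(x,\cdot)=\pi$, so $\gamma=1$, started at a point of $B$. Then $b_{\mathrm{start}}(T)=\pi(1)\pi(0)/T>0$, while $(1-\gamma)/\gamma^2=0$. That term is harmless only because of the $1/T$ you discarded.

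\textbf{How to repair it.} Follow the paper and do not look for a gap-only constant. Under part~3's stated hypothesis with $\rho=1-\gamma$,
\[
b_{\mathrm{start}}(T)\le\frac{C_0}{T\gamma}+\frac{2C_0(1-\gamma)}{T\gamma^2}=\frac{C_0(2-\gamma)}{T\gamma^2}.
\]
Here $C_0$ is a property of the start and belongs among the suppressed constants. This bound is already of the displayed order $\frac1T\frac1{\gamma^2}$. The $T$-free term $(1-\gamma)/\gamma^2$ is only the paper's deliberately looser worst-case form, and an $O(\cdot)$ upper bound does not oblige you to reproduce it term by term.

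Inside the two-state model, your abandoned first attempt already works if you keep the $1/T$. Using $|\lambda|\le1-\gamma$:
\begin{itemize}
\item $\lvert L_T\rvert\le(1+|\lambda|)/(1-|\lambda|)^2\le2/\gamma^2$,
\item $Q_T\le1/\gamma^2$,
\item $|D|\le1$ and $|1-2\pi_1|\le1$.
\end{itemize}
Together these give $\max_{p_0}b_{\mathrm{start}}(T)\le 3/(T\gamma^2)$.

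\textbf{A minor point on the first term.} You do not need uniform-in-$T$ control of $\mu_{4,T}$, and a second-order covariance envelope would not supply it anyway. The statement is for fixed $T$, and $\lvert W_{m,T}\rvert\le\sqrt T$ already gives $\mu_{4,T}\le T^2$, hence $E_M(\alpha)=O(M^{-1/2})$.
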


Hence, the directly computable \(T\cdot S^2_{M,R}(T)\) estimates the robustness \(\sigma^2\) within the margin \(U_{A,R}(T)\). If the resulting intervals for two fuzzers do not overlap, ranking them by \(T\cdot S^2_{M,R}(T)\) also ranks them by their underlying robustness. A larger spectral gap reduces the effects of slow mixing, while the $1/T$ term retains the finite-horizon contribution of the starting distribution. Within the two-state model, the exact starting-state expression of Theorem~\ref{thm:main_body} replaces the worst-case initialization term by a supremum over every possible start.

\noindent Taking all three terms at their largest values over the same $95\%$ confidence region for the two estimated parameters $\hat\pi_1$ and $\hat\lambda$ gives the per-cell interval in Table~\ref{tab:spectral}. It is up to $46\times$ tighter than the worst-case bound (Honggfuzz on openssl, $17.3\to0.37$) and at least $1.8\times$ tighter in every cell.

\noindent\textbf{How many trials suffice.} Write $b(T)=b_{\mathrm{start}}(T)+U_{B,R}(T)$ for the part of the bound controlled by the horizon rather than the number of trials. To obtain a margin $\varepsilon$ for all $|\mathcal R|$ fuzzers at confidence $1-\alpha$, choose $T$ such that $b(T)<\varepsilon$ and then the smallest $M$ satisfying $E_M(\alpha/|\mathcal R|)\le\varepsilon-b(T)$. On our suites, this rule stabilizes near $M=20$ and $T=12$\,h (Table~\ref{tab:samplecomplexity}).

\textbf{Modeling scope.} The guarantee requires the input chain to forget its starting point at a geometric rate; it does not assume that the binary bug indicator is itself a Markov chain. Appendix~\ref{app:sketches} establishes the bound for the true bug indicator and specifies the quantity measured by the estimated gap. Table~\ref{tab:spectral} reports diagnostics computed inside the two-state model. Under behavioural drift, these diagnostics characterize locally consistent campaign phases, and Section~\ref{subsec:empirical-assumptions} reports how the estimated gap changes across phases.

The same root campaigns provide both \(\widetilde{P}_R(T)\), the effort-normalized bug detection rate, and \(T\cdot S^2_{M,R}(T)\), the corresponding robustness estimator. Their mean and effort-normalized variance capture effectiveness and robustness under the same mutation effort, respectively. This connection motivates the splitting-based enhancement studied next.

%% file: section/variance-reduction.tex
\section{Enhancement of Fuzzing}\label{sec:splitting}

Robustness is a property of the fuzzer--benchmark pair (Section~\ref{sec:evaluation}), which raises the follow-up: can we \emph{improve} both robustness and bug detection? We propose a splitting-based method that enhances any fuzzer by \textbf{reallocating mutation effort} toward bug-rich regions: when a trial discovers a bug, we fork the trajectory into descendants continuing from that state. It is a black-box plug-in guided directly by bug signals, with no handcrafted importance function (Section~\ref{sec:related work}).

\subsection{Relationship to Power Scheduling}\label{subsec:splitting-vs-scheduling}

Splitting and power scheduling (AFLFast~\cite{AFLFast}, MOpt~\cite{lyu2019mopt}) both concentrate effort in promising regions, but differ in mechanism, trigger signal, and guarantees. Two distinctions matter.

First, splitting never loses bug events (Theorem~\ref{bug_rate_splitting}), raises the expected rate when allocation and future yield move together (Proposition~\ref{prop:sizebias}), and lowers variance per unit CPU (Theorem~\ref{thm:splitting_var}) under stated clustered-bug conditions, whereas power scheduling's analysis~\cite{AFLFast} concerns path discovery and says nothing about bug-detection variance. Second, we have a direct head-to-head on the same ancestor fuzzer at equal compute: splitting-wrapped AFL attains a higher terminal BDR than plain AFLFast and plain MOpt on \emph{all ten} benchmarks (median $+25\%$ and $+22\%$; Wilcoxon $p=0.002$ each), and it improves even AFL++, whose power schedule it composes with, by up to 125\%.

\para{Splitting is not simply ``more independent runs''}
More \emph{independent} runs explore \emph{uniformly}, blind to where bugs occur. Splitting runs one chain until a bug is \emph{triggered}, then forks descendants inheriting the corpus state \emph{at that moment}: effort is spent \emph{conditionally and only where bugs appear}, which is what Theorems~\ref{bug_rate_splitting} and~\ref{thm:splitting_var} formalize.

\para{Deployment: cost and correctness}
Splitting is a black-box wrapper needing only read access to the on-disk queue and the bug signal; it modifies no fuzzer internals, and only the queue path differs across the seven fuzzers. On a split the parent stops and $B$ instances launch from a read-only copy of its queue, each pinned to a core; a cap ($K{=}3$ depths, $B{=}2$) bounds a trial to eight concurrent leaves, each charged in full per CPU-hour. Descendants are fresh processes on a read-only seed mount, inheriting \emph{no} in-memory state. The only overhead is stop/copy/relaunch dead-time, median $0.14\%$ of wall-time over the $187$ online trials that ran the full $12$ hours. Details in Appendix~\ref{app:implementation}.

\subsection{Methodology based on Splitting}

\begin{algorithm}[!t]
\SetAlgoLined
\caption{Splitting Algorithm}\label{alg:splitting}
\textbf{Initialization:} Maximum path length $T$, input seed $w$, $t = 1$, active paths $k_t$ ($k_1 = 1$), fuzzer $F$, mutation speed $\Delta +1 \in \mathbb{N}^{+}$ (default $\Delta = 0$), splits at time $t$: $n_t \in \mathbb{N}^{+}$, bug count $N_t$ ($N_0=0$), bug-triggering inputs $E_t$, new paths $TMP_t$\;\par
 \For{$1 \leq t \leq T$}{
\For(\tcp*[f]{Simulation}){each $s_t = X(w,t) \in [k_t]$}{
    Mutate $X(w,t)$ with fuzzer $F$ and generate $X(w,t+1), \ldots, X(w,t+\Delta+1)$\; \par
    Execute and observe $Y(w,t+1), \ldots, Y(w,t+\Delta+1)$\; \par
    \eIf{$\exists\; 1 \leq i \leq \Delta+1$ s.t.\ $Y(w , t+i) = 1 $}{
    Add $X(w , t+i)$ to $E_t$; update $N_t \leftarrow N_t+1$ (with $N_t$ initialized to $N_{t-1}$ at the start of round $t$)\; \par
    Stop $s_t$; split $X(w , t+i)$ into $n_t$ copies $\to tmp_t$}
{Keep $s_t$; $tmp_t = \emptyset$}
Add $tmp_t$ to $TMP_t$ \;
}
  Update $[k_{t+1}] = TMP_t \cup \{\text{kept paths}\}$\; \par
   \For{each sample path $s_t \in [k_{t+1}]$}{
   Update path weight}
    }
\textbf{Output:} $E_T$, $N_T$ and $\{k_t\}_{t=1}^{T}$
\end{algorithm}

\para{Splitting algorithm}
We present the proposed method in Algorithm \ref{alg:splitting}; a split parent is replaced by its children rather than kept alongside them. We retain the MC simulation framework of Section~\ref{sec:evaluation-methods} but modify the procedure: whenever a bug is detected, we \emph{split} the run into multiple descendant runs, each initialized at the current input with a randomly sampled seed, and each continuing to fuzz independently (our implementations additionally gate splits by a sparsity-zone signal; Section~\ref{sec:offline} and Appendix~\ref{app:implementation}). As in Section~\ref{sec:evaluation-methods}, each run yields one input--output trajectory and $M$ runs yield $M$ i.i.d.\ trajectories.

\para{Point estimators} Splitting modifies the Section~\ref{sec:evaluation-methods} estimators only through the branch weighting $1/k_t^m$:
\[
\begin{aligned}
\hat{P}_R^{m}(T)&=\frac{1}{T}\sum_{t=1}^{T}\frac{1}{k_t^m}\sum_{s=1}^{k_t^m}Y_{s,t},\\
\Tilde{P}_R&=\frac1M\sum_{m}\hat{P}_R^{m},\qquad S^2_{M,R}=\frac1{M-1}\sum_m\bigl(\Tilde{P}_R-\hat{P}_R^m\bigr)^2,
\end{aligned}
\]
with $k_t^m$ the active paths at time $t$ in run $m$. The total CPU-hours are $\mathrm{CPU}_R(T)=\sum_{m,t}k_t^m$, summed over all $M$ trials and every live leaf (Monte Carlo pays $MT$; splitting pays for each leaf launched), and $\mathcal{J}_R(T)=\mathrm{CPU}_R(T)\,S^2_{M,R}(T)$: each arm is charged the compute it consumed.

\subsection{Effectiveness of Splitting}

Splitting is most effective when bugs cluster: there it reduces variability \emph{and} raises the bug-detection rate. Practitioners need not know this in advance. It triggers \emph{reactively} on an observed bug, still guarantees non-regression of the bug-event count when structure is sparse (Theorem~\ref{bug_rate_splitting}), and its variance can be monitored online. Our experiments show the clustered regime is the common case (Section~\ref{sec:exp}).

\subsubsection{Variance Comparison}

We consider effort-normalized variance, defined by
\[
\mathcal{E}(T)\;:=\;\mathrm{Var}(\text{estimator at horizon }T)\times \mathrm{CPU}(T),
\]
of which \(\hat\sigma^2(T)=\mathrm{CPU}_R(T)\,S^2_{M,R}(T)\) is the empirical estimator; \(\mathcal{E}_{\mathrm{split}}(T)\) and \(\mathcal{E}_{\mathrm{MC}}(T)\) denote it for the two arms.

\begin{theorem}[Idealized two-level allocation]\label{thm:splitting_var}\leavevmode
Consider an idealized two-level allocation with a stationary, reversible, positive input chain (Assumption~A1), so all lag-autocovariances of $g=\mathbf 1_B-\pi(B)$ are non-negative; a binary allocation $k_t\in{1,2}$ with independent post-allocation replicas; and, for the closed-form expression, the two-state surrogate. Let $C_{\mathrm{in}},C_{\mathrm{out}}$ be the in- and out-region autocovariance masses and $p=|S|/T$ the in-region time fraction. Then
\[
\begin{aligned}
\frac{\mathcal{E}_{\mathrm{split}}(T)}{\mathcal{E}_{\mathrm{MC}}(T)} &\;\le\; \frac{T+|S|}{T}\cdot\frac{C_{\mathrm{out}}+\tfrac12 C_{\mathrm{in}}}{C_{\mathrm{out}}+C_{\mathrm{in}}}\\
&\;=\;(1+p)\,\frac{2+r}{2(1+r)},\quad r:=\tfrac{C_{\mathrm{in}}}{C_{\mathrm{out}}},
\end{aligned}
\]
a ratio bound: positivity makes $\mathcal{E}_{\mathrm{MC}}$ exact, supplying the matching lower bound.
(Proof in Appendix~\ref{app:pf-thm-splitting}.)
\end{theorem}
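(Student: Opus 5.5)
We bound the two effort-normalized variances by writing both through the same stationary autocovariance decomposition of $g=\mathbf 1_B-\pi(B)$ and then comparing them term by term.

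\textbf{Monte Carlo arm.} A single chain gives $\mathrm{Var}\bigl(\tfrac1T\sum_t g(X_t)\bigr)=T^{-2}\sum_{s,t}\Gamma(t-s)$ at cost $\mathrm{CPU}_{\mathrm{MC}}=T$. Because the chain is reversible and positive, every lag autocovariance $\Gamma(h)\ge0$, so no cancellation occurs. We partition the double sum according to whether the time index lies in the in-region set $S$ or in its complement. This defines the masses $C_{\mathrm{in}}$ and $C_{\mathrm{out}}$, each collecting the nonnegative covariance contributions attached to the corresponding indices, and gives the exact identity $\mathcal E_{\mathrm{MC}}(T)=T^{-1}(C_{\mathrm{out}}+C_{\mathrm{in}})$. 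The exactness of this identity is the matching lower bound noted in the statement.

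\textbf{Splitting arm.} The estimator is $\tfrac1T\sum_t k_t^{-1}\sum_s g(X_{s,t})$ with $k_t=2$ on $S$ and $k_t=1$ off $S$. Its cost is $\mathrm{CPU}_{\mathrm{split}}=\sum_t k_t=T+|S|$, which produces the factor $(T+|S|)/T=1+p$. For the variance, the two replicas are independent after allocation (A3 with $\rho=0$, or $\rho\ge0$, which only helps). For an index $t\in S$, the cross-replica covariance therefore vanishes beyond the shared pre-split history. Averaging two conditionally independent copies halves the covariance mass attributable to those indices, while indices outside $S$ keep their full mass. Positivity of $\Gamma$ lets us drop shared-history cross terms that could only be counted once, which gives $\mathrm{Var}_{\mathrm{split}}\le T^{-1}(C_{\mathrm{out}}+\tfrac12C_{\mathrm{in}})$. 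Multiplying by the CPU cost and dividing by $\mathcal E_{\mathrm{MC}}$ yields the first inequality. The closed form then follows from dividing numerator and denominator by $C_{\mathrm{out}}$: with $r=C_{\mathrm{in}}/C_{\mathrm{out}}$, we have $(1+r/2)/(1+r)=(2+r)/(2(1+r))$.

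\textbf{Main obstacle.} The difficult step is justifying the factor $\tfrac12$ on the in-region mass. The replicas share their past before the split, so covariances between an in-region time and earlier out-region times are not halved. I would first assign each pairwise covariance $\Gamma(t-s)$ to its later index and check that the pair contributes with weight $\tfrac1{k_t}\cdot\tfrac{1}{k_s}\cdot(\text{number of shared descendants})\le$ the weight counted in $C_{\mathrm{in}}$ or $C_{\mathrm{out}}$. If this crude accounting fails, I would fall back on the two-state surrogate stated in the theorem, where $\Gamma(h)=\pi(1)\pi(0)\lambda^h$ with $\lambda\ge0$. In that setting the sums are explicit geometric series, and the inequality can be checked directly, with the clustered-regime assumption ensuring the in-region mass is the one carried by the doubled indices.
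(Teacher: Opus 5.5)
There is a genuine gap, at exactly the step you flagged as the main obstacle. In the splitting arm you argue that positivity of $\Gamma$ lets you drop the shared-history cross terms and so obtain an \emph{upper} bound. But positivity makes those terms nonnegative contributions to $\operatorname{Var}_{\mathrm{split}}$, so discarding them can only give a lower bound.

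Suppose the second replica is forked from the common state $X_\tau$ (shared past, conditionally independent future) with $S=\{\tau+1,\dots,T\}$. Write $c_t=\sum_{u}\Gamma(t-u)$, and use reversibility for $\operatorname{Cov}(P^{a}g(X_\tau),P^{b}g(X_\tau))=\Gamma(a+b)$. A direct computation then gives
\[
T^{2}\operatorname{Var}_{\mathrm{split}}=\sum_{t}\frac{c_t}{k_t}+\frac12\sum_{t\le\tau<u}\Gamma(u-t)+\frac12\sum_{t,u>\tau}\Gamma(t+u-2\tau).
\]
Each ordered pre-fork/post-fork pair enters with weight $1$, not the average $\tfrac34$ that $\sum_t c_t/k_t$ assigns to it. So your check $\tfrac1{k_t}\tfrac1{k_s}\cdot(\text{shared descendants})\le(\text{counted weight})$ fails precisely on these pairs, and attaching them to the later index, as you propose, only makes it worse. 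The post-fork cross-replica terms are pure excess on top of that.

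Already for $T=2$, $S=\{2\}$ and the two-state surrogate $\Gamma(h)=v\lambda^{h}$, the exact ratio is $\mathcal E_{\mathrm{split}}/\mathcal E_{\mathrm{MC}}=3(3+\lambda)/8$. This is strictly above the claimed $(1+p)\frac{2+r}{2(1+r)}=9/8$ for every $\lambda>0$. So neither your pair accounting nor the two-state fallback can succeed: with shared pre-split history the inequality is false. In the theorem, the two-state surrogate serves only to produce the closed form via $r=(1-p)/p$, not to establish the bound.

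The paper's proof sidesteps this by reading the idealization as fully independent replicas. Conditional on the split times, it starts directly from $\operatorname{Var}(\hat\mu_T)\le T^{-2}\sum_t c_t/k_t$, which involves no cross-replica covariance at all. In that model the bound is correct. The variance equals $T^{-2}\sum_{t,u}\Gamma(t-u)/\max(k_t,k_u)$, and $\Gamma\ge0$ together with $1/\max(a,b)\le\tfrac12(1/a+1/b)$ gives it. Positivity also makes $\mathcal E_{\mathrm{MC}}=T^{-1}(C_{\mathrm{out}}+C_{\mathrm{in}})$ exact, which supplies the matching lower bound. If you adopt that reading instead of trying to control shared-history terms, the rest of your outline coincides with the paper's: the Monte Carlo identity, the CPU factor $1+p$, and the algebra in $r$.

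One more slip to fix: the split-arm bound must read $\operatorname{Var}_{\mathrm{split}}\le T^{-2}(C_{\mathrm{out}}+\tfrac12C_{\mathrm{in}})$, not $T^{-1}(\cdot)$. As written, the ratio picks up a spurious factor $T$.
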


The bound is below $1$, and hence splitting reduces variance per unit CPU, exactly when $p<C_{\mathrm{in}}/(2C_{\mathrm{out}}+C_{\mathrm{in}})$. When bugs are strongly clustered, $C_{\mathrm{in}}\gg C_{\mathrm{out}}$, this condition approaches $p<1$, and the variance ratio approaches $(1+p)/2$. The two-state output surrogate makes this relationship explicit: $r=(1-p)/p$, so rare visits to the bug region (small $p$) correspond to strong clustering (large $r$). Substituting this identity gives a ratio depending only on $p$, $(1+p)^2/2$, which is below $1$ exactly when $p<\sqrt{2}-1\approx0.414$ (Appendix~\ref{app:pf-thm-splitting}). The ratio bound also requires positivity to obtain a matching lower bound; in the two-state model, this reduces to the directly checkable condition $\hat\lambda_2\ge0$, which holds in $102$ of the $110$ evaluated cells and in all $40$ Magma cells. Theorem~\ref{thm:splitting_var} considers the binary fork $k_t\in\{1,2\}$ that yields this closed-form rule. Our deployed schedule instead forms a larger tree whose descendants remain active after leaving the bug region; the next result extends the analysis to this general tree.


\begin{proposition}[Branching-tree variance bound, tree fixed]\label{prop:tree}
Let $\mathcal{T}$ be the splitting tree, $L_t$ its leaves at time $t$, and
$\hat P_{\mathcal{T}}=\frac1T\sum_{t}\sum_{\ell\in L_t}w_{\ell,t}Y_{\ell,t}$ with $w_{\ell,t}=1/|L_t|$
the estimator we deploy, which weights every live branch equally. For leaves $\ell\neq j$ let $\tau_{\ell j}$ be the fork time of their
most recent common ancestor, a time decided by the trajectory itself, let $\nu_\tau$ be the distribution of the state at that fork, and let $\kappa$ be the largest factor by which any such fork distribution can inflate a probability relative to the stationary one ($\kappa=1$ if forks happen at stationarity). Assume the contraction $\lVert P^{n}g\rVert_\pi\le\rho^{\,n}\lVert g\rVert_\pi$ for some $\rho\in(0,1)$, where $\lVert\cdot\rVert_\pi$ is the root-mean-square size of a function under $\pi$; this holds with $\rho=1-\gamma^X_R$ for a reversible input chain. Descendants forked at $\tau$ are independent of each other once the state at the fork is fixed, and the tree's shape and fork times are fixed independently of that post-fork randomness. Write $c_t=\sum_{u=1}^{T}\bigl|\operatorname{Cov}_\pi\bigl(g(X_t),g(X_u)\bigr)\bigr|$ for the total autocovariance a single branch contributes at time $t$. Then, for that tree, 
\[
\begin{aligned}
\operatorname{Var}\bigl(\hat P_{\mathcal{T}}\bigr)\;\le\;\frac1{T^{2}}\,\mathbb{E}\Bigl[&\sum_{t=1}^{T}\frac{c_t}{|L_t|}
\;+\;\kappa\operatorname{Var}_\pi(g)\;\times\\[-2pt]
&\sum_{t,u}\ \sum_{\substack{\ell\in L_t,\,j\in L_u\\ \ell\neq j}}
w_{\ell,t}w_{j,u}\,\rho^{\,(t-\tau_{\ell j})+(u-\tau_{\ell j})}\Bigr].
\end{aligned}
\]
(Proof in Appendix~\ref{app:pf-prop-tree}.)
\end{proposition}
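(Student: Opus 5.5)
Conditional on the tree shape (leaf sets $L_t$, fork times, weights $w_{\ell,t}=1/|L_t|$), which by hypothesis is independent of the post-fork randomness, I would first expand the variance as a double sum of covariances. Write $\hat P_{\mathcal T}-\mathbb E\hat P_{\mathcal T}=\frac1T\sum_t\sum_{\ell\in L_t}w_{\ell,t}g(X_{\ell,t})$ up to the centering; the fixed-tree variance then equals
\[
\frac1{T^2}\sum_{t,u}\sum_{\ell\in L_t,\,j\in L_u}w_{\ell,t}w_{j,u}\operatorname{Cov}\bigl(g(X_{\ell,t}),g(X_{j,u})\bigr).
\]
Averaging over the tree adds a between-tree term. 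I would absorb it using the law of total variance together with the fact that the stated bound already sits inside $\mathbb E[\cdot]$. If that term does not vanish, I would restrict attention to the conditional variance given the shape, which is how the statement ``for that tree'' reads. I would then split the pairs $(\ell,j)$ into two groups: diagonal pairs $\ell=j$, meaning the same lineage followed at two times, and off-diagonal pairs $\ell\neq j$.

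\textbf{Diagonal terms.} A single lineage is a path of the chain, so each such covariance is at most $|\operatorname{Cov}_\pi(g(X_t),g(X_u))|$ under the stationary start. Summing over $u$ gives at most $c_t$. Each time $t$ carries weight $\sum_{\ell\in L_t}w_{\ell,t}^2=1/|L_t|$, and the cross-time factor $w_{\ell,u}\le 1$ can only shrink the contribution, so bounding it by $1$ is safe. The result is the term $\sum_t c_t/|L_t|$. One caveat: a lineage that is live at time $t$ may have a different identity at time $u$ after a split. I would handle this by treating ancestor segments as the same path, so the weights stay at most $1$.

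\textbf{Off-diagonal terms.} This is the main step. For $\ell\neq j$, condition on $\mathcal F_{\tau}$, the history up to the fork $\tau=\tau_{\ell j}$. Given the fork state $x$, the two descendants are independent, so
\[
\operatorname{Cov}\bigl(g(X_{\ell,t}),g(X_{j,u})\mid X_\tau=x\bigr)=0.
\]
The law of total covariance then leaves only
\[
\operatorname{Cov}_{\nu_\tau}\bigl(P^{t-\tau}g(X_\tau),\,P^{u-\tau}g(X_\tau)\bigr)\le\mathbb E_{\nu_\tau}\bigl[|P^{t-\tau}g|\,|P^{u-\tau}g|\bigr].
\]
Cauchy--Schwarz bounds the right side by $\|P^{t-\tau}g\|_{L^2(\nu_\tau)}\|P^{u-\tau}g\|_{L^2(\nu_\tau)}$. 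The density bound $d\nu_\tau/d\pi\le\kappa$ converts each $L^2(\nu_\tau)$ norm into the $\pi$-norm at a factor $\sqrt\kappa$ each, giving $\kappa$ overall. The contraction hypothesis then gives $\kappa\,\rho^{(t-\tau)+(u-\tau)}\|g\|_\pi^2$, and $\|g\|_\pi^2=\operatorname{Var}_\pi(g)$. Weighting and summing yields the second term exactly.

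\textbf{Main obstacle.} The difficulty is that $\tau_{\ell j}$ is random, since it is chosen by the trajectory (a bug hit). The Markov property at the stopping time $\tau$ justifies conditioning on $X_\tau$, and the definition of $\nu_\tau$ as the law at the fork supplies $\kappa$. The randomness of $\tau$ is then pushed into the outer expectation, which is why the bound is stated inside $\mathbb E[\cdot]$. I would check carefully that independence of the shape from post-fork randomness lets the weights $w$ be treated as constants inside that conditioning. For the diagonal terms I would similarly note that post-fork marginals need not be stationary. I would either assume stationarity of lineages, as the definition of $c_t$ implicitly does, or inflate those terms by $\kappa$ as well; I would flag this as the point where the argument might need the extra factor.
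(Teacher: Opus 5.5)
Your decomposition and your treatment of the $\ell\neq j$ pairs follow the paper's own argument: conditional independence at the fork, the law of total covariance, Cauchy--Schwarz, the factor $\kappa$, then the contraction. That half is fine once you fix one step. The step $\operatorname{Cov}_{\nu_\tau}(A,B)\le\mathbb E_{\nu_\tau}[|A|\,|B|]$ is false in general (take $A=\mathbf 1_E$, $B=-\mathbf 1_{E^c}$), so go directly through $|\operatorname{Cov}_{\nu_\tau}(A,B)|\le\|A\|_{L^2(\nu_\tau)}\|B\|_{L^2(\nu_\tau)}$.

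The genuine gap is the diagonal step. Bounding $w_{\ell,u}\le 1$ term by term cannot produce the factor $1/|L_t|$. It leaves a weight of $\sum_{\ell\in L_t}w_{\ell,t}=1$ or more on each time pair, i.e.\ $\sum_t c_t$.

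Worse, your convention of identifying each leaf with its ancestor segment makes the target inequality false. For $u<t$, every leaf at time $t$ is paired with its ancestor at $u$. The total weight on the time pair $(t,u)$ is therefore $\sum_{\ell\in L_t}\frac1{|L_t|}\cdot\frac1{|L_u|}=\frac1{|L_u|}$, whereas $\sum_t c_t/|L_t|$ allots it only $\frac1{|L_t|}$. The two differ whenever the tree grew in between.

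Concretely, take:
\begin{itemize}
\item a two-state chain with $Pg=\lambda g$, $\lambda\in(0,1)$, and a stationary start;
\item horizon $T=2$;
\item one deterministic fork at $\tau=1$ into two children, so $\kappa=1$ and $\rho=\lambda$.
\end{itemize}
Then $T^2\operatorname{Var}(\hat P_{\mathcal T})=\operatorname{Var}_\pi(g)\bigl(\tfrac32+2\lambda+\tfrac12\lambda^2\bigr)$. Your diagonal bound is $\sum_t c_t/|L_t|=\tfrac32(1+\lambda)\operatorname{Var}_\pi(g)$. The only remaining cross term, from the two children at time $2$, is $\tfrac12\lambda^2\operatorname{Var}_\pi(g)$. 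Together they fall short of the true variance by $\tfrac12\lambda\operatorname{Var}_\pi(g)$.

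The missing ingredient is the labelling that the paper's one-line treatment of same-lineage pairs needs and leaves implicit. At each fork, let one designated child keep the parent's label, as in Theorem~\ref{bug_rate_splitting}, and give the other children new labels. Each label is then a single chain path. Moreover,
\[
\sum_{\ell\in L_t\cap L_u}w_{\ell,t}w_{\ell,u}=\frac{|L_t\cap L_u|}{|L_t|\,|L_u|}\le\frac1{|L_t|},
\]
because $|L_t\cap L_u|\le|L_u|$. The operative fact is $\sum_{\ell}w_{\ell,u}\le1$, not $w_{\ell,u}\le1$. So the same-label part is at most $\sum_t c_t/|L_t|$, under the stationarity-along-paths caveat you already flagged.

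The pairs you had placed on the diagonal (parent label at $t$, newly labelled descendant at $u$) now sit in the $\ell\neq j$ sum. There are two cases.
\begin{itemize}
\item If $t\ge\tau_{\ell j}$, your conditioning argument applies verbatim. $g(X_\tau)$ is $\mathcal F_\tau$-measurable, which gives $P^{0}g$ at $t=\tau$.
\item If $t<\tau_{\ell j}$, both observations lie on one path. The covariance is then at most $\rho^{u-t}\operatorname{Var}_\pi(g)\le\kappa\rho^{(t-\tau)+(u-\tau)}\operatorname{Var}_\pi(g)$, because the exponent is smaller and $\kappa\ge1$.
\end{itemize}
In the example, this moves the pair formed by the root at time $1$ and the second child at time $2$ into the cross sum, where it is charged $\rho\operatorname{Var}_\pi(g)$. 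The bound becomes $\operatorname{Var}_\pi(g)\bigl(\tfrac32+\tfrac52\lambda+\tfrac12\lambda^2\bigr)$, which does dominate the true variance.

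Finally, the between-tree term from the law of total variance cannot be absorbed into the outer $\mathbb E$. Your fallback to the variance conditional on the tree is the right reading, and it is the one the paper adopts.
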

Proposition~\ref{prop:tree} gives a general upper bound for split trees that depends on their structure only through the ancestry times $\tau_{\ell j}$. It therefore covers trees of any shape, including descendants that persist to the horizon, with $\kappa$ accounting explicitly for the distribution at the split state. Our deployed schedule chooses its forks from observed bugs, so its total variance carries a further term from that choice, which we do not bound; its net effect is what the matched-CPU experiments measure. At strictly matched CPU, the deployed schedule lowers the CPU-weighted variance across campaigns in $66$ of $70$ pairs, with a median split-to-baseline ratio of $0.098$ (approximately $10\times$ lower; Section~\ref{sec:exp:setup}).

\subsubsection{Bug Detection Rate Comparison}

We denote by \(|B_{\text{MC}}|\) the number of bug-triggering events observed by \(\text{MC}\), so $R_{\text{MC}} = |B_{\text{MC}}|/T$, while the splitting rate is $R_{\text{Split}} = \frac{1}{T}\sum_{t=1}^{T}\frac{1}{k_t}\sum_{s=1}^{k_t}Y_{s,t}$, the $1/k_t$ weighting giving each of the $k_t$ active paths equal voice so the rate stays an unbiased per-path average under branching. Write $J(q,\xi)$ for the restart kernel that launches a fresh process from a copied queue state $q$ with process randomness $\xi$: the deployed descendants are draws from $J(q_s,\cdot)$, so taking the comparator to be one further draw from that same kernel, coupled to descendant $1$, makes the identity below exact for what is run. No in-memory state is assumed copied.

\begin{theorem}[Pathwise event containment, any $k_t$]\label{bug_rate_splitting}
Fix any realized split tree and designate one continuation child at every fork; write $\ell_0(t)$ for the branch this designated path occupies at time $t$, and let $\widehat P_{0}$ and $N_0$ be its rate and its raw event count. Then, writing the rate so that every live branch counts equally, for the rate $\widehat P_{\mathcal{T}}$ and raw count $N_{\mathcal{T}}$ of the whole tree,
\[
\begin{aligned}
\widehat P_{\mathcal{T}}-\widehat P_{0}
&=\frac1T\sum_{t=1}^{T}\frac{1}{k_t}\sum_{\ell=2}^{k_t}\bigl(Y_{\ell,t}-Y_{1,t}\bigr),\\
N_{\mathcal{T}}-N_{0}
&=\sum_{t}\sum_{\ell\neq\ell_0(t)}Y_{\ell,t}\ \ge\ 0 .
\end{aligned}
\]
(Proof in Appendix~\ref{app:pf-thm-bugrate}.)
\end{theorem}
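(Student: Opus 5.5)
The proof is pure bookkeeping on a fixed realization, so I will not use any probabilistic assumption. The only ingredient is to write both estimators as sums over the same index set. First I fix the realized split tree and, at every time $t$, order the $k_t$ live branches so that the designated continuation branch $\ell_0(t)$ carries index $1$. This labeling is consistent because at each fork the designated child inherits the continuation. The deployed descendants are draws from the restart kernel $J(q_s,\cdot)$, and the comparator is coupled to descendant $1$. Hence the single continuation path is literally the sequence of branches $\ell_0(t)$ inside the tree, and it has outputs $Y_{1,t}$. Its rate is $\widehat P_0=\frac1T\sum_t Y_{1,t}$ and its raw count is $N_0=\sum_t Y_{1,t}$. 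I also need to check that this path is defined at every $t\le T$: a split parent is replaced by its children, so some child always continues.

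For the rate identity, I write the tree rate with equal weights, $\widehat P_{\mathcal T}=\frac1T\sum_t\frac1{k_t}\sum_{\ell=1}^{k_t}Y_{\ell,t}$. I then subtract $Y_{1,t}=\frac1{k_t}\sum_{\ell=1}^{k_t}Y_{1,t}$ inside the time sum. The $\ell=1$ term cancels, which leaves exactly $\frac1T\sum_t\frac1{k_t}\sum_{\ell=2}^{k_t}(Y_{\ell,t}-Y_{1,t})$.

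For the count, $N_{\mathcal T}=\sum_t\sum_{\ell\in L_t}Y_{\ell,t}$. Removing the $\ell_0(t)$ term at each $t$ gives $N_{\mathcal T}-N_0=\sum_t\sum_{\ell\ne\ell_0(t)}Y_{\ell,t}$. This is nonnegative because every $Y_{\ell,t}\in\{0,1\}$.

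The main obstacle is not the algebra but making precise that $N_0$ is the event count of a legitimate single-path comparator. It must be the count the unsplit fuzzer would have produced under the coupling, and not something artificially embedded in the tree. I would argue this from the definition of $J$ together with the replacement rule of Algorithm~\ref{alg:splitting}. Under this coupling the continuation path is distributed as one continuation draw from each copied queue state. One more point needs stating: the bug event that triggers a split, observed at the fork time, is counted once, on the parent's segment of $\ell_0$, so it is not double counted.

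I would also remark that the rate difference carries no sign guarantee pathwise. Only the count inequality does, and this asymmetry is why the theorem states nonnegativity only for $N_{\mathcal T}-N_0$.
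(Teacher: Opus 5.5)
Your proposal is correct and is essentially the paper's proof: you relabel so that $\ell_0(t)=1$, subtract $Y_{1,t}$ termwise inside the equally weighted average to get the rate identity, and drop the $\ell_0(t)$ term at each $t$ to get the count identity, with nonnegativity because each summand is an indicator. The step you call the ``main obstacle'' (showing $N_0$ is a legitimate comparator) is not needed, because the statement compares only against the path embedded in the realized tree. That path is a sequence of restarts from copied queue states (the paper's kernel $J$), so equating it with what an unsplit fuzzer would have produced would be an extra claim the theorem does not make.
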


\noindent The raw count therefore never decreases, for any $k_t\ge1$ and hence for every tree the deployed schedule produces, and the weighted rate improves exactly when the first right-hand side is non-negative. The path being compared against is the one embedded in the same tree; the comparison against a separately run Monte Carlo arm at equal CPU is empirical (Section~\ref{sec:evaluation}).

\begin{proposition}[Adaptive branching raises the expected rate]\label{prop:sizebias}\leavevmode
Let $m(x)$ be how often state $x$ splits and $V(x)=\mathbb{E}_x\bigl[\sum_{u>t}Y_u\bigr]$ how many bugs follow from it.
One reproduction step re-weights the population of live branches from $\mu$ to $\mu_m(dx)=m(x)\mu(dx)/\mathbb{E}_\mu[m]$, so
\[
\mathbb{E}_{\mu_m}[V]-\mathbb{E}_{\mu}[V]
=\frac{\mathbb{E}_\mu[mV]-\mathbb{E}_\mu[m]\mathbb{E}_\mu[V]}{\mathbb{E}_\mu[m]}
=\frac{\operatorname{Cov}_\mu(m,V)}{\mathbb{E}_\mu[m]} .
\]
Hence the expected bug-detection rate strictly increases if and only if $\operatorname{Cov}_\mu(m,V)>0$.
(Proof in Appendix~\ref{app:pf-prop-sizebias}.)
\end{proposition}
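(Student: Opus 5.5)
The identity is pure algebra once the re-weighted law is written out, so I would prove it by direct substitution and then read off the sign condition. The only hypotheses I need are that $m\ge 0$ is $\mu$-integrable with $\mathbb{E}_\mu[m]>0$, so that $\mu_m$ is a genuine probability measure, and that $V$ and $mV$ are $\mu$-integrable. On a finite input domain $D$ with a finite horizon these conditions are automatic: $V(x)\le T$, and $m$ is bounded by the branching cap.

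\emph{Step 1: the size-biased law.} From the definition $\mu_m(dx)=m(x)\mu(dx)/\mathbb{E}_\mu[m]$, I compute
\[
\mathbb{E}_{\mu_m}[V]=\int V(x)\,\frac{m(x)}{\mathbb{E}_\mu[m]}\,\mu(dx)=\frac{\mathbb{E}_\mu[mV]}{\mathbb{E}_\mu[m]} .
\]
This is an integral against a density. On finite $D$ it is simply a finite sum, using the counting-measure convention of Section~\ref{sec:formulation}.

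\emph{Step 2: subtract and recognize the covariance.} I would write $\mathbb{E}_\mu[V]=\mathbb{E}_\mu[m]\mathbb{E}_\mu[V]/\mathbb{E}_\mu[m]$ and subtract it from the expression in Step 1. This gives the numerator $\mathbb{E}_\mu[mV]-\mathbb{E}_\mu[m]\mathbb{E}_\mu[V]$, which is $\operatorname{Cov}_\mu(m,V)$ by definition. Since $\mathbb{E}_\mu[m]>0$, the sign of the difference equals the sign of the covariance. Hence the expected future yield strictly increases if and only if $\operatorname{Cov}_\mu(m,V)>0$.

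\emph{Step 3: connect future yield to the rate.} This step is the main obstacle. I need to justify that the expected bug-detection rate after the reproduction step is exactly $\mathbb{E}_{\mu_m}[V]$, up to a common positive normalization. Here I would use the equal-weight estimator $w_{\ell,t}=1/|L_t|$ from Proposition~\ref{prop:tree}. After reproduction, each live branch is a draw from $\mu_m$: the parent at $x$ is replaced by $m(x)$ children that continue from $x$ under the restart kernel $J$, and each child's conditional expected future count is $V(x)$ by the Markov property. Averaging with weights $1/k_t$ then gives an expected per-branch future rate of $\mathbb{E}_{\mu_m}[V]/(T-t)$. Without splitting, the corresponding quantity is $\mathbb{E}_\mu[V]/(T-t)$.

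The delicate point in Step 3 is that $k_t$ is random, so the expectation of the ratio is not the ratio of expectations. I would handle this by conditioning on the pre-reproduction population and the realized counts. With $n$ parents at $x_1,\dots,x_n$, the weighted average equals $\sum_i m(x_i)V(x_i)/\sum_i m(x_i)$ exactly. Interpreting $\mu$ as the empirical law of the parents, this is precisely $\mathbb{E}_{\mu_m}[V]$, so the identity holds pathwise given the parents. The ``if and only if'' then follows from Step 2, applied to that conditional law.
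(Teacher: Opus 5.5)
Your Steps 1 and 2 are the paper's proof: write $\mathbb{E}_{\mu_m}[V]=\mathbb{E}_\mu[mV]/\mathbb{E}_\mu[m]$, subtract $\mathbb{E}_\mu[V]$, recognize the numerator as the covariance, and use $\mathbb{E}_\mu[m]>0$ to transfer the sign. The paper stops there and simply identifies the post-step expected rate with $\mathbb{E}_{\mu_m}[V]$, so your Step 3 is extra justification that the paper does not give; your conditioning-on-parents argument is sound for a single reproduction step, provided two things hold: no branch splits again before $T$ (otherwise the later weights $1/k_u$ change and a child's weighted contribution is no longer $V(x)$), and $V$ is evaluated under the restart kernel $J$ that the children actually run.
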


States that branch more also tend to have higher future bug yield. For a single fork whose descendants are independent once the state at the fork is fixed, $m$ is constant at the split state, so $\operatorname{Cov}_\mu(m,V)=0$. In the $K$-stage schedule, however, $m$ increases with observed bug activity $A$; when bugs cluster, the future yield $V$ also increases with $A$. Thus, $m$ and $V$ are non-decreasing functions of the same signal, and two quantities that both rise with the same signal cannot be negatively correlated, so $\operatorname{Cov}_\mu(m,V)\ge0$, with strict inequality when both vary on sets of positive probability. The same mechanism applies at successive branching stages.


\subsubsection{Discussions}\label{discussion_cases}
The BDR increment is bounded by \(\frac{1}{2T}(|B_{\text{split}}|-|B_{\text{MC}}|)\) and, in the strong-clustering limit, the variance-increment ratio approaches \(\frac{T+|S|}{2T}\), a \textbf{trade-off}: when the event surplus is small, splitting improves both detection and variance (\textbf{best-of-both-worlds}); when large, it trades variance for a larger detection gain. The derivations are in Appendices~\ref{app:pf-thm-splitting} and~\ref{app:pf-thm-bugrate}.

The experiments that follow confirm both at scale.

%% file: section/numerical-experiments.tex
\section{Numerical Experiments}\label{sec:exp}

We validate on two \emph{complementary} families: FuzzBench, whose deduplicated \texttt{crash\_key} count~\cite{fuzzbench} measures bug discovery and its variance, and Magma, whose planted-bug ground truth confirms bug-finding against real CVEs. The complete grids are in Appendix~\ref{app:appendix_experiments}; the data and code are in the artifact.\footref{fn:codebase}

\subsection{Experimental Setup and Data}\label{sec:exp:setup}

We evaluate seven widely used mutation-based fuzzers (AFL~\cite{afl}, AFLFast~\cite{AFLFast}, AFL++~\cite{AFL++}, AFLSmart~\cite{AFLSmart}, MOpt~\cite{lyu2019mopt}, libFuzzer~\cite{LibFuzzer}, Honggfuzz~\cite{honggfuzz}), following~\cite{schloegel2024sok}, under both Monte Carlo evaluation and splitting (Algorithm~\ref{alg:splitting}) on 10 FuzzBench benchmarks~\cite{fuzzbench}: arrow, ffmpeg, grok, libhevc, libhtp, matio, openh264, php, poppler, stb, each run for up to \(T=23\) hours. Several benchmarks contain multiple programs.

\para{Experimental protocol}
Fix a benchmark--fuzzer pair $(b,f)$.
Each \emph{trial} is a $T{=}23$\,h campaign. Both arms have $M{=}20$ independent trials per pair, splitting's descendants weighted by $1/k_t$ (Section~\ref{sec:splitting}). The arms are compute-matched: the splitting arm's total CPU averages $1.00\times$ the non-split arm's $460$ CPU-hours (median $1.10\times$; $53/70$ pairs within $\pm20\%$), so the MC arm doubles as the natural baseline of more independent instances at equal CPU. At strictly matched per-pair CPU, on the \emph{raw} union of unique bugs (no $1/k_t$ weighting), splitting beats this independent-instance baseline in $53$ of $70$ pairs ($7$ ties, $10$ losses; mean $+3.4$, $95\%$ CI $[2.3,4.6]$; sign test $p=1.7\times10^{-8}$; two-way clustered on per-benchmark and per-fuzzer medians, $p=3.9\times10^{-3}$, $7.8\times10^{-3}$), and is more stable in $66$ of $70$ (median ratio $0.098$, $\approx10\times$). Matching drops \emph{whole} root trials from the larger arm, never truncating a campaign, until its retained CPU no longer exceeds the smaller arm's, pools and deduplicates unique bugs identically on both sides, and repeats over $10{,}000$ seeded resamples. Every split branch is charged in full.
FuzzBench records every 15 minutes; we align both arms on a common 6-minute grid ($\delta=0.1$\,h) by forward-filling, and all time extrema are taken over that grid.

\subsubsection{Bug Definition and Deduplication}\label{subsec:bug-def}

\para{Sanitizer-based bug oracle}
We compile each benchmark with AddressSanitizer and UndefinedBehaviorSanitizer, so violations deterministically raise \emph{fatal} reports; each is evidence of bug-triggering behavior.

\para{What we count as a \emph{bug}}
A bug is \emph{not} a failing input: many inputs trigger one defect. We count \emph{deduplicated failure signatures}, meaning sanitizer error class plus canonicalized stack trace, once per trial, at first appearance. This is FuzzBench's own \texttt{crash\_key} deduplication, recorded as ``unique bugs found''~\cite{fuzzbench}; we report \emph{unique bugs} throughout.

\para{Relation to ground-truth bugs}
FuzzBench's count of \emph{distinct} \texttt{crash\_key}s is the field-standard unique-bug metric, and Magma provides an independent, stronger check that the bug-event-to-distinct-bug link is real: its planted-bug identifiers confirm the gains directly: splitting never reduces distinct real bugs in any of the $40$ cells, wins per CPU-hour in $38$ of $40$, and converts real CVEs from undetected to reliably detected (Section~\ref{subsec:magma}).

\para{Clarification on figure y-axis}
Two quantities appear here, kept separate. Our theory models the \emph{bug-trigger event} rate: $Y_t$ indicates whether execution $t$ triggers a bug, so repeated triggers count; Magma reports it directly, and splitting fires more bug events in $39$ of $40$ cells, by up to $29.8\times$. The deployment-facing quantity is \emph{unique bugs}, distinct signatures counted once per trial, plotted as cumulative unique bugs over elapsed hours, FuzzBench's standard normalization~\cite{fuzzbench}. The two are coupled but not equal: within-trial Spearman $\rho=0.66$ on FuzzBench and $\rho=0.55$ on Magma. CPU-time is the fair basis because execution counts differ enormously between fuzzers at equal wall-clock time.

\subsection{Implementation}\label{sec:exp:casestudy}\label{sec:offline}
Appendix~\ref{app:implementation} details Algorithm~\ref{alg:splitting}: a sparsity signal from a \emph{prior, completed non-splitting campaign}, run separately and sharing no trial with either evaluation arm, zone entry once that signal stays below a stage threshold with persistence, and a split only when a running trial observes a new bug after entry. The schedule is fixed from historical data before the splitting campaign runs, so trials remain i.i.d.\ and use no information from their own future; the caps $K{=}3$ and $B{=}2$ are global constants and no parameter is chosen from splitting-arm outcomes. Because forks follow observed bugs, the tree is not fixed in advance, and the variance reduction we report for it is measured rather than bounded. An online version requiring no history follows in Section~\ref{sec:ablation}.

\subsection{Experiment Results}

We report robustness as the \emph{CPU-weighted variance across campaigns} $\mathcal{J}_R(T)=\mathrm{CPU}_R(T)\,S^2_{M,R}(T)$ (smaller is more stable) and effectiveness as the effort-normalized rate $\tilde{P}(t)$ (larger is stronger), both on a matched compute budget.
\para{What one data point is, and what we report} The replication unit is an independent root campaign, not a branch and not the average over campaigns. With $S^2_a$ the variance across campaigns of arm $a$ and $C_a$ the total CPU that arm consumed, every descendant branch included, we report $\mathcal{J}_a=C_a S^2_a$. It is deliberately \emph{not} $C_a\operatorname{Var}(\bar Z_a)=C_aS_a^2/M_a$, a different quantity, measuring the uncertainty of the grand mean: root counts are part of the matched-total-compute design and no division by $M_a$ is intended. Once $C_{\mathrm{Split}}=C_{\mathrm{MC}}$, $\mathcal{J}_{\mathrm{Split}}/\mathcal{J}_{\mathrm{MC}}=S^2_{\mathrm{Split}}/S^2_{\mathrm{MC}}$, so every ratio reported is already an ratio of the variances across campaigns.

\subsubsection{Monte Carlo Simulation}\label{mc simulation}

We first measure baseline variability under Monte Carlo (MC, non-splitting). Figure~\ref{fig:mc-simulation} shows the CPU-weighted variance across campaigns for two representative benchmarks (all ten in Appendix~\ref{app:appendix_experiments}): each row a benchmark, left MC and right splitting, one curve per fuzzer. Variance decays on most benchmarks, consistent with the Section~\ref{sec:evaluation} prediction that the criterion stabilises toward a finite effort-normalized variance, but heterogeneously: decreasing on \textsc{libhevc}, \textsc{libhtp}, \textsc{stb}, yet flat-to-rising on \textsc{php} and \textsc{poppler}. Robustness is thus a property of the fuzzer--benchmark pair and elapsed time, not the fuzzer alone.

\begin{figure}[htb]
    \centering
    \includegraphics[width=0.46\textwidth]{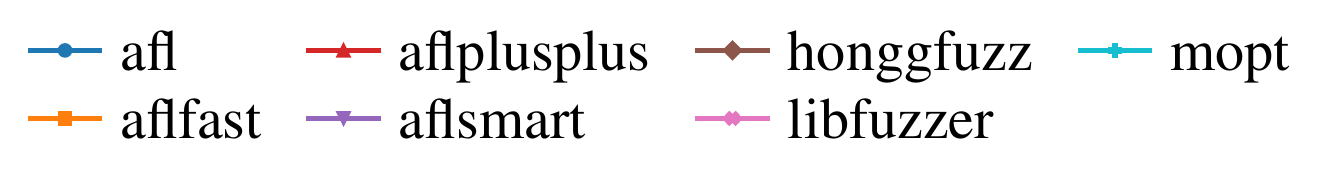}\\[3pt]
    \makebox[0.205\textwidth]{\footnotesize\textbf{Non-splitting (MC)}}\hfill\makebox[0.205\textwidth]{\footnotesize\textbf{Splitting}}\\[2pt]
    \includegraphics[width=0.205\textwidth]{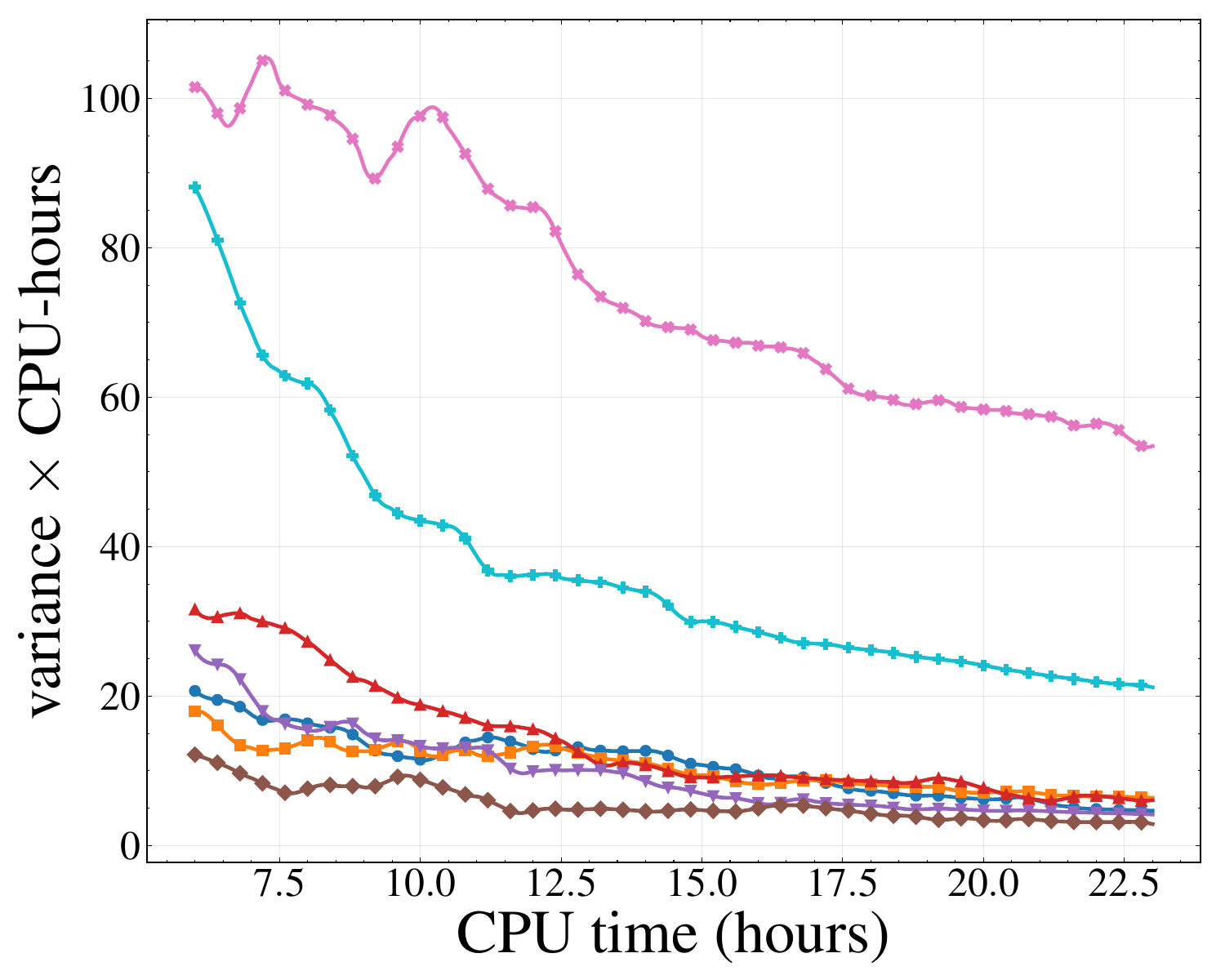}\hfill\includegraphics[width=0.205\textwidth]{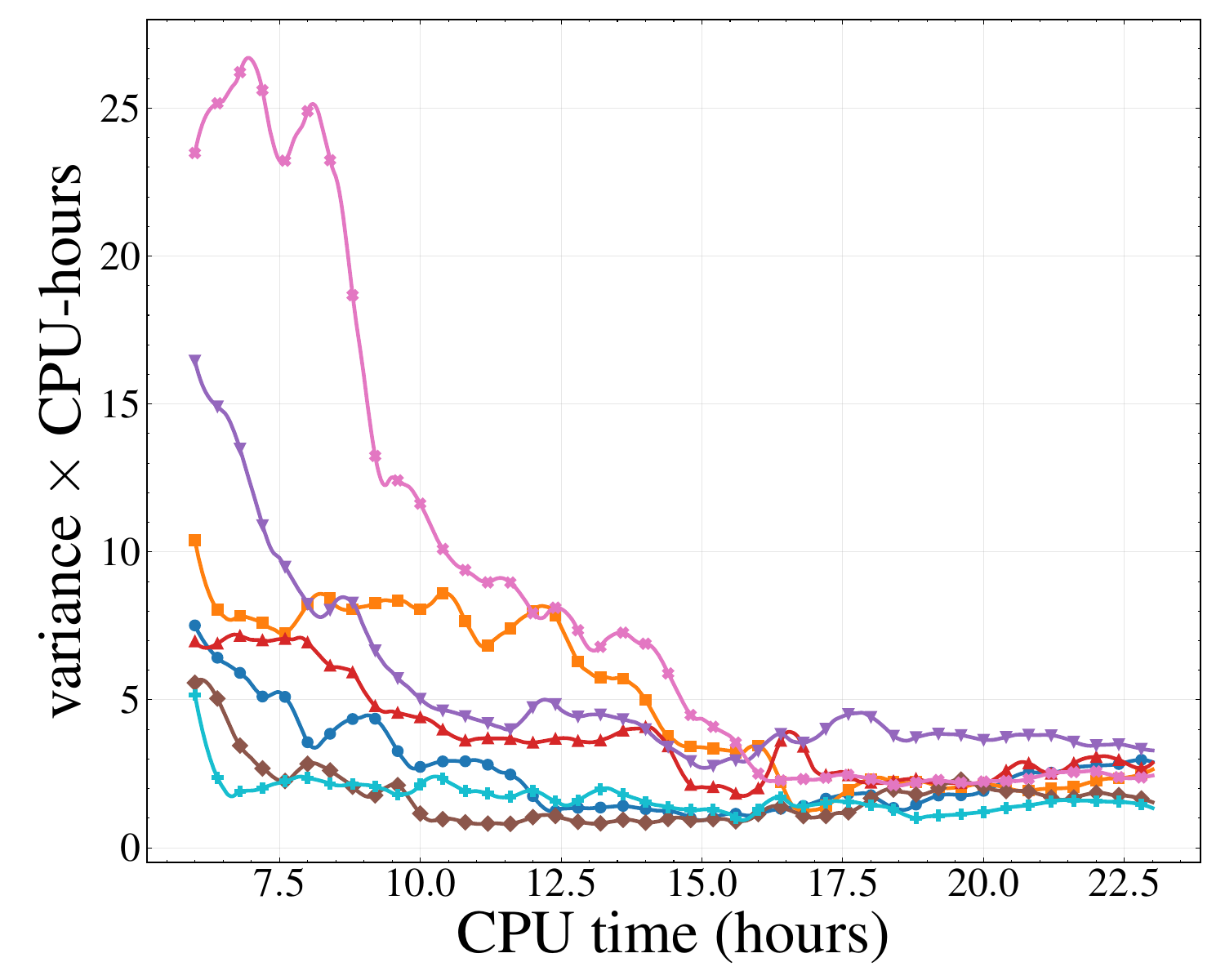}\\
    {\small (a,\,b) arrow}\\[3pt]
    \includegraphics[width=0.205\textwidth]{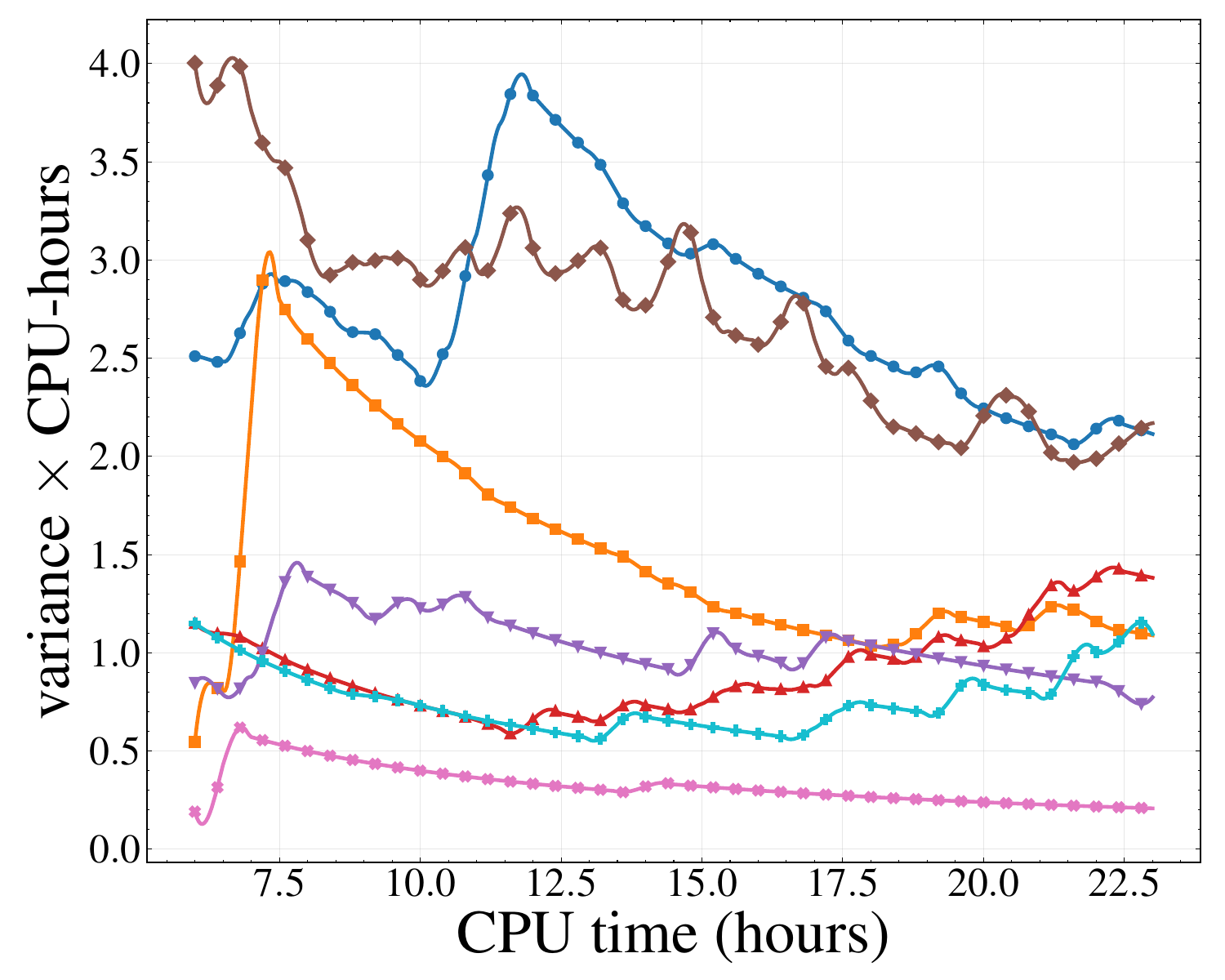}\hfill\includegraphics[width=0.205\textwidth]{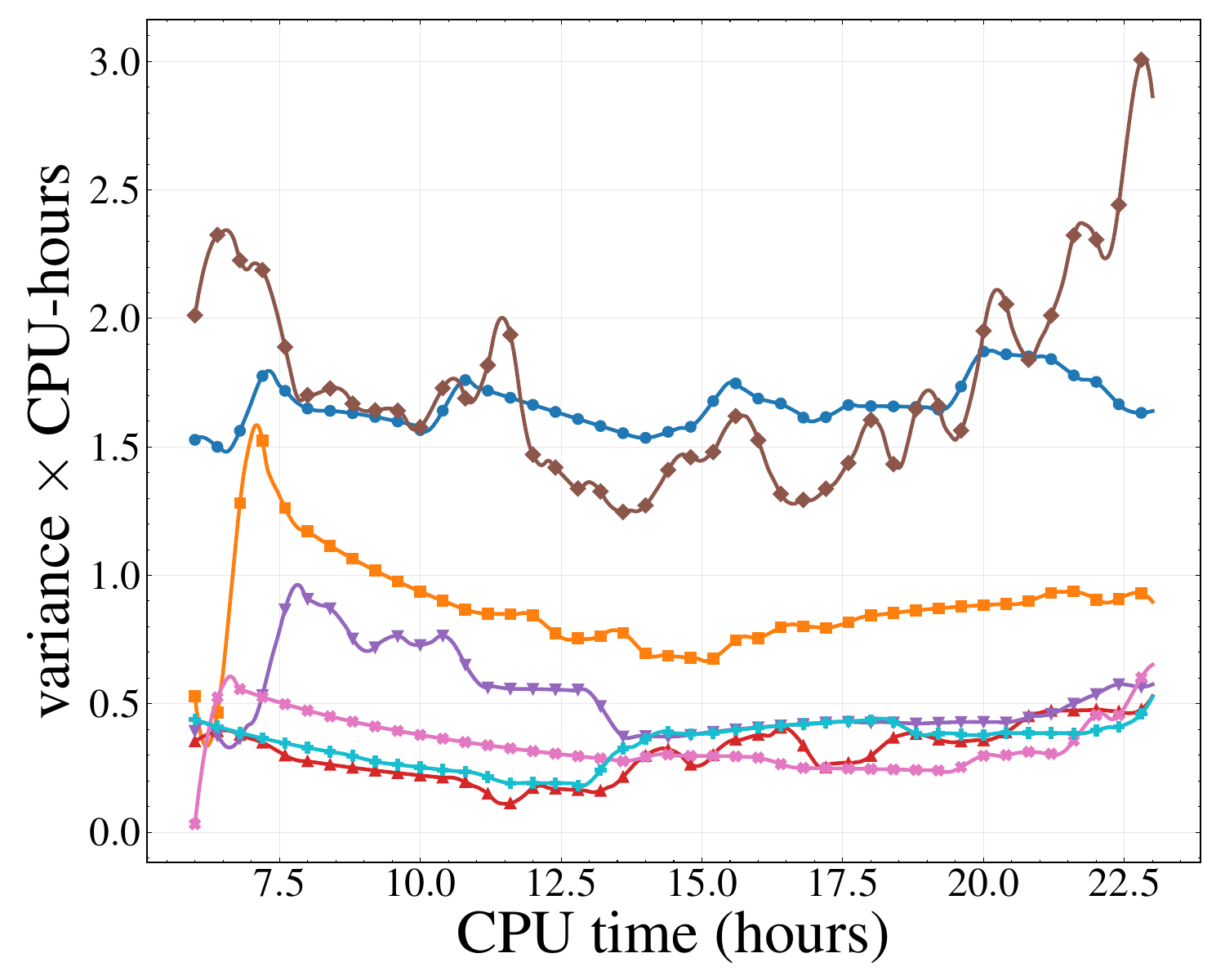}\\
    {\small (c,\,d) php}
    \caption{Variance ($\times$ CPU-hours) across the seven fuzzers under Monte Carlo (non-splitting, left) and splitting (right), for two representative benchmarks (one per row). The remaining eight benchmarks are in Appendix~\ref{app:appendix_experiments}.}
    \label{fig:mc-simulation}
\end{figure}

\subsubsection{Splitting changes the variance profile}\label{splitting simulation}

Splitting alters both the \emph{level} and \emph{temporal profile} of variance: trajectories contract earlier and stay lower for much of the run (\textsc{arrow}, \textsc{libhevc}, \textsc{libhtp}), while others (\textsc{ffmpeg}, \textsc{poppler}) fluctuate in a narrow band with mild late drift. On \textsc{libFuzzer}/\textsc{grok} and \textsc{libFuzzer}/\textsc{php}, splitting enlarges the discovered-bug set so much that variance rises, exactly the regime Theorem~\ref{thm:splitting_var} describes, and precisely where its rate gain is largest (up to $+800\%$).

\subsubsection{Comparisons}\label{variance_comparisons}

\para{Variance comparisons} We fix each fuzzer and compare its CPU-hour--normalized variance across benchmarks (The full grids appear in Appendix~\ref{app:appendix_experiments} and are also included in the artifact.); results fall into a majority regime of \emph{lower} normalized variance and corner cases where it is \emph{higher}.
In the majority regime (six fuzzers on all ten benchmarks and {libFuzzer} on eight of ten), splitting reduces the CPU-weighted variance across campaigns in $68$ of $70$ pairs over the displayed $[6,23]$-hour window; $66$ of $70$ at the terminal snapshot (the four exceptions reflect the predicted robustness--effectiveness trade-off); and under the most conservative accounting (strictly matched per-pair CPU on raw per-trial variance), $66$ of $70$, one effect under progressively less penalizing normalizations. Among winning pairs the terminal reduction has median $3.4\times$ and reaches $243\times$ (Honggfuzz on libhevc), and a Wilcoxon signed-rank test confirms it is systematic ($p = 1.9\times10^{-12}$). It is not an artifact of treating the $70$ cells as independent: aggregating to per-benchmark and per-fuzzer medians first, splitting is lower in $10$ of $10$ benchmarks ($p=9.8\times10^{-4}$) and $7$ of $7$ fuzzers ($p=7.8\times10^{-3}$). The time-resolved curves show the largest gains in the mid-phase of a run. The full grids are in Appendix~\ref{app:appendix_experiments}.

\begin{figure*}[t]
    \centering
    \includegraphics[width=0.19\textwidth]{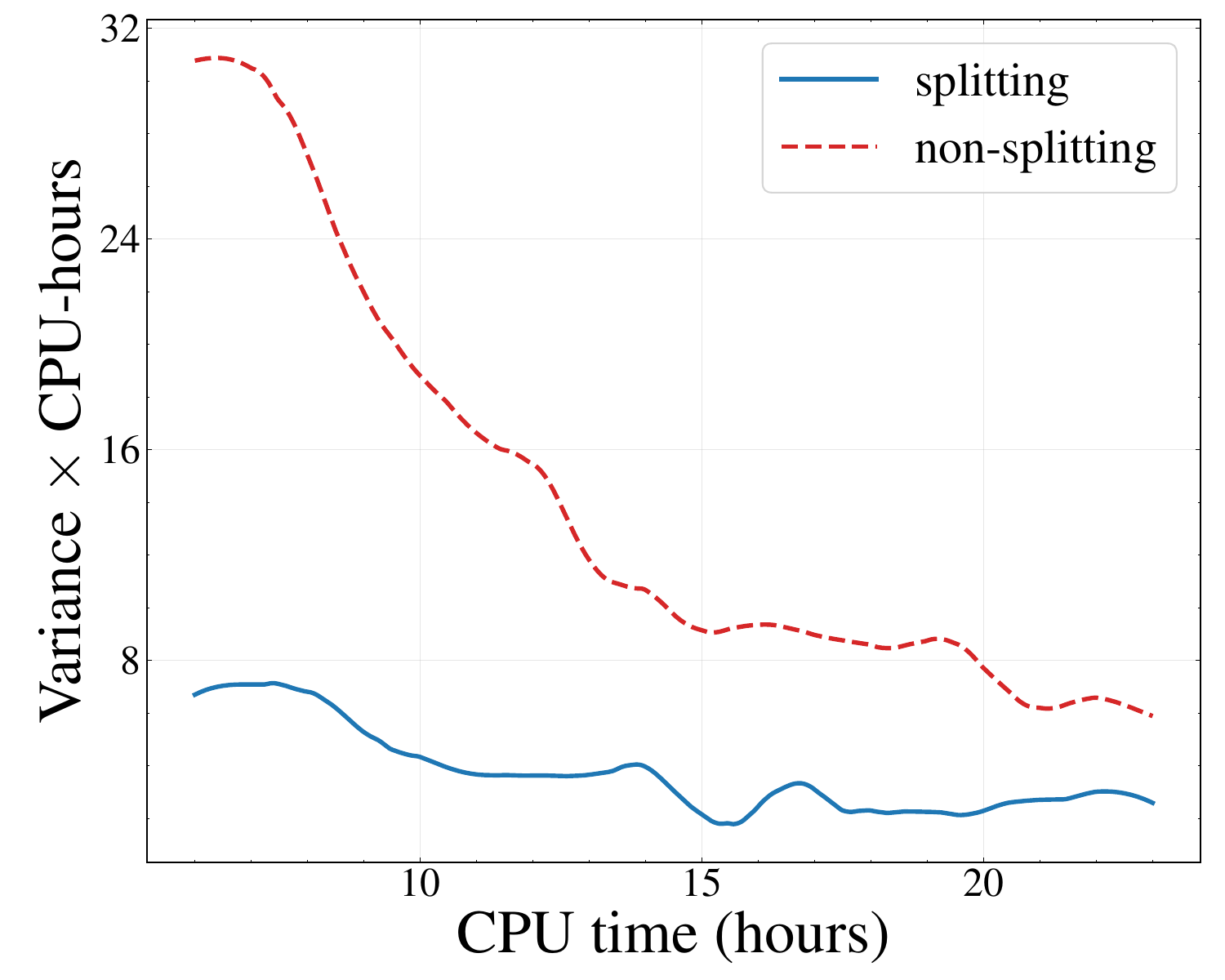}\hfill\includegraphics[width=0.19\textwidth]{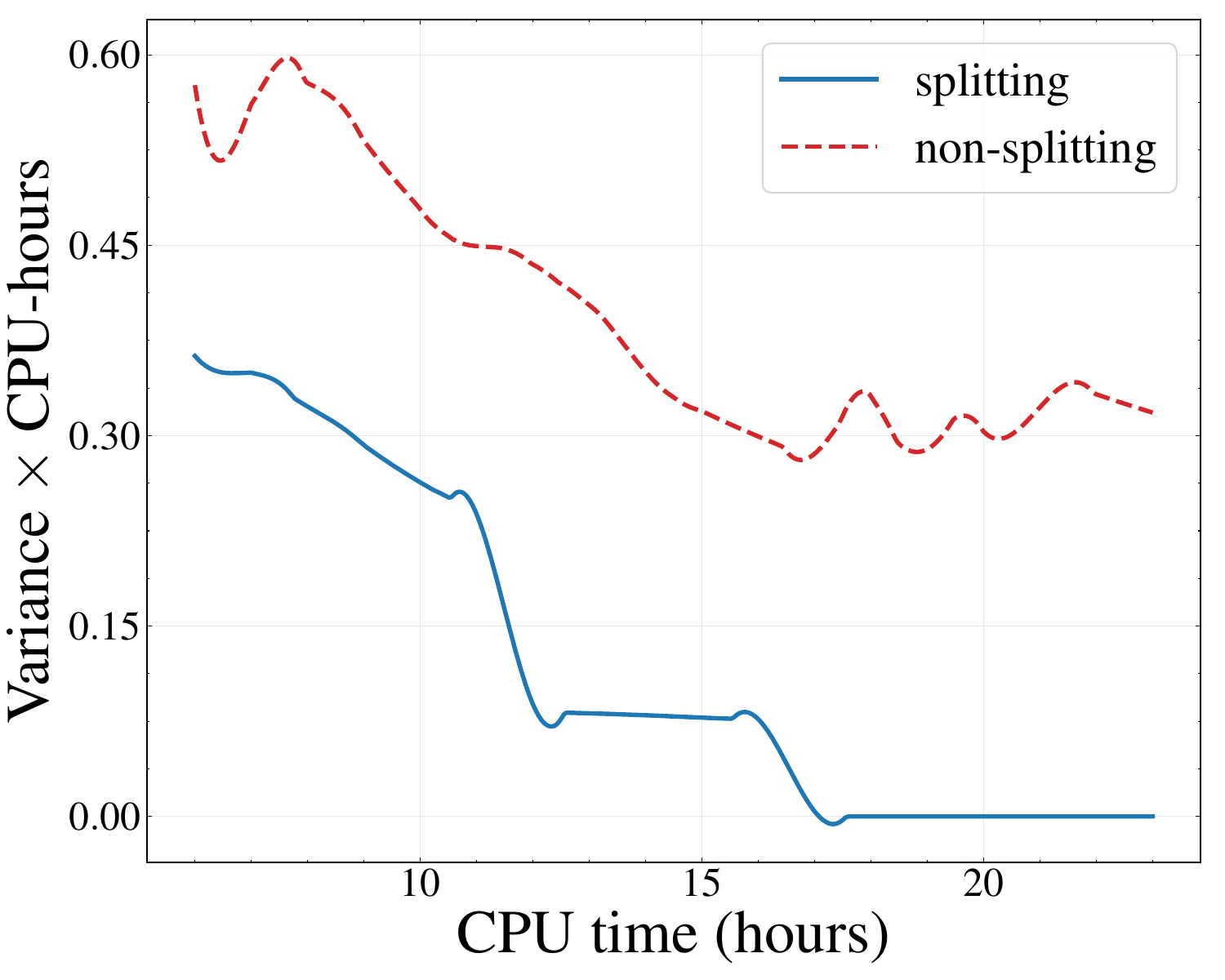}\hfill\includegraphics[width=0.19\textwidth]{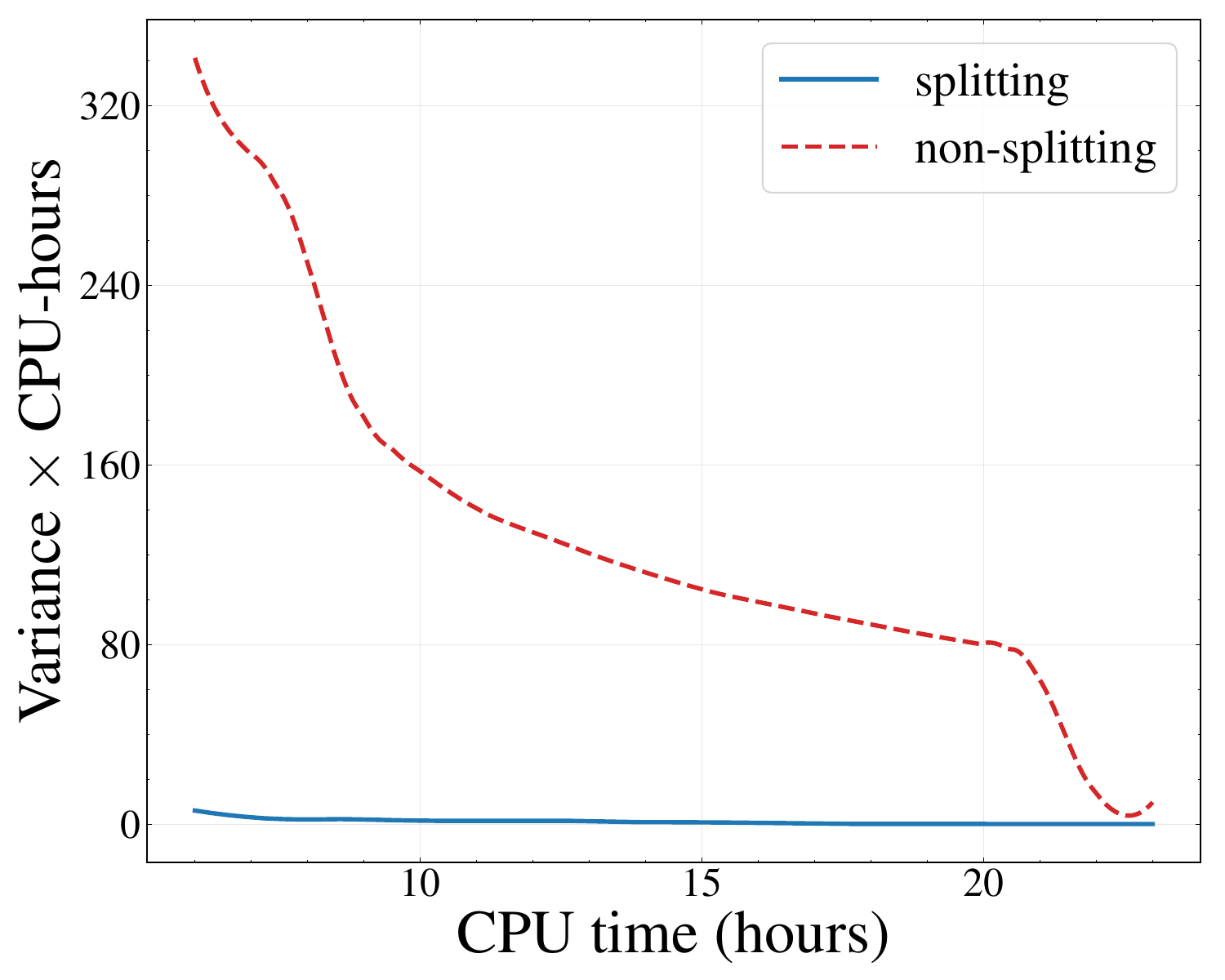}\hfill\includegraphics[width=0.19\textwidth]{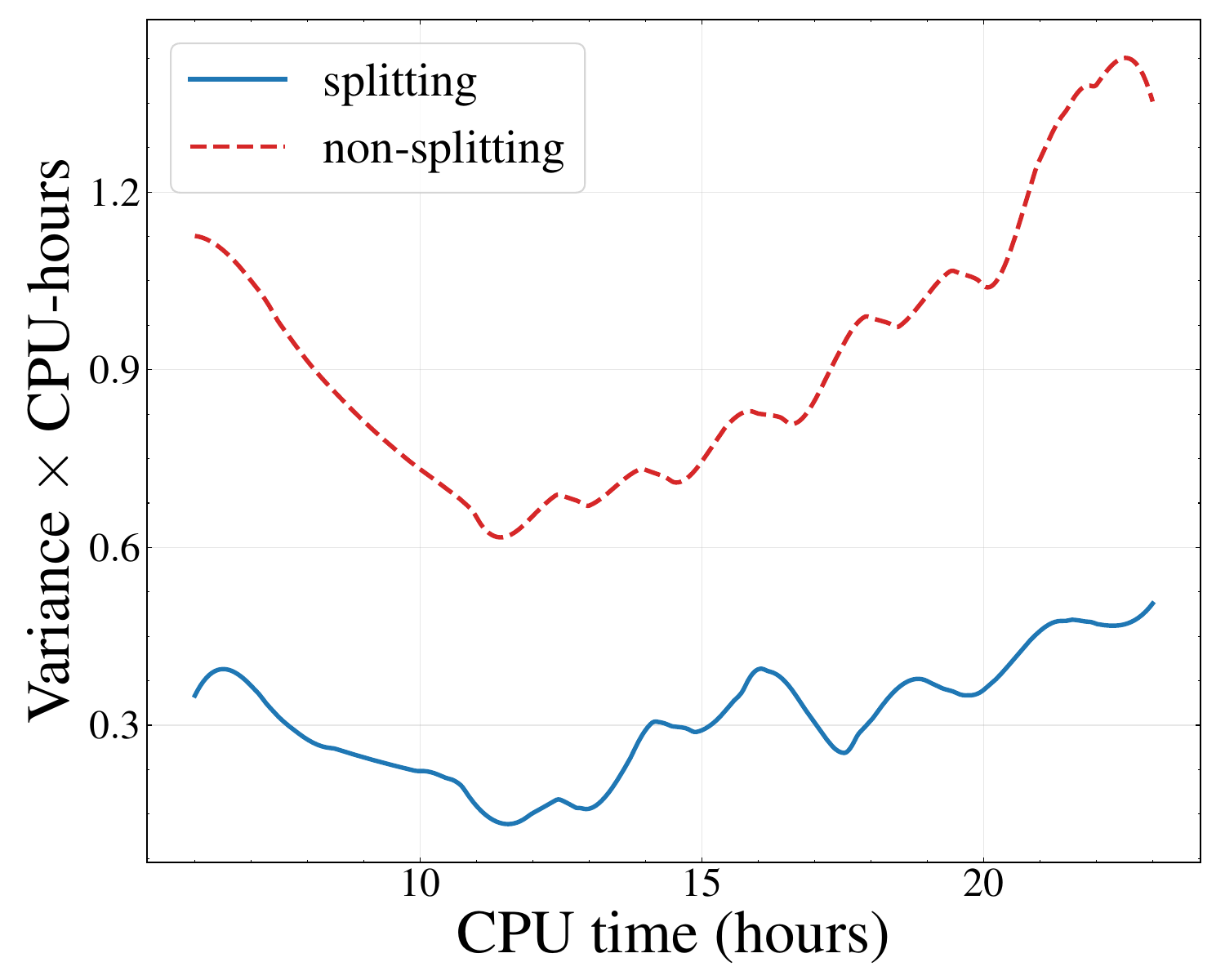}\hfill\includegraphics[width=0.19\textwidth]{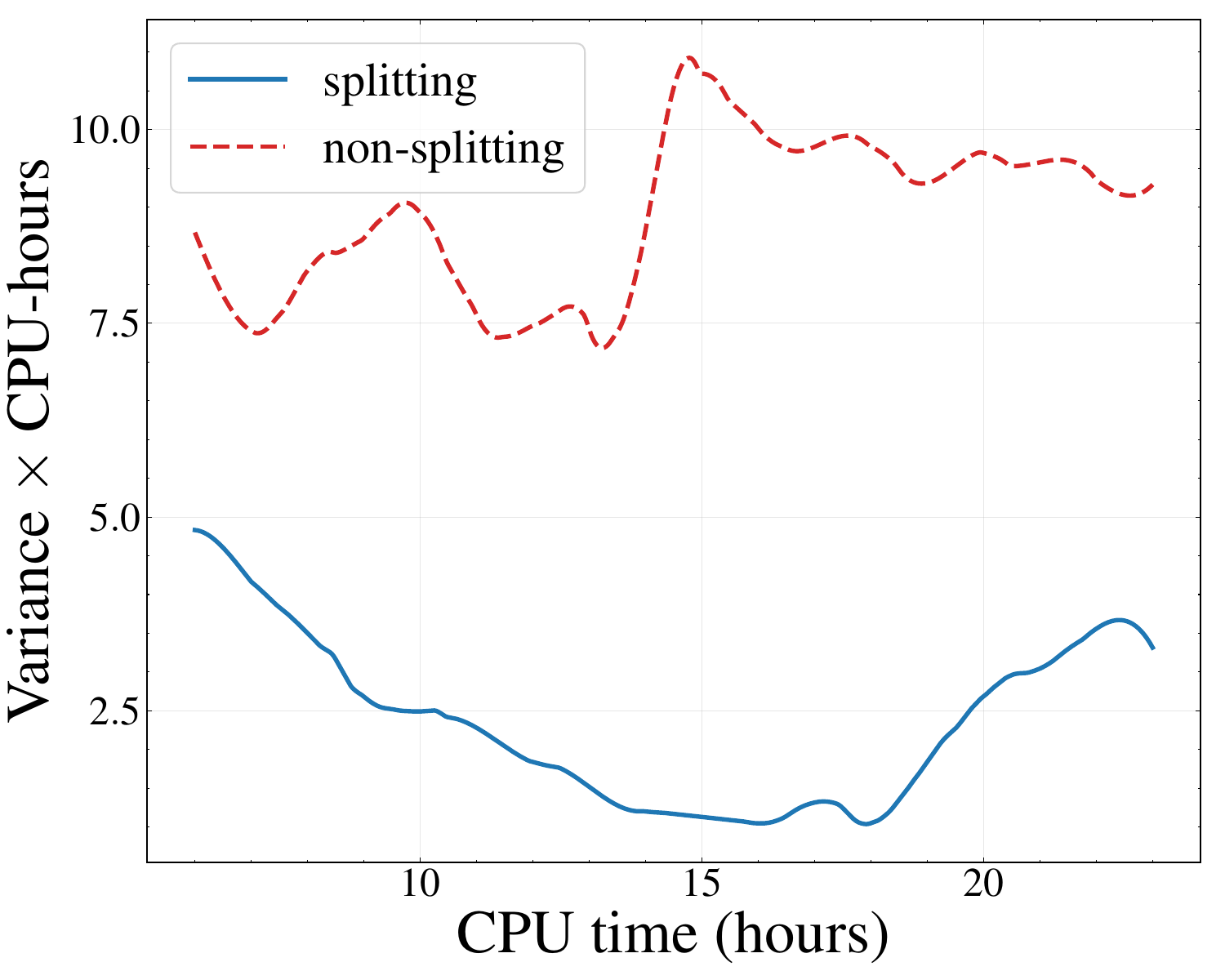}\\
    \makebox[0.19\textwidth]{\footnotesize (a) variance of arrow}\hfill\makebox[0.19\textwidth]{\footnotesize (b) variance of grok}\hfill\makebox[0.19\textwidth]{\footnotesize (c) variance of libhevc}\hfill\makebox[0.19\textwidth]{\footnotesize (d) variance of php}\hfill\makebox[0.19\textwidth]{\footnotesize (e) variance of poppler}
    \caption{Comparisons of variance (splitting vs.\ Monte Carlo) for fuzzer aflplusplus on five representative benchmarks. Full grids in Appendix~\ref{app:appendix_experiments}.}
    \label{fig:comparison_aflplusplus}
\end{figure*}

\para{Unique-bug discovery rate} Figure~\ref{fig:bdr_aflplusplus} compares the time-resolved unique-bug discovery rate for aflplusplus (solid: splitting; dashed: baseline). On FuzzBench this, not the per-execution event rate, is what can be measured, and across all $70$ pairs its terminal $1/k_t$-weighted value under splitting is strictly greater in \emph{every} pair (median $+25\%$, up to $+125\%$ for AFL++ on poppler; Wilcoxon $p = 3.6\times10^{-13}$); per-pair Mann--Whitney tests favor splitting in $51$ of $70$ pairs ($46$ after Holm; median $\hat{A}_{12}=0.85$), none favoring non-splitting. On the two pairs where splitting raises variance ({libFuzzer} on {grok}, {php}) its rate gain is largest. These gains hold on the fair per-CPU-hour basis, which charges every split leaf in full: at strictly matched per-pair CPU on the raw union of distinct bugs (no $1/k_t$ weighting), splitting still wins $53$ of $70$ pairs (Section~\ref{sec:exp:setup}), so the advantage is a genuine bug-finding gain per unit of compute, not an artifact of weighting.

\begin{figure*}[t]
    \centering
    \includegraphics[width=0.19\textwidth]{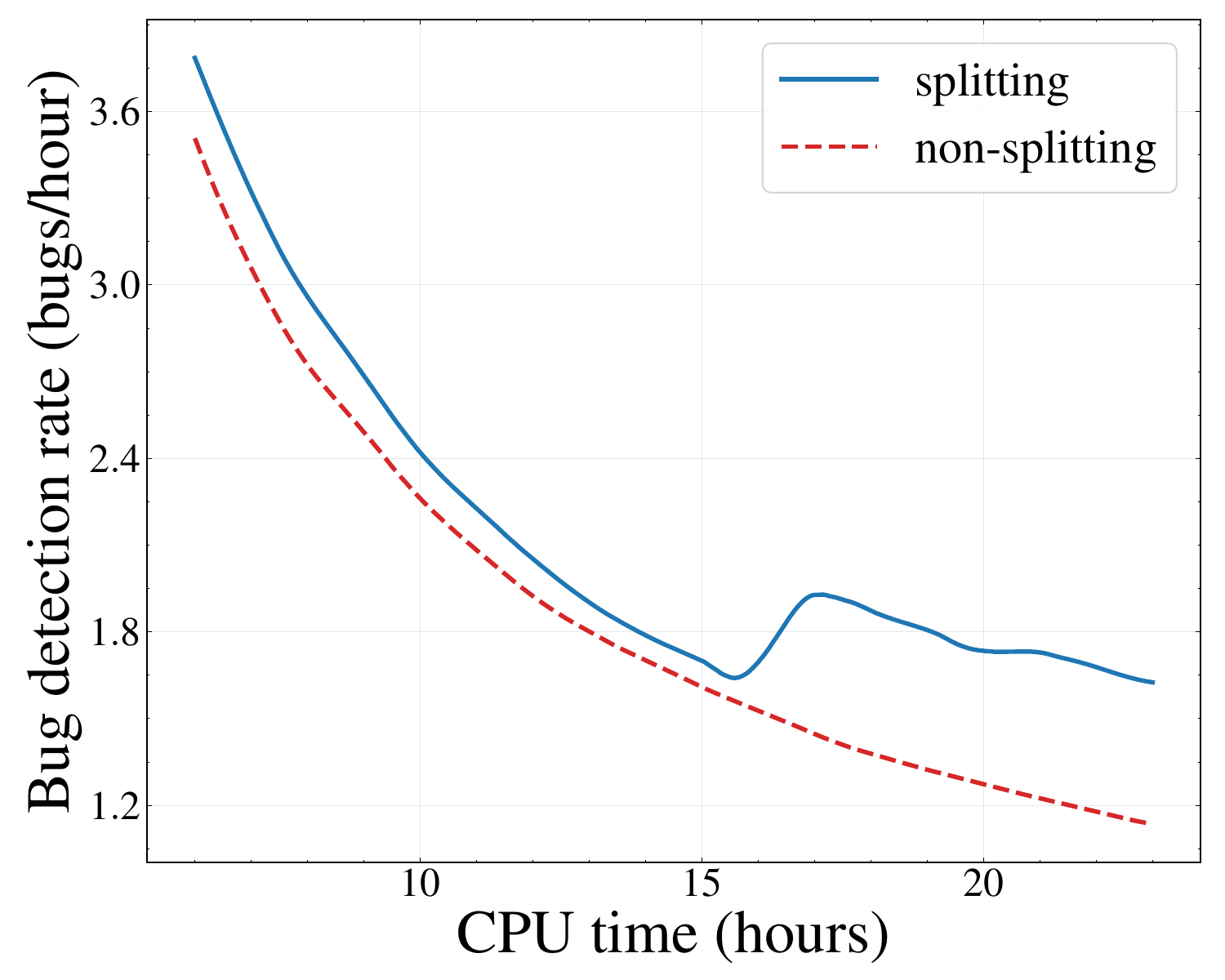}\hfill\includegraphics[width=0.19\textwidth]{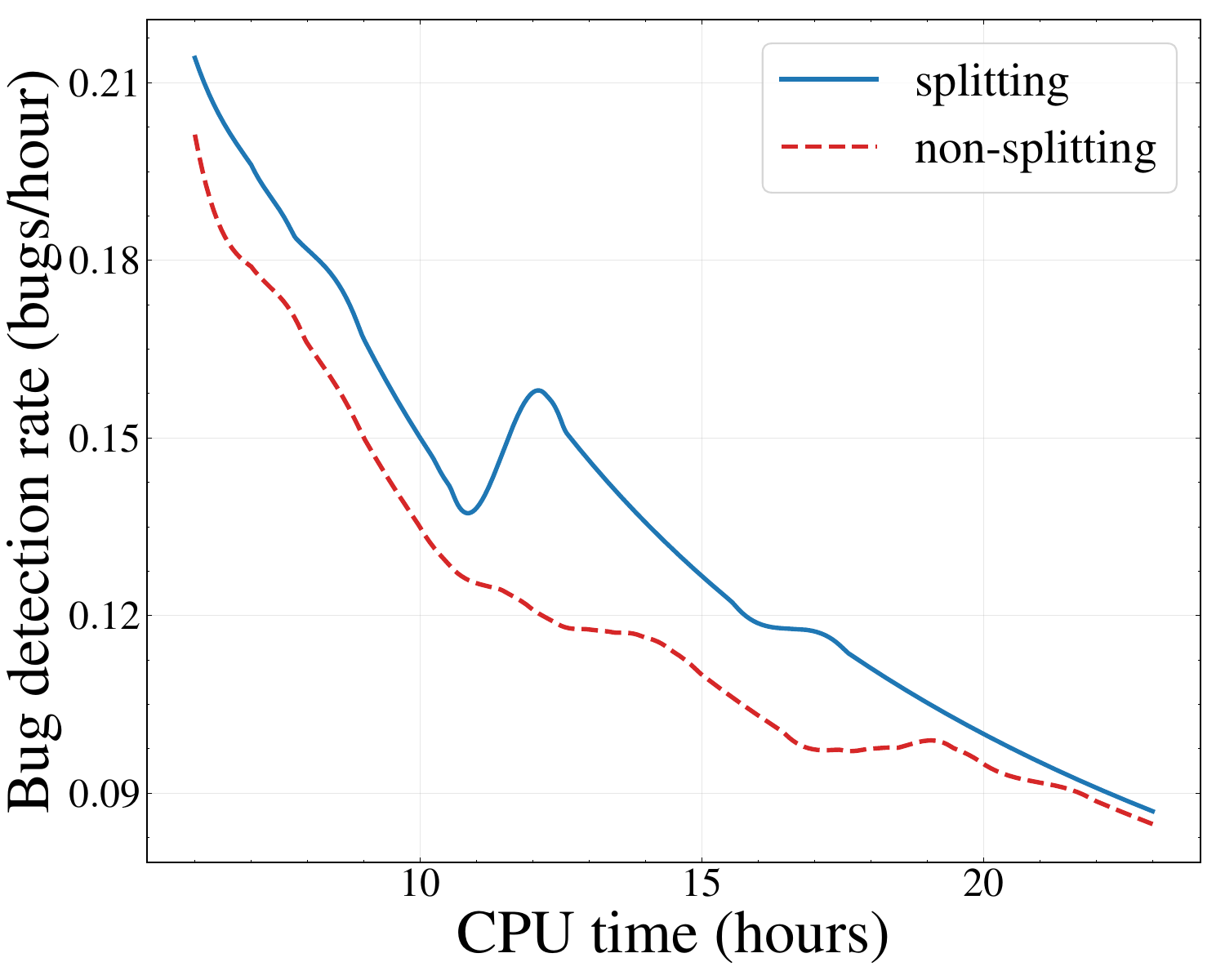}\hfill\includegraphics[width=0.19\textwidth]{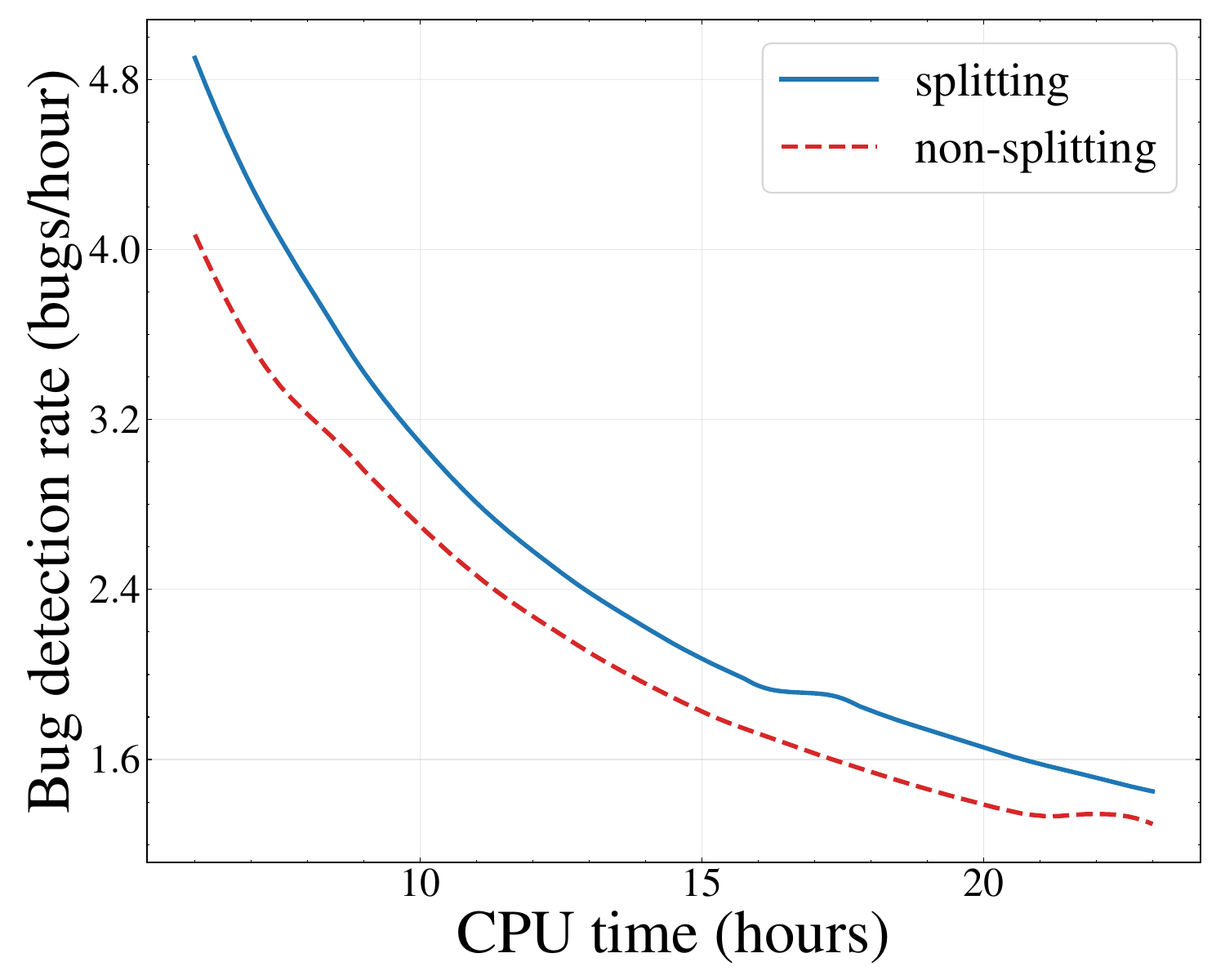}\hfill\includegraphics[width=0.19\textwidth]{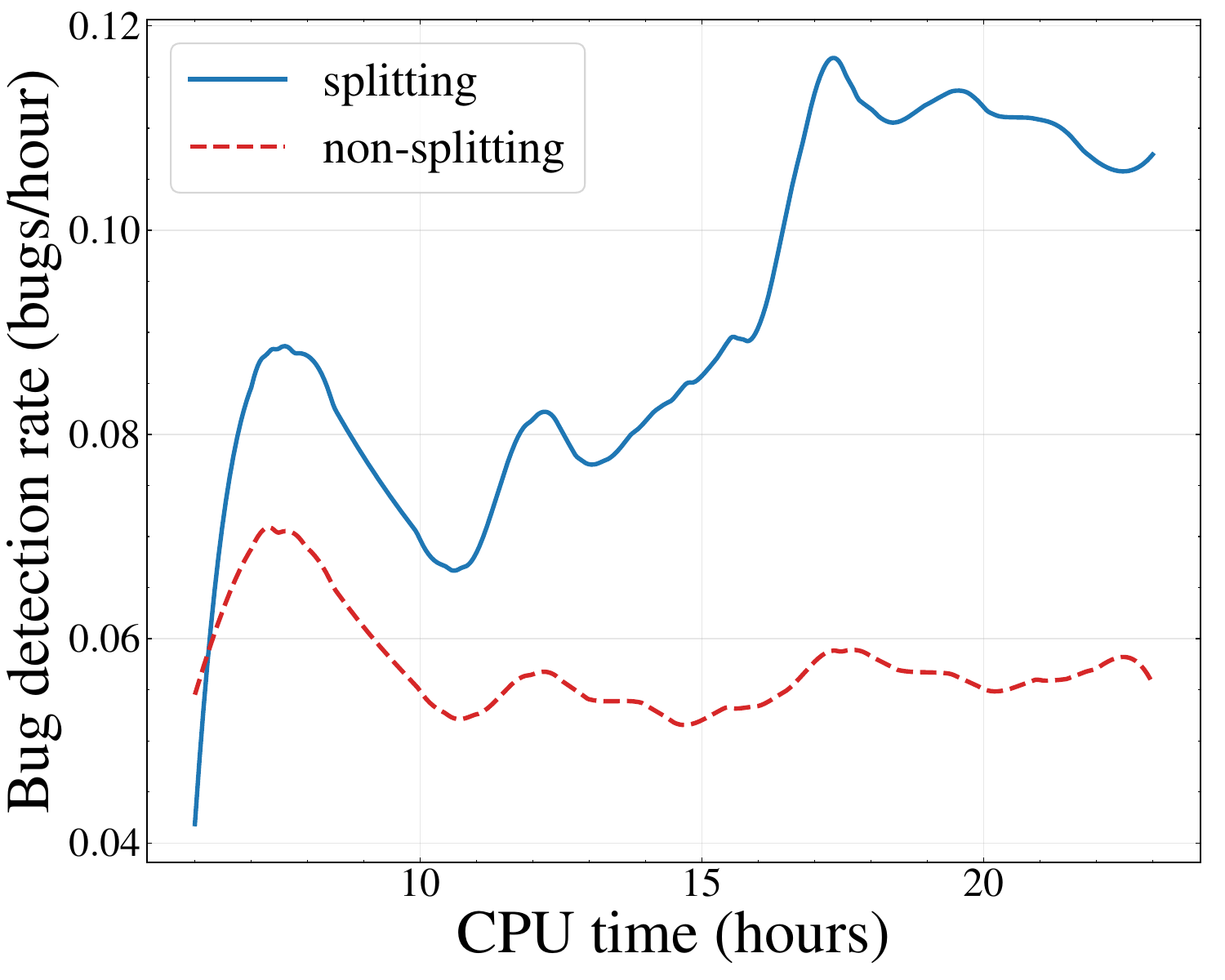}\hfill\includegraphics[width=0.19\textwidth]{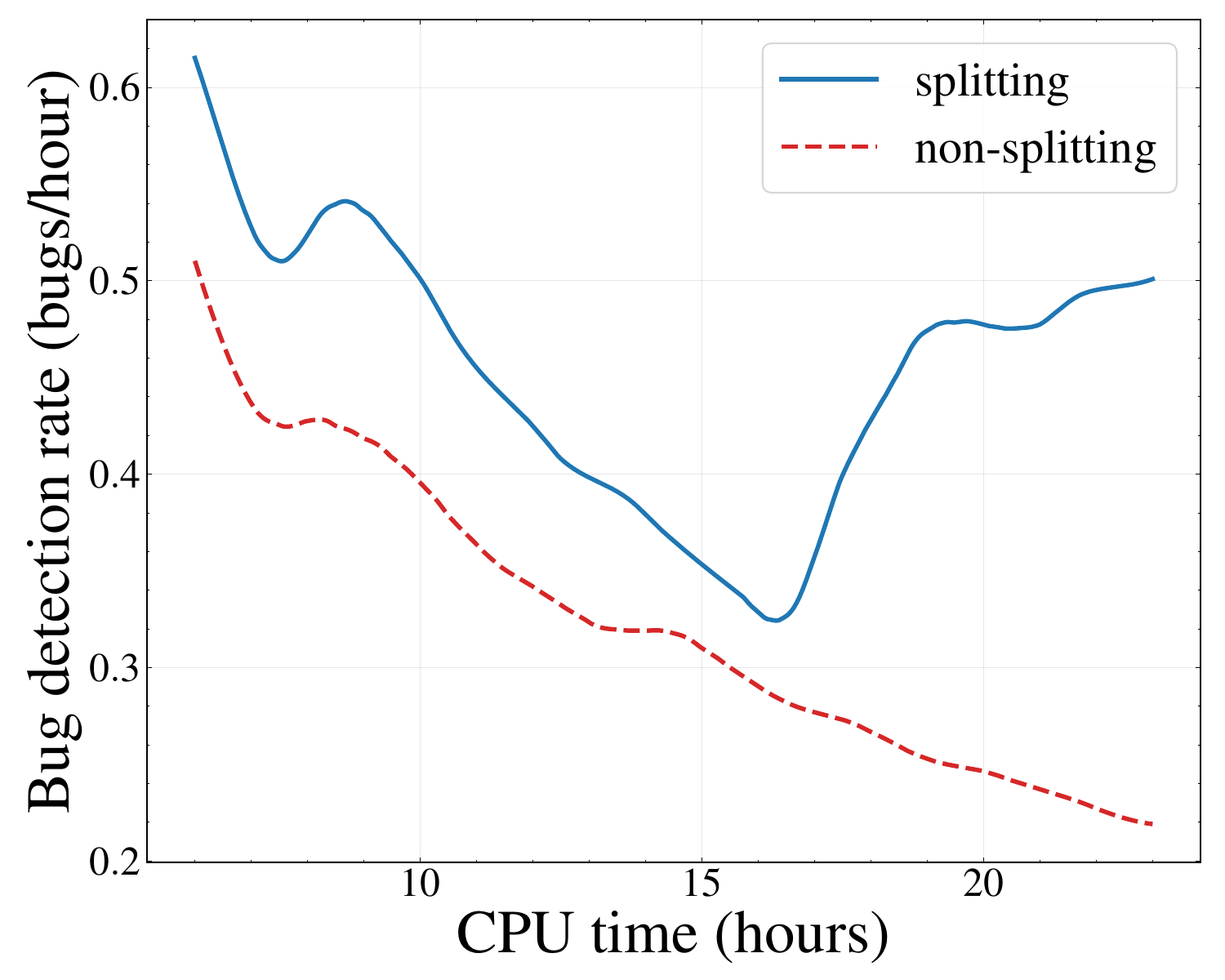}\\
    \makebox[0.19\textwidth]{\footnotesize (a) arrow}\hfill\makebox[0.19\textwidth]{\footnotesize (b) grok}\hfill\makebox[0.19\textwidth]{\footnotesize (c) libhevc}\hfill\makebox[0.19\textwidth]{\footnotesize (d) php}\hfill\makebox[0.19\textwidth]{\footnotesize (e) poppler}
    \caption{\emph{Unique-bug discovery rate} (cumulative unique signatures per elapsed hour) for aflplusplus on five representative benchmarks; the event-rate object $\widehat{P}_R$ is reported separately. Full grids in Appendix~\ref{app:appendix_experiments}.}
    \label{fig:bdr_aflplusplus}
\end{figure*}

\subsection{Ablation Study}\label{sec:ablation}

To isolate the splitting operation from the historical sparsity zone of Section~\ref{sec:exp:casestudy}, we ablate an \emph{on-the-fly} variant using only data from the running experiment, which is practical for new targets with no logged history.

\begin{figure}[htb]
    \centering
    \includegraphics[width=0.205\textwidth]{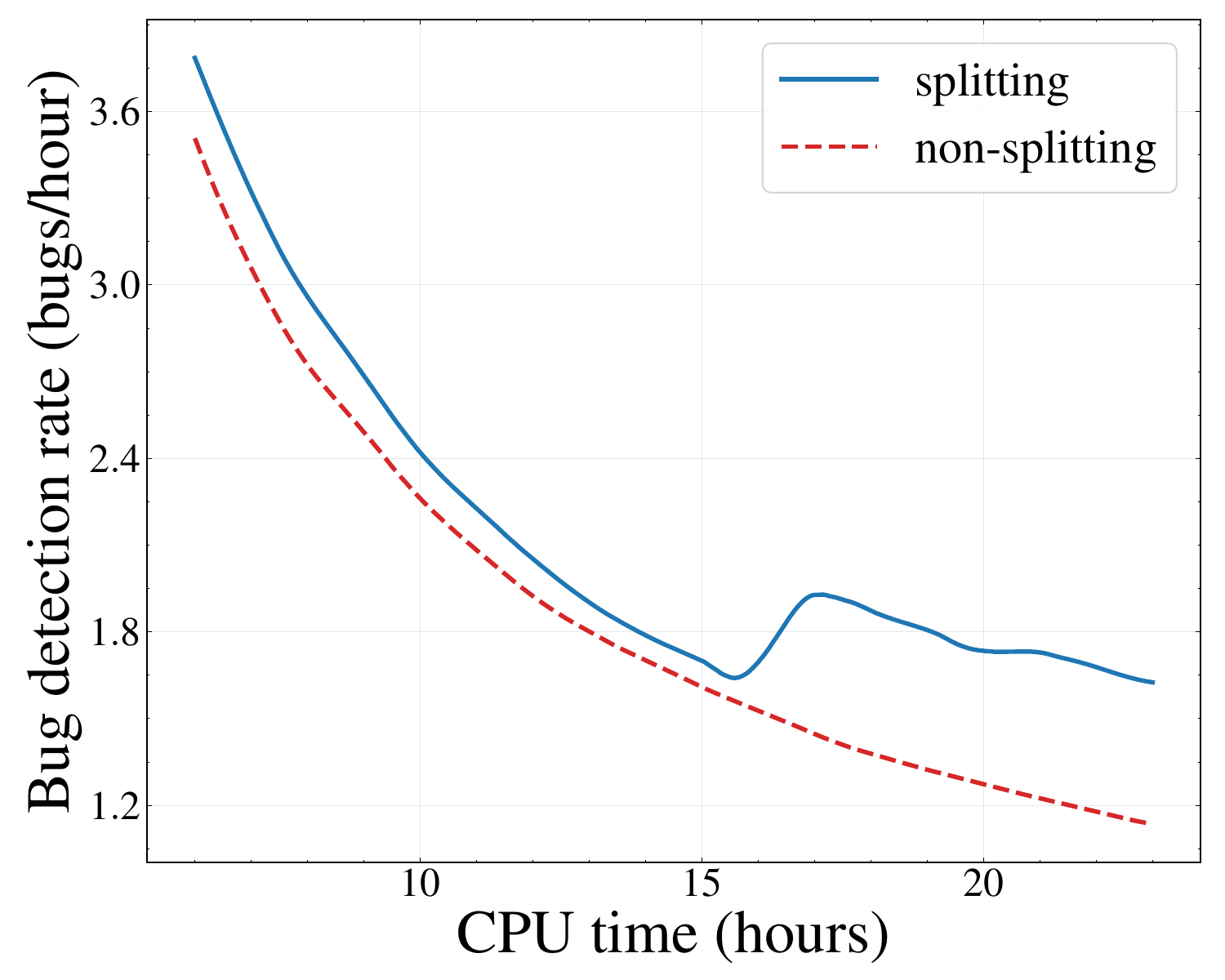}\hfill\includegraphics[width=0.205\textwidth]{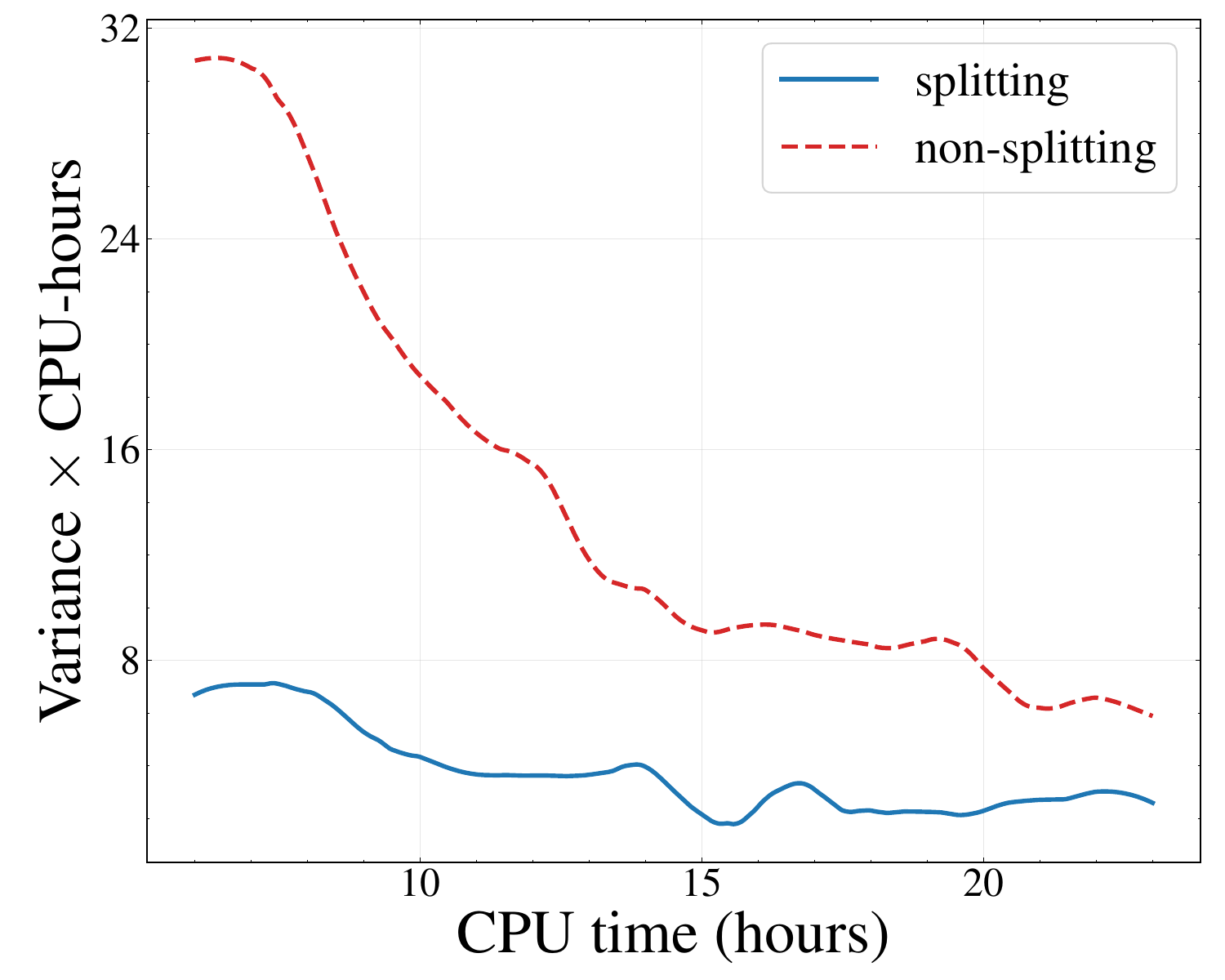}\\
    \makebox[0.205\textwidth]{\small (a) unique-bug rate, arrow}\hfill\makebox[0.205\textwidth]{\small (b) variance of arrow}\\[3pt]
    \includegraphics[width=0.205\textwidth]{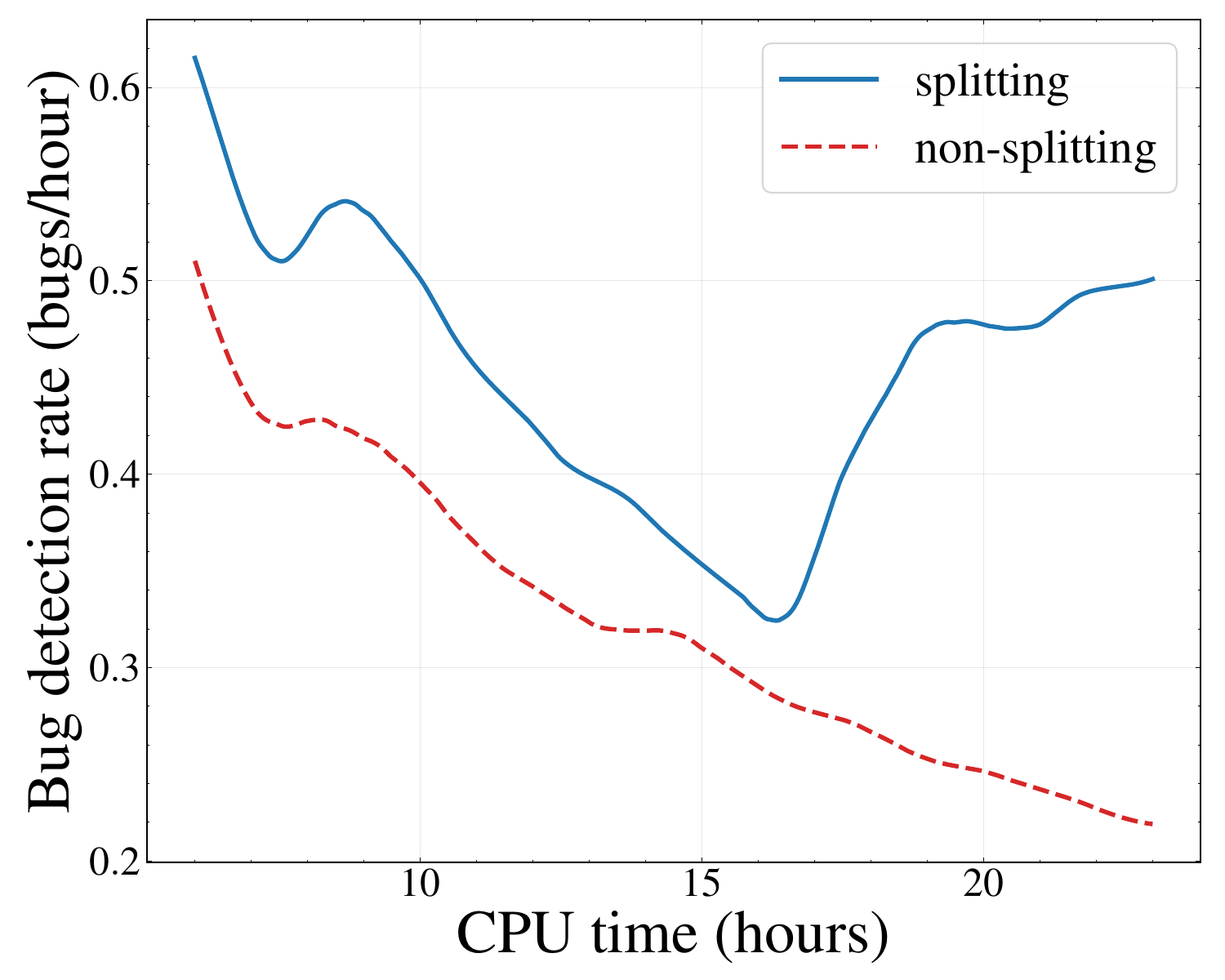}\hfill\includegraphics[width=0.205\textwidth]{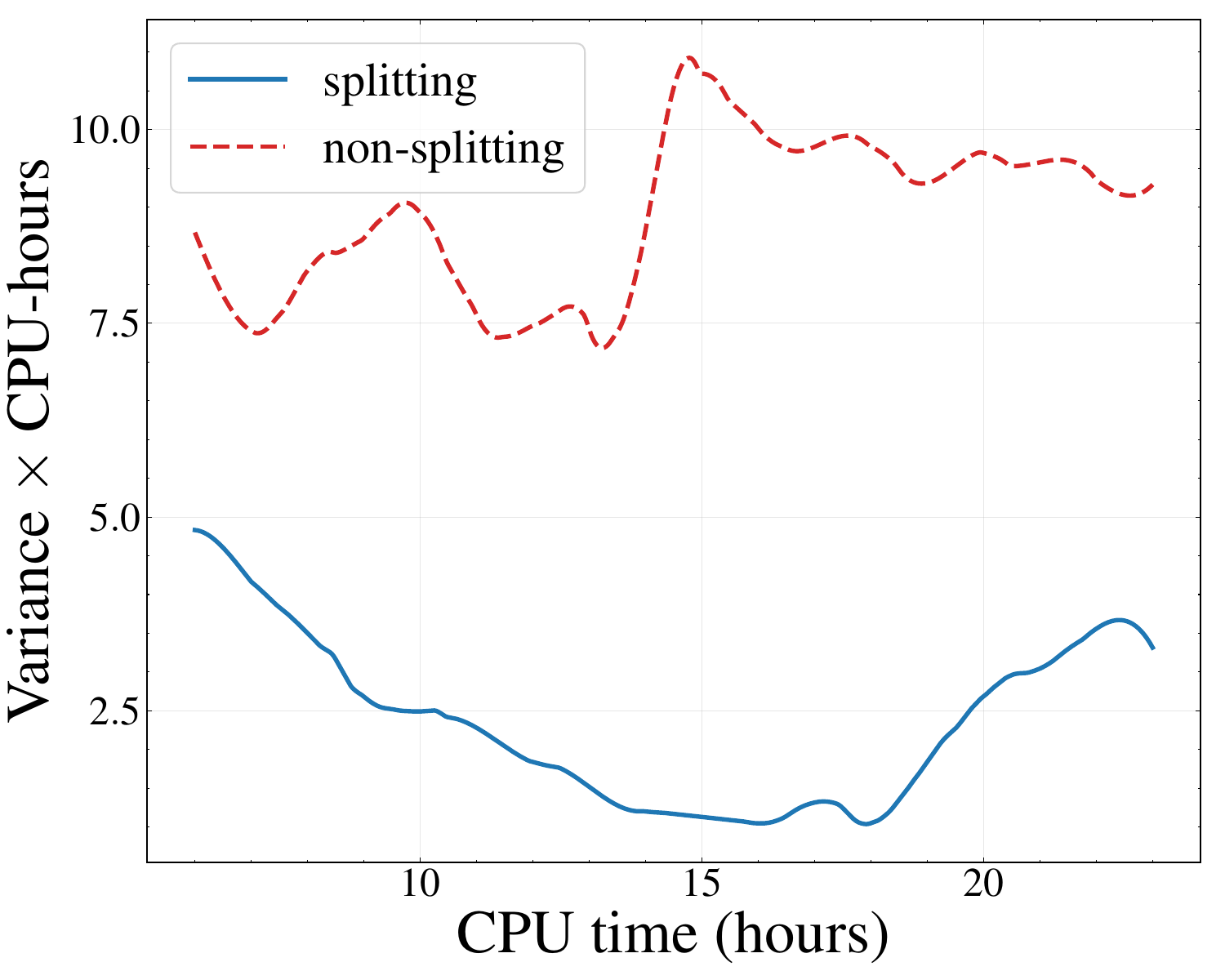}\\
    \makebox[0.205\textwidth]{\small (c) unique-bug rate, poppler}\hfill\makebox[0.205\textwidth]{\small (d) variance of poppler}
    \caption{Ablation study for aflplusplus on arrow and poppler; php and stb appear in Appendix~\ref{app:appendix_experiments}.}
    \label{fig:ablation_aflplusplus_1}
\end{figure}

\para{Implementation} The FuzzBench offline mode is \emph{history-informed}: it reads a sparsity signal from a prior campaign on the same target. The online mode needs no history and underlies \emph{every} Magma ground-truth result (all $40$ cells; Section~\ref{subsec:magma}), so this ablation isolates online-vs-offline. It computes the sparsity signal from a \emph{prefix-only} statistic instead of the tail-based signal over a completed trajectory, removing the dependence on logged data. Both variants share the fork-on-bug mechanism and differ only in the signal that \emph{selects} split times, so Theorems~\ref{bug_rate_splitting} and~\ref{thm:splitting_var} carry over unchanged; details in Appendix~\ref{app:ablation}.

\para{Ablation results}
We run the ablation on three fuzzers ({aflplusplus}, {aflsmart}, {mopt}) and four benchmarks ({arrow}, {php}, {poppler}, {stb}); its figures are in Appendix~\ref{app:appendix_experiments}.

The online variant preserves splitting's qualitative behavior: it reduces normalized variance relative to MC in all twelve settings (terminal and window-averaged) and improves the terminal rate in all twelve, so triggering on each new bug is itself effective. The offline-informed policy is nonetheless more stable, with lower window-averaged variance in 11 of 12 settings (8 of 12 terminal). Offline guidance is thus \emph{not required}, but it adds a measurable robustness gain by stabilizing when and how aggressively a run is split.

\subsection{Ground-Truth Validation with Magma}\label{subsec:magma}

To verify that signature deduplication tracks root-cause bugs and to show security impact, we use Magma~\cite{hazimeh2020magma}, whose targets carry individually identified bugs tied to real CVEs; its monitor reports which planted bug is \emph{reached} and \emph{triggered}, giving unambiguous ground truth.

\para{Setup}
We evaluate five of the seven mutation-based fuzzers (AFL, AFLFast, AFL++, MOpt, Honggfuzz; the five supported by Magma's harness) on eight Magma programs (libpng, libtiff, libsndfile, poppler, sqlite3, openssl, libxml2, php). Each fuzzer--program pair is run for $M{=}20$ independent trials of $T{=}12$ hours under both the baseline Monte Carlo (non-split) and splitting configurations. \emph{Every Magma ground-truth result below uses the \textbf{online} splitting mode}, the deployable variant that needs no prior data, at full scale ($40$ cells, $20$ trials each). Magma's monitor records the ground-truth bugs triggered per trial, from which we compute mean and union distinct-bug counts, the weighted \emph{bug-trigger event} rate $\widehat{P}_R$ of Section~\ref{sec:splitting} in bugs/h, and its effort-normalized variance, charging every live leaf in full to one per-CPU-hour budget ($\mathrm{CPU}_R(T)=\sum_{m,t} k_t^m$). \para{Bugs per CPU-hour} Splitting finds more real bugs per CPU-hour in $38$ of $40$ cells, and on raw distinct-bug count is never lower in any of the $40$.

\para{Results}
Table~\ref{tab:magma} summarizes per benchmark across the five fuzzers (full per-fuzzer table in Appendix~\ref{app:appendix_experiments}). Across all $40$ pairs splitting \emph{ties or improves} the unique-bug count with no regression: $25$ wins/$15$ ties on mean unique bugs (Wilcoxon $p=1.2\times10^{-5}$) and $18$/$22$ on total union ($p=9.6\times10^{-5}$); it raises the weighted rate $\widehat{P}_R$ on $39$ of $40$ pairs ($p=3.9\times10^{-8}$), the sole exception a \emph{rate} near-tie (AFLFast/libtiff, $1.992$ vs.\ $2.000$) that still does not regress on the raw distinct-bug count. Per-trial Mann--Whitney tests are individually significant in $22$ of the $25$ winning cells and favor non-split in none.

\para{Compute accounting}
Because splits fire only on a bug, this denominator counts exactly the compute consumed. Our principal equal-compute comparison is ground-truth bugs per CPU-hour, which uses no $1/k_t$ weighting: splitting wins $38$ of $40$ cells (median $+52\%$, $34/38$ significant after Benjamini--Hochberg).

\para{Robustness on Magma: the metric correctly flags saturation}
On Magma's saturated targets planted bugs surface within minutes of every trial, so outcomes coincide across trials and the effort-normalized variance of $\widehat{P}_R$ correctly reads exactly zero in $17$ of $40$ cells, the regime where variability is no longer the operative signal (Remark~\ref{rmk:bdr-vs-unique}). We therefore assess splitting here by Magma's ground truth, where it never finds fewer real bugs ($40$ of $40$ cells) and converts unreliable CVEs to reliable ones; the variance metric does its work on FuzzBench, where bugs surface throughout.

\para{Security impact}
Splitting also converts \emph{unreliable} detections into reliable ones. On sqlite3, AFL++ triggers CVE-2019-19926 (SQL014) in all $20/20$ splitting trials but $0/20$ non-split, a strongly significant conversion (Fisher's exact one-sided $p<10^{-4}$). The detection probabilities are sharply separated: one-sided $95\%$ bounds give $p_{\mathrm{MC}}<0.14$ and $p_{\mathrm{split}}>0.86$, so no compute-equalized pooling of the baseline's $20$ runs recovers it. On poppler, MOpt finds three bugs only under splitting (CVE-2019-14494/PDF001, CVE-2019-7310/PDF011, PDF008; total $3{\to}6$); Honggfuzz adds CVE-2016-6302 (SSL020) on openssl ($1{\to}2$) and CVE-2016-10269 (TIF006), CVE-2019-7663 (TIF009) on libtiff ($4{\to}6$), the clustered-bug regions the mechanism targets. At $M{=}20$ these CVE conversions are each individually Fisher-significant.

\begin{table}[t]
\centering
\caption{Estimated spectral parameters (non-split chains; Algorithm~\ref{alg:spectral}): second eigenvalue $\hat{\lambda}_2$, projected output gap $\hat{\gamma}^Y_{R}=1-|\hat{\lambda}_2|$, relaxation time $\hat{\delta}_R=1/\hat{\gamma}^Y_{R}$. All three right-hand columns are computed inside the phase-wise two-state model and are not guarantees for the general chain: $U_{A,R}$ evaluates the three terms of Theorem~\ref{thm:main_body} at that model's parameters; the \emph{two-state calibrated} column evaluates all three terms of Theorem~\ref{thm:main_body} inside that model, the finite-trial term as the exact quantile of the deviation under the two-state count law, the starting-state term at its worst over every start, and the finite-horizon term, each taken at its largest over a bootstrap confidence set for $(\hat\pi_1,\hat\lambda_2)$, and is how far the measured variability can sit from the true one; and $A_R^{\mathrm{ref}}$ is the deviation of $TS^2_{M,R}(T)$ from the \emph{fitted} model reference $\hat\sigma^2_{\mathrm{ref}}=\hat\pi_1(1-\hat\pi_1)(1+\hat\lambda_2)/(1-\hat\lambda_2)$, not from an independently known $\sigma^2$. $\dagger$ marks the lone cell that keeps the unconditional $U_{A,R}$.}\label{tab:spectral}
\resizebox{\columnwidth}{!}{%
\begin{tabular}{lcccccc}
\toprule
Fuzzer / Benchmark & $\hat{\lambda}_2$ & $\hat{\gamma}^Y_{R}$ & $\hat{\delta}_R$ & $U_{A,R}$ & Two-state calibrated & $A_R^{\mathrm{ref}}$ \\
\midrule
\multicolumn{7}{l}{\emph{FuzzBench}}\\
\quad AFLSmart / ffmpeg & 0.33 & 0.67 & 1.48 & 1.43 & 0.34 & 0.00 \\
\quad Honggfuzz / stb & 0.41 & 0.59 & 1.68 & 2.80 & $2.80^\dagger$ & 0.30 \\
\quad AFL++ / arrow & 0.22 & 0.78 & 1.28 & 0.57 & 0.31 & 0.12 \\
\midrule
\multicolumn{7}{l}{\emph{Magma}}\\
\quad Honggfuzz / openssl & 0.83 & 0.17 & 5.88 & 17.29 & 0.37 & 0.16 \\
\quad MOpt / poppler & 0.34 & 0.66 & 1.53 & 0.68 & 0.37 & 0.14 \\
\quad AFL++ / libxml2 & 0.16 & 0.84 & 1.19 & 0.31 & 0.02 & 0.01 \\
\bottomrule
\end{tabular}}
\end{table}

\para{Detection-rate regime} The condition $p<\sqrt2-1$ constrains the bug-region dwell fraction $p=|S|/T$, which on the two-state surrogate is the settled bug-step fraction $\hat\pi_1$ measured on the logging grid, which is coarser than a per-execution indicator. It is a \emph{sufficient} condition marking the rare-bug regime in which the variance metric is informative; empirically splitting reduces per-trial variance in $66$ of $70$ FuzzBench pairs at matched CPU. On the saturated Magma cells bug steps are dense, the variance degenerates to zero, and we assess the benefit by ground-truth bugs instead.

\providecommand{\nCSruns}{twenty}

\subsection{Comparison with corpus-synchronizing parallel fuzzing}
\label{subsec:corpussync}

The strongest deployed alternative shares state \emph{unconditionally}: a
corpus-synchronizing pool, at the scale hardest to beat, eight synchronized AFL++
workers, \nCSruns\ campaigns of $T{=}12$\,h per program, same setup as
Section~\ref{subsec:magma}, both arms billed per CPU-hour.

\begin{table}[t]\centering\small
\caption{Splitting versus a corpus-synchronizing eight-worker pool, per CPU-hour
(AFL++, $T{=}12$\,h).}
\label{tab:corpussync8}
\begin{tabular}{lcc@{\hskip 8pt}cc}
\toprule
& \multicolumn{2}{c}{$\widehat{P}_R$/CPU-h} & \multicolumn{2}{c}{bugs/CPU-h}\\
\cmidrule(r){2-3}\cmidrule(l){4-5}
program & split & pool-8 & split & pool-8\\
\midrule
libsndfile & \textbf{.0874} & .0396 & \textbf{.0764} & .0730\\
libxml2    & \textbf{.0721} & .0375 & .0426 & .0438\\
libtiff    & \textbf{.0676} & .0479 & \textbf{.0752} & .0667\\
sqlite3    & \textbf{.0653} & .0417 & .0379 & .0417\\
poppler    & \textbf{.0568} & .0334 & \textbf{.0361} & .0355\\
libpng     & \textbf{.0548} & .0229 & \textbf{.0318} & .0313\\
php        & \textbf{.0530} & .0188 & \textbf{.0327} & .0313\\
openssl    & \textbf{.0409} & .0216 & .0107 & .0216\\
\bottomrule
\end{tabular}
\end{table}

Splitting attains the higher bug-event rate on \emph{every} program by
$1.41$--$2.82\times$ (median $1.90\times$; sign test $p=7.8\times10^{-3}$),
and leads on distinct bugs per CPU-hour on five of eight. Ninety percent of the pool's bugs
arrive within $2.8$\,h of $12$, and doubling the pool from four to eight workers
raises its mean rate by only $4\%$.
Synchronization also requires fuzzer support; splitting is more universal,
wrapping \emph{any} mutation-based fuzzer.

Ties concentrate on saturated pairs where both methods recover every quickly-reachable bug (php: all five fuzzers tie on union counts, yet splitting still raises $\widehat{P}_R$). Figure~\ref{fig:bug_event_bug} makes the bug-rate$\,\to\,$unique-bug link explicit: splitting fires more bug-triggering events, the quantity the theory governs (Theorem~\ref{bug_rate_splitting}), in $39$ of $40$ Magma cells, spanning several orders of magnitude (panel a), and those additional events yield at least as many distinct bugs in all $40$ cells ($25$ strictly more; panel b). The within-trial link holds on \emph{both} suites: a trial's bug-event count moves with its distinct-bug count across all Magma trials (Spearman $\rho=0.55$, $p=6.7\times10^{-33}$) and across $3{,}859$ FuzzBench crashing trials (panel d; Spearman $\rho=0.66$, $p<10^{-300}$), as our framework requires.

\begin{figure*}[t]
\centering
\includegraphics[width=0.82\textwidth]{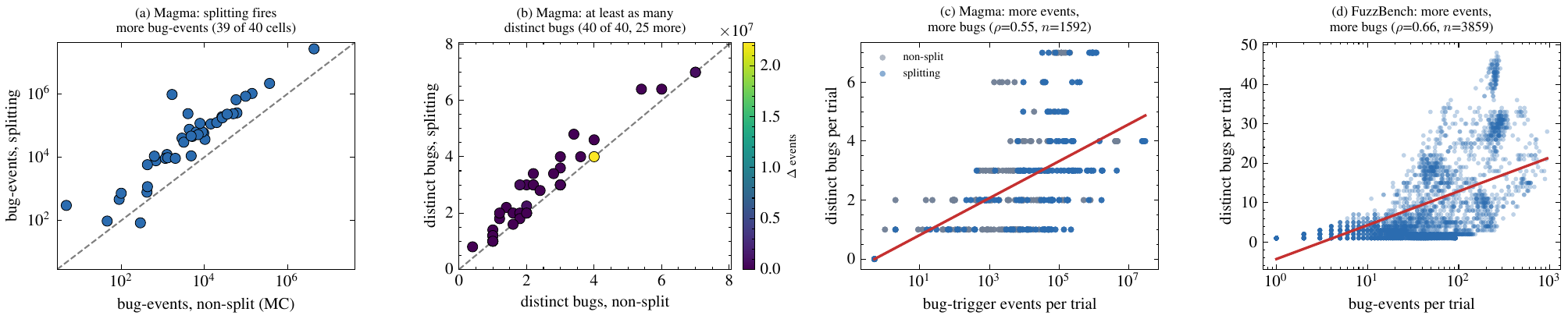}
\caption{Bug-trigger events and distinct bugs, splitting vs.\ Monte Carlo, on \emph{both} suites. \textbf{(a)}~Magma, one point per fuzzer--program cell ($M{=}20$ trials each): splitting fires more bug-trigger events, the quantity the theory models, in $39$ of $40$ cells. \textbf{(b)}~Magma: those additional events yield at least as many distinct bugs in all $40$ cells ($25$ strictly more); color is the per-cell event increase. \textbf{(c)}~Magma, one point per trial ($M{=}20$; $1{,}592$ of the $1{,}600$ trials over both arms, the $8$ omitted having failed a pre-declared infrastructure check): a trial's bug-trigger event count predicts its distinct-bug count over seven decades of events (Spearman $\rho=0.55$); the line is the median with the interquartile band. \textbf{(d)}~FuzzBench, one point per trial ($3{,}859$ crashing trials): the same link on the second suite (Spearman $\rho=0.66$). The bug-event-to-distinct-bug link our framework relies on is present on both suites.}
\label{fig:bug_event_bug}
\end{figure*}

\begin{table}[t]
\centering
\caption{Magma ground-truth validation (summary): five fuzzers on each of eight Magma programs, $M{=}20$ trials of $12$\,h, splitting vs.\ non-split (MC). \emph{Win/Tie/Lose} (denoted by W/T/L) counts, over the five fuzzers, how often splitting wins/ties/loses on mean unique bugs. $\Sigma$\,mean uniq and mean $\widehat{P}_R$ are summed / averaged over the five fuzzers (\emph{splitting\,/\,non-split} denoted by s/n); \textbf{bold} marks the better value. The full per-fuzzer breakdown is in Appendix~\ref{app:appendix_experiments}.}\label{tab:magma}
\small
\setlength{\tabcolsep}{5pt}
\begin{tabular}{lccc}
\toprule
Benchmark & W/T/L & $\Sigma$\,mean uniq (s/n) & mean $\widehat{P}_R$ (s/n) \\
\midrule
libpng     & 4/1/0 & \textbf{11.0}\,/\,9.0 & \textbf{4.80}\,/\,1.76 \\
libtiff    & 4/1/0 & \textbf{19.1}\,/\,15.8 & \textbf{3.93}\,/\,2.96 \\
libsndfile & 1/4/0 & \textbf{24.4}\,/\,24.0 & \textbf{6.55}\,/\,4.04 \\
poppler    & 3/2/0 & \textbf{14.2}\,/\,11.2 & \textbf{5.03}\,/\,2.00 \\
sqlite3    & 4/1/0 & \textbf{11.6}\,/\,9.0 & \textbf{4.16}\,/\,1.76 \\
openssl    & 4/1/0 & \textbf{8.2}\,/\,6.6  & \textbf{4.36}\,/\,1.32 \\
libxml2    & 4/1/0 & \textbf{12.4}\,/\,10.4 & \textbf{4.88}\,/\,2.08 \\
php        & 1/4/0 & \textbf{13.0}\,/\,12.0 & \textbf{5.62}\,/\,2.32 \\
\midrule
\textbf{All 8} & \textbf{25/15/0} & \textbf{113.9}\,/\,98.0 & \textbf{4.92}\,/\,2.28 \\
\bottomrule
\end{tabular}
\end{table}

\subsection{Model Diagnostics for the Assumptions}\label{subsec:empirical-assumptions}

To empirically examine the time-homogeneity and geometric ergodicity assumptions discussed in Section~\ref{subsec:assumptions}, we analyze the autocorrelation structure and spectral properties of real fuzzer trajectories from our experiments.

\para{Autocorrelation analysis}
We compute the empirical autocorrelation of $Y_R(w,t)$ across lags for all $110$ evaluated fuzzer--benchmark cells. In every cell it decays geometrically toward the estimation-noise floor, with a strictly positive gap throughout ($\hat{\gamma}\in[0.02,1.00]$; per-cell values in the artifact), the adaptive schedulers AFL++ and MOpt included, consistent with geometric decay over the observed lag range and supporting the two-state phase model as an empirical approximation. Figure~\ref{fig:autocorrelation} shows six representative cells. The positivity the closed-form rule of Theorem~\ref{thm:splitting_var} needs is directly checkable on the surrogate: $\hat\lambda_2\ge0$ holds in $102$ of the $110$ cells and in all $40$ on Magma (AFL++ $18/18$, MOpt $17/18$), and six of the eight exceptions have $|\hat\lambda_2|<0.05$.

\begin{figure}[h]
    \centering
    \includegraphics[width=0.43\columnwidth]{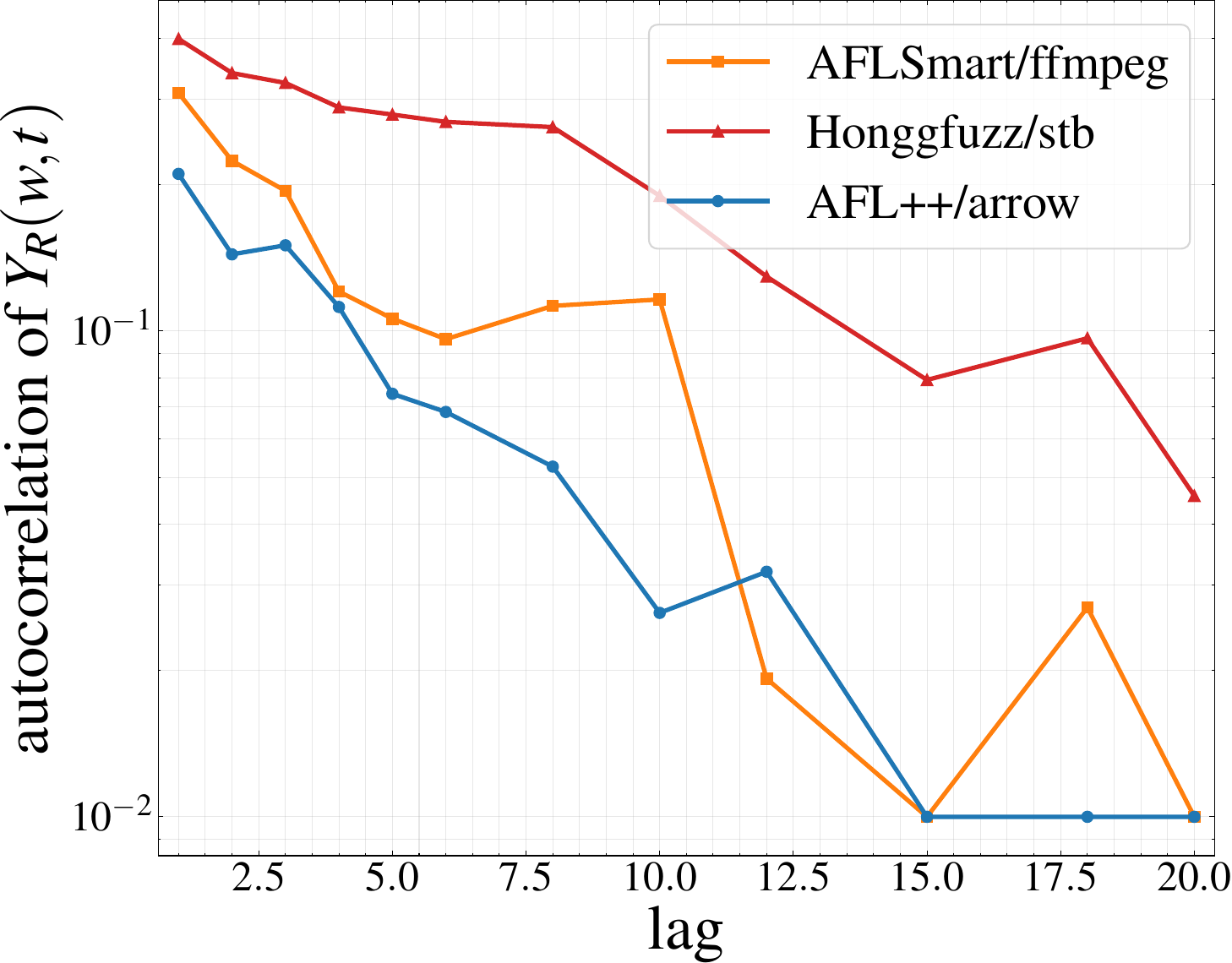}\hfill\includegraphics[width=0.43\columnwidth]{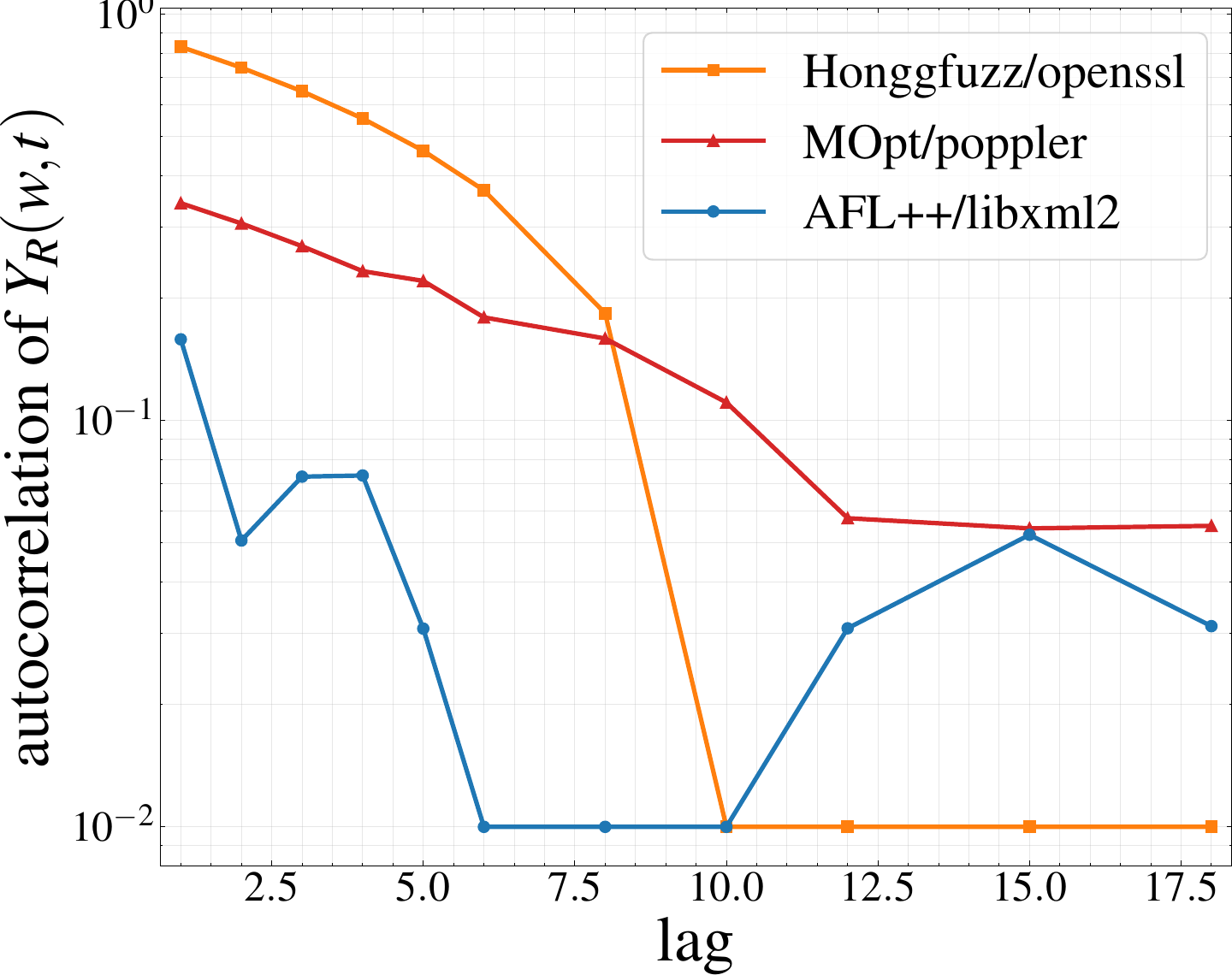}\\
    \makebox[0.43\columnwidth]{\small (a) FuzzBench}\hfill\makebox[0.43\columnwidth]{\small (b) Magma}
    \caption{Empirical autocorrelation of $Y_R(w,t)$ versus lag (log vertical axis) for six representative fuzzer--benchmark pairs (three per suite). The approximately geometric decay toward the estimation-noise floor is consistent with geometric decay over the observed lag range; this is a model diagnostic, not a proof of geometric ergodicity.}
    \label{fig:autocorrelation}
\end{figure}

\para{Spectral gap estimation}
Using Algorithm~\ref{alg:spectral}, we estimate the spectral gap $\hat{\gamma}^Y_{R}=1-|\hat{\lambda}_2|$ and relaxation time $\hat{\delta}_R=1/\hat{\gamma}^Y_{R}$ from the empirical $2\times2$ output transition matrix (Table~\ref{tab:spectral}). Relaxation times are short in Table~\ref{tab:spectral} ($1.2$--$5.9$), and the gap stays positive in all $110$ cells (Section~\ref{subsec:empirical-assumptions}). Evaluated inside the two-state model, $U_{A,R}$ serves as a relative sample-complexity ranking; replacing its worst-case start term by the exact starting-state term of Theorem~\ref{thm:main_body} gives the tight per-cell two-state calibrated value, up to $46\times$ smaller (Honggfuzz on openssl, $17.3\to0.37$, against the $0.16$ actually observed in that cell) and tighter by $\ge1.8\times$ on every cell; the one heavy-drift cell ($\dagger$) keeps the unconditional bound. These measurement-grid times (15-min FuzzBench, 60-s Magma snapshots) quantify snapshot-level, not per-execution, mixing; and the check $A_R^{\mathrm{ref}}\le U_{A,R}$ holds for every pair shown, consistent with the fitted two-state model (compare those two columns).

\para{Time-homogeneity window check}
The firing rate $\hat{\pi}_1$ drifts over a campaign as the corpus and scheduling state evolve, so we estimate the spectral quantities per campaign phase via windowed estimation (Appendix~\ref{app:assumption}): re-estimating on early vs.\ late halves shifts $\hat{\gamma}$ by $|\Delta\hat{\gamma}|=0.08$--$0.86$, tracking $\hat{\pi}_1$'s decline, while adaptive-scheduler pairs stay homogeneous ($|\Delta\hat{\gamma}|=0.003,0.063$). Windowed estimates are thus phase-specific by construction, and the geometric autocorrelation decay (Figure~\ref{fig:autocorrelation}) holds within each phase.

%% file: section/conclusion-future-work.tex
\section{Implications for Fuzzing Practice}\label{sec:practical-guidance}

\textbf{Evaluating fuzzers.}
On our evaluated suites the separation analysis, computed inside the model, stabilizes around $M=20$ trials and $T=12$\,h, where about two-thirds of pairwise robustness rankings separate and more horizon adds little; $M=10$ suffices for exploration. This is an empirical recommendation on these suites, not a universal claim; the separation table (Table~\ref{tab:samplecomplexity}) shows horizon helping up to $12$\,h and trials helping at every horizon. Report both effectiveness ($\widetilde{P}_R(T)$) and robustness ($T\cdot S^2_{M,R}(T)$): Honggfuzz ranks first in effectiveness but last in robustness.

\begin{table}[t]
\centering\small
\caption{Share of FuzzBench pairwise robustness rankings whose certificates separate, over a fixed set of $51$ fuzzer pairs. Trials are the first $M$ of the $20$; horizons are prefixes of the $23$\,h campaign, with the two-state parameters and the two-state calibrated interval re-estimated on each prefix.}
\label{tab:samplecomplexity}
\begin{tabular}{r|cccccc}
\toprule
$M\;\backslash\;T$ & $6$\,h & $9$\,h & $12$\,h & $15$\,h & $18$\,h & $23$\,h \\
\midrule
$5$  & 27\% & 43\% & 45\% & 45\% & 47\% & 55\% \\
$10$ & 25\% & 49\% & 49\% & 55\% & 57\% & 55\% \\
$15$ & 31\% & 49\% & 57\% & 65\% & 63\% & 63\% \\
$20$ & \textbf{33\%} & \textbf{53\%} & \textbf{65\%} & 71\% & 71\% & 67\% \\
\bottomrule
\end{tabular}
\end{table}

\textbf{Using splitting.}
Splitting helps most when bugs cluster into bursts separated by long silences (Appendix~\ref{app:implementation}). Prefer the offline variant when history exists, the online variant otherwise.

\section{Conclusion}\label{sec:conclusion}
We introduce a bug-guided splitting wrapper and a root-campaign robustness framework. At strictly matched compute, splitting improves unique-bug discovery in $53$ of $70$ FuzzBench pairs and lowers CPU-weighted variance across campaigns in $66$ of $70$; its deployable online mode improves ground-truth bugs per CPU-hour in $38$ of $40$ Magma cells with no loss in raw distinct-bug counts, and converts CVE-2019-19926 from $0/20$ to $20/20$. The theory separates finite-trial error from finite-horizon bias and adds a pathwise identity.

%% file: section/supplementary.tex
\section{Supplementary Tables, Figures and Derivations}\label{app:supplementary}

\subsection{From Section: Problem Formulation}

Table~\ref{tab:notation} summarizes the key notation used throughout.

\begin{table}[t]
\centering
\caption{Summary of Key Notation}\label{tab:notation}
\resizebox{\columnwidth}{!}{%
\begin{tabular}{ll}
\toprule
Symbol & Meaning \\
\midrule
$D$ & Input domain (finite set of bit strings) \\
$f$, $g$ & Program trace and behavioral output functions \\
$B_f$, $B$ & Set of abnormal behaviors; set of buggy inputs \\
$R \in \mathcal{R}$ & A fuzzer from the fuzzer set \\
$X_R(w,t)$ & Input Markov chain at time $t$ from seed $w$ \\
$Y_R(w,t)$ & Output process: $\mathbf{1}\{g(X_R(w,t)) \in B_f\}$ \\
$P_R^X$, $P_R$ & Transition matrices of input and output chains \\
$\pi_R^X$, $\pi_R$ & Stationary distributions of input and output chains \\
$T$ & Total mutation rounds (horizon) \\
$M$ & Number of independent trials (replications) \\
$P_R$ & Bug detection rate (BDR): $\lim_{T\to\infty} N_T/T$ \\
$\widetilde{P}_R(T)$ & Finite-horizon BDR estimator \\
$\sigma^2$ & Effort-normalized asymptotic variance (robustness) \\
$S_R(T)$, $S^2_{M,R}(T)$ & Finite-horizon variance; sample variance estimator \\
$\gamma^X_R$, $\delta_R$ & Spectral gap; relaxation time ($\delta_R = 1/\gamma^X_R$) \\
$k_t^m$ & Active paths at time $t$ in trial $m$ (splitting) \\
\bottomrule
\end{tabular}}
\end{table}

\begin{remark}[Extension to coverage-oriented benchmarks]\label{rmk:coverage-extension}
On targets where bugs are unknown or extremely rare, one can replace $B_f$ with a set of coverage targets $B_c$ (e.g., uncovered branches) and define $Y_R(w,t) = \mathbf{1}\{g(X_R(w,t)) \in B_c\}$; since $Y_R$ remains a deterministic function of $X_R$, all analyses carry over, and the robustness metric quantifies stability of coverage-target discovery, keeping the framework informative even when $\text{BDR} = 0$.
\end{remark}

\subsection{From Section: Evaluation Framework}
Algorithm~\ref{alg:mc} gives the Monte Carlo evaluation procedure of Section~\ref{sec:evaluation-methods} in full.

\begin{algorithm}[h]
\SetAlgoLined
\caption{Monte Carlo Simulation Algorithm}\label{alg:mc}
 \textbf{Input:} replications $M$; run length $T$; input seed $w$; start time $t=1$; fuzzer $R$; executed-input and bug-triggering-input sets $I_R^T$, $H_R^T$.\;\par
  \For{$1 \leq m \leq M$}{
  $N_T^m = 0$, $H_R^{m,T}, I_R^{m,T}= \emptyset$ \par
 \For(\tcp*[f]{Simulation}){$1 \leq t \leq T$}{
    Mutate $s_t = X(w,t)$ with fuzzer $R$ and generate $X(w,t+1)$\; \par 
    Add $X(w , t+1)$ to $I_R^T$ \par
    The software executes $X(w,t+1)$ and generates the feedback;\tcp*[f]{Env}\; \par
    Observe $y_{R}^m(t) = Y(w,t+1)$\; \par
    \If{$y_{R}^m(t) = 1$}{ Add $X(w , t+1)$ to $H_R^T$ and update $N_T^m = N_{T}^m+1$\; \par
}
$s_t = X(w, t+1)$
}
   }
\textbf{Output:} $\{H_R^{m,T}\}_{m=1}^{M}$, $\{I_R^{m,T}\}_{m=1}^{M}$, $\{N_T^m\}_{m=1}^{M}$, $\{y_{R}^m(t)\}_{m, t}$
\end{algorithm}

\subsection{From Section: Theoretical Guarantees}

\begin{table}[t]
\centering\footnotesize
\setlength{\tabcolsep}{4pt}
\caption{Dependency map of the guarantees, for one-pass reading. Positivity/reversibility is needed \emph{only} for the variance lower bound; the never-worse identity assumes nothing; the concentration bound needs only a geometric autocovariance envelope and carries the confidence level explicitly.}
\label{tab:theorem-map}
\begin{tabular}{@{}p{0.20\columnwidth}p{0.29\columnwidth}p{0.25\columnwidth}p{0.15\columnwidth}@{}}
\toprule
Result & Delivers & Powered by & Tightness \\
\midrule
Thm~\ref{bug_rate_splitting} & bug-event count never decreases; exact identity for the rate difference & the original lineage is continued & exact \\
Prop~\ref{prop:sizebias} & expected rate rises when allocation and future yield move together & positive association & conditional \\
Thm~\ref{thm:splitting_var} & variance per unit compute falls (idealized) & positivity \& reversibility; derived $p{<}\sqrt2{-}1$ & mechanism \\
Thm~\ref{thm:main_body} & finite-$M$ and finite-$T$ error of $T\,S^2_{M,R}$ & geometric autocovariance envelope & explicit $\alpha$ \\
Lem~\ref{lem:gap-transfer} & holds without assuming the bug indicator is Markov & input chain forgets its start geometrically & --- \\
Lem~\ref{lem:gap-transfer} & supplies the envelope Thm~\ref{thm:main_body} needs, with no assumption on the output & reversible input chain & --- \\
\bottomrule
\end{tabular}
\end{table}

\noindent\emph{Worked example (one fuzzer--target pair, end to end).} Take Honggfuzz on openssl. From its trajectory we estimate the two-state output gap $\hat\gamma^Y_{R}=0.17$ and bug-step fraction $\hat\pi_1=0.99$ (Table~\ref{tab:spectral}). Evaluating the three terms of Theorem~\ref{thm:main_body} inside that two-state model gives the gap-only accuracy value $U_{A,R}=17.29$, useful for \emph{ranking}. Computing all three terms inside that model instead, with the exact starting-state term at its worst over every start, sharpens it to a two-state calibrated interval of $0.37$, within $\approx 3\times$ of the observed deviation $A_R^{\mathrm{ref}}=0.16$ against the fitted model reference. The $5\%$ is split as $2.5\%$ for the sampling quantile and $2.5\%$ for the parameter set, so the interval covers at least $95\%$ without assuming the two are independent. Both numbers are computed inside the stated two-state model, not a guarantee for the general chain.

\textbf{What is estimated under drift.} Under piecewise homogeneity (windows of length $W$, per-window gap $\gamma_w$, robustness $\sigma_w^2$), the windowed statistic estimates the effort-weighted phase average $\bar\sigma^2=\sum_w a_w\sigma_w^2$ ($a_w=W/T$); a union bound gives $A_R^{(w)}\le U(\gamma_w)$ simultaneously, and the phase-average-to-window discrepancy carries an explicit drift-proportional bias (Appendix~\ref{app:gap-robust}). This is a defined quantity with a quantified bias, small for homogeneity-consistent pairs, large only for the high-drift ones we flag.

\subsection{From Section: Splitting}

\begin{table}[t]
\centering
\small
\setlength{\tabcolsep}{2pt}
\caption{Splitting vs.\ power scheduling}
\label{tab:splitting-vs-scheduling}
\begin{tabularx}{\columnwidth}{lXX}
\toprule
& \textbf{AFLFast / MOpt} & \textbf{Splitting (Ours)} \\
\midrule
Trigger signal & Coverage-based heuristic & Bug-triggering output \\
Mechanism & Re-prioritize fixed seed pool & Fork new chain from current state \\
Modifies fuzzer & Yes, internal scheduler & No, black-box plug-in \\
Theoretical guarantee & Path-hitting under a coverage Markov model~\cite{AFLFast}; none on bug detection or variance & On BDR and variance: Thm.~\ref{thm:splitting_var},~\ref{bug_rate_splitting} \\
Effect on robustness & Not analyzed & Reduces variance in the clustered-bug regime \\
Composable & Within-fuzzer scheduler & Yes, wraps any fuzzer \\
\bottomrule
\end{tabularx}
\end{table}

In other words, splitting is most beneficial when it increases \(k_t\) primarily in clustered bug regions, thereby reducing the dominant covariance contributions by more than the corresponding marginal increase in CPU cost. We scope the analysis device explicitly: our implementation forks \emph{independent} descendants, and the clustered-regime mechanism above, averaging the dominant within-cluster covariance over the $k_t$ paths, applies to independent forks; the coupled-noise construction, under its own separate assumption set, is a stylized device quantifying an \emph{additional} reduction available to anti-correlated descendants, which our implementation does not exploit.

\subsection{From Section: Numerical Experiments}

\begin{table}[t]
\centering
\caption{Bug signature statistics per benchmark. Counts are per-trial unique signatures over the 23-hour campaigns, aggregated across all 7 fuzzers and 20 trials. FuzzBench commit: \texttt{90e59b6}. Bugs/h = median count divided by the 23-hour horizon.}
\label{tab:bug-counts}
\resizebox{\columnwidth}{!}{%
\begin{tabular}{lcccccccccc}
\toprule
& arrow & ffmpeg & grok & libhevc & libhtp & matio & openh264 & php & poppler & stb \\
\midrule
Median/trial & 39 & 7 & 1 & 29 & 3 & 10 & 4 & 1 & 12 & 18 \\
Max/trial    & 47 & 36 & 3 & 34 & 6 & 16 & 7 & 7 & 22 & 22 \\
Bugs/h & 1.70 & 0.30 & 0.04 & 1.26 & 0.13 & 0.43 & 0.17 & 0.04 & 0.52 & 0.78 \\
\bottomrule
\end{tabular}}
\end{table}

\para{What we claim} Precisely: \emph{(i) Baseline.} Splitting is compared against the natural baseline, more independent runs of the \emph{same} fuzzer at equal total compute (the non-split arm). \emph{(ii) Bug-finding on both suites.} Splitting finds more unique bugs on \emph{both} FuzzBench (its standard \texttt{crash\_key} metric; $53/70$ pairs) and Magma (planted ground truth; $38/40$ cells per CPU-hour), with the Magma gains tied to specific CVEs. \emph{(iii) Magma.} The ground-truth conclusion (more bugs per CPU-hour in $38$ of $40$ cells, $34$ of $38$ individually significant; never lower in raw distinct-bug count in any of the $40$) is established at $M{=}20$. \emph{(iv) Theory.} The $\approx10\times$ variance reduction is \emph{measured} ($66$ of $70$ pairs); Theorem~\ref{thm:splitting_var} \emph{explains} it in the clustered-bug regime: the \emph{derived} condition $p<\sqrt2-1$, which marks the non-saturated rare-bug regime, exactly where the variance metric is informative, and the operative accuracy guarantee is the per-fuzzer certificate the two-state calibrated interval (Theorem~\ref{thm:main_body}).

\subsection{From Section: Conclusion}
Table~\ref{tab:fuzzer-ranking} ranks the fuzzers by terminal bug-detection rate on the non-splitting arm, averaged over the ten benchmarks.

\begin{table}[t]
\centering
\caption{Fuzzer ranking from the MC (non-splitting) arm: mean over the 10 benchmarks of the terminal (23\,h) bug-detection rate (bugs/h; higher is better) and of the effort-normalized variance (variance $\times$ CPU-h; lower is better). Parenthesized rank, 1 = best.}
\label{tab:fuzzer-ranking}
\resizebox{\columnwidth}{!}{%
\begin{tabular}{lccccccc}
\toprule
& Hongg. & AFL & MOpt & AFLFast & AFLSm. & AFL++ & LibF. \\
\midrule
BDR & 0.595 (1) & 0.552 (2) & 0.542 (3) & 0.537 (4) & 0.529 (5) & 0.511 (6) & 0.280 (7) \\
Var & 28.7 (7) & 11.7 (5) & 10.7 (4) & 3.4 (1) & 12.1 (6) & 5.9 (2) & 7.4 (3) \\
\bottomrule
\end{tabular}}
\end{table}

%% file: section/exp.tex
\section{Additional Numerical Results}\label{app:appendix_experiments}

All experiments are conducted on a server running Ubuntu 24.04.2 LTS (Noble) with 2.0 TB of RAM and Intel Xeon Platinum 8480+ processors (56 cores / 112 threads per socket; 2.00 GHz base frequency). The full per-fuzzer Magma ground-truth table and the complete per-fuzzer figure grids for Section~\ref{sec:exp} appear in this appendix and are also included in the artifact.

\iffullversion
\begin{table}[tp]
\centering
\caption{Magma ground-truth validation (full): five fuzzers $\times$ eight Magma programs, aggregated over $M{=}20$ trials of $12$\,h, splitting vs.\ no-split (MC). Each cell is \emph{splitting\,/\,no-split}: mean unique bugs per trial; total (union) unique bugs over the $20$ trials; weighted bug-detection rate $\widehat{P}_R$ (bugs/h); effort-normalized $\mathrm{Var}(\widehat{P}_R)$ (lower is better). \textbf{Bold} marks the better value; ties unbolded. Note the saturation in the variance column: the no-split variance is exactly $0$ in 17 of 40 cells (identical outcomes across trials), the regime where the variance metric is uninformative (Section~\ref{subsec:magma}).}\label{tab:magma-full}
\footnotesize
\setlength{\tabcolsep}{4pt}
\resizebox{\columnwidth}{!}{%
\begin{tabular}{lcccc}
\toprule
Fuzzer & mean uniq & total uniq & $\widehat{P}_R$ & $\mathrm{Var}(\widehat{P}_R)$ \\
       & (s\,/\,n) & (s\,/\,n)  & (s\,/\,n)        & (s\,/\,n) \\
\midrule
\multicolumn{5}{l}{\emph{libpng}}\\
\quad AFL & \textbf{1.2}/1 & \textbf{2}/1 & \textbf{4.025}/1 & 7.23e-08/\textbf{0} \\
\quad AFLFast & 1/1 & 1/1 & \textbf{4}/1 & 0/0 \\
\quad AFL++ & \textbf{3}/2.2 & 3/3 & \textbf{5.167}/2 & \textbf{5.87e-06}/1.16e-05 \\
\quad MOpt & \textbf{1.8}/1.2 & \textbf{3}/2 & \textbf{4.25}/1.2 & \textbf{1.63e-06}/4.63e-06 \\
\quad Honggfuzz & \textbf{4}/3.6 & 4/4 & \textbf{6.55}/3.6 & 7.52e-06/\textbf{6.94e-06} \\
\midrule
\multicolumn{5}{l}{\emph{libtiff}}\\
\quad AFL & 1/1 & 2/2 & \textbf{1.27}/1 & 2.72e-05/\textbf{0} \\
\quad AFLFast & \textbf{2.25}/2 & \textbf{5}/3 & 1.992/\textbf{2.000} & \textbf{6.41e-06}/2.31e-05 \\
\quad AFL++ & \textbf{6.4}/5.4 & 7/7 & \textbf{5.755}/4.4 & 3.76e-05/\textbf{6.94e-06} \\
\quad MOpt & \textbf{4.8}/3.4 & \textbf{5}/4 & \textbf{4.683}/3.4 & \textbf{3.89e-06}/6.94e-06 \\
\quad Honggfuzz & \textbf{4.6}/4 & \textbf{6}/4 & \textbf{5.968}/4 & 1.17e-05/\textbf{0} \\
\midrule
\multicolumn{5}{l}{\emph{libsndfile}}\\
\quad AFL & 2/2 & 2/2 & \textbf{4.16}/2 & 1.32e-06/\textbf{0} \\
\quad AFLFast & 2/2 & 2/2 & \textbf{4.06}/2 & 1.71e-07/\textbf{0} \\
\quad AFL++ & 7/7 & 7/7 & \textbf{8.009}/5.2 & \textbf{9.28e-06}/1.62e-05 \\
\quad MOpt & 7/7 & 7/7 & \textbf{7.533}/5 & 8.69e-05/\textbf{9.92e-36} \\
\quad Honggfuzz & \textbf{6.4}/6 & \textbf{7}/6 & \textbf{8.969}/6 & 2.90e-06/\textbf{0} \\
\midrule
\multicolumn{5}{l}{\emph{poppler}}\\
\quad AFL & 1.8/1.8 & \textbf{3}/2 & \textbf{4.05}/1.8 & \textbf{2.89e-07}/4.63e-06 \\
\quad AFLFast & 2/2 & 2/2 & \textbf{4.307}/1 & 1.08e-06/\textbf{0} \\
\quad AFL++ & \textbf{3.4}/2.2 & \textbf{5}/3 & \textbf{5.342}/2.2 & \textbf{2.78e-06}/4.63e-06 \\
\quad MOpt & \textbf{4}/3 & \textbf{6}/3 & \textbf{6.175}/2.8 & \textbf{1.74e-06}/4.63e-06 \\
\quad Honggfuzz & \textbf{3}/2.2 & 3/3 & \textbf{5.289}/2.2 & 5.10e-06/\textbf{4.63e-06} \\
\midrule
\multicolumn{5}{l}{\emph{sqlite3}}\\
\quad AFL & \textbf{0.8}/0.4 & \textbf{3}/2 & \textbf{1.158}/0.4 & 9.98e-05/\textbf{1.85e-05} \\
\quad AFLFast & 1.6/1.6 & \textbf{3}/2 & \textbf{3.36}/1.6 & 8.78e-05/\textbf{6.94e-06} \\
\quad AFL++ & \textbf{3}/1.8 & \textbf{3}/2 & \textbf{5.166}/1.8 & \textbf{1.64e-06}/4.63e-06 \\
\quad MOpt & \textbf{2.8}/2.4 & \textbf{4}/3 & \textbf{5.285}/2.2 & \textbf{6.88e-06}/1.62e-05 \\
\quad Honggfuzz & \textbf{3.4}/2.8 & \textbf{5}/3 & \textbf{5.828}/2.8 & \textbf{1.33e-06}/4.63e-06 \\
\midrule
\multicolumn{5}{l}{\emph{openssl}}\\
\quad AFL & \textbf{2}/1.2 & 2/2 & \textbf{4.8}/1.2 & \textbf{6.51e-07}/4.63e-06 \\
\quad AFLFast & \textbf{2}/1.6 & 2/2 & \textbf{4.85}/1.6 & \textbf{4.34e-07}/6.94e-06 \\
\quad AFL++ & 1/1 & 1/1 & \textbf{3.825}/1 & 4.07e-07/\textbf{0} \\
\quad MOpt & \textbf{2}/1.8 & 2/2 & \textbf{4.592}/1.8 & \textbf{4.30e-07}/4.63e-06 \\
\quad Honggfuzz & \textbf{1.2}/1 & \textbf{2}/1 & \textbf{3.749}/1 & 5.10e-07/\textbf{0} \\
\midrule
\multicolumn{5}{l}{\emph{libxml2}}\\
\quad AFL & \textbf{1.2}/1 & \textbf{2}/1 & \textbf{4}/1 & 0/0 \\
\quad AFLFast & \textbf{1.4}/1 & \textbf{2}/1 & \textbf{4.029}/1 & 6.86e-08/\textbf{0} \\
\quad AFL++ & 4/4 & 4/4 & \textbf{6.775}/4 & 6.15e-07/\textbf{0} \\
\quad MOpt & \textbf{3.6}/3 & 4/4 & \textbf{5.35}/3 & \textbf{9.77e-07}/2.31e-05 \\
\quad Honggfuzz & \textbf{2.2}/1.4 & \textbf{3}/2 & \textbf{4.25}/1.4 & \textbf{5.43e-07}/6.94e-06 \\
\midrule
\multicolumn{5}{l}{\emph{php}}\\
\quad AFL & 3/3 & 3/3 & \textbf{5.996}/3 & 5.79e-06/\textbf{0} \\
\quad AFLFast & 3/3 & 3/3 & \textbf{6.45}/3 & 1.13e-05/\textbf{0} \\
\quad AFL++ & \textbf{3}/2 & 3/3 & \textbf{4.864}/2 & 1.46e-06/\textbf{0} \\
\quad MOpt & 3/3 & 3/3 & \textbf{6.775}/2.6 & 1.74e-05/\textbf{6.94e-06} \\
\quad Honggfuzz & 1/1 & 1/1 & \textbf{4}/1 & 0/0 \\
\bottomrule
\end{tabular}}
\end{table}
\fi

\iffullversion

\begin{figure}[htb]
    \centering
    \includegraphics[width=0.46\textwidth]{figs/main/legend_fuzzers.pdf}\\[3pt]
    \makebox[0.235\textwidth]{\small\textbf{Non-splitting (MC)}}\hfill\makebox[0.235\textwidth]{\small\textbf{Splitting}}\\[2pt]
    \includegraphics[width=0.235\textwidth]{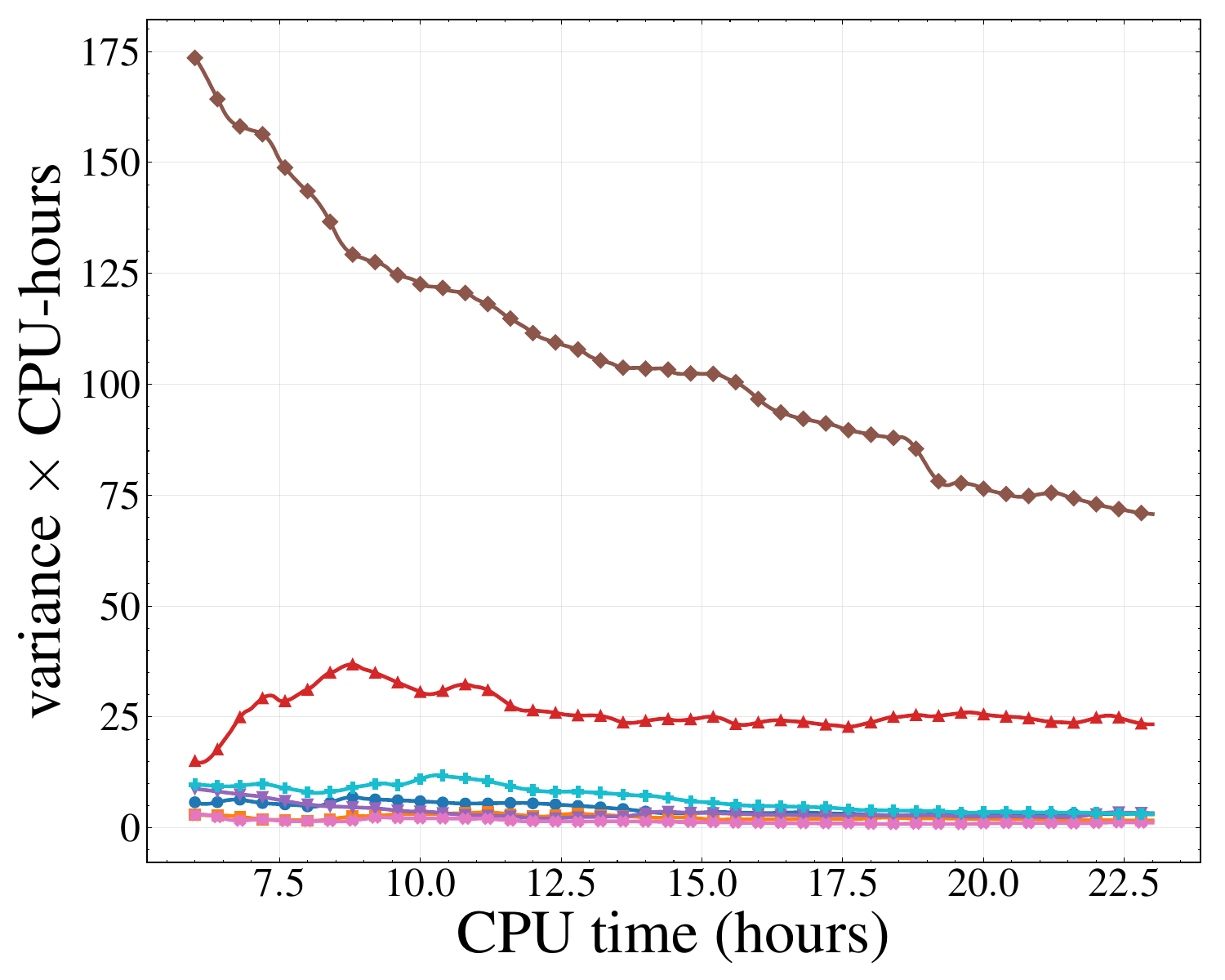}\hfill\includegraphics[width=0.235\textwidth]{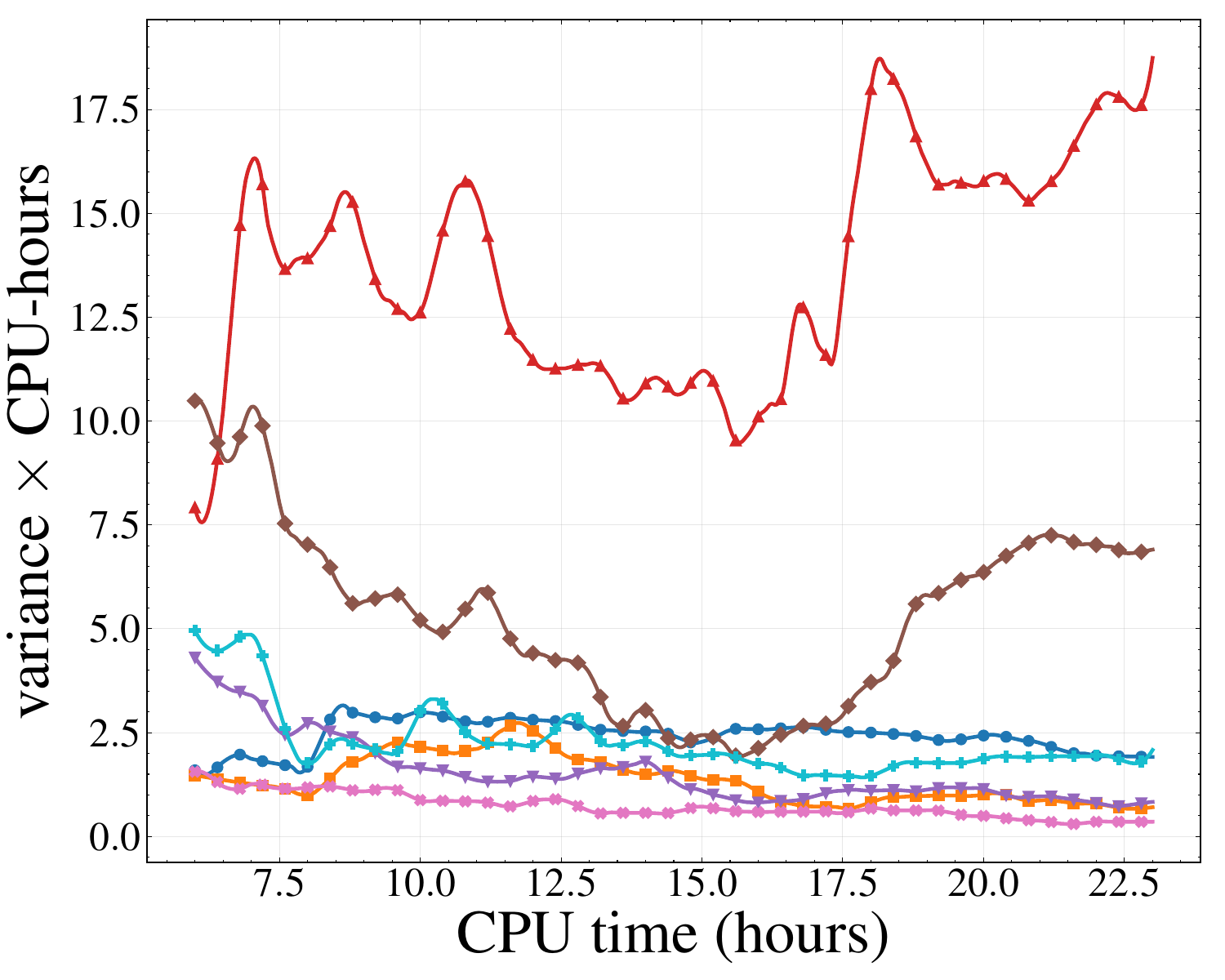}\\
    {\small (a,\,b) ffmpeg}\\[3pt]
    \includegraphics[width=0.235\textwidth]{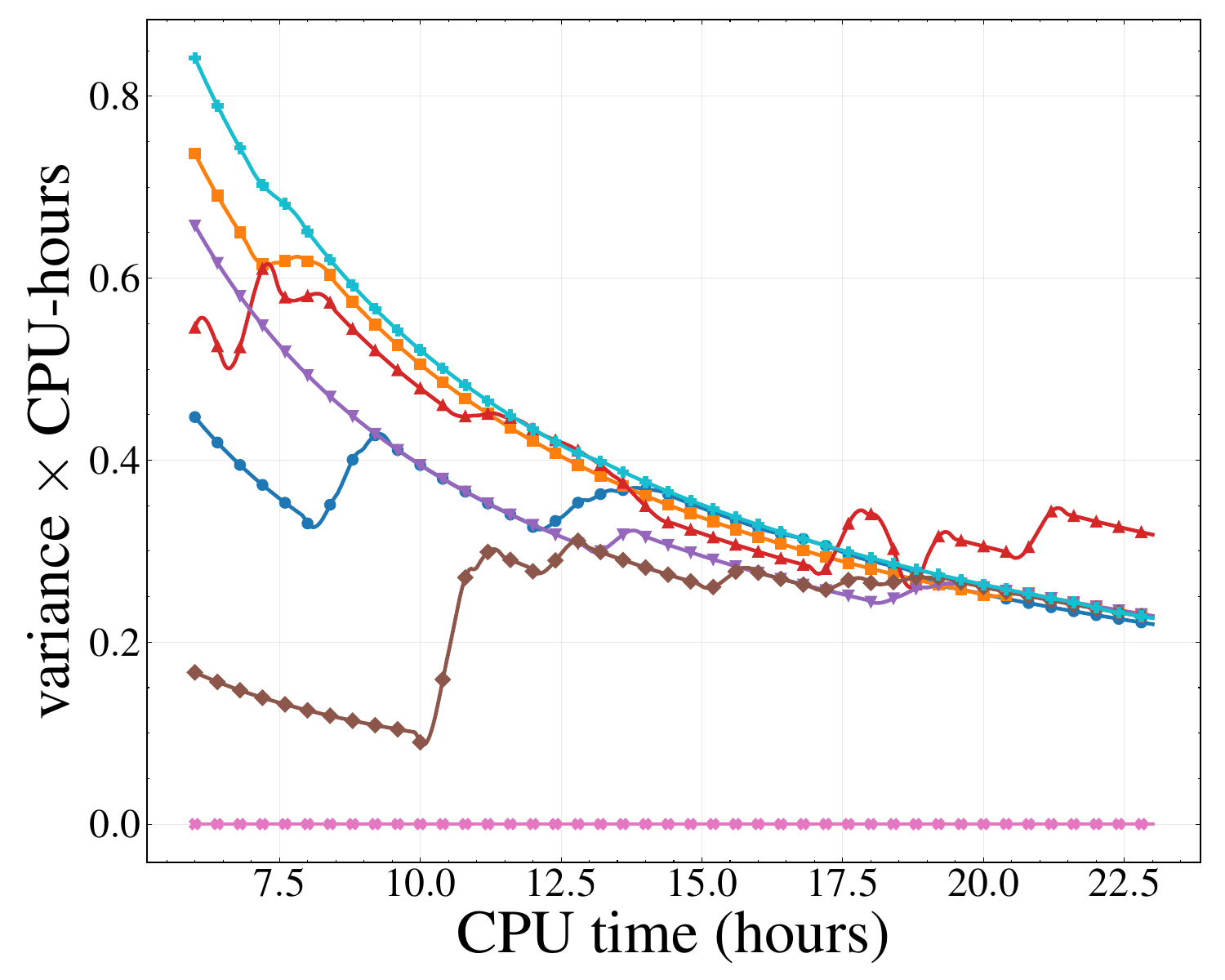}\hfill\includegraphics[width=0.235\textwidth]{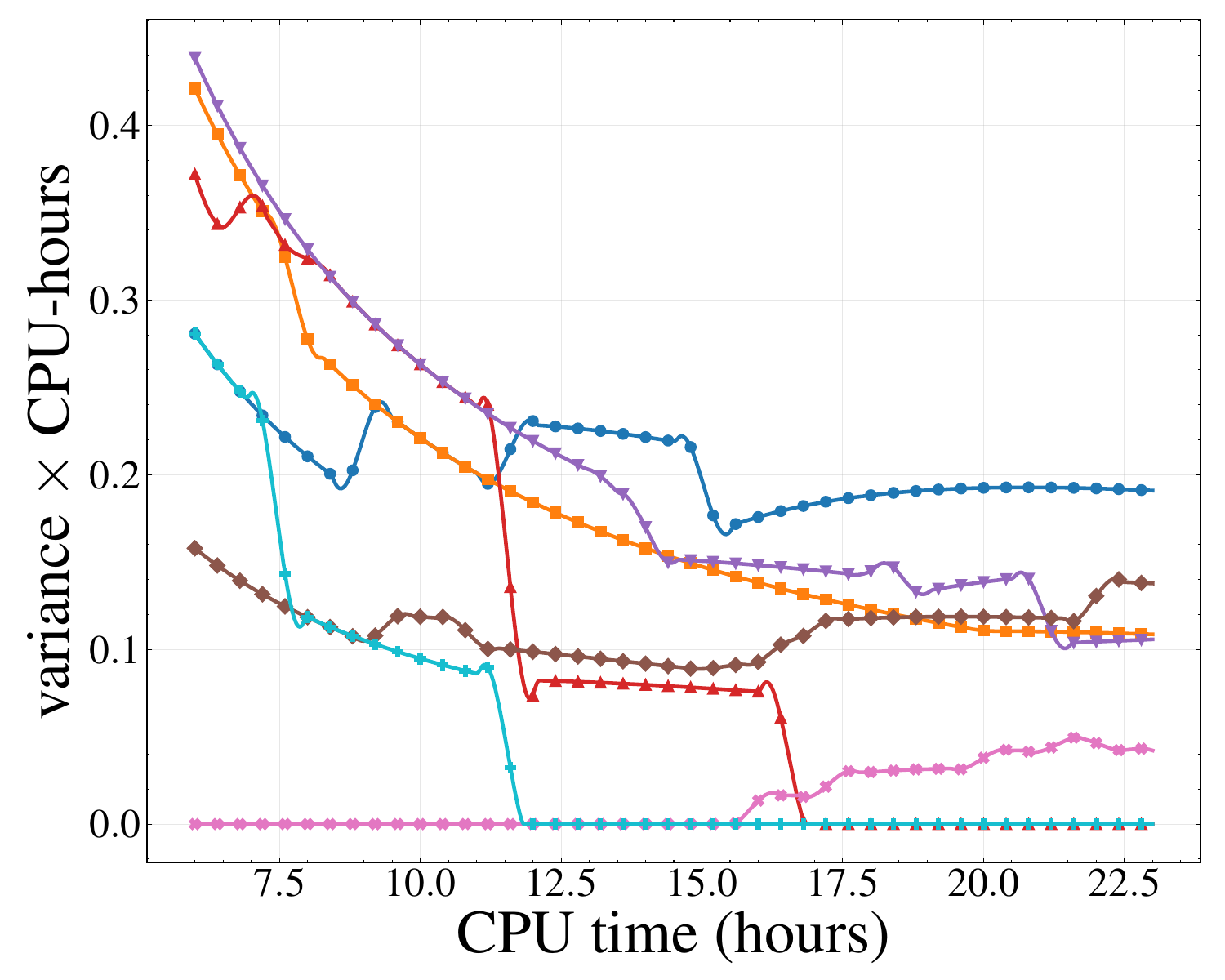}\\
    {\small (c,\,d) grok}\\[3pt]
    \includegraphics[width=0.235\textwidth]{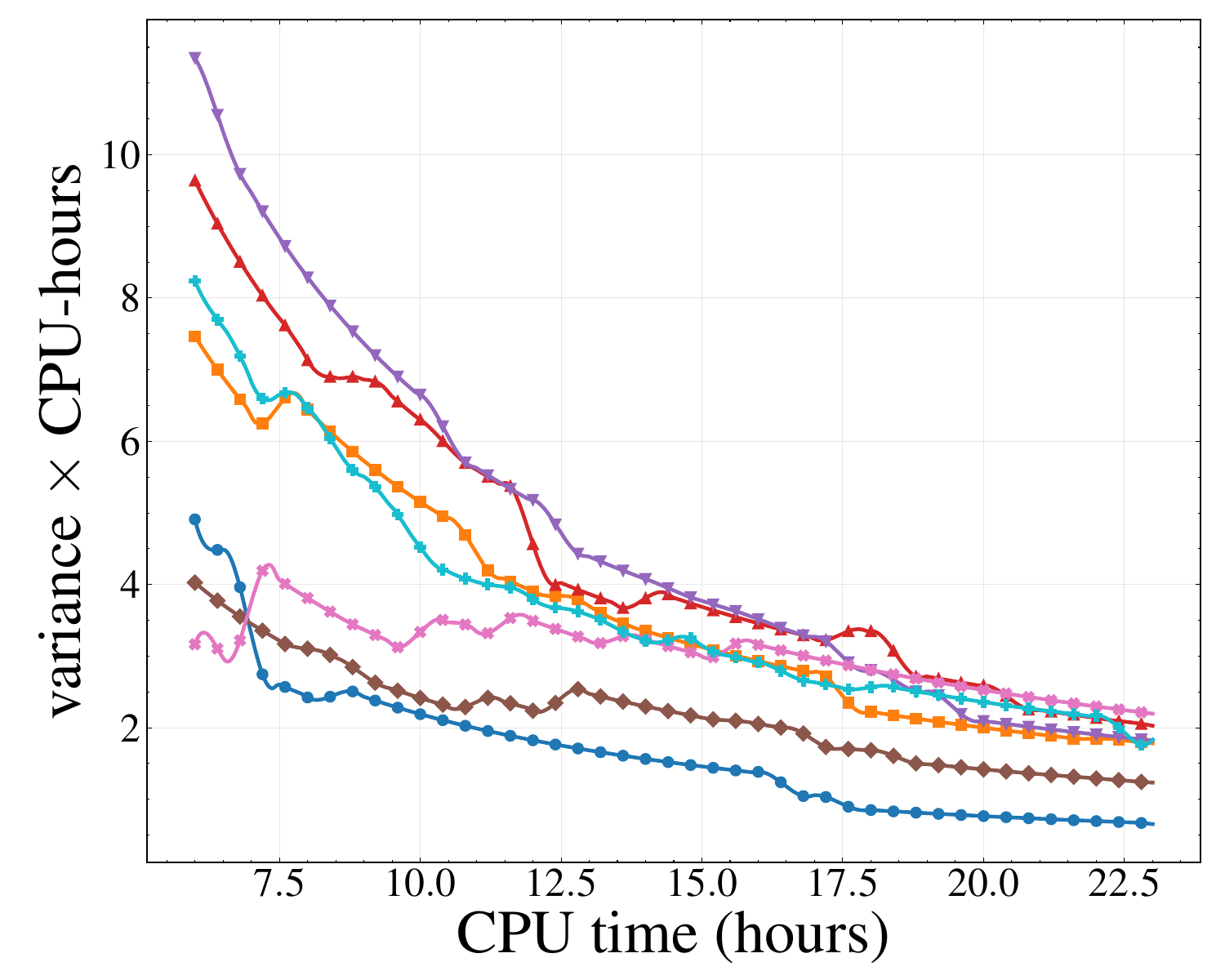}\hfill\includegraphics[width=0.235\textwidth]{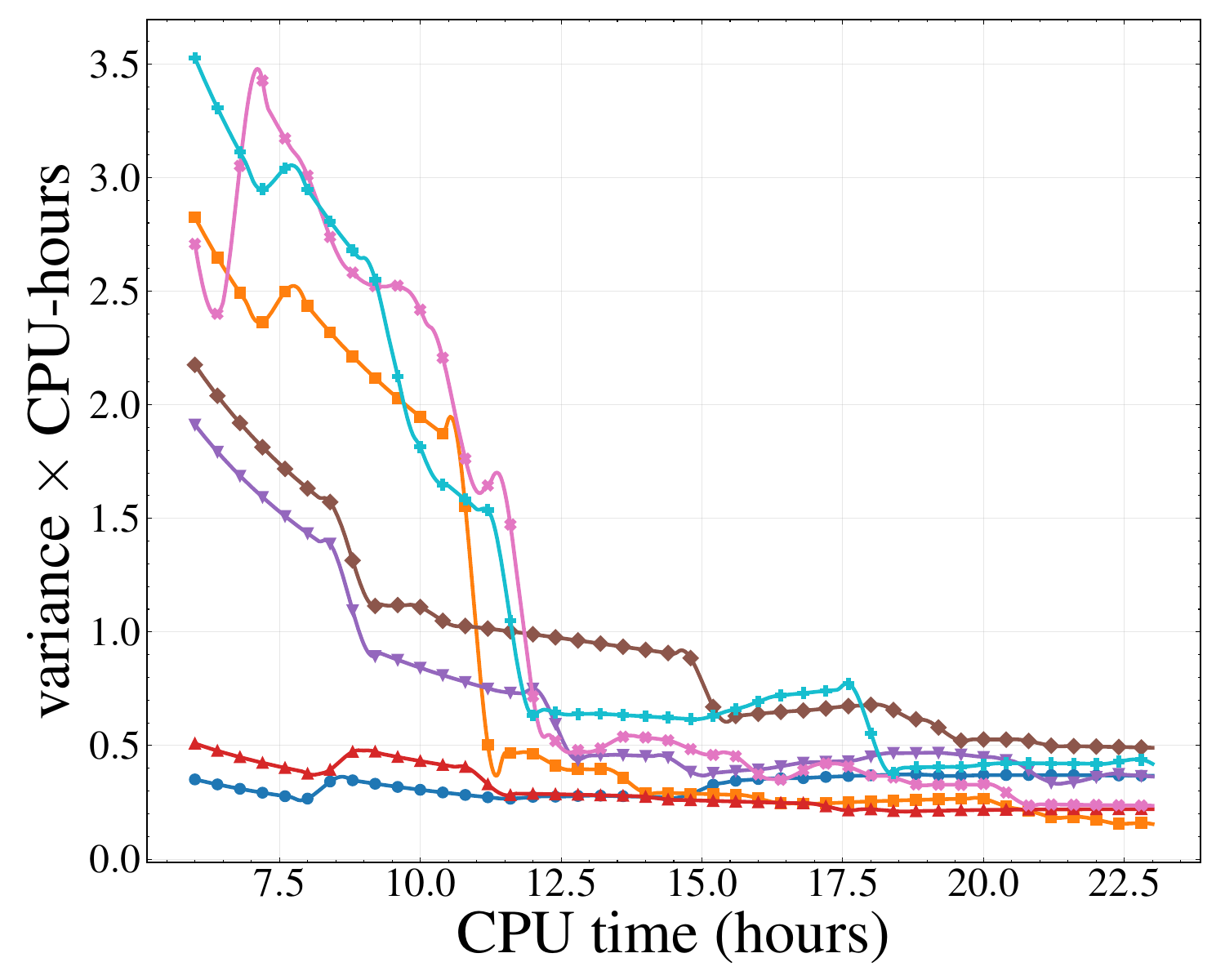}\\
    {\small (e,\,f) libhtp}
    \caption{Variance ($\times$ CPU-hours) across the seven fuzzers under Monte Carlo (non-splitting, left column) and splitting (right column), for three of the remaining benchmarks (part 1 of 3). Companion to Figure~\ref{fig:mc-simulation}.}
    \label{fig:var-appendix}
\end{figure}

\begin{figure}[htb]
    \centering
    \includegraphics[width=0.46\textwidth]{figs/main/legend_fuzzers.pdf}\\[3pt]
    \makebox[0.235\textwidth]{\small\textbf{Non-splitting (MC)}}\hfill\makebox[0.235\textwidth]{\small\textbf{Splitting}}\\[2pt]
    \includegraphics[width=0.235\textwidth]{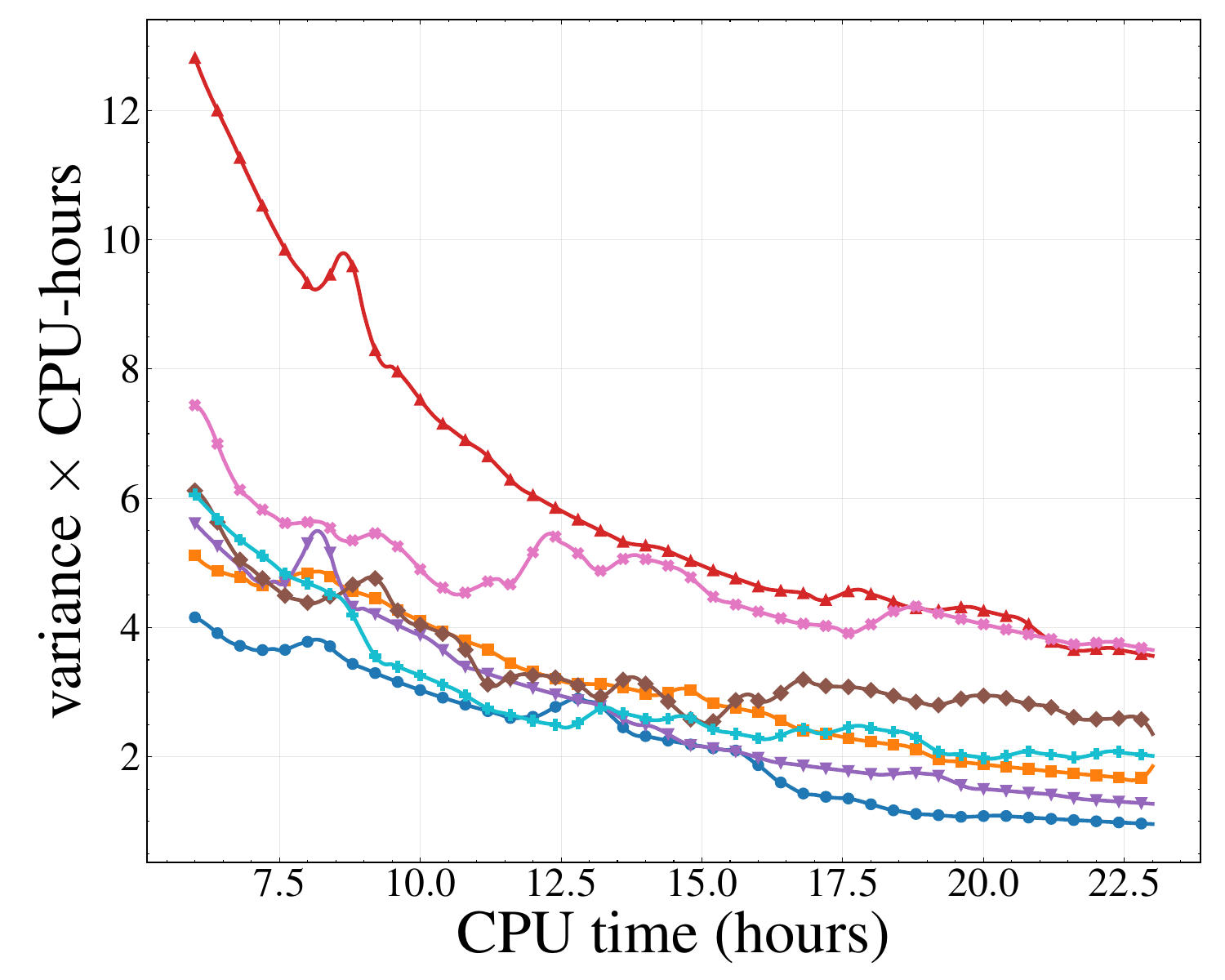}\hfill\includegraphics[width=0.235\textwidth]{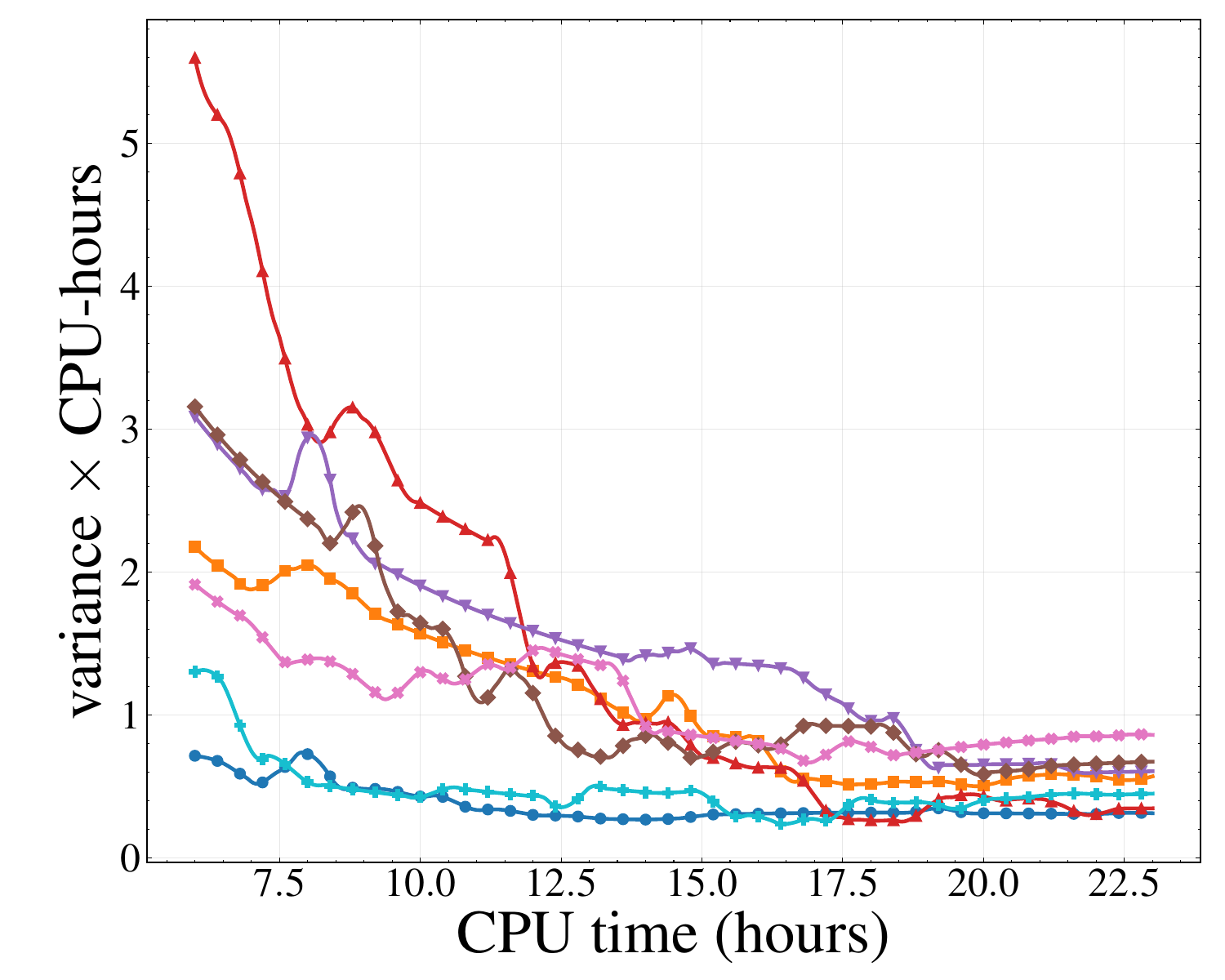}\\
    {\small (a,\,b) matio}\\[3pt]
    \includegraphics[width=0.235\textwidth]{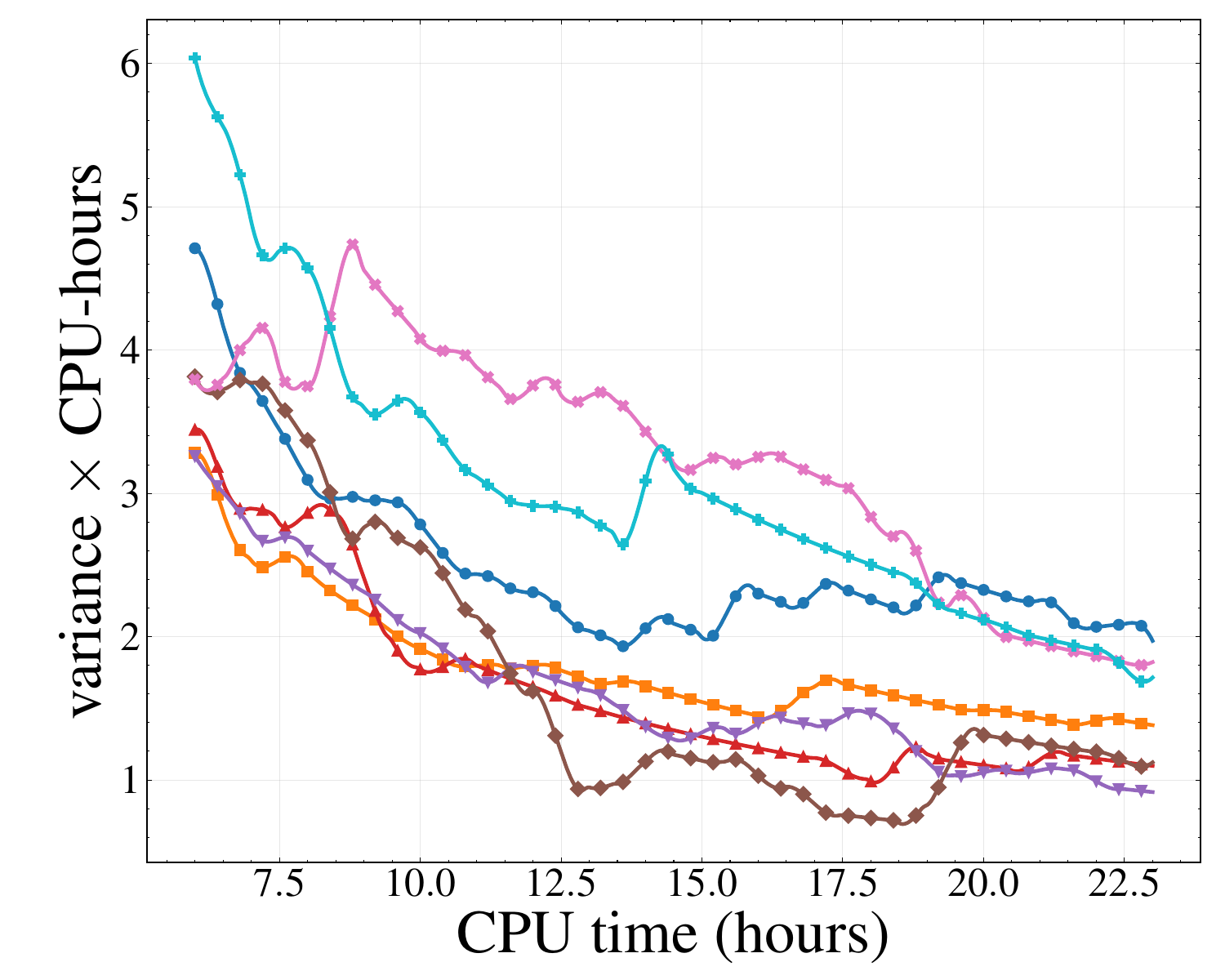}\hfill\includegraphics[width=0.235\textwidth]{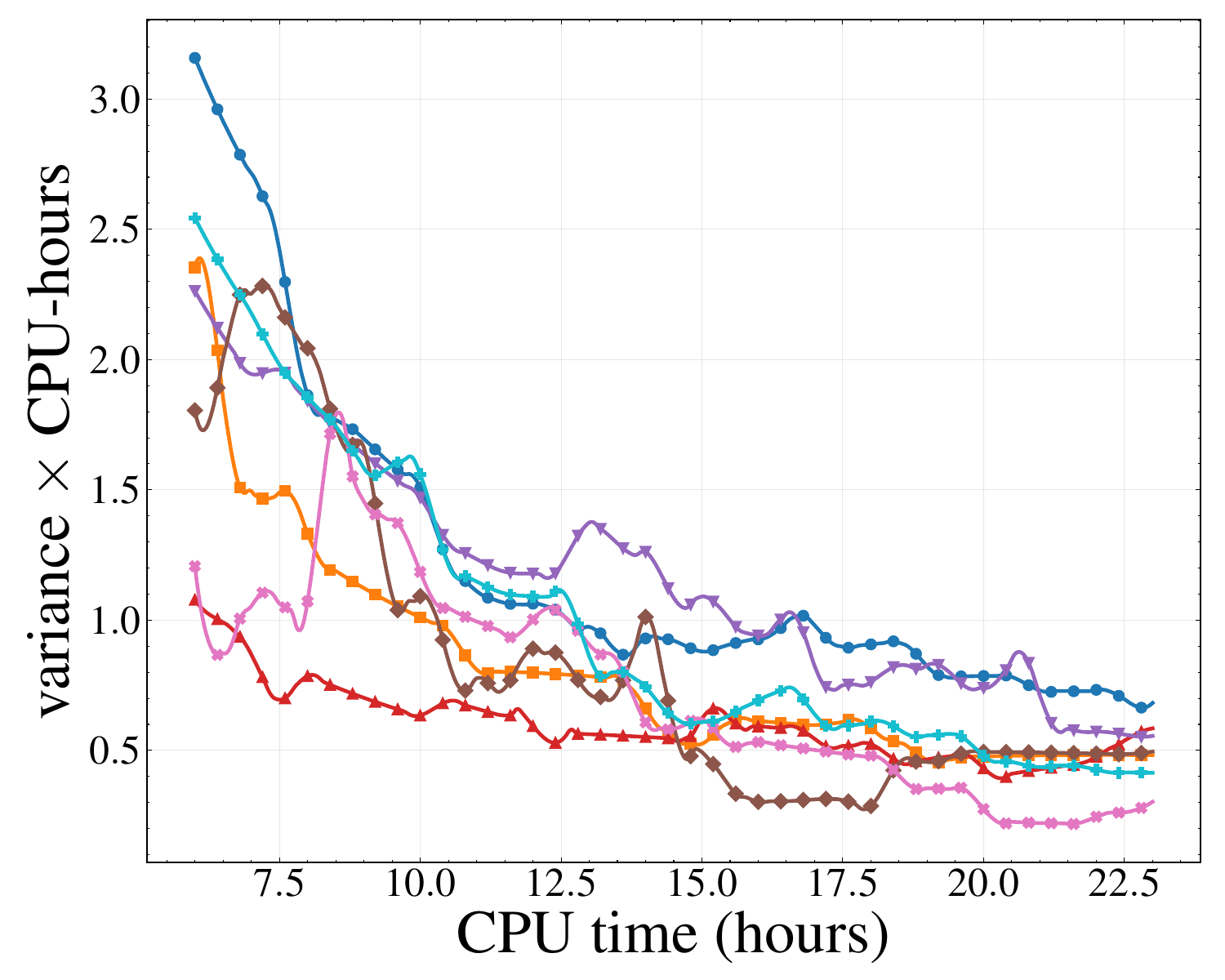}\\
    {\small (c,\,d) openh264}\\[3pt]
    \includegraphics[width=0.235\textwidth]{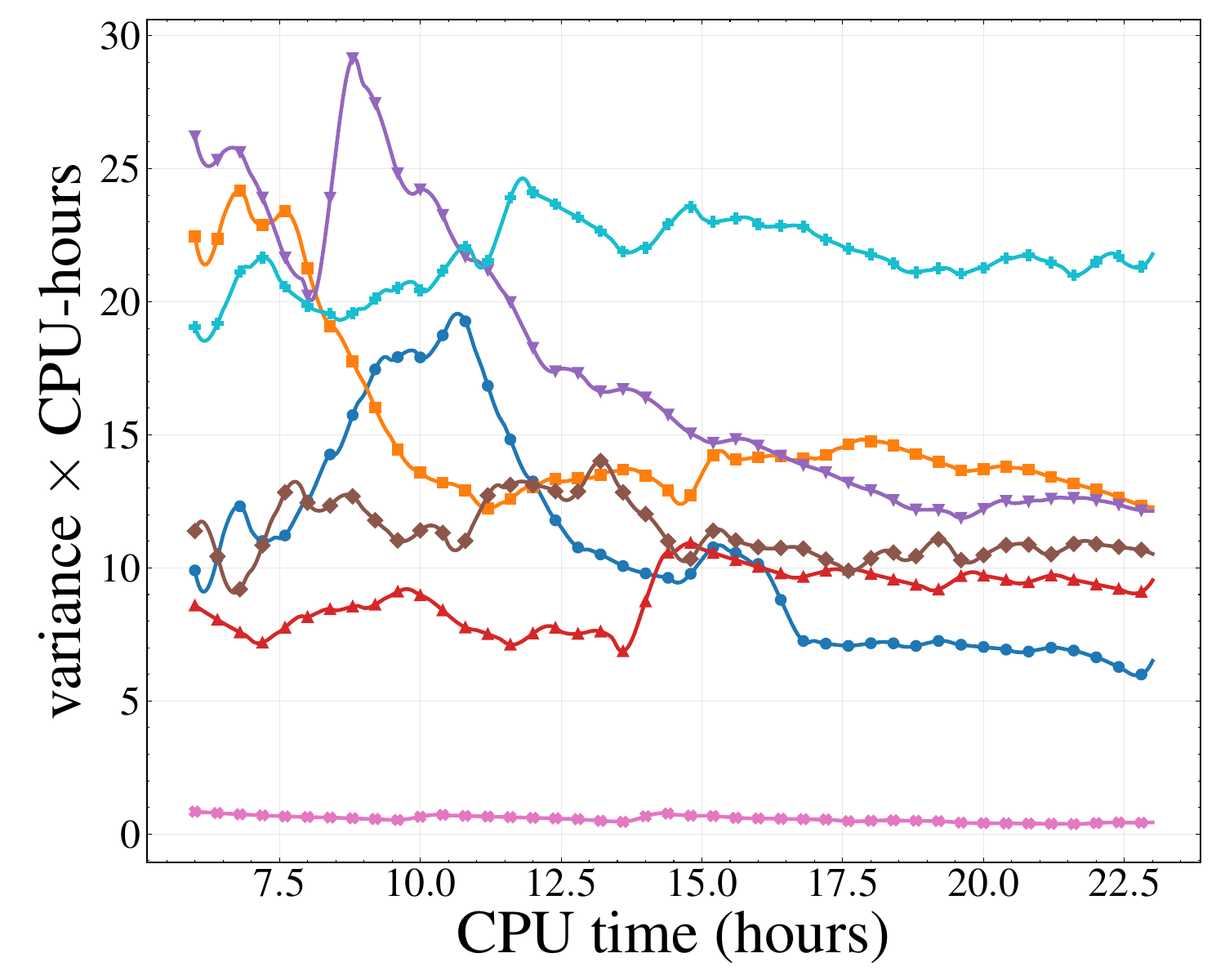}\hfill\includegraphics[width=0.235\textwidth]{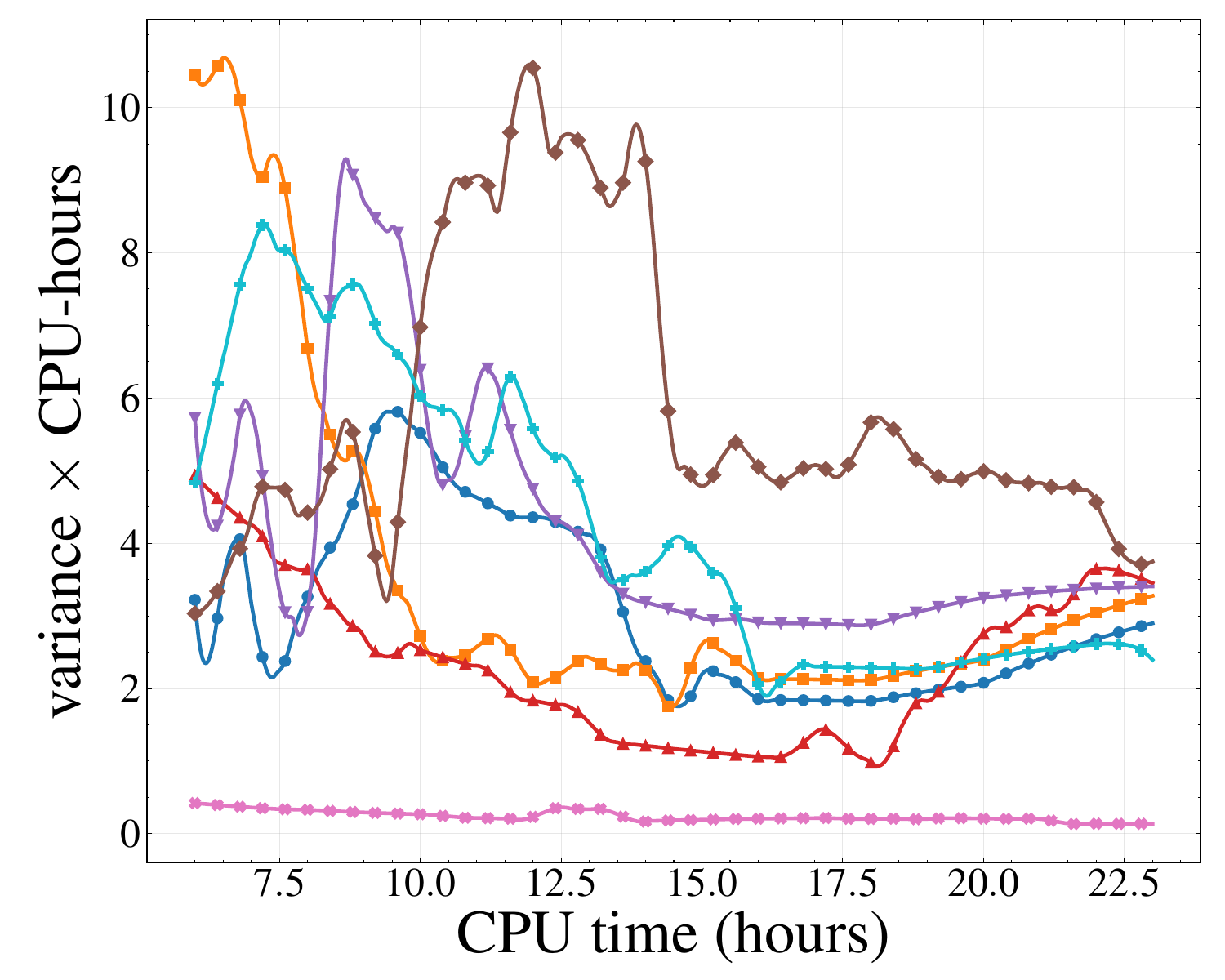}\\
    {\small (e,\,f) poppler}
    \caption{Variance ($\times$ CPU-hours) across the seven fuzzers under Monte Carlo (non-splitting, left column) and splitting (right column), for three of the remaining benchmarks (part 2 of 3). Companion to Figure~\ref{fig:mc-simulation}.}
    \label{fig:var-appendix2}
\end{figure}

\begin{figure}[htb]
    \centering
    \includegraphics[width=0.46\textwidth]{figs/main/legend_fuzzers.pdf}\\[3pt]
    \makebox[0.235\textwidth]{\small\textbf{Non-splitting (MC)}}\hfill\makebox[0.235\textwidth]{\small\textbf{Splitting}}\\[2pt]
    \includegraphics[width=0.235\textwidth]{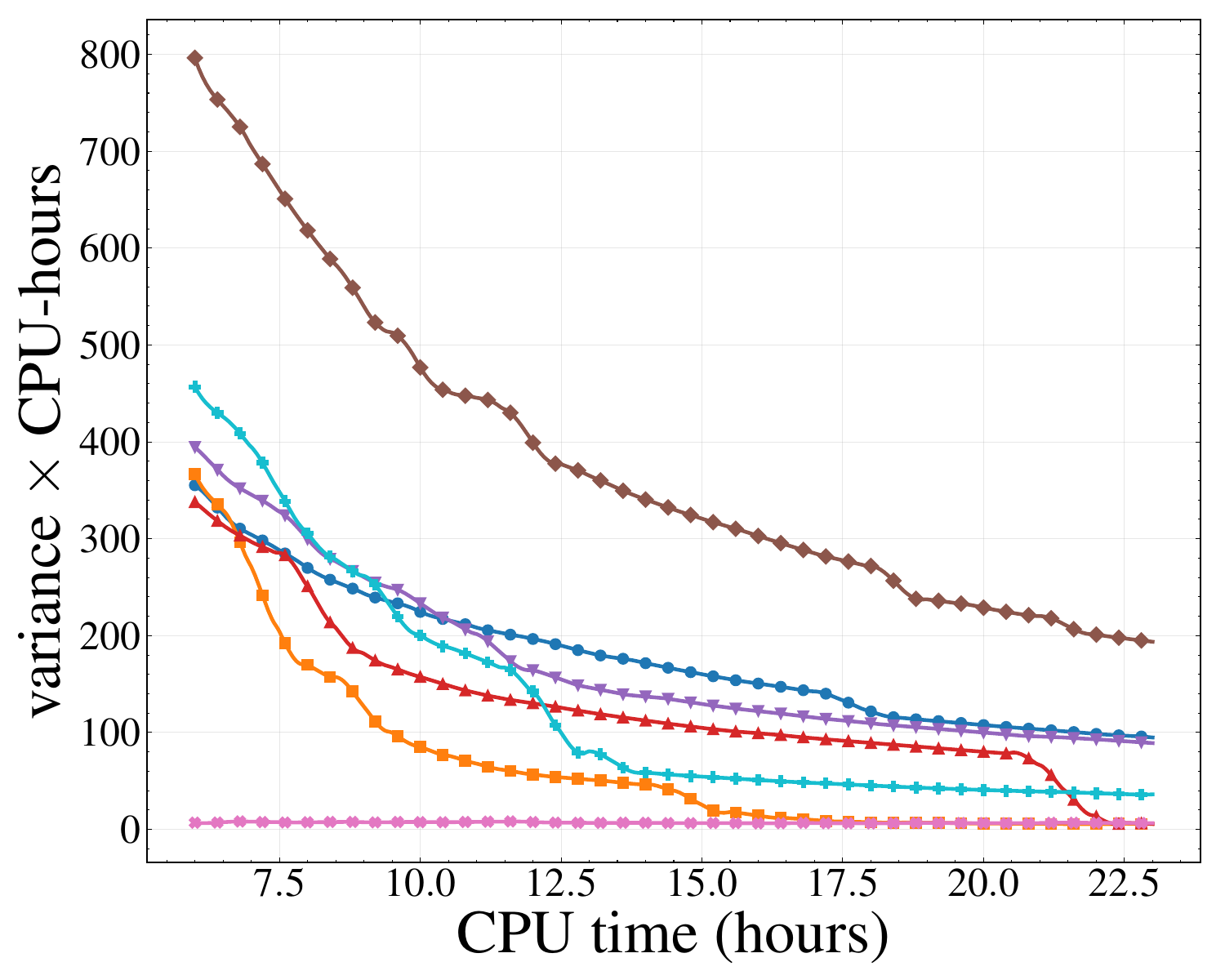}\hfill\includegraphics[width=0.235\textwidth]{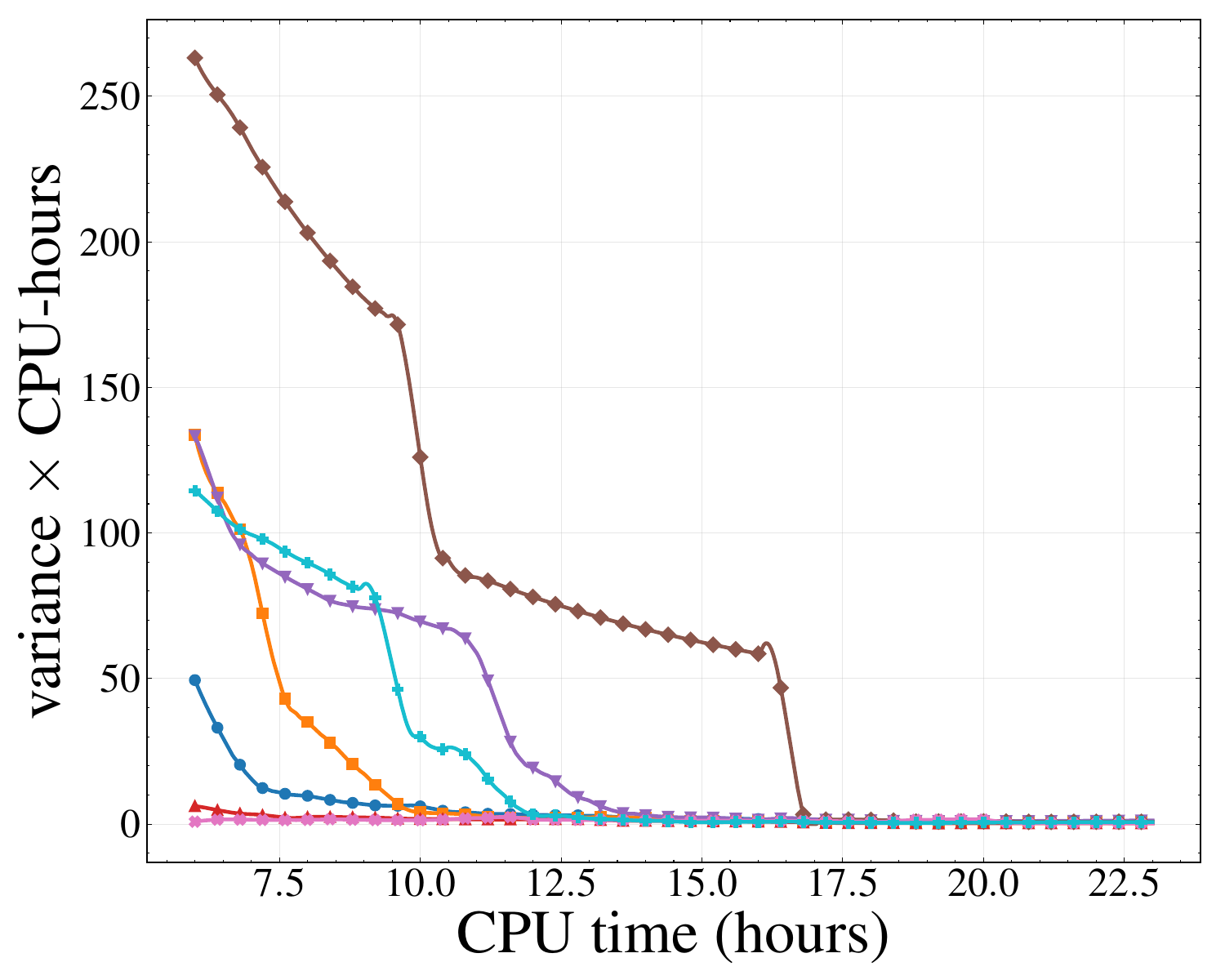}\\
    {\small (a,\,b) libhevc}\\[3pt]
    \includegraphics[width=0.235\textwidth]{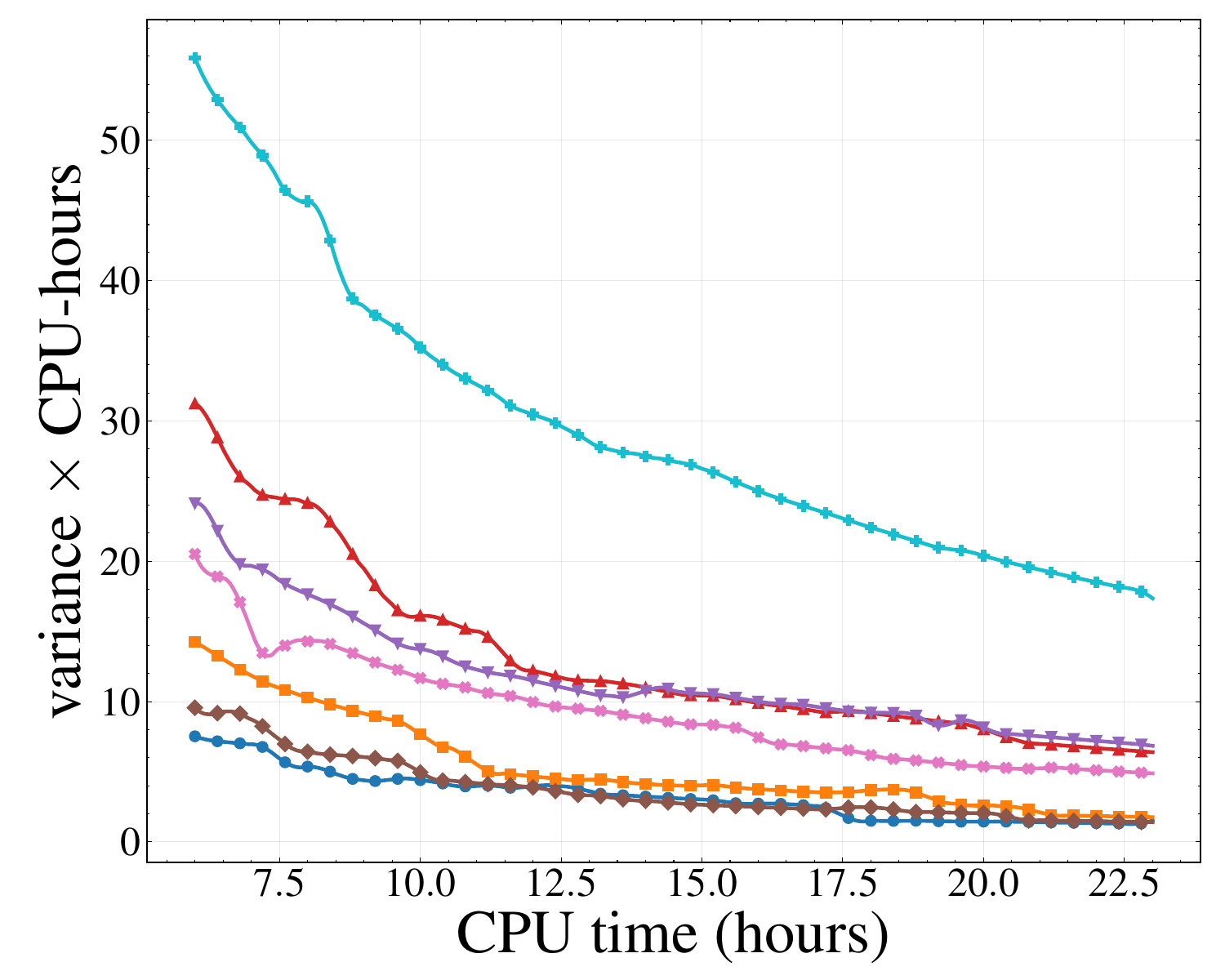}\hfill\includegraphics[width=0.235\textwidth]{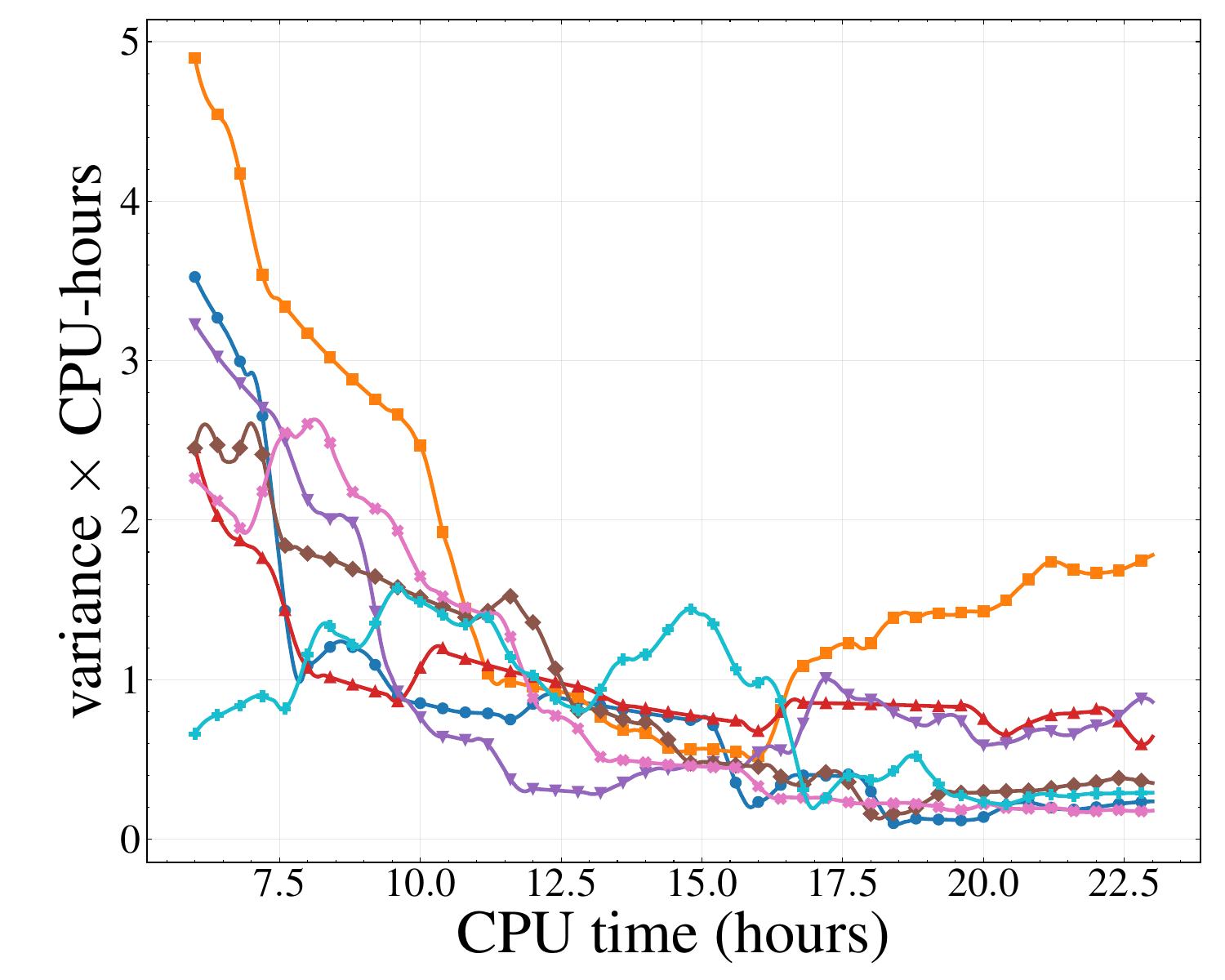}\\
    {\small (c,\,d) stb}
    \caption{Variance ($\times$ CPU-hours) across the seven fuzzers under Monte Carlo (non-splitting, left column) and splitting (right column), for the last two remaining benchmarks (part 3 of 3). Companion to Figure~\ref{fig:mc-simulation}.}
    \label{fig:var-appendix3}
\end{figure}

\begin{figure*}[tp]
    \centering
    \includegraphics[width=0.19\textwidth]{figs/comparison/aflplusplus/arrow_parquet-arrow-fuzz__variance_times_cpu_hour.pdf}\hfill\includegraphics[width=0.19\textwidth]{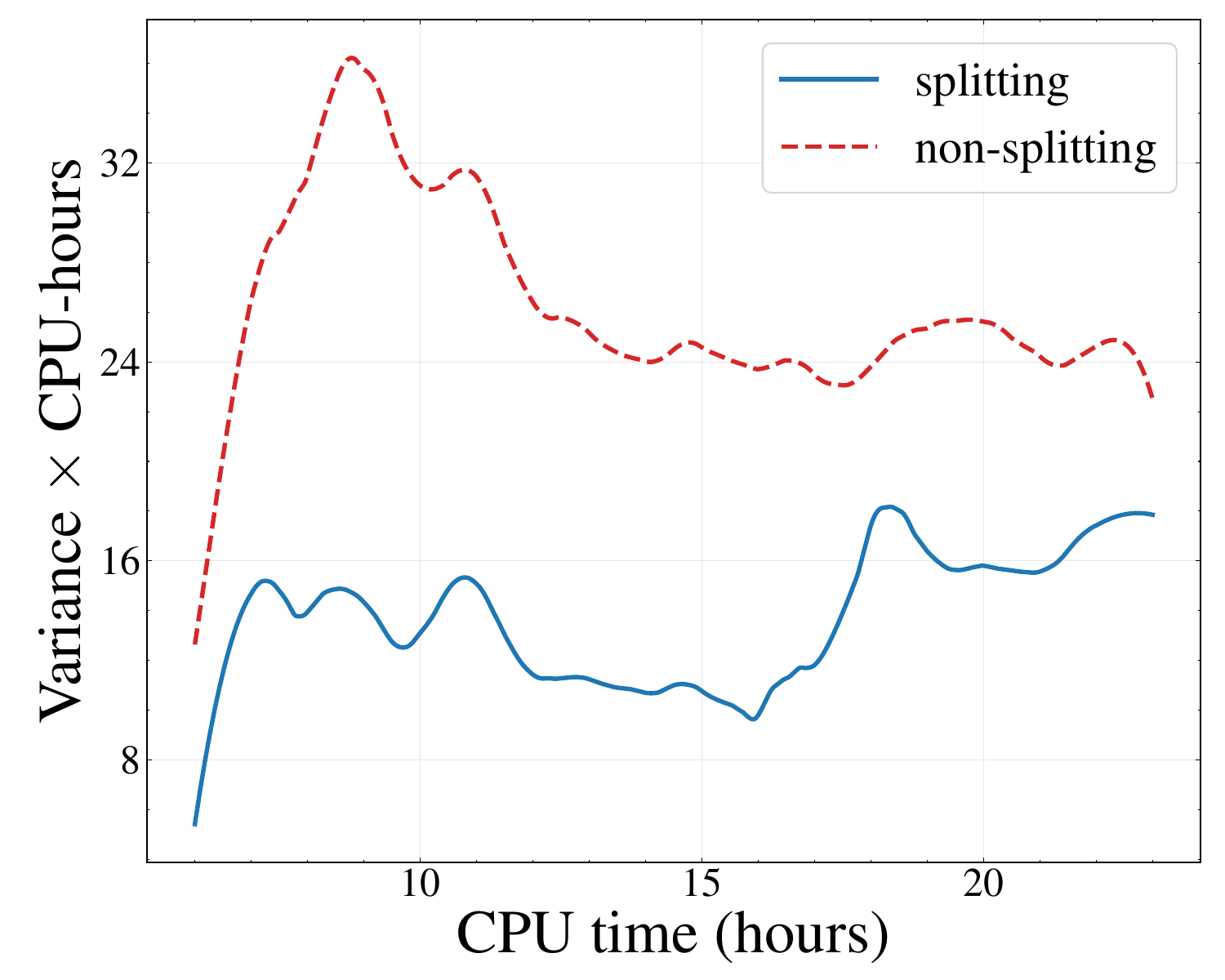}\hfill\includegraphics[width=0.19\textwidth]{figs/comparison/aflplusplus/grok_grk_decompress_fuzzer__variance_times_cpu_hour.pdf}\hfill\includegraphics[width=0.19\textwidth]{figs/comparison/aflplusplus/libhevc_hevc_dec_fuzzer__variance_times_cpu_hour.pdf}\hfill\includegraphics[width=0.19\textwidth]{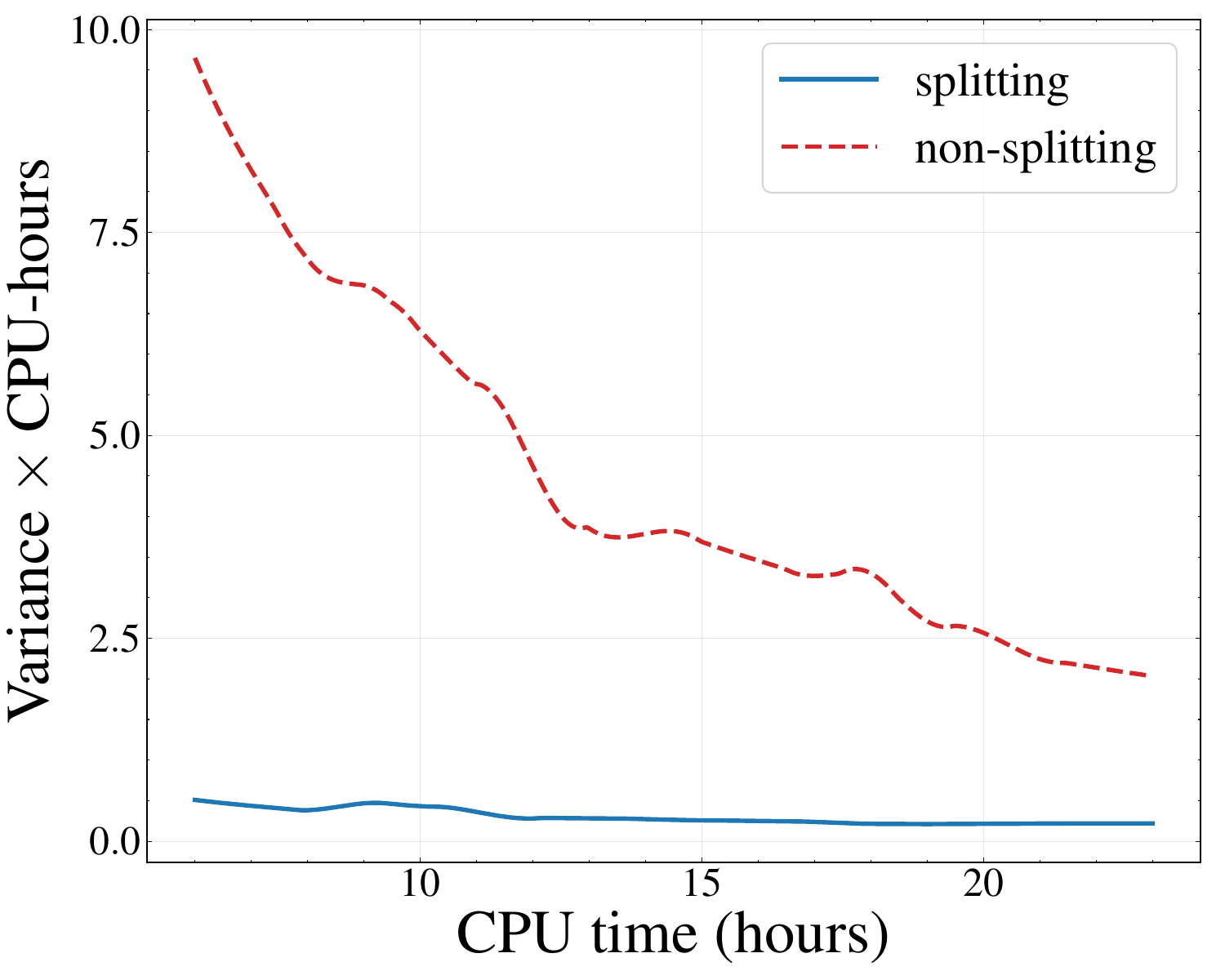}\\
    \makebox[0.19\textwidth]{\footnotesize (a) variance of arrow}\hfill\makebox[0.19\textwidth]{\footnotesize (b) variance of ffmpeg}\hfill\makebox[0.19\textwidth]{\footnotesize (c) variance of grok}\hfill\makebox[0.19\textwidth]{\footnotesize (d) variance of libhevc}\hfill\makebox[0.19\textwidth]{\footnotesize (e) variance of libhtp}\\[3pt]
    \includegraphics[width=0.19\textwidth]{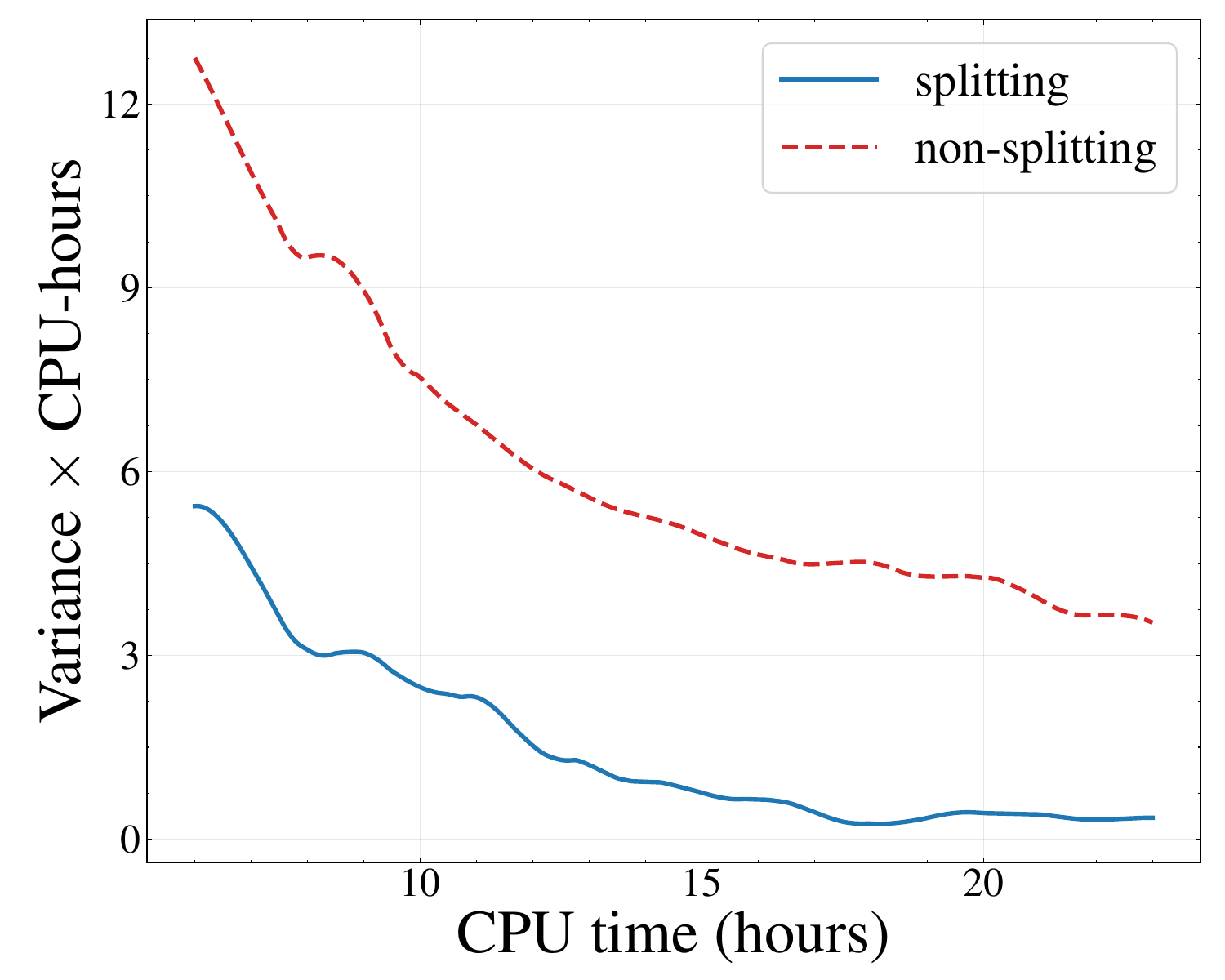}\hfill\includegraphics[width=0.19\textwidth]{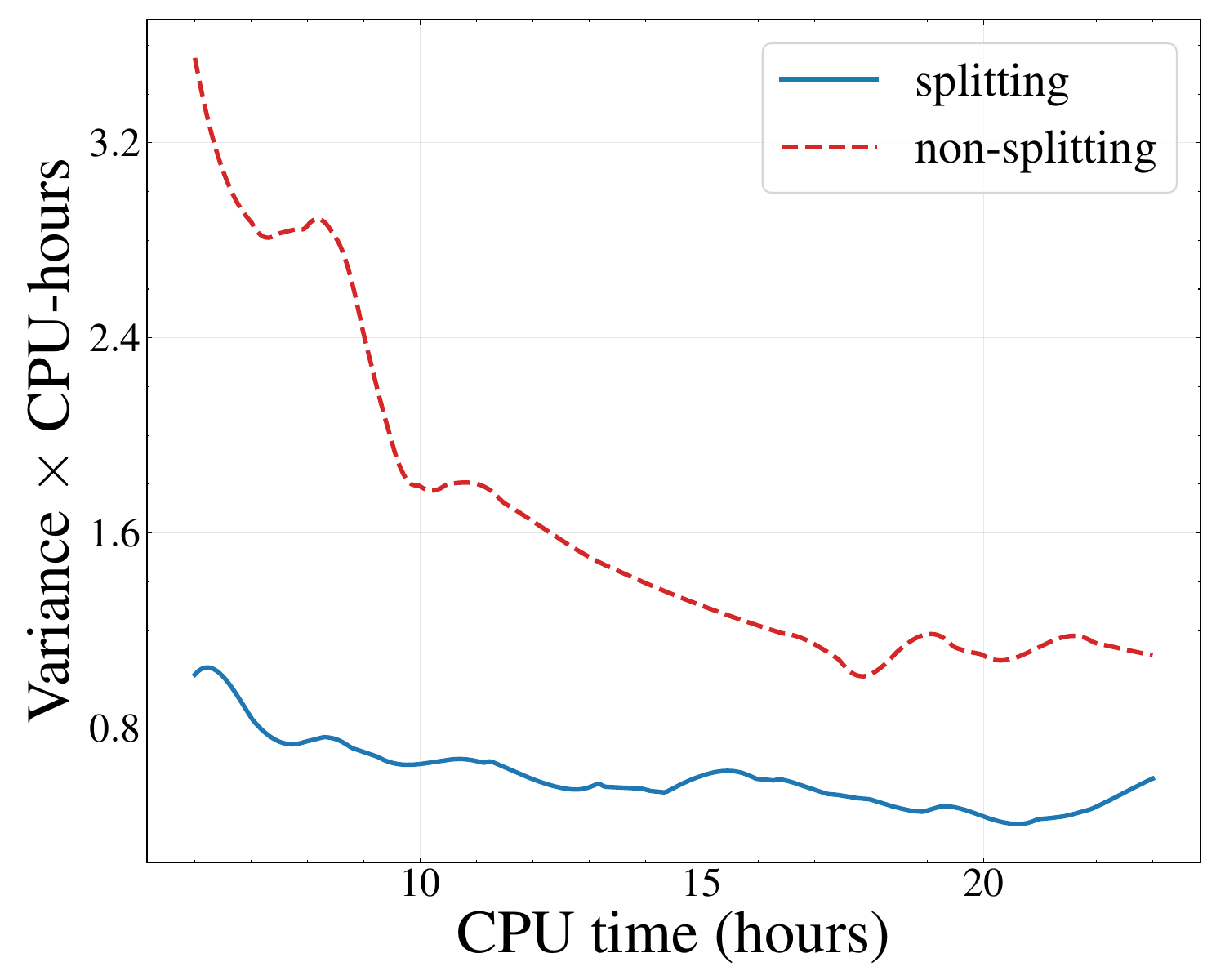}\hfill\includegraphics[width=0.19\textwidth]{figs/comparison/aflplusplus/php_php-fuzz-parser-2020-07-25__variance_times_cpu_hour.pdf}\hfill\includegraphics[width=0.19\textwidth]{figs/comparison/aflplusplus/poppler_pdf_fuzzer__variance_times_cpu_hour.pdf}\hfill\includegraphics[width=0.19\textwidth]{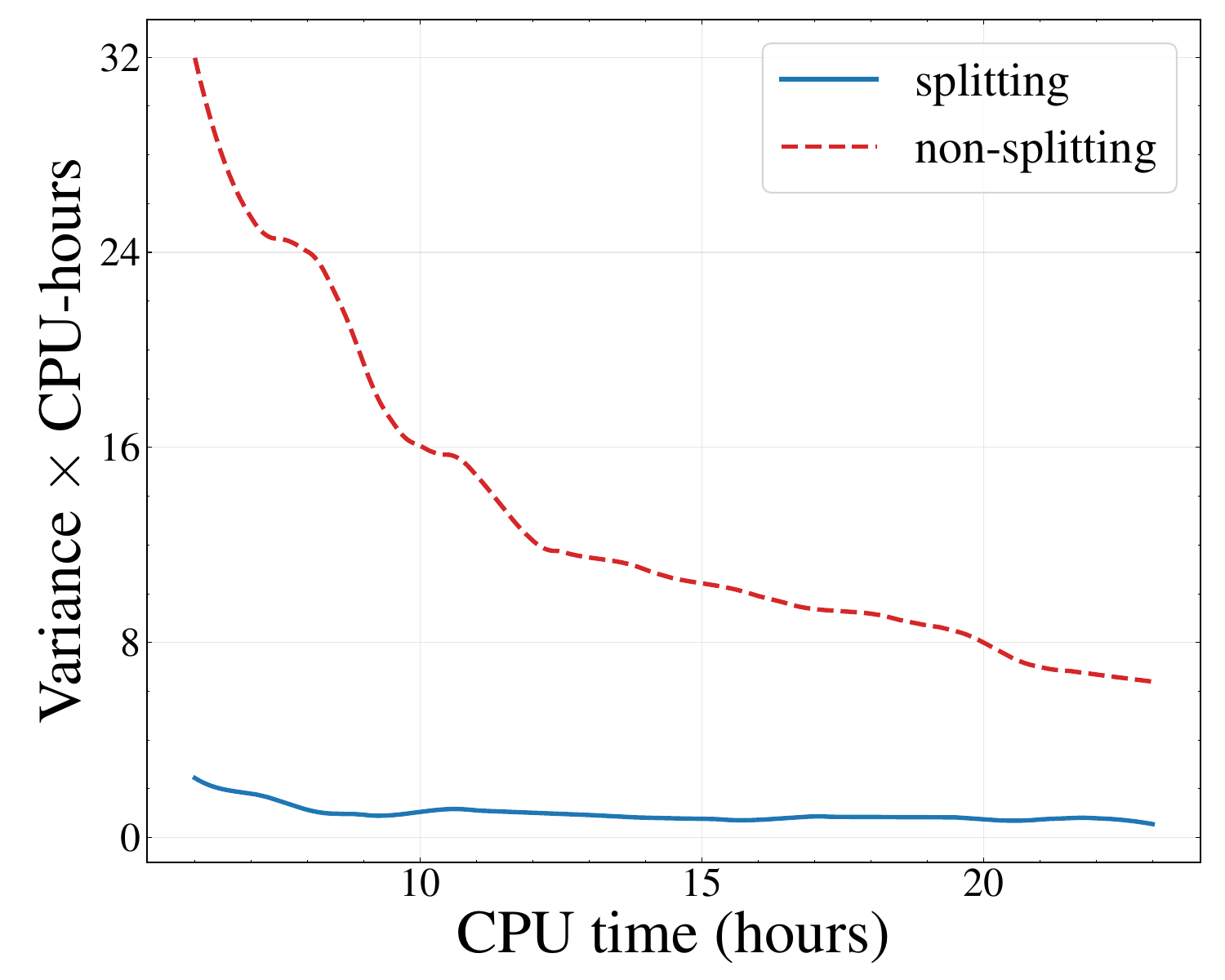}\\
    \makebox[0.19\textwidth]{\footnotesize (f) variance of matio}\hfill\makebox[0.19\textwidth]{\footnotesize (g) variance of openh264}\hfill\makebox[0.19\textwidth]{\footnotesize (h) variance of php}\hfill\makebox[0.19\textwidth]{\footnotesize (i) variance of poppler}\hfill\makebox[0.19\textwidth]{\footnotesize (j) variance of stb}
    \caption{Comparisons of variance across all 10 benchmarks for fuzzer aflplusplus (full version of Figure~\ref{fig:comparison_aflplusplus}).}
    \label{fig:comparison_aflplusplus_full}
\end{figure*}

\begin{figure*}[tp]
    \centering
    \includegraphics[width=0.19\textwidth]{figs/comparison/aflplusplus/arrow_parquet-arrow-fuzz__bug_detection_ratio.pdf}\hfill\includegraphics[width=0.19\textwidth]{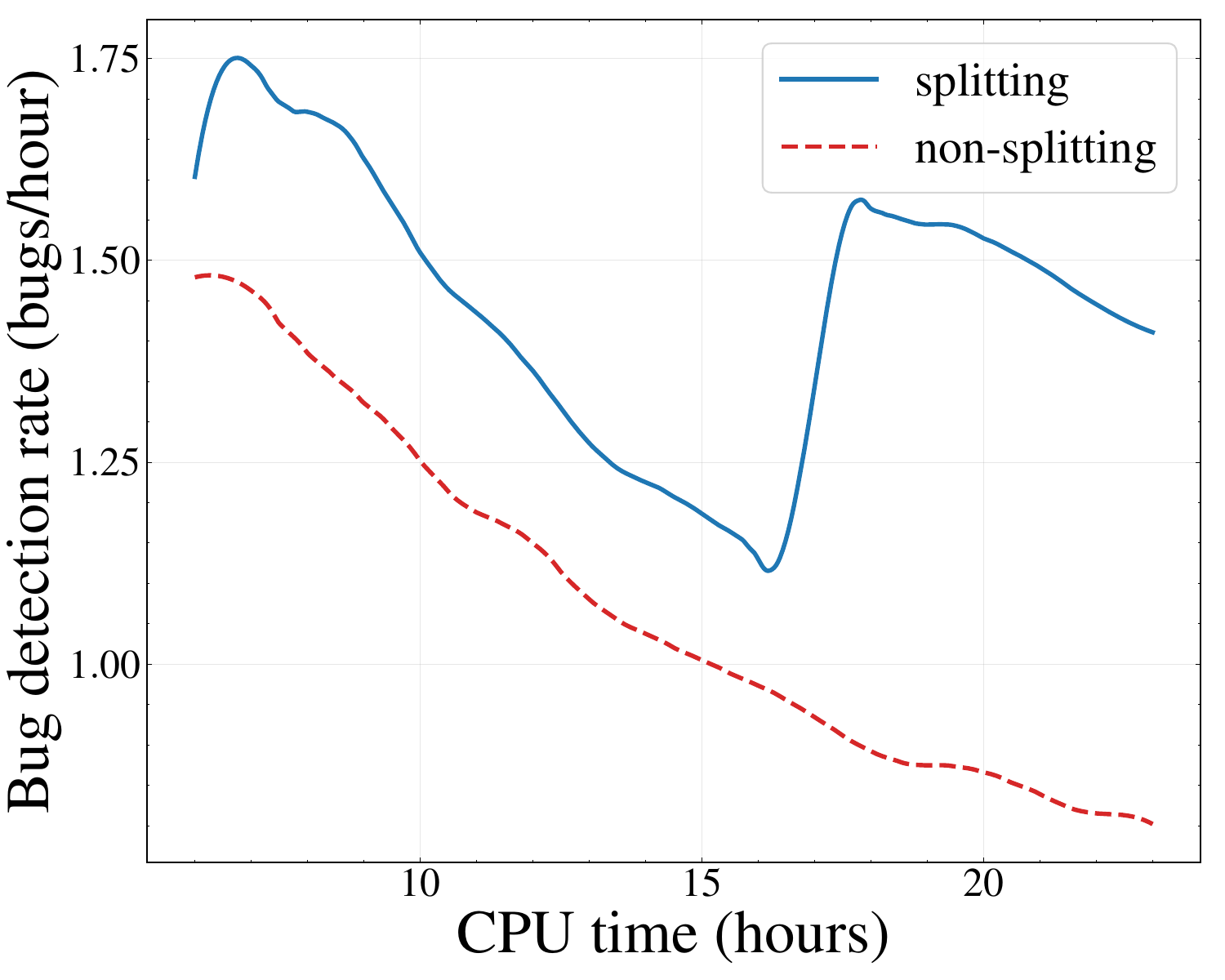}\hfill\includegraphics[width=0.19\textwidth]{figs/comparison/aflplusplus/grok_grk_decompress_fuzzer__bug_detection_ratio.pdf}\hfill\includegraphics[width=0.19\textwidth]{figs/comparison/aflplusplus/libhevc_hevc_dec_fuzzer__bug_detection_ratio.pdf}\hfill\includegraphics[width=0.19\textwidth]{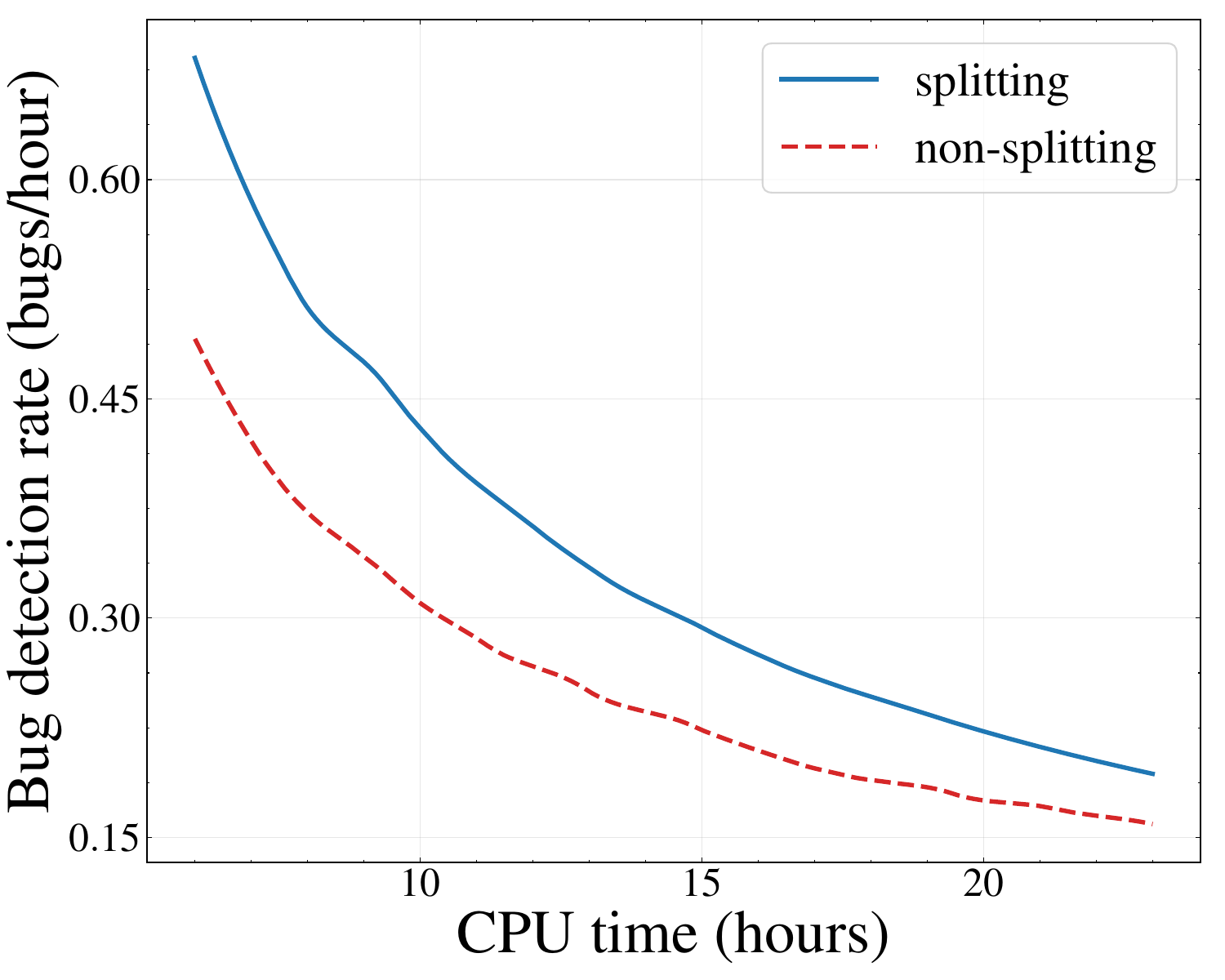}\\
    \makebox[0.19\textwidth]{\footnotesize (a) BDR of arrow}\hfill\makebox[0.19\textwidth]{\footnotesize (b) BDR of ffmpeg}\hfill\makebox[0.19\textwidth]{\footnotesize (c) BDR of grok}\hfill\makebox[0.19\textwidth]{\footnotesize (d) BDR of libhevc}\hfill\makebox[0.19\textwidth]{\footnotesize (e) BDR of libhtp}\\[3pt]
    \includegraphics[width=0.19\textwidth]{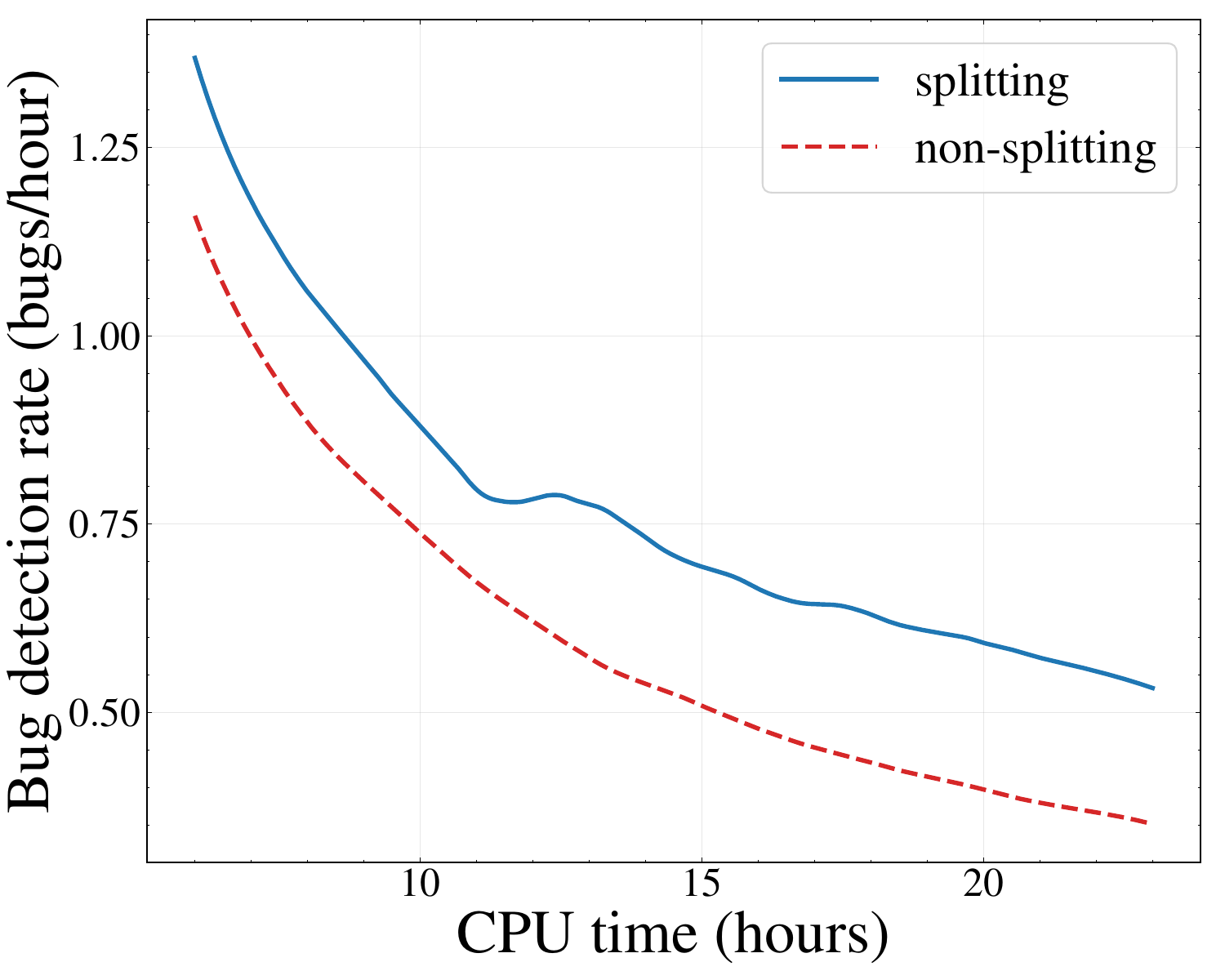}\hfill\includegraphics[width=0.19\textwidth]{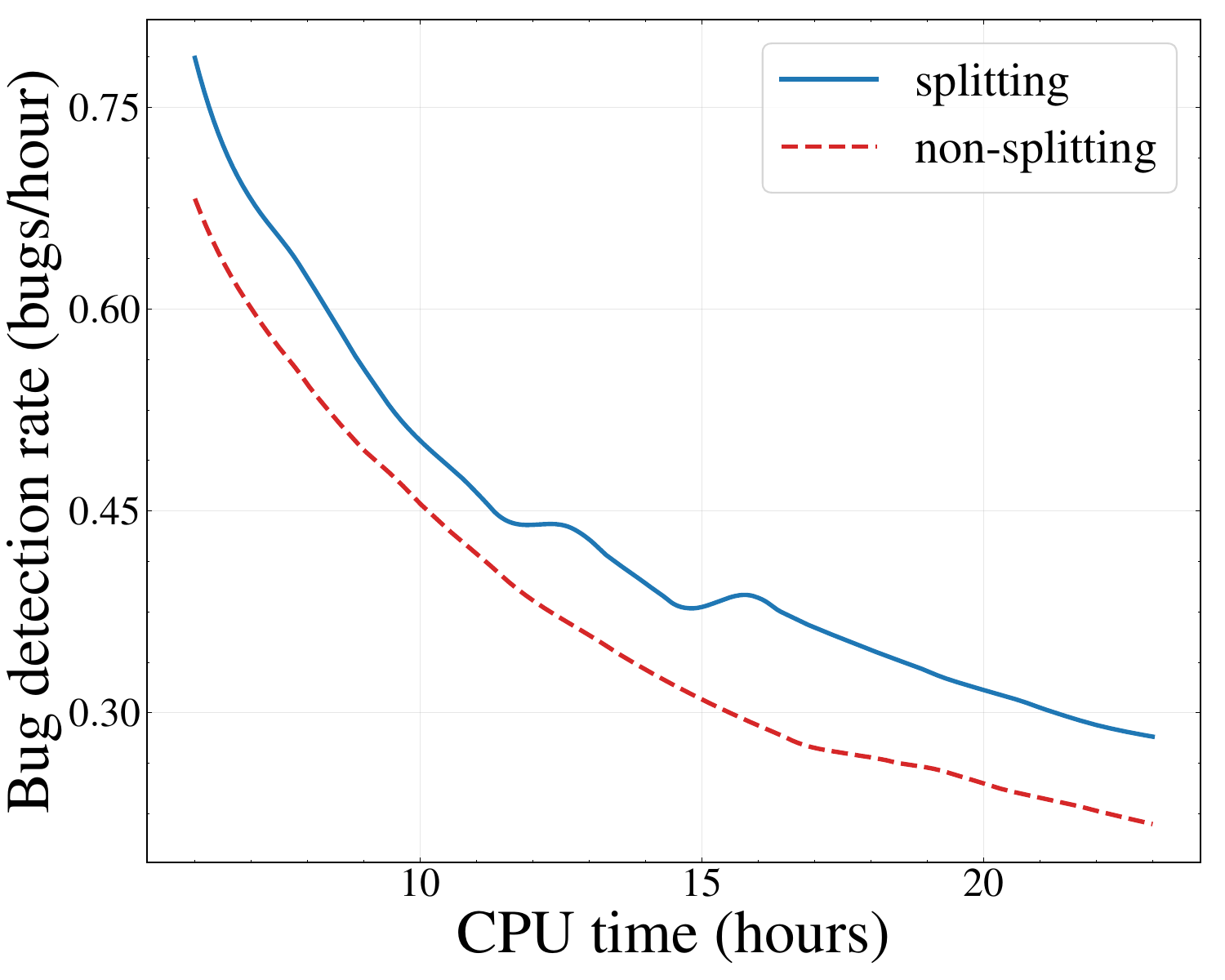}\hfill\includegraphics[width=0.19\textwidth]{figs/comparison/aflplusplus/php_php-fuzz-parser-2020-07-25__bug_detection_ratio.pdf}\hfill\includegraphics[width=0.19\textwidth]{figs/comparison/aflplusplus/poppler_pdf_fuzzer__bug_detection_ratio.pdf}\hfill\includegraphics[width=0.19\textwidth]{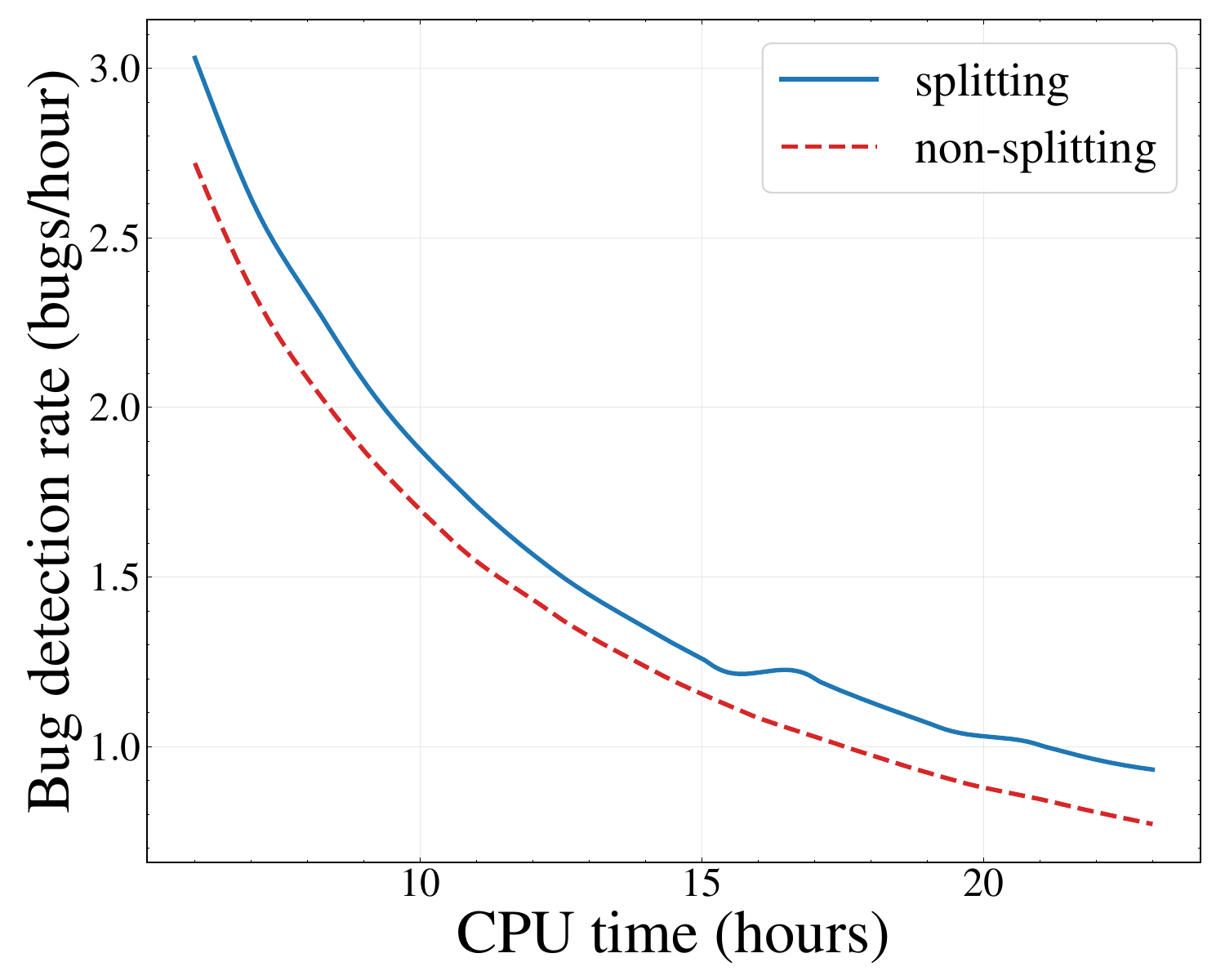}\\
    \makebox[0.19\textwidth]{\footnotesize (f) BDR of matio}\hfill\makebox[0.19\textwidth]{\footnotesize (g) BDR of openh264}\hfill\makebox[0.19\textwidth]{\footnotesize (h) BDR of php}\hfill\makebox[0.19\textwidth]{\footnotesize (i) BDR of poppler}\hfill\makebox[0.19\textwidth]{\footnotesize (j) BDR of stb}
    \caption{Comparisons of bug detection rate across all 10 benchmarks for fuzzer aflplusplus (full version of Figure~\ref{fig:bdr_aflplusplus}).}
    \label{fig:bdr_aflplusplus_full}
\end{figure*}

\begin{figure*}[tp]
    \centering
    \includegraphics[width=0.19\textwidth]{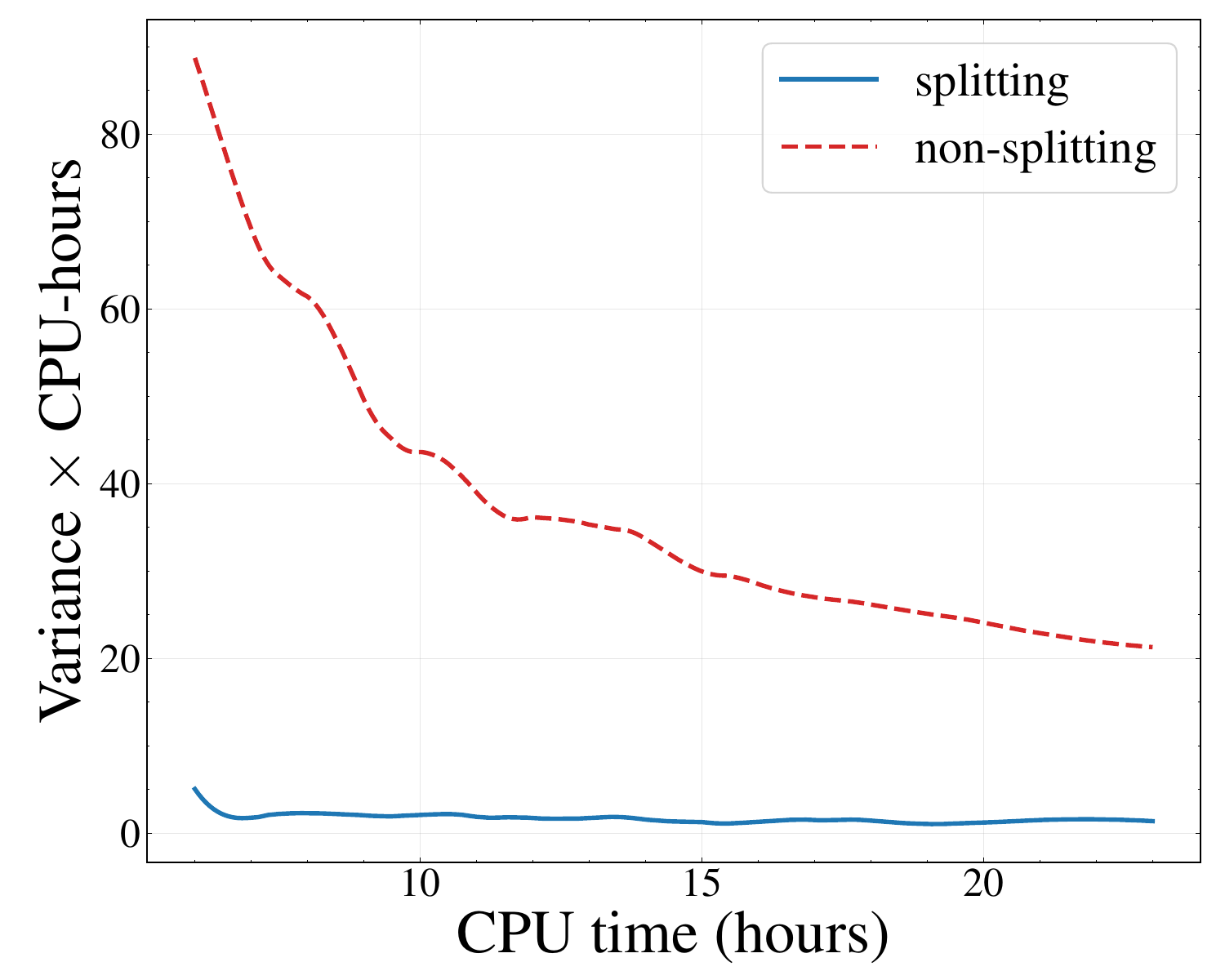}\hfill\includegraphics[width=0.19\textwidth]{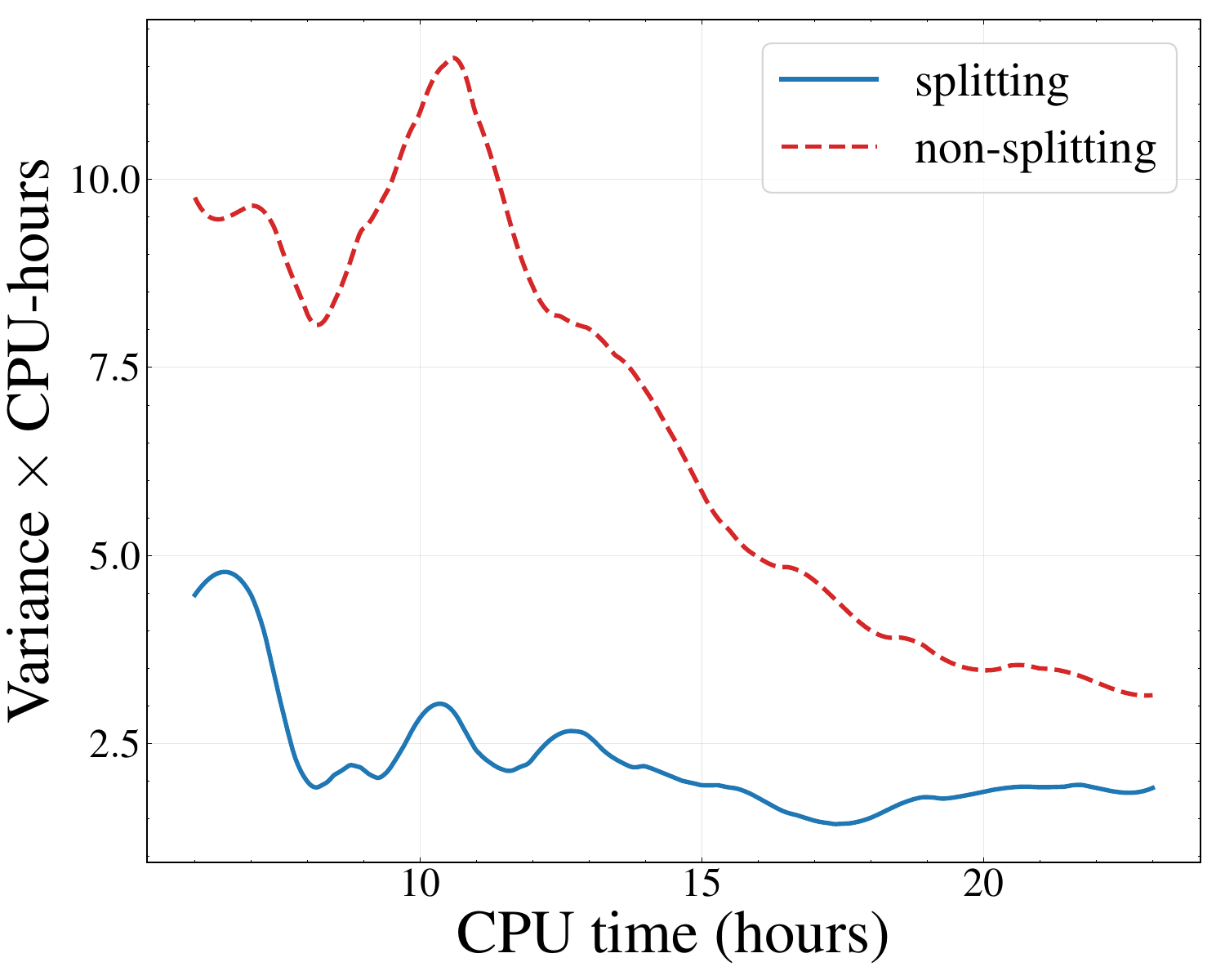}\hfill\includegraphics[width=0.19\textwidth]{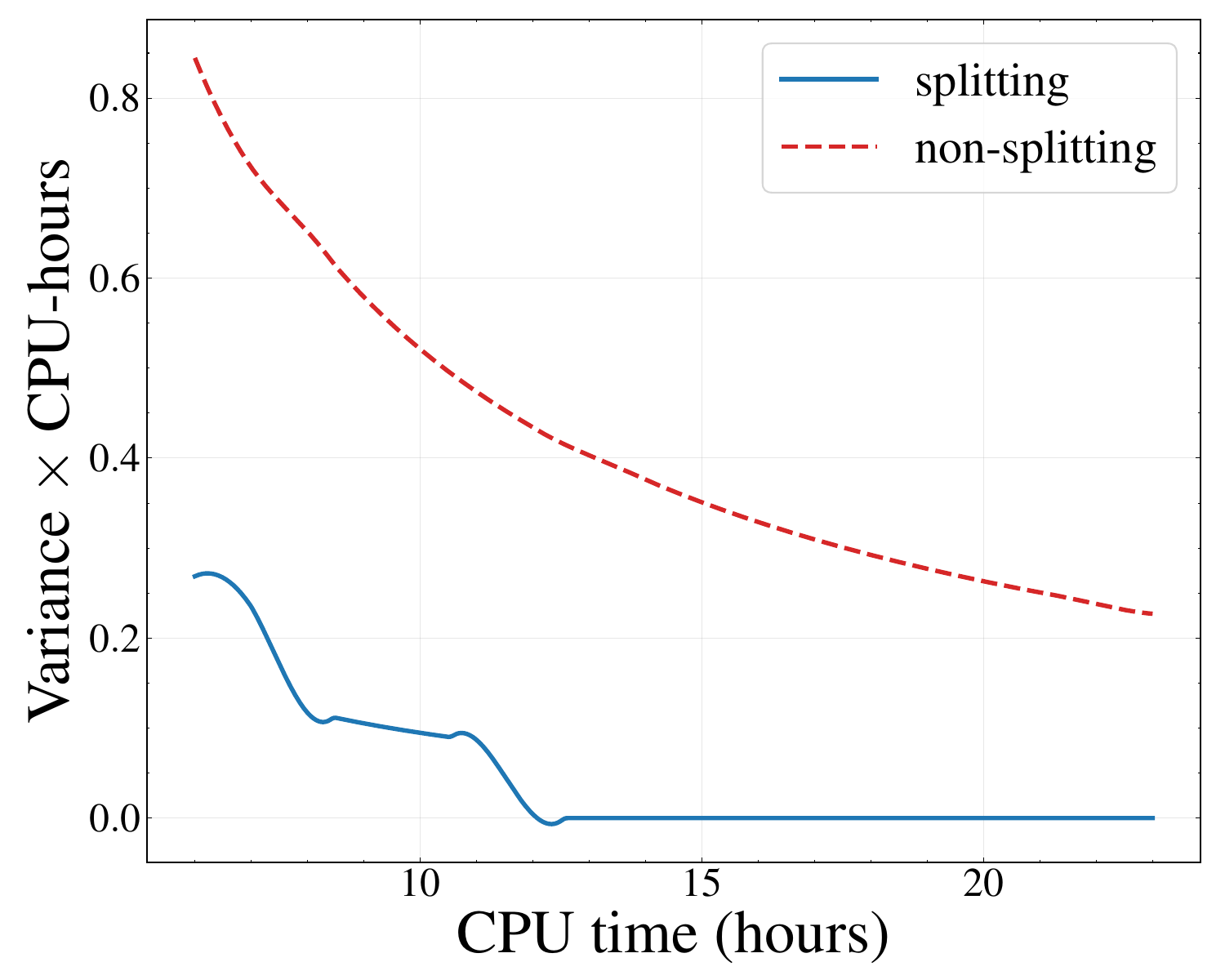}\hfill\includegraphics[width=0.19\textwidth]{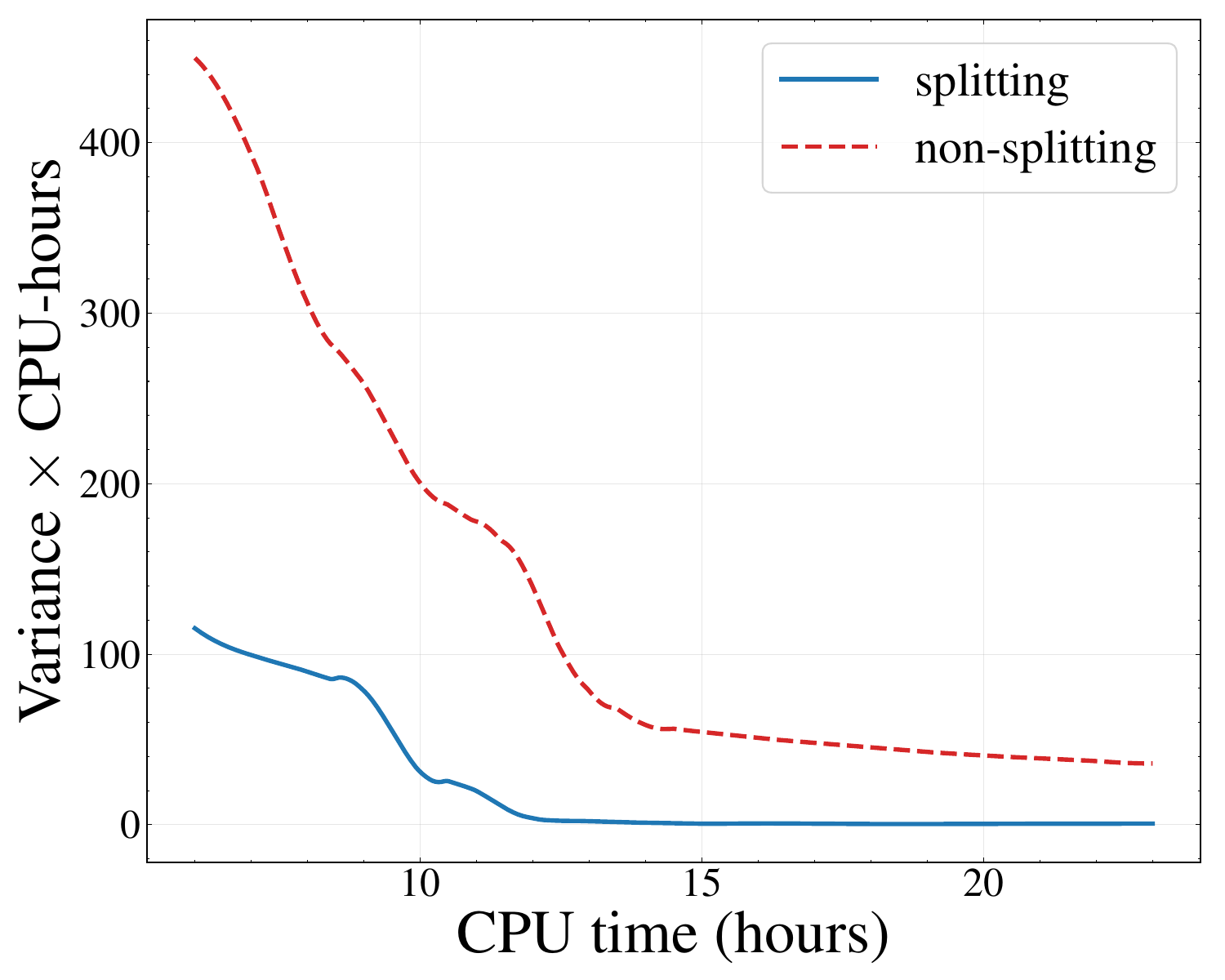}\hfill\includegraphics[width=0.19\textwidth]{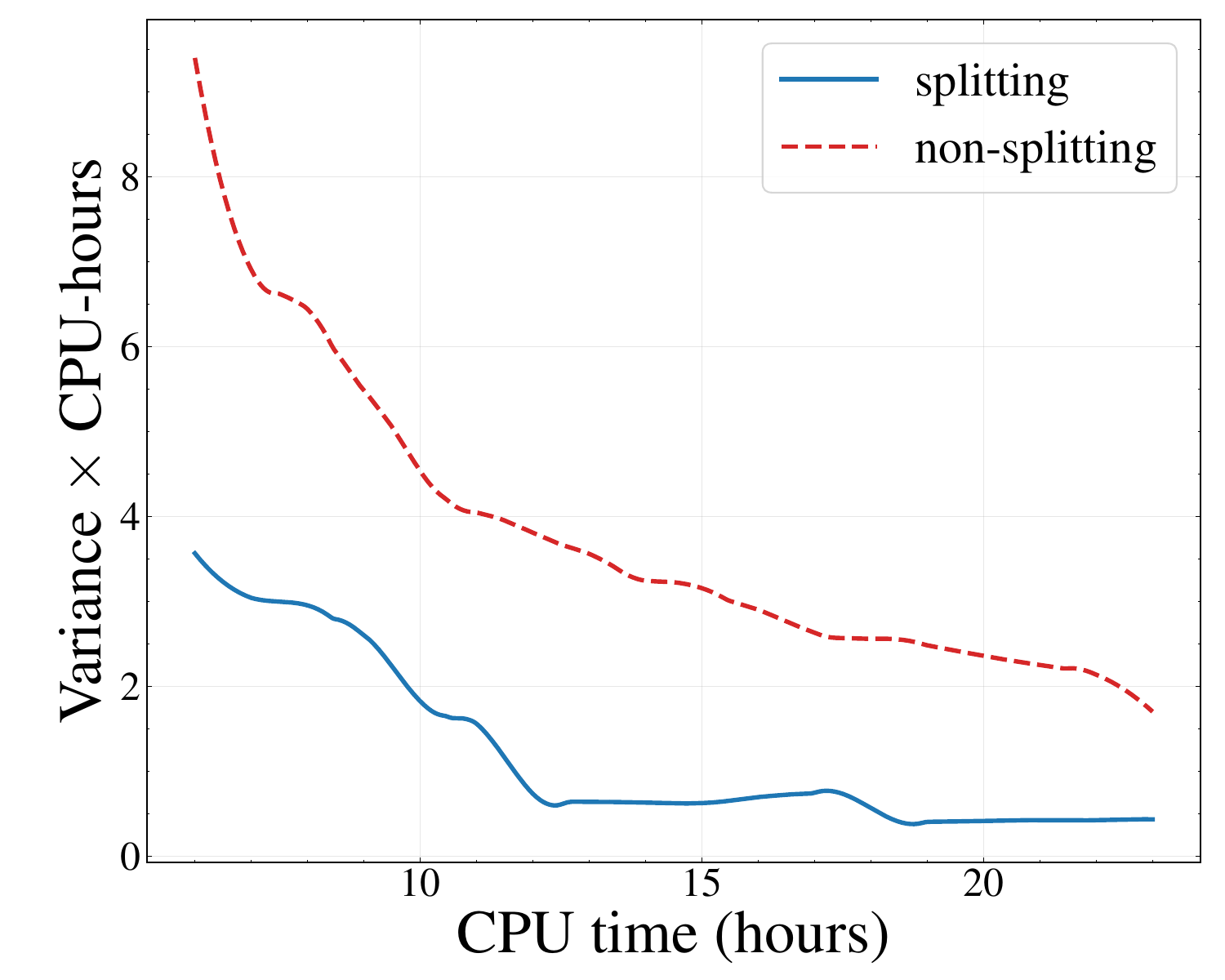}\\
    \makebox[0.19\textwidth]{\footnotesize (a) variance of arrow}\hfill\makebox[0.19\textwidth]{\footnotesize (b) variance of ffmpeg}\hfill\makebox[0.19\textwidth]{\footnotesize (c) variance of grok}\hfill\makebox[0.19\textwidth]{\footnotesize (d) variance of libhevc}\hfill\makebox[0.19\textwidth]{\footnotesize (e) variance of libhtp}\\[3pt]
    \includegraphics[width=0.19\textwidth]{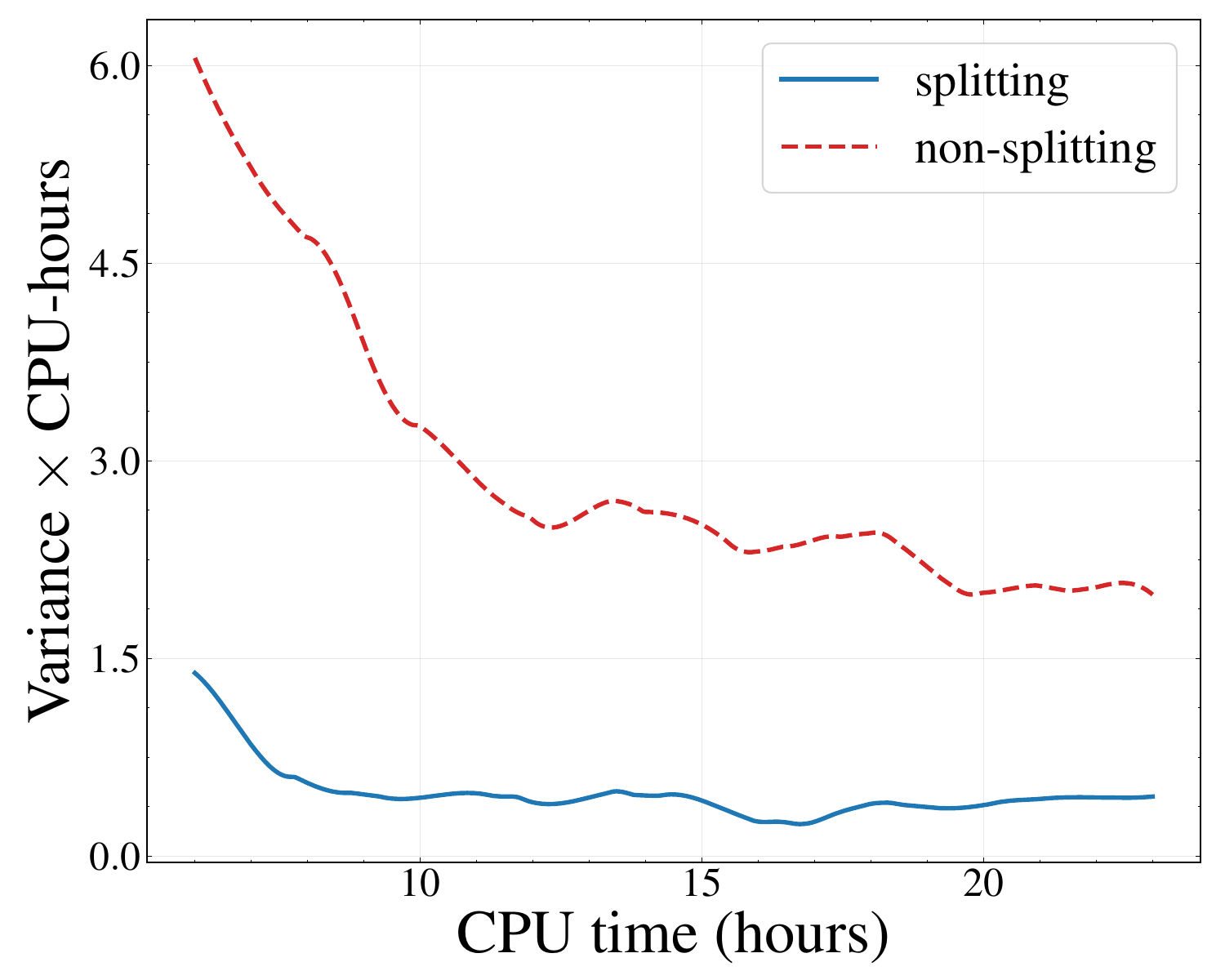}\hfill\includegraphics[width=0.19\textwidth]{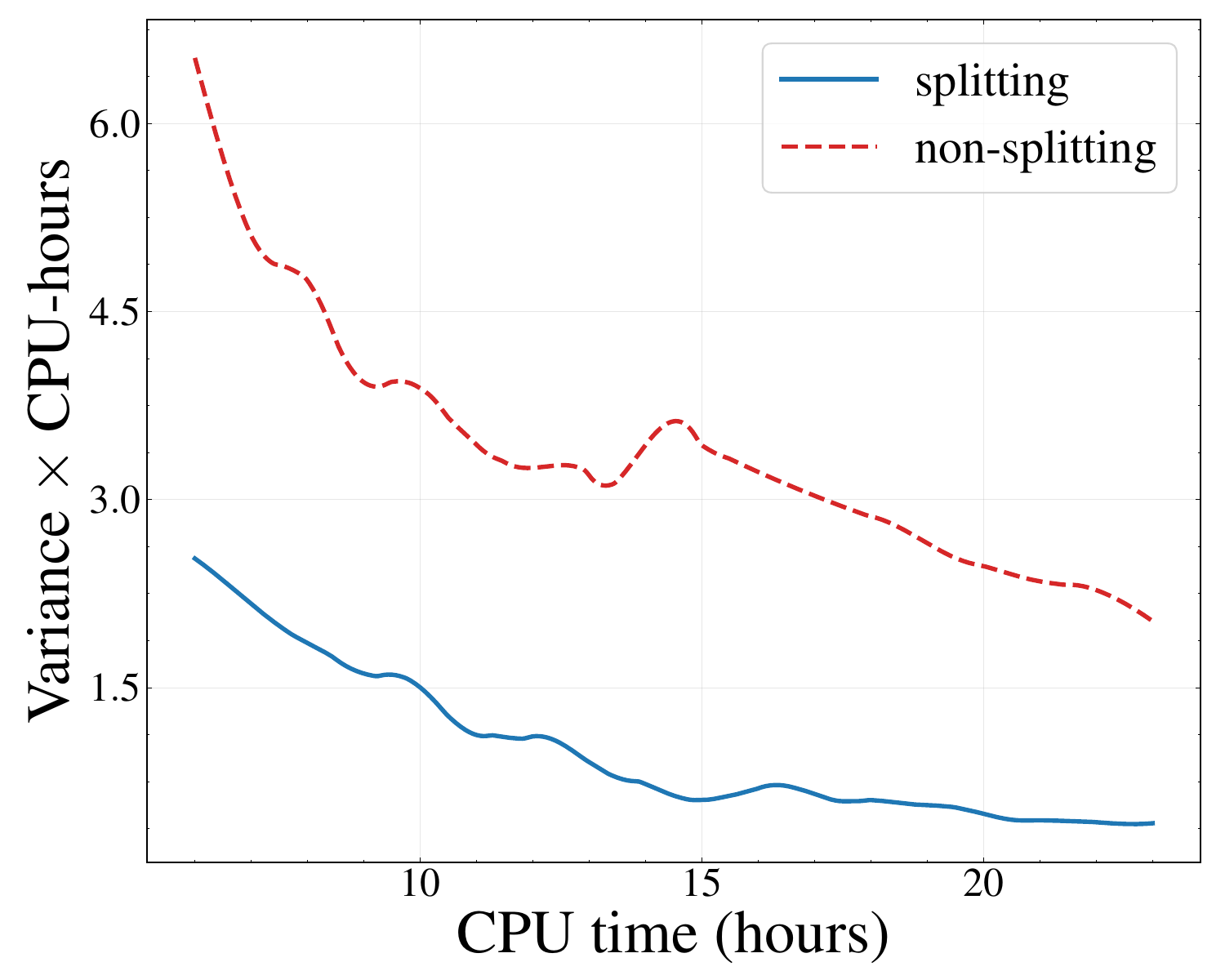}\hfill\includegraphics[width=0.19\textwidth]{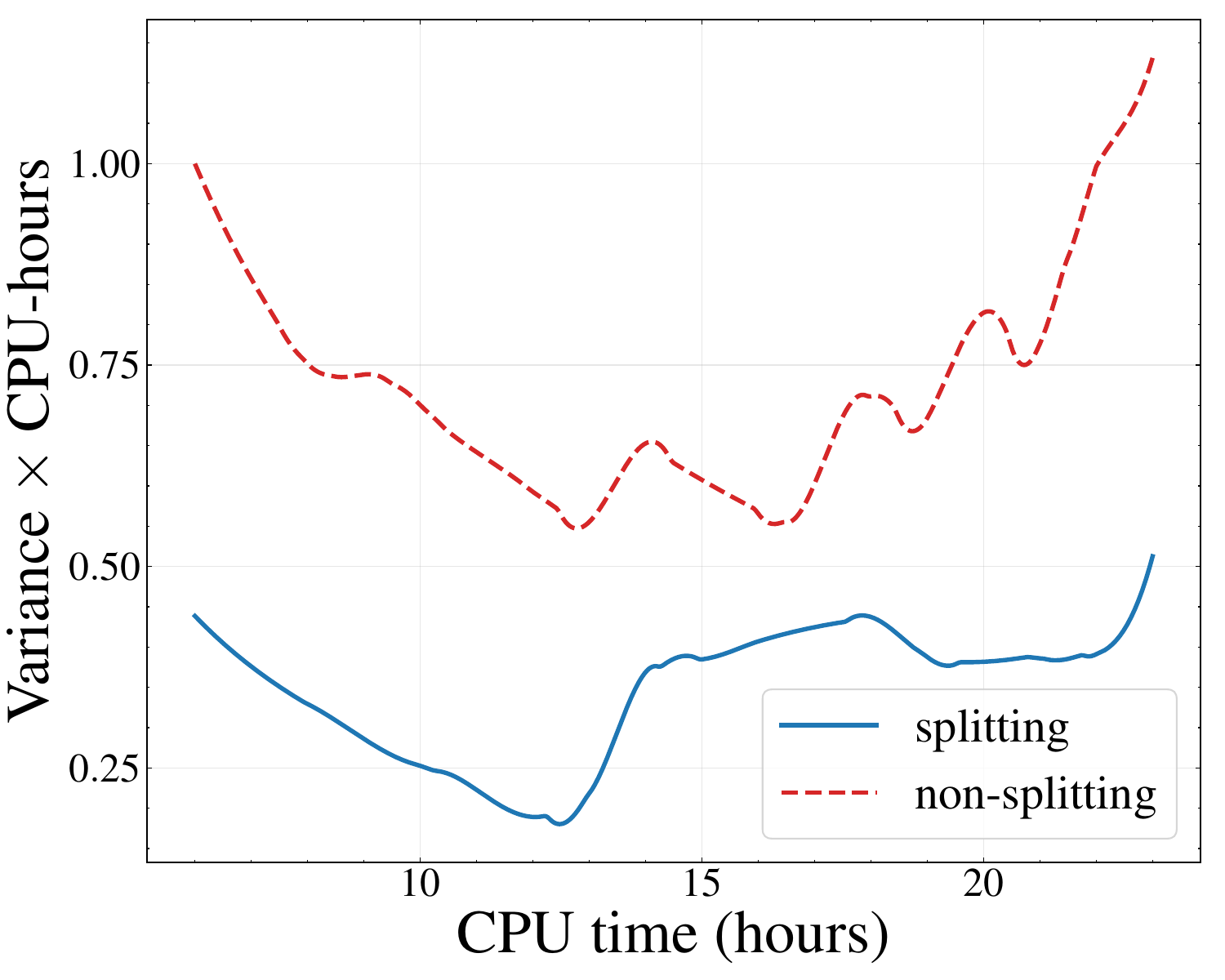}\hfill\includegraphics[width=0.19\textwidth]{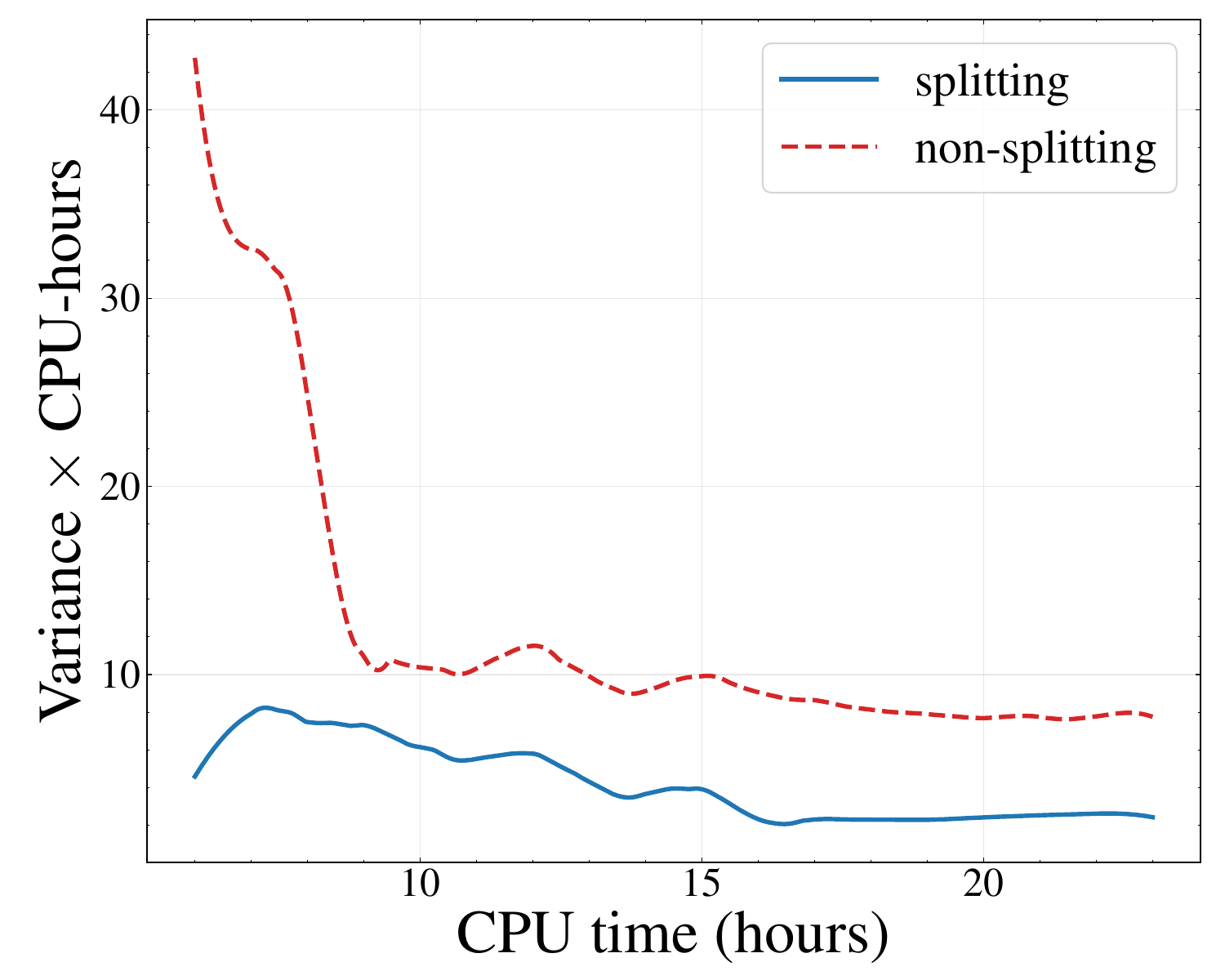}\hfill\includegraphics[width=0.19\textwidth]{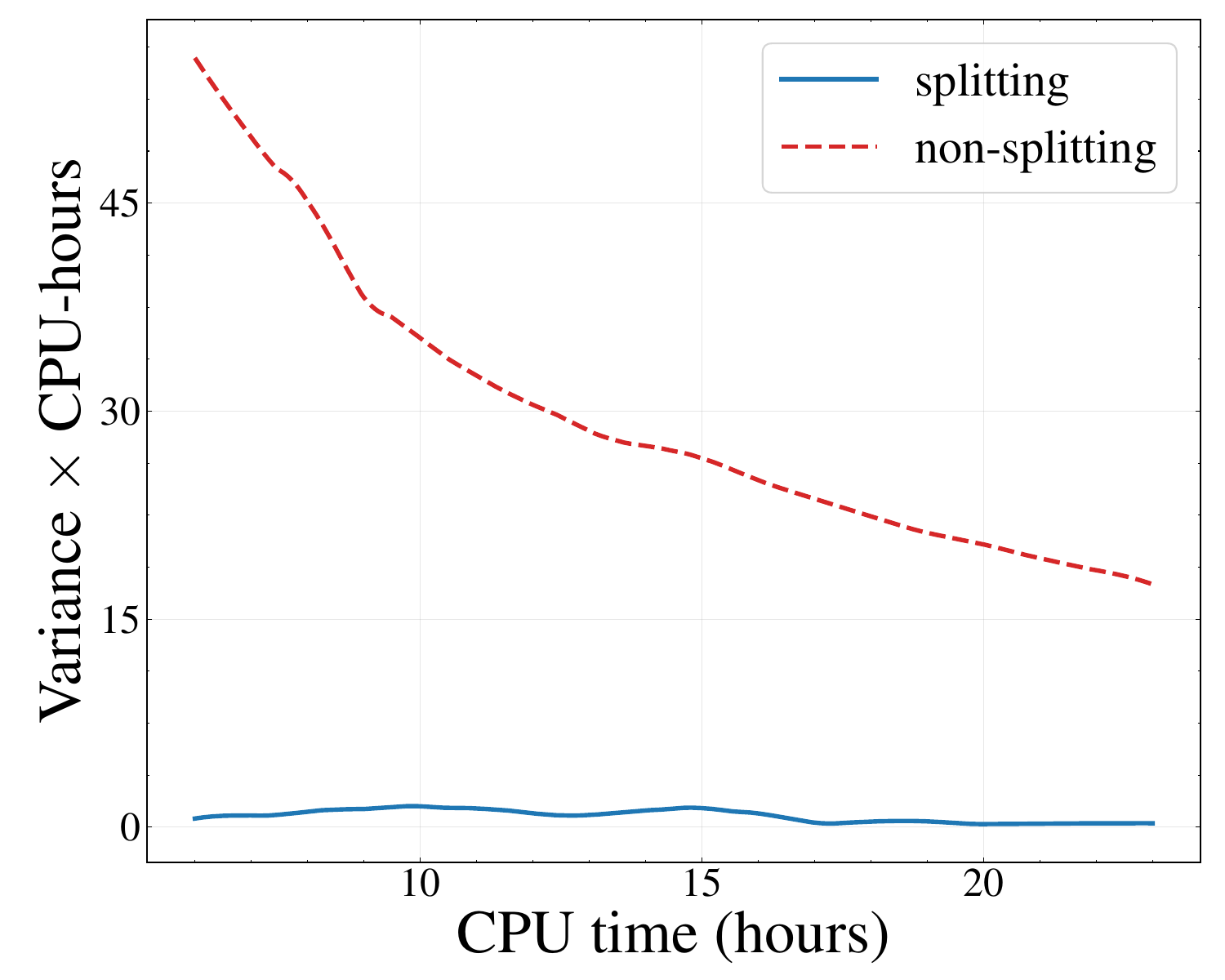}\\
    \makebox[0.19\textwidth]{\footnotesize (f) variance of matio}\hfill\makebox[0.19\textwidth]{\footnotesize (g) variance of openh264}\hfill\makebox[0.19\textwidth]{\footnotesize (h) variance of php}\hfill\makebox[0.19\textwidth]{\footnotesize (i) variance of poppler}\hfill\makebox[0.19\textwidth]{\footnotesize (j) variance of stb}
    \caption{Comparisons of variance across 10 benchmarks for fuzzer MOpt.}
    \label{fig:comparison_mopt}
\end{figure*}

\begin{figure*}[tp]
    \centering
    \includegraphics[width=0.19\textwidth]{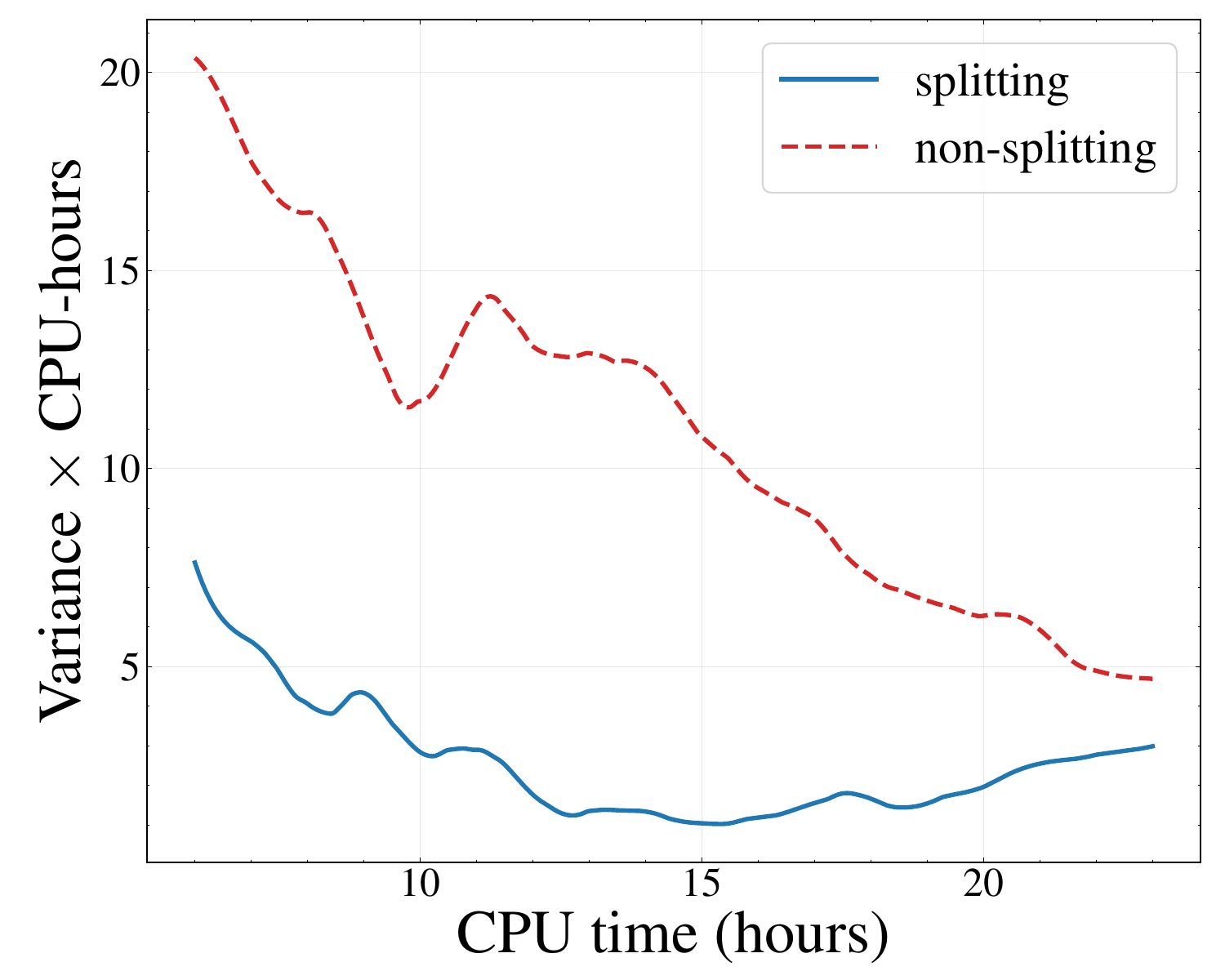}\hfill\includegraphics[width=0.19\textwidth]{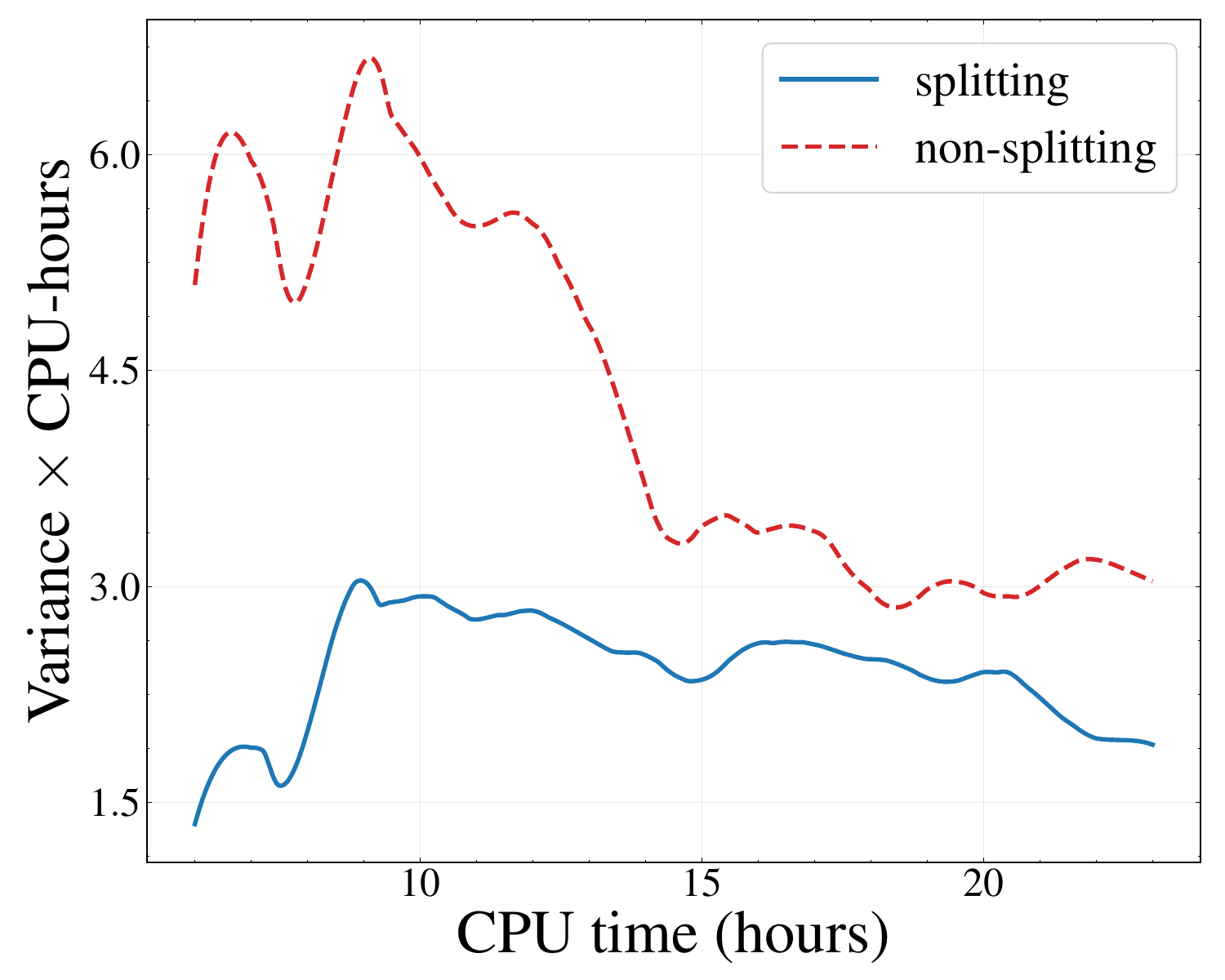}\hfill\includegraphics[width=0.19\textwidth]{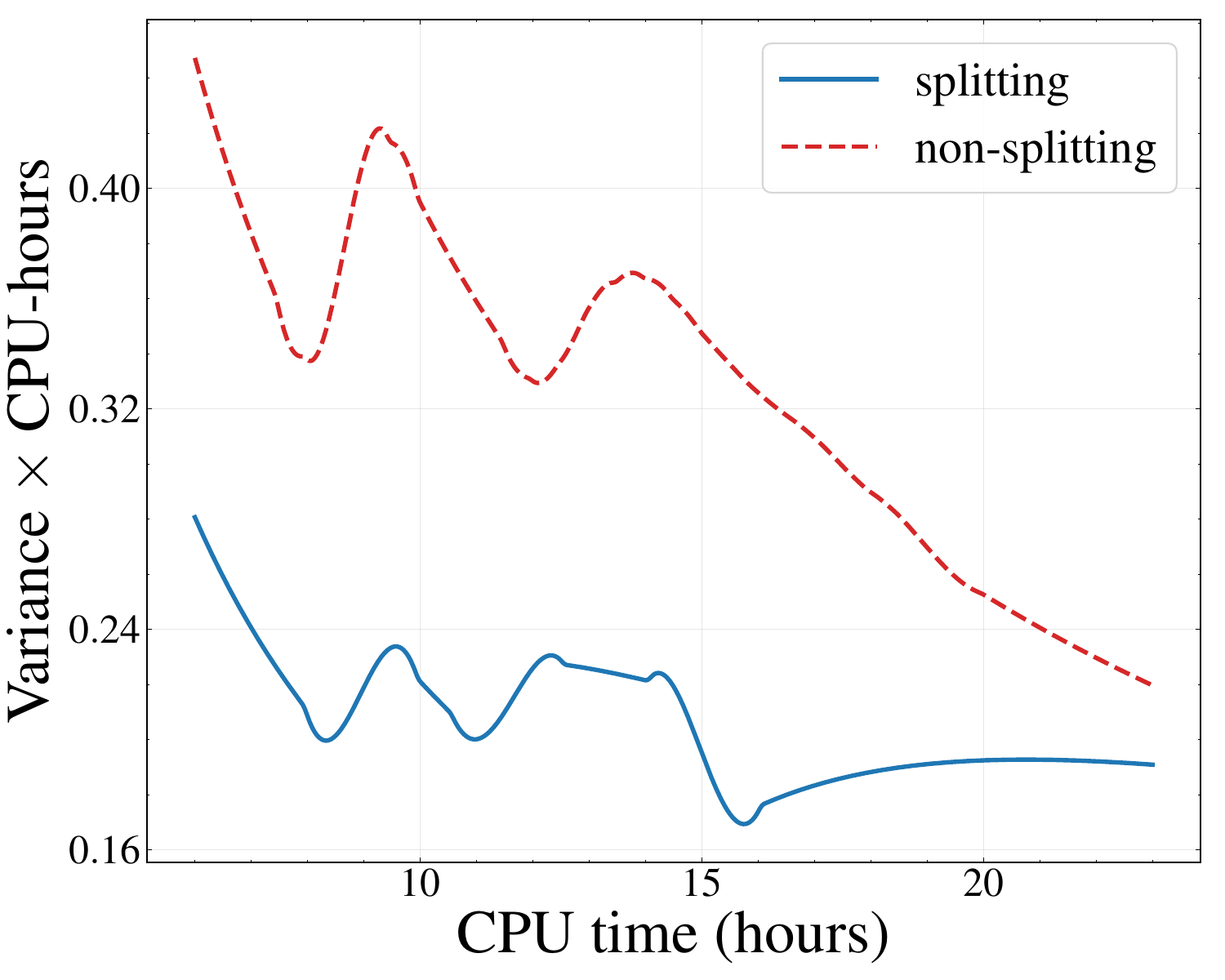}\hfill\includegraphics[width=0.19\textwidth]{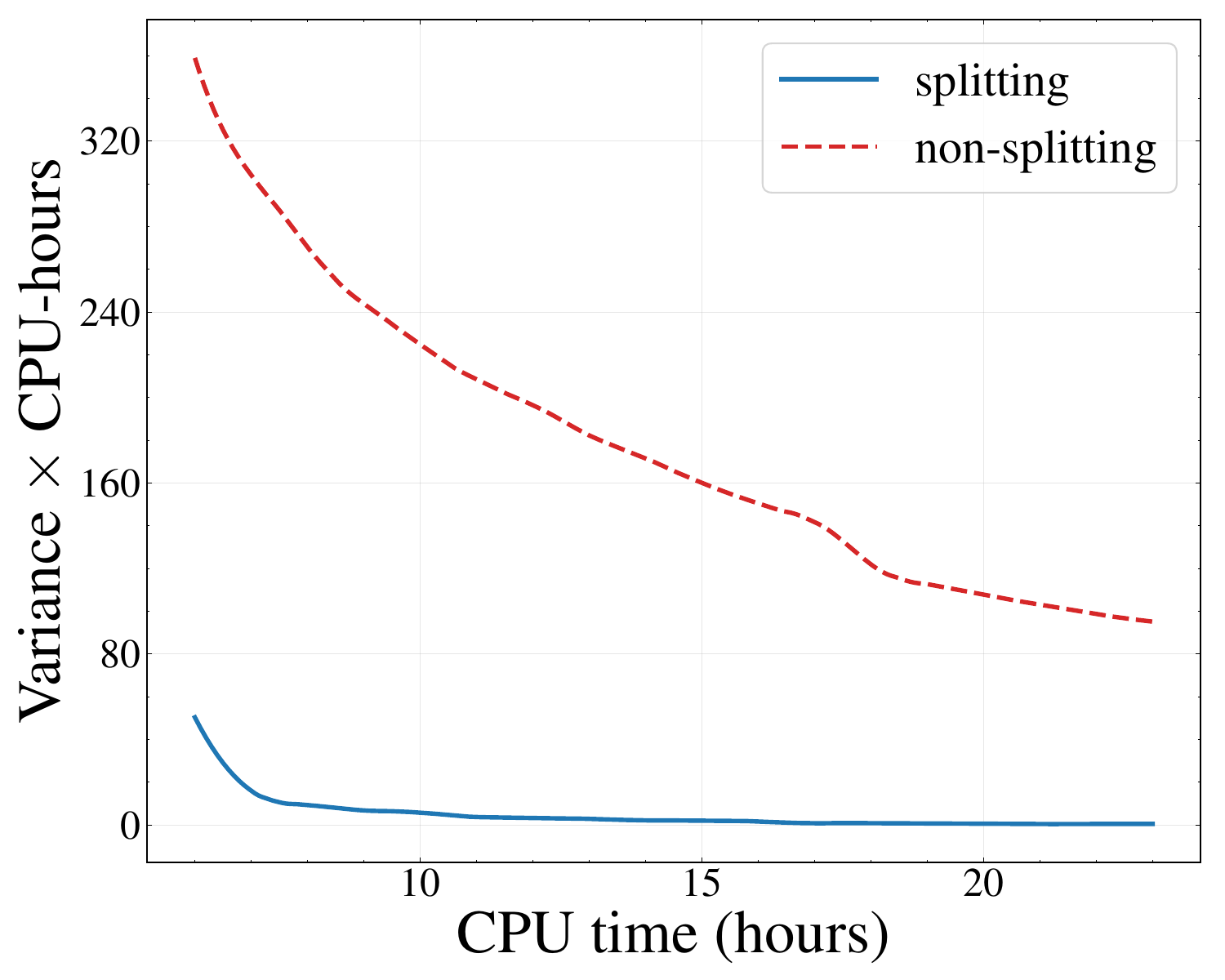}\hfill\includegraphics[width=0.19\textwidth]{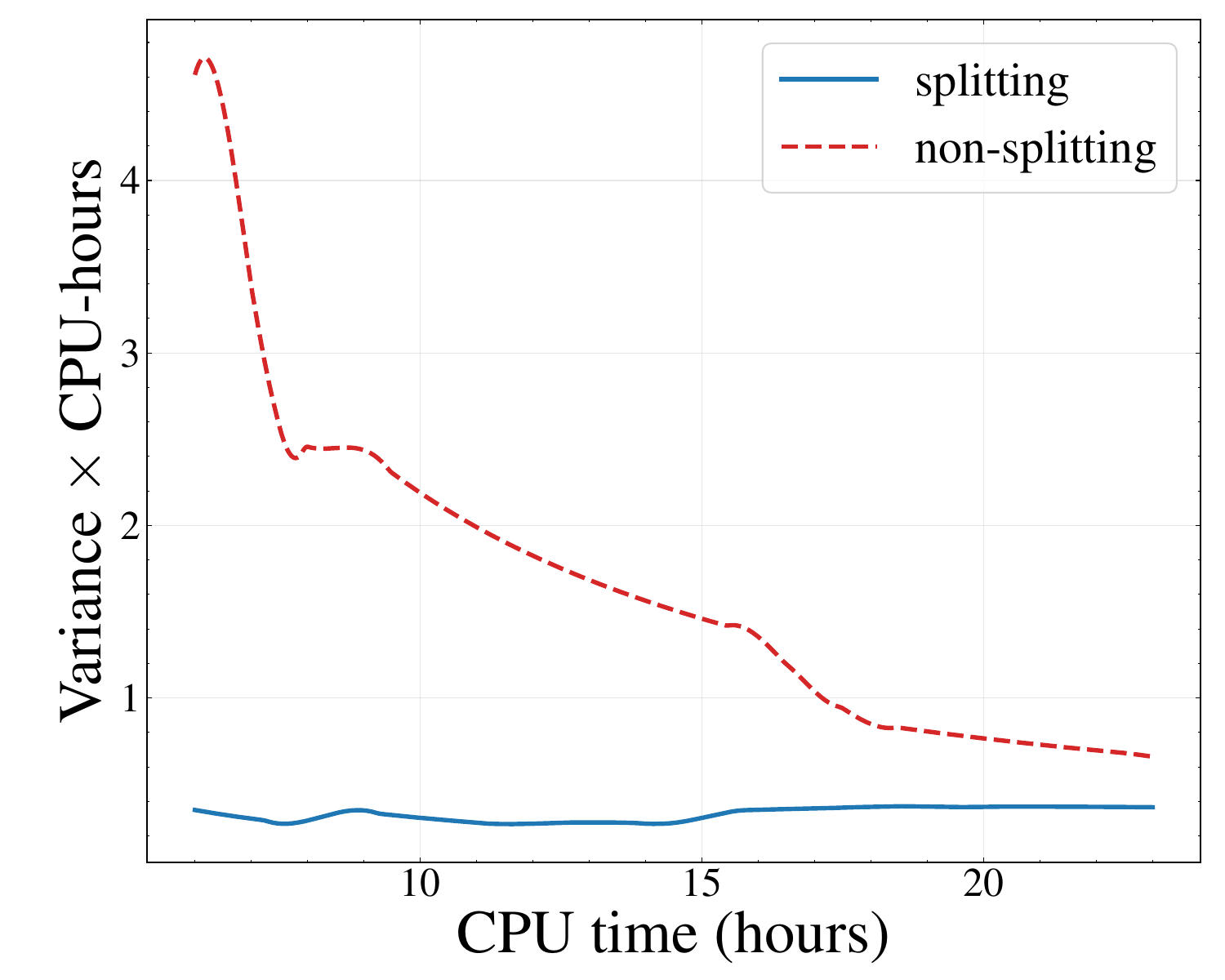}\\
    \makebox[0.19\textwidth]{\footnotesize (a) variance of arrow}\hfill\makebox[0.19\textwidth]{\footnotesize (b) variance of ffmpeg}\hfill\makebox[0.19\textwidth]{\footnotesize (c) variance of grok}\hfill\makebox[0.19\textwidth]{\footnotesize (d) variance of libhevc}\hfill\makebox[0.19\textwidth]{\footnotesize (e) variance of libhtp}\\[3pt]
    \includegraphics[width=0.19\textwidth]{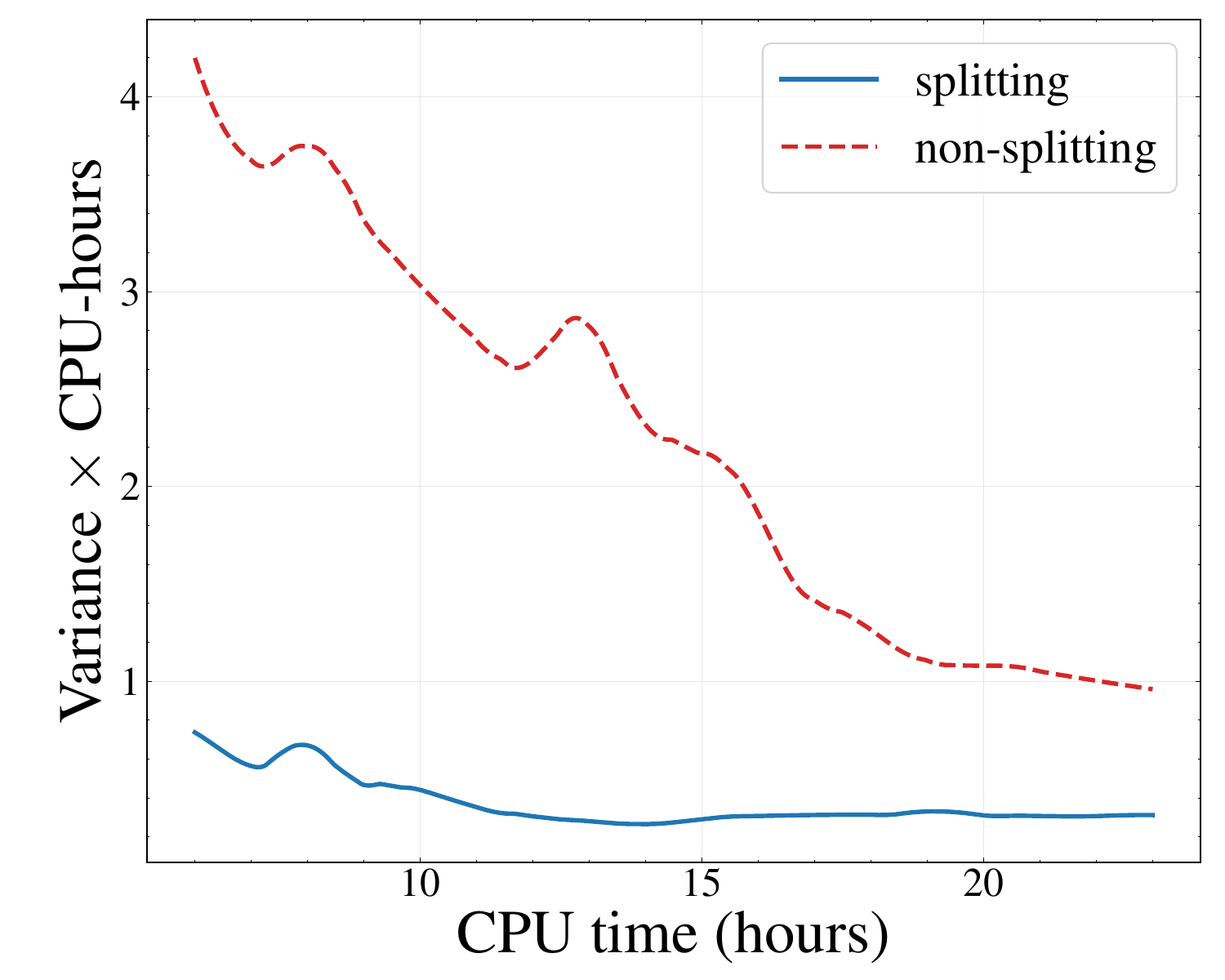}\hfill\includegraphics[width=0.19\textwidth]{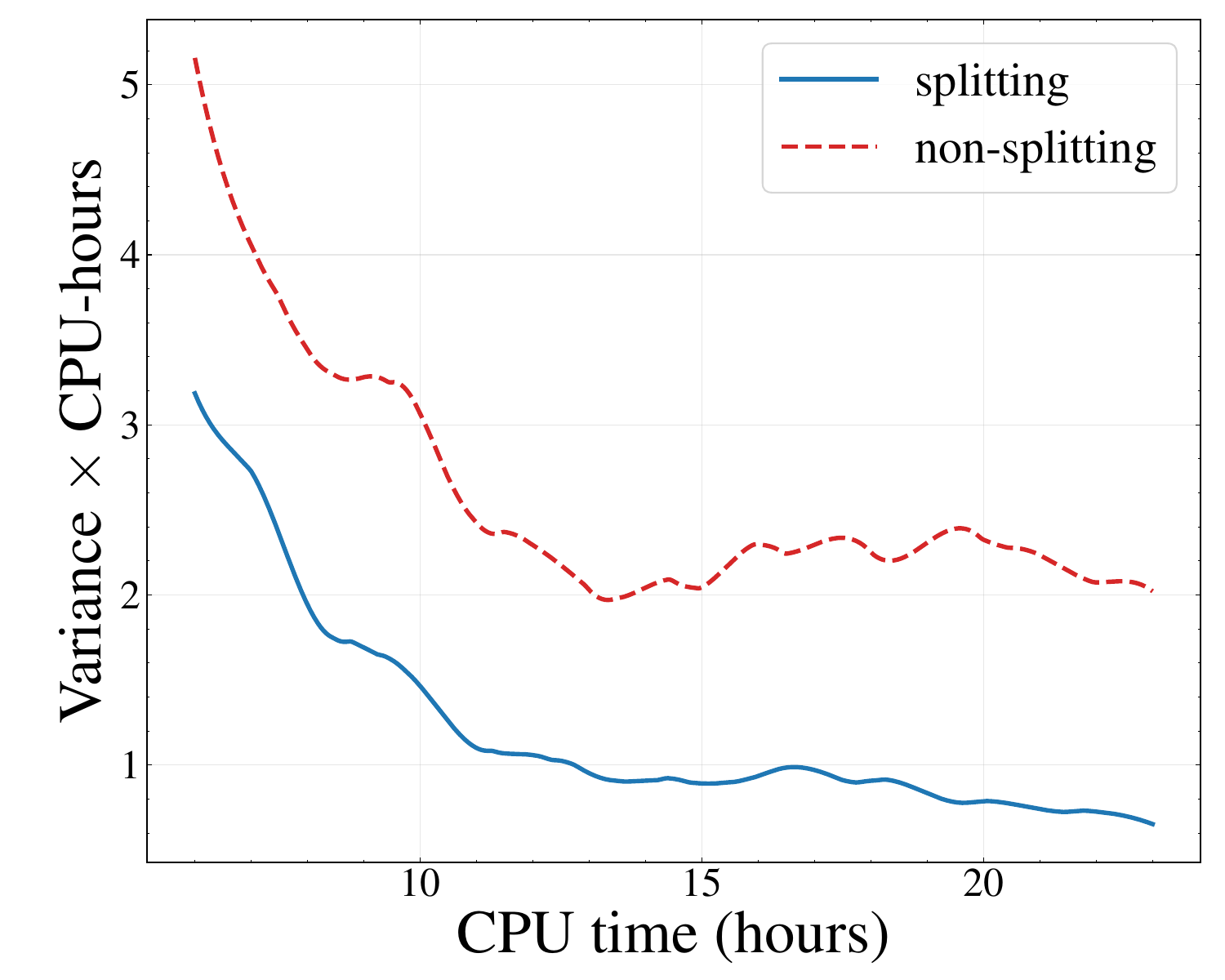}\hfill\includegraphics[width=0.19\textwidth]{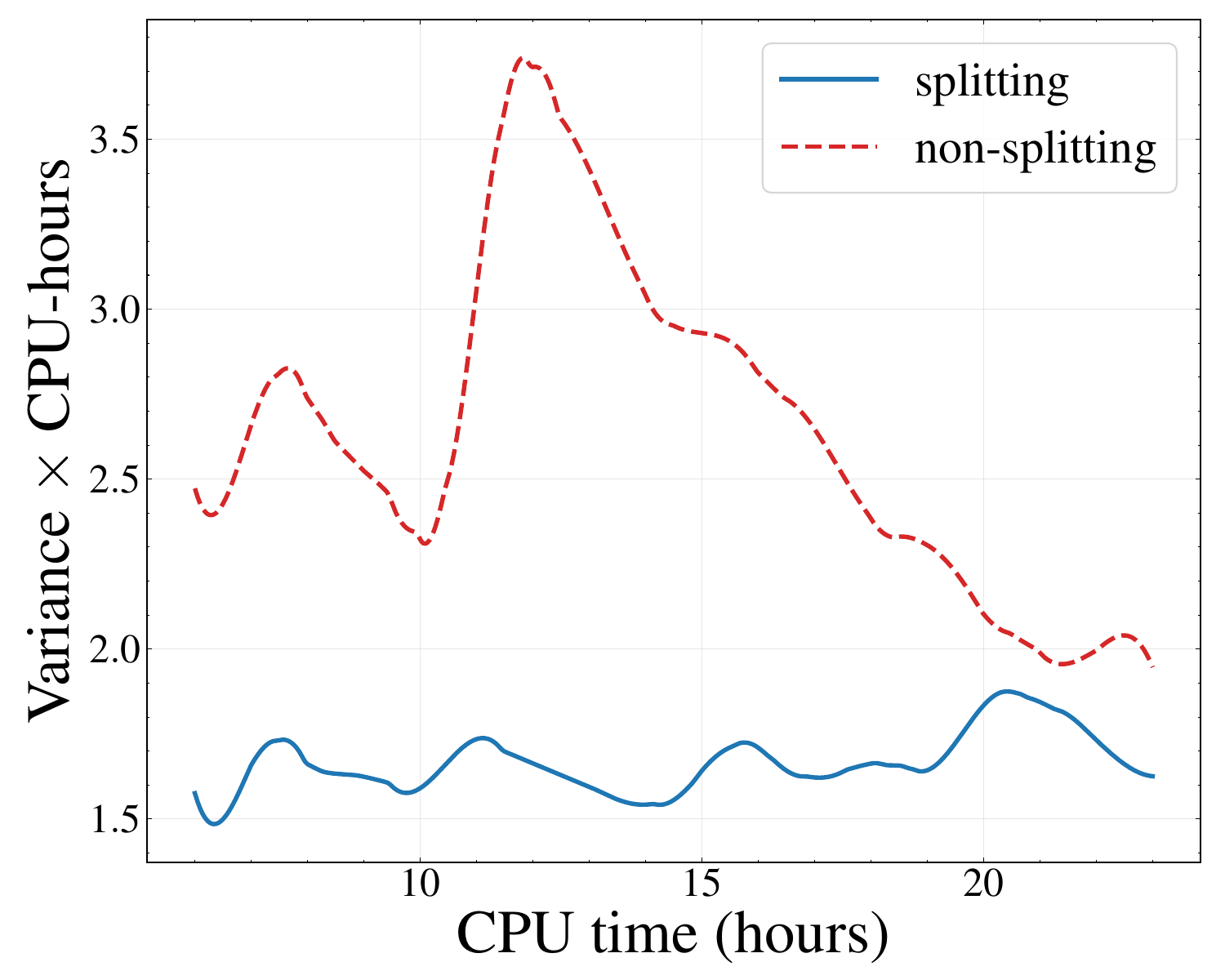}\hfill\includegraphics[width=0.19\textwidth]{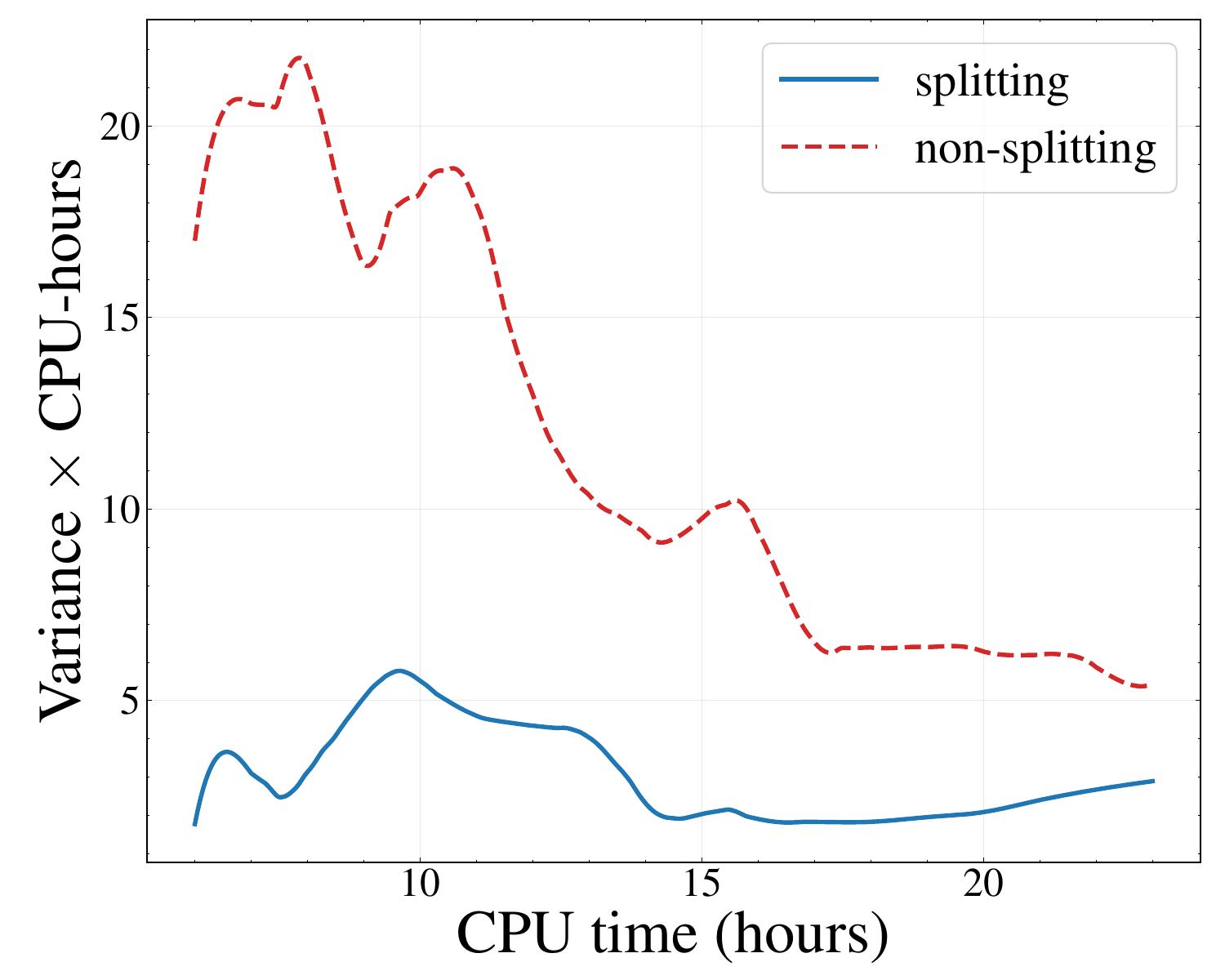}\hfill\includegraphics[width=0.19\textwidth]{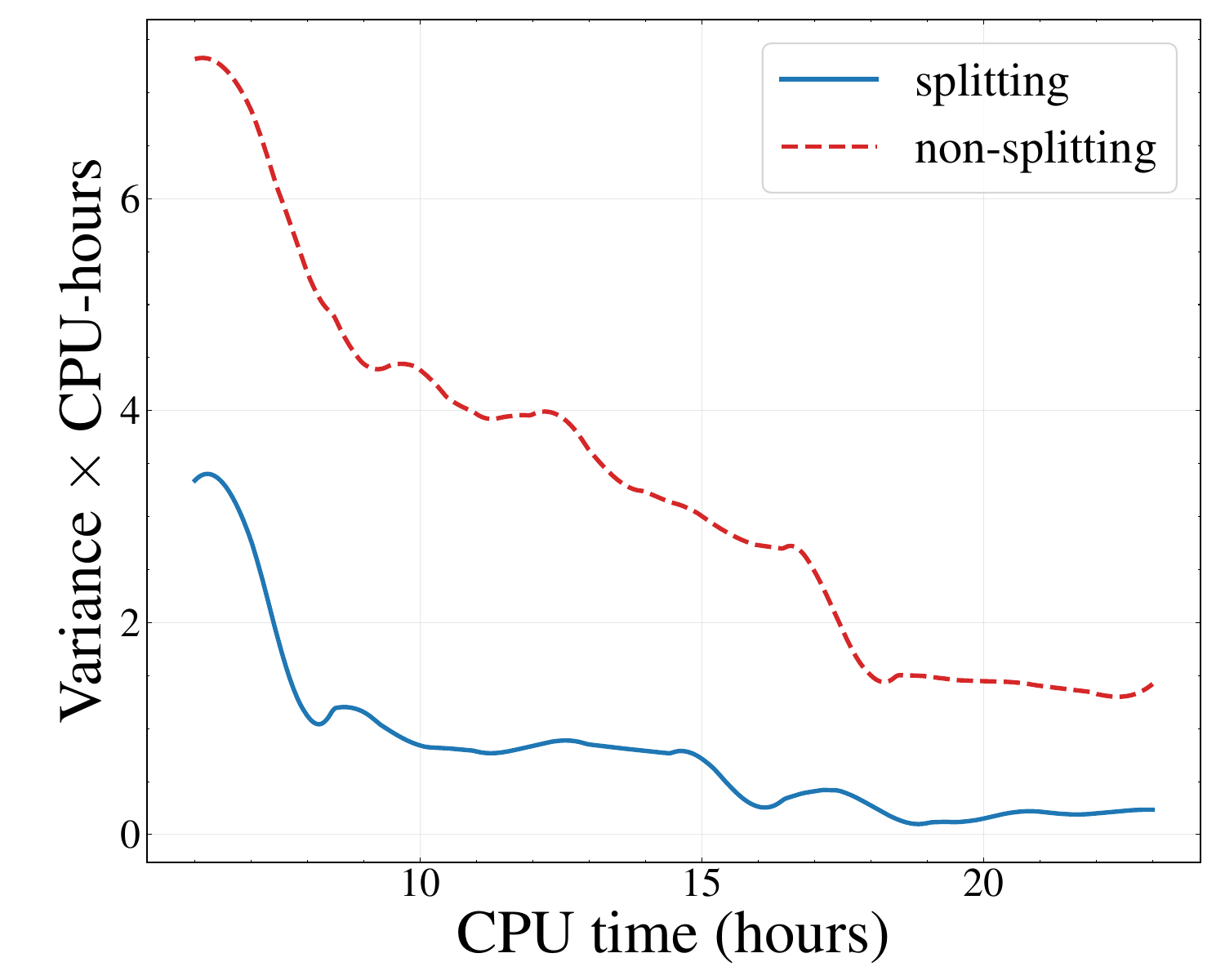}\\
    \makebox[0.19\textwidth]{\footnotesize (f) variance of matio}\hfill\makebox[0.19\textwidth]{\footnotesize (g) variance of openh264}\hfill\makebox[0.19\textwidth]{\footnotesize (h) variance of php}\hfill\makebox[0.19\textwidth]{\footnotesize (i) variance of poppler}\hfill\makebox[0.19\textwidth]{\footnotesize (j) variance of stb}
    \caption{Comparisons of variance across 10 benchmarks for fuzzer AFL.}
    \label{fig:comparison_afl}
\end{figure*}

\begin{figure*}[tp]
    \centering
    \includegraphics[width=0.19\textwidth]{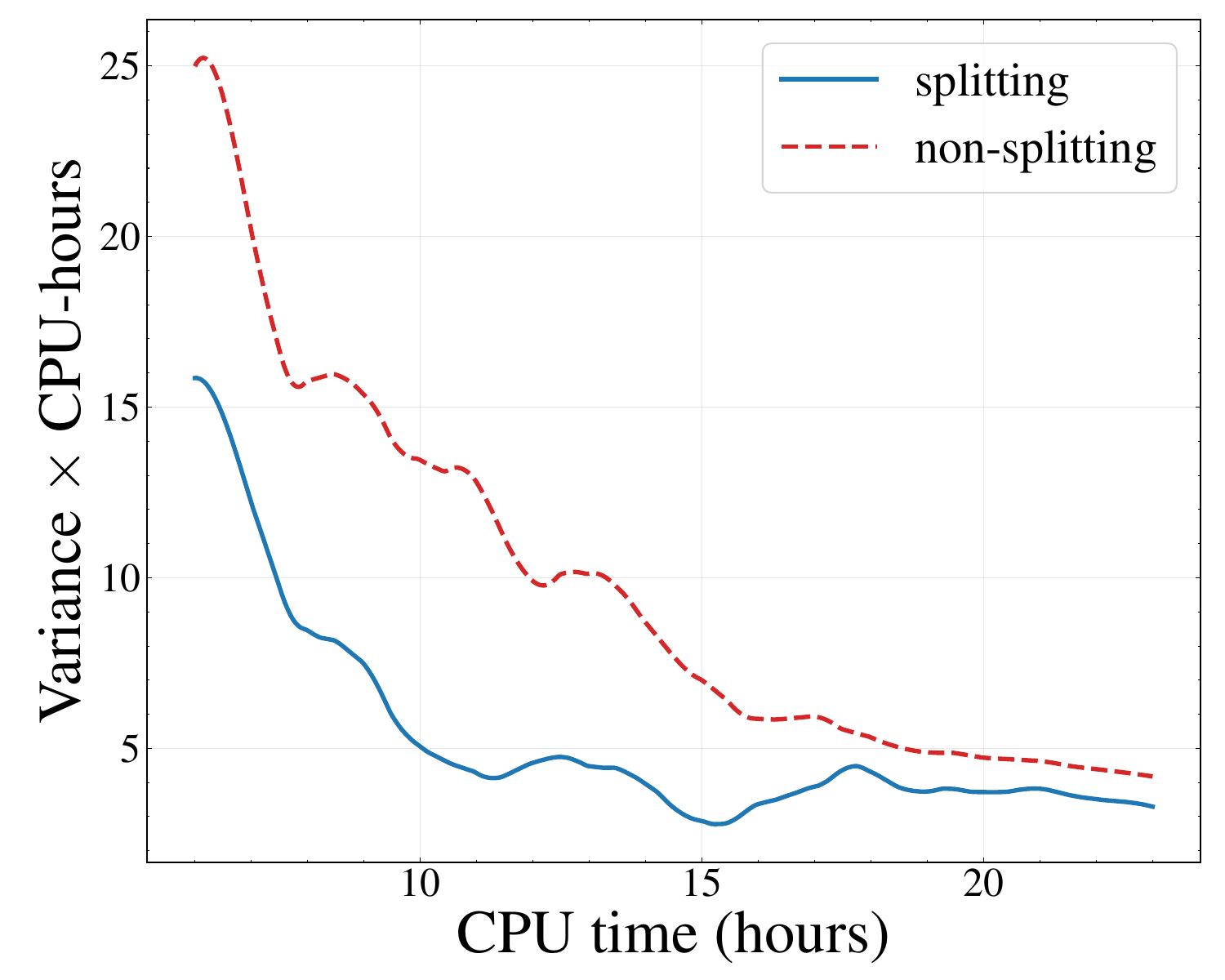}\hfill\includegraphics[width=0.19\textwidth]{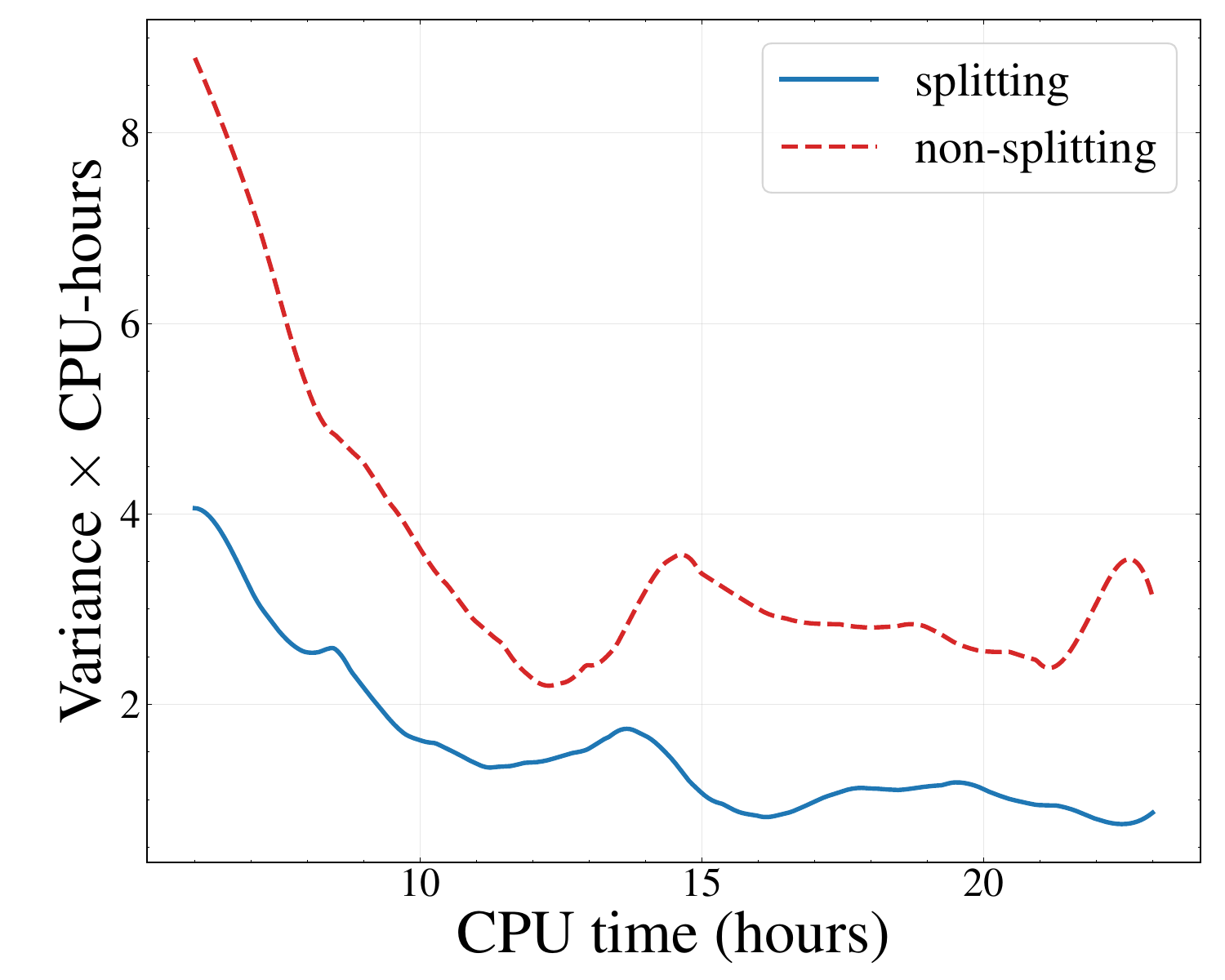}\hfill\includegraphics[width=0.19\textwidth]{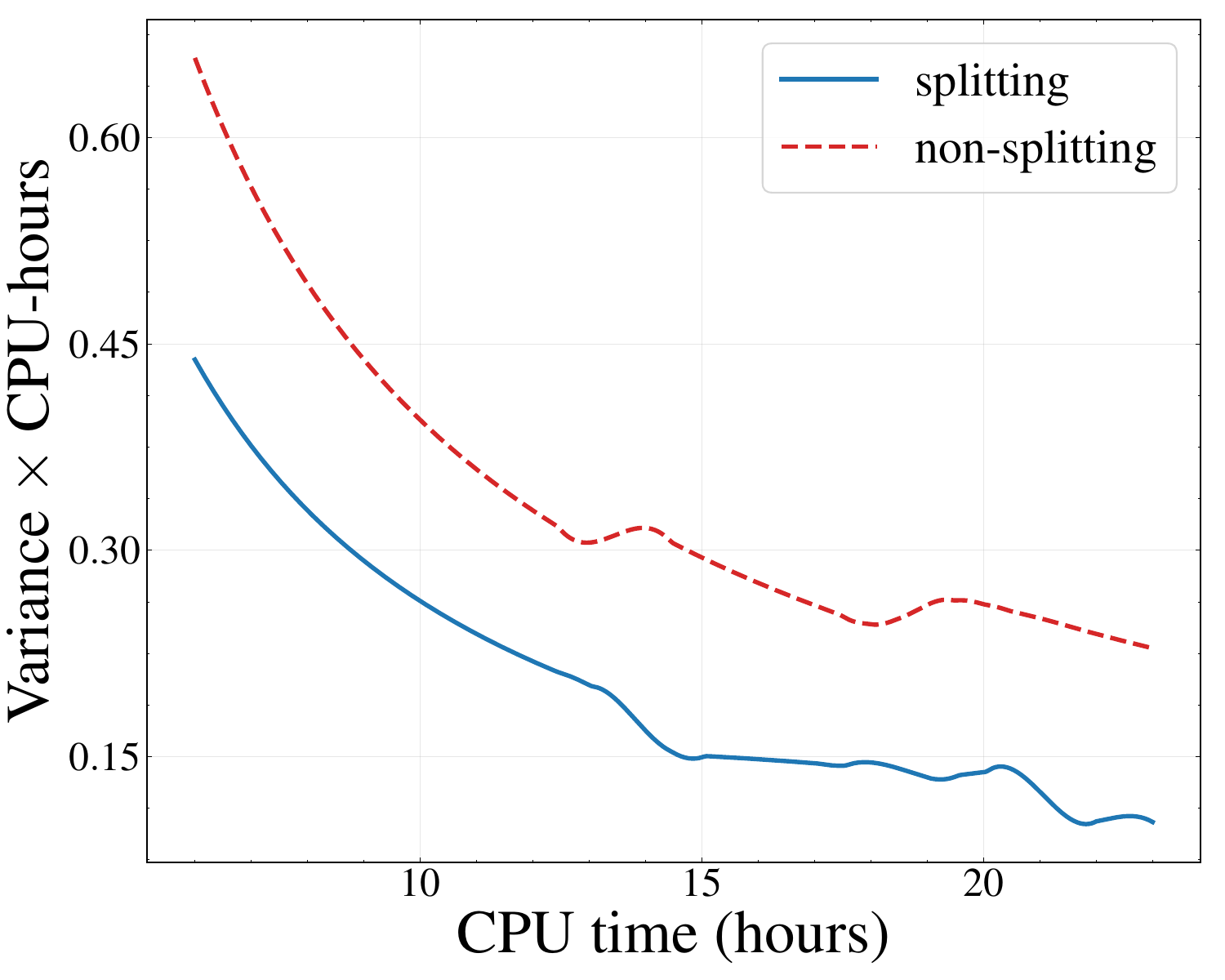}\hfill\includegraphics[width=0.19\textwidth]{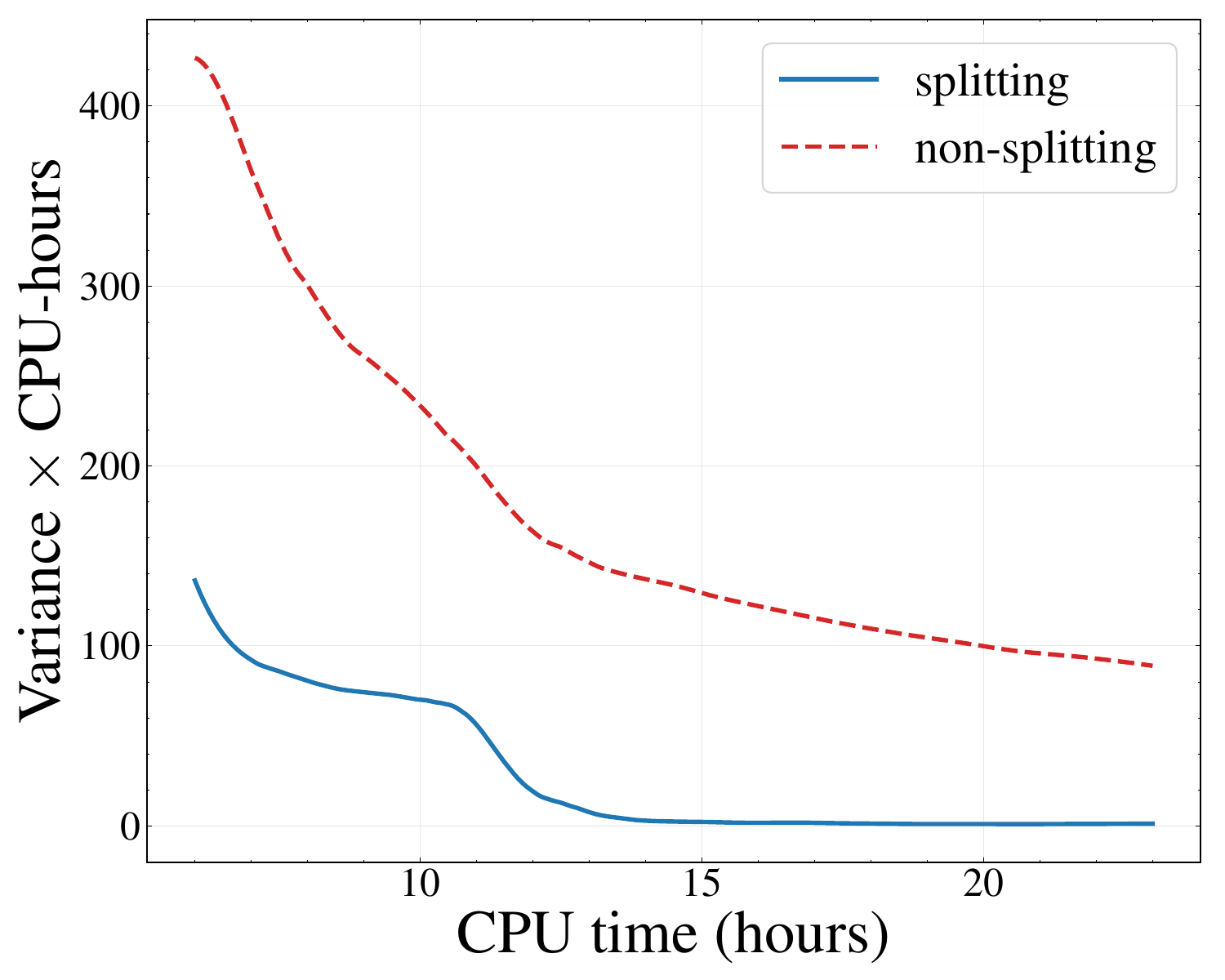}\hfill\includegraphics[width=0.19\textwidth]{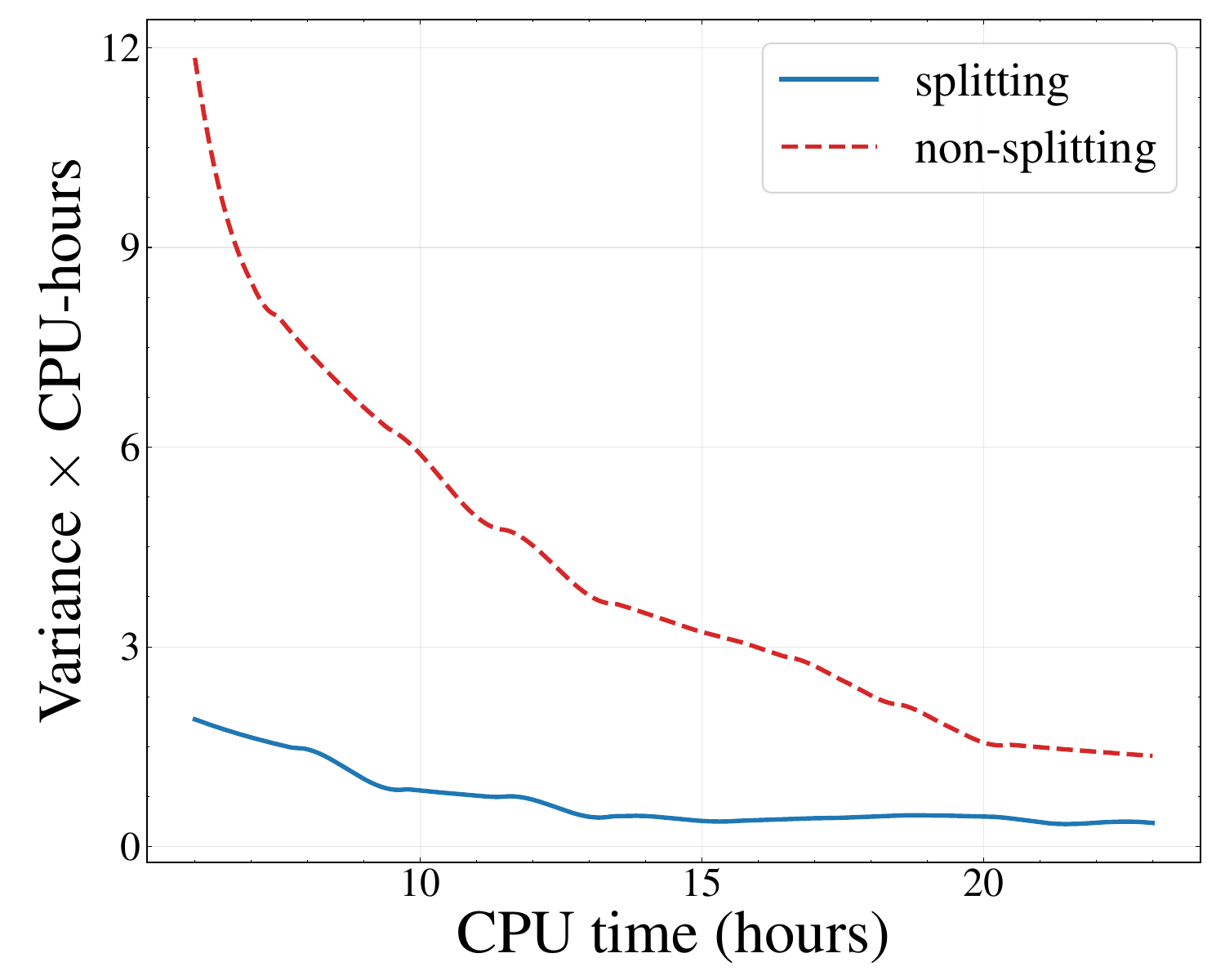}\\
    \makebox[0.19\textwidth]{\footnotesize (a) variance of arrow}\hfill\makebox[0.19\textwidth]{\footnotesize (b) variance of ffmpeg}\hfill\makebox[0.19\textwidth]{\footnotesize (c) variance of grok}\hfill\makebox[0.19\textwidth]{\footnotesize (d) variance of libhevc}\hfill\makebox[0.19\textwidth]{\footnotesize (e) variance of libhtp}\\[3pt]
    \includegraphics[width=0.19\textwidth]{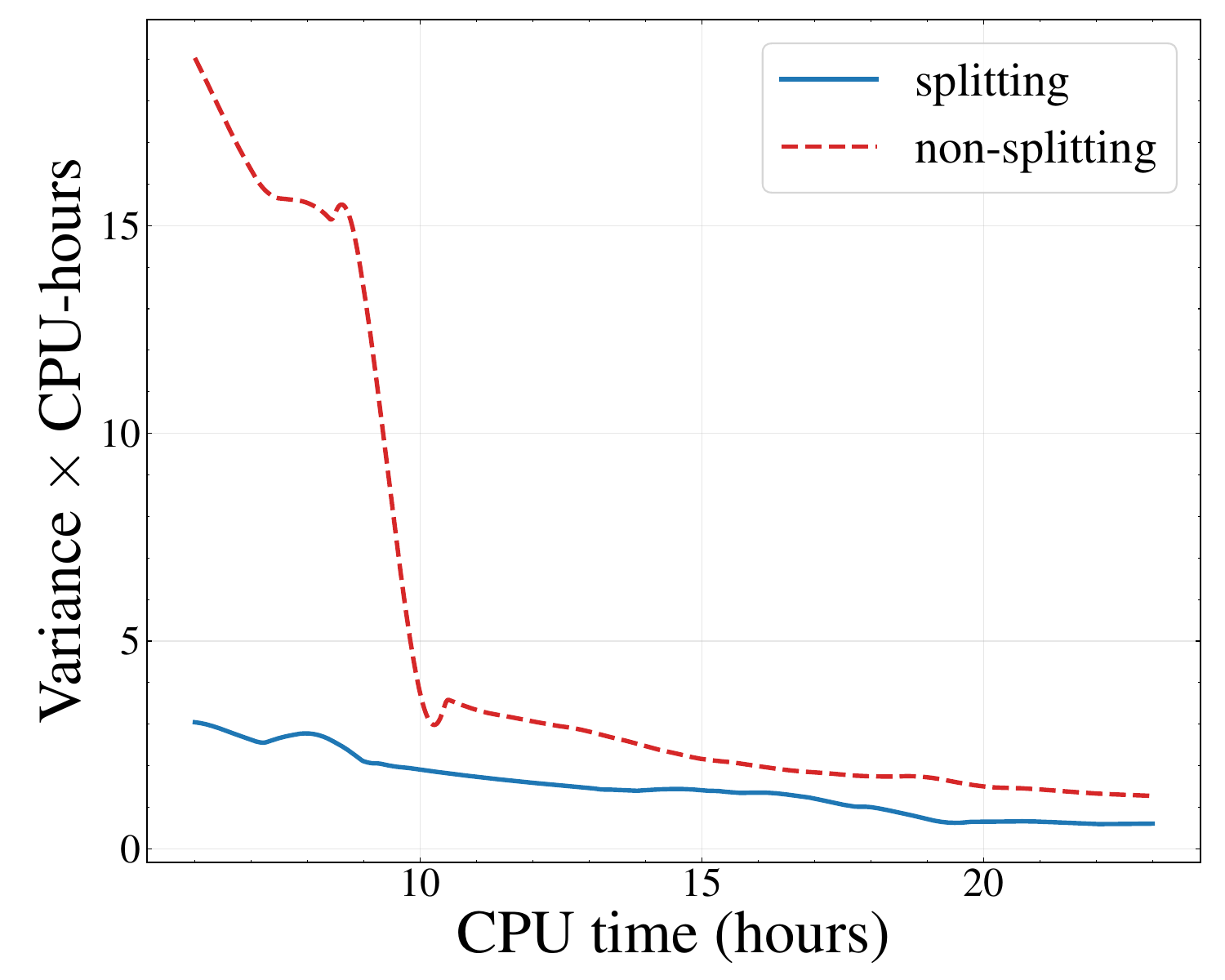}\hfill\includegraphics[width=0.19\textwidth]{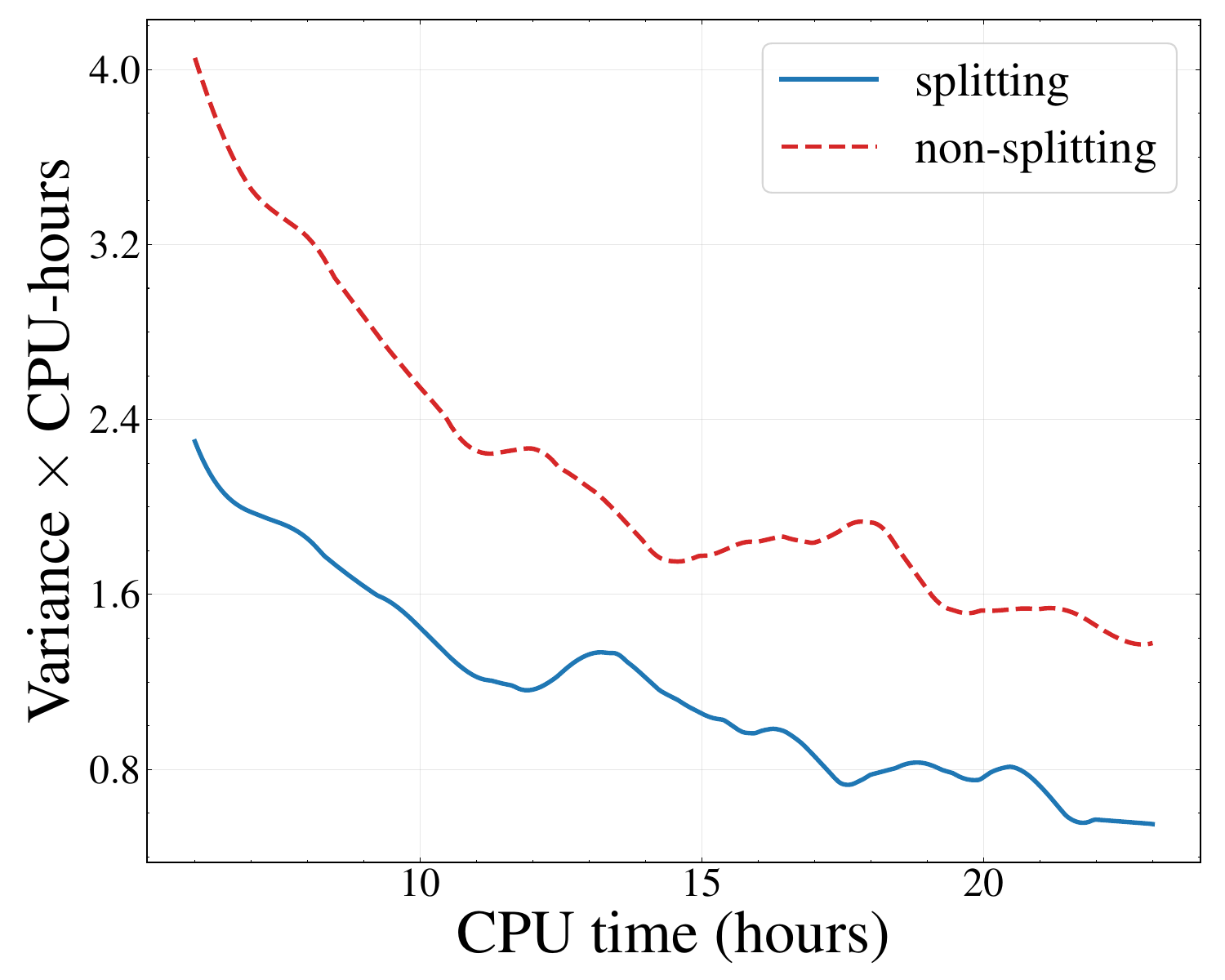}\hfill\includegraphics[width=0.19\textwidth]{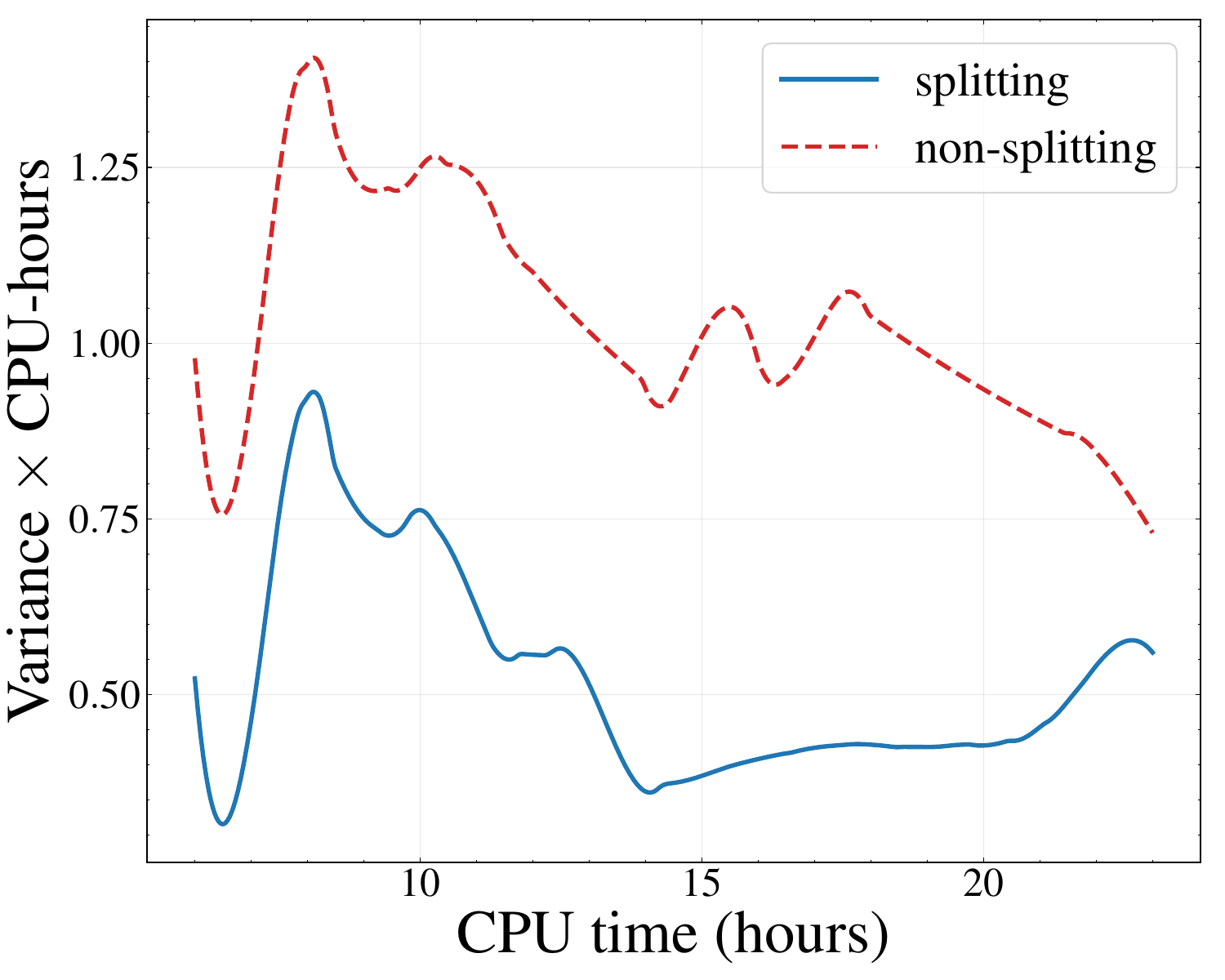}\hfill\includegraphics[width=0.19\textwidth]{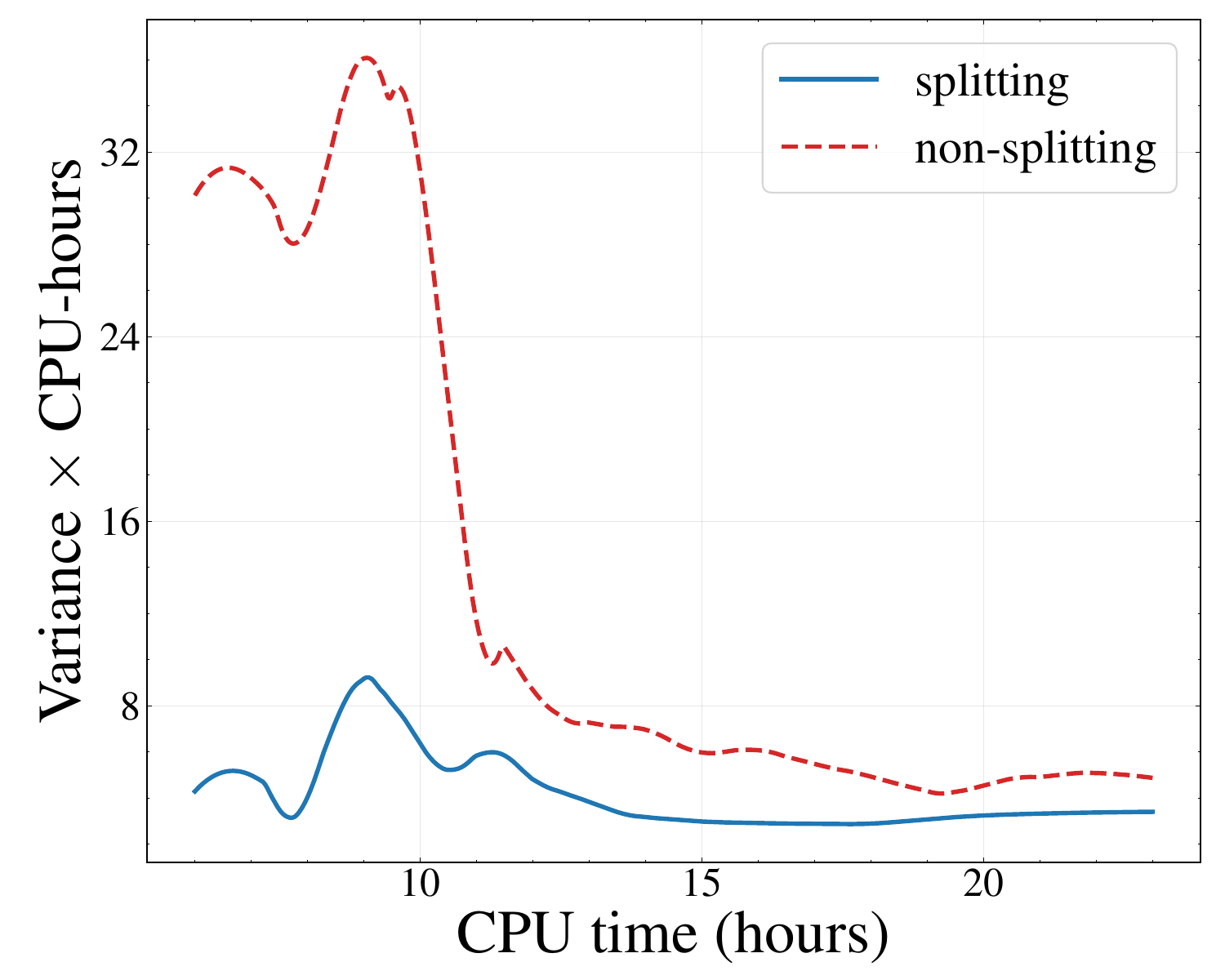}\hfill\includegraphics[width=0.19\textwidth]{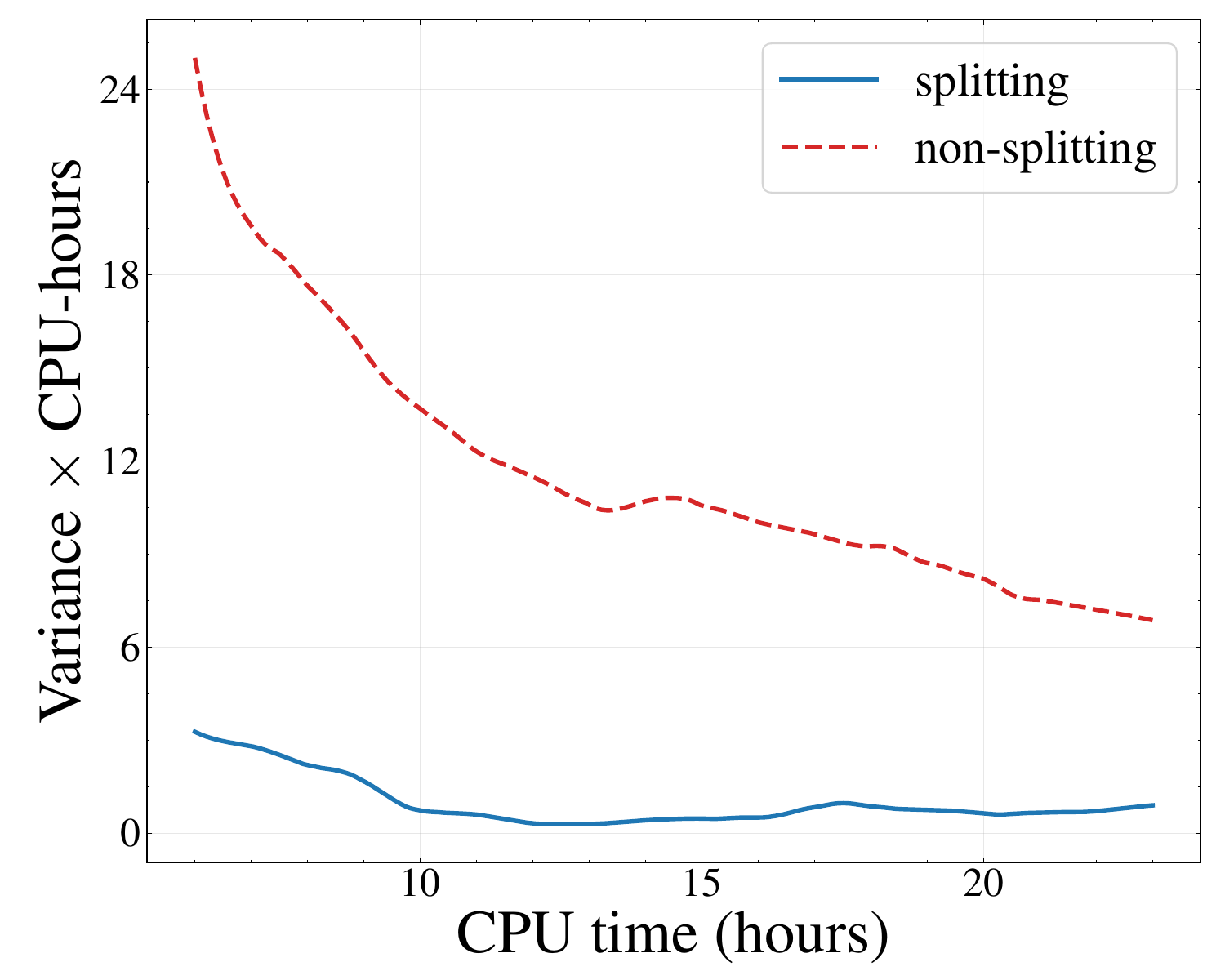}\\
    \makebox[0.19\textwidth]{\footnotesize (f) variance of matio}\hfill\makebox[0.19\textwidth]{\footnotesize (g) variance of openh264}\hfill\makebox[0.19\textwidth]{\footnotesize (h) variance of php}\hfill\makebox[0.19\textwidth]{\footnotesize (i) variance of poppler}\hfill\makebox[0.19\textwidth]{\footnotesize (j) variance of stb}
    \caption{Comparisons of variance across 10 benchmarks for fuzzer AFLSmart.}
    \label{fig:comparison_aflsmart}
\end{figure*}

\begin{figure*}[tp]
    \centering
    \includegraphics[width=0.19\textwidth]{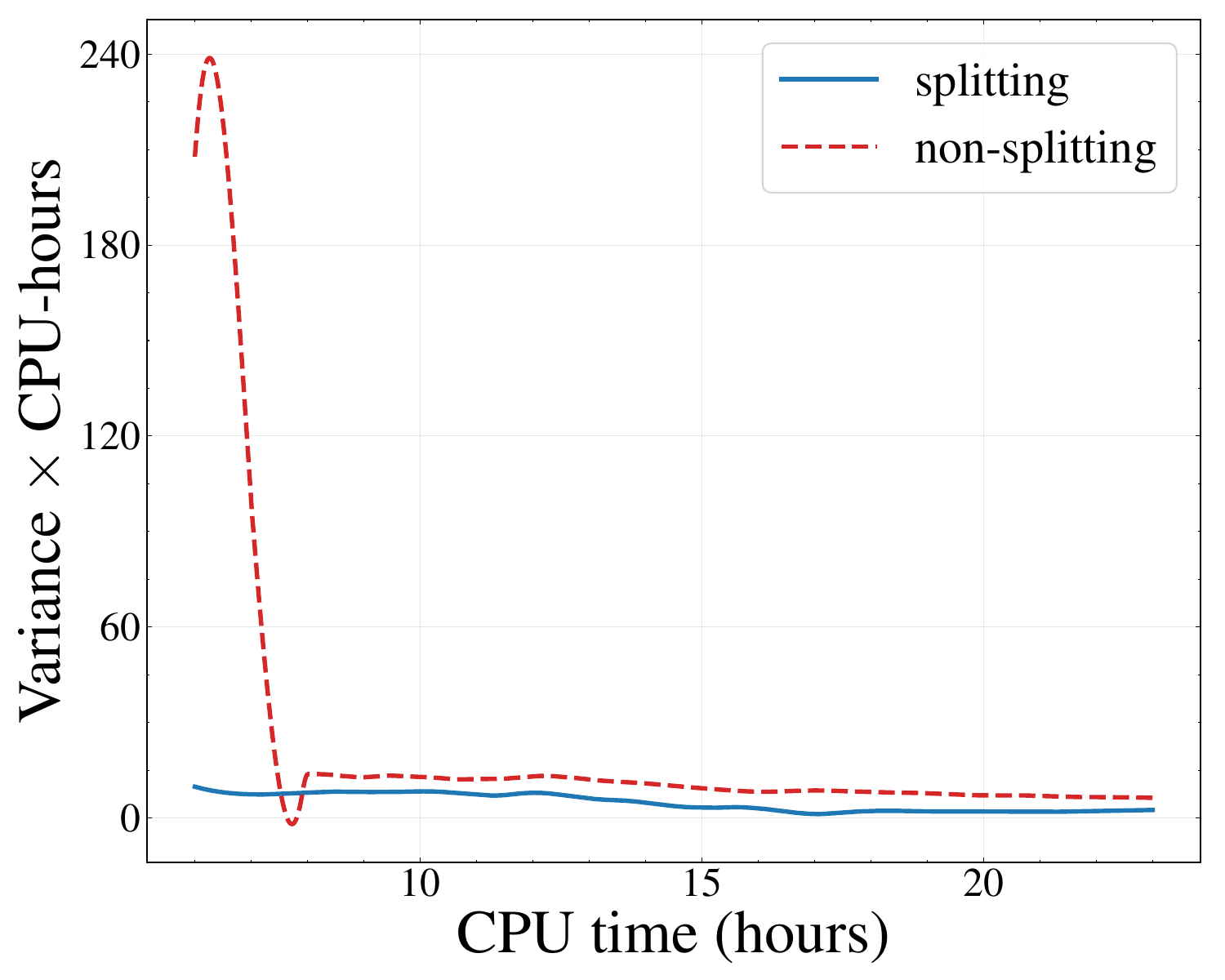}\hfill\includegraphics[width=0.19\textwidth]{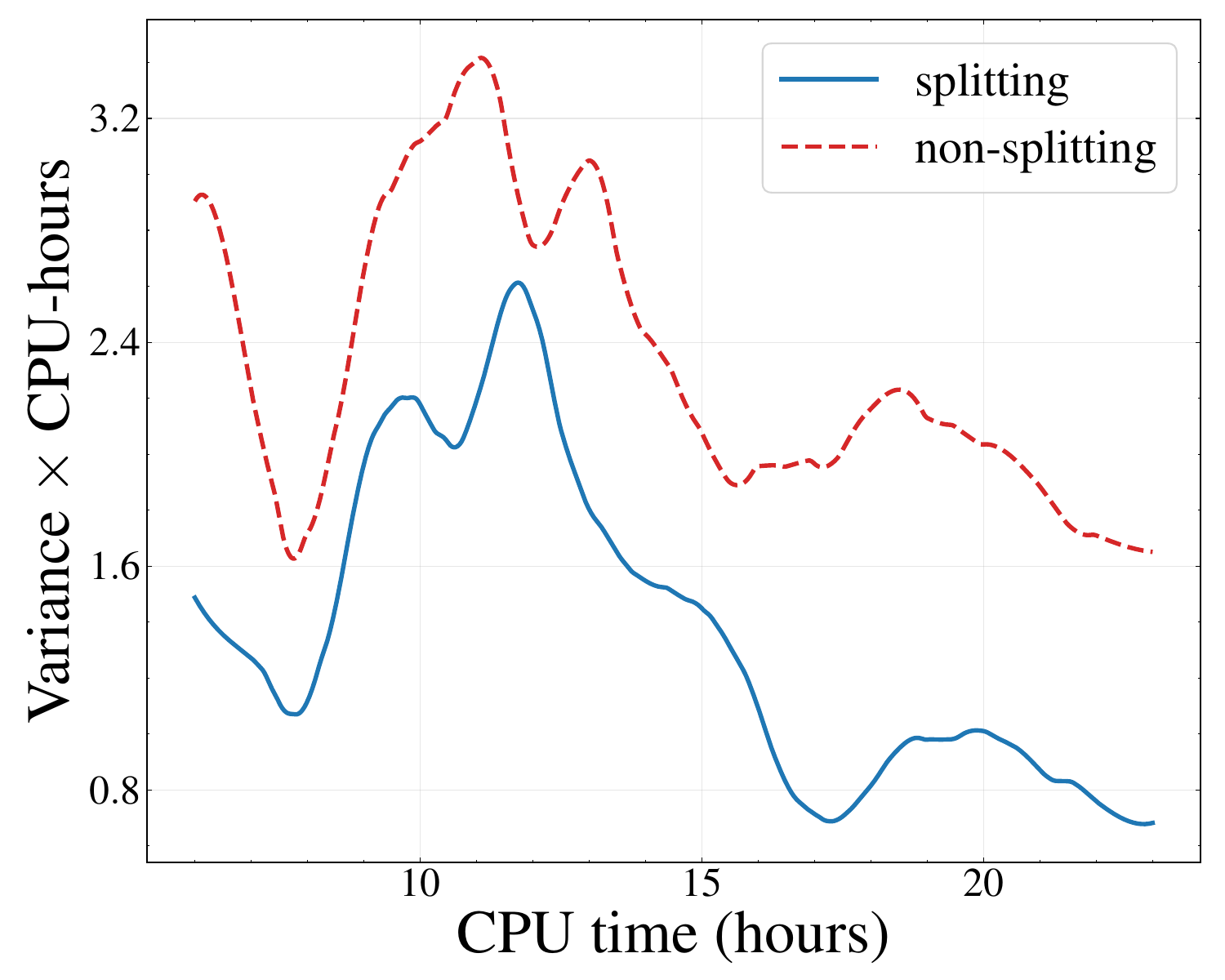}\hfill\includegraphics[width=0.19\textwidth]{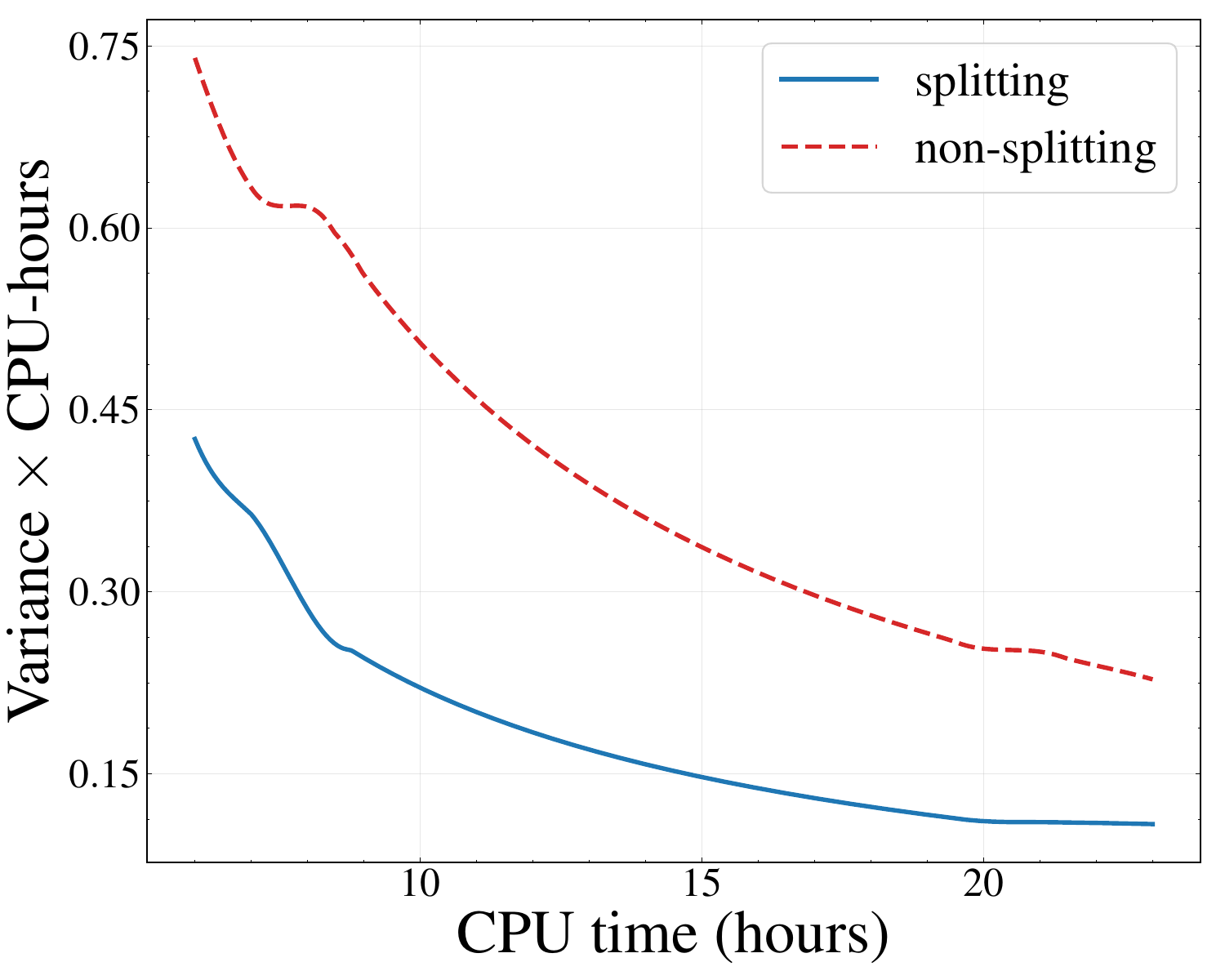}\hfill\includegraphics[width=0.19\textwidth]{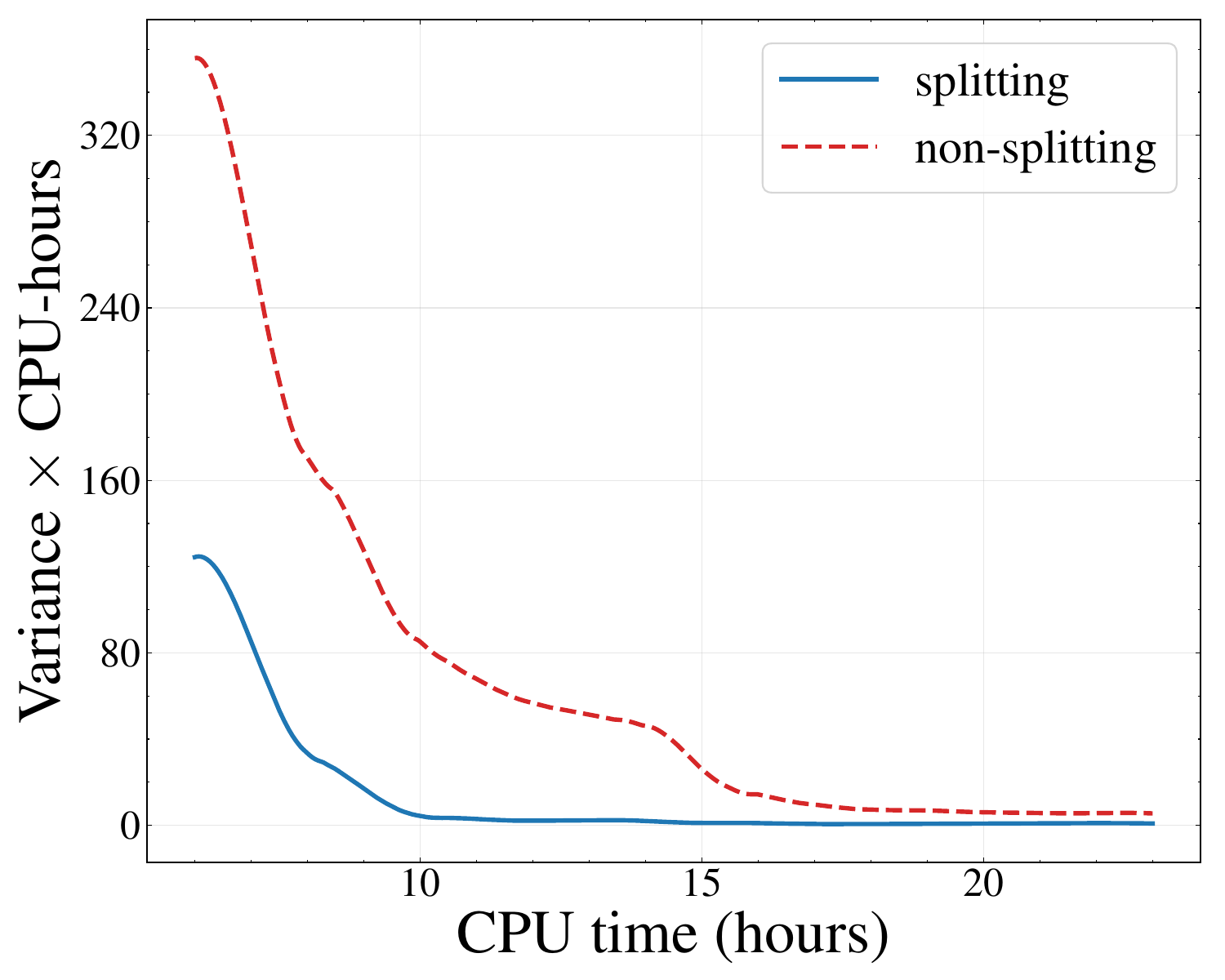}\hfill\includegraphics[width=0.19\textwidth]{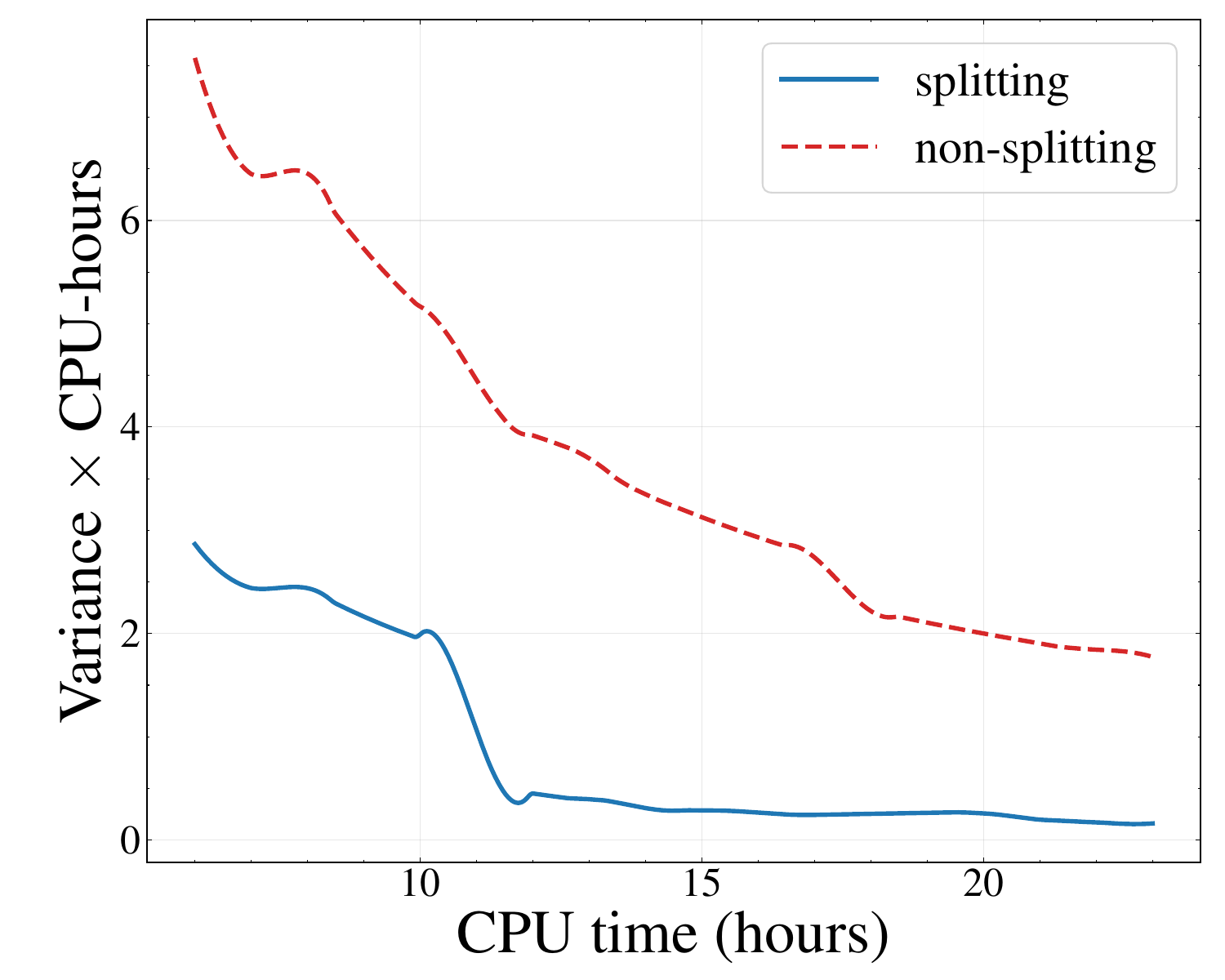}\\
    \makebox[0.19\textwidth]{\footnotesize (a) variance of arrow}\hfill\makebox[0.19\textwidth]{\footnotesize (b) variance of ffmpeg}\hfill\makebox[0.19\textwidth]{\footnotesize (c) variance of grok}\hfill\makebox[0.19\textwidth]{\footnotesize (d) variance of libhevc}\hfill\makebox[0.19\textwidth]{\footnotesize (e) variance of libhtp}\\[3pt]
    \includegraphics[width=0.19\textwidth]{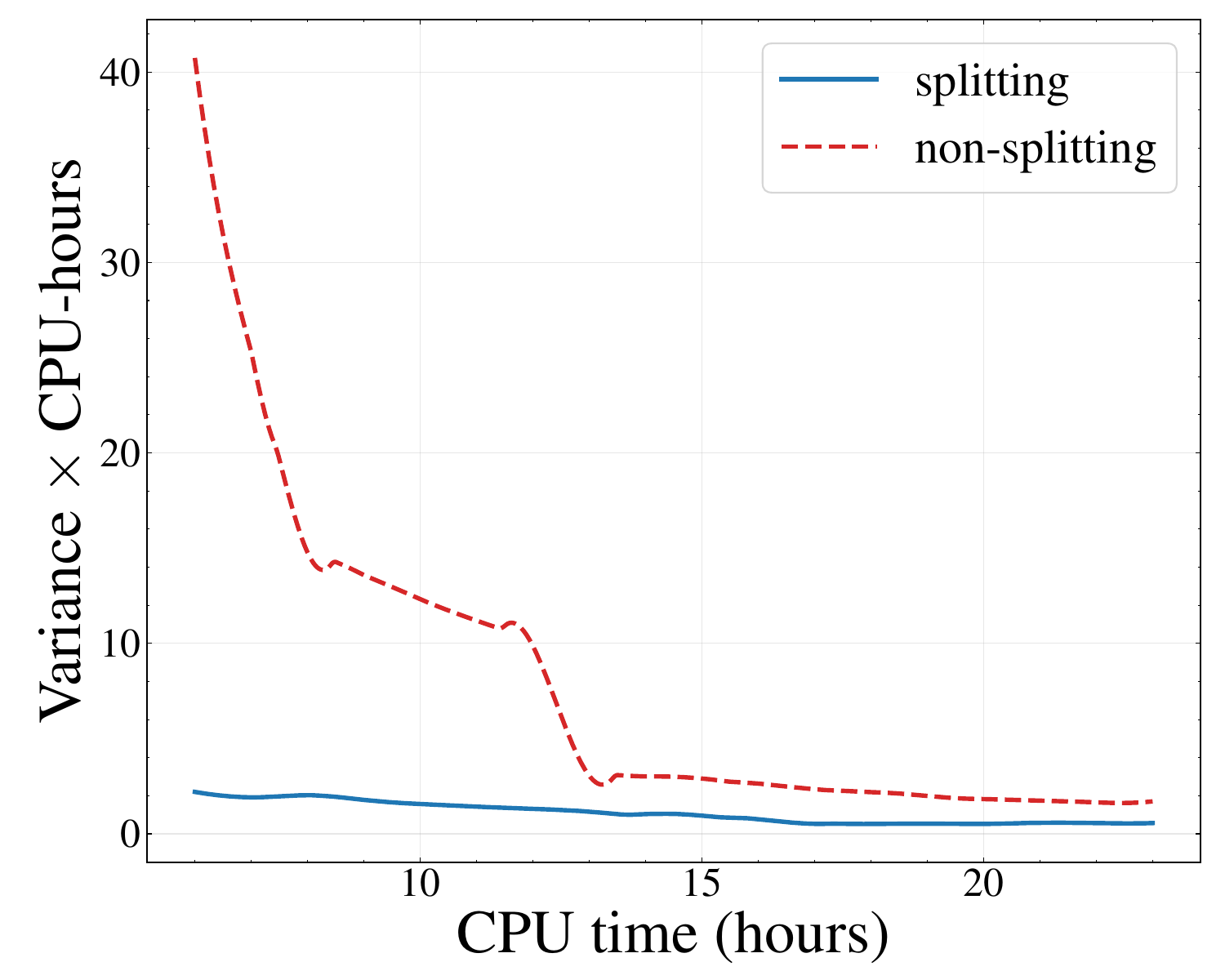}\hfill\includegraphics[width=0.19\textwidth]{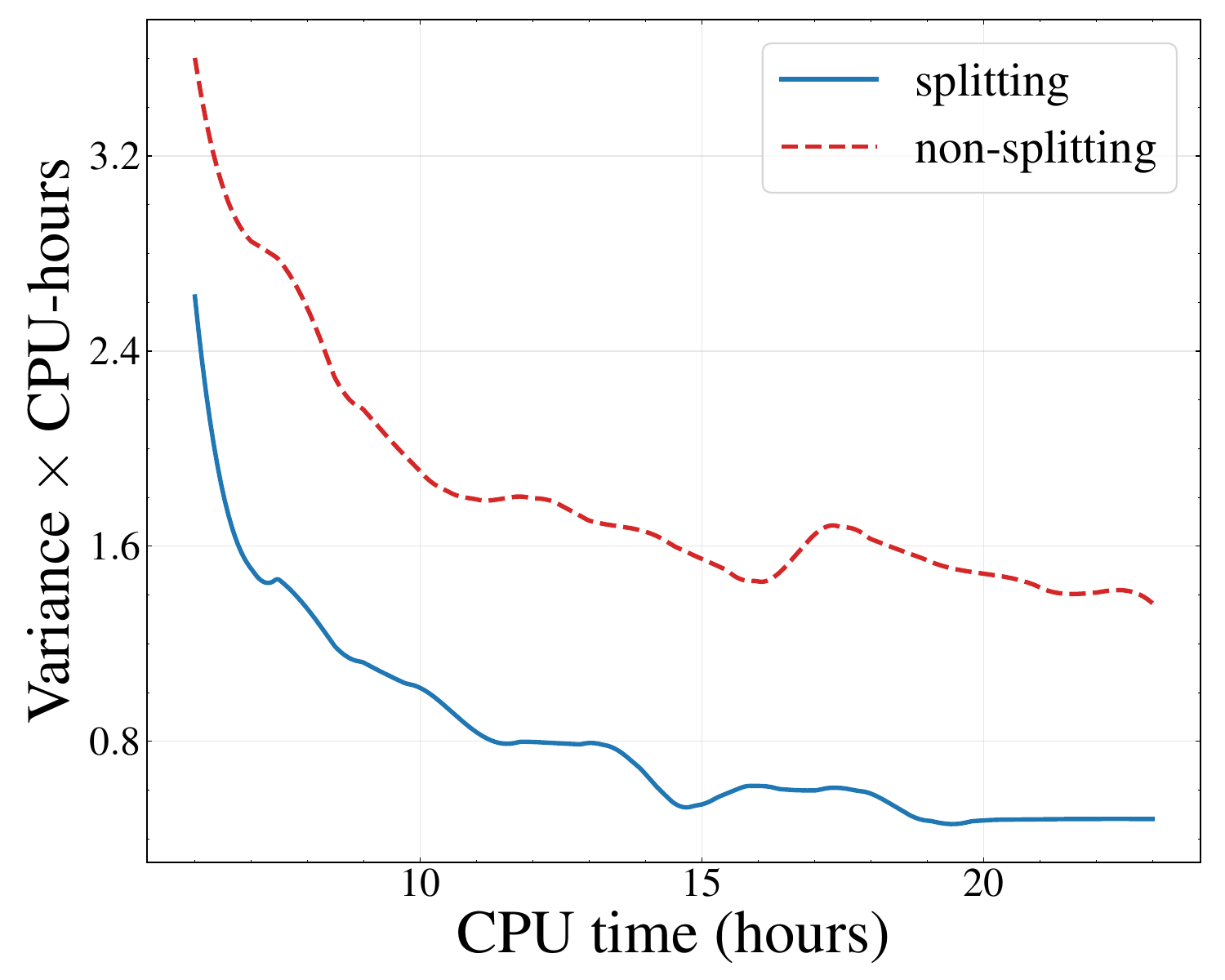}\hfill\includegraphics[width=0.19\textwidth]{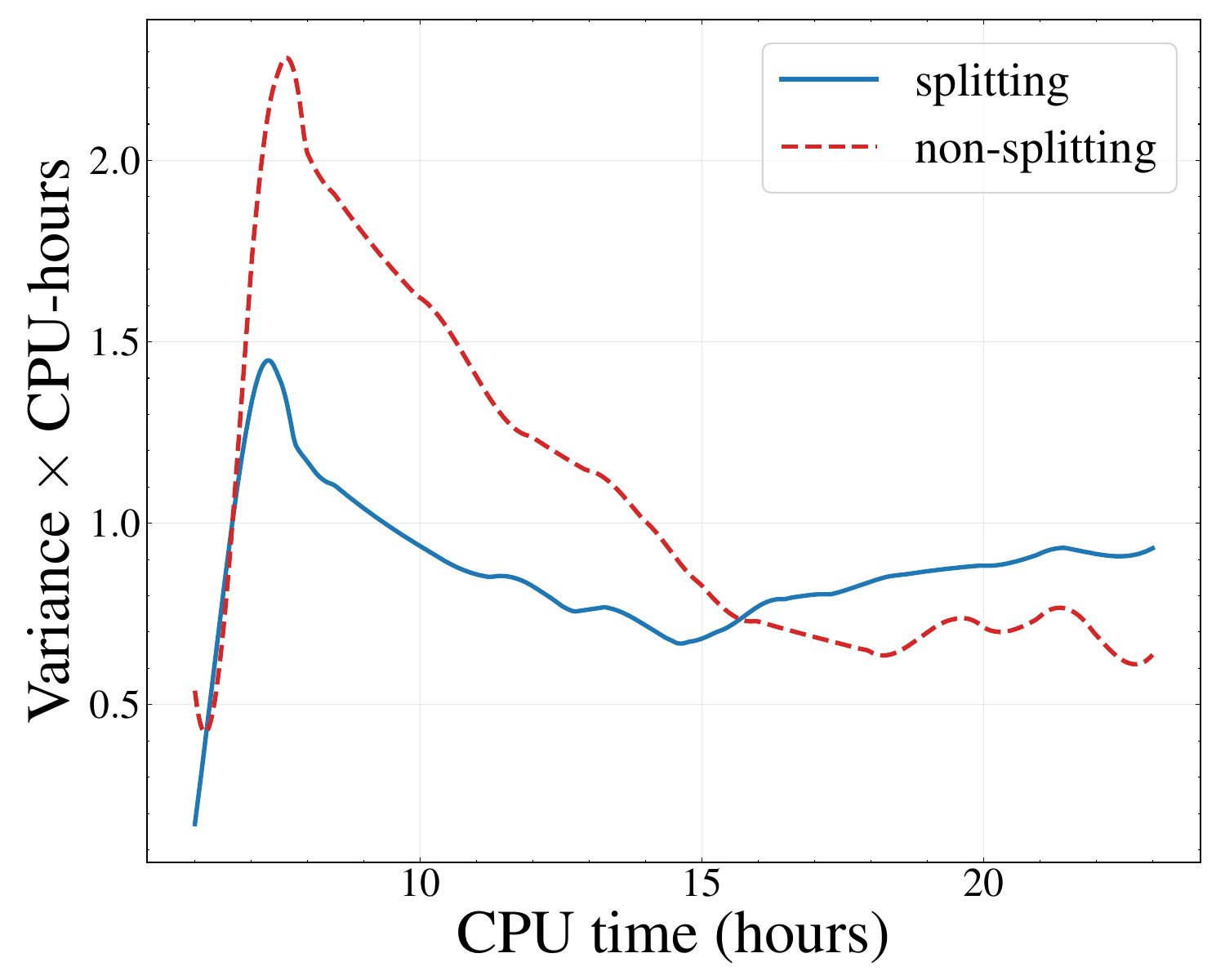}\hfill\includegraphics[width=0.19\textwidth]{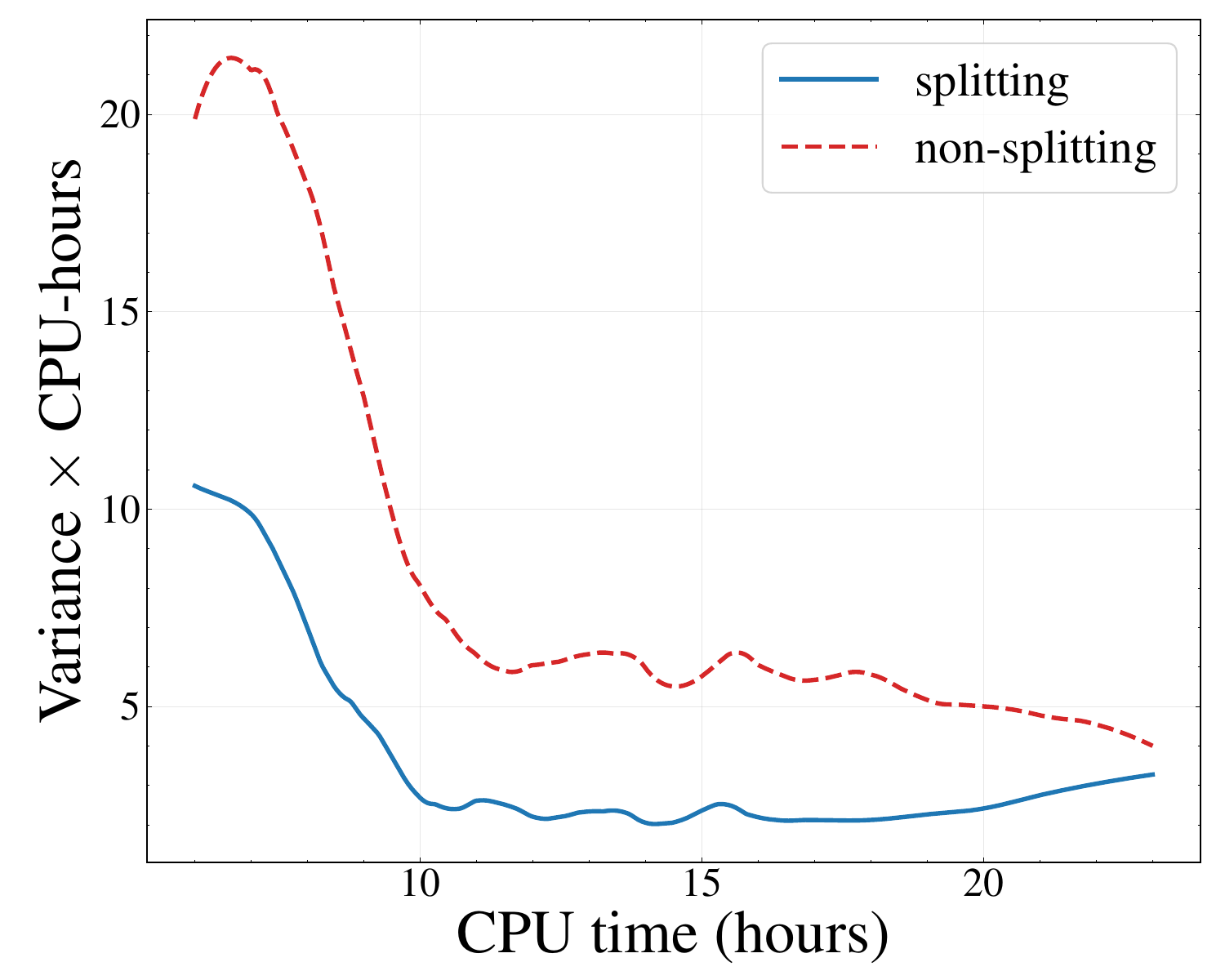}\hfill\includegraphics[width=0.19\textwidth]{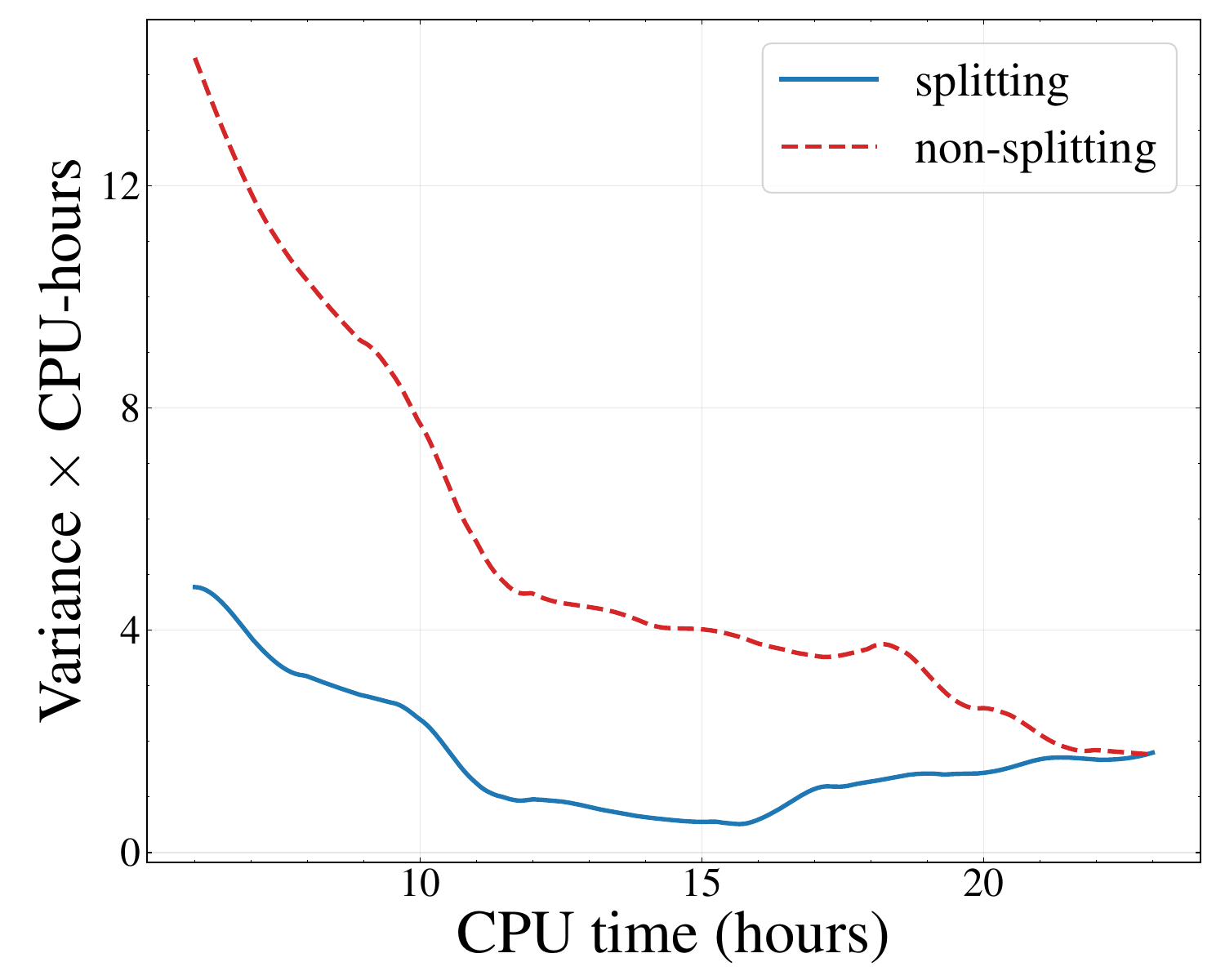}\\
    \makebox[0.19\textwidth]{\footnotesize (f) variance of matio}\hfill\makebox[0.19\textwidth]{\footnotesize (g) variance of openh264}\hfill\makebox[0.19\textwidth]{\footnotesize (h) variance of php}\hfill\makebox[0.19\textwidth]{\footnotesize (i) variance of poppler}\hfill\makebox[0.19\textwidth]{\footnotesize (j) variance of stb}
    \caption{Comparisons of variance across 10 benchmarks for fuzzer AFLFast.}
    \label{fig:comparison_aflfast}
\end{figure*}

\begin{figure*}[tp]
    \centering
    \includegraphics[width=0.19\textwidth]{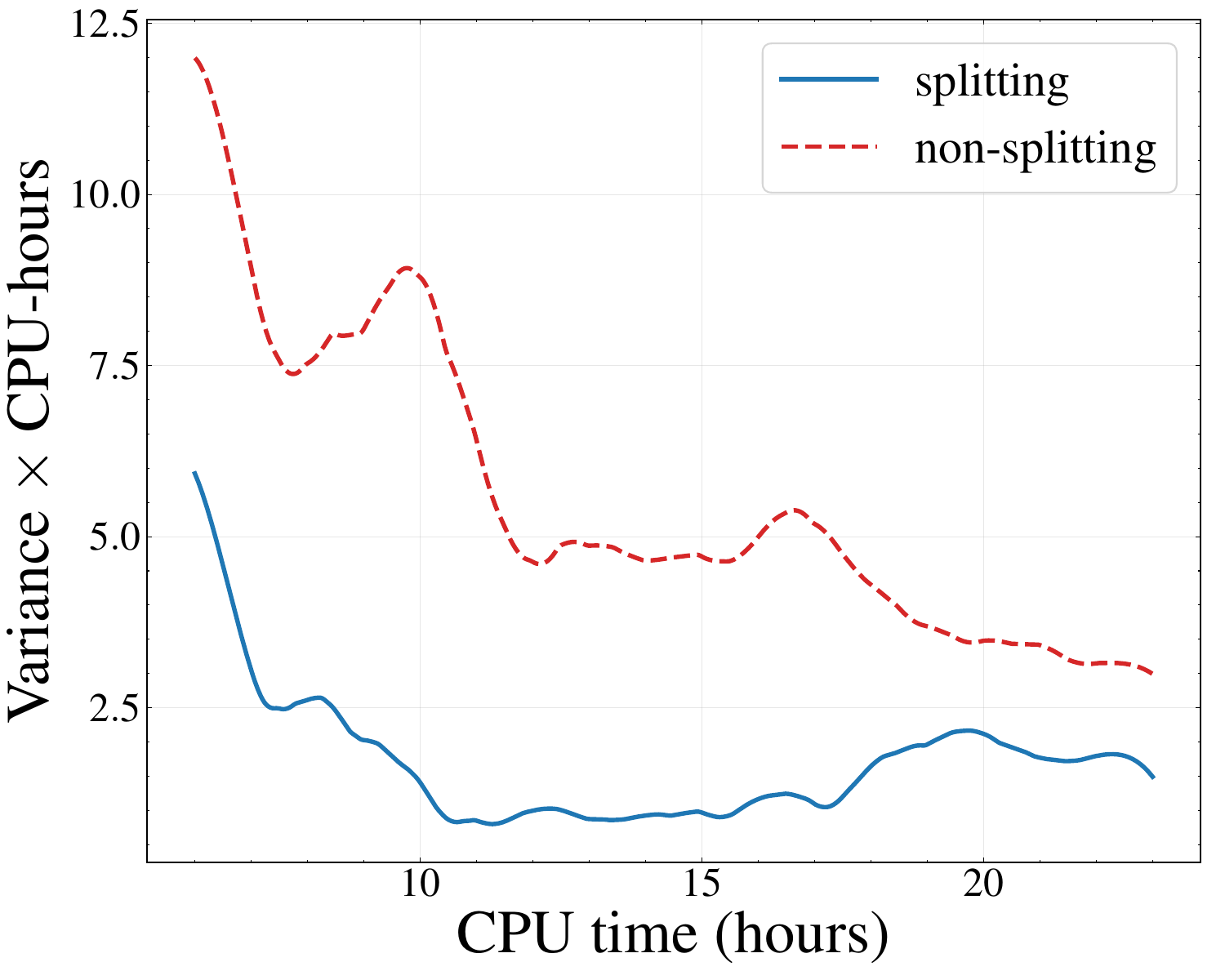}\hfill\includegraphics[width=0.19\textwidth]{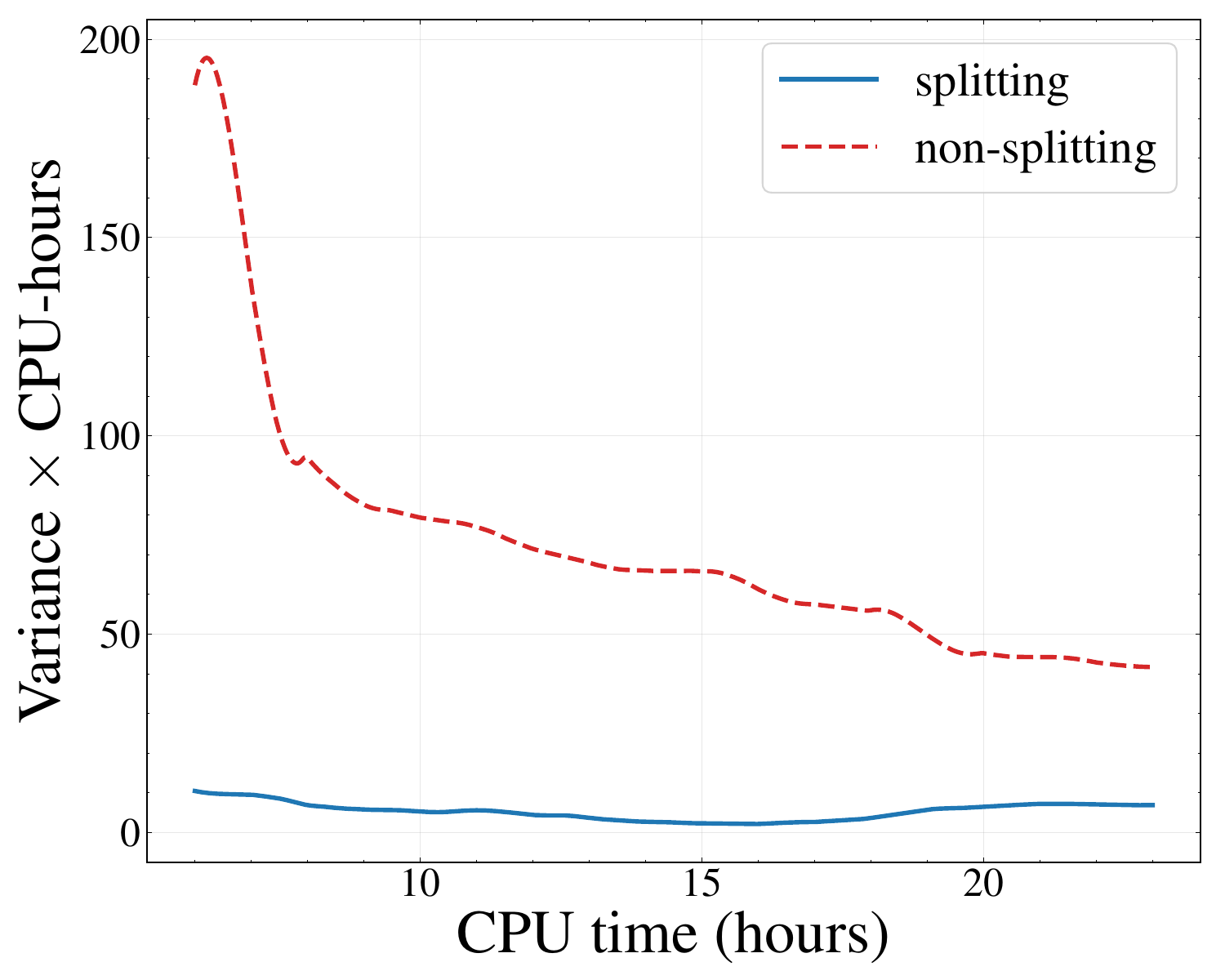}\hfill\includegraphics[width=0.19\textwidth]{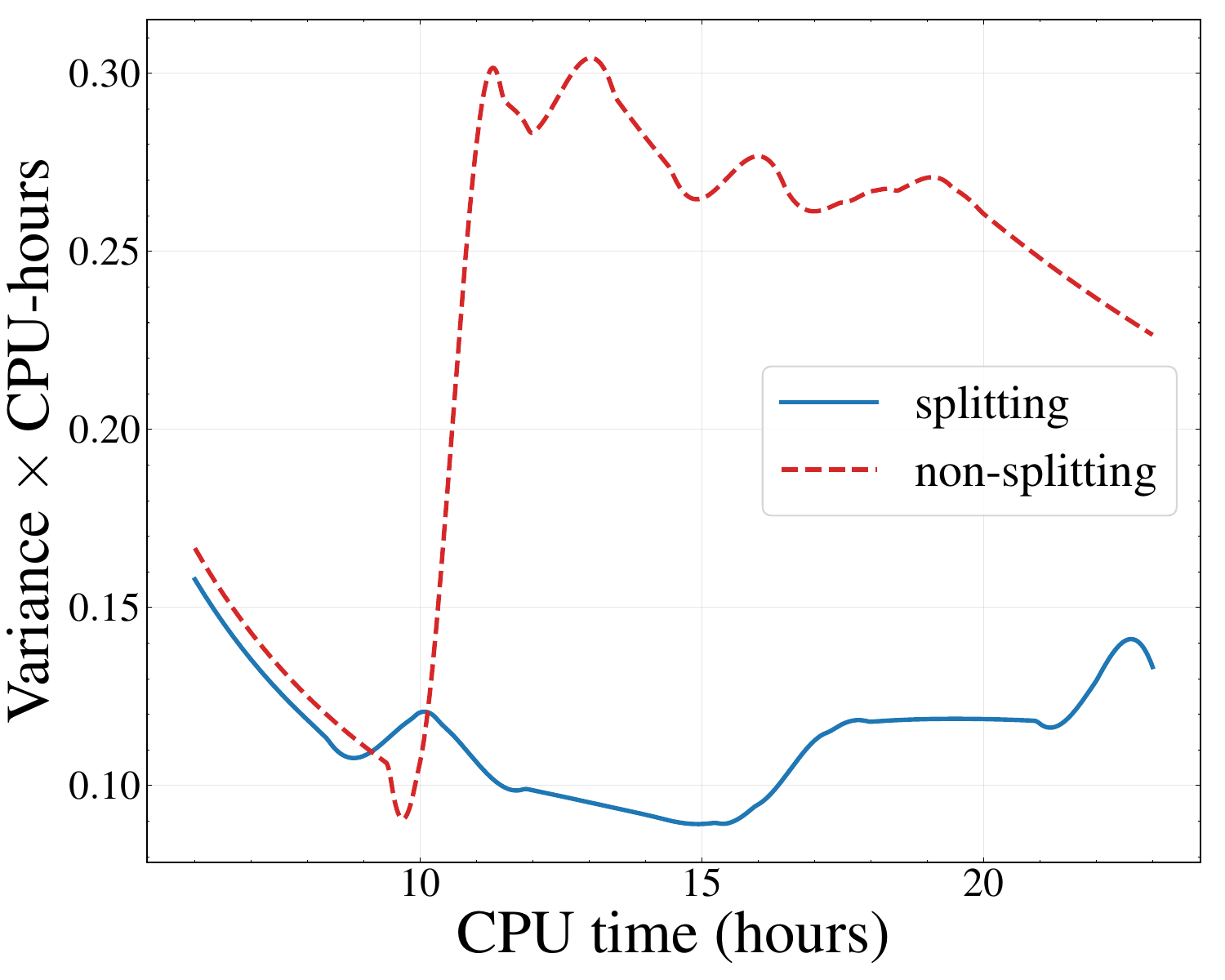}\hfill\includegraphics[width=0.19\textwidth]{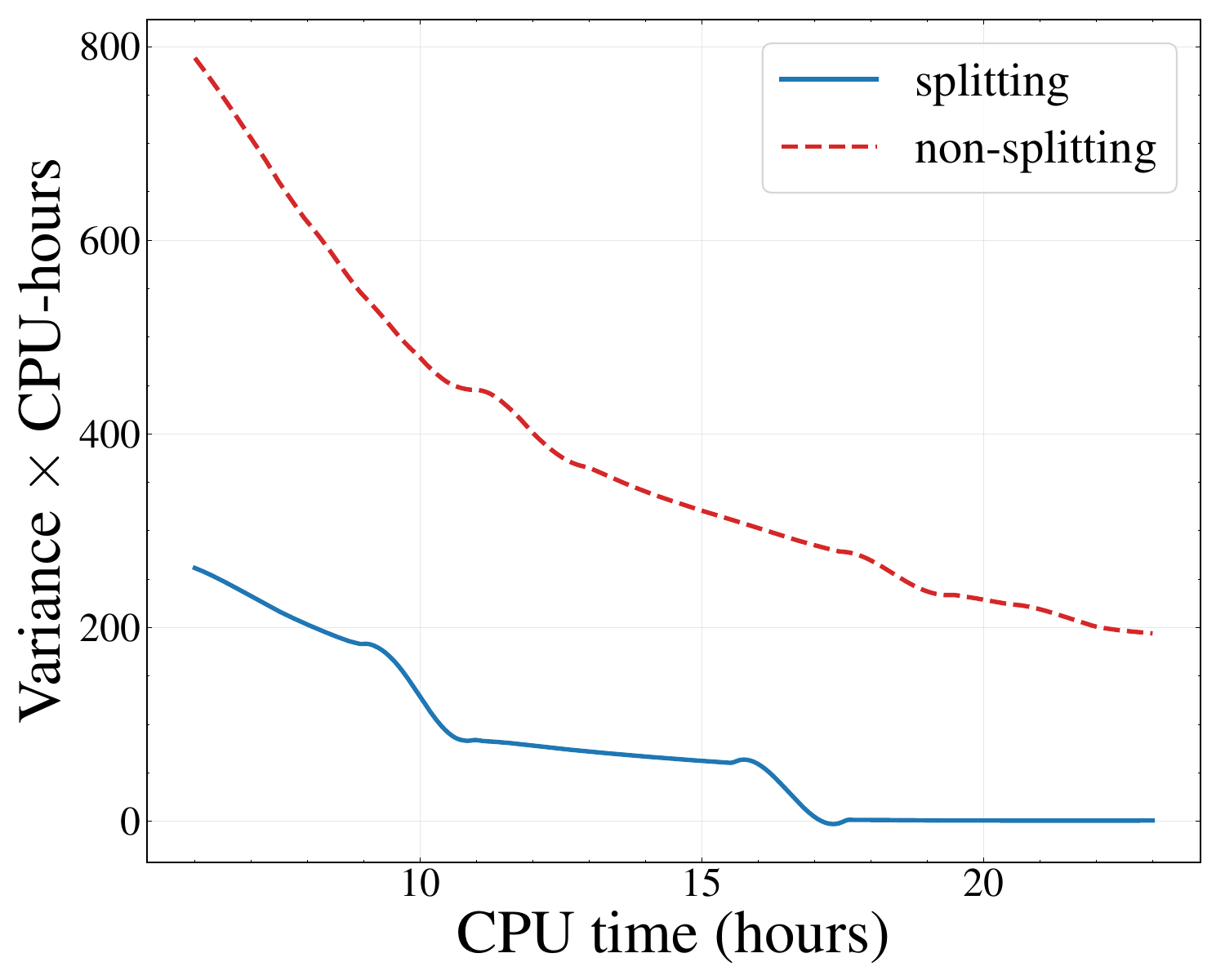}\hfill\includegraphics[width=0.19\textwidth]{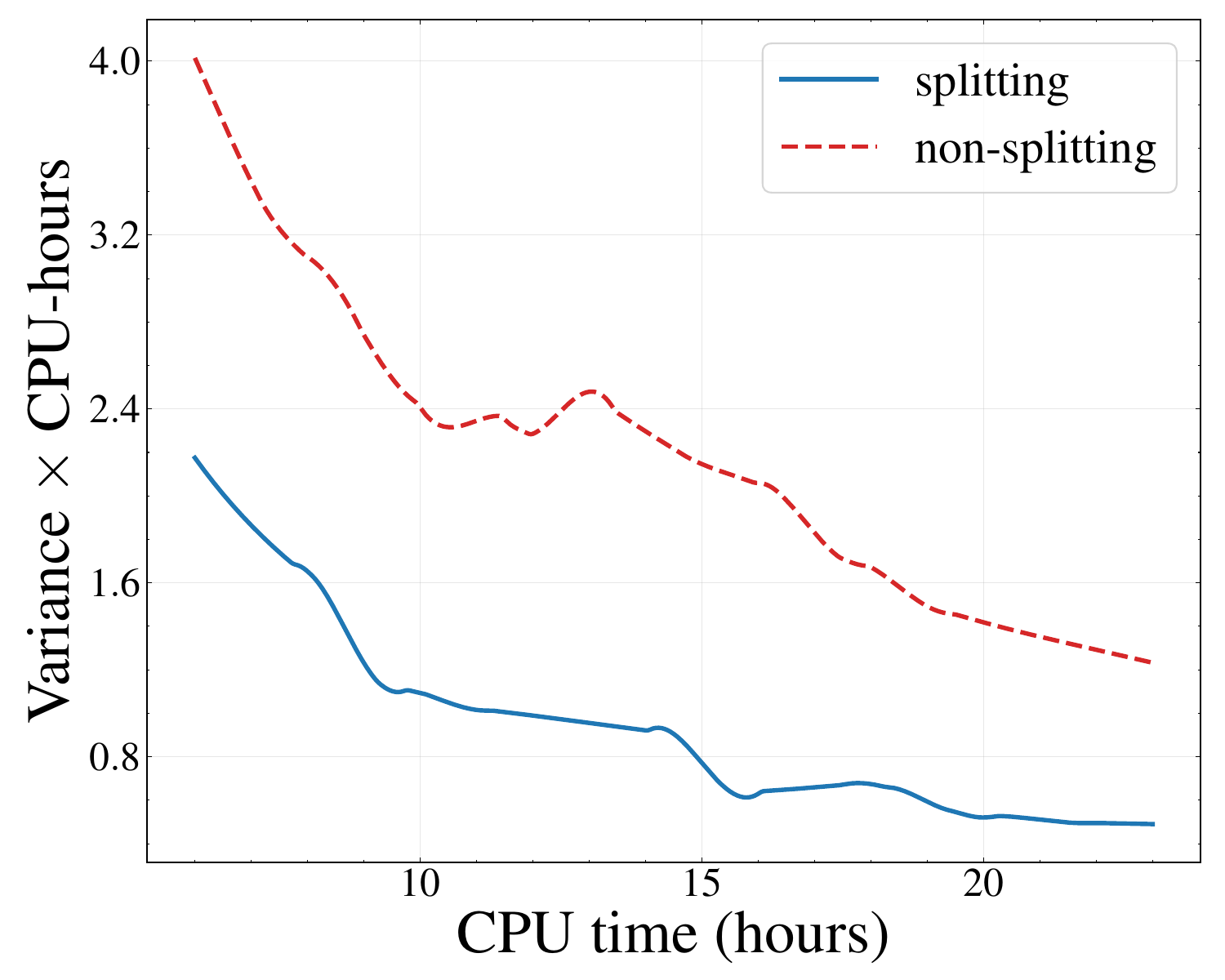}\\
    \makebox[0.19\textwidth]{\footnotesize (a) variance of arrow}\hfill\makebox[0.19\textwidth]{\footnotesize (b) variance of ffmpeg}\hfill\makebox[0.19\textwidth]{\footnotesize (c) variance of grok}\hfill\makebox[0.19\textwidth]{\footnotesize (d) variance of libhevc}\hfill\makebox[0.19\textwidth]{\footnotesize (e) variance of libhtp}\\[3pt]
    \includegraphics[width=0.19\textwidth]{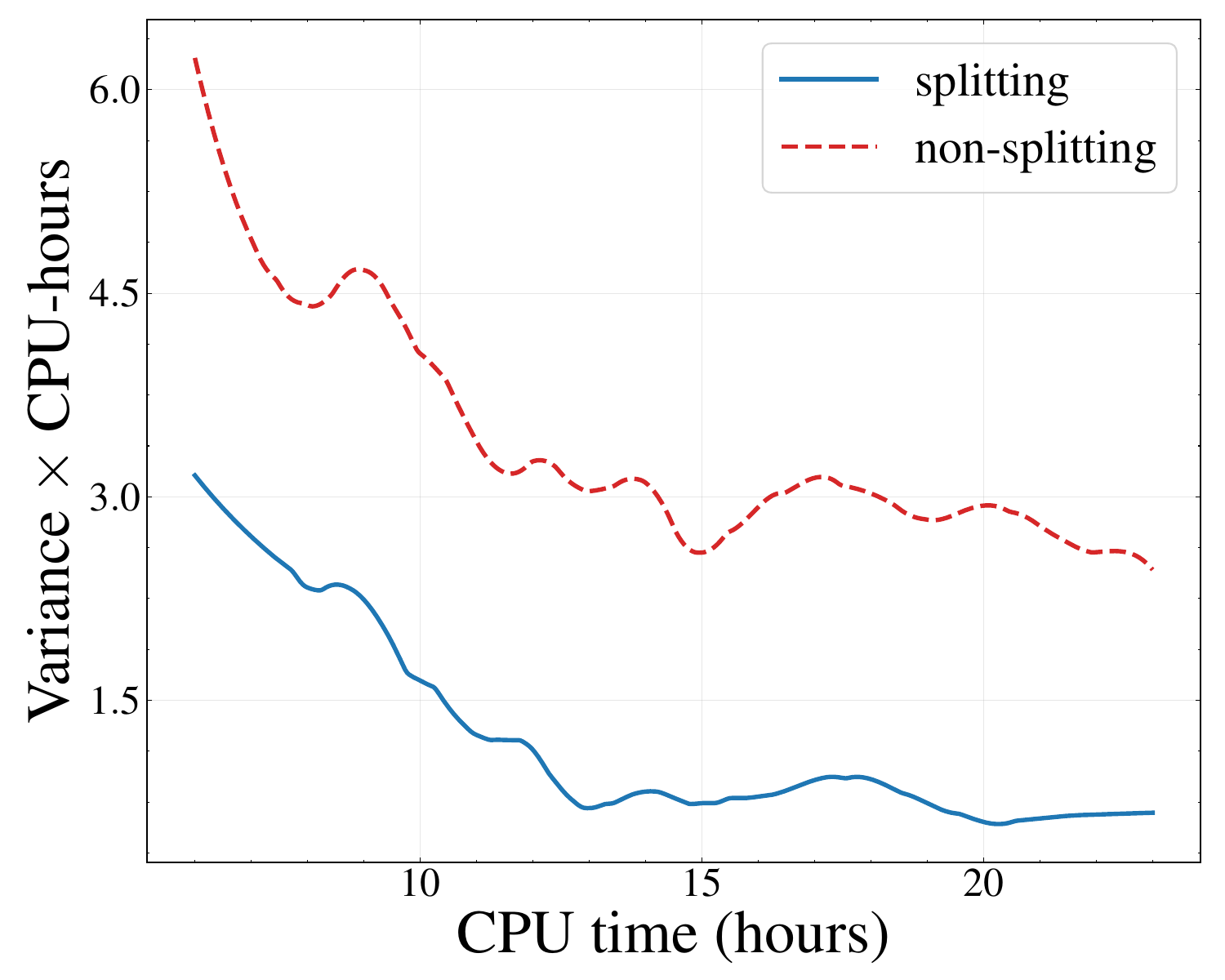}\hfill\includegraphics[width=0.19\textwidth]{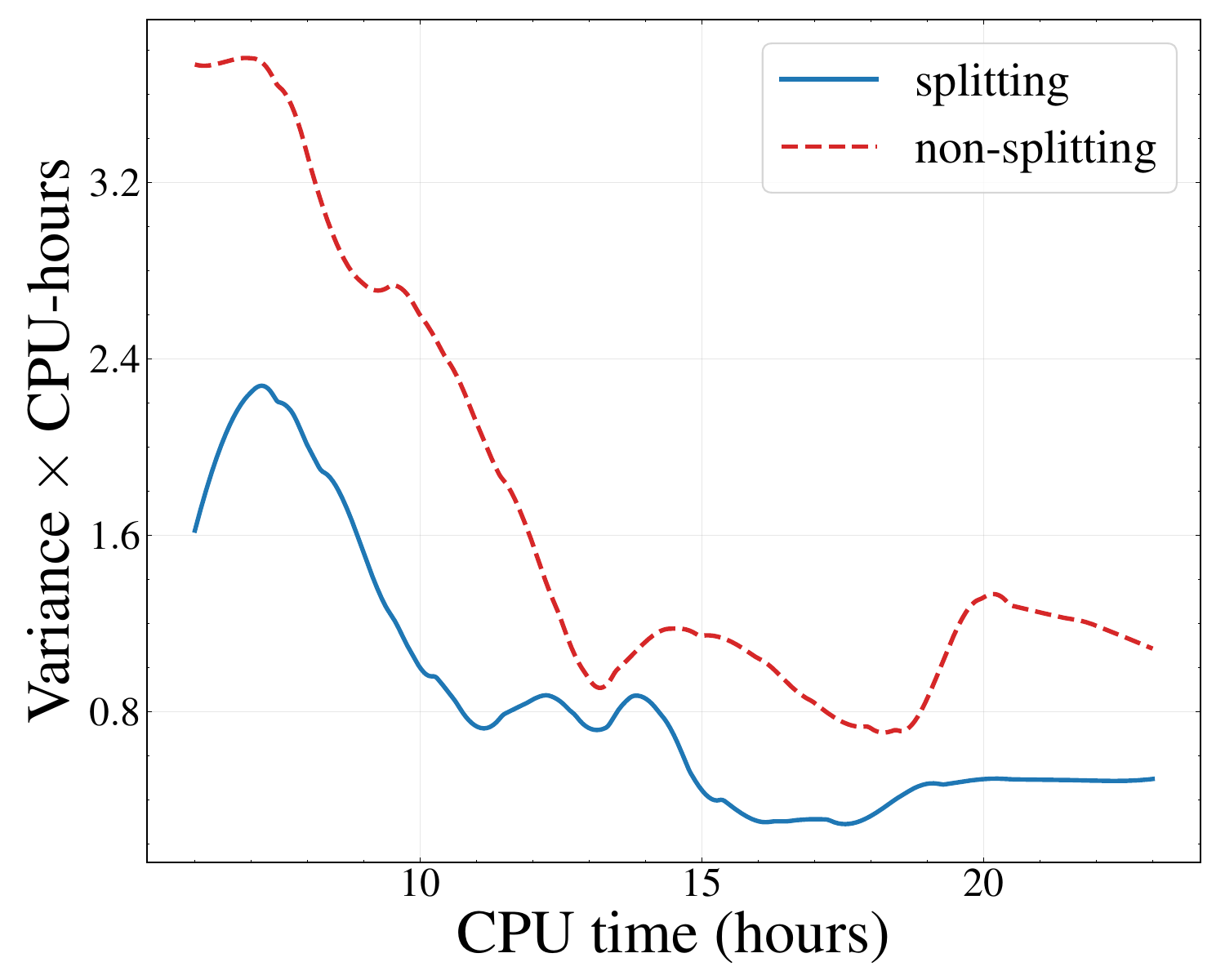}\hfill\includegraphics[width=0.19\textwidth]{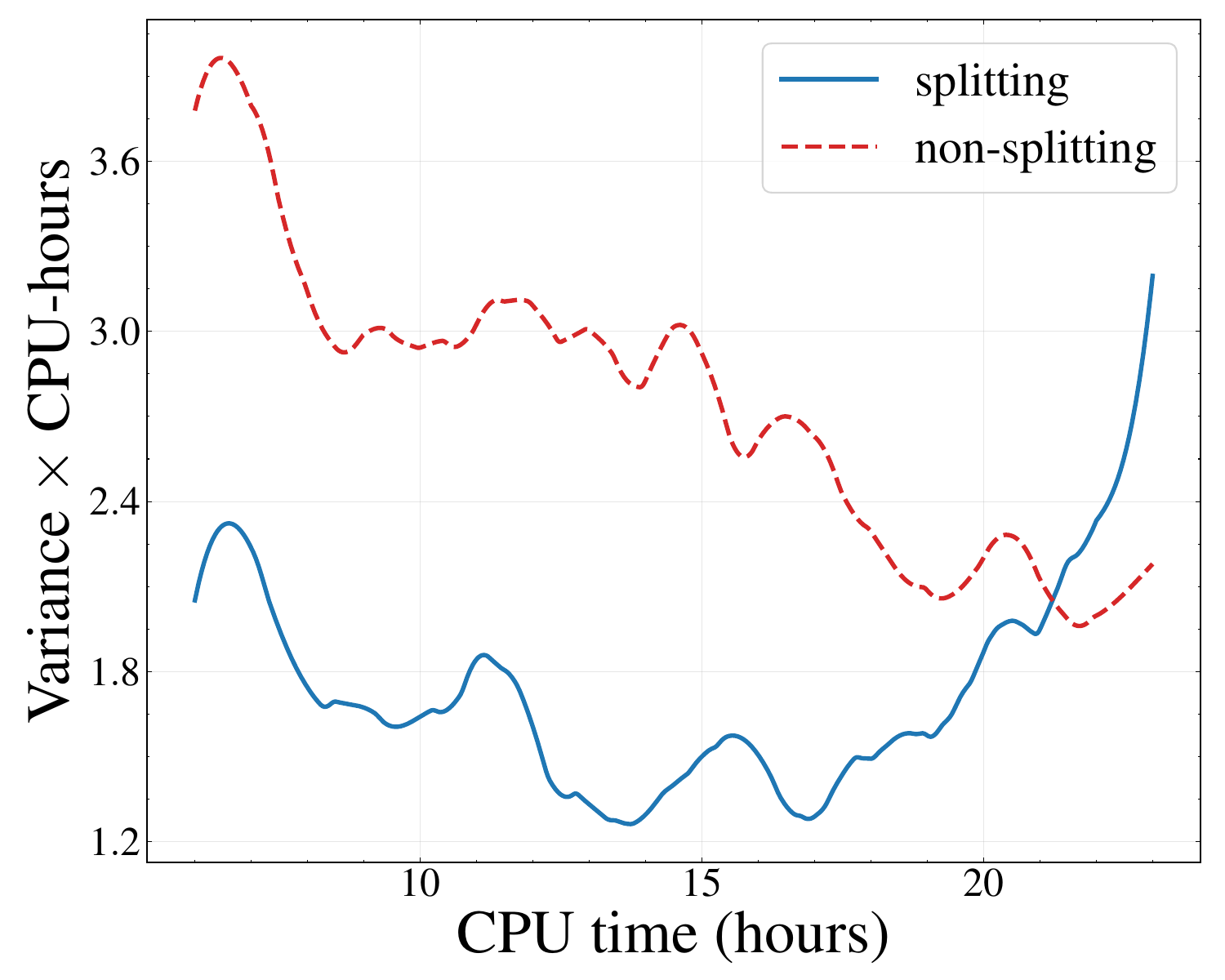}\hfill\includegraphics[width=0.19\textwidth]{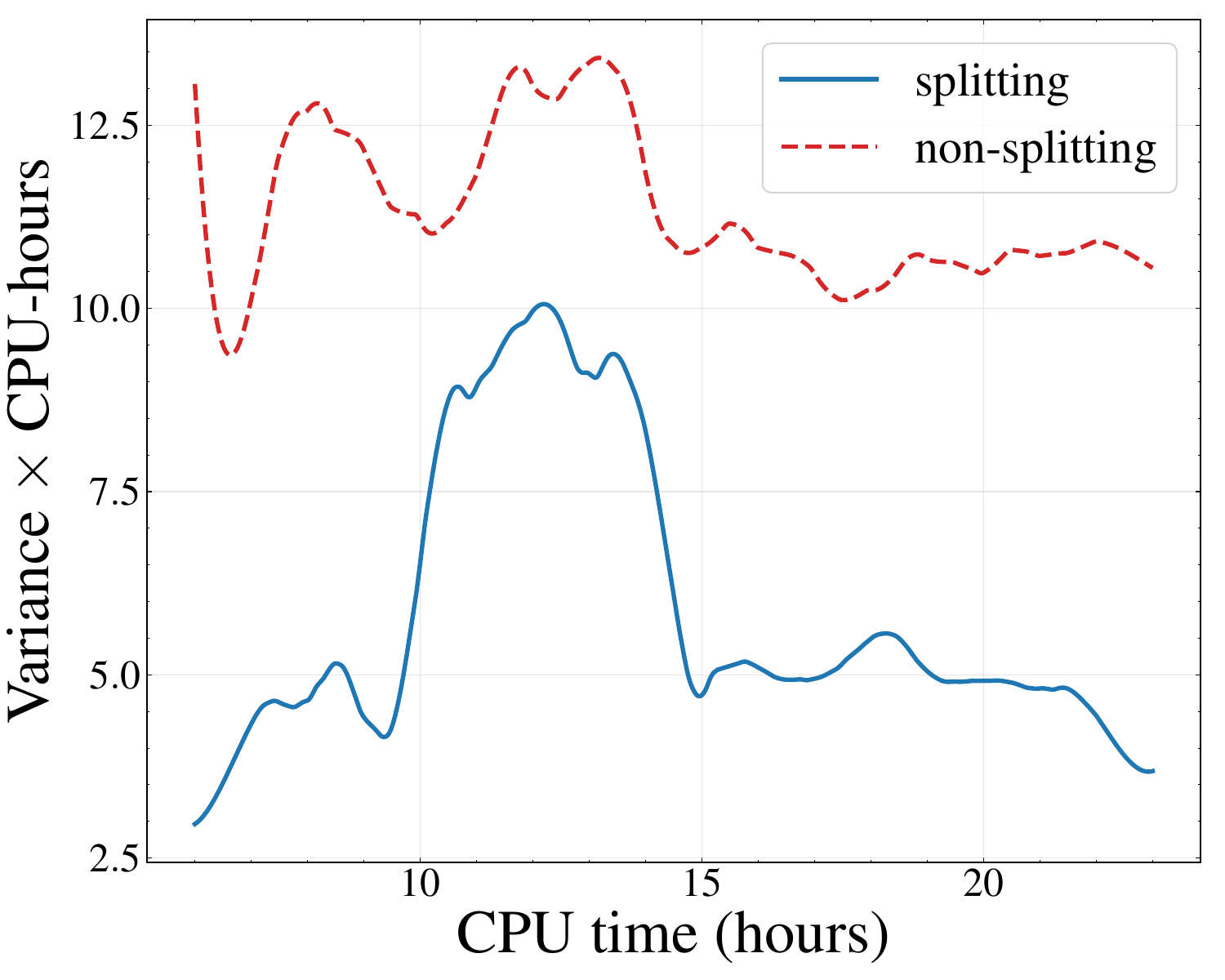}\hfill\includegraphics[width=0.19\textwidth]{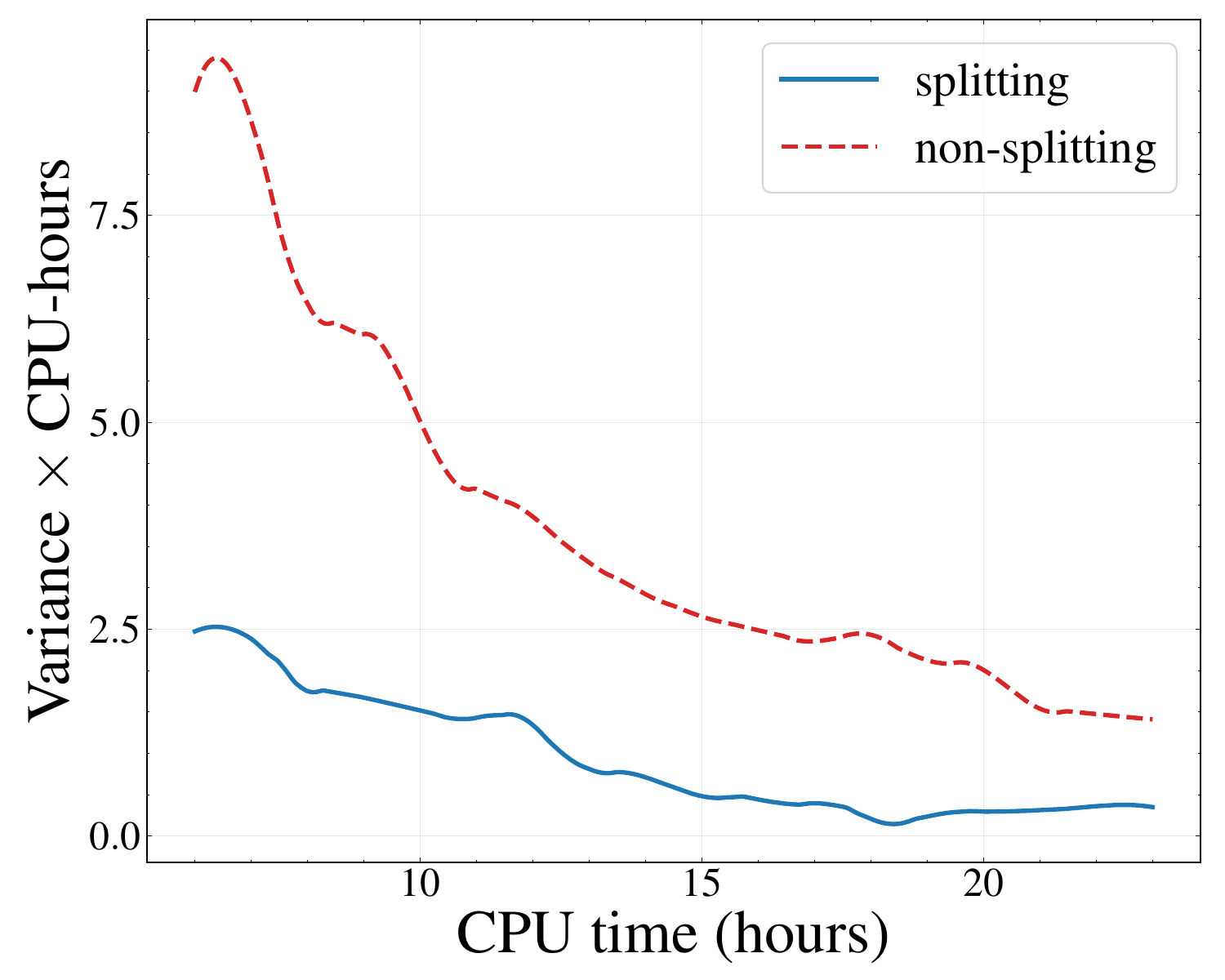}\\
    \makebox[0.19\textwidth]{\footnotesize (f) variance of matio}\hfill\makebox[0.19\textwidth]{\footnotesize (g) variance of openh264}\hfill\makebox[0.19\textwidth]{\footnotesize (h) variance of php}\hfill\makebox[0.19\textwidth]{\footnotesize (i) variance of poppler}\hfill\makebox[0.19\textwidth]{\footnotesize (j) variance of stb}
    \caption{Comparisons of variance across 10 benchmarks for fuzzer Honggfuzz.}
    \label{fig:comparison_honggfuzz}
\end{figure*}

\begin{figure*}[tp]
    \centering
    \includegraphics[width=0.19\textwidth]{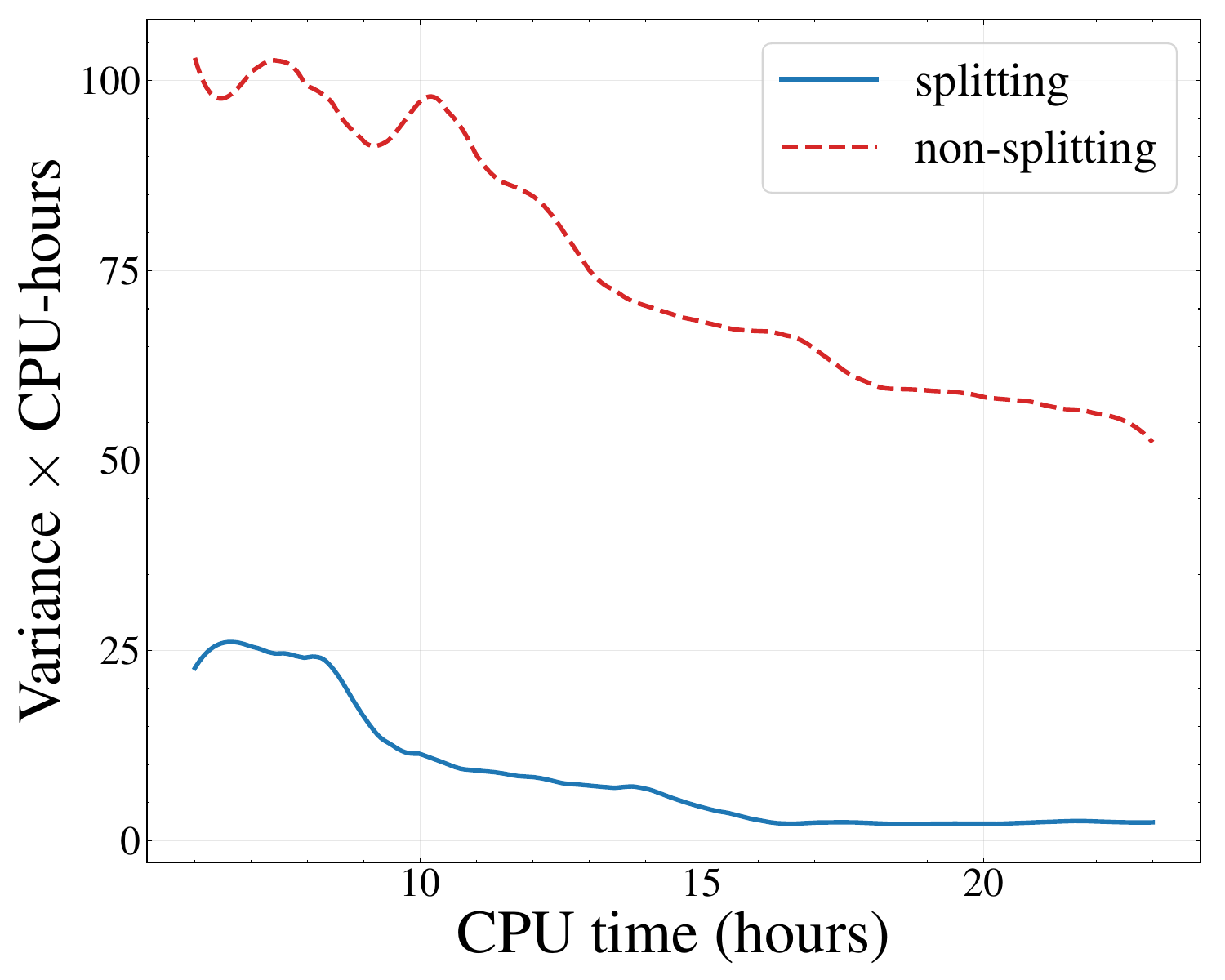}\hfill\includegraphics[width=0.19\textwidth]{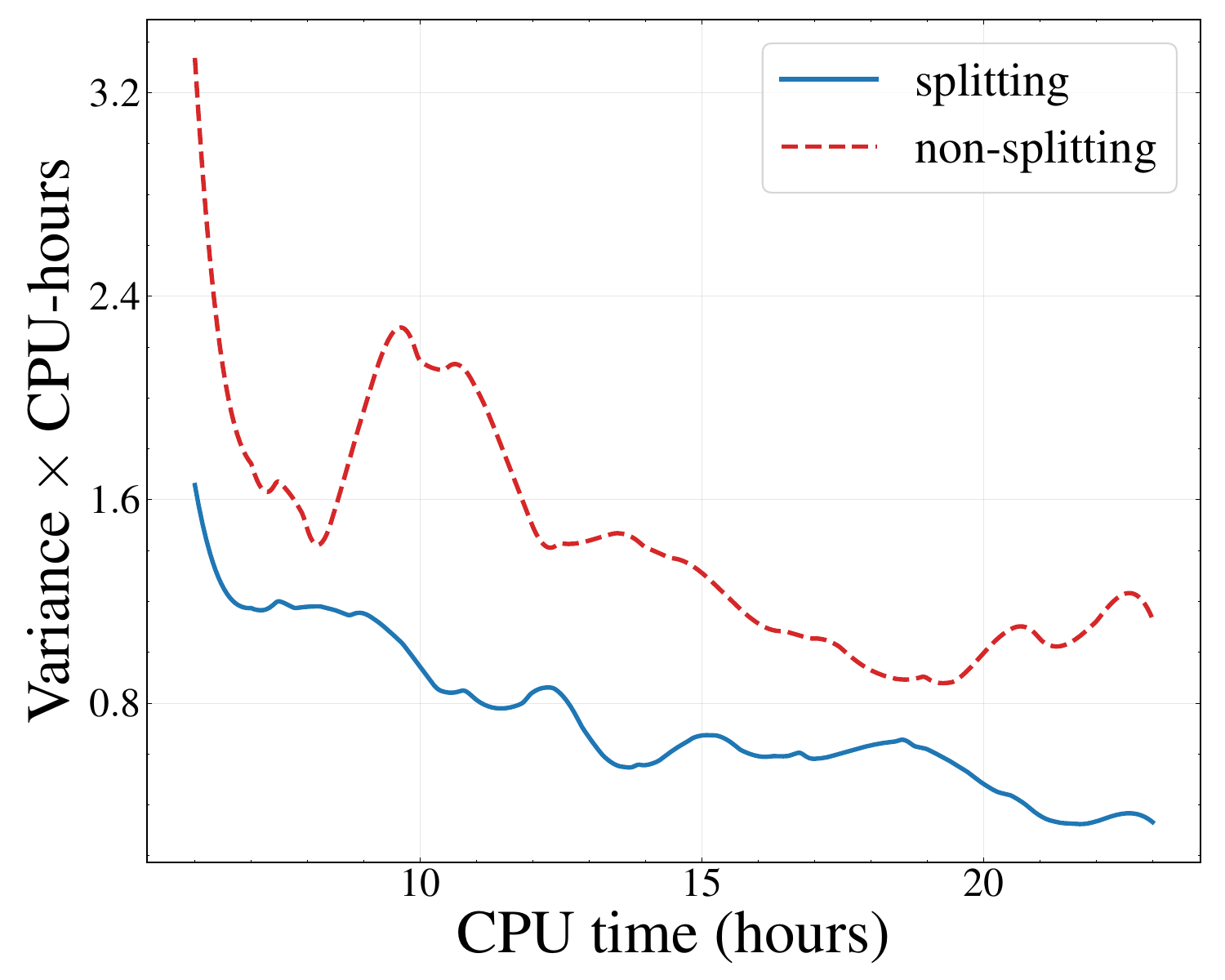}\hfill\includegraphics[width=0.19\textwidth]{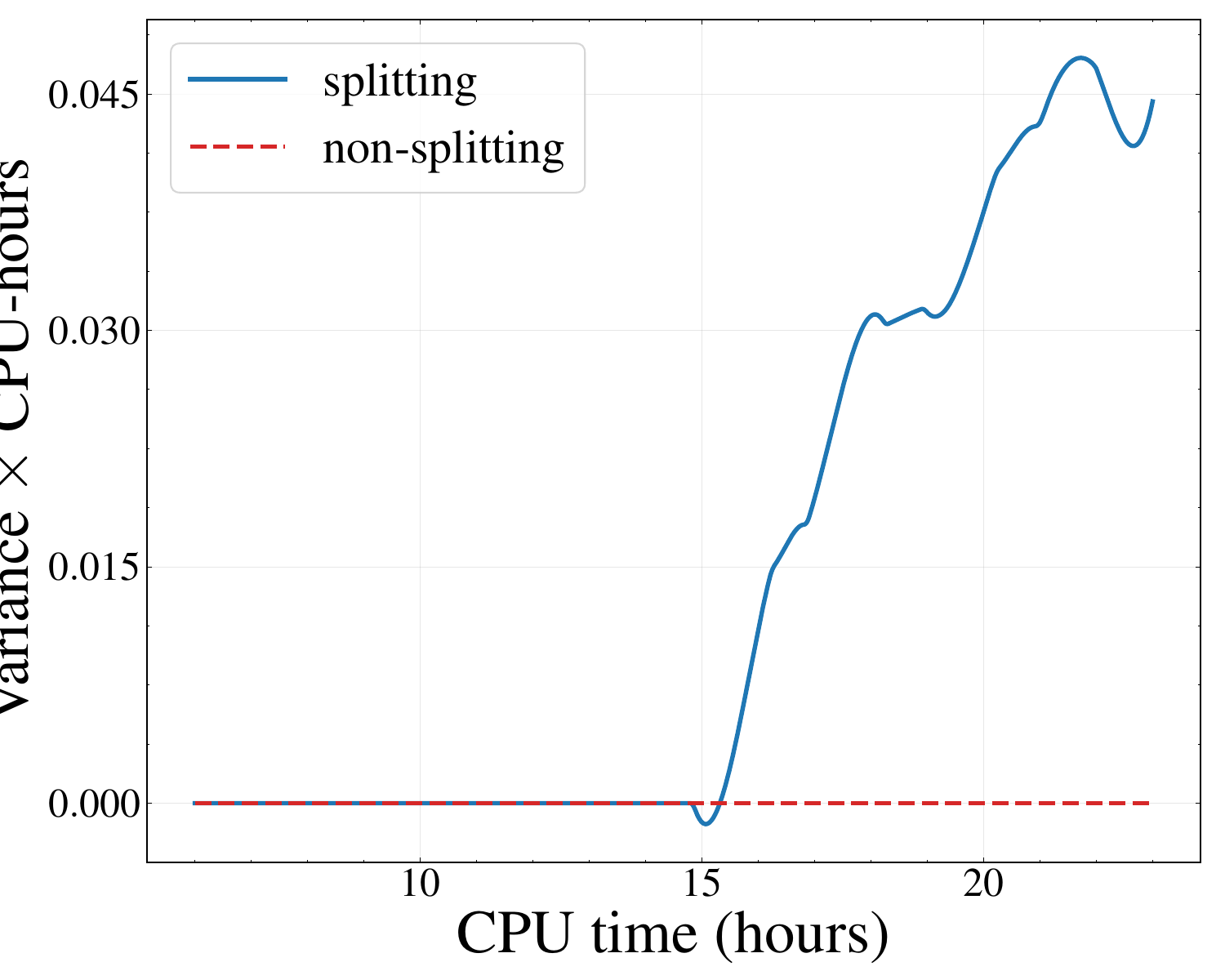}\hfill\includegraphics[width=0.19\textwidth]{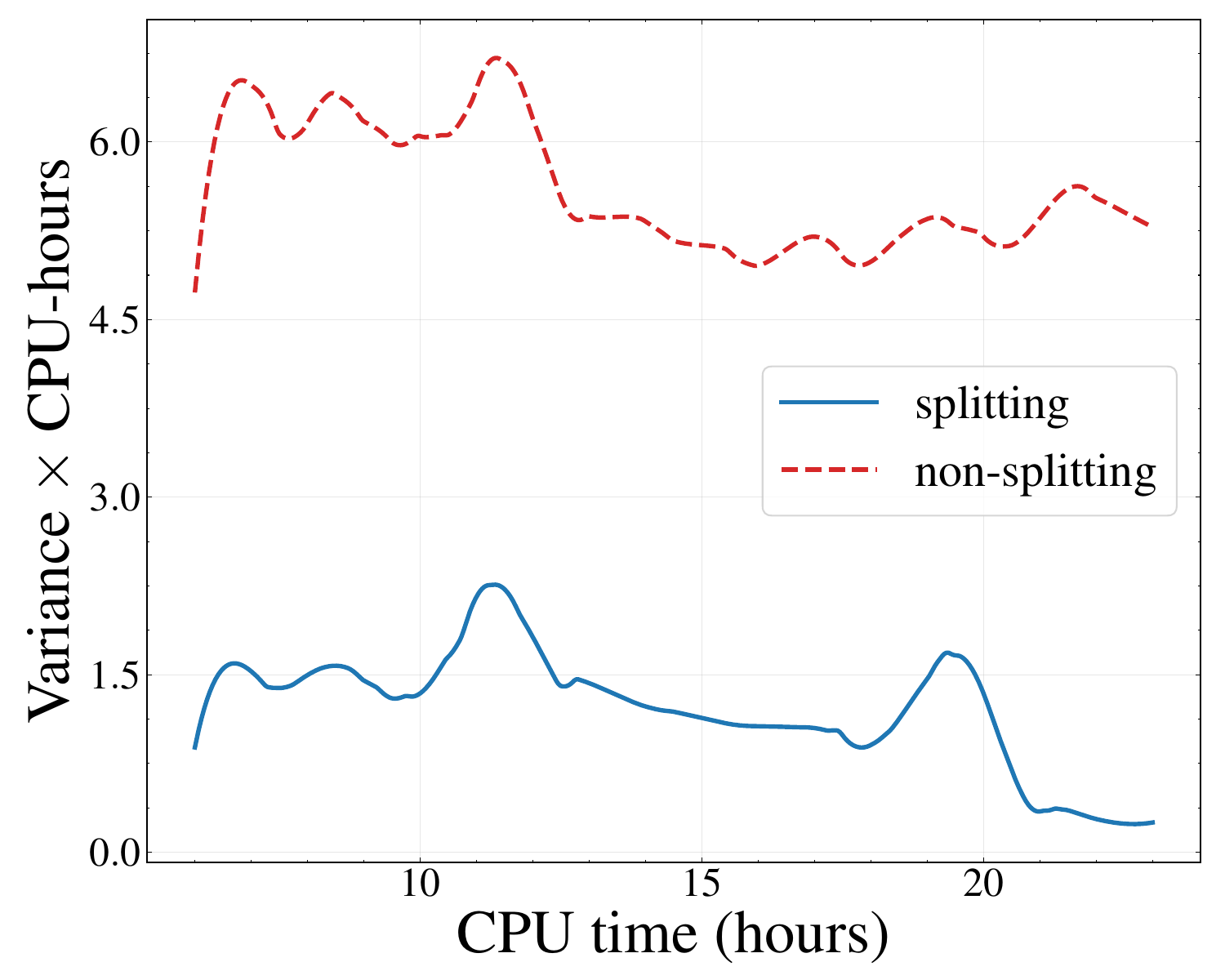}\hfill\includegraphics[width=0.19\textwidth]{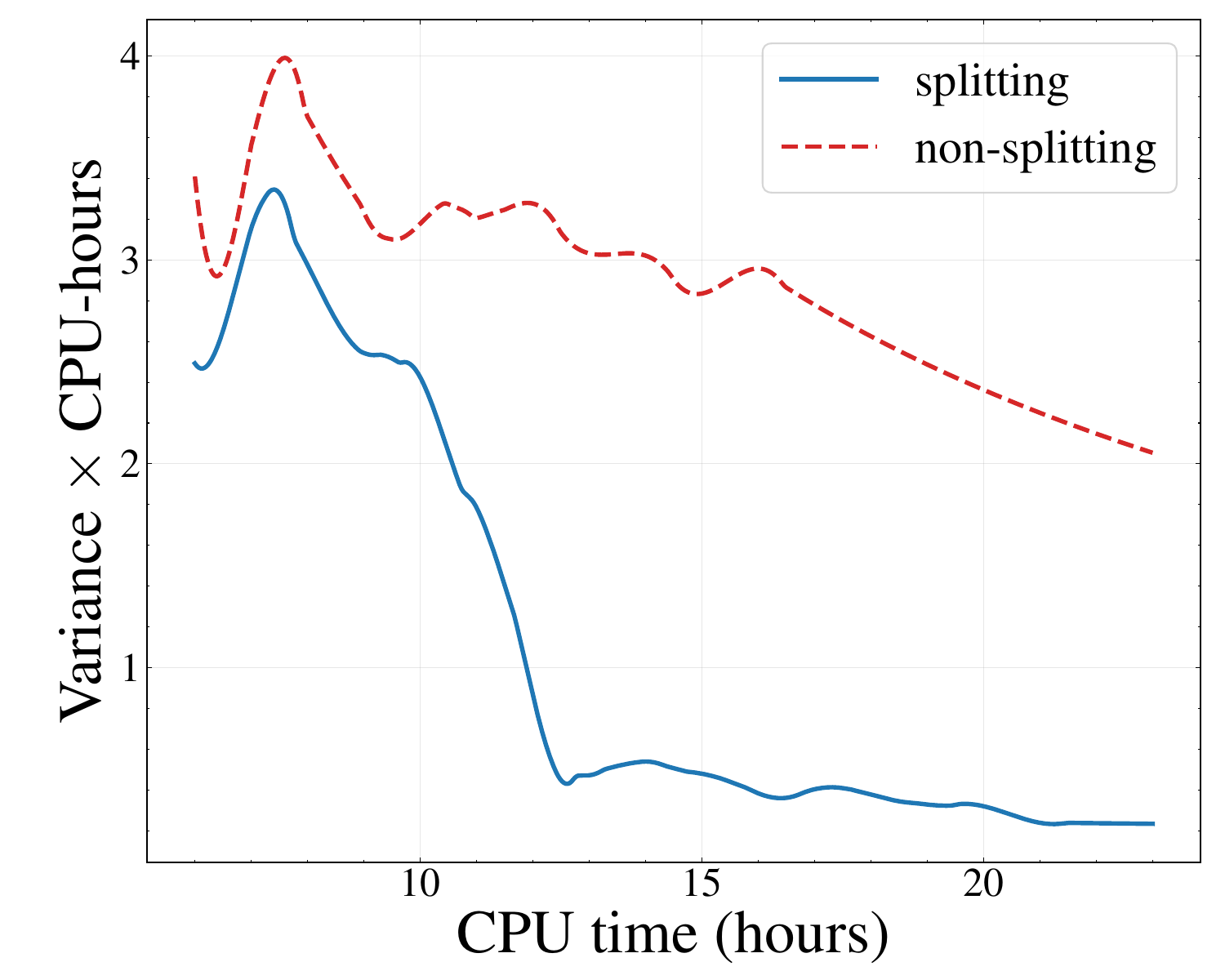}\\
    \makebox[0.19\textwidth]{\footnotesize (a) variance of arrow}\hfill\makebox[0.19\textwidth]{\footnotesize (b) variance of ffmpeg}\hfill\makebox[0.19\textwidth]{\footnotesize (c) variance of grok}\hfill\makebox[0.19\textwidth]{\footnotesize (d) variance of libhevc}\hfill\makebox[0.19\textwidth]{\footnotesize (e) variance of libhtp}\\[3pt]
    \includegraphics[width=0.19\textwidth]{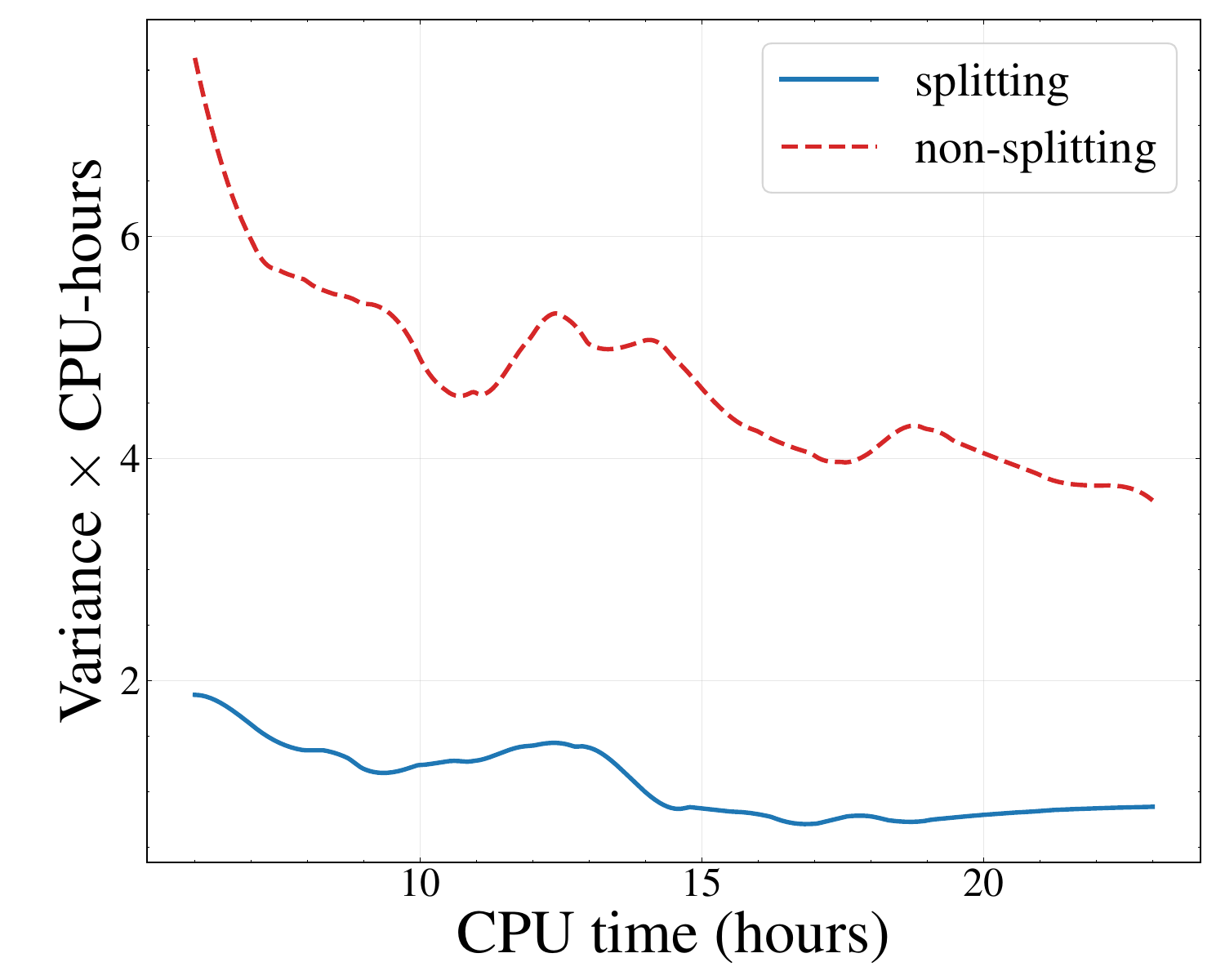}\hfill\includegraphics[width=0.19\textwidth]{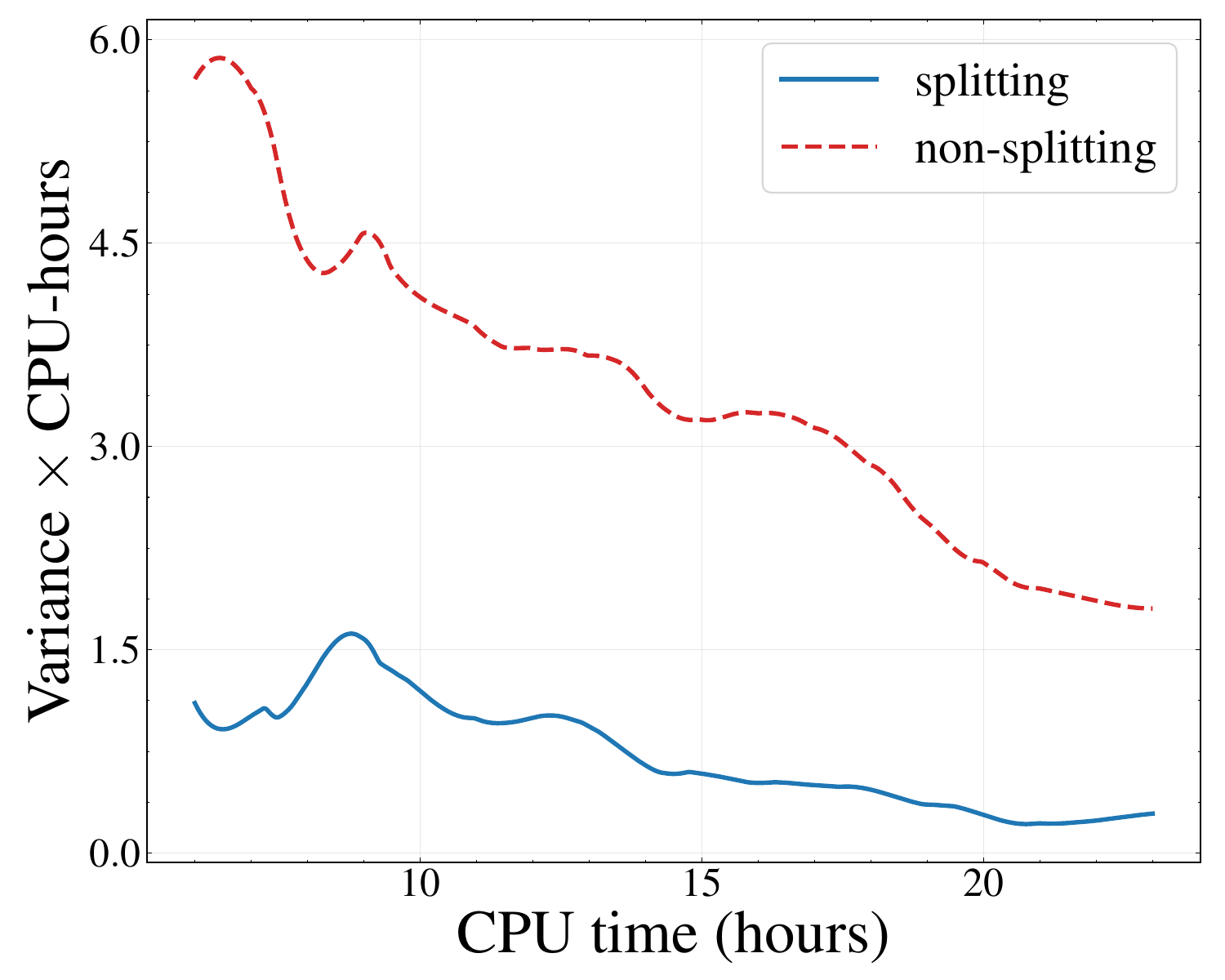}\hfill\includegraphics[width=0.19\textwidth]{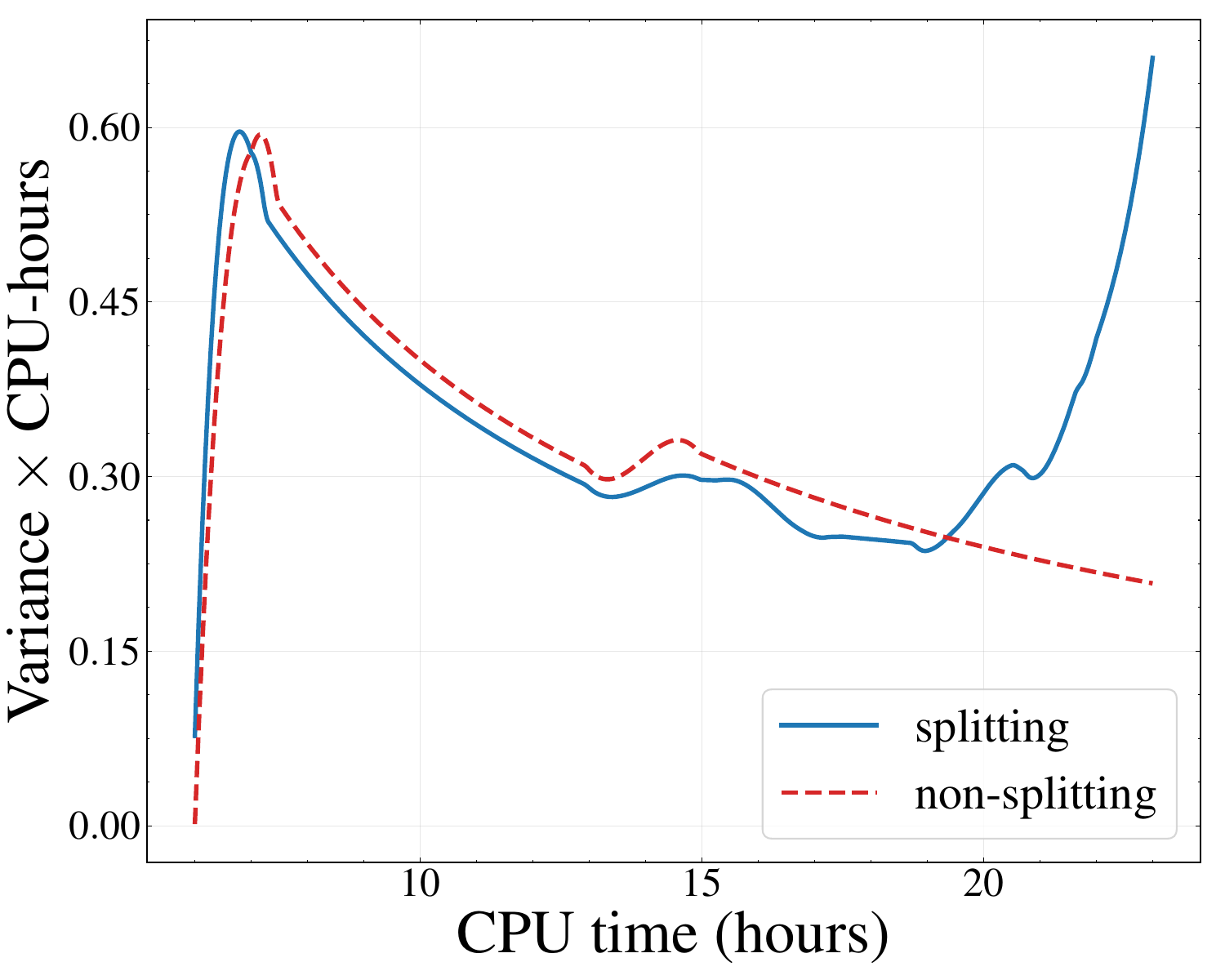}\hfill\includegraphics[width=0.19\textwidth]{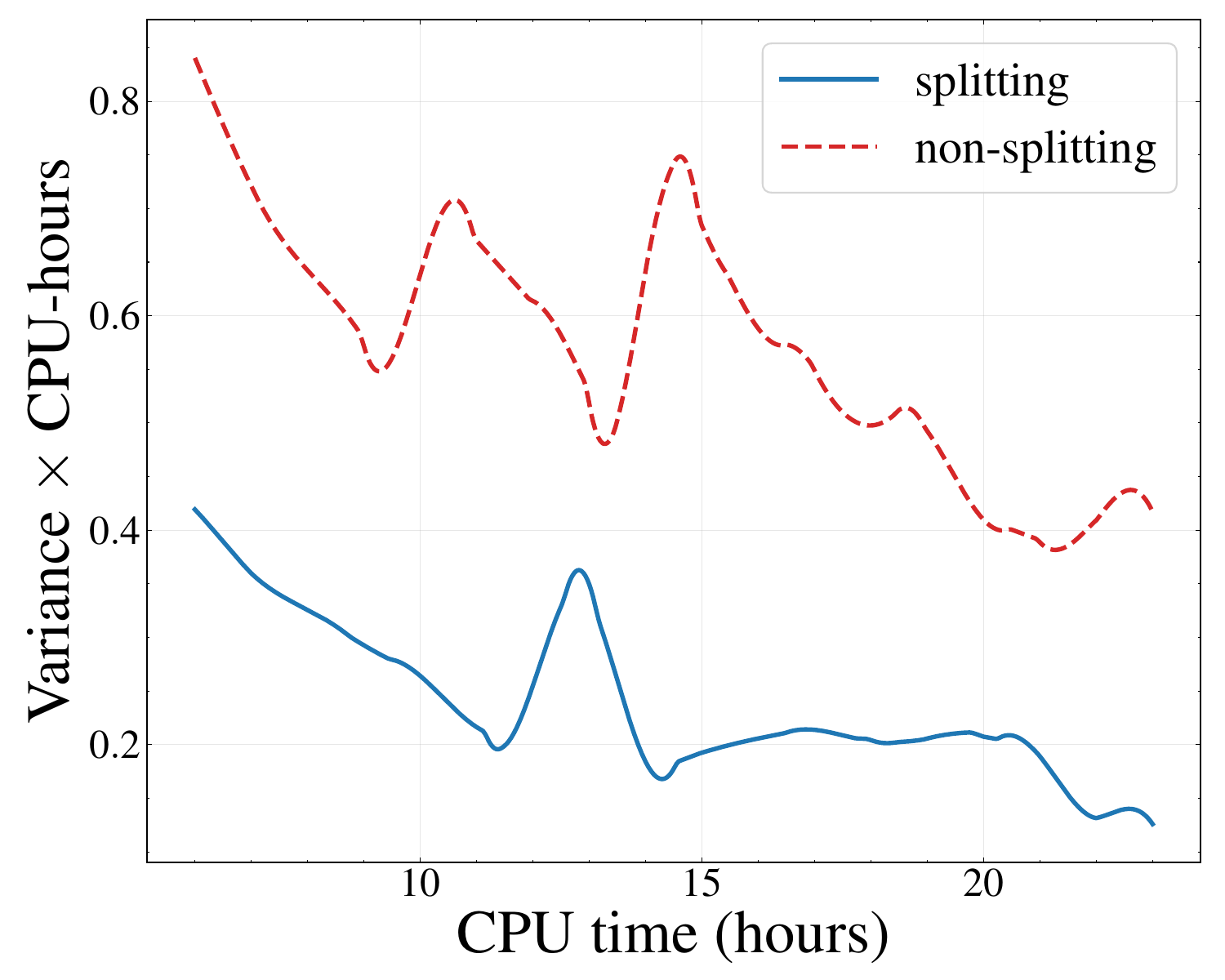}\hfill\includegraphics[width=0.19\textwidth]{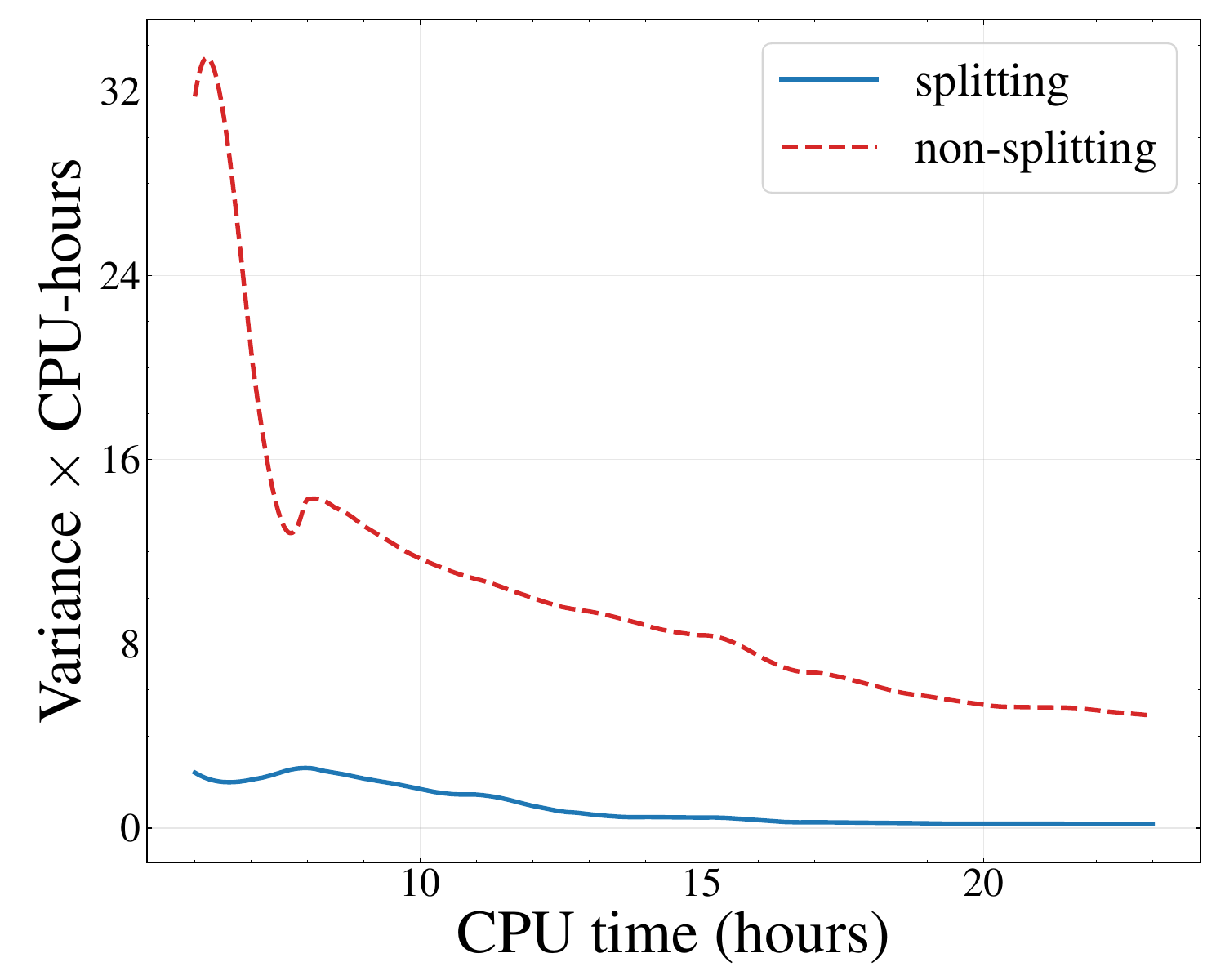}\\
    \makebox[0.19\textwidth]{\footnotesize (f) variance of matio}\hfill\makebox[0.19\textwidth]{\footnotesize (g) variance of openh264}\hfill\makebox[0.19\textwidth]{\footnotesize (h) variance of php}\hfill\makebox[0.19\textwidth]{\footnotesize (i) variance of poppler}\hfill\makebox[0.19\textwidth]{\footnotesize (j) variance of stb}
    \caption{Comparisons of variance across 10 benchmarks for fuzzer libFuzzer.}
    \label{fig:comparison_libfuzzer}
\end{figure*}

\begin{figure*}[tp]
    \centering
    \includegraphics[width=0.19\textwidth]{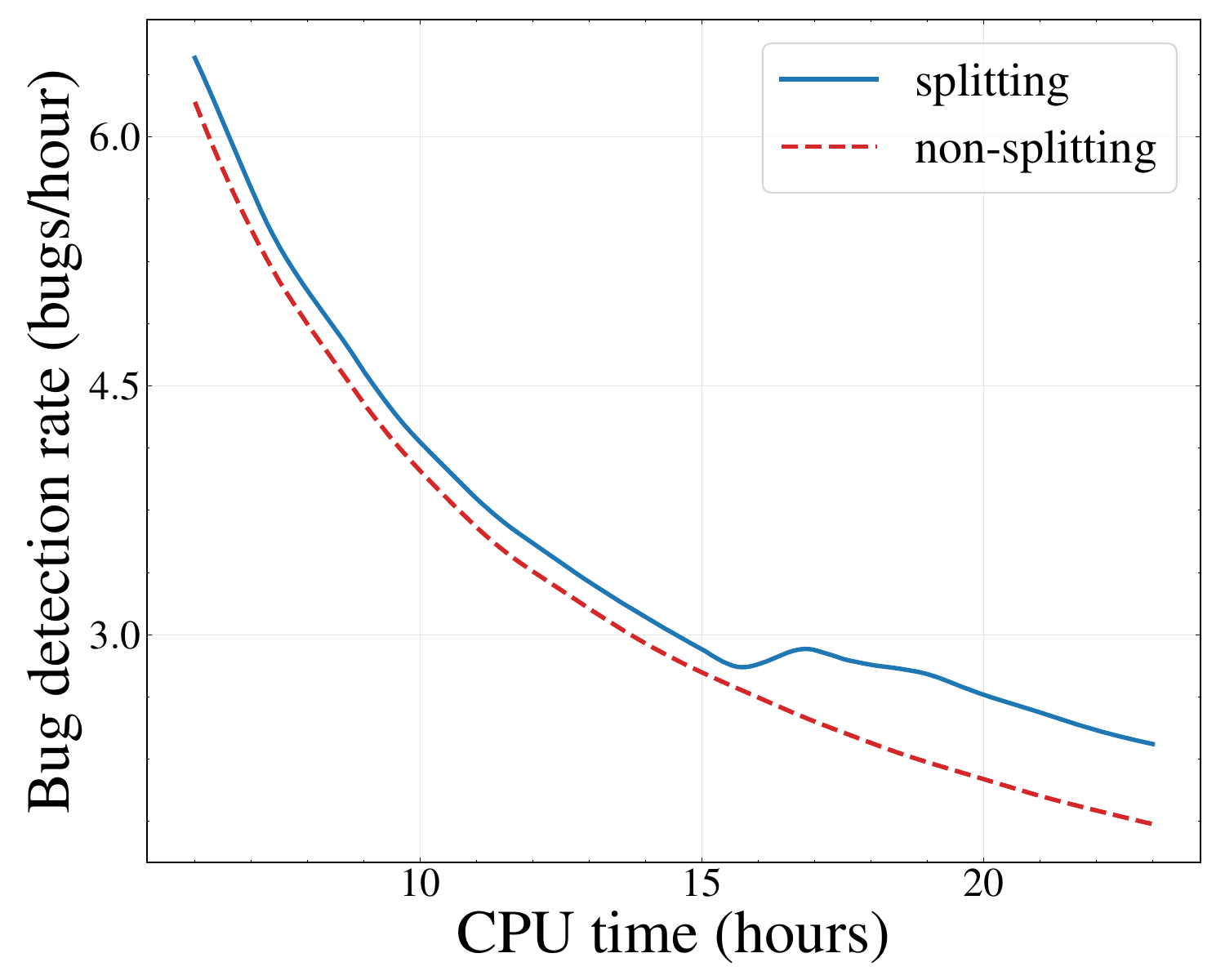}\hfill\includegraphics[width=0.19\textwidth]{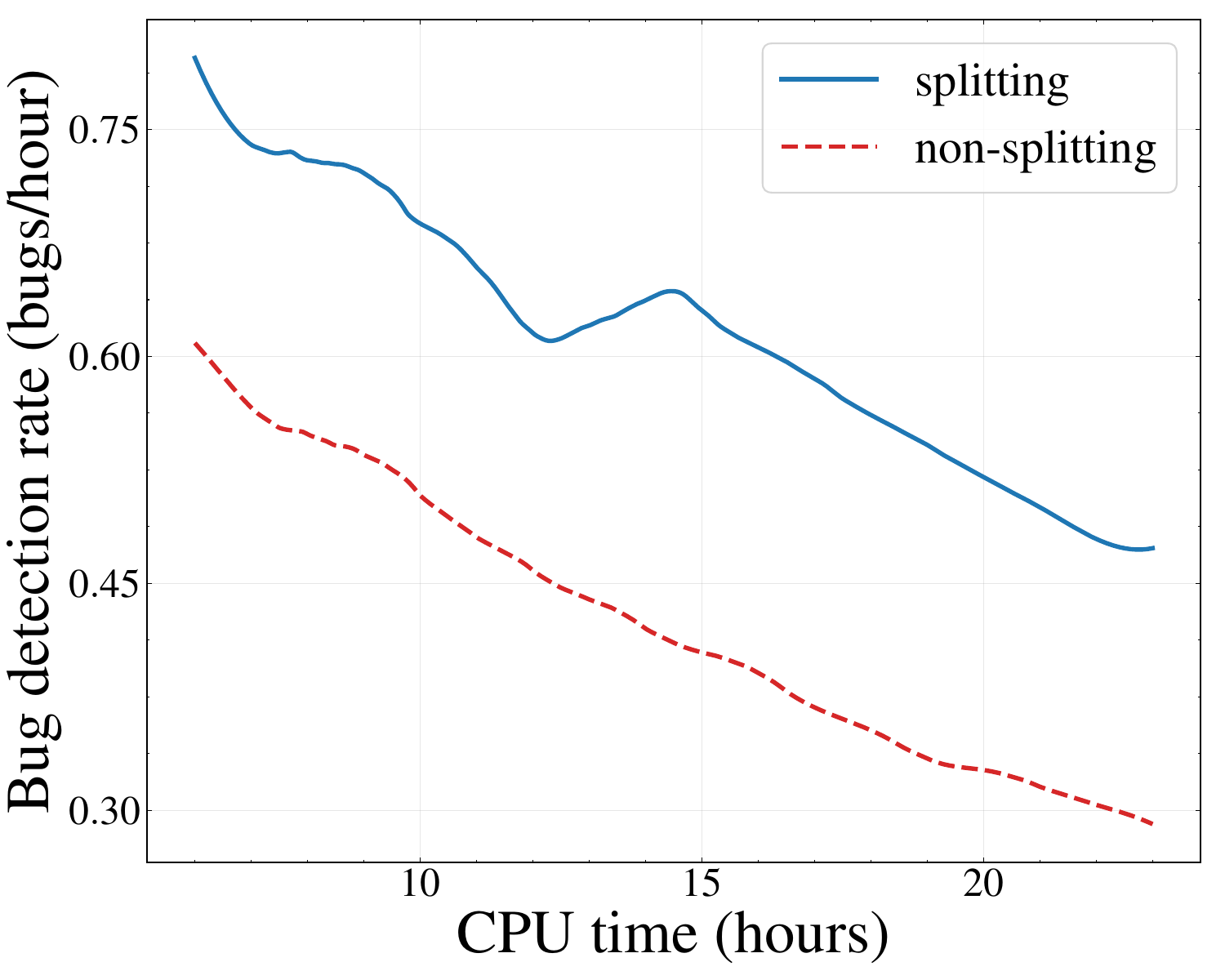}\hfill\includegraphics[width=0.19\textwidth]{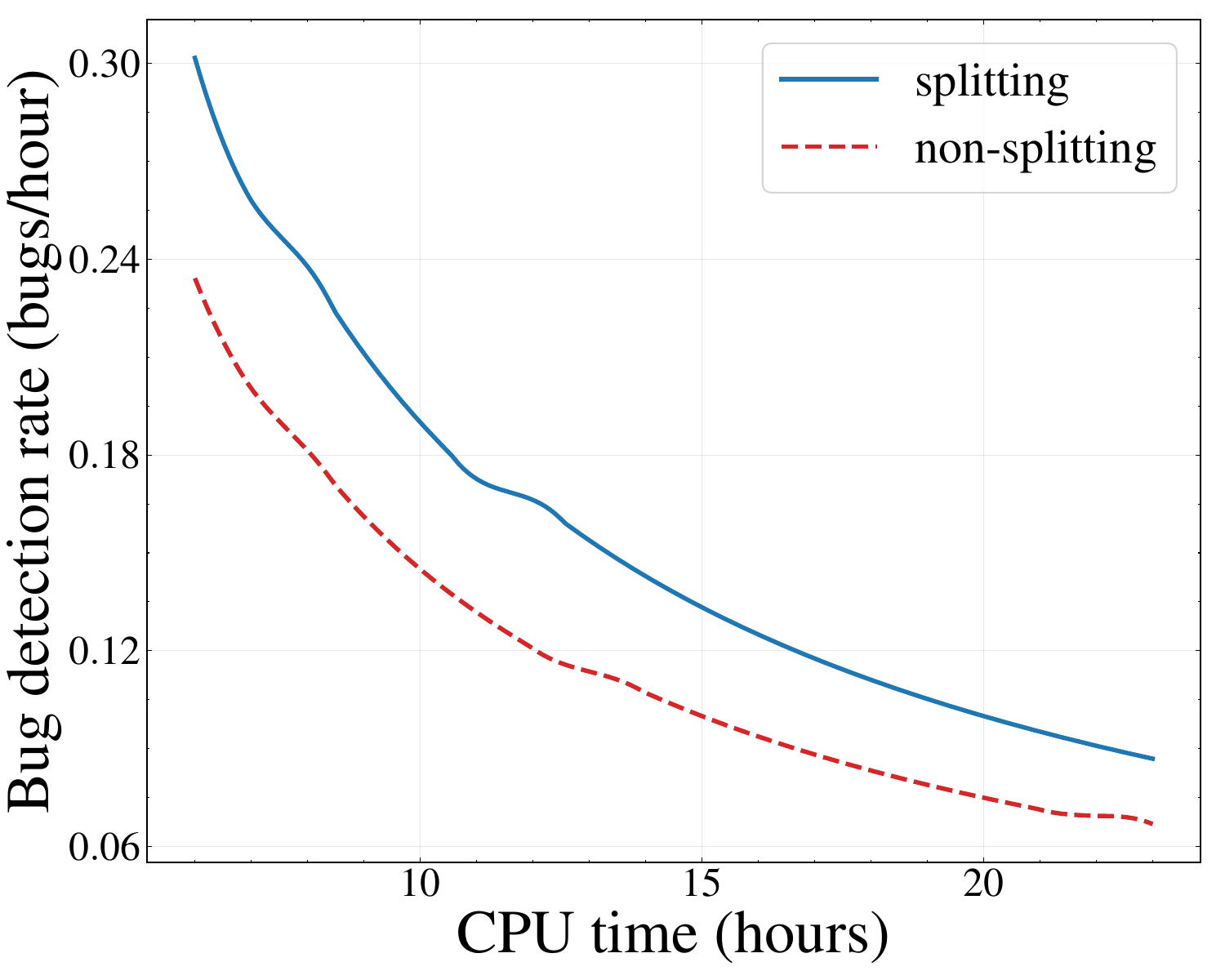}\hfill\includegraphics[width=0.19\textwidth]{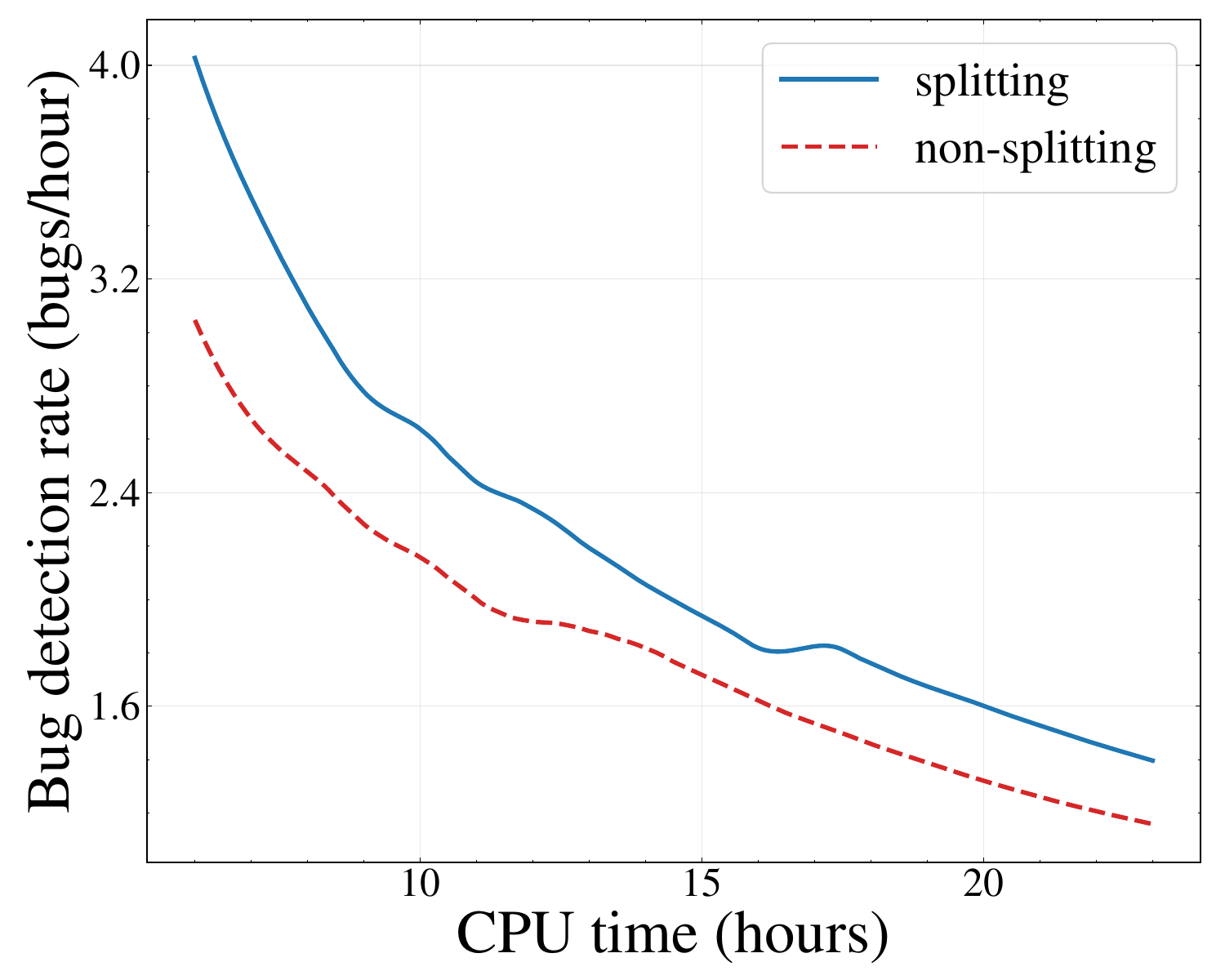}\hfill\includegraphics[width=0.19\textwidth]{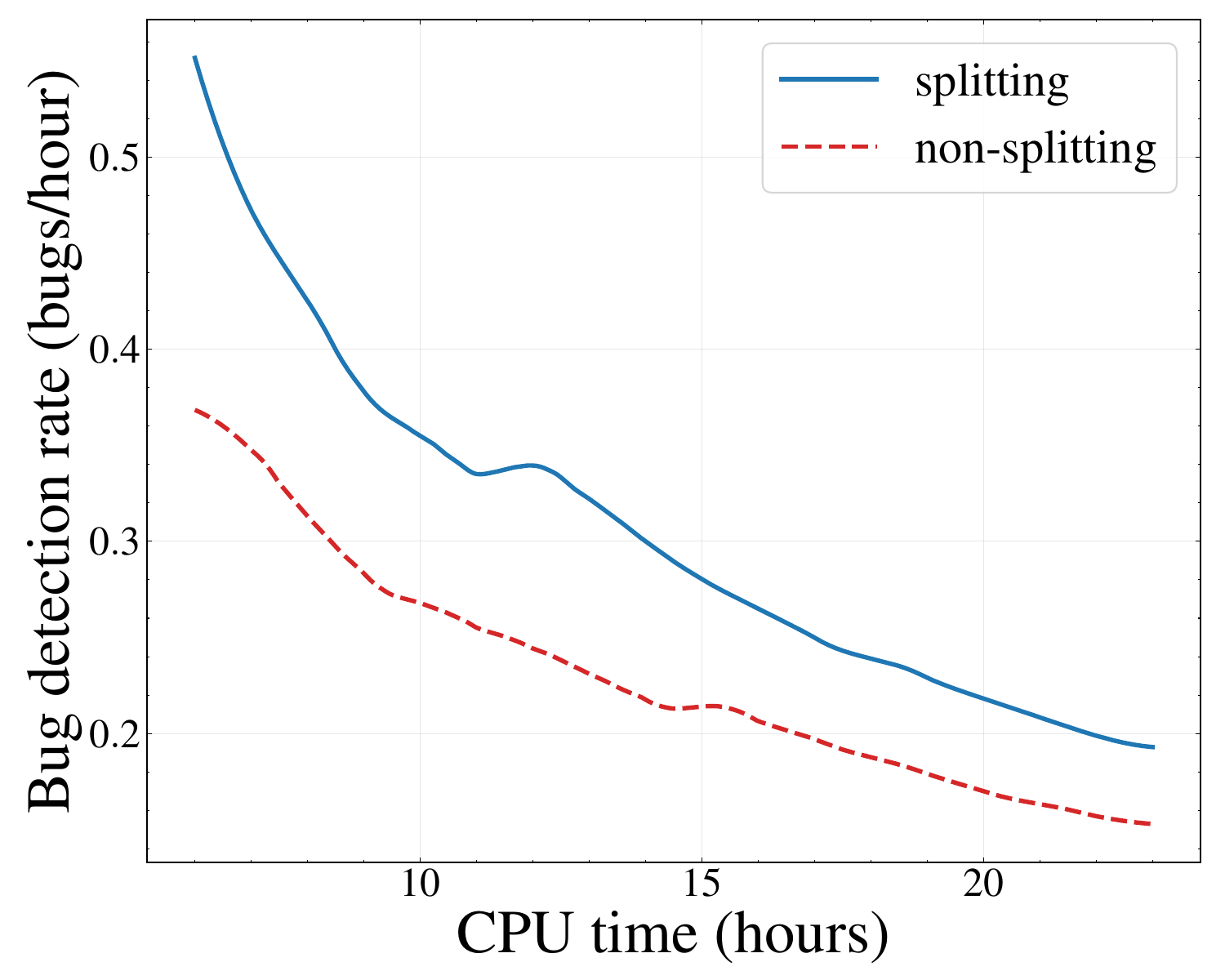}\\
    \makebox[0.19\textwidth]{\footnotesize (a) BDR of arrow}\hfill\makebox[0.19\textwidth]{\footnotesize (b) BDR of ffmpeg}\hfill\makebox[0.19\textwidth]{\footnotesize (c) BDR of grok}\hfill\makebox[0.19\textwidth]{\footnotesize (d) BDR of libhevc}\hfill\makebox[0.19\textwidth]{\footnotesize (e) BDR of libhtp}\\[3pt]
    \includegraphics[width=0.19\textwidth]{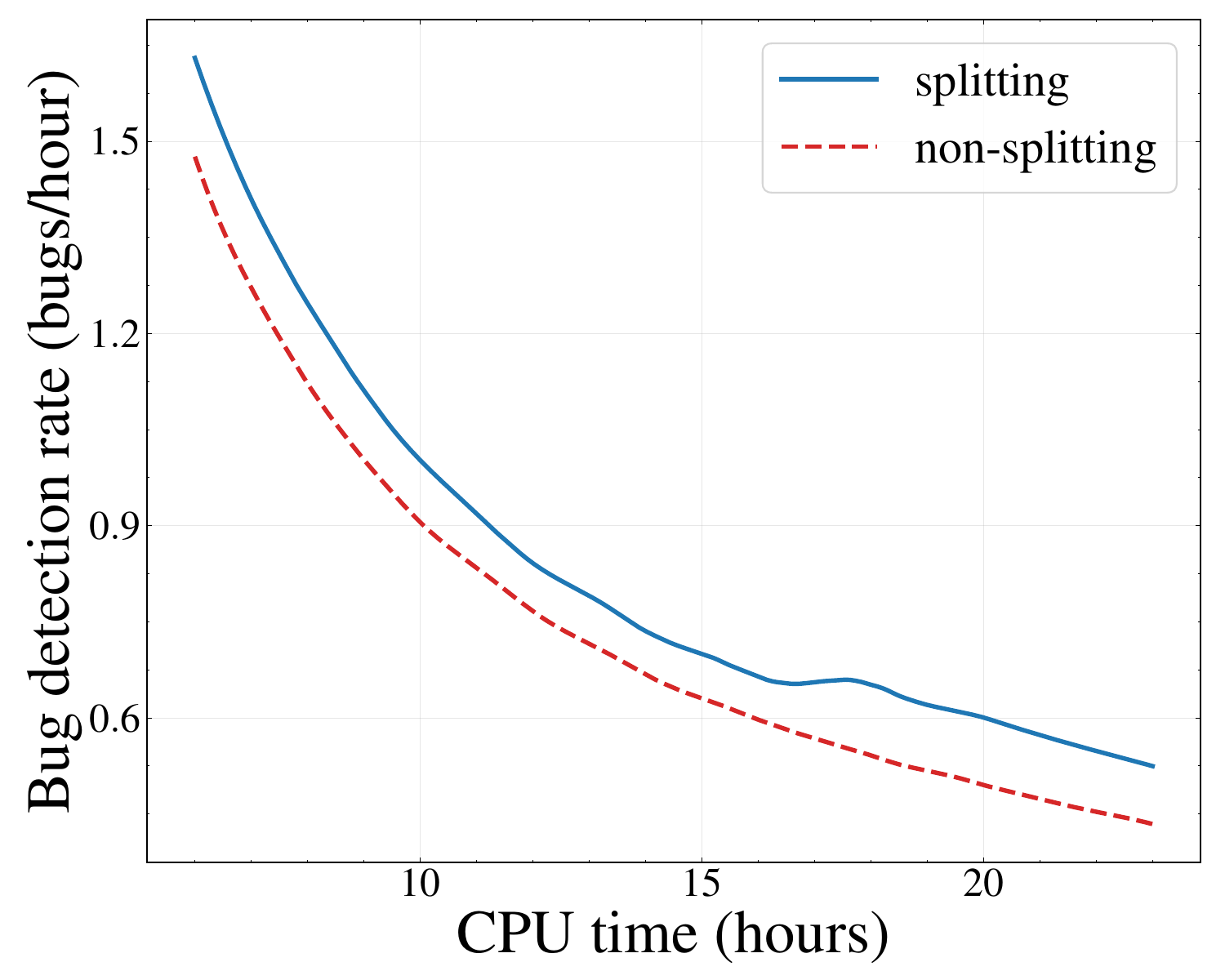}\hfill\includegraphics[width=0.19\textwidth]{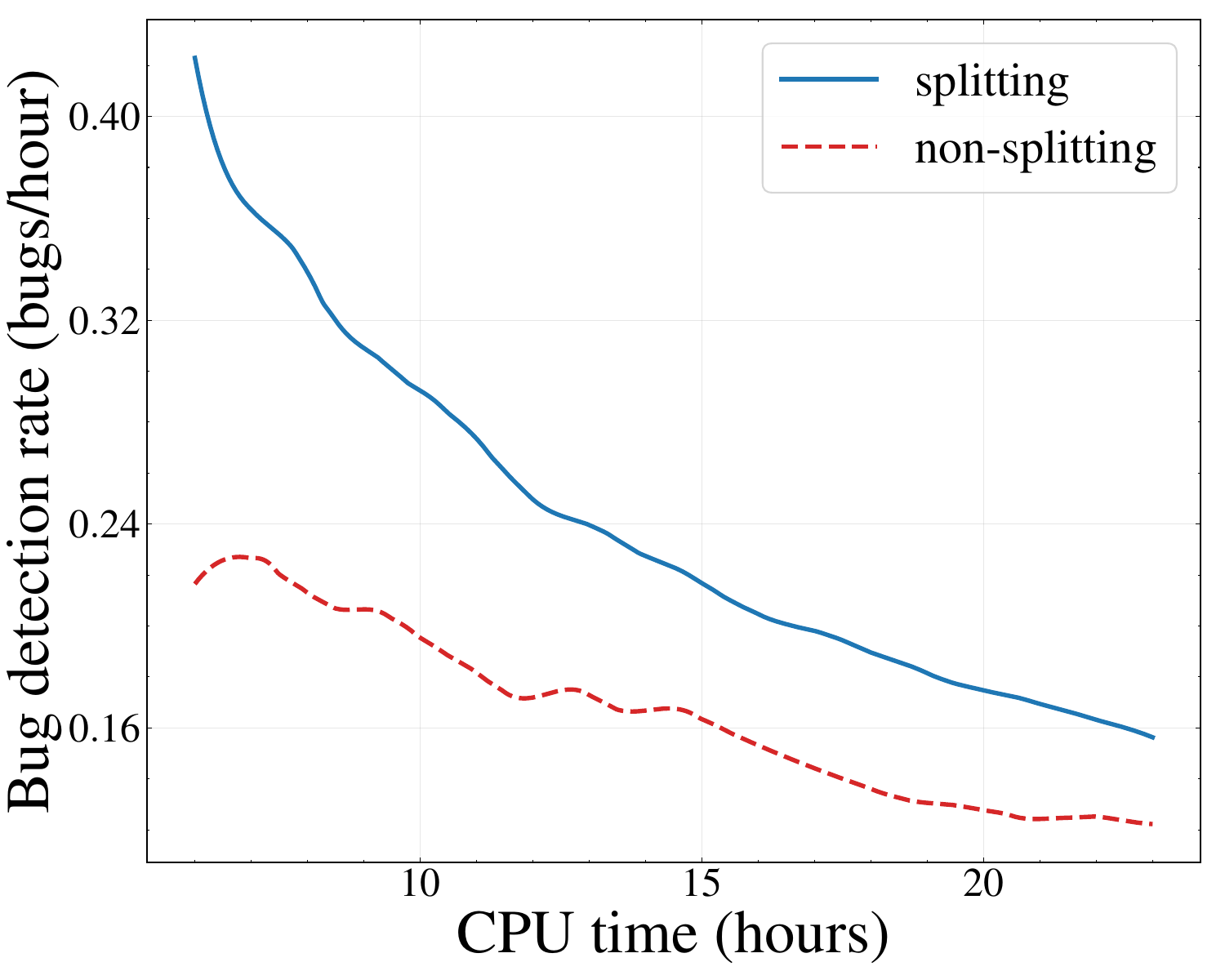}\hfill\includegraphics[width=0.19\textwidth]{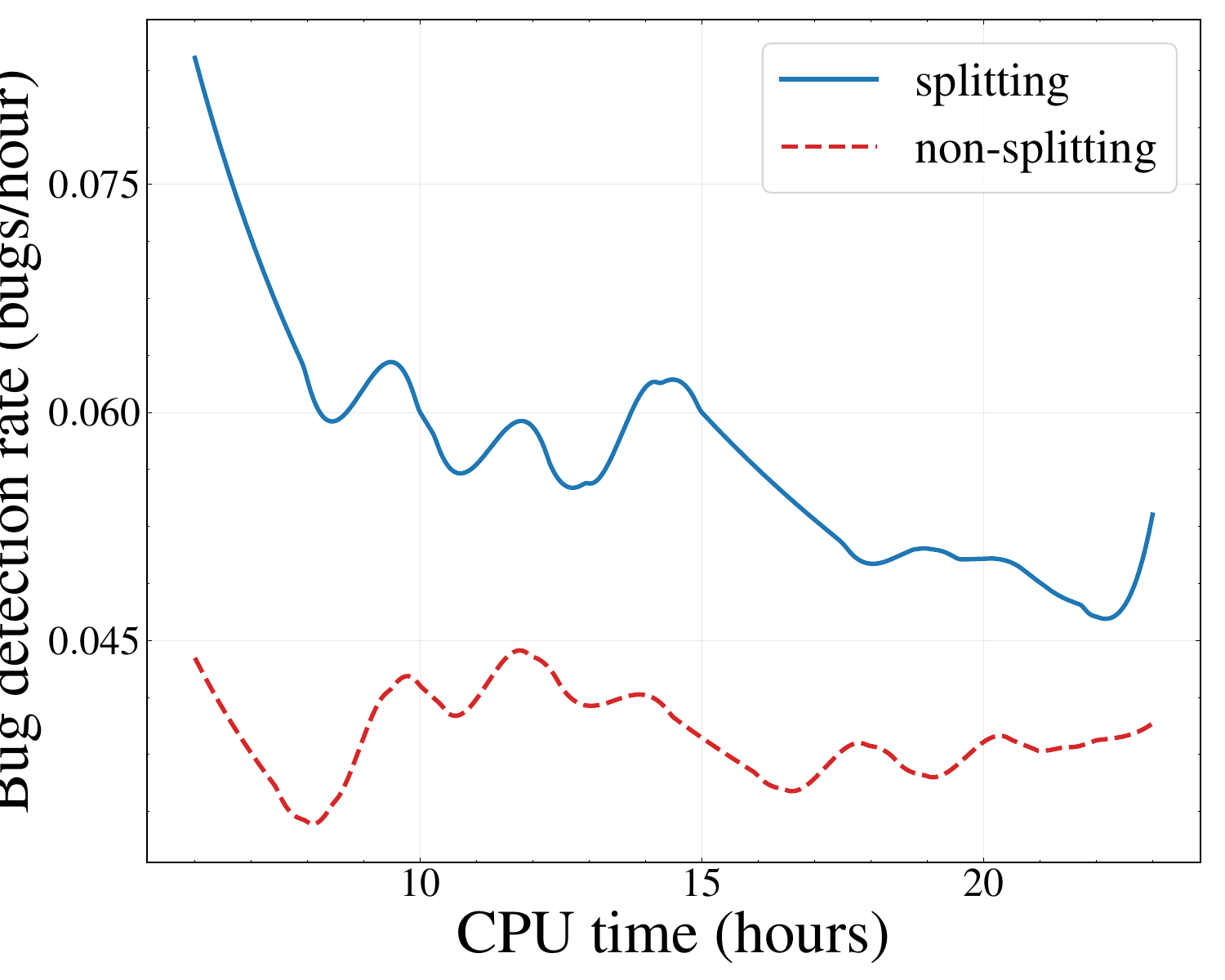}\hfill\includegraphics[width=0.19\textwidth]{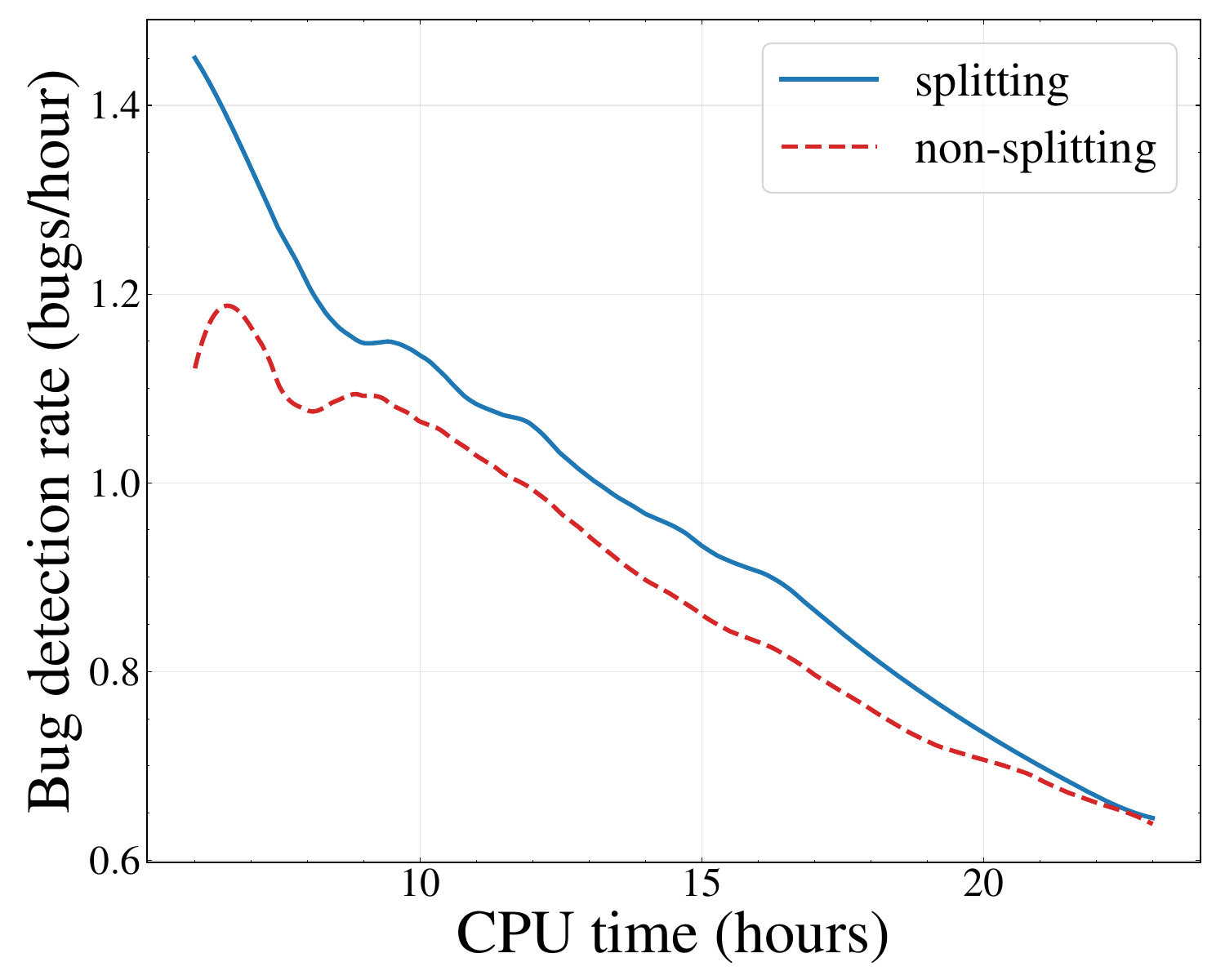}\hfill\includegraphics[width=0.19\textwidth]{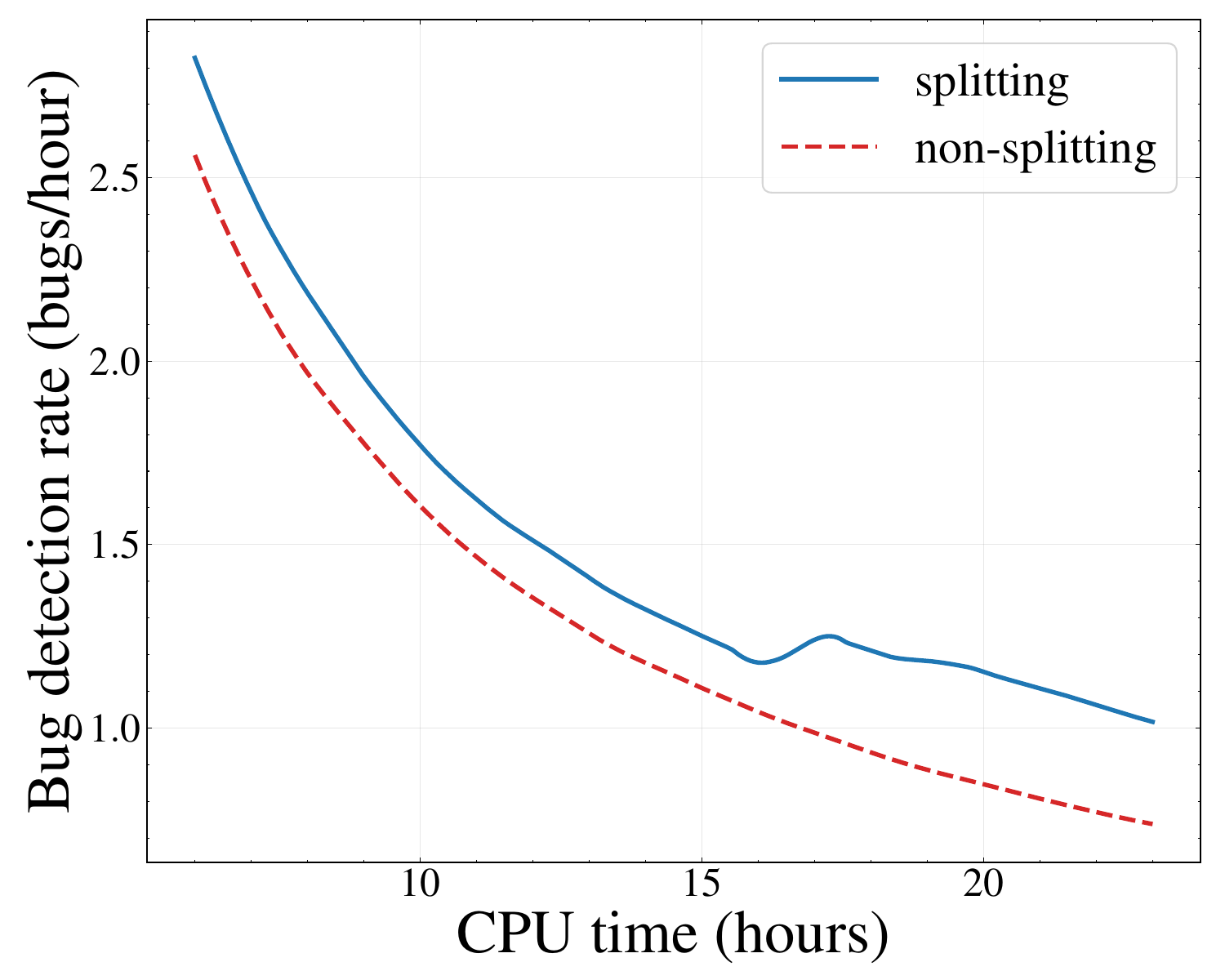}\\
    \makebox[0.19\textwidth]{\footnotesize (f) BDR of matio}\hfill\makebox[0.19\textwidth]{\footnotesize (g) BDR of openh264}\hfill\makebox[0.19\textwidth]{\footnotesize (h) BDR of php}\hfill\makebox[0.19\textwidth]{\footnotesize (i) BDR of poppler}\hfill\makebox[0.19\textwidth]{\footnotesize (j) BDR of stb}
    \caption{Comparisons of bug detection rate across 10 benchmarks for fuzzer MOpt.}
    \label{fig:bdr_mopt}
\end{figure*}

\begin{figure*}[tp]
    \centering
    \includegraphics[width=0.19\textwidth]{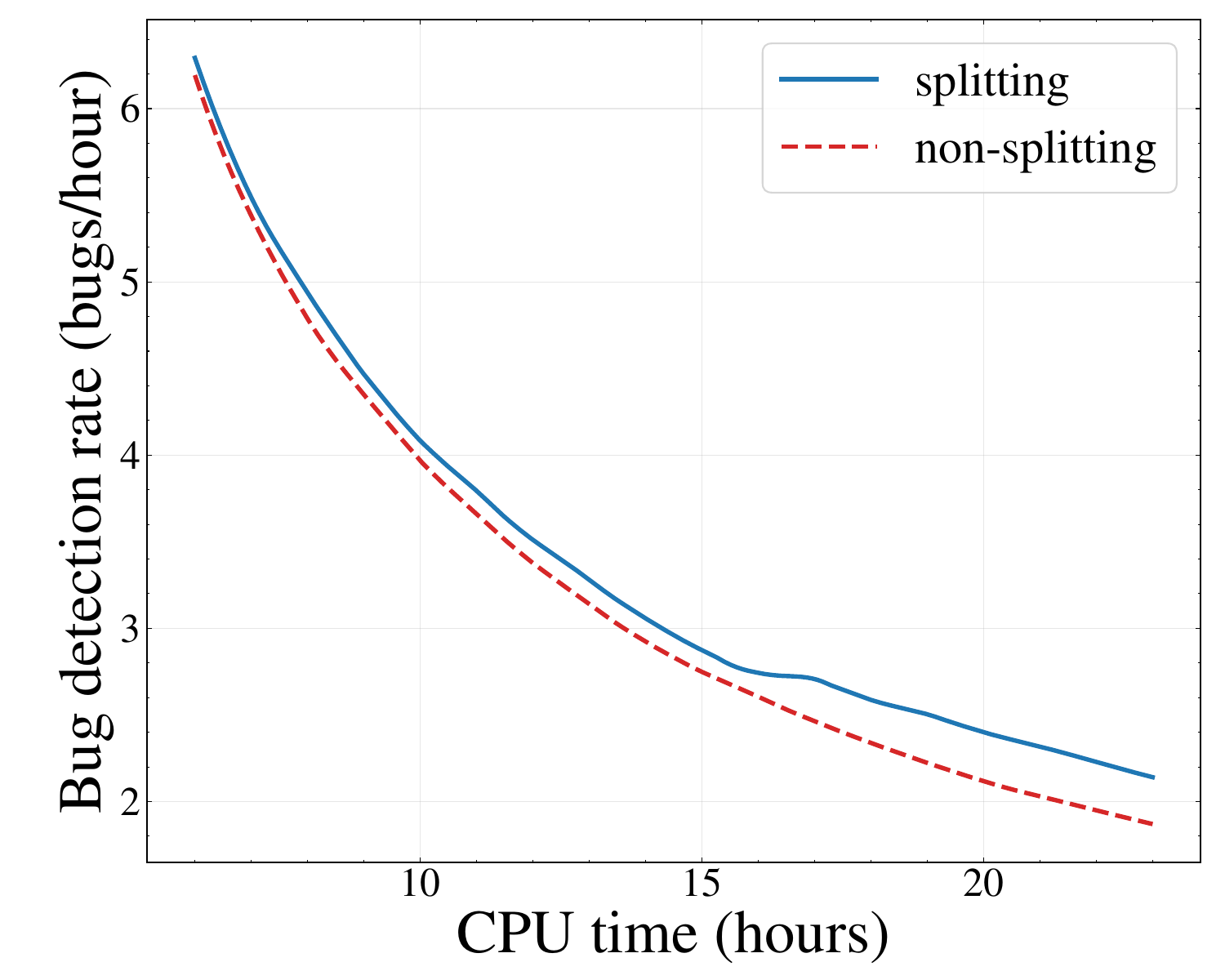}\hfill\includegraphics[width=0.19\textwidth]{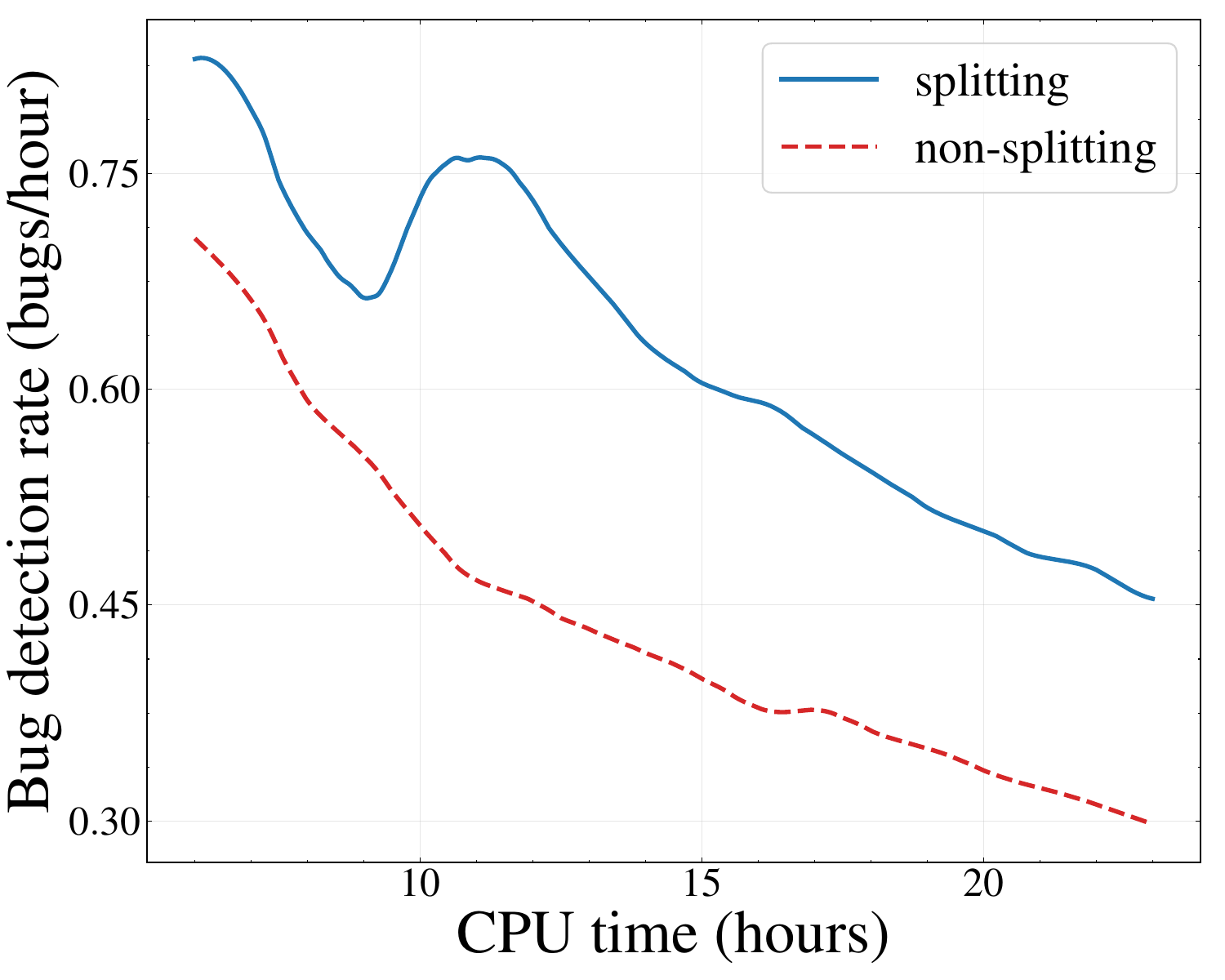}\hfill\includegraphics[width=0.19\textwidth]{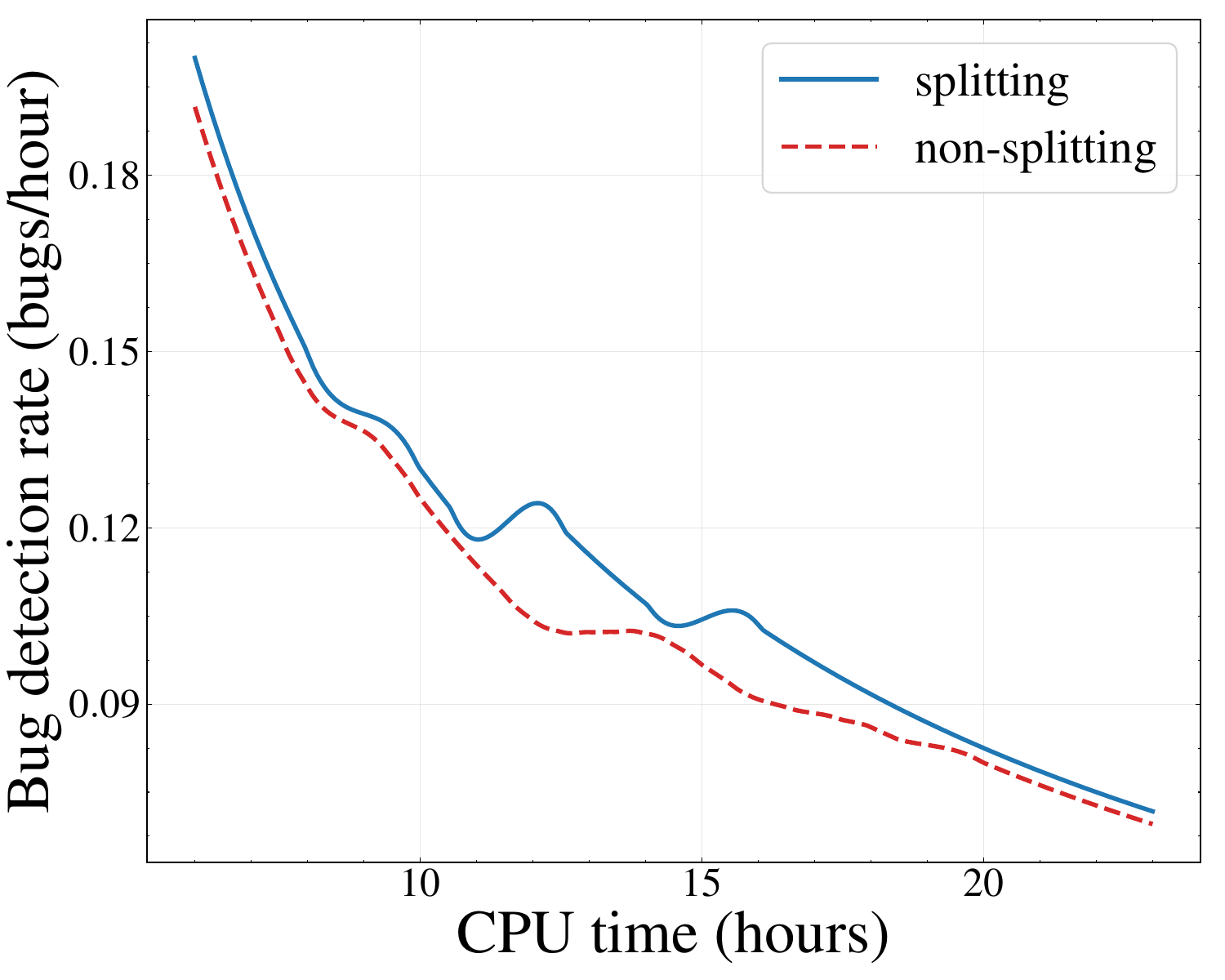}\hfill\includegraphics[width=0.19\textwidth]{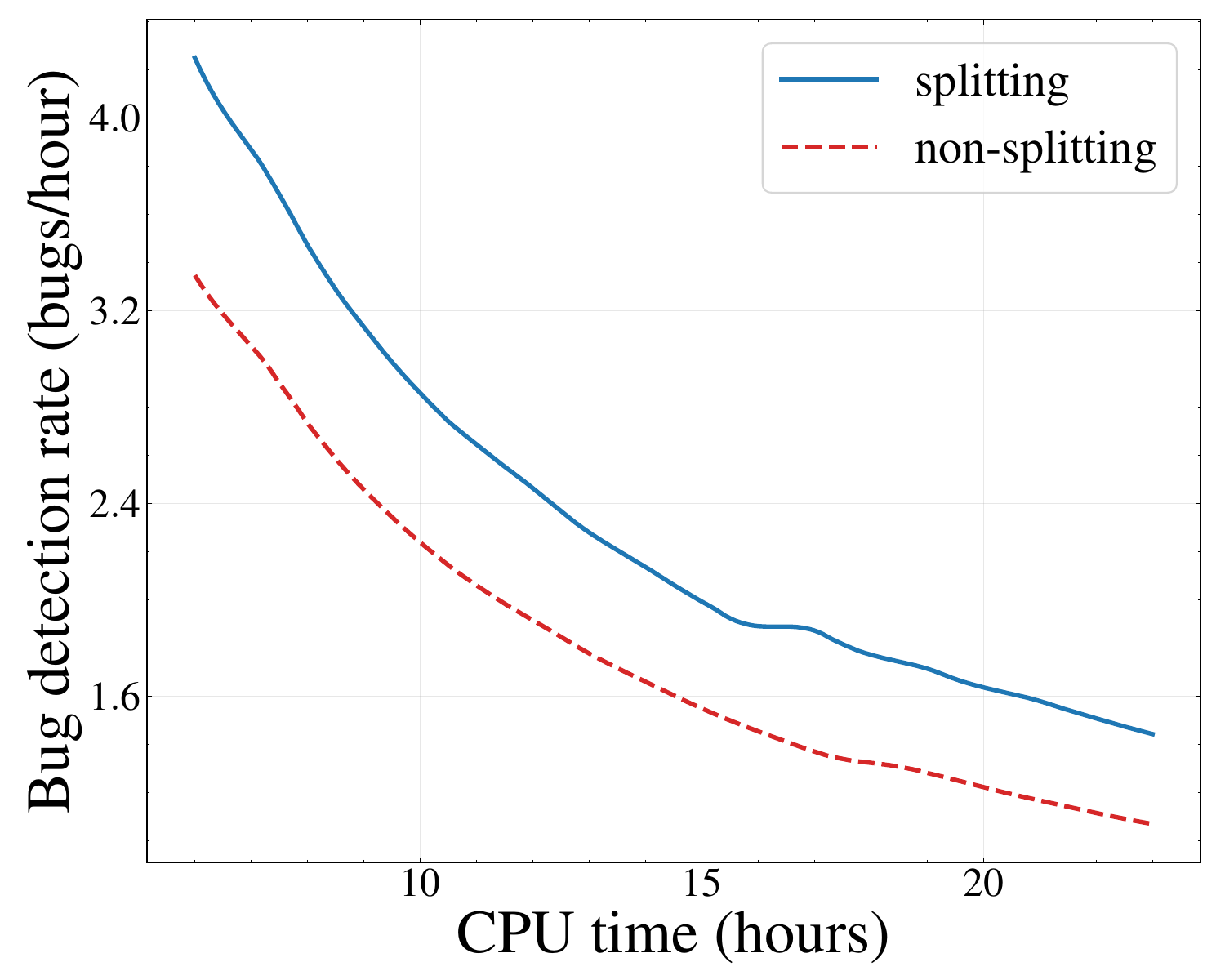}\hfill\includegraphics[width=0.19\textwidth]{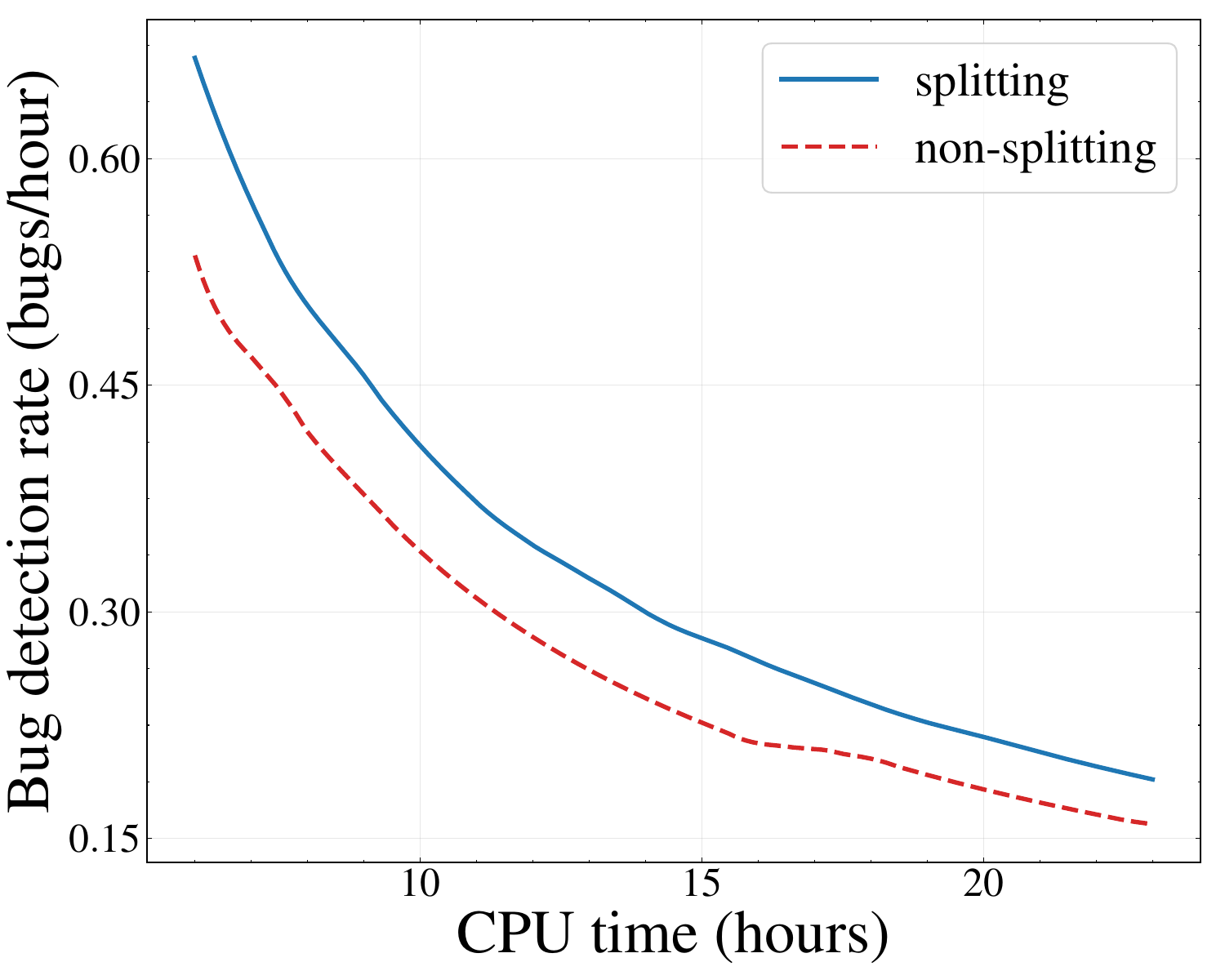}\\
    \makebox[0.19\textwidth]{\footnotesize (a) BDR of arrow}\hfill\makebox[0.19\textwidth]{\footnotesize (b) BDR of ffmpeg}\hfill\makebox[0.19\textwidth]{\footnotesize (c) BDR of grok}\hfill\makebox[0.19\textwidth]{\footnotesize (d) BDR of libhevc}\hfill\makebox[0.19\textwidth]{\footnotesize (e) BDR of libhtp}\\[3pt]
    \includegraphics[width=0.19\textwidth]{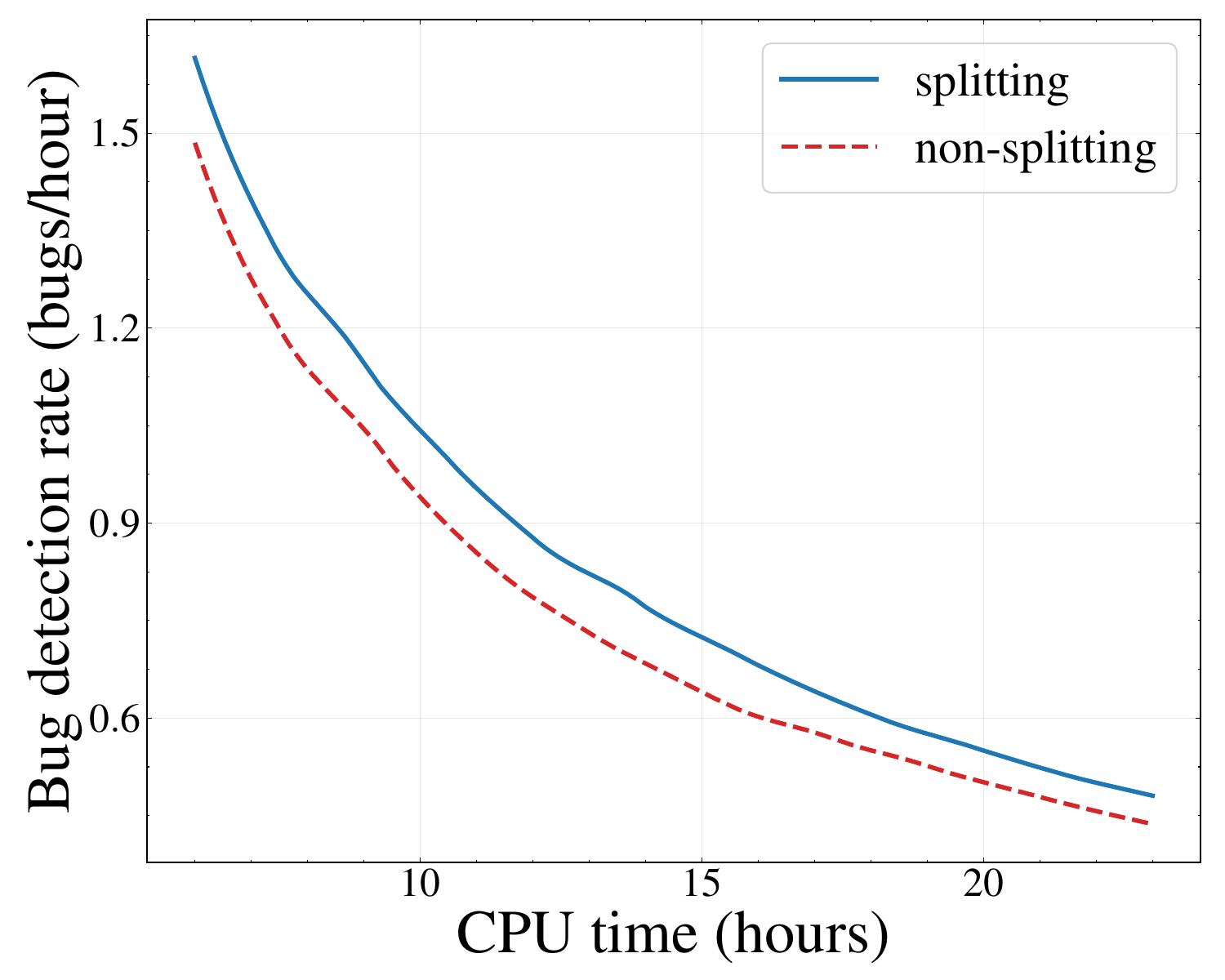}\hfill\includegraphics[width=0.19\textwidth]{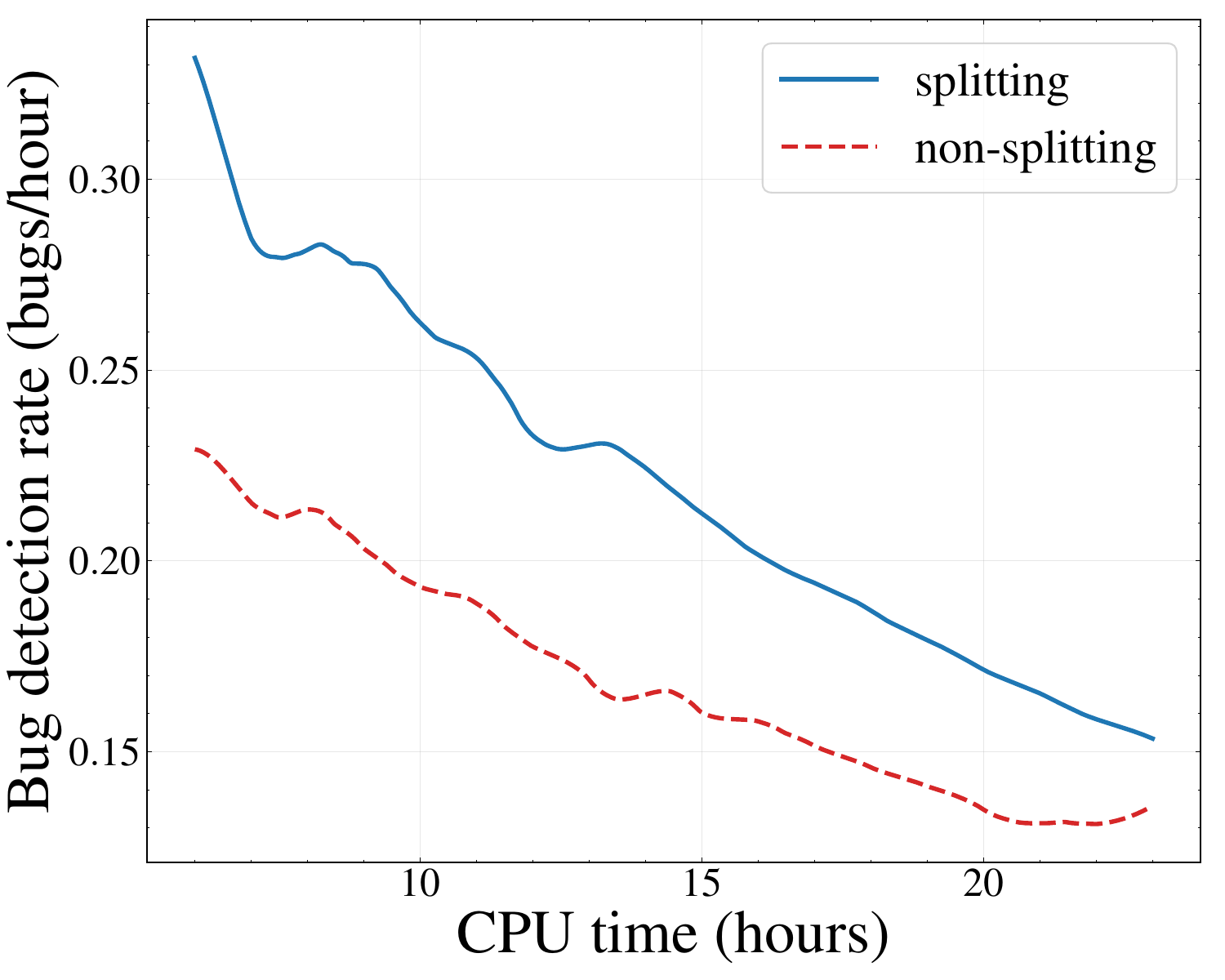}\hfill\includegraphics[width=0.19\textwidth]{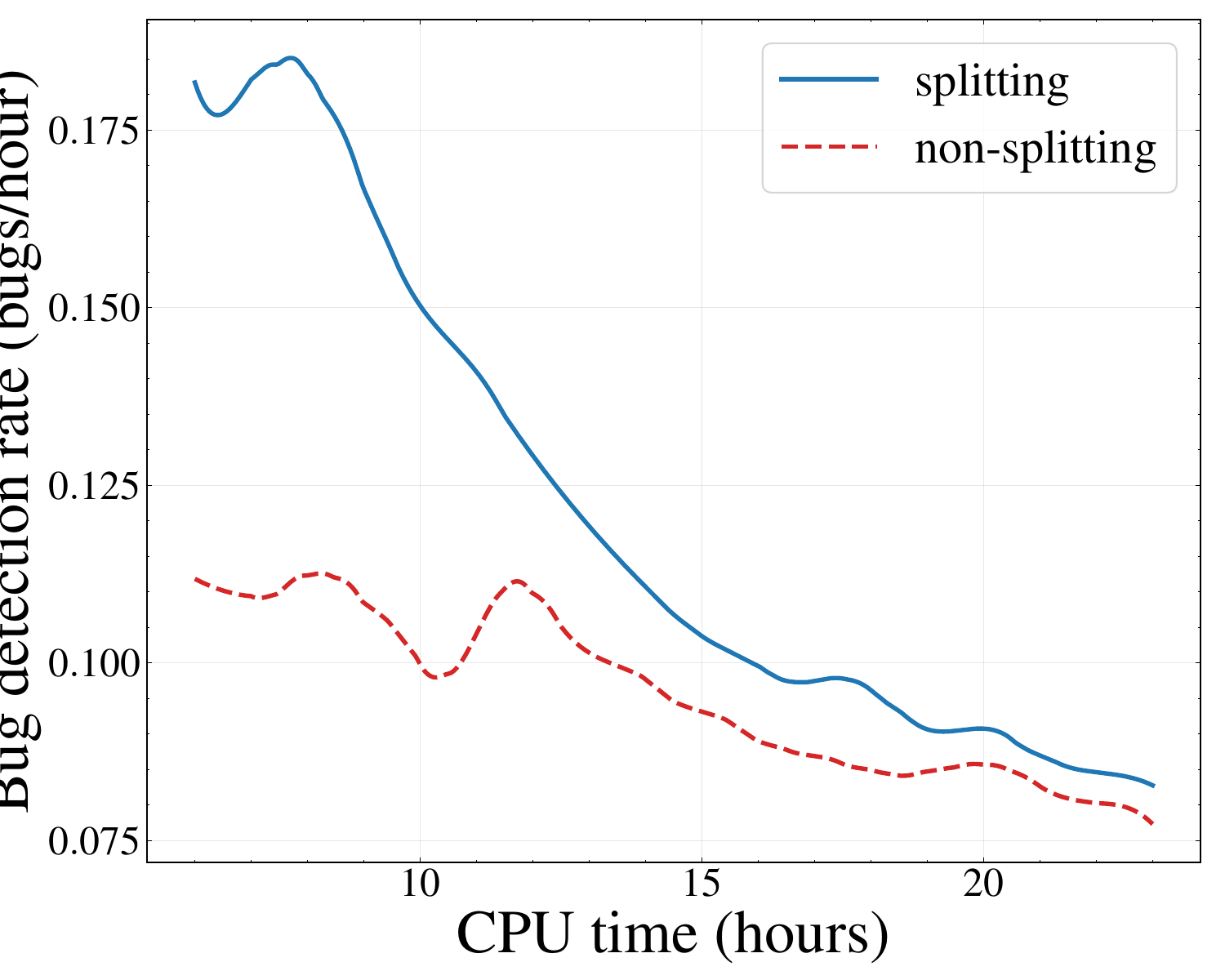}\hfill\includegraphics[width=0.19\textwidth]{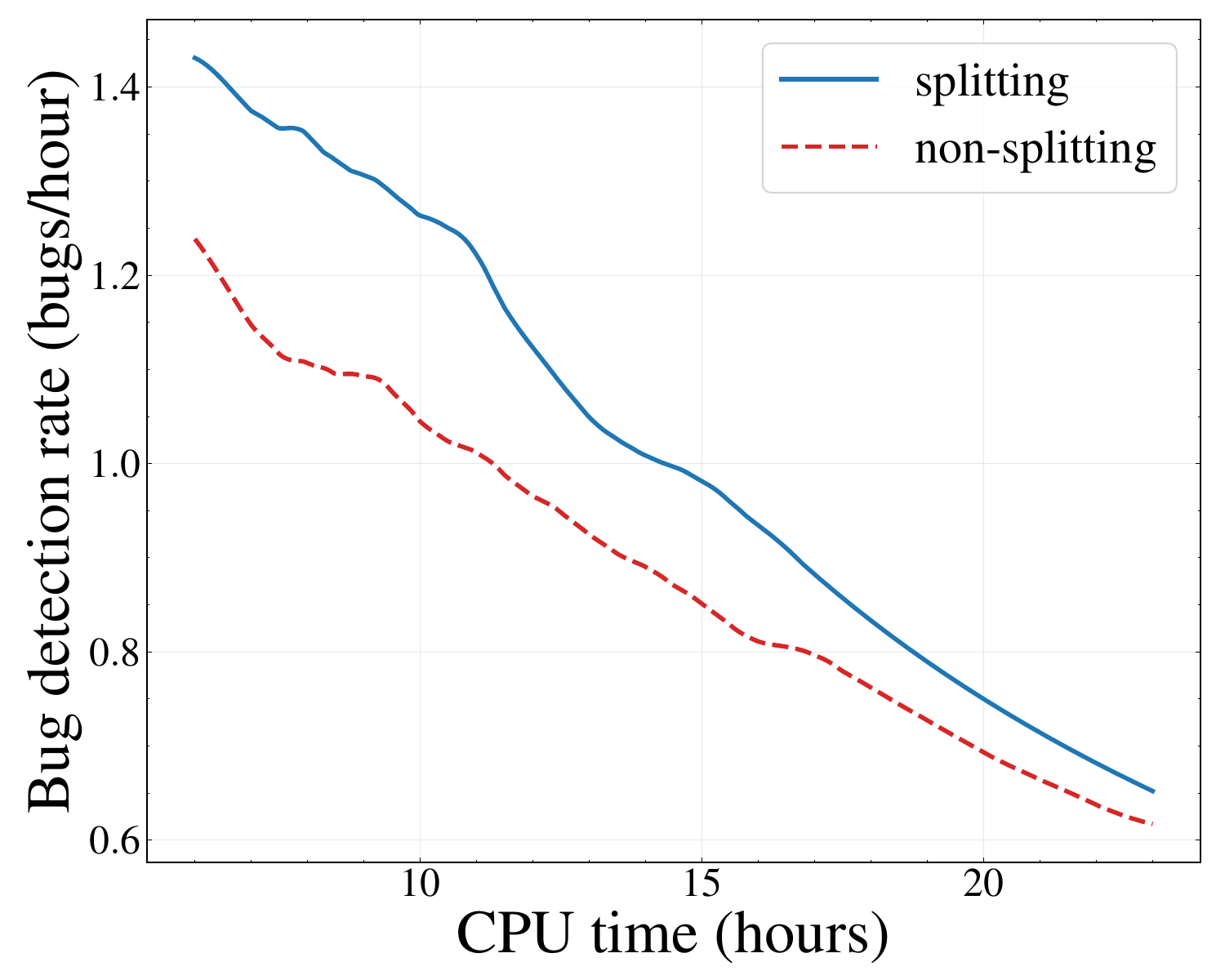}\hfill\includegraphics[width=0.19\textwidth]{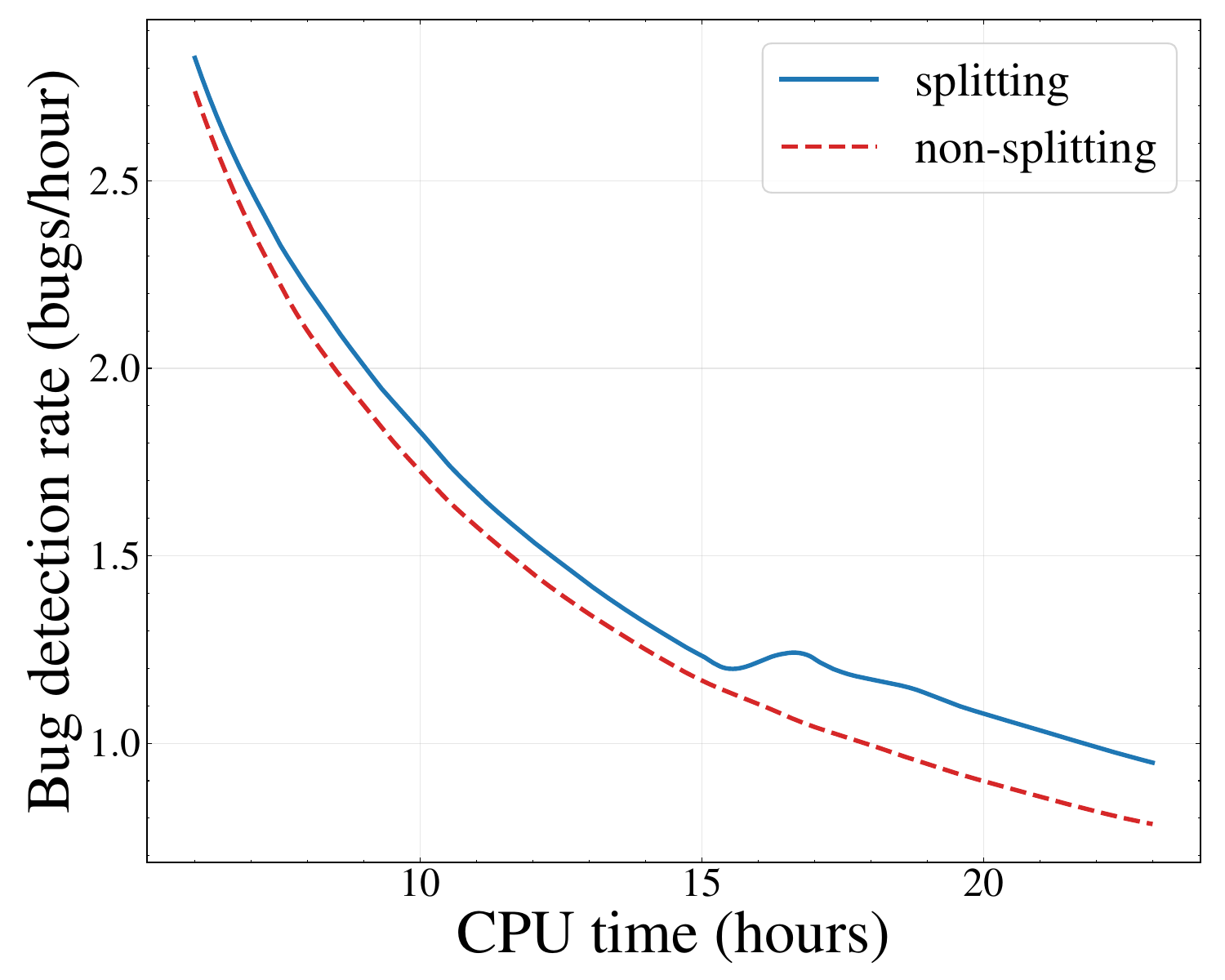}\\
    \makebox[0.19\textwidth]{\footnotesize (f) BDR of matio}\hfill\makebox[0.19\textwidth]{\footnotesize (g) BDR of openh264}\hfill\makebox[0.19\textwidth]{\footnotesize (h) BDR of php}\hfill\makebox[0.19\textwidth]{\footnotesize (i) BDR of poppler}\hfill\makebox[0.19\textwidth]{\footnotesize (j) BDR of stb}
    \caption{Comparisons of bug detection rate across 10 benchmarks for fuzzer AFL.}
    \label{fig:bdr_afl}
\end{figure*}

\begin{figure*}[tp]
    \centering
    \includegraphics[width=0.19\textwidth]{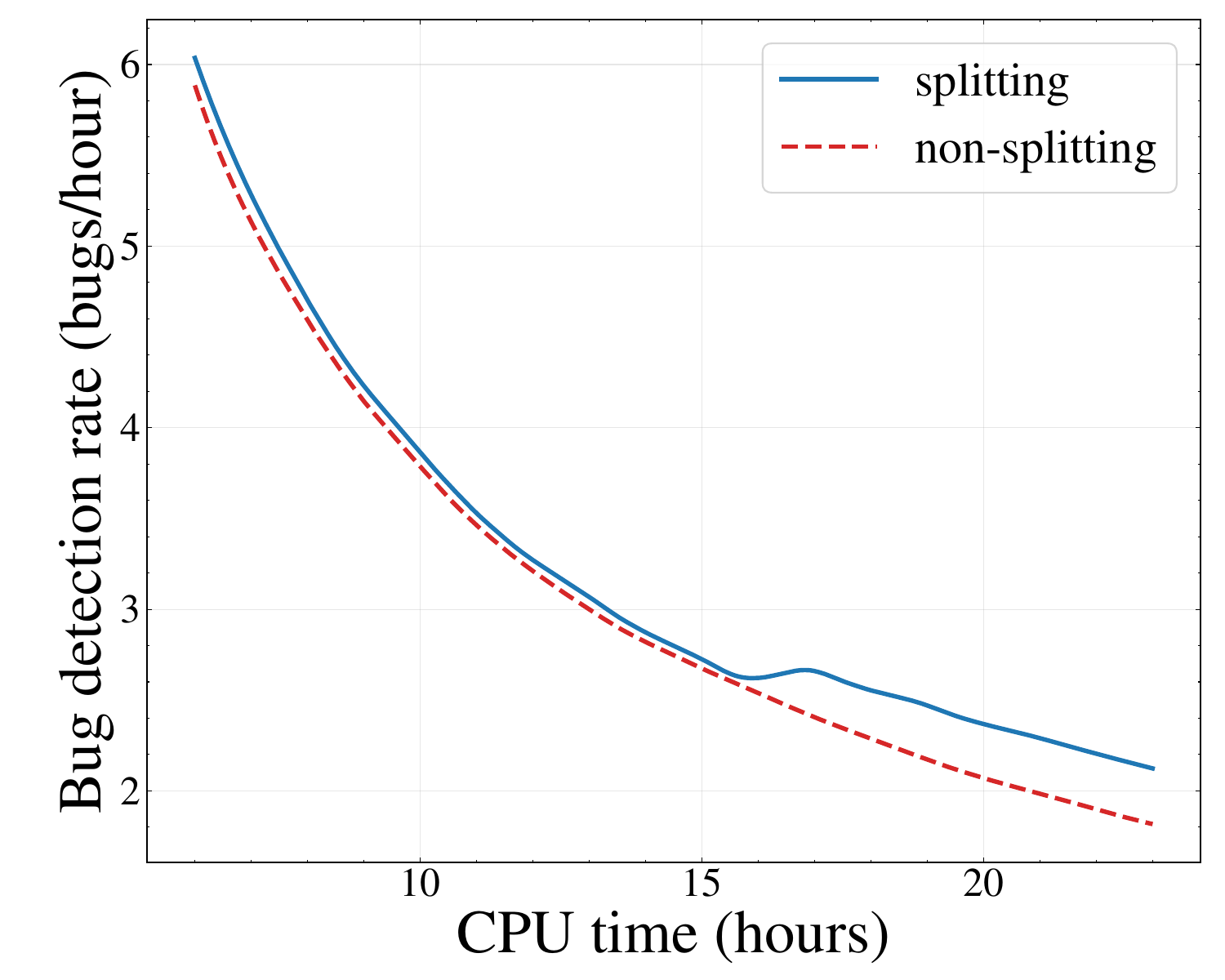}\hfill\includegraphics[width=0.19\textwidth]{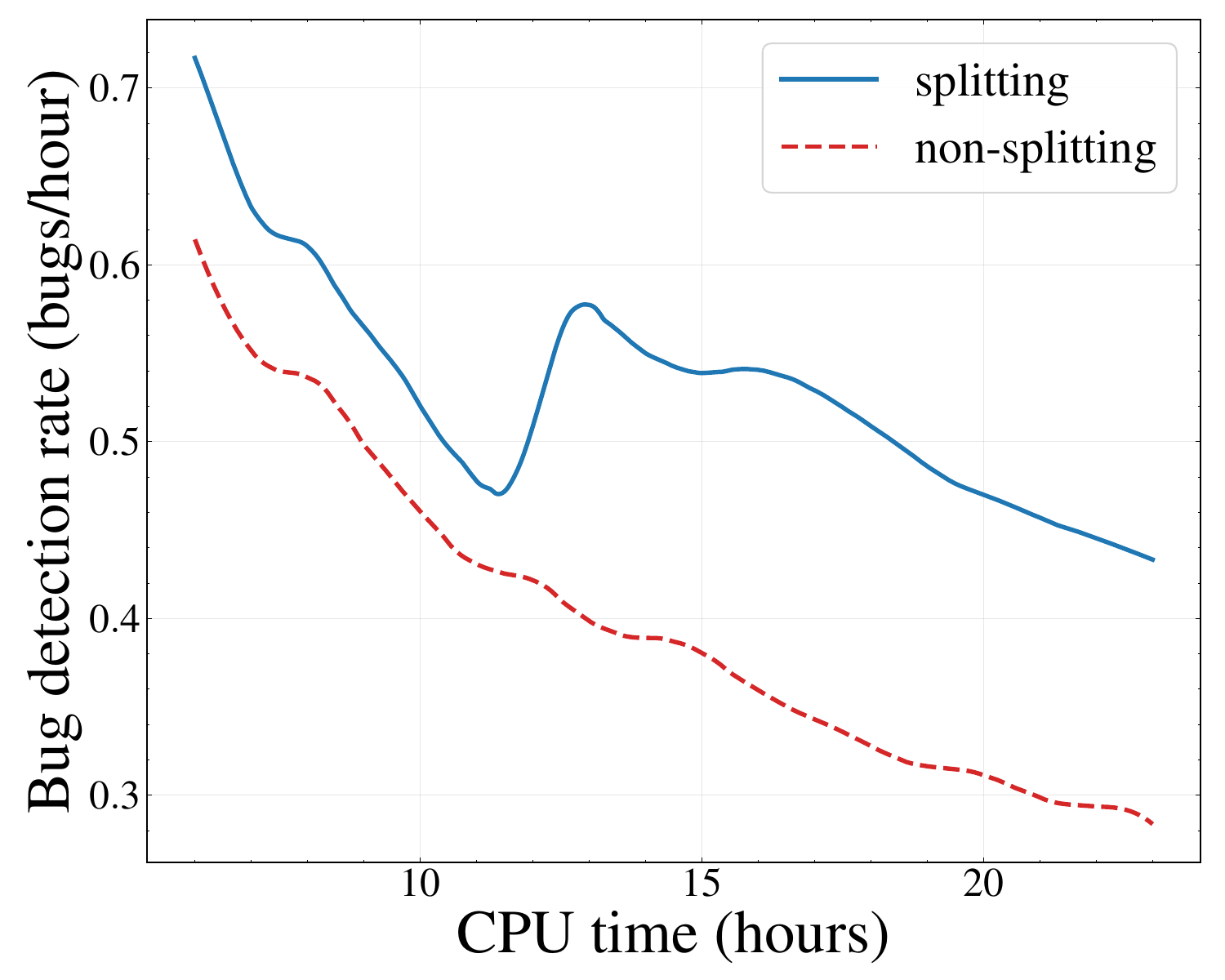}\hfill\includegraphics[width=0.19\textwidth]{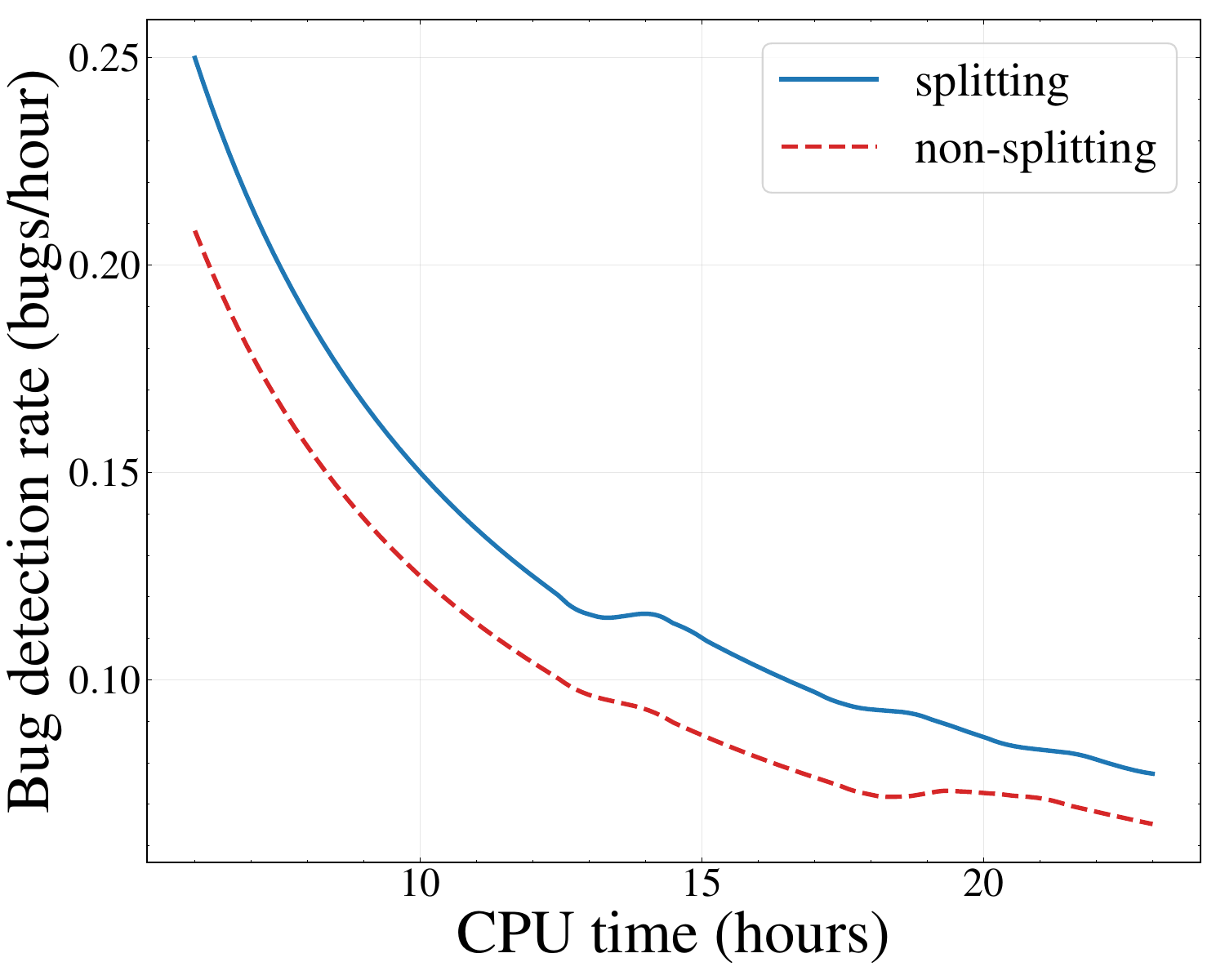}\hfill\includegraphics[width=0.19\textwidth]{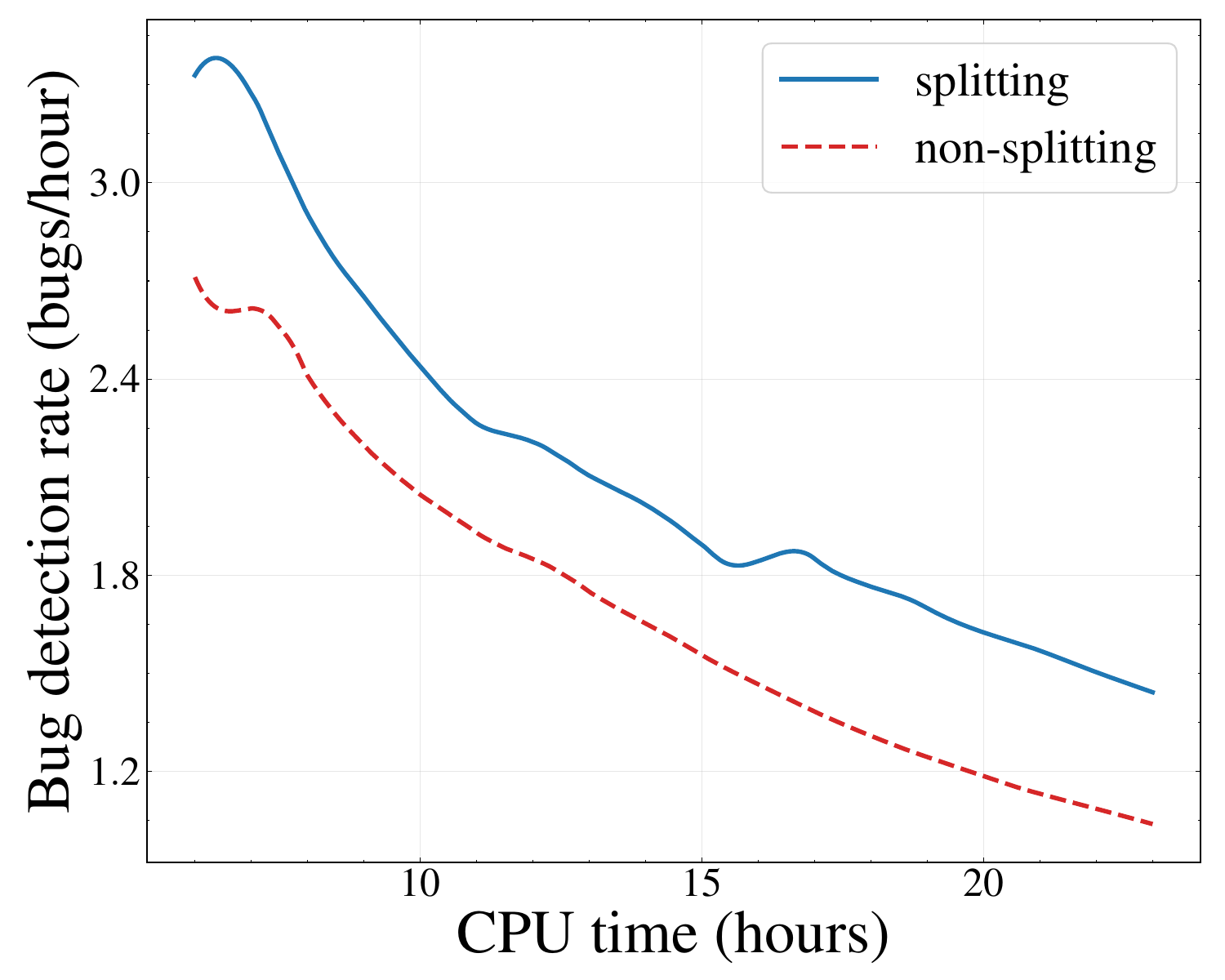}\hfill\includegraphics[width=0.19\textwidth]{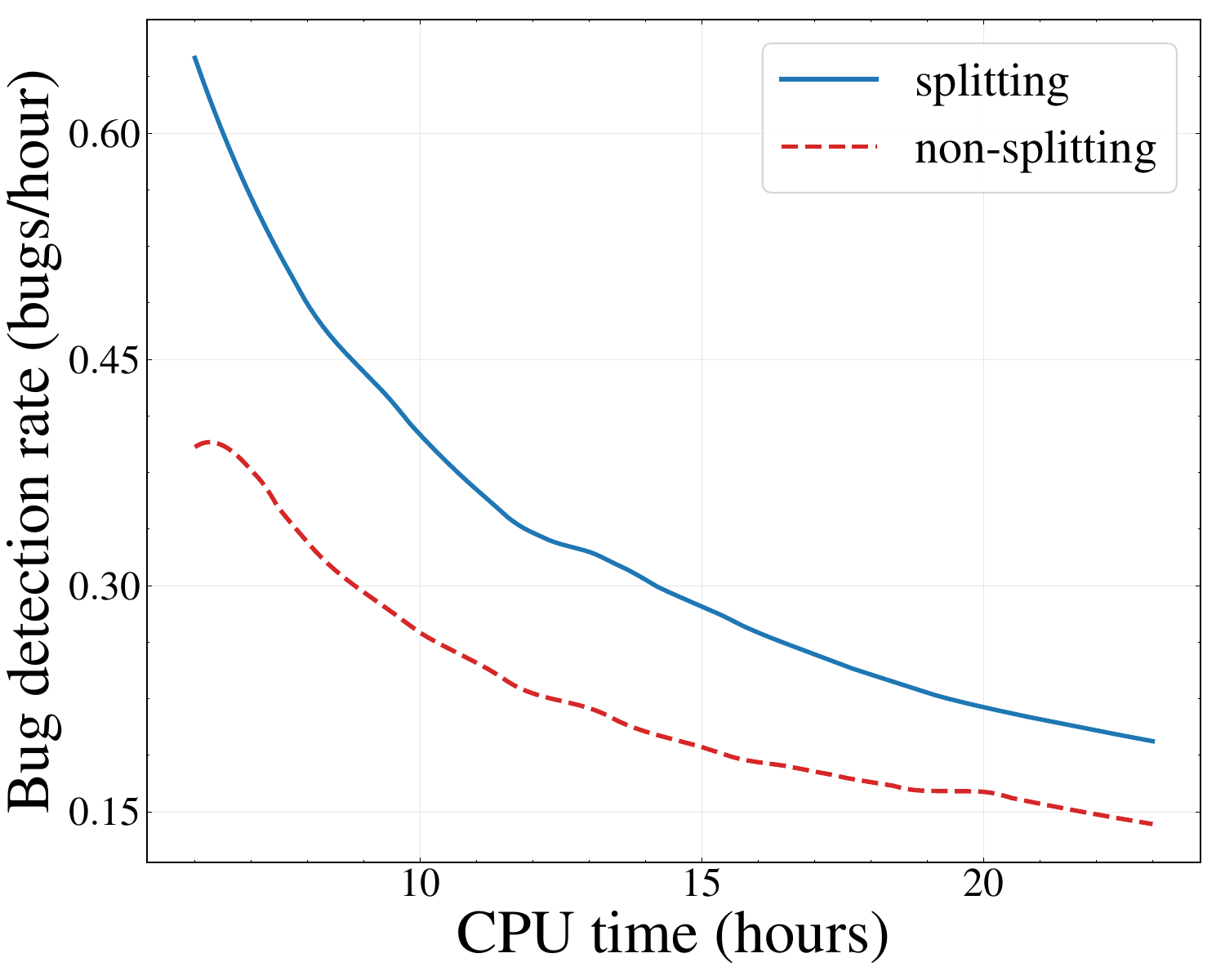}\\
    \makebox[0.19\textwidth]{\footnotesize (a) BDR of arrow}\hfill\makebox[0.19\textwidth]{\footnotesize (b) BDR of ffmpeg}\hfill\makebox[0.19\textwidth]{\footnotesize (c) BDR of grok}\hfill\makebox[0.19\textwidth]{\footnotesize (d) BDR of libhevc}\hfill\makebox[0.19\textwidth]{\footnotesize (e) BDR of libhtp}\\[3pt]
    \includegraphics[width=0.19\textwidth]{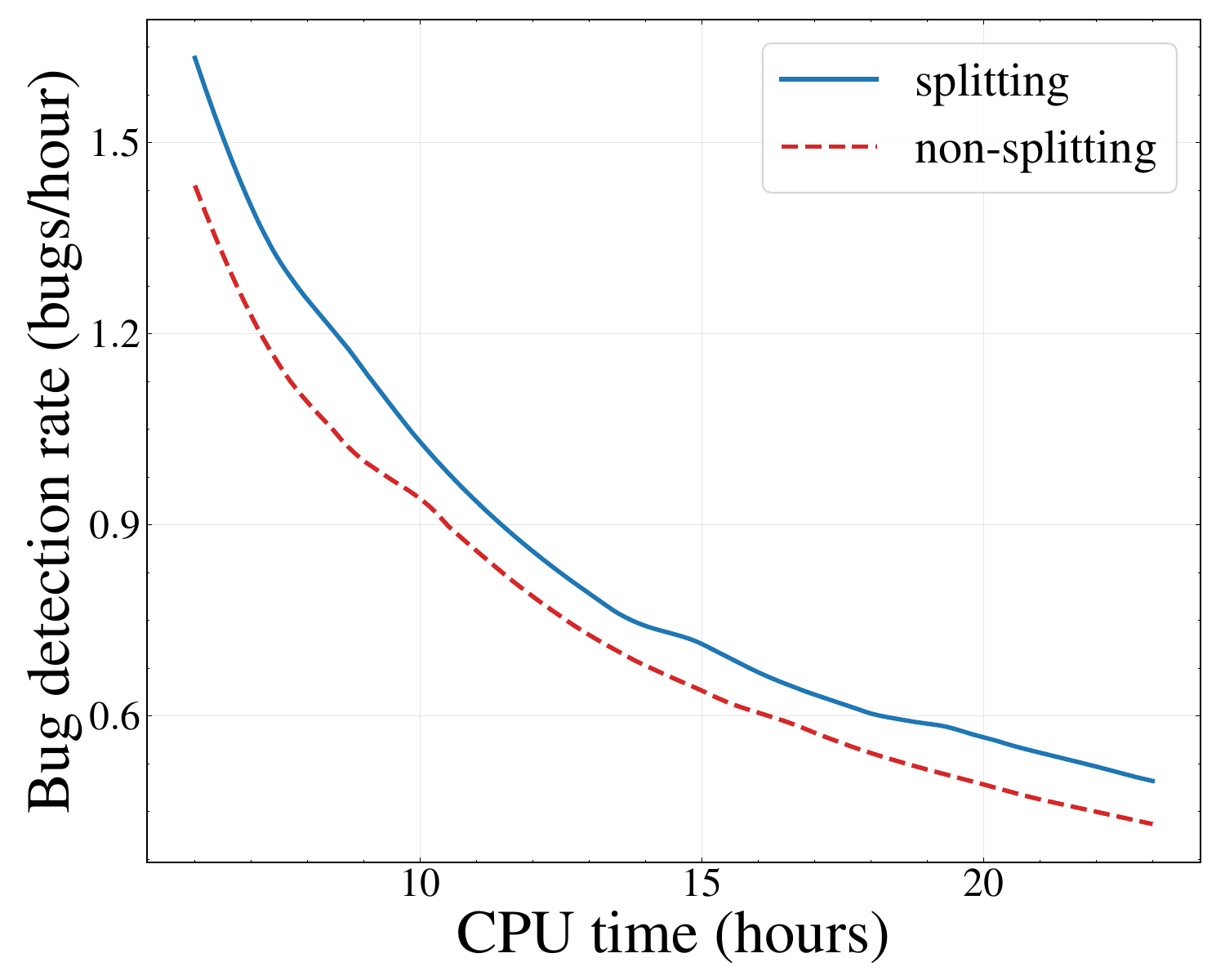}\hfill\includegraphics[width=0.19\textwidth]{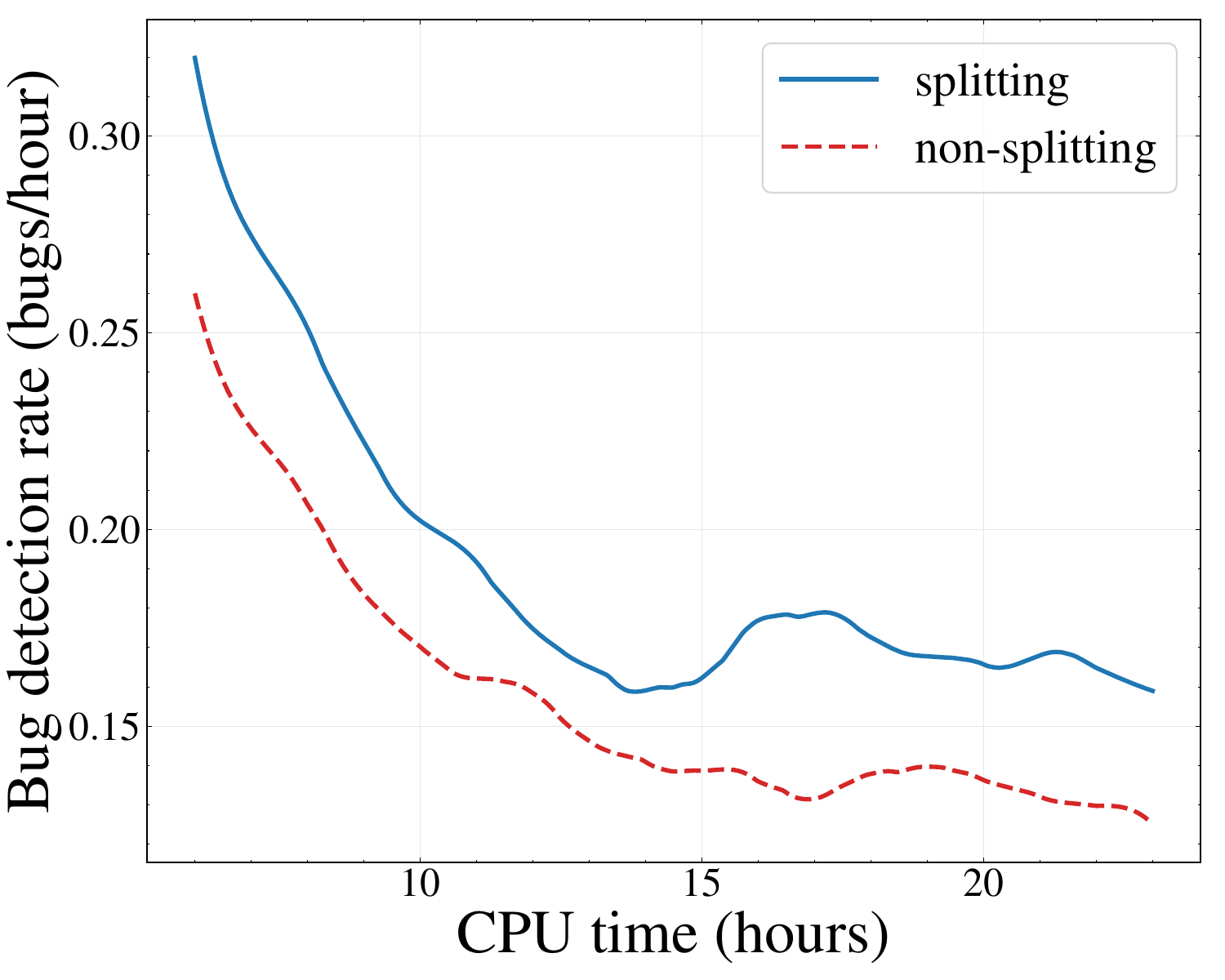}\hfill\includegraphics[width=0.19\textwidth]{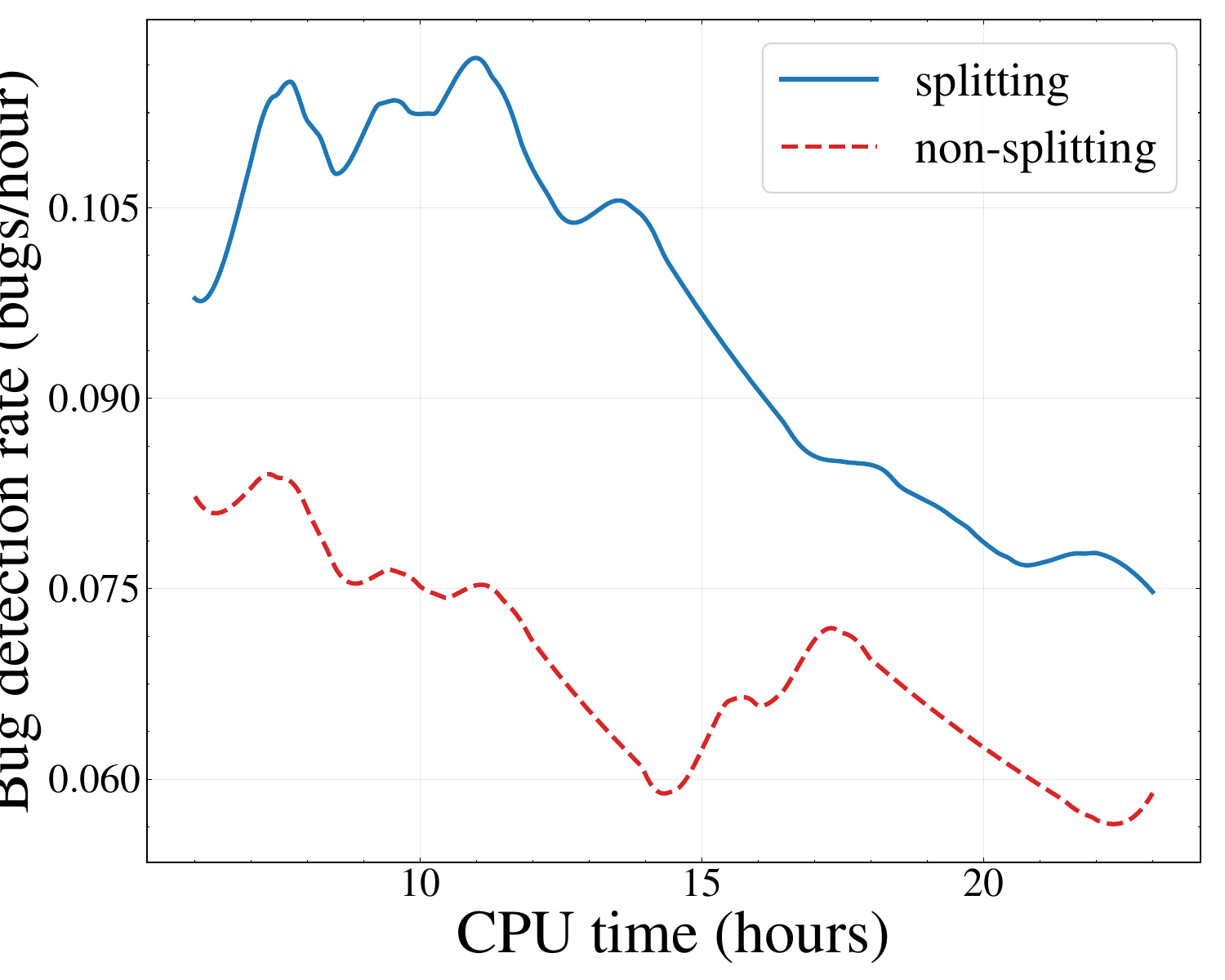}\hfill\includegraphics[width=0.19\textwidth]{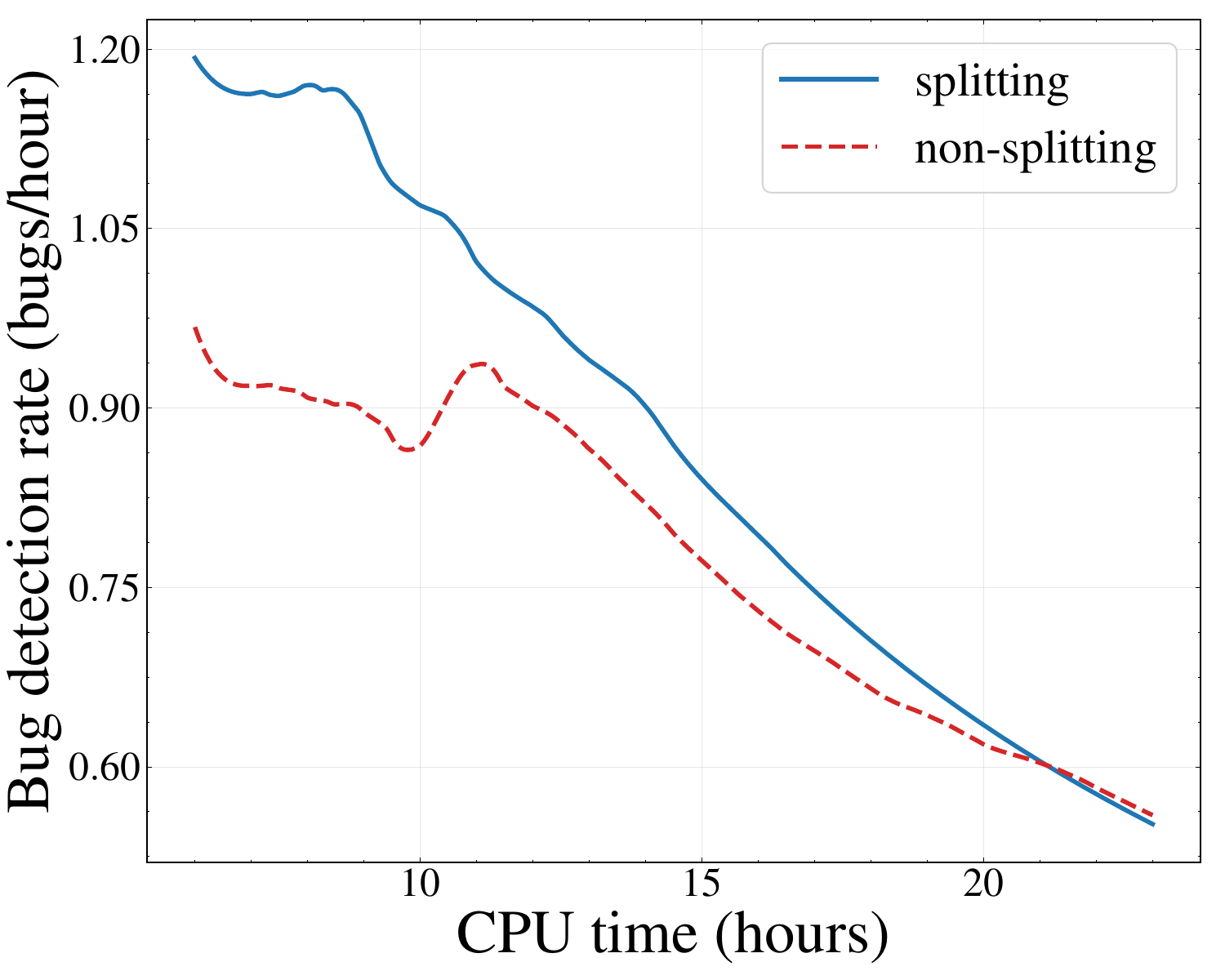}\hfill\includegraphics[width=0.19\textwidth]{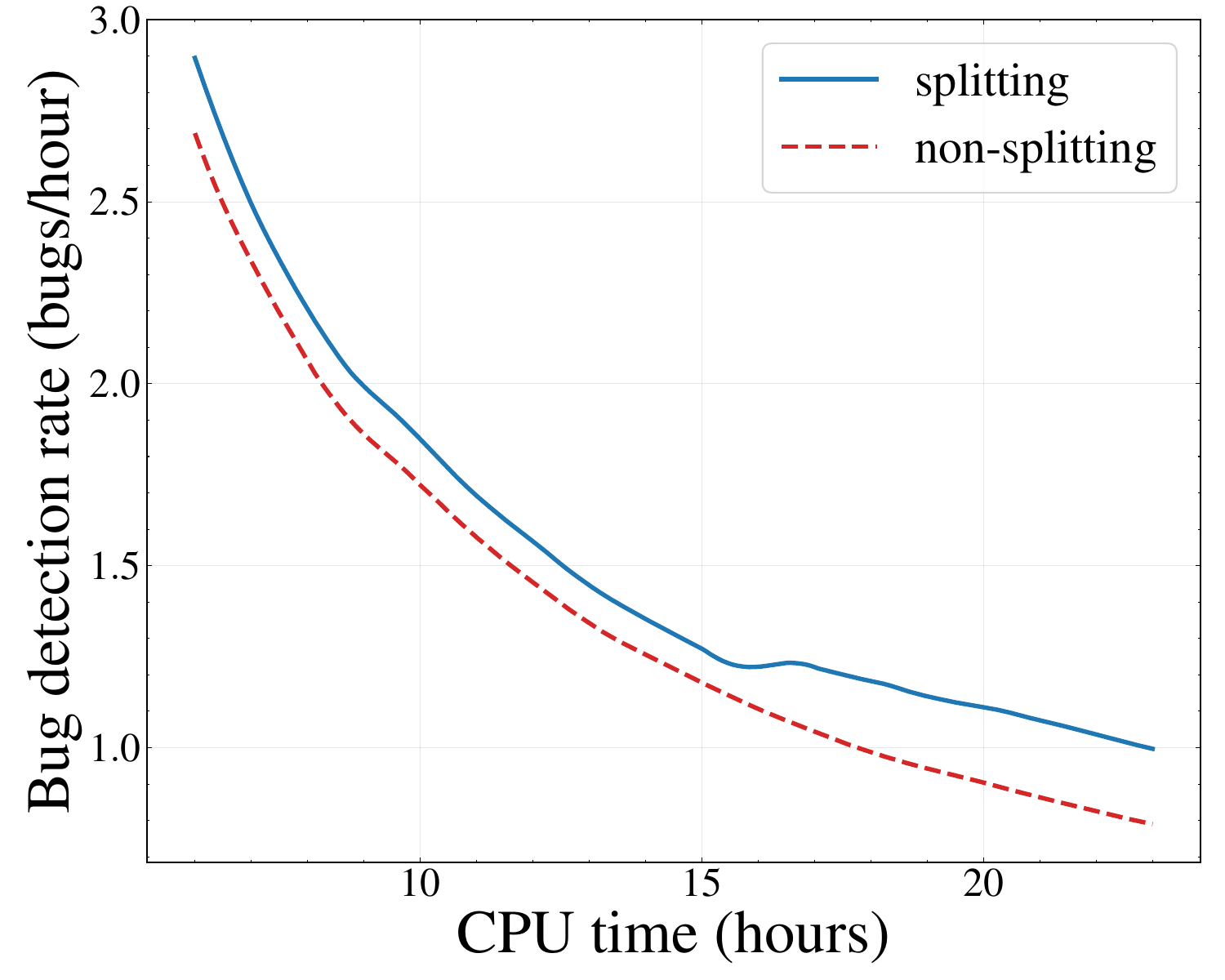}\\
    \makebox[0.19\textwidth]{\footnotesize (f) BDR of matio}\hfill\makebox[0.19\textwidth]{\footnotesize (g) BDR of openh264}\hfill\makebox[0.19\textwidth]{\footnotesize (h) BDR of php}\hfill\makebox[0.19\textwidth]{\footnotesize (i) BDR of poppler}\hfill\makebox[0.19\textwidth]{\footnotesize (j) BDR of stb}
    \caption{Comparisons of bug detection rate across 10 benchmarks for fuzzer AFLSmart.}
    \label{fig:bdr_aflsmart}
\end{figure*}

\begin{figure*}[tp]
    \centering
    \includegraphics[width=0.19\textwidth]{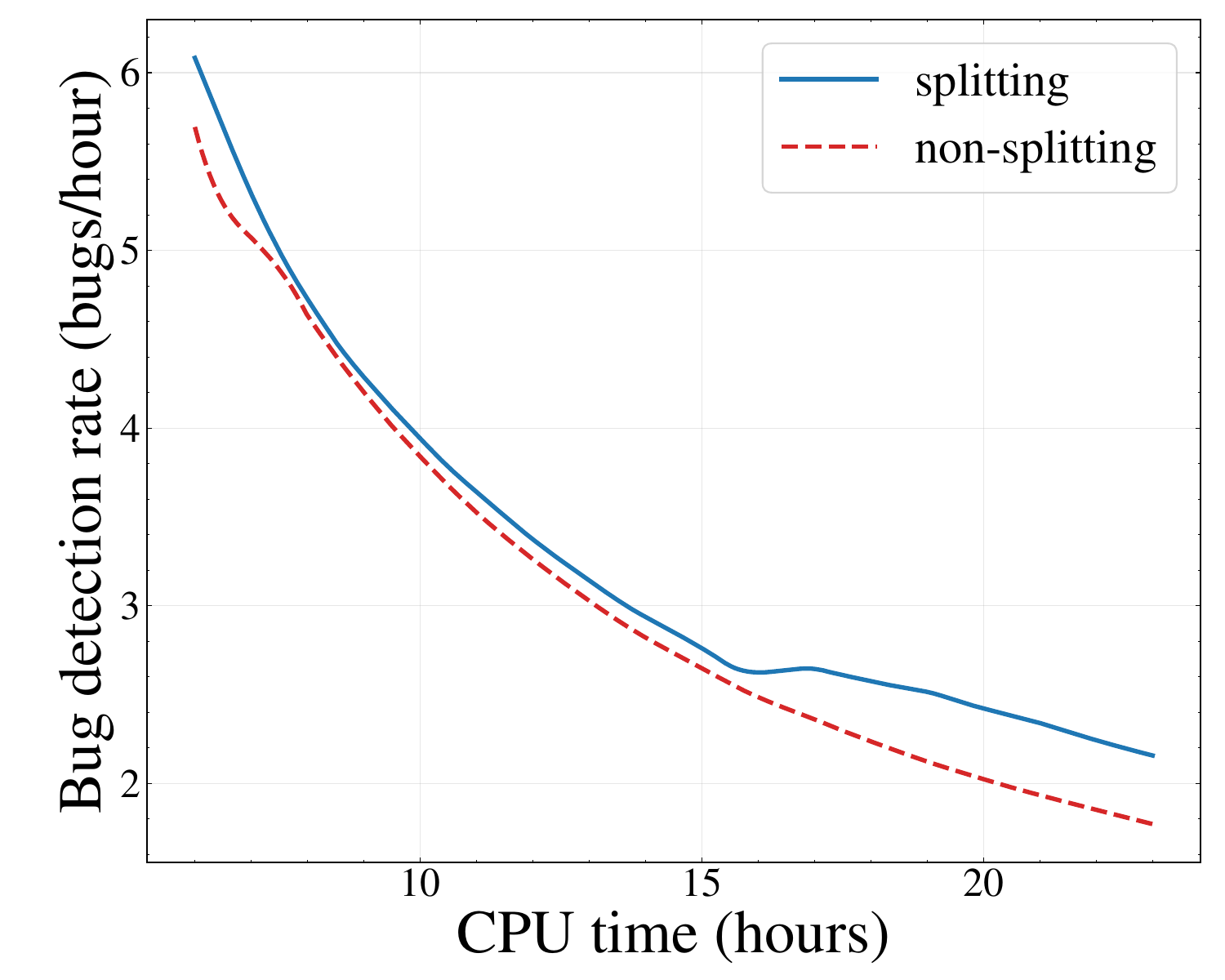}\hfill\includegraphics[width=0.19\textwidth]{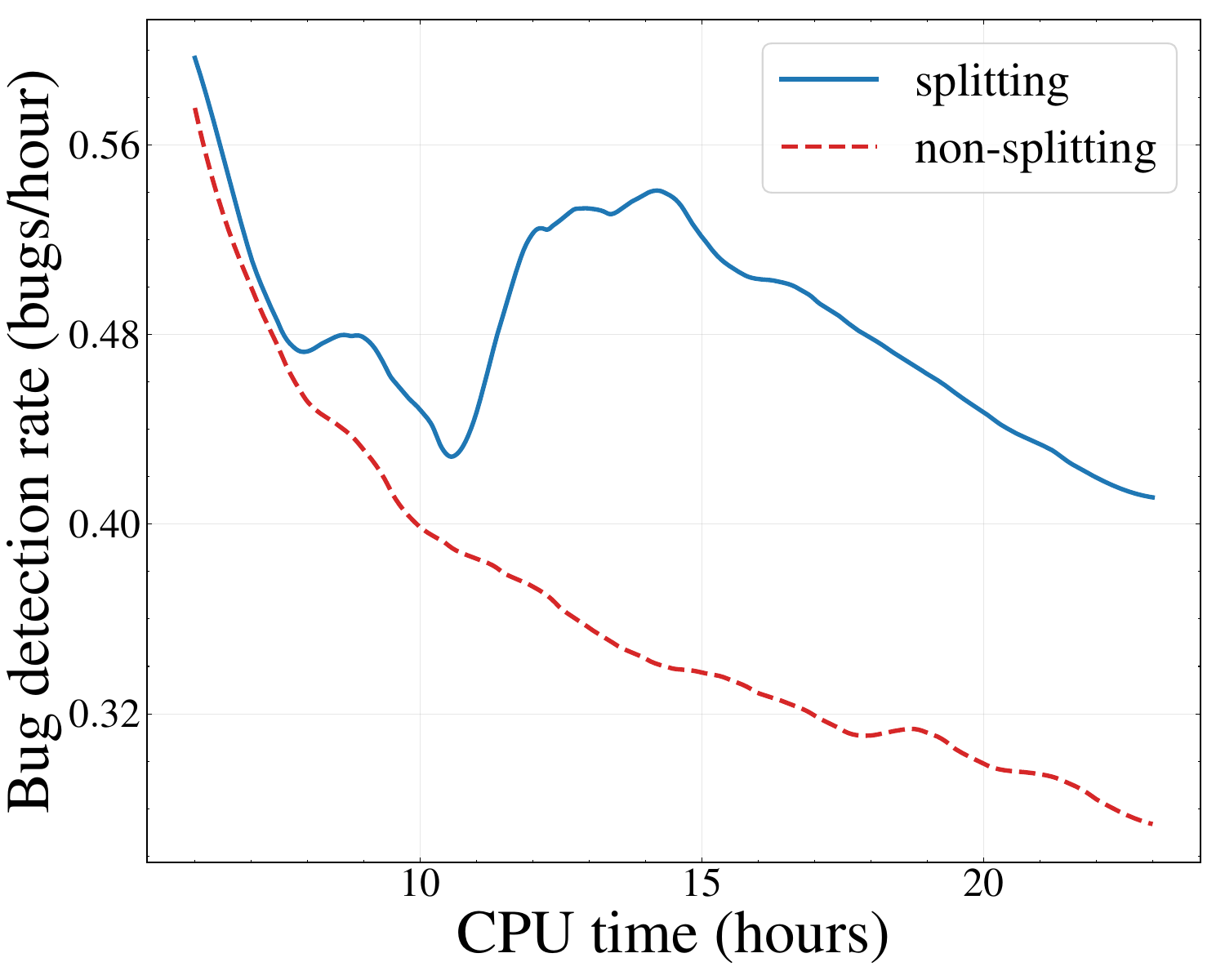}\hfill\includegraphics[width=0.19\textwidth]{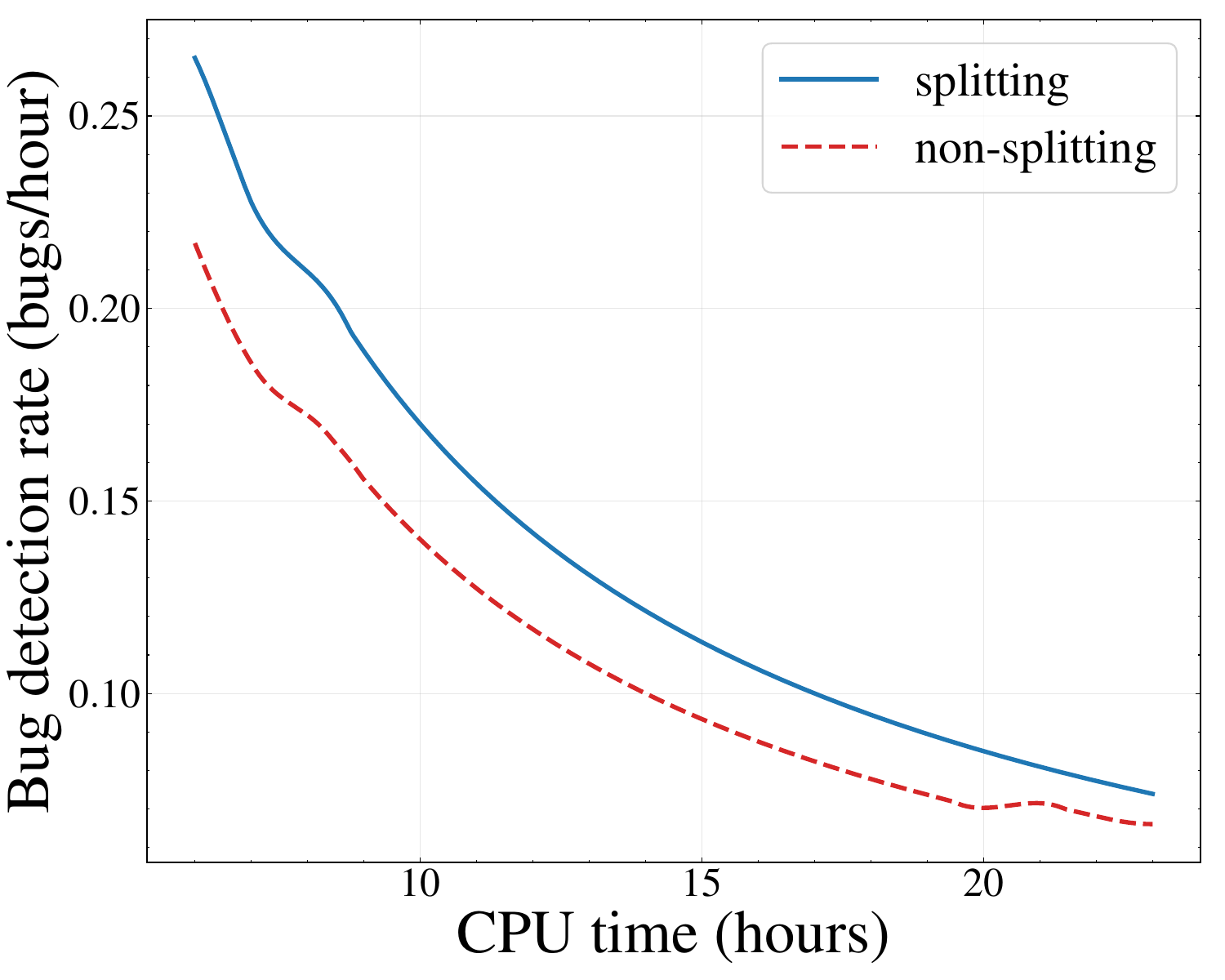}\hfill\includegraphics[width=0.19\textwidth]{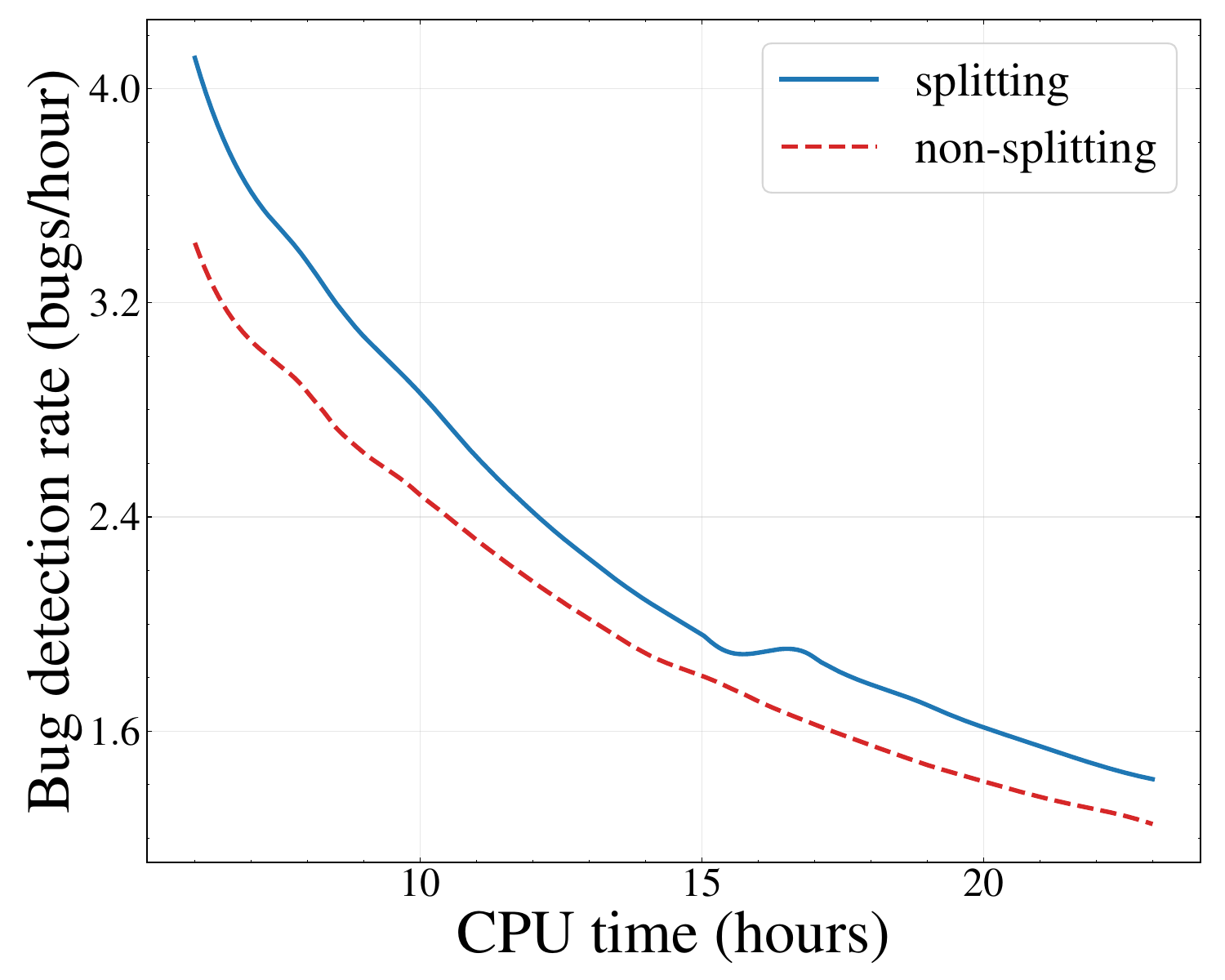}\hfill\includegraphics[width=0.19\textwidth]{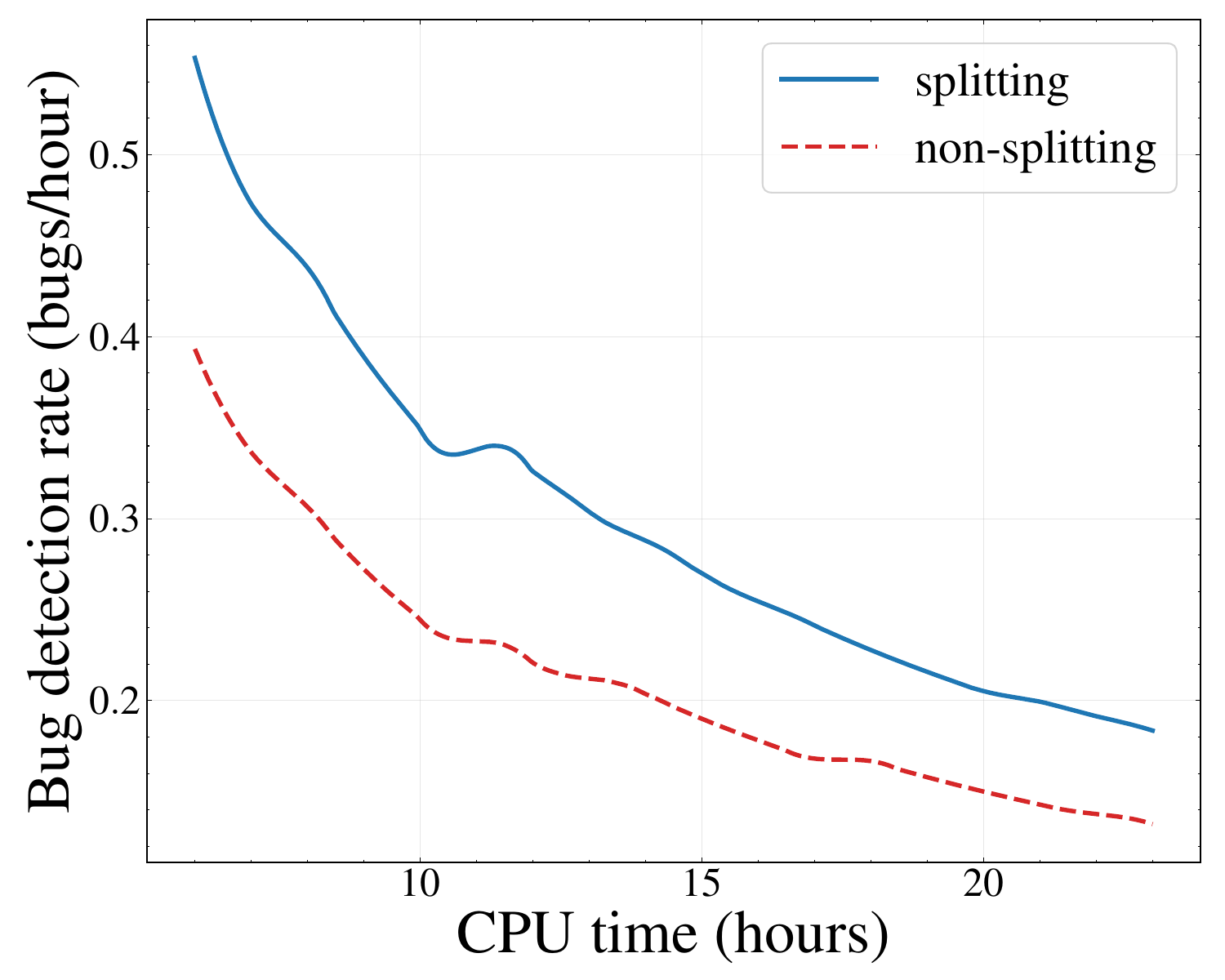}\\
    \makebox[0.19\textwidth]{\footnotesize (a) BDR of arrow}\hfill\makebox[0.19\textwidth]{\footnotesize (b) BDR of ffmpeg}\hfill\makebox[0.19\textwidth]{\footnotesize (c) BDR of grok}\hfill\makebox[0.19\textwidth]{\footnotesize (d) BDR of libhevc}\hfill\makebox[0.19\textwidth]{\footnotesize (e) BDR of libhtp}\\[3pt]
    \includegraphics[width=0.19\textwidth]{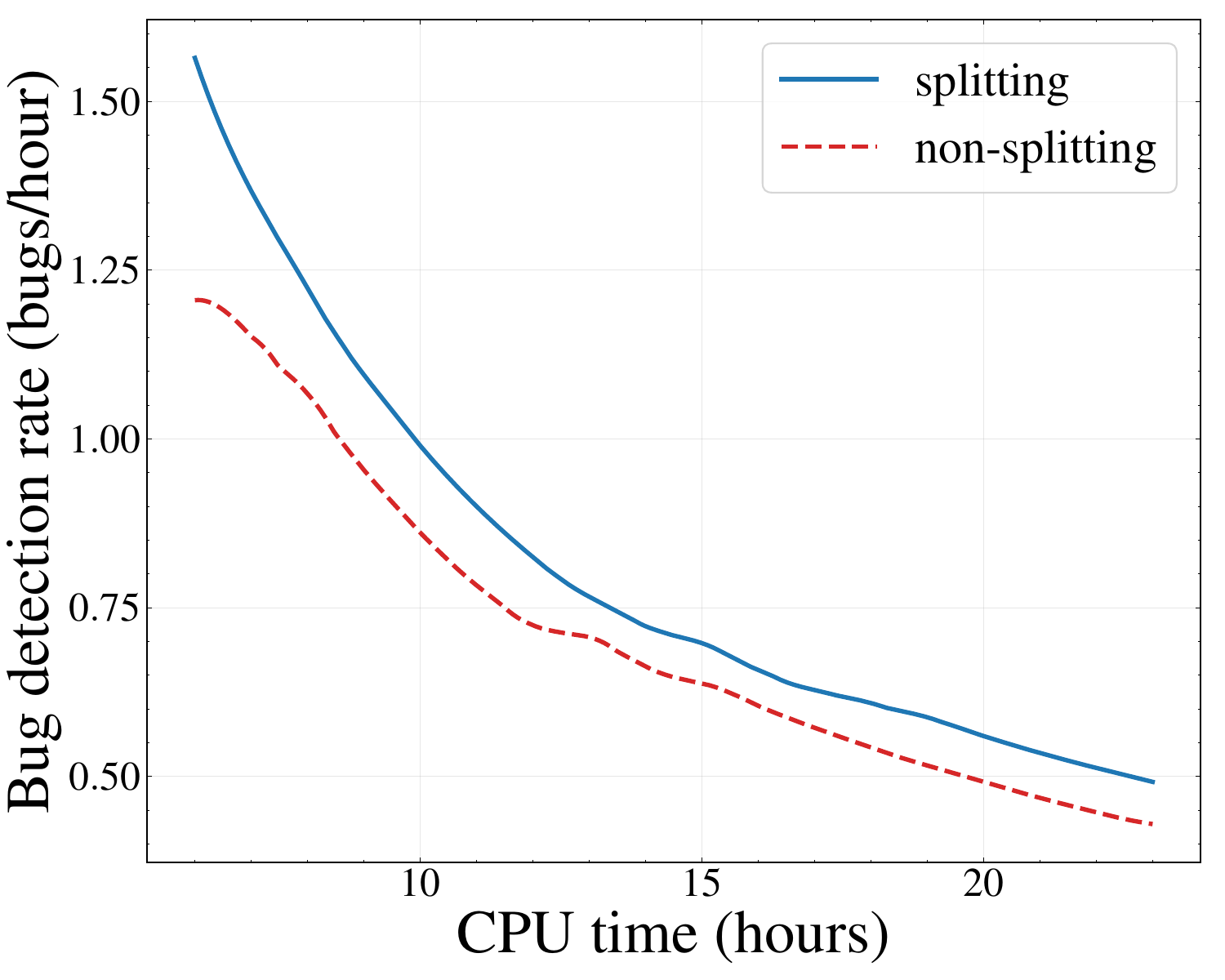}\hfill\includegraphics[width=0.19\textwidth]{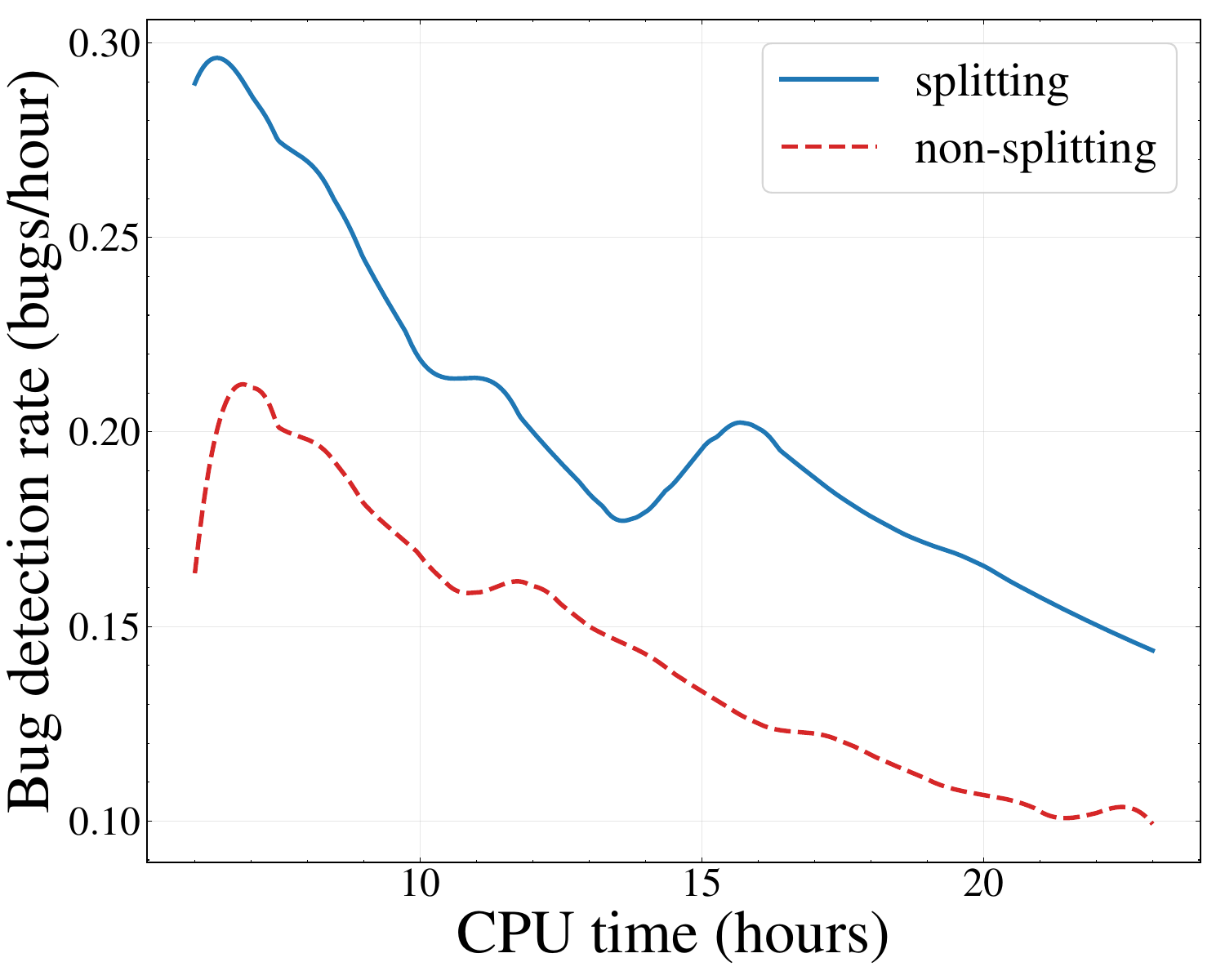}\hfill\includegraphics[width=0.19\textwidth]{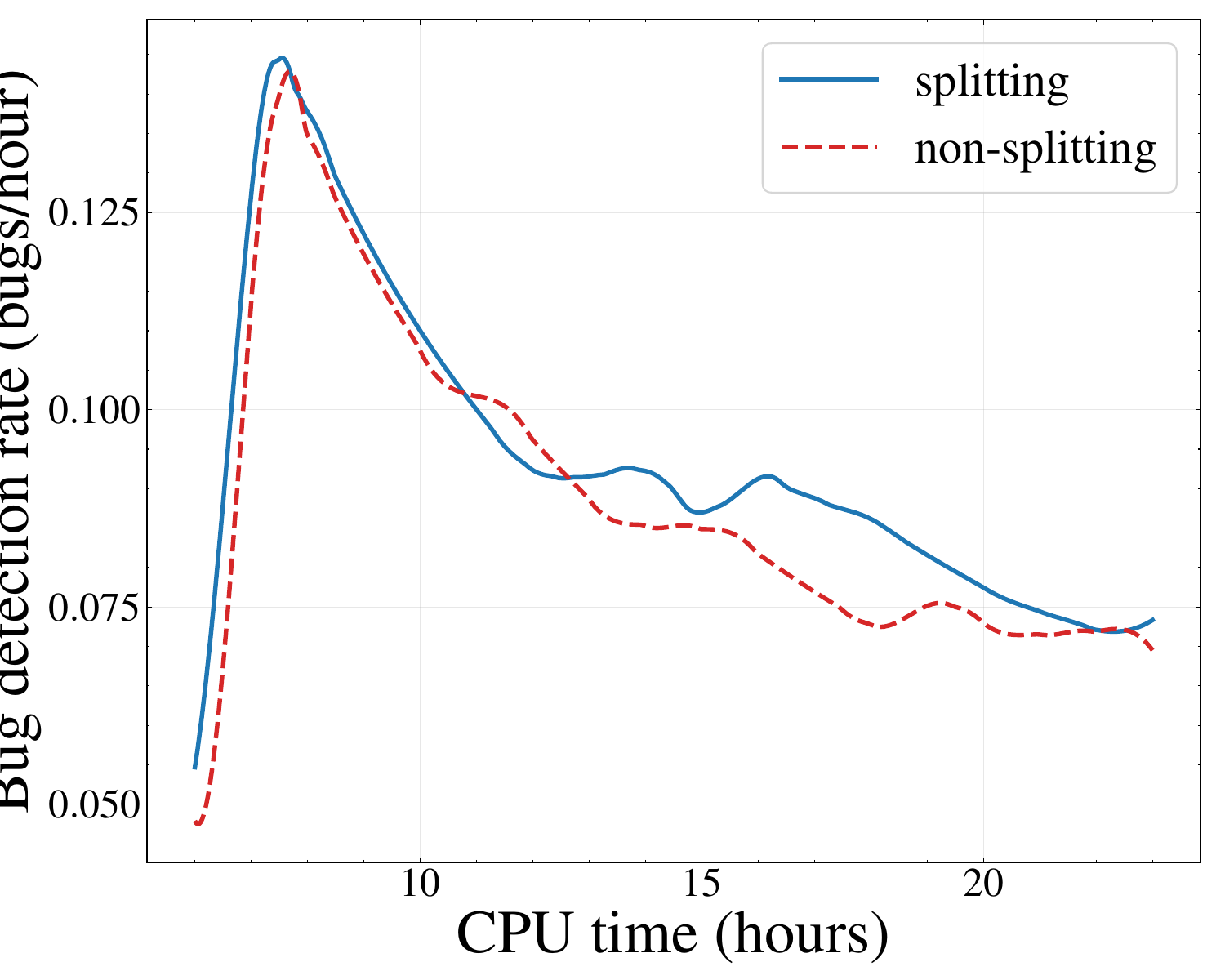}\hfill\includegraphics[width=0.19\textwidth]{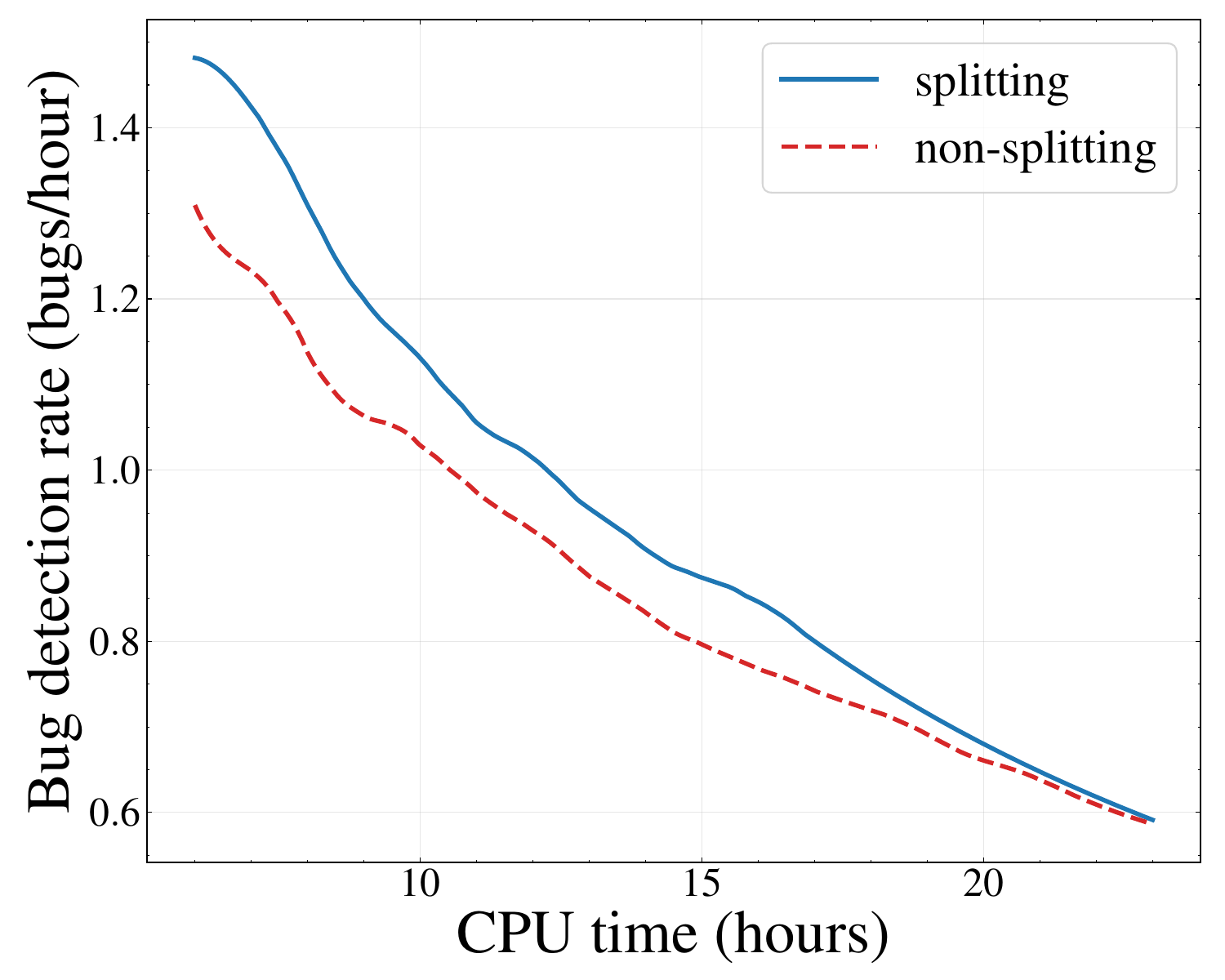}\hfill\includegraphics[width=0.19\textwidth]{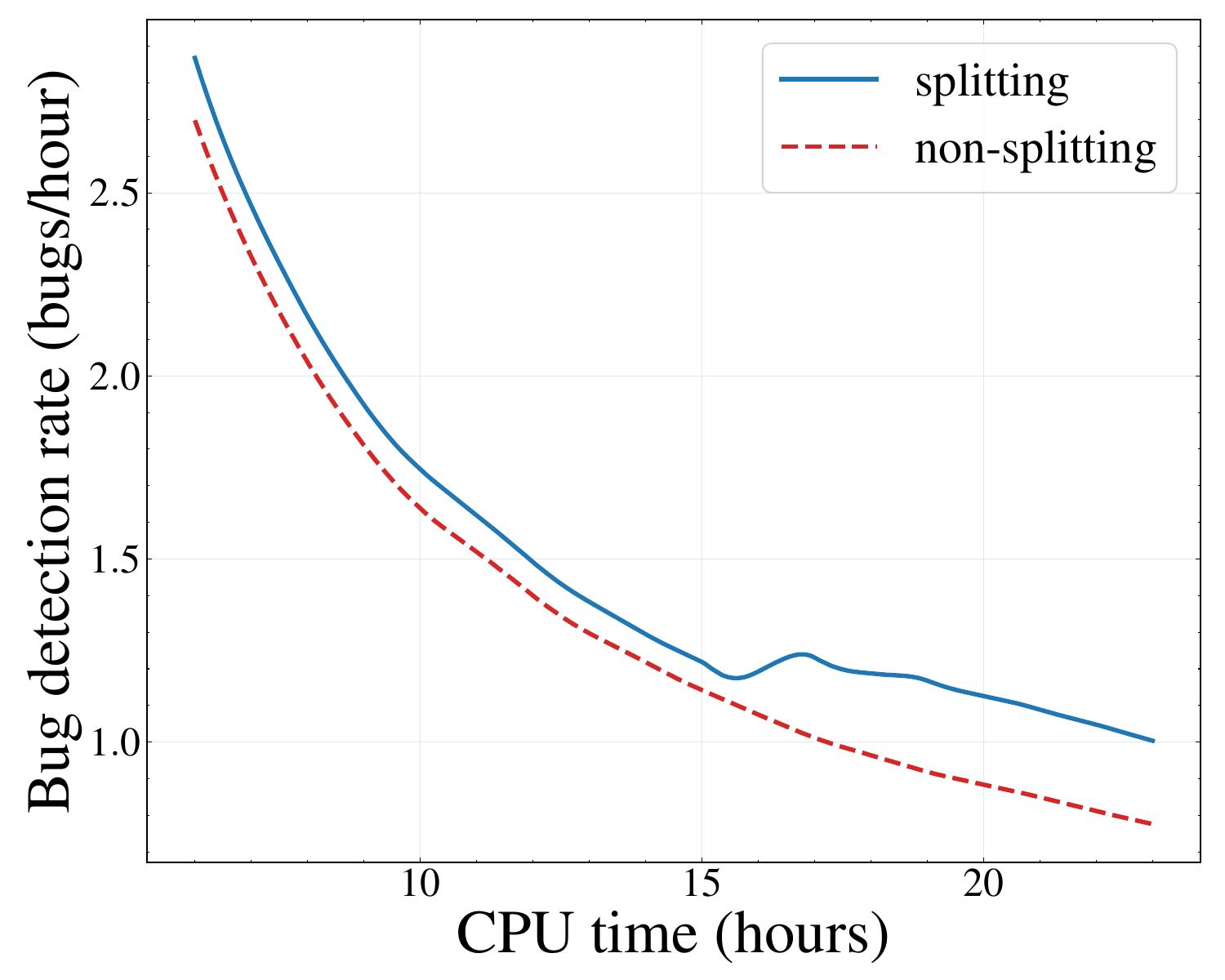}\\
    \makebox[0.19\textwidth]{\footnotesize (f) BDR of matio}\hfill\makebox[0.19\textwidth]{\footnotesize (g) BDR of openh264}\hfill\makebox[0.19\textwidth]{\footnotesize (h) BDR of php}\hfill\makebox[0.19\textwidth]{\footnotesize (i) BDR of poppler}\hfill\makebox[0.19\textwidth]{\footnotesize (j) BDR of stb}
    \caption{Comparisons of bug detection rate across 10 benchmarks for fuzzer AFLFast.}
    \label{fig:bdr_aflfast}
\end{figure*}

\begin{figure*}[tp]
    \centering
    \includegraphics[width=0.19\textwidth]{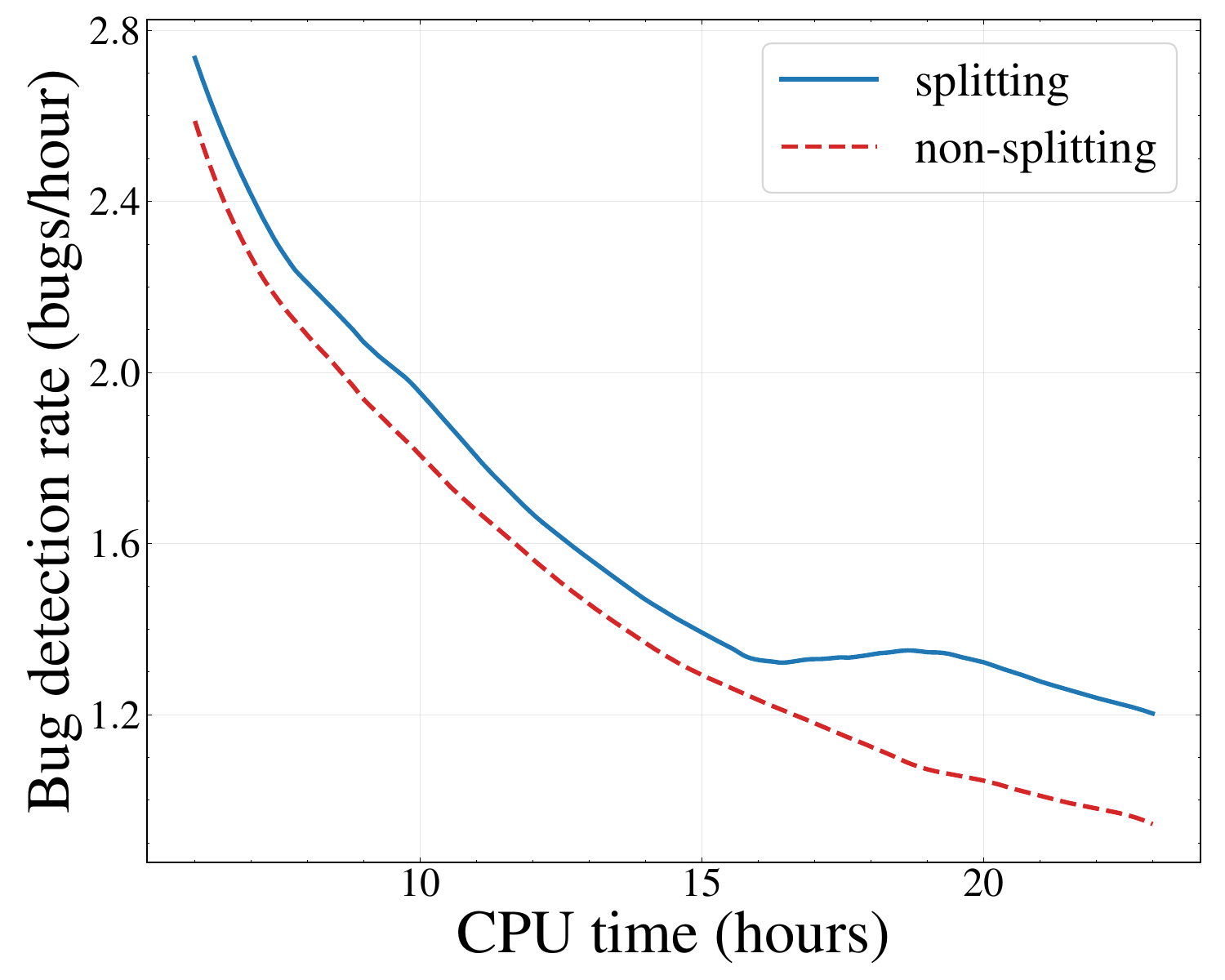}\hfill\includegraphics[width=0.19\textwidth]{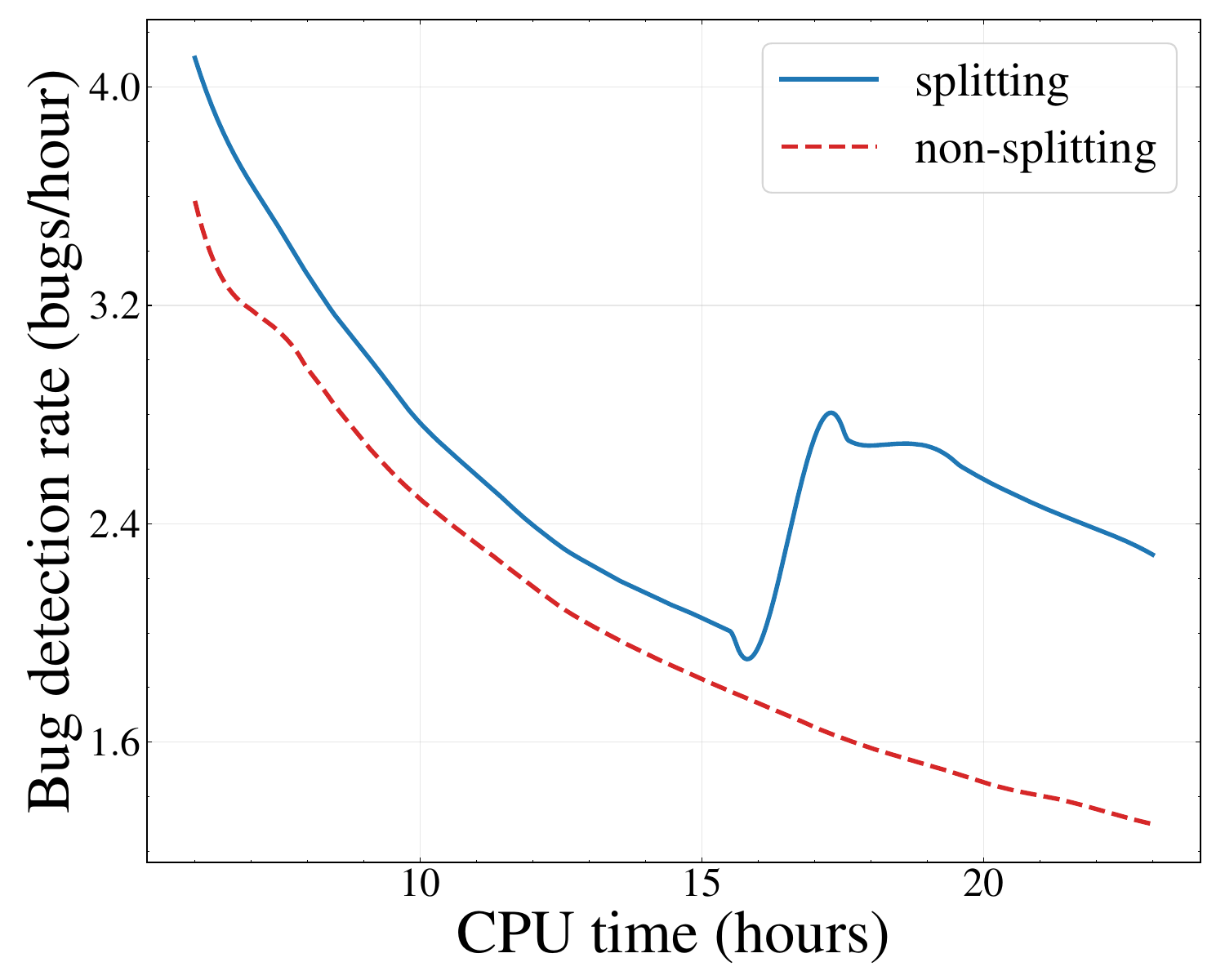}\hfill\includegraphics[width=0.19\textwidth]{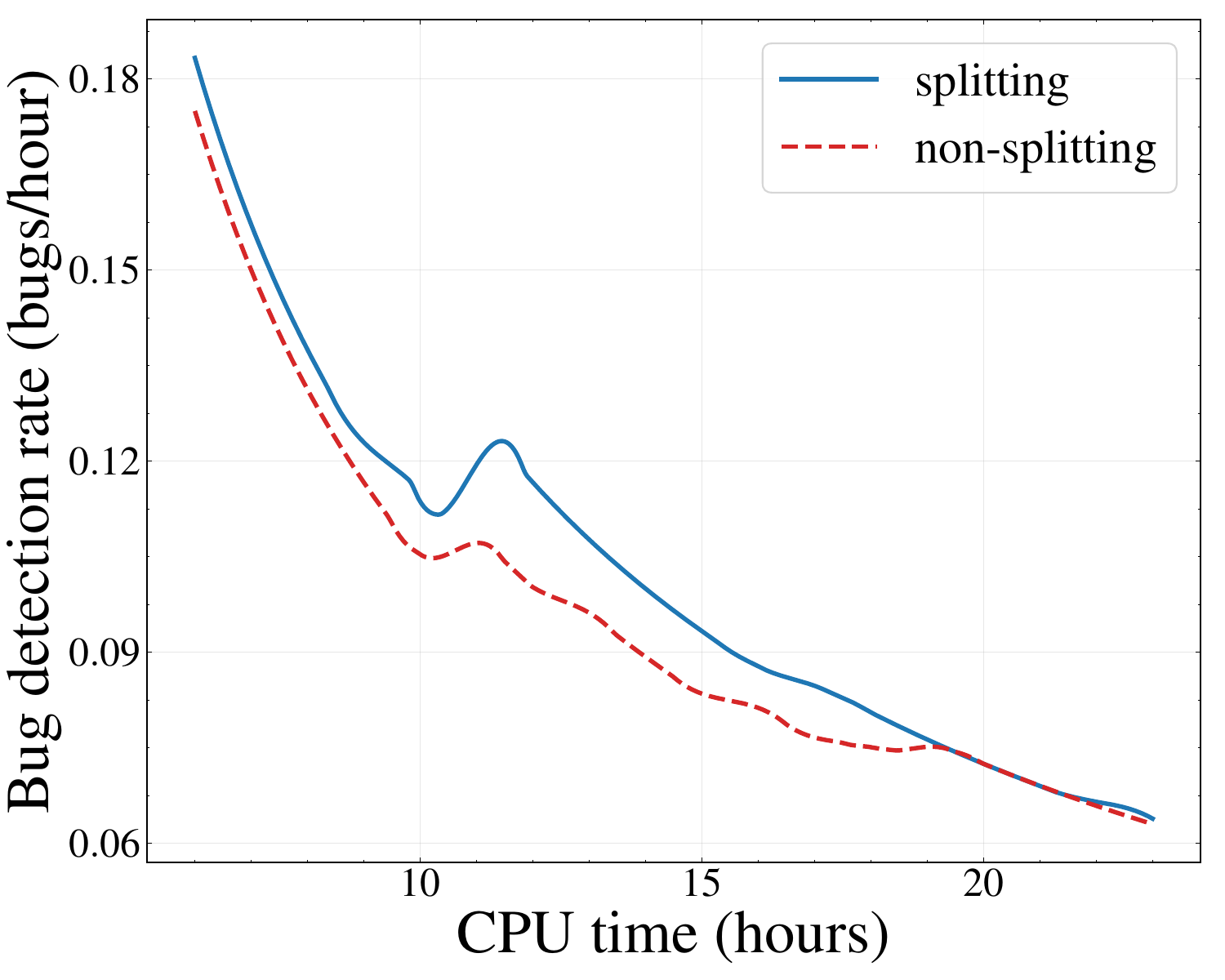}\hfill\includegraphics[width=0.19\textwidth]{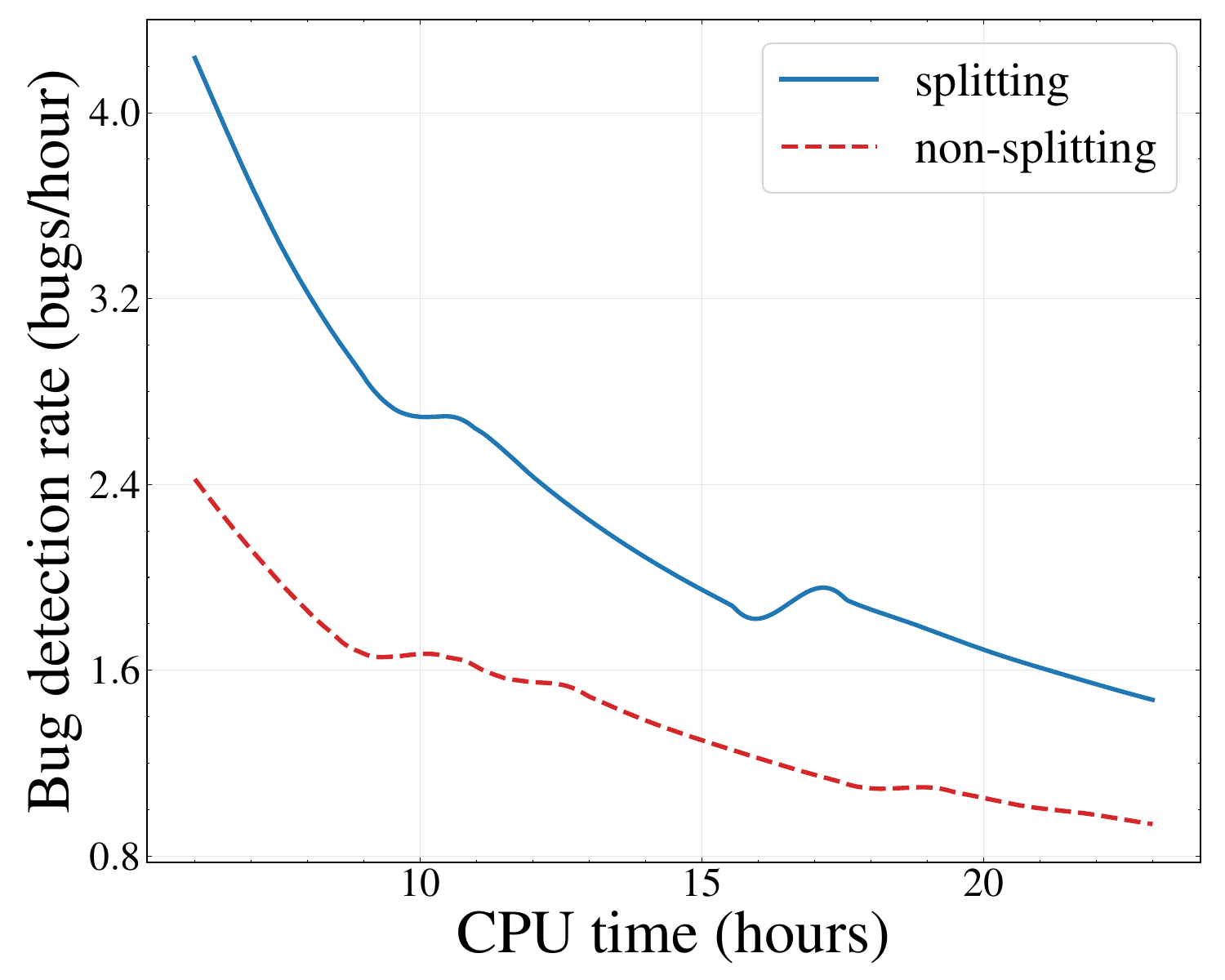}\hfill\includegraphics[width=0.19\textwidth]{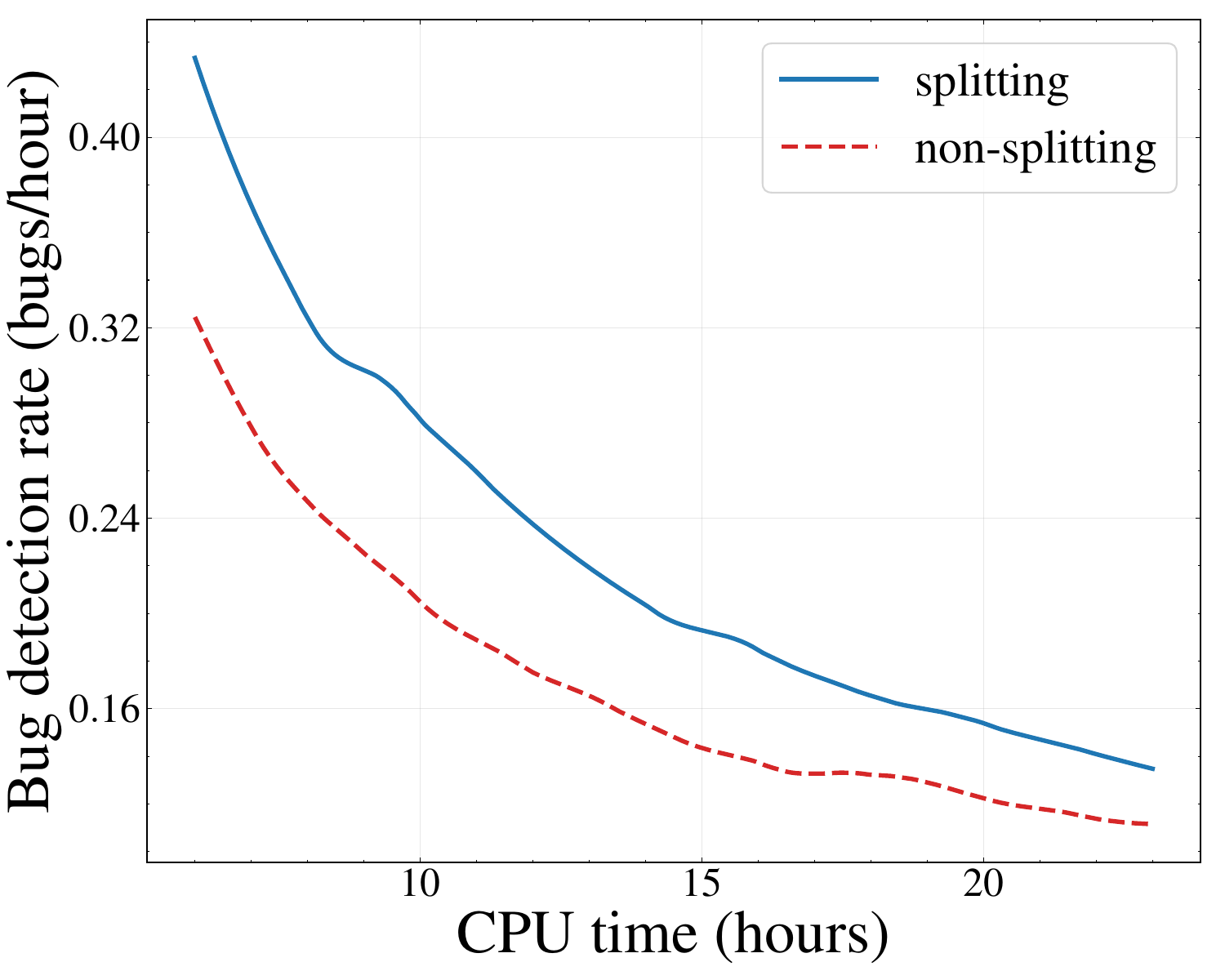}\\
    \makebox[0.19\textwidth]{\footnotesize (a) BDR of arrow}\hfill\makebox[0.19\textwidth]{\footnotesize (b) BDR of ffmpeg}\hfill\makebox[0.19\textwidth]{\footnotesize (c) BDR of grok}\hfill\makebox[0.19\textwidth]{\footnotesize (d) BDR of libhevc}\hfill\makebox[0.19\textwidth]{\footnotesize (e) BDR of libhtp}\\[3pt]
    \includegraphics[width=0.19\textwidth]{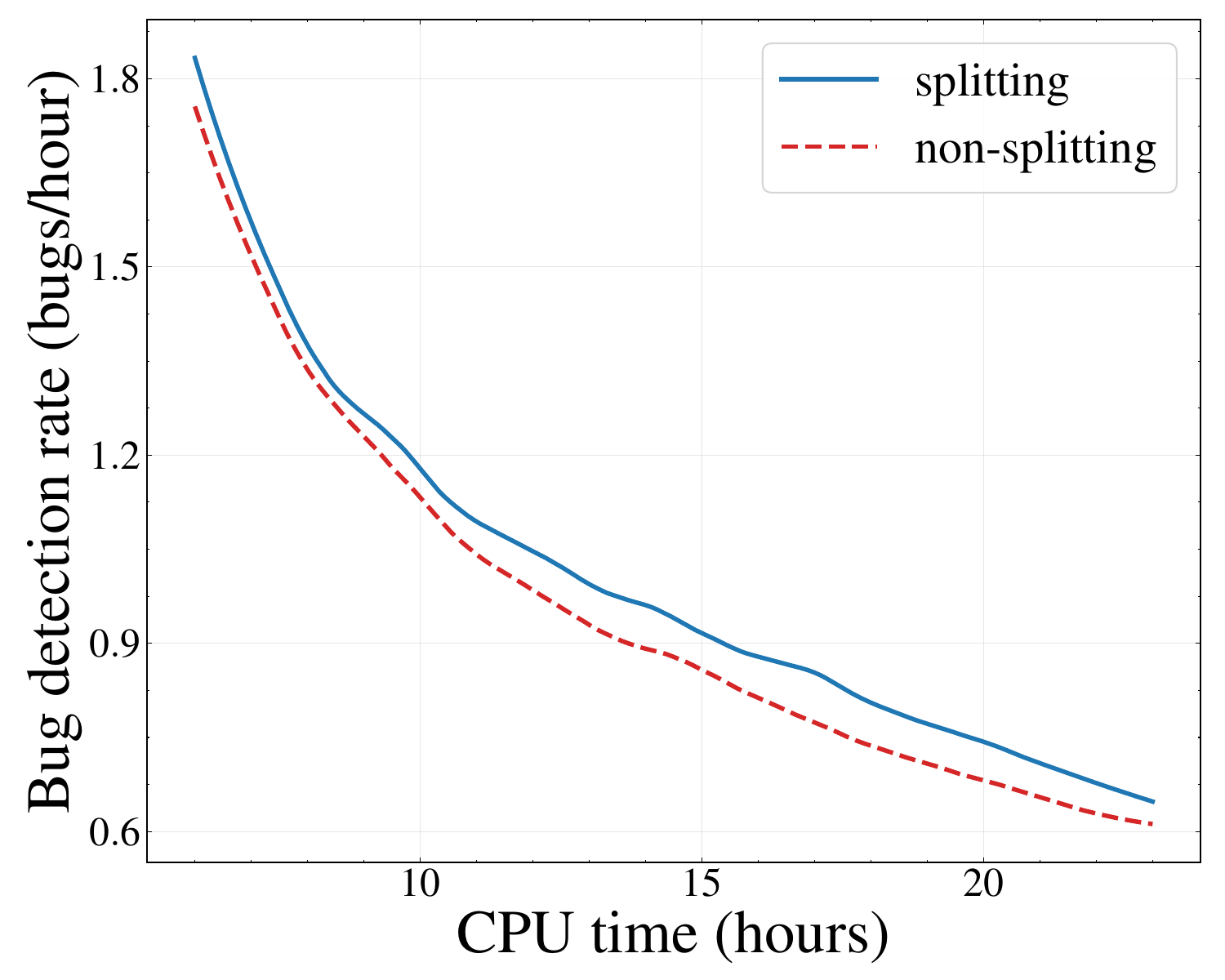}\hfill\includegraphics[width=0.19\textwidth]{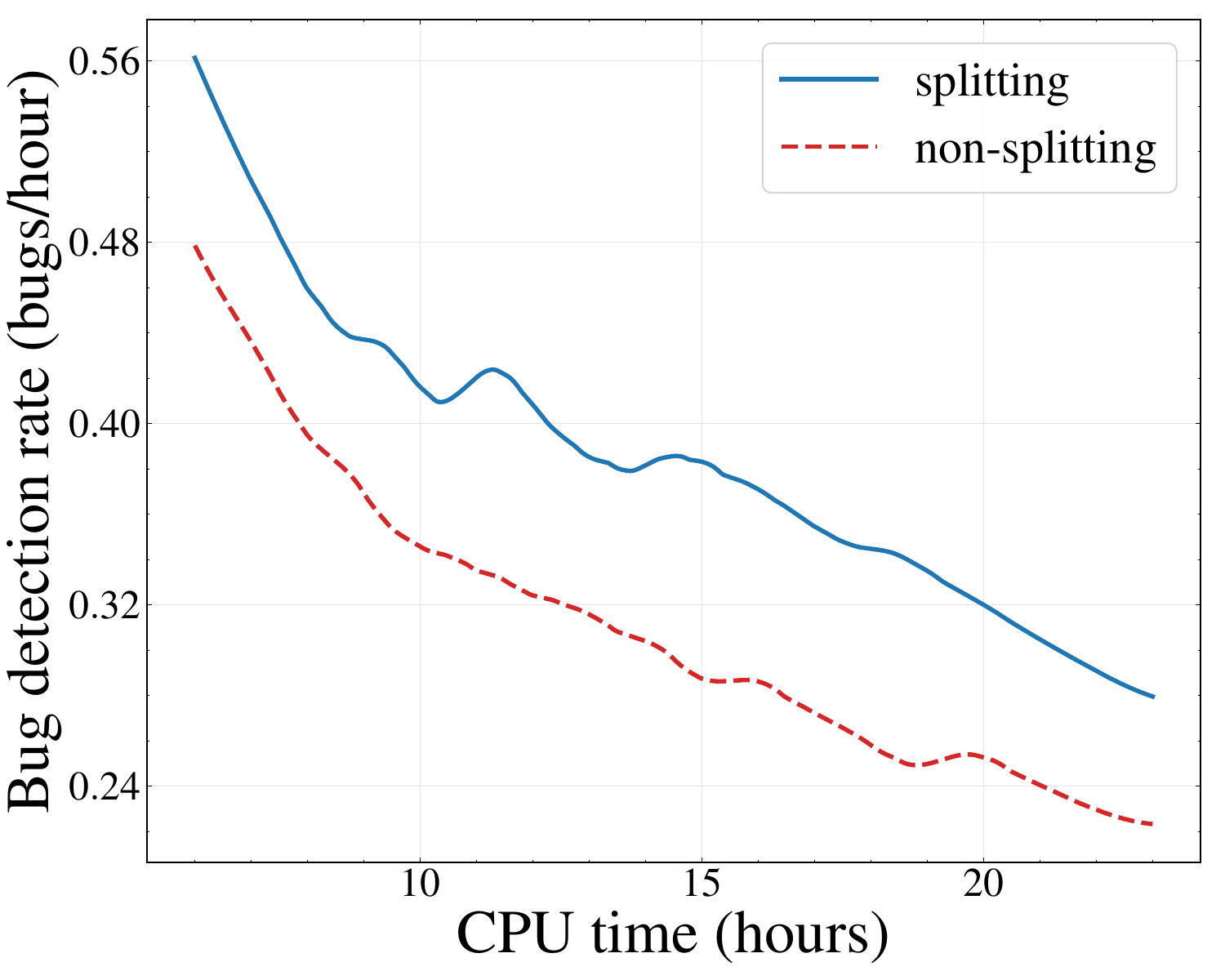}\hfill\includegraphics[width=0.19\textwidth]{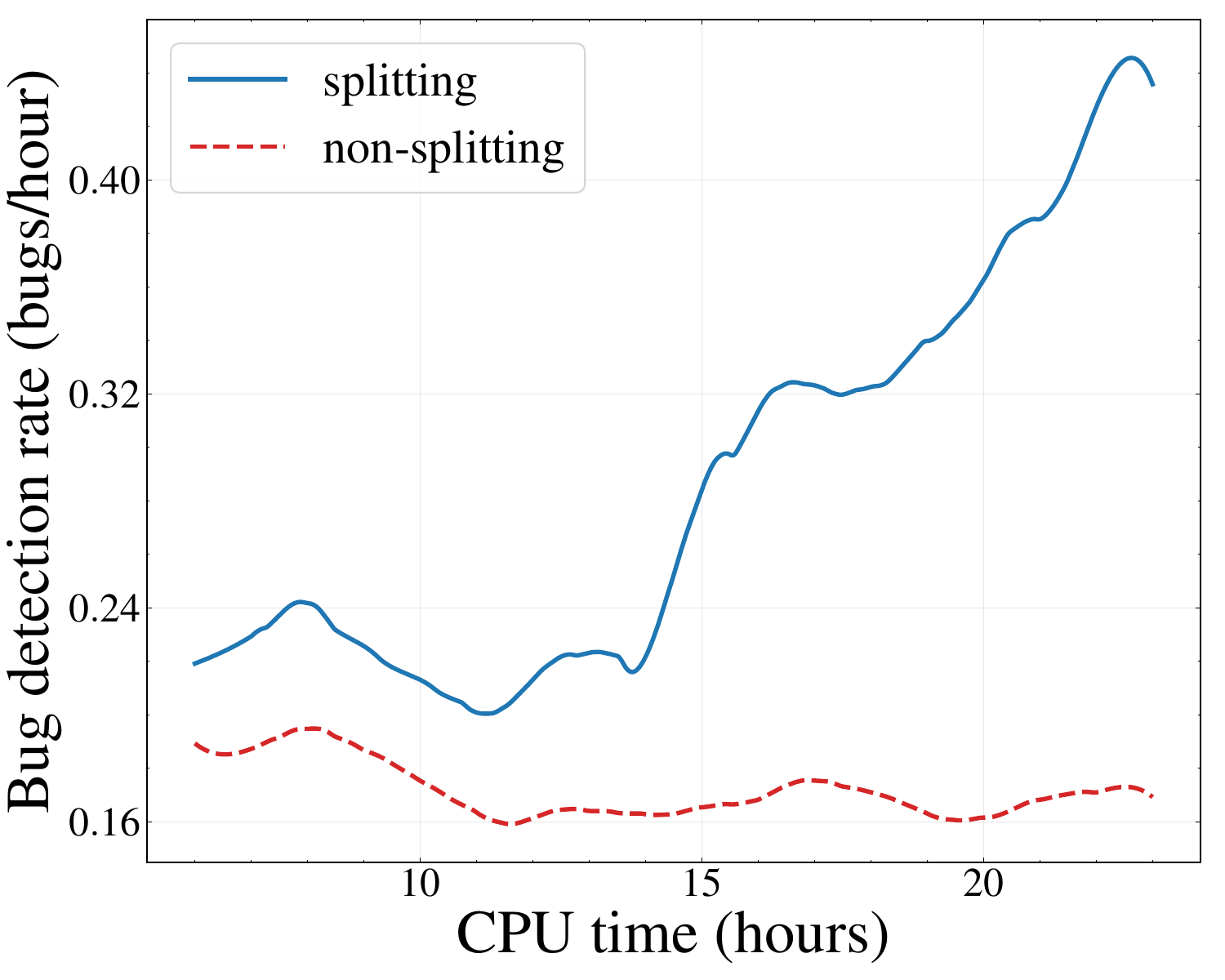}\hfill\includegraphics[width=0.19\textwidth]{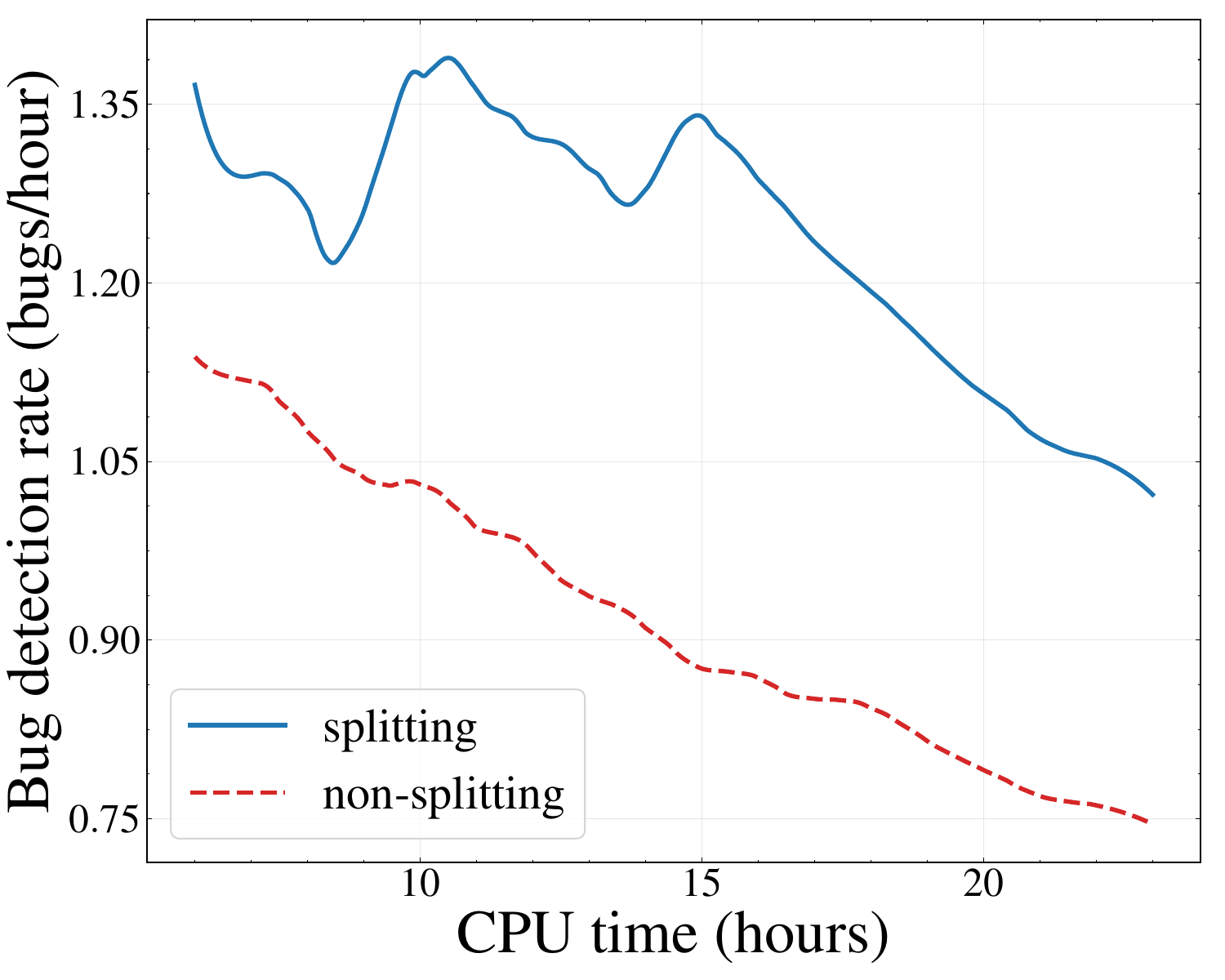}\hfill\includegraphics[width=0.19\textwidth]{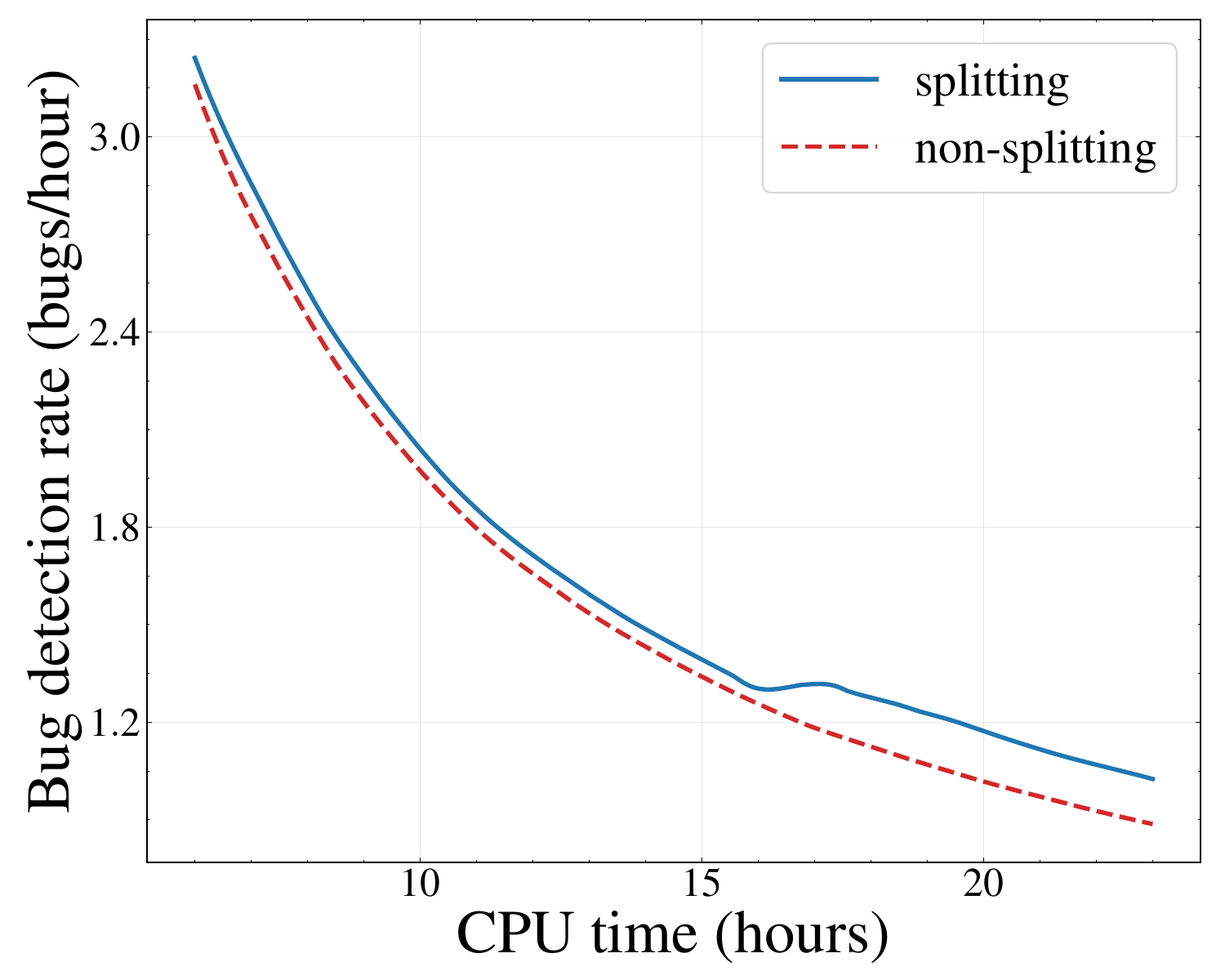}\\
    \makebox[0.19\textwidth]{\footnotesize (f) BDR of matio}\hfill\makebox[0.19\textwidth]{\footnotesize (g) BDR of openh264}\hfill\makebox[0.19\textwidth]{\footnotesize (h) BDR of php}\hfill\makebox[0.19\textwidth]{\footnotesize (i) BDR of poppler}\hfill\makebox[0.19\textwidth]{\footnotesize (j) BDR of stb}
    \caption{Comparisons of bug detection rate across 10 benchmarks for fuzzer Honggfuzz.}
    \label{fig:bdr_honggfuzz}
\end{figure*}

\begin{figure*}[tp]
    \centering
    \includegraphics[width=0.19\textwidth]{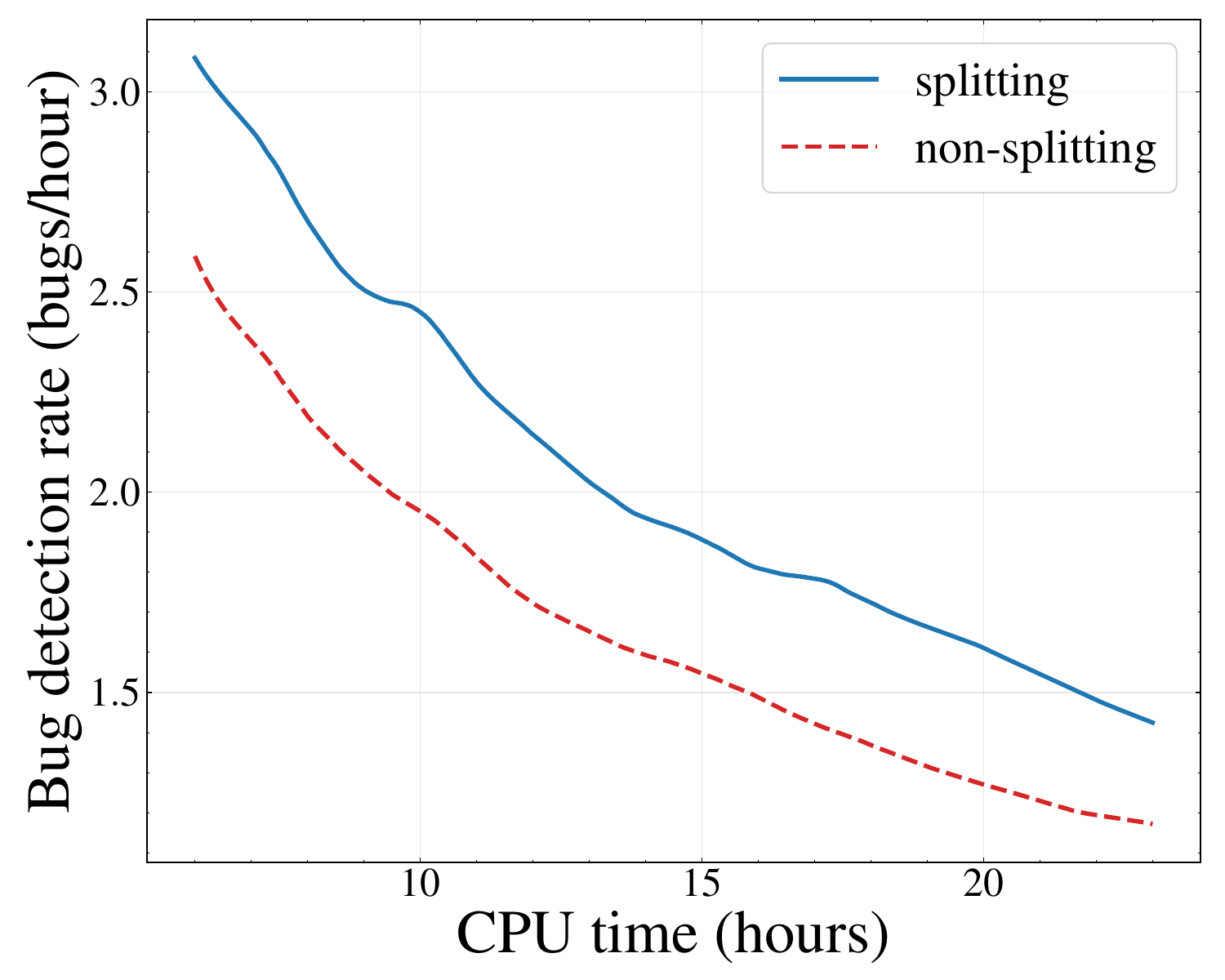}\hfill\includegraphics[width=0.19\textwidth]{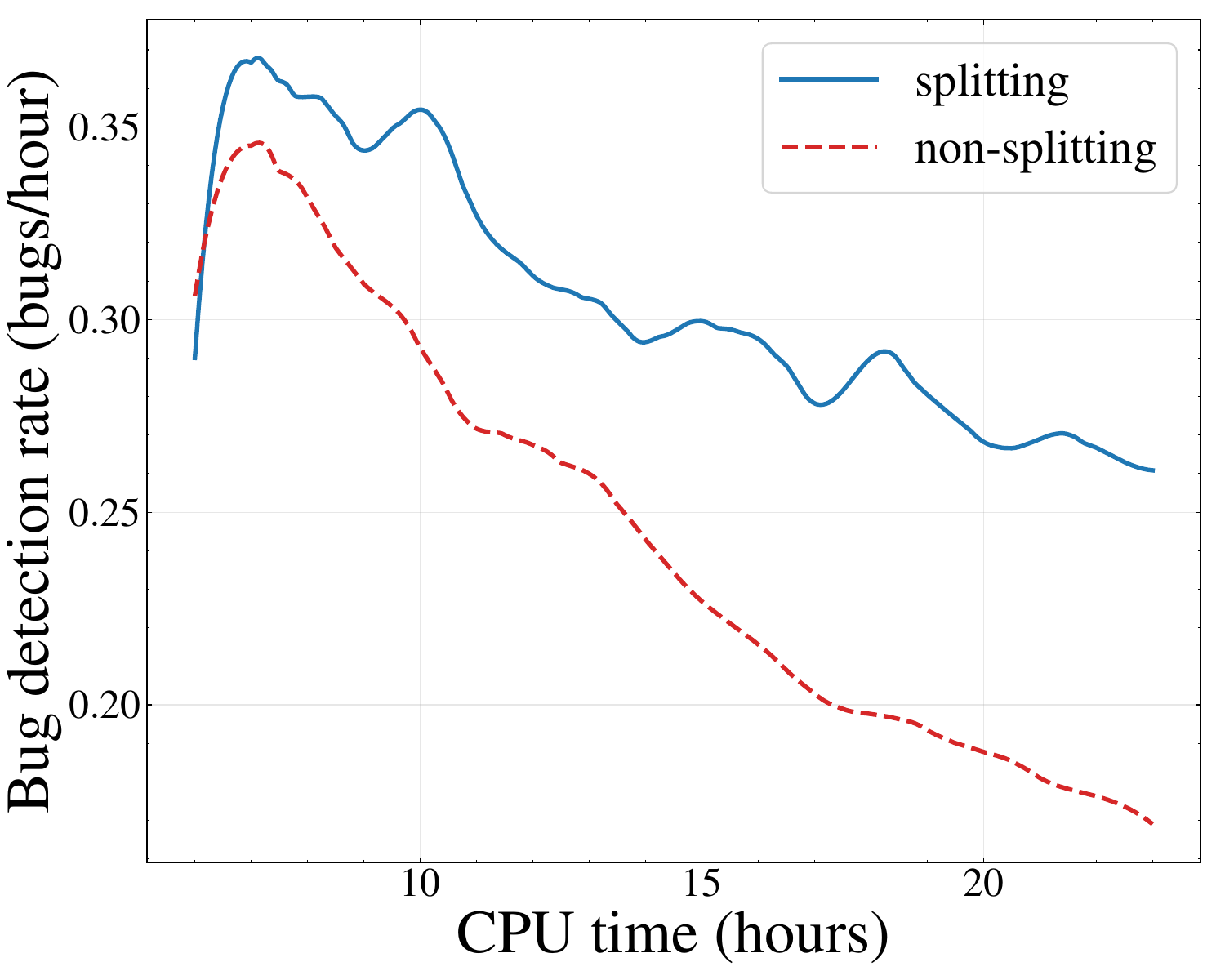}\hfill\includegraphics[width=0.19\textwidth]{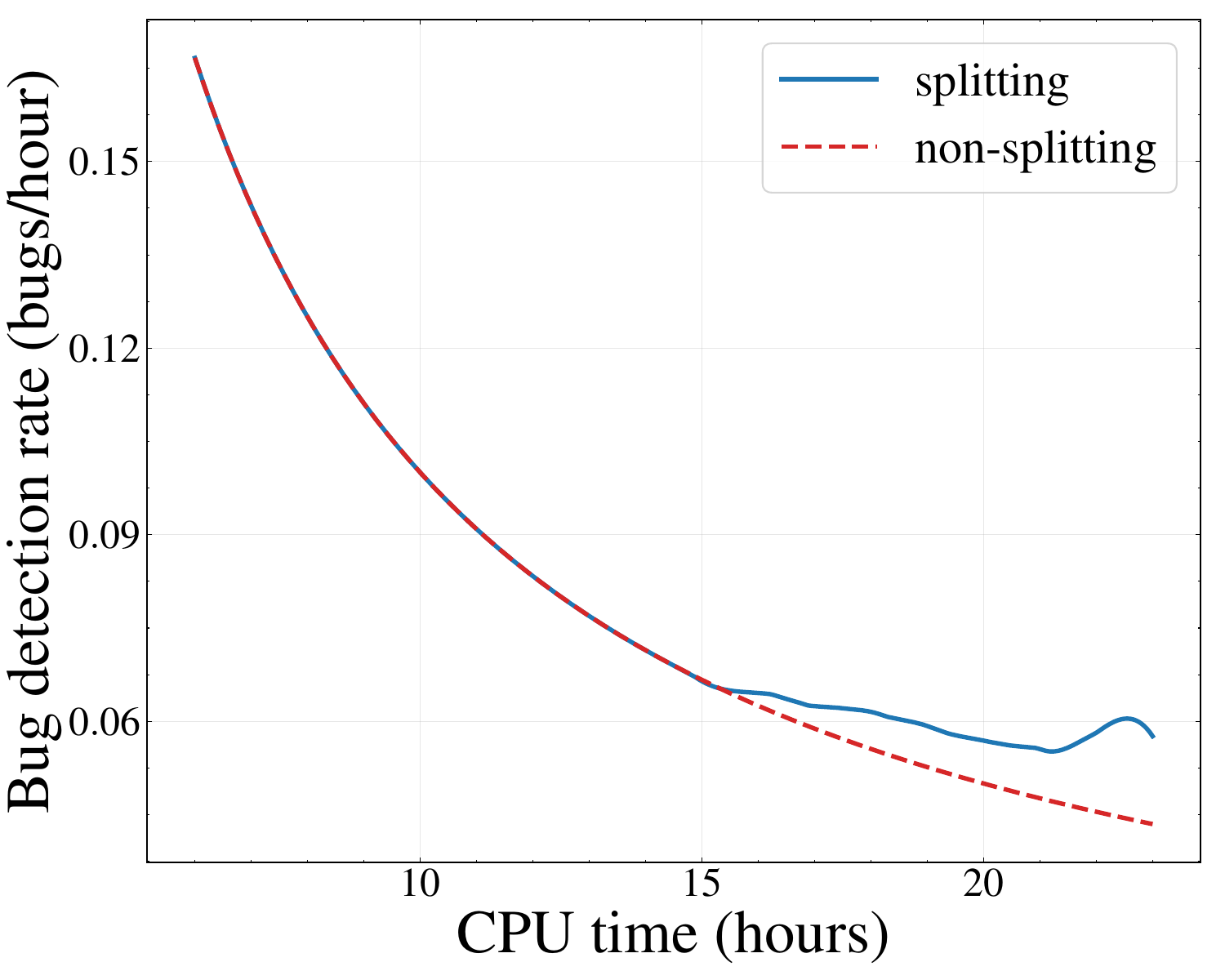}\hfill\includegraphics[width=0.19\textwidth]{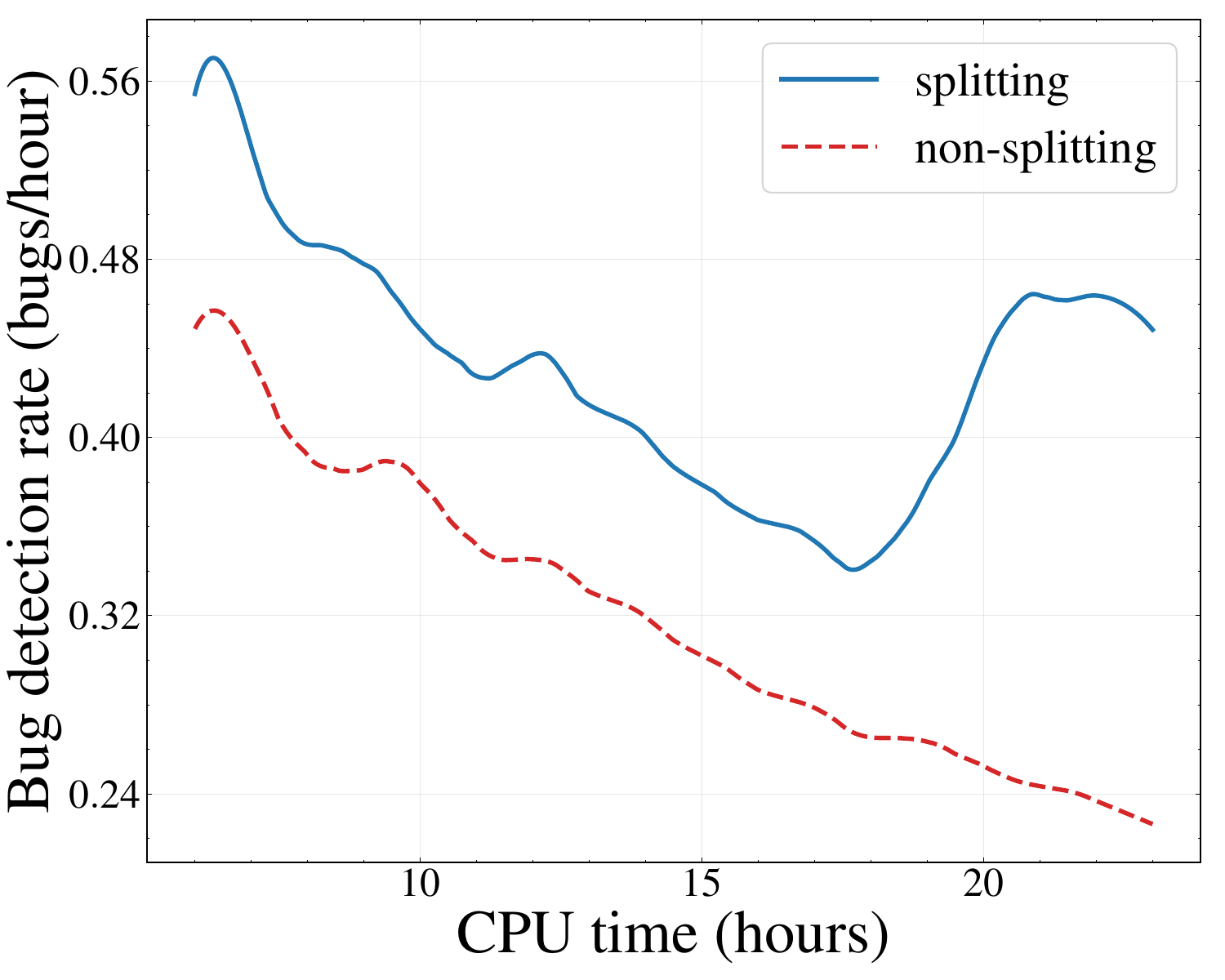}\hfill\includegraphics[width=0.19\textwidth]{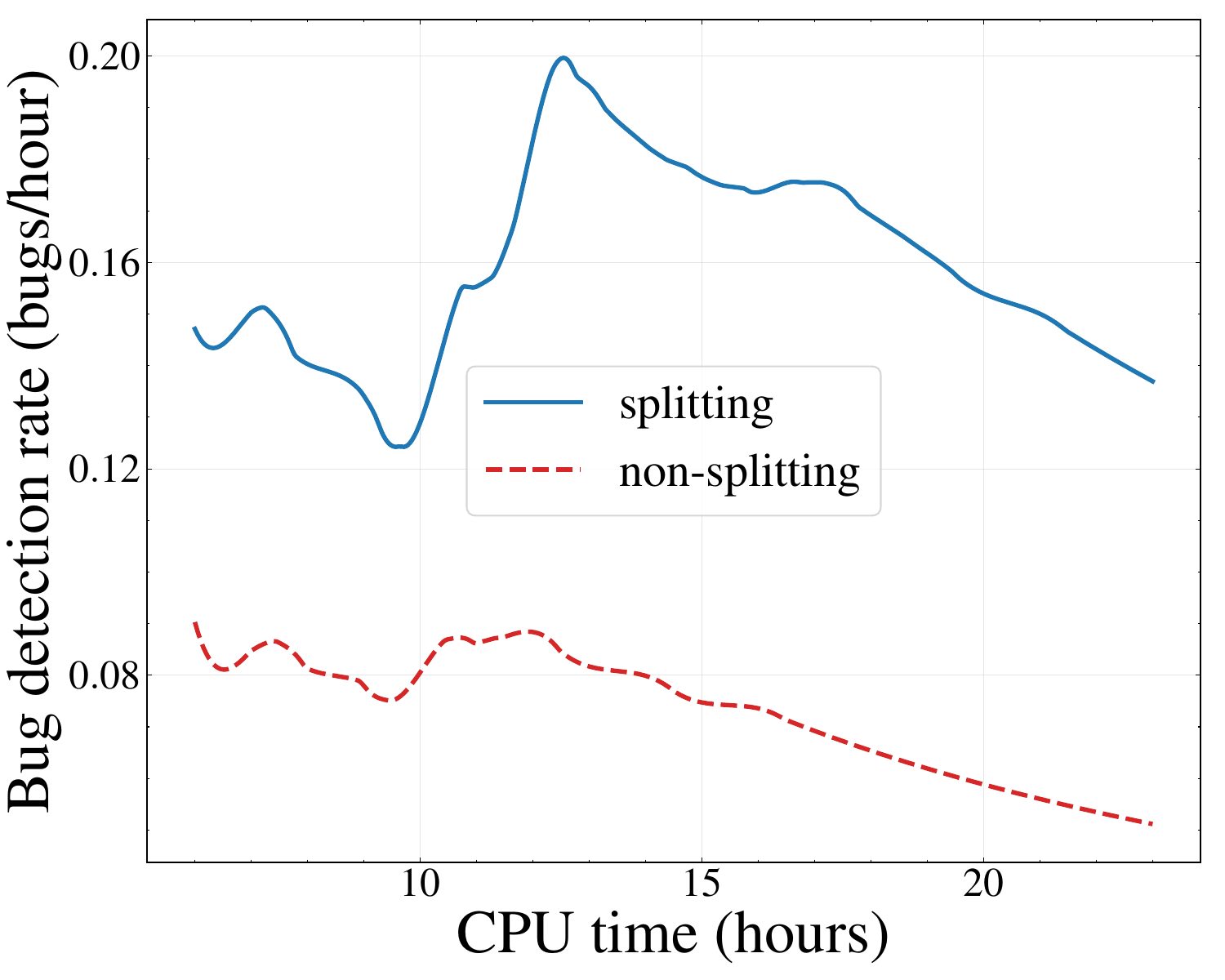}\\
    \makebox[0.19\textwidth]{\footnotesize (a) BDR of arrow}\hfill\makebox[0.19\textwidth]{\footnotesize (b) BDR of ffmpeg}\hfill\makebox[0.19\textwidth]{\footnotesize (c) BDR of grok}\hfill\makebox[0.19\textwidth]{\footnotesize (d) BDR of libhevc}\hfill\makebox[0.19\textwidth]{\footnotesize (e) BDR of libhtp}\\[3pt]
    \includegraphics[width=0.19\textwidth]{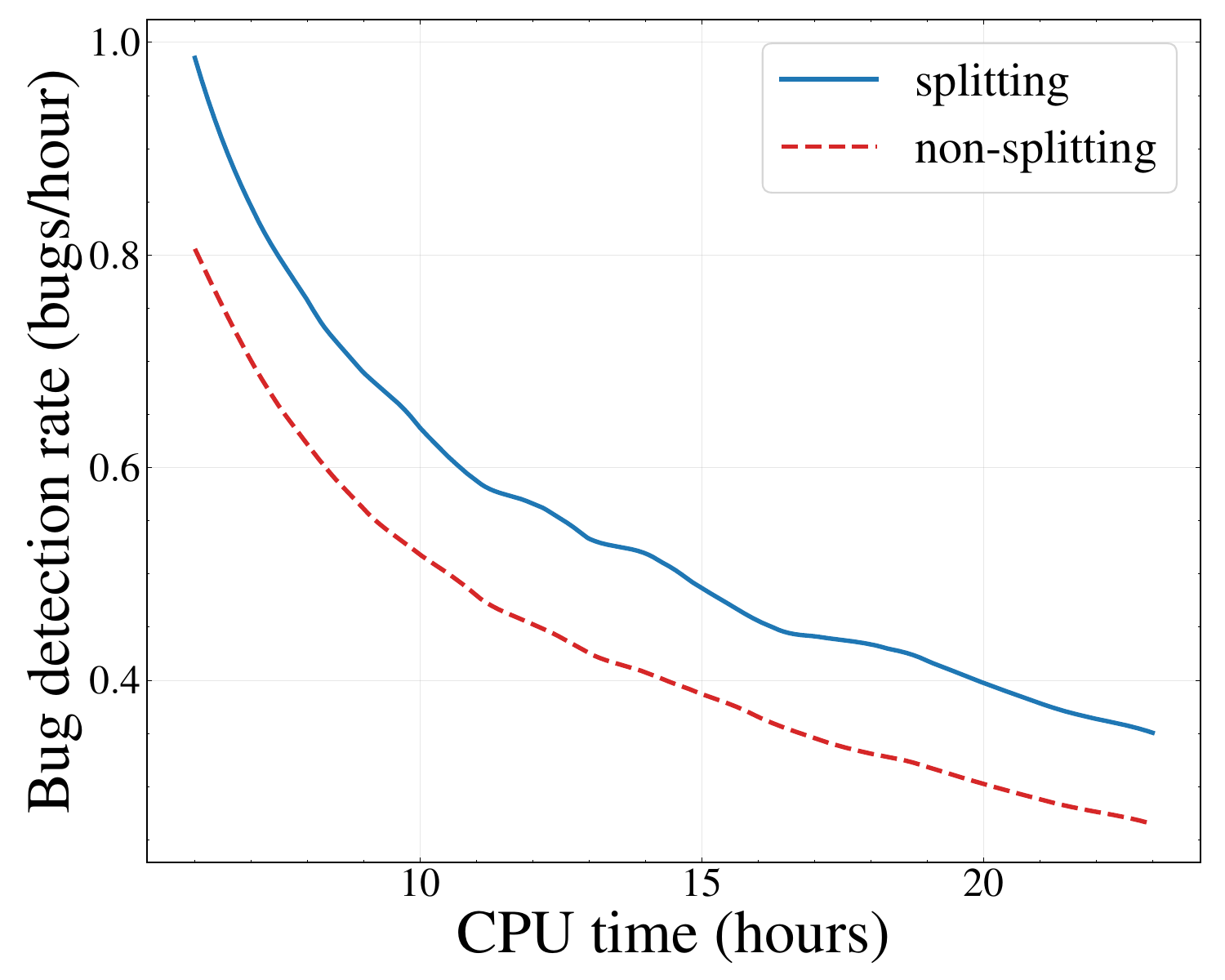}\hfill\includegraphics[width=0.19\textwidth]{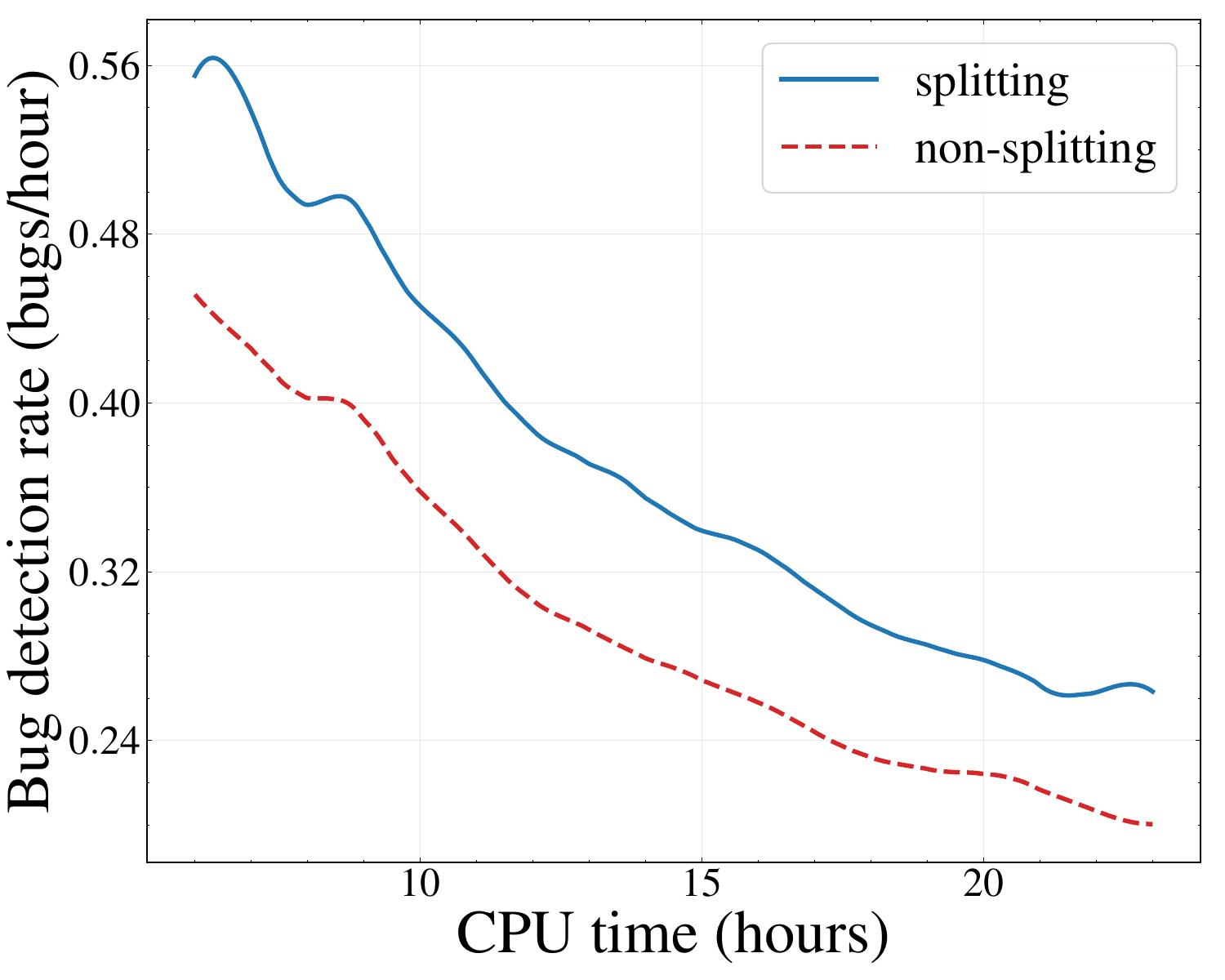}\hfill\includegraphics[width=0.19\textwidth]{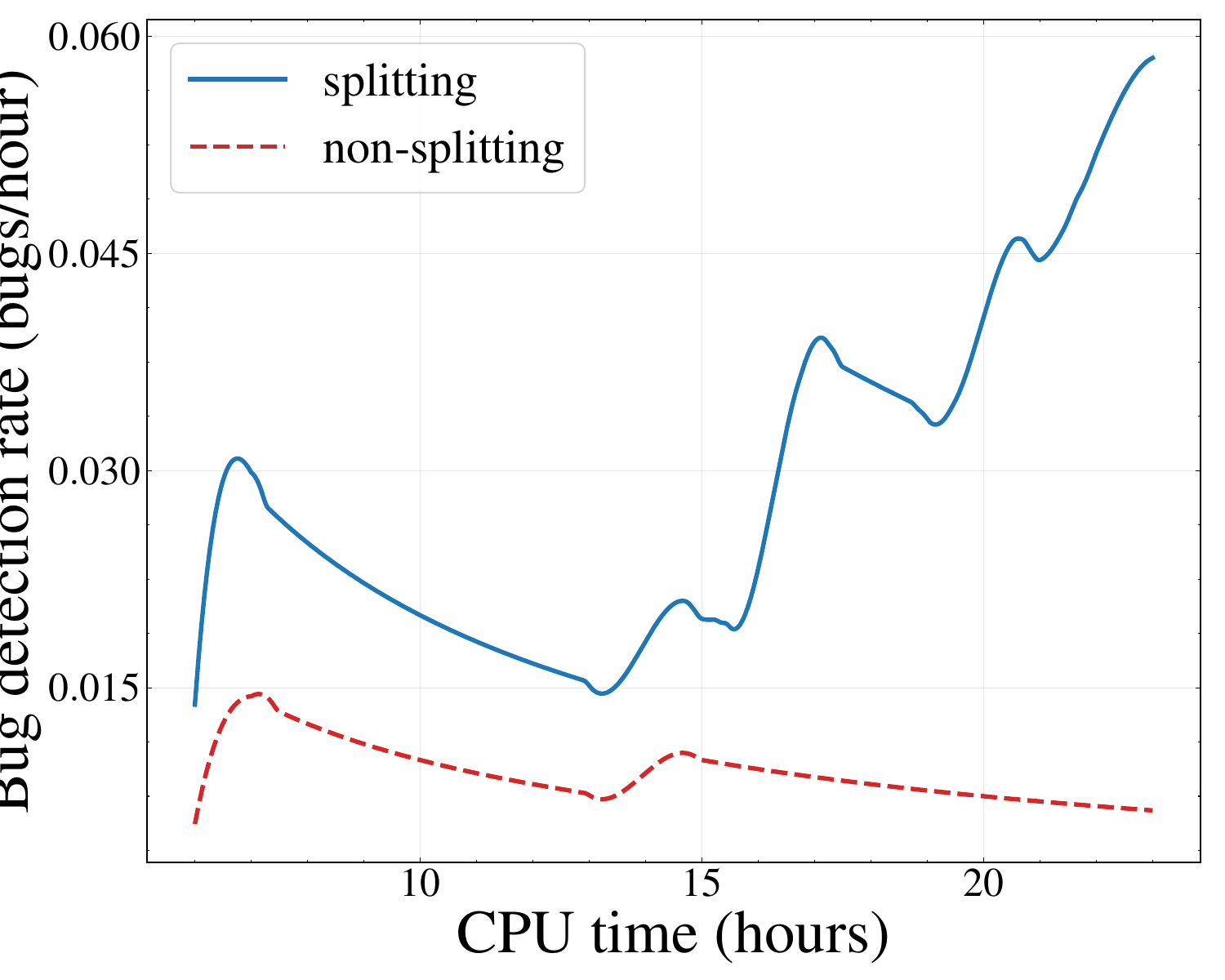}\hfill\includegraphics[width=0.19\textwidth]{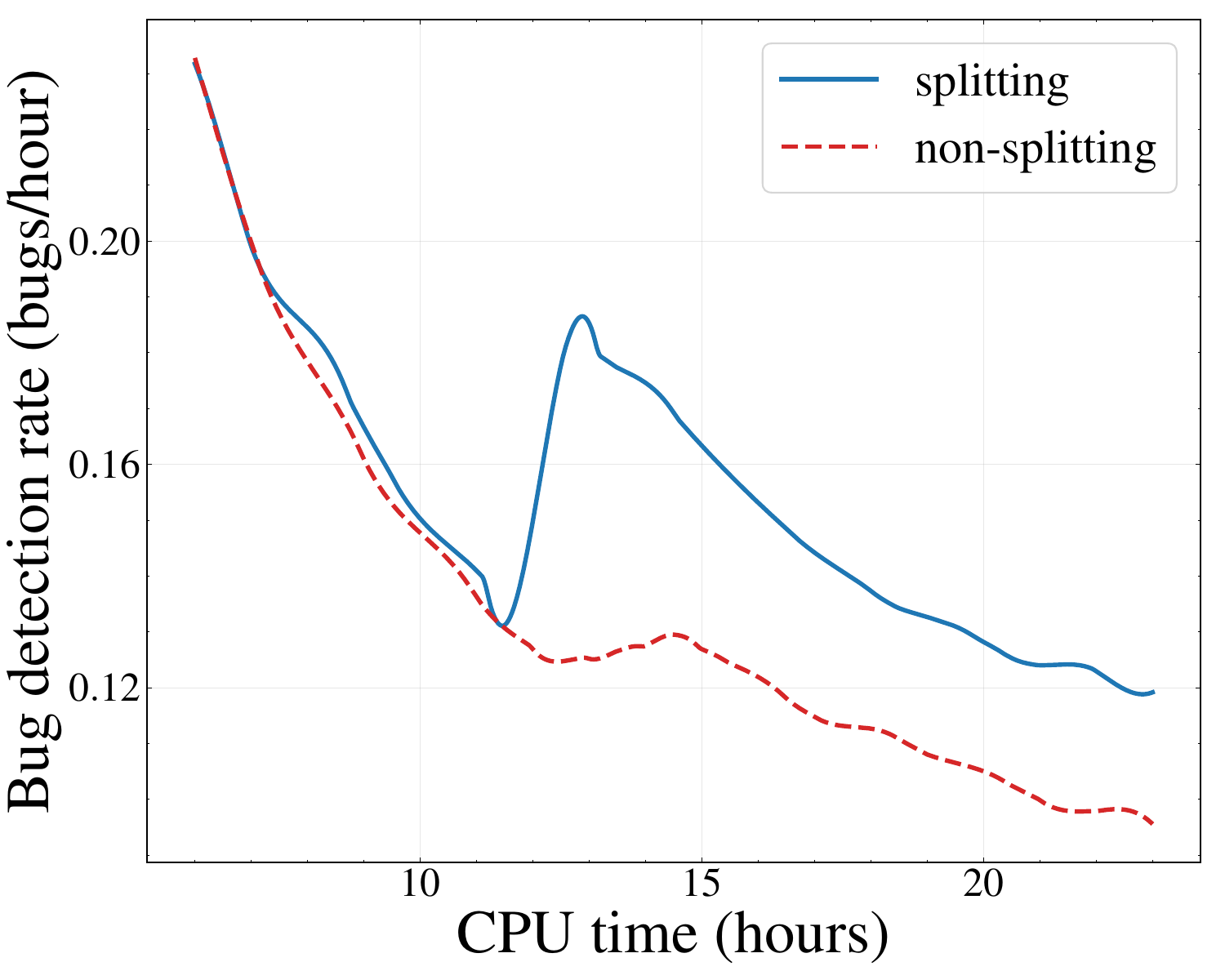}\hfill\includegraphics[width=0.19\textwidth]{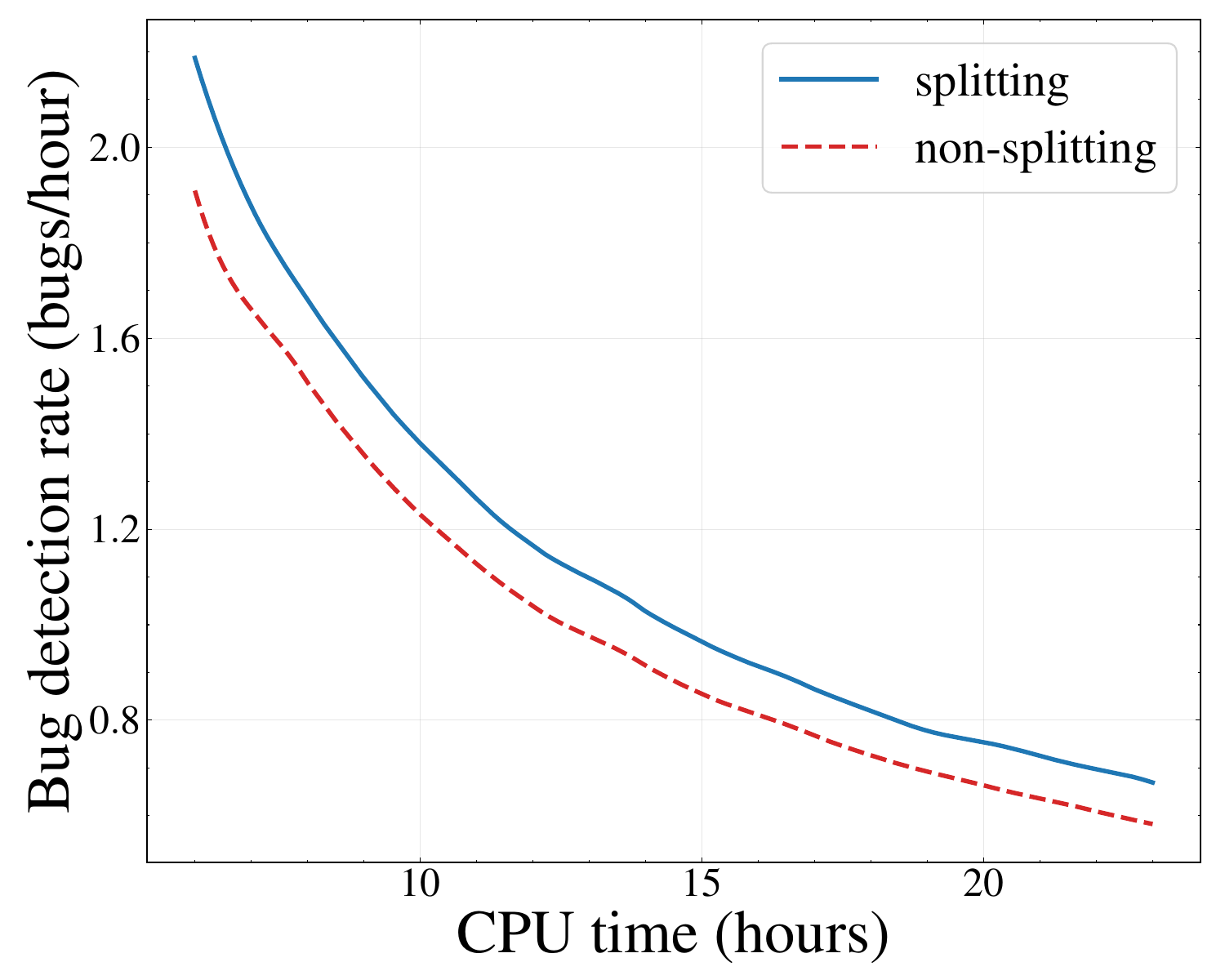}\\
    \makebox[0.19\textwidth]{\footnotesize (f) BDR of matio}\hfill\makebox[0.19\textwidth]{\footnotesize (g) BDR of openh264}\hfill\makebox[0.19\textwidth]{\footnotesize (h) BDR of php}\hfill\makebox[0.19\textwidth]{\footnotesize (i) BDR of poppler}\hfill\makebox[0.19\textwidth]{\footnotesize (j) BDR of stb}
    \caption{Comparisons of bug detection rate across 10 benchmarks for fuzzer libFuzzer.}
    \label{fig:bdr_libfuzzer}
\end{figure*}

\begin{figure}[htb]
    \centering
    \includegraphics[width=0.235\textwidth]{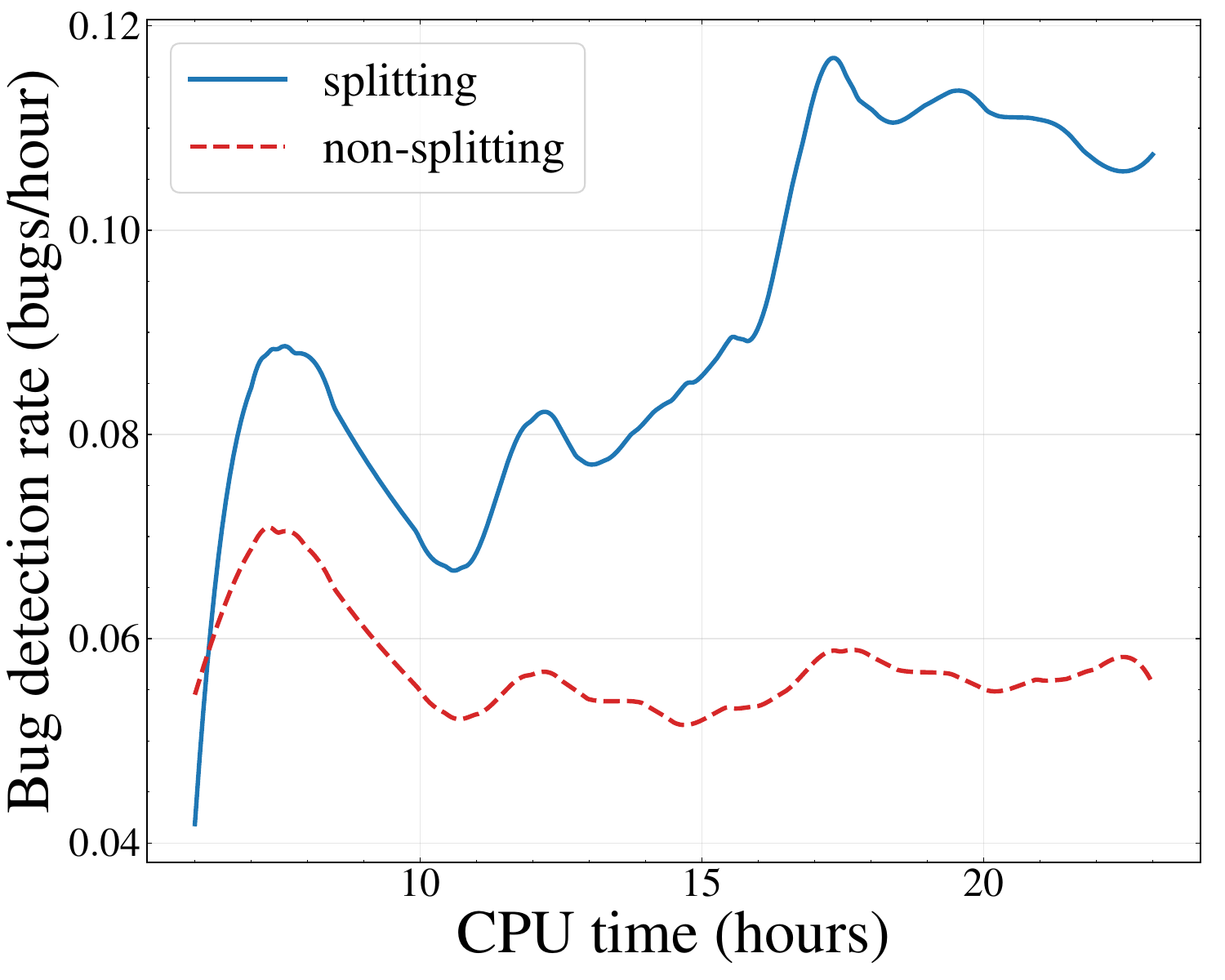}\hfill\includegraphics[width=0.235\textwidth]{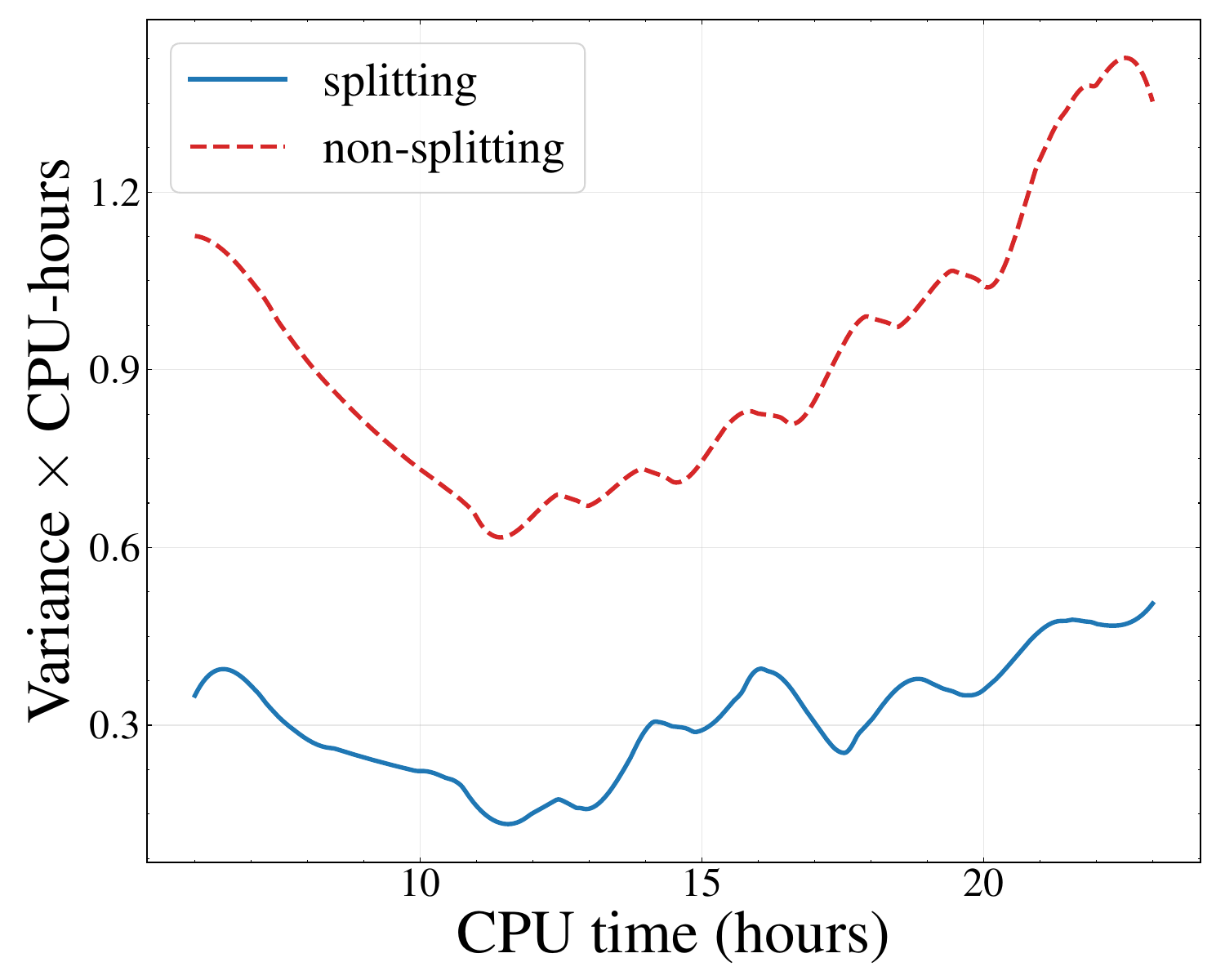}\\
    \makebox[0.235\textwidth]{\small (a) BDR of php}\hfill\makebox[0.235\textwidth]{\small (b) variance of php}\\[3pt]
    \includegraphics[width=0.235\textwidth]{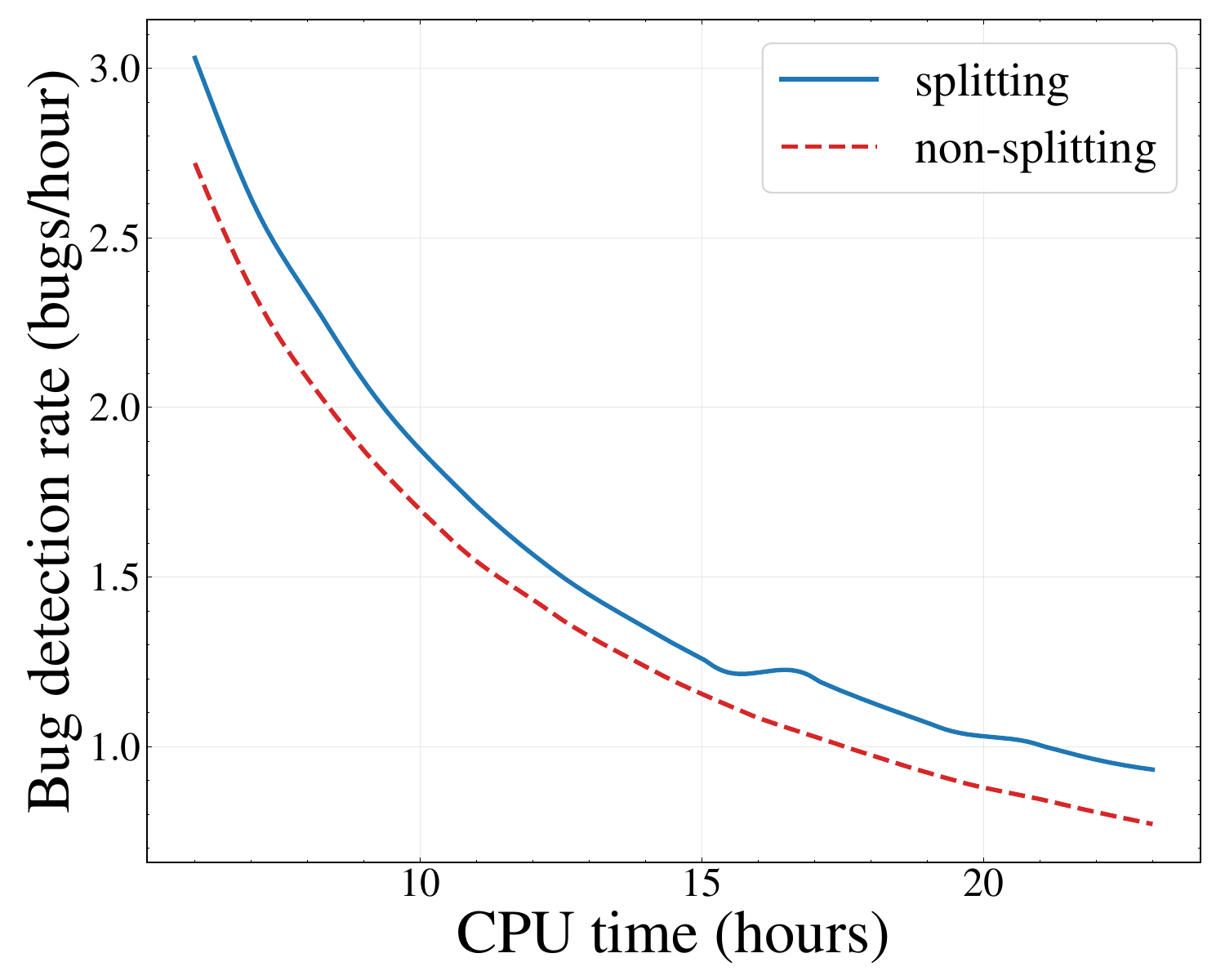}\hfill\includegraphics[width=0.235\textwidth]{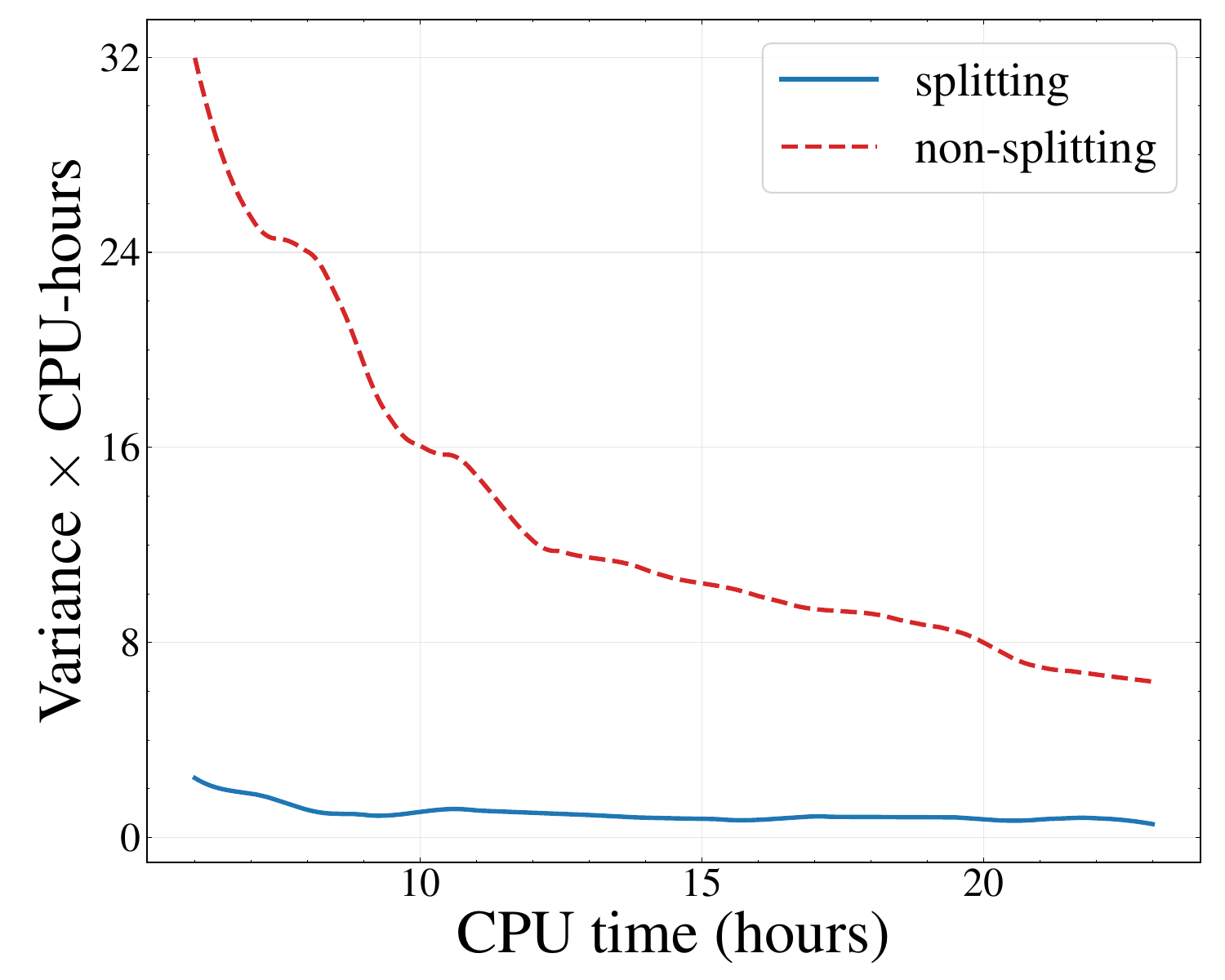}\\
    \makebox[0.235\textwidth]{\small (c) BDR of stb}\hfill\makebox[0.235\textwidth]{\small (d) variance of stb}
    \caption{Ablation study for fuzzer aflplusplus on php and stb (companion to Figure~\ref{fig:ablation_aflplusplus_1}).}
    \label{fig:ablation_aflplusplus_full}
\end{figure}

\begin{figure}[htb]
    \centering
    \includegraphics[width=0.235\textwidth]{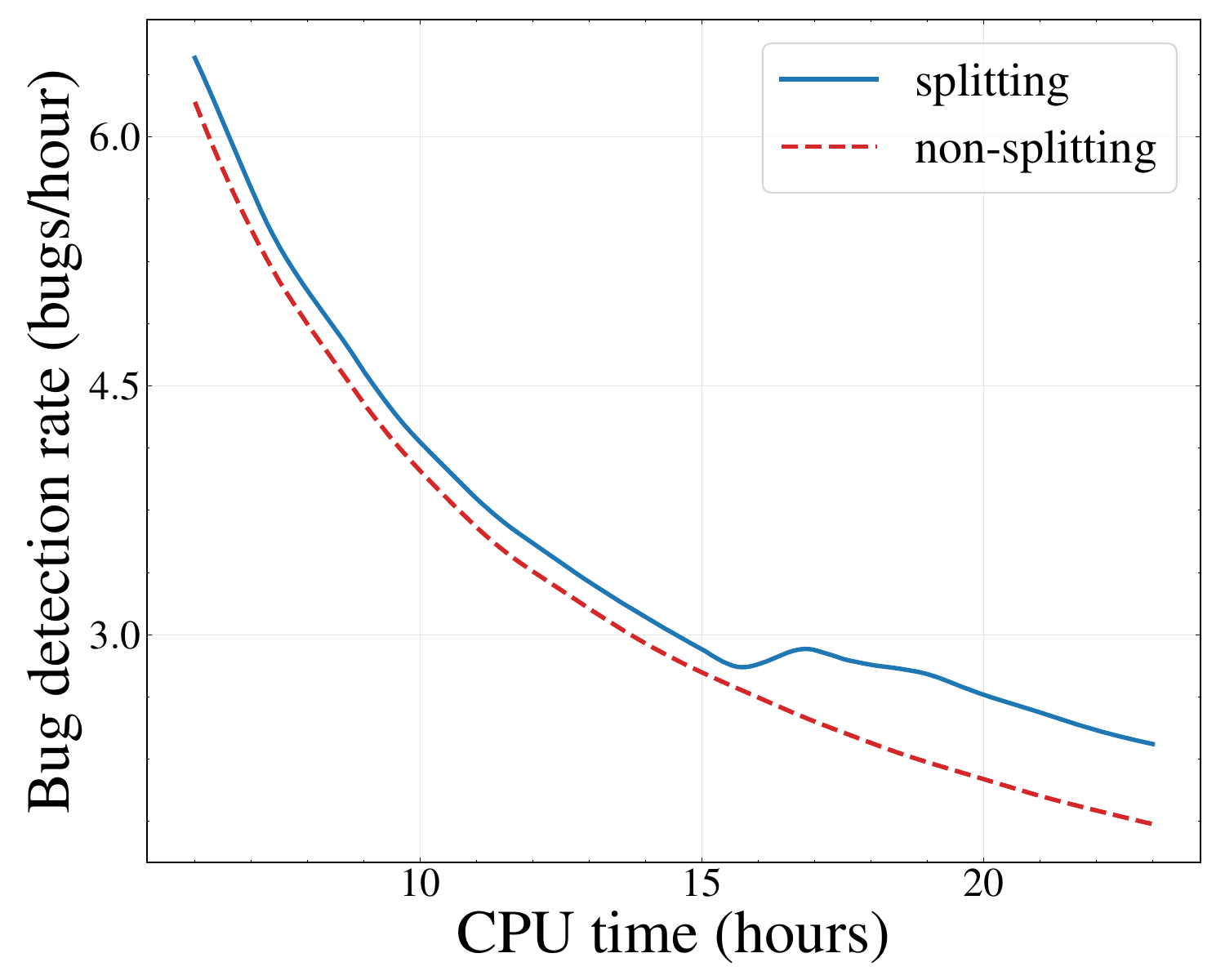}\hfill\includegraphics[width=0.235\textwidth]{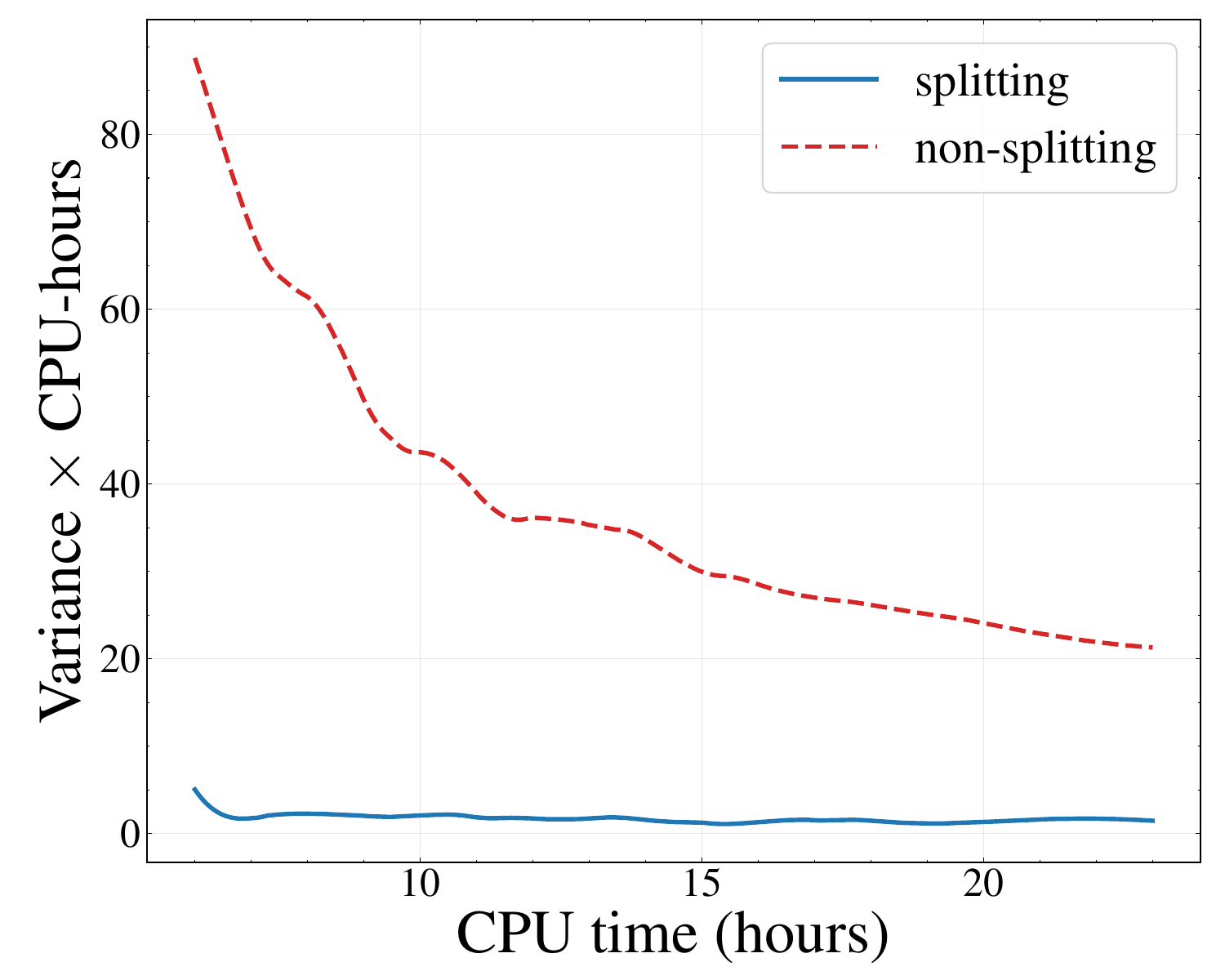}\\
    \makebox[0.235\textwidth]{\small (a) BDR of arrow}\hfill\makebox[0.235\textwidth]{\small (b) variance of arrow}\\[3pt]
    \includegraphics[width=0.235\textwidth]{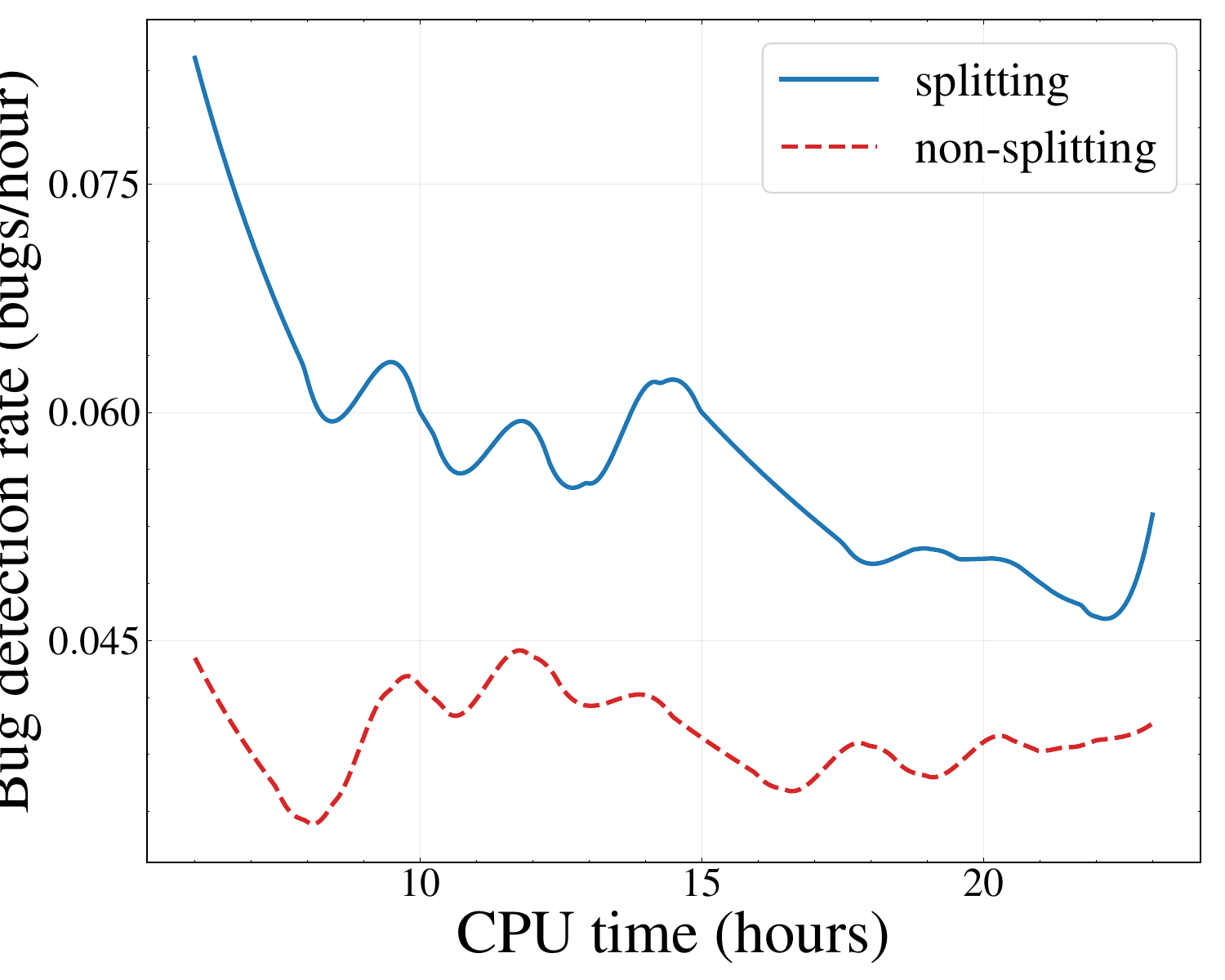}\hfill\includegraphics[width=0.235\textwidth]{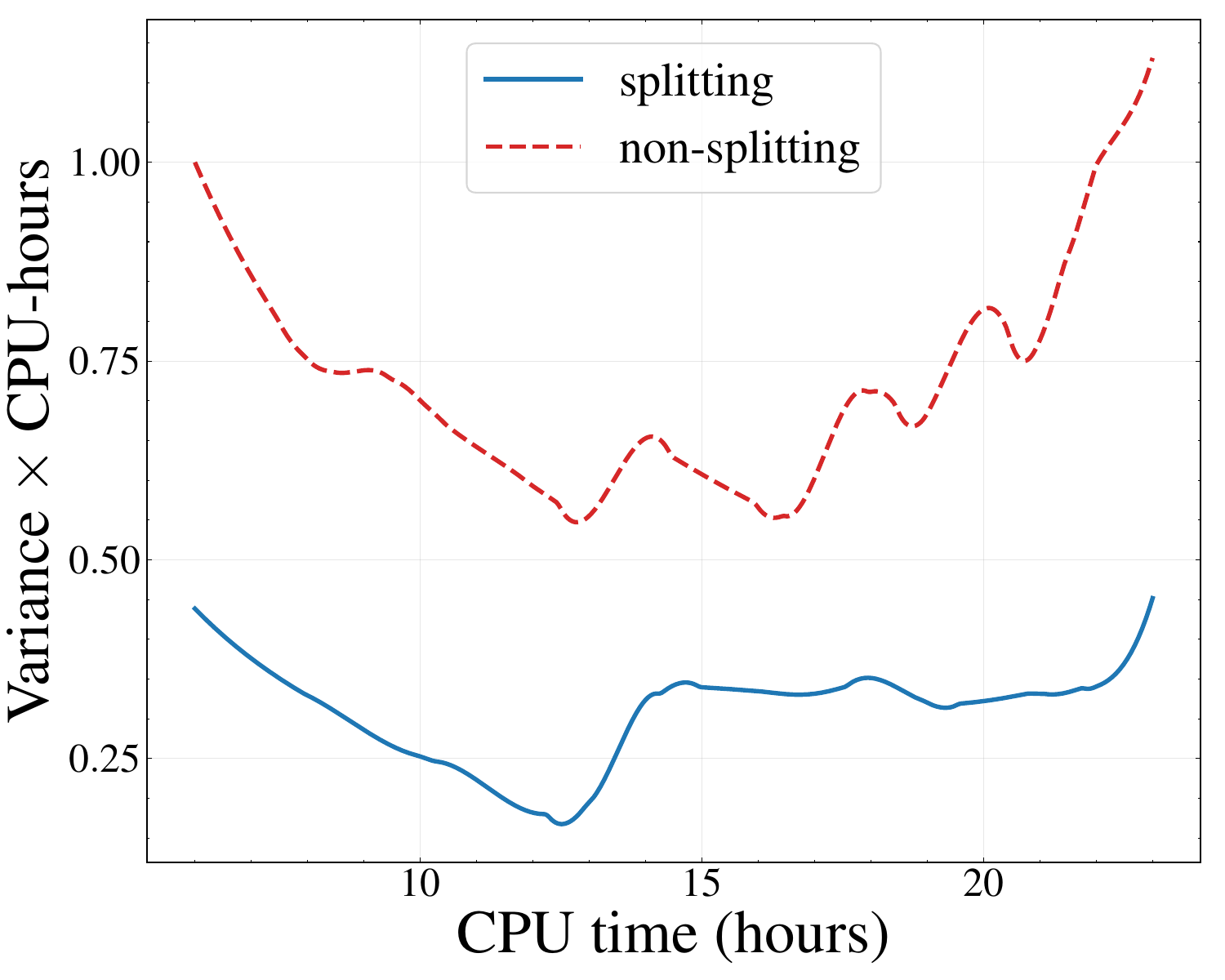}\\
    \makebox[0.235\textwidth]{\small (c) BDR of php}\hfill\makebox[0.235\textwidth]{\small (d) variance of php}\\[3pt]
    \includegraphics[width=0.235\textwidth]{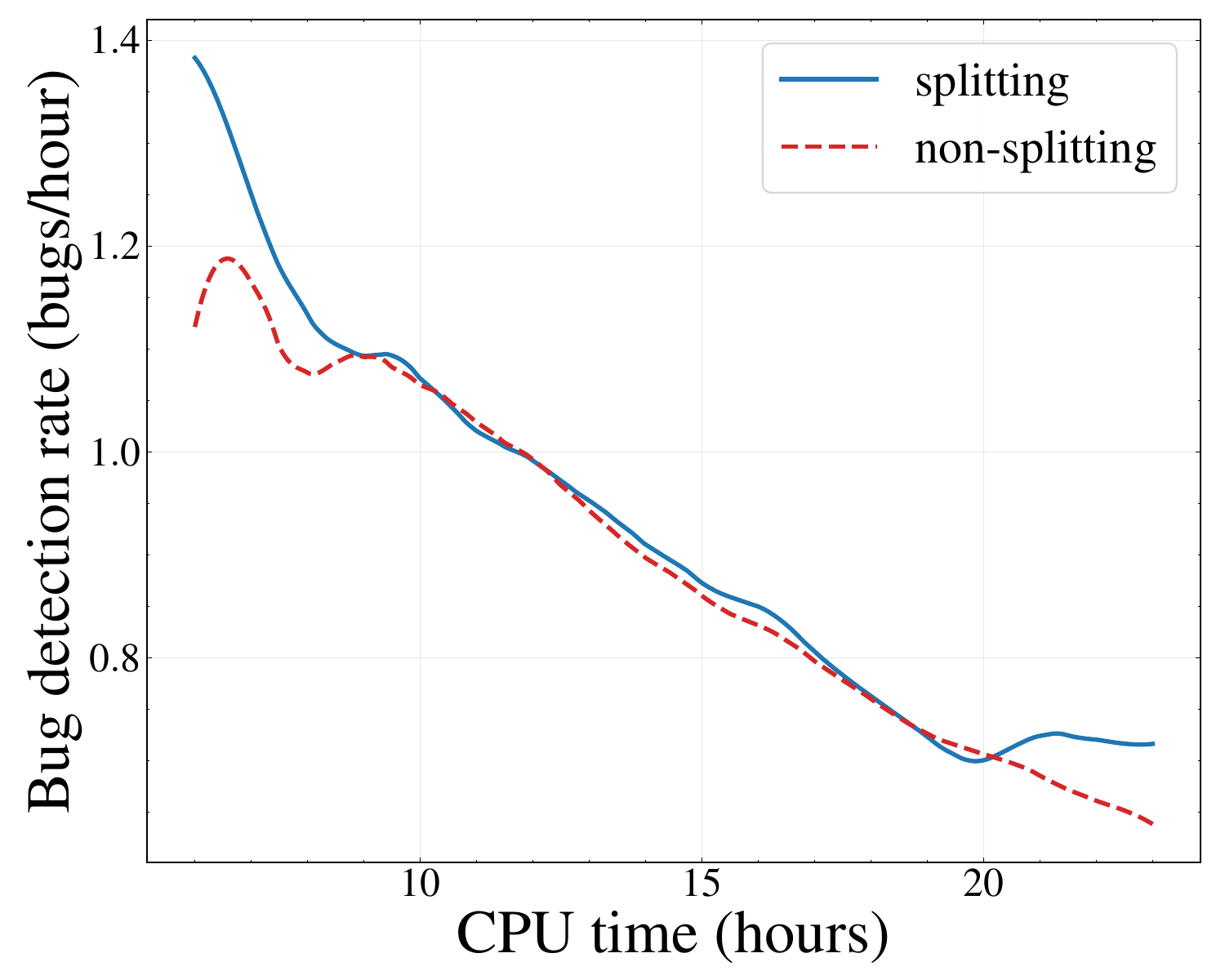}\hfill\includegraphics[width=0.235\textwidth]{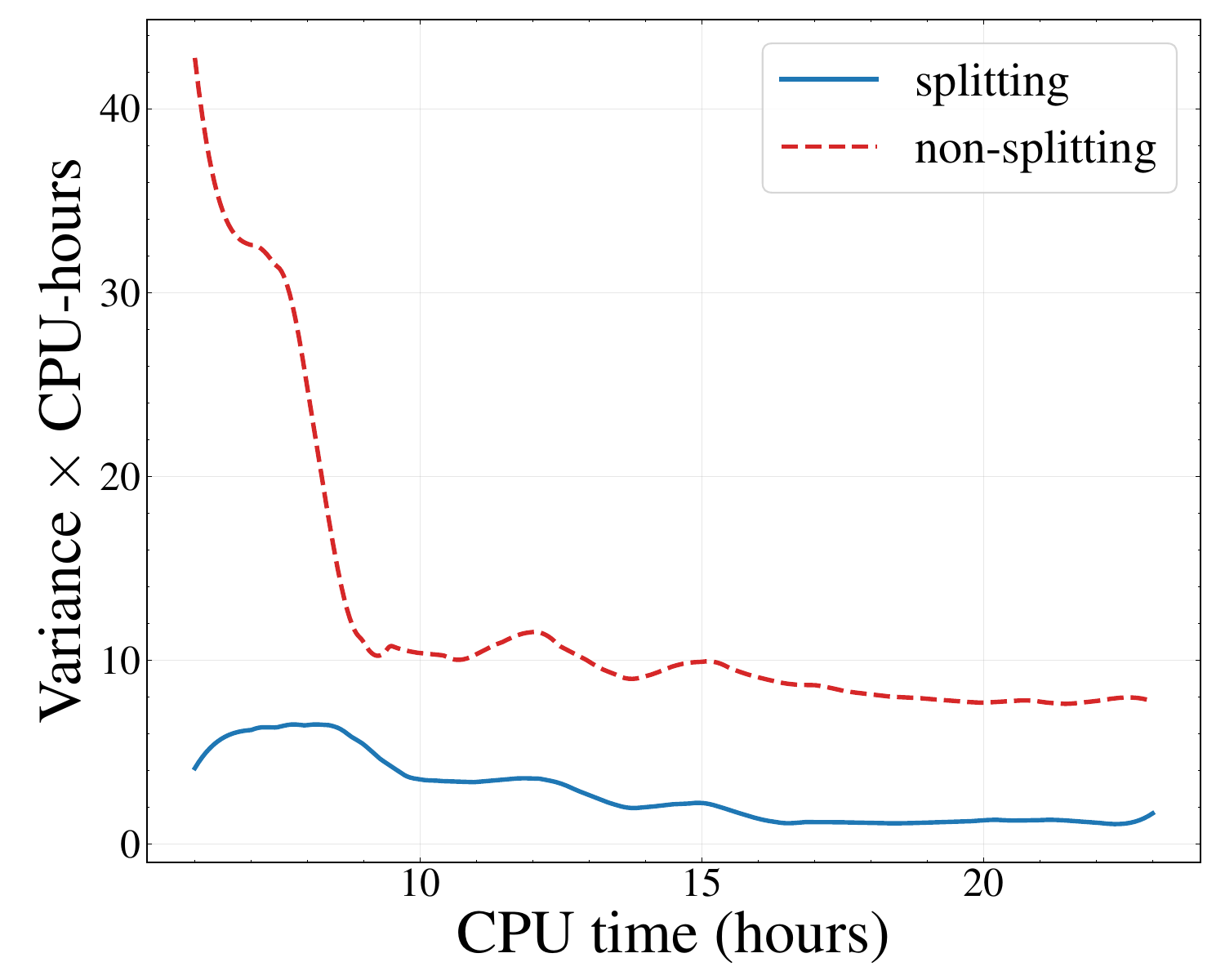}\\
    \makebox[0.235\textwidth]{\small (e) BDR of poppler}\hfill\makebox[0.235\textwidth]{\small (f) variance of poppler}\\[3pt]
    \includegraphics[width=0.235\textwidth]{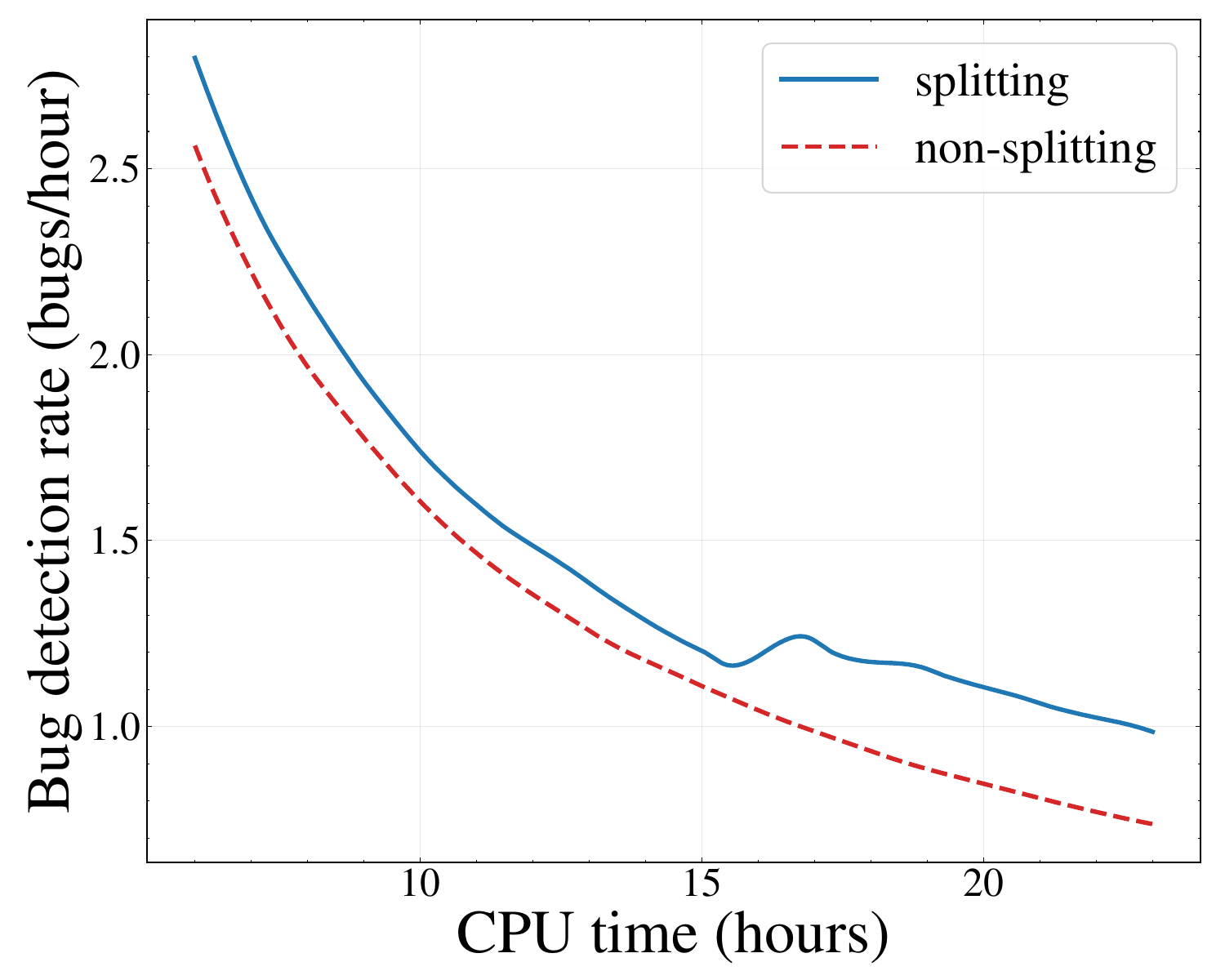}\hfill\includegraphics[width=0.235\textwidth]{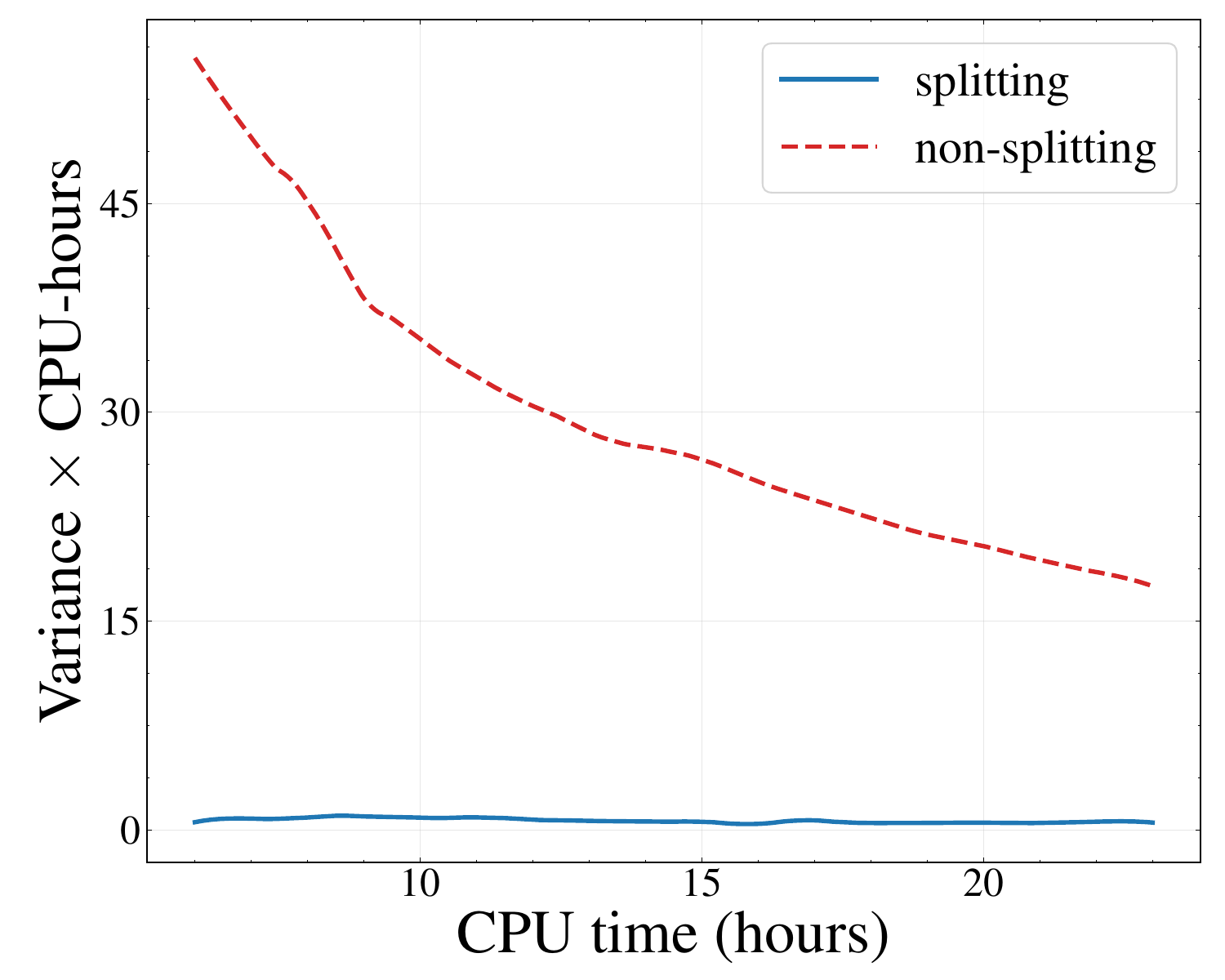}\\
    \makebox[0.235\textwidth]{\small (g) BDR of stb}\hfill\makebox[0.235\textwidth]{\small (h) variance of stb}
    \caption{Ablation study for fuzzer MOpt on arrow, php, poppler and stb.}
    \label{fig:ablation_mopt_1}
\end{figure}

\begin{figure}[htb]
    \centering
    \includegraphics[width=0.235\textwidth]{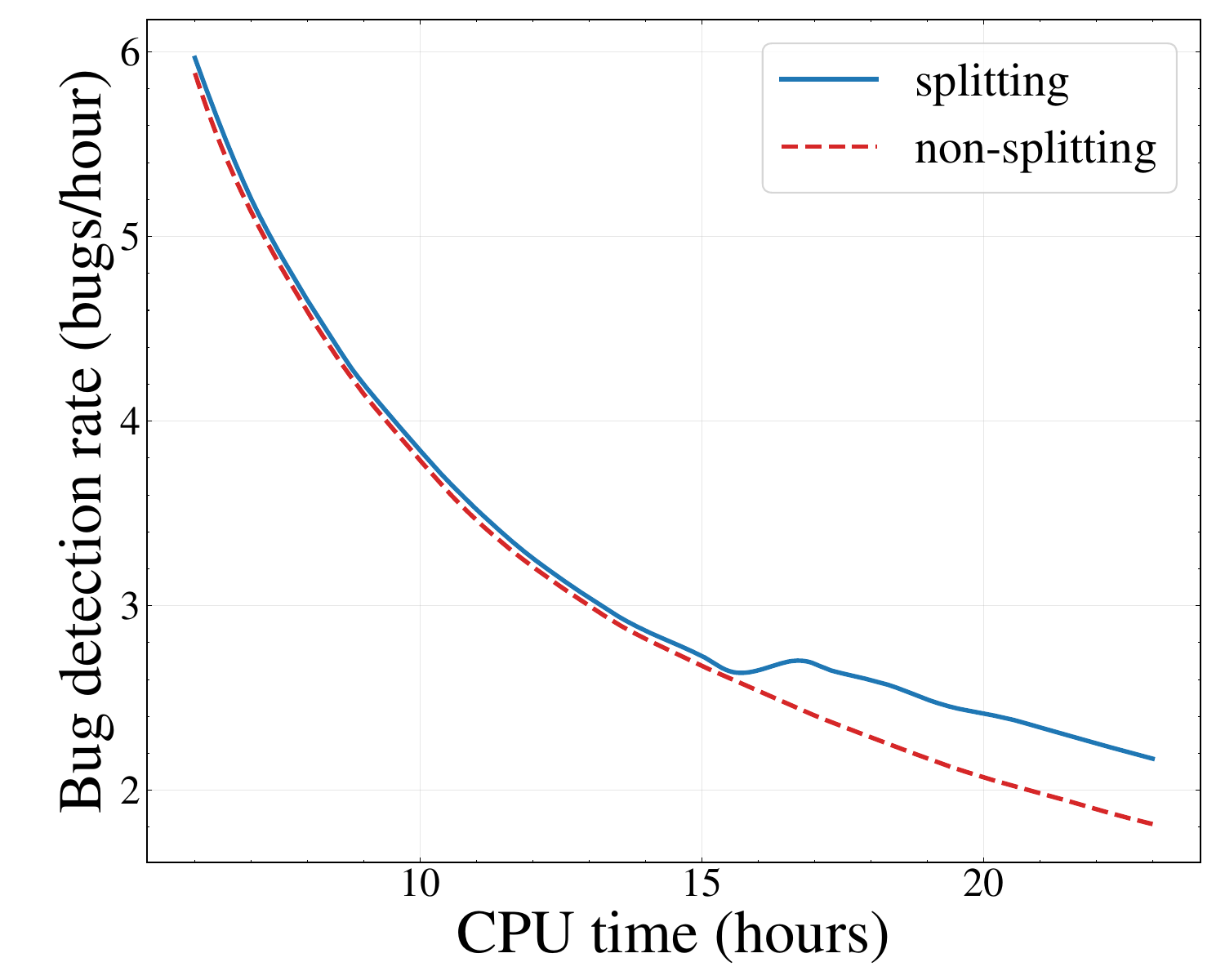}\hfill\includegraphics[width=0.235\textwidth]{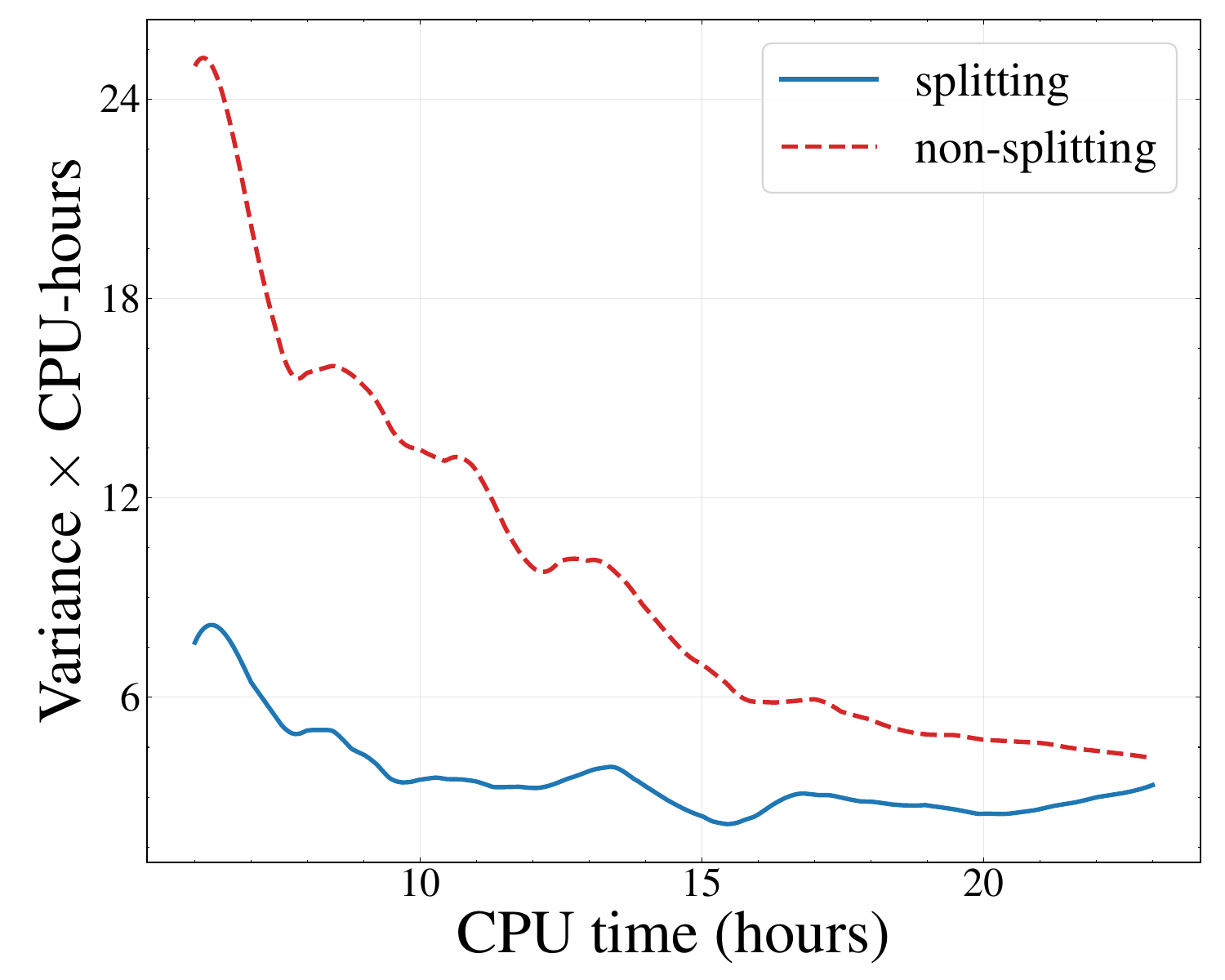}\\
    \makebox[0.235\textwidth]{\small (a) BDR of arrow}\hfill\makebox[0.235\textwidth]{\small (b) variance of arrow}\\[3pt]
    \includegraphics[width=0.235\textwidth]{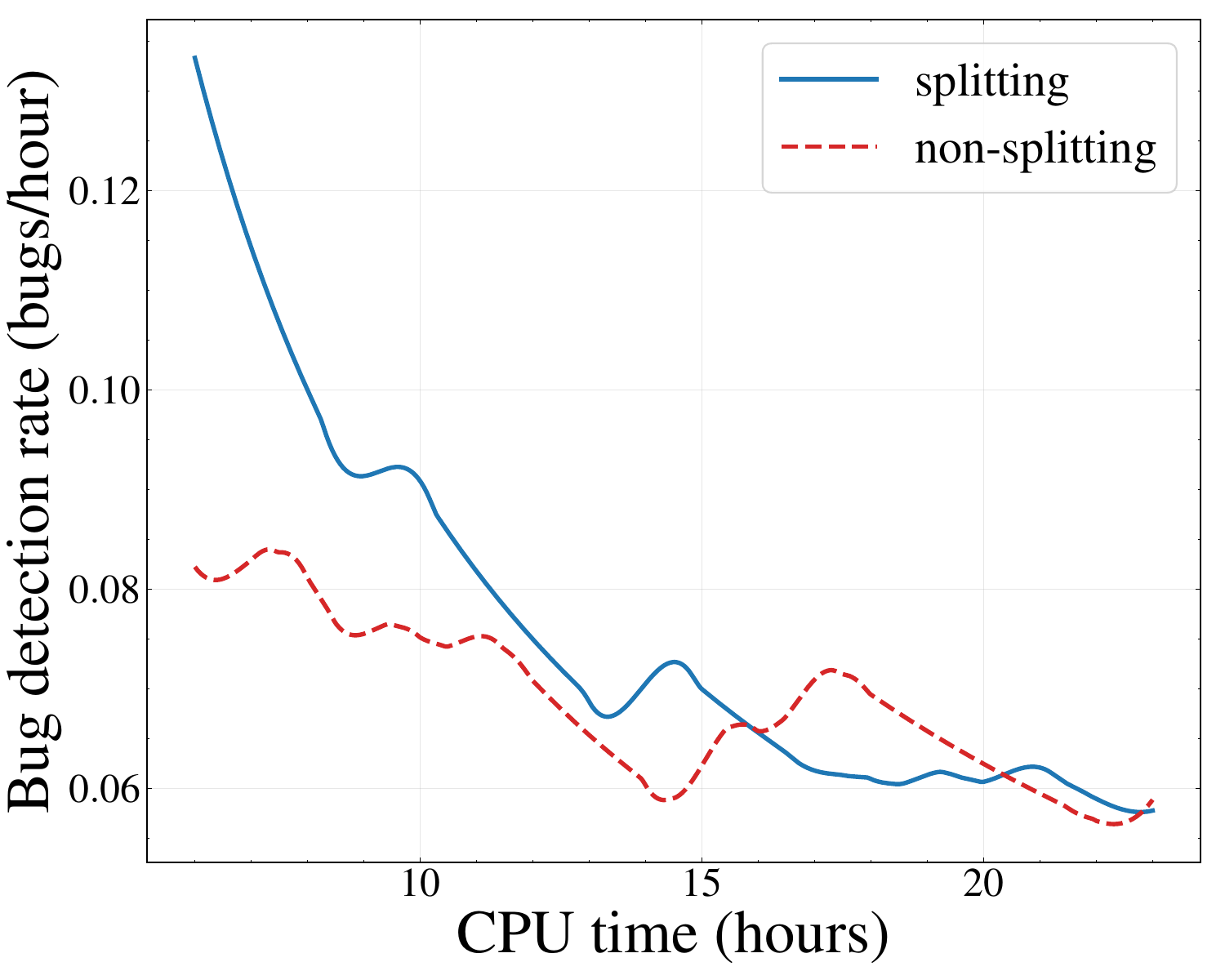}\hfill\includegraphics[width=0.235\textwidth]{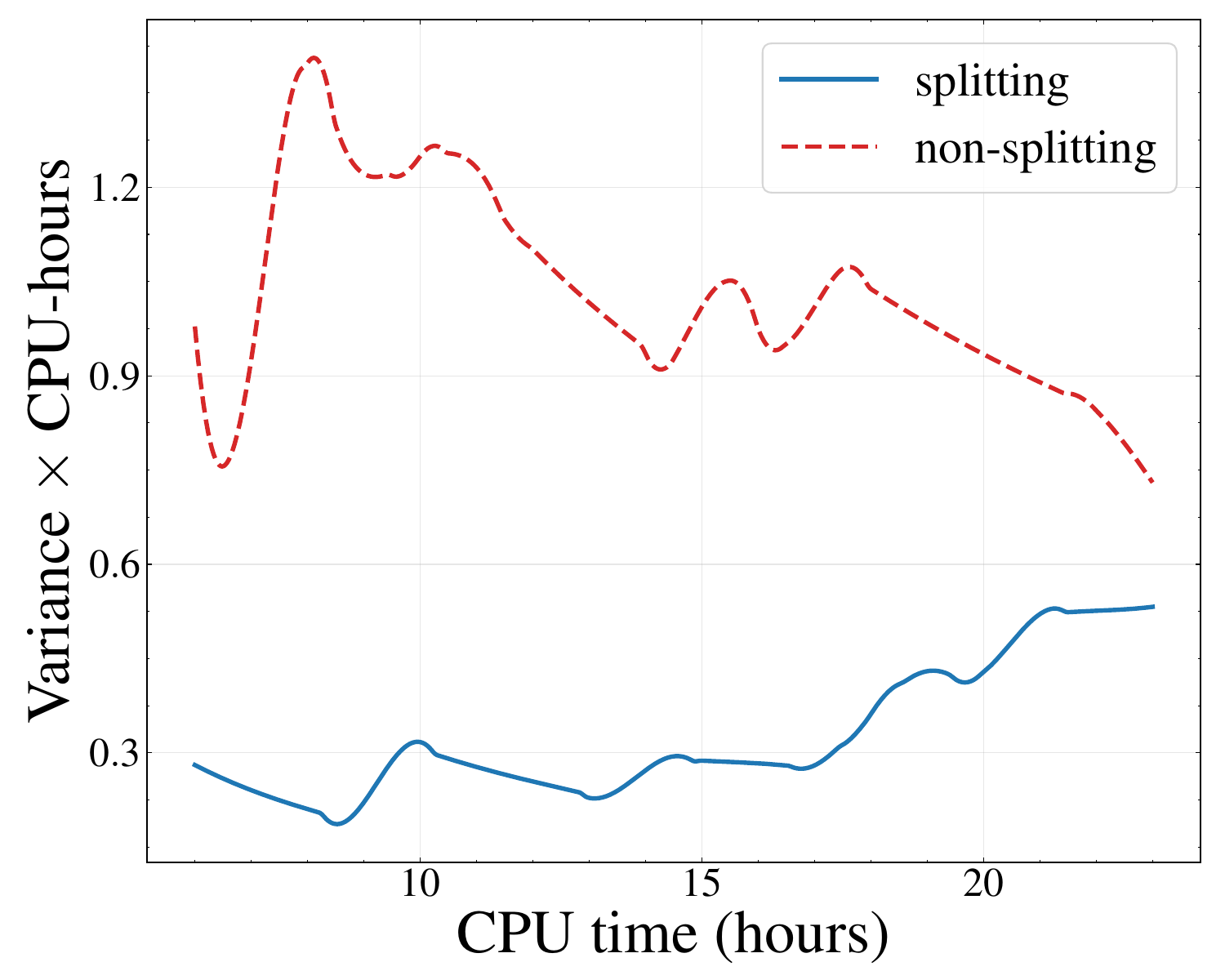}\\
    \makebox[0.235\textwidth]{\small (c) BDR of php}\hfill\makebox[0.235\textwidth]{\small (d) variance of php}\\[3pt]
    \includegraphics[width=0.235\textwidth]{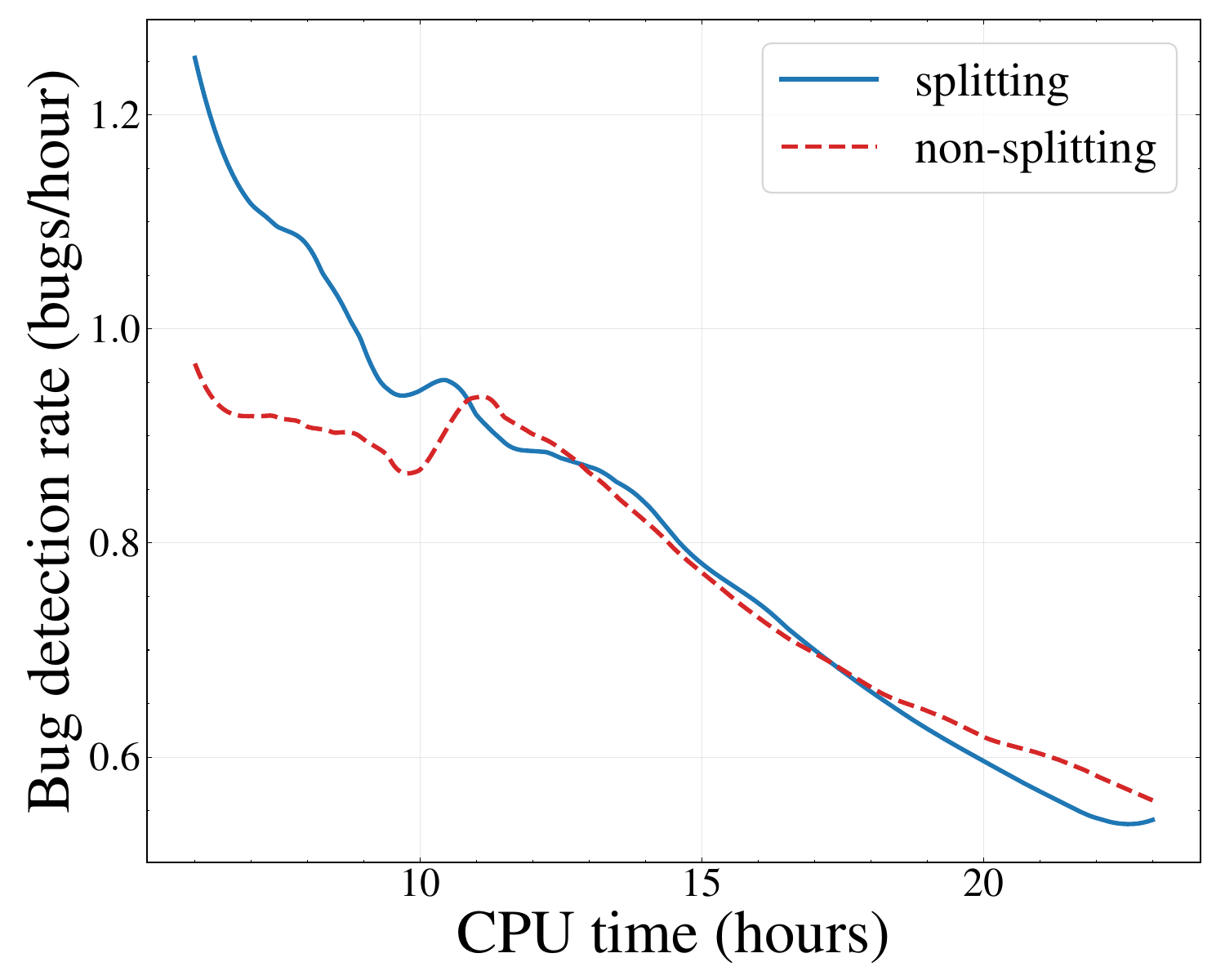}\hfill\includegraphics[width=0.235\textwidth]{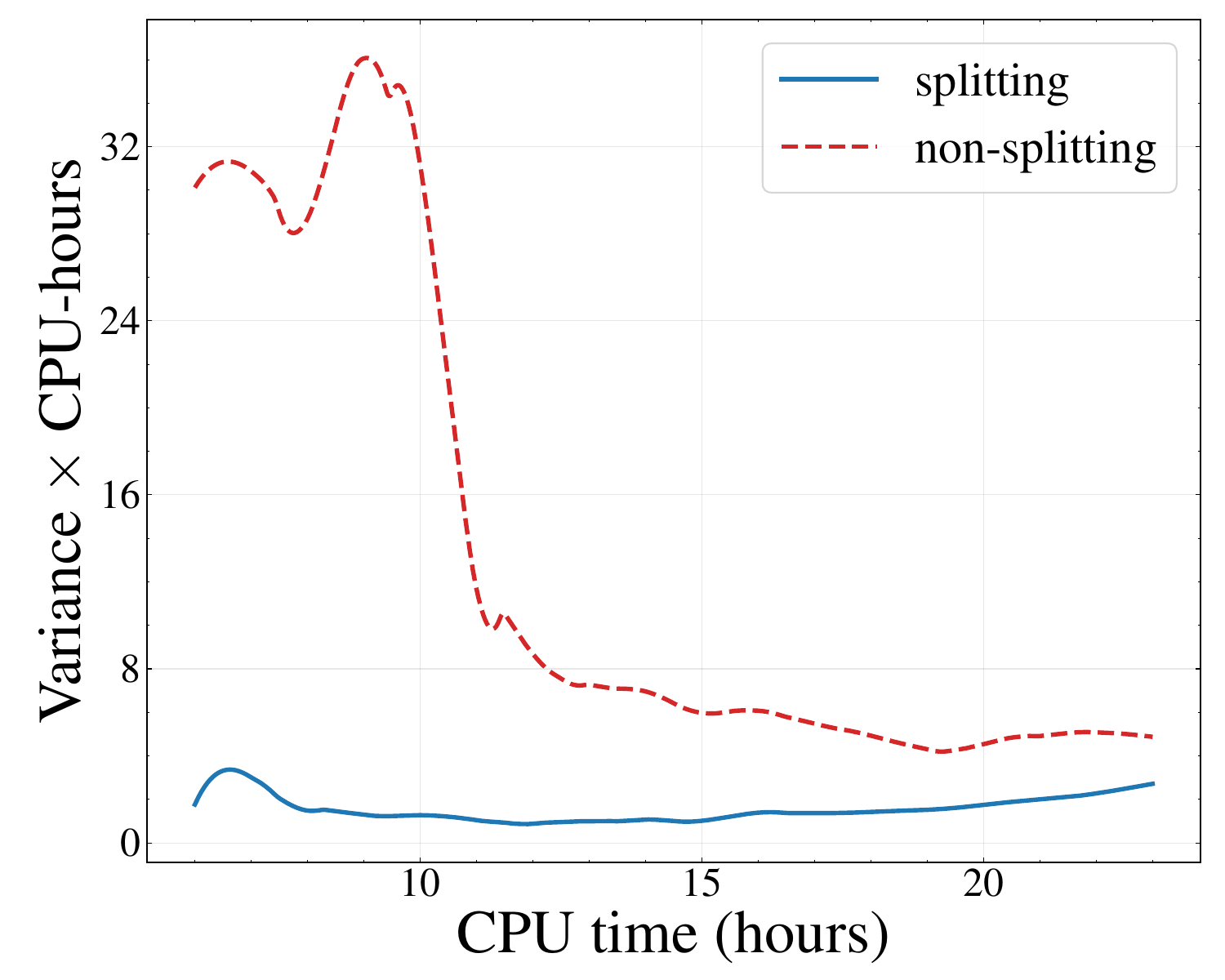}\\
    \makebox[0.235\textwidth]{\small (e) BDR of poppler}\hfill\makebox[0.235\textwidth]{\small (f) variance of poppler}\\[3pt]
    \includegraphics[width=0.235\textwidth]{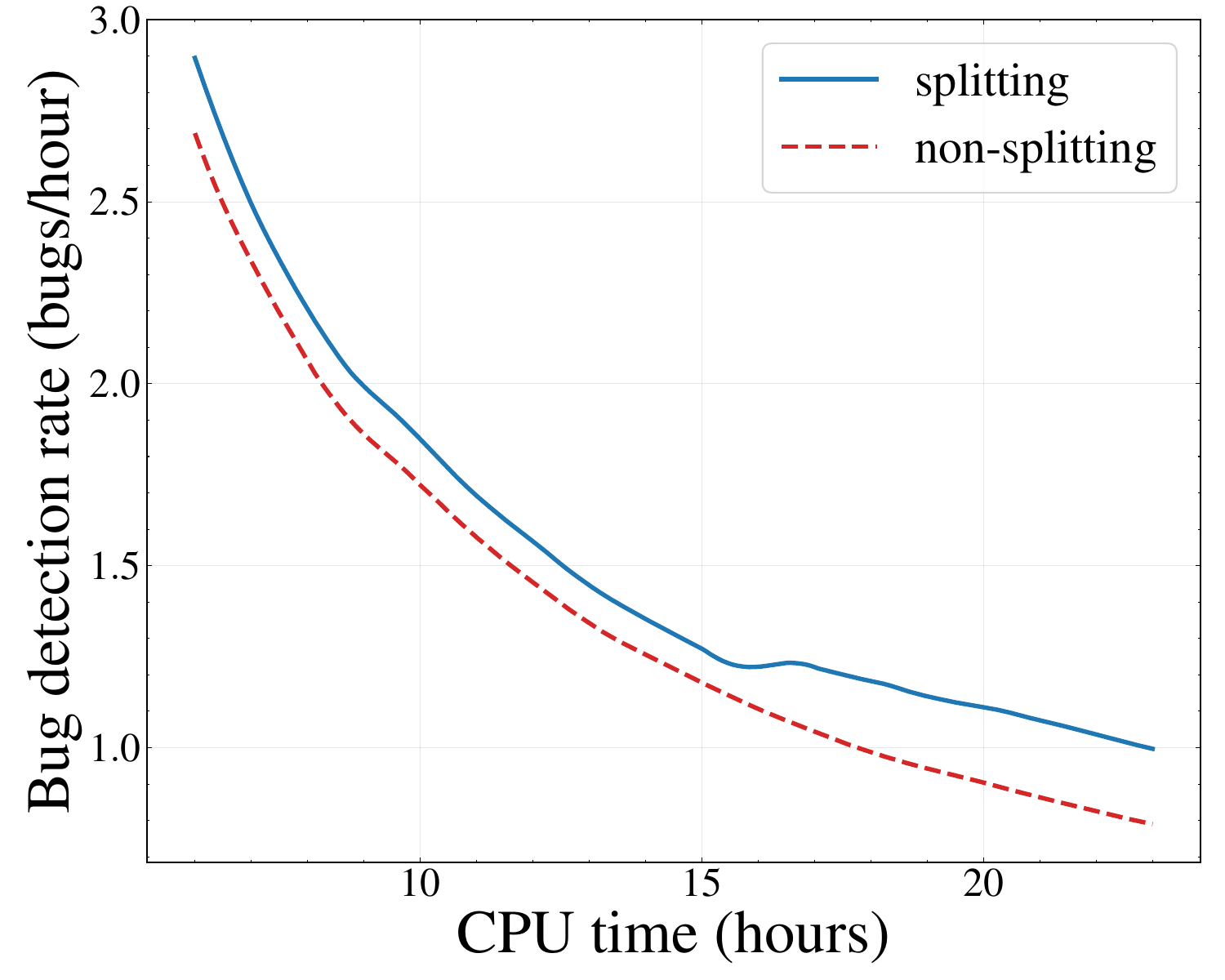}\hfill\includegraphics[width=0.235\textwidth]{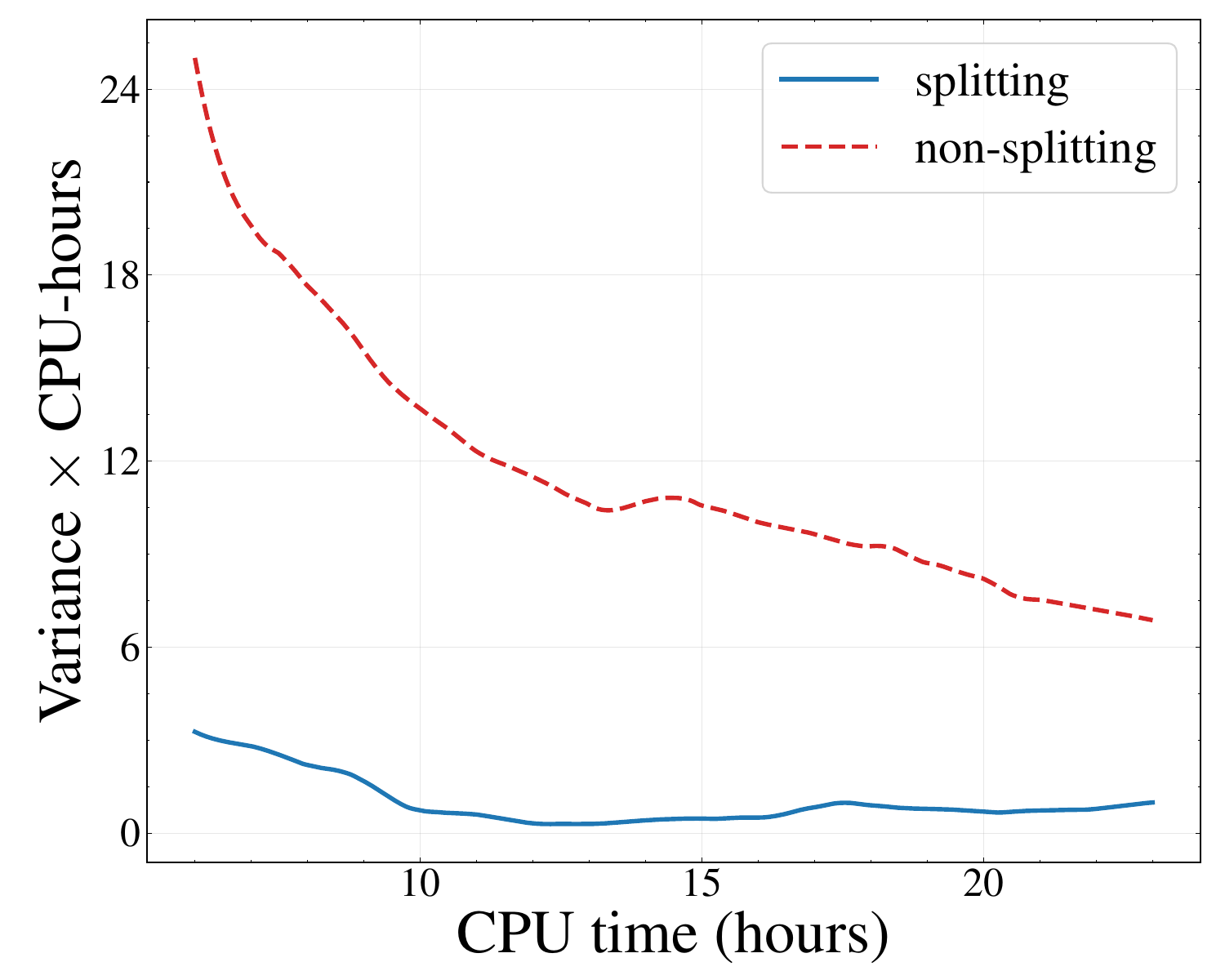}\\
    \makebox[0.235\textwidth]{\small (g) BDR of stb}\hfill\makebox[0.235\textwidth]{\small (h) variance of stb}
    \caption{Ablation study for fuzzer AFLSmart on arrow, php, poppler and stb.}
    \label{fig:ablation_aflsmart_1}
\end{figure}

\fi

%% file: section/proof.tex
\section{Discussions on the Assumptions}\label{app:discussion}

Although our analysis is built on an ergodicity assumption, this condition plays an important practical role: it rules out poorly performing fuzzers whose mutation processes become trapped in restricted regions of the input space, and our empirical results align with the theoretical predictions even when it is not enforced. Such assumptions are standard in the Markov chain literature; relaxing them is interesting future work.

Finally, multiple seeds are accommodated naturally: seeds executed in parallel are equivalent to multiple independent trials, already covered by the framework; seeds cycled over time can be modeled by augmenting the transition matrix with deterministic transitions to new seeds followed by restarts under the original dynamics, a regeneration structure.
\iffullversion
\section{Proof}\label{app:proof}

\subsection{Proof of Theorem~\ref{thm:splitting_var}}\label{app:pf-thm-splitting}
\begin{proof}[Proof of Theorem~\ref{thm:splitting_var}]
Conditional on the split times, and under the independent-replica idealization the theorem assumes,
\begin{align*}
\operatorname{Var}(\hat{\mu}_T)\le\frac{1}{T^{2}}\sum_t \frac{c_t}{k_t}.
\end{align*}
With $k_t\in\{1,2\}$ and $C_{\mathrm{in}},C_{\mathrm{out}}$ the in- and out-region autocovariance masses, we arrive at
\begin{align*}
\mathcal{E}_{\mathrm{split}}\le\frac{T+|S|}{T^{2}}\Bigl(C_{\mathrm{out}}+\tfrac12 C_{\mathrm{in}}\Bigr).
\end{align*}
Because the chain is reversible and positive, every lag-autocovariance of $g=\mathbf 1_B-\pi(f)$ is non-negative, so the same bound with $k_t\equiv1$ holds with equality:
\begin{align*}
\mathcal{E}_{\mathrm{MC}}=\frac1T\bigl(C_{\mathrm{out}}+C_{\mathrm{in}}\bigr),
\end{align*}
which supplies the matching lower bound. Dividing, with $r=C_{\mathrm{in}}/C_{\mathrm{out}}$ and $p=|S|/T$,
\begin{align*}
\frac{\mathcal{E}_{\mathrm{split}}}{\mathcal{E}_{\mathrm{MC}}}
\le\frac{T+|S|}{T}\cdot\frac{C_{\mathrm{out}}+\tfrac12 C_{\mathrm{in}}}{C_{\mathrm{out}}+C_{\mathrm{in}}}
=(1+p)\frac{2+r}{2(1+r)},
\end{align*}
which is below $1$ exactly when $p<r/(2+r)$. On the two-state surrogate $r=(1-p)/p$, so the ratio becomes $(1+p)^2/2$ and
\begin{align*}
\frac{(1+p)^2}{2}<1\iff p<\sqrt2-1 .
\end{align*}
\end{proof}

\subsection{Proof of Proposition~\ref{prop:tree}}\label{app:pf-prop-tree}
\begin{proof}[Proof of Proposition~\ref{prop:tree}]
The tree is fixed, so the variance is taken at that tree throughout. Same-lineage pairs contribute the within-chain autocovariance weights $c_t$, averaged over the $|L_t|$ active leaves. For $\ell\neq j$ with $r=r_{\ell j}$, conditional independence after $r$ gives
\begin{align*}
\operatorname{Cov}(Y_{\ell,t},Y_{j,u})=\operatorname{Cov}\bigl(P^{\,t-r}g(X_r),P^{\,u-r}g(X_r)\bigr).
\end{align*}
Bounding a product by the product of root-mean-square sizes and re-weighting from the fork law to $\pi$ at a cost of at most $\kappa$, we obtain
\begin{align*}
\bigl|\operatorname{Cov}(Y_{\ell,t},Y_{j,u})\bigr|
&\le\kappa\lVert P^{\,t-r}g\rVert_\pi\lVert P^{\,u-r}g\rVert_\pi\\
&\le\kappa\,\rho^{\,(t-r)+(u-r)}\operatorname{Var}_\pi(g),
\end{align*}
which gives the stated bound.
\end{proof}

\subsection{Proof of Theorem~\ref{bug_rate_splitting}}\label{app:pf-thm-bugrate}
\begin{proof}[Proof of Theorem \ref{bug_rate_splitting}]
Fix any realized split tree. At time $t$ there are $k_t\ge1$ live branches, indexed $\ell=1,\dots,k_t$, and the designated continuation path occupies branch $\ell_0(t)$; relabel so that $\ell_0(t)=1$. The two rates are
\[
\widehat P_{\mathcal T}=\frac1T\sum_{t=1}^{T}\frac1{k_t}\sum_{\ell=1}^{k_t}Y_{\ell,t},
\qquad
\widehat P_{0}=\frac1T\sum_{t=1}^{T}Y_{1,t}.
\]
Subtracting termwise and using $\frac1{k_t}\sum_{\ell=1}^{k_t}Y_{\ell,t}-Y_{1,t}=\frac1{k_t}\sum_{\ell=2}^{k_t}(Y_{\ell,t}-Y_{1,t})$ gives
\[
\widehat P_{\mathcal T}-\widehat P_{0}
=\frac1T\sum_{t=1}^{T}\frac1{k_t}\sum_{\ell=2}^{k_t}\bigl(Y_{\ell,t}-Y_{1,t}\bigr),
\]
which is the first identity; when $k_t=1$ the inner sum is empty. For the raw counts, the tree records every branch and the designated path only branch $\ell_0(t)$, so
\[
N_{\mathcal T}-N_0=\sum_{t=1}^{T}\sum_{\ell\neq\ell_0(t)}Y_{\ell,t}\ \ge\ 0,
\]
each summand being an indicator.
\end{proof}

\subsection{Proof of Proposition~\ref{prop:sizebias}}\label{app:pf-prop-sizebias}
\begin{proof}[Proof of Proposition~\ref{prop:sizebias}]
By definition $\mu_m(dx)=m(x)\mu(dx)/\mathbb{E}_\mu[m]$, so
\[
\begin{aligned}
\mathbb{E}_{\mu_m}[V]&=\frac{\mathbb{E}_\mu[mV]}{\mathbb{E}_\mu[m]},\quad\text{hence}\\
\mathbb{E}_{\mu_m}[V]-\mathbb{E}_{\mu}[V]&=\frac{\mathbb{E}_\mu[mV]-\mathbb{E}_\mu[m]\,\mathbb{E}_\mu[V]}{\mathbb{E}_\mu[m]},
\end{aligned}
\]
and the numerator is $\operatorname{Cov}_\mu(m,V)$ by definition of covariance. Since $\mathbb{E}_\mu[m]>0$, the difference is strictly positive exactly when $\operatorname{Cov}_\mu(m,V)>0$.
\end{proof}

\subsection{Proof of Theorem~\ref{thm:main_body}}\label{app:pf-thm-main}
\begin{proof}[Proof of Theorem~\ref{thm:main_body}]
\emph{Part 1.} The roots are independent, so $S^2_{M,R}(T)$ is the unbiased sample variance of the i.i.d.\ variables $Z_{m,R}(T)$. Expanding $\operatorname{Var}(S^2)$ in the second and fourth central moments, we obtain
\begin{align*}
\operatorname{Var}\bigl(T\,S^2_{M,R}(T)\bigr)=\frac1M\Bigl[\mu_{4,T}-\tfrac{M-3}{M-1}(v^\star_T)^2\Bigr].
\end{align*}
Since $\mathbb E[T\,S^2_{M,R}(T)]=v^\star_T$, Chebyshev's inequality gives, with probability at least $1-\alpha$,
\begin{align*}
\bigl|T\,S^2_{M,R}(T)-v^\star_T\bigr|\le E_M(\alpha).
\end{align*}

\emph{Part 2.} Expanding the variance of a time average, we arrive at
\begin{align*}
T\,v_R(T)&=\Gamma(0)+2\sum_{h=1}^{T-1}\bigl(1-\tfrac hT\bigr)\Gamma(h),\\
\sigma^2&=\Gamma(0)+2\sum_{h\ge1}\Gamma(h).
\end{align*}
Subtracting, we derive
\begin{align*}
B_R(T)&=\Bigl|2\sum_{h=1}^{T-1}\tfrac hT\,\Gamma(h)+2\sum_{h\ge T}\Gamma(h)\Bigr|\\
&\le\frac{2}{T}\sum_{h\ge1}h\,C\rho^{\,h}+2\sum_{h\ge T}C\rho^{\,h}
=\frac{2C\rho}{T(1-\rho)^{2}}+\frac{2C\rho^{T}}{1-\rho}.
\end{align*}

\emph{Part 3.} Under either start, $T\operatorname{Var}(Z_{1,R}(T))=\tfrac1T\sum_{s,t}\operatorname{Cov}(Y_s,Y_t)$. Hence
\begin{align*}
b_{\mathrm{start}}(T)&\le\frac1T\sum_{s,t}\bigl\lvert\operatorname{Cov}_{\pi_0}(Y_s,Y_t)-\Gamma(t-s)\bigr\rvert\\
&\le\frac1T\Bigl[\sum_{s}C_0\rho^{\,s-1}+2\sum_{s<t}C_0\rho^{\,s-1}\rho^{\,t-s}\Bigr]\\
&=\frac{C_0}{T(1-\rho)}+\frac{2C_0\rho}{T(1-\rho)^{2}}.
\end{align*}
\emph{Part 4.} Combining Parts 1 to 3 by the triangle inequality, we obtain $A_R(T)\le E_M(\alpha)+b_{\mathrm{start}}(T)+U_{B,R}(T)$. Applying Part 1 at level $\alpha/|\mathcal R|$ and a union bound completes the proof.
\end{proof}

\subsection{The exact starting-state term of Theorem~\ref{thm:main_body}}\label{app:pf-start}
\begin{proof}[Proof of the exact starting-state term in Theorem~\ref{thm:main_body}, part~3]
For the two-state chain the $t$-step marginal from start mass $p_0$ is $m_t=\pi_1+D\lambda^{t-1}$ with $D=p_0-\pi_1$. Hence
\begin{align*}
\operatorname{Var}(f(X_t))&=m_t(1-m_t),\\
\operatorname{Cov}(f_s,f_t)&=m_s\bigl(\pi_1+(1-\pi_1)\lambda^{t-s}\bigr)-m_sm_t,\qquad s<t.
\end{align*}
Subtracting the stationary case $D=0$ and summing over $1\le s\le t\le T$, the $\lambda^{t-s}$ cross-terms cancel and we obtain
\begin{align*}
\operatorname{Var}_{\pi_0}\bigl(\textstyle\sum f\bigr)-\operatorname{Var}_\pi\bigl(\textstyle\sum f\bigr)=D(1-2\pi_1)L_T-D^2Q_T,
\end{align*}
where $L_T=\sum_{t}(2t-1)\lambda^{t-1}$ and $Q_T=\sum_t\lambda^{2(t-1)}+2\sum_{s<t}\lambda^{(t-1)+(s-1)}=\bigl(\sum_{i=0}^{T-1}\lambda^i\bigr)^2$. Dividing by $T$ gives $b_{\mathrm{start}}(T)$.

Write $g(D)=D(1-2\pi_1)L_T-D^2Q_T$, so that $b_{\mathrm{start}}(T)=|g(D)|/T$. Since $g$ is a downward parabola, the maximum of $|g|$ over the range $D\in[-\pi_1,1-\pi_1]$ of any start is attained at
\begin{align*}
D_\star=\frac{(1-2\pi_1)L_T}{2Q_T}\quad\text{or at an endpoint,}
\end{align*}
so $\max_{p_0}b_{\mathrm{start}}(T)$ is available in closed form and dominates $b_{\mathrm{start}}(T)$ for every start.
\end{proof}

\subsection{Proof of Lemma~\ref{lem:gap-transfer}}\label{app:pf-lem-gap}
\begin{proof}[Proof of Lemma~\ref{lem:gap-transfer}]
Write $g=f-\pi(f)$. By definition,
\begin{align*}
\operatorname{Cov}_\pi\bigl(f(X_0),f(X_t)\bigr)=\langle g,P^{t}g\rangle_\pi .
\end{align*}
For a reversible chain this quantity contracts by at least $1-\gamma$ at each step, so
\begin{align*}
\bigl|\langle g,P^{t}g\rangle_\pi\bigr|\le(1-\gamma)^{t}\lVert g\rVert_\pi^{2}=(1-\gamma)^{t}\operatorname{Var}_\pi(f).
\end{align*}
Summing the geometric series gives the stated bound on $\sigma^2$.
\end{proof}

\subsection{Robustness of \texorpdfstring{$U_{A,R}$}{U\_A} to the plug-in gap and to drift}\label{app:gap-robust}
Write $U(g)$ for the computable form of Theorem~\ref{thm:main_body} behind the reported numbers, in which the initialization term is written in its worst-case form through the spectral gap, $U(g)=\tfrac{1}{\sqrt M}+C_1\tfrac{(1-g)(2-g)}{g^2}+\tfrac{1}{T}\tfrac{1}{g^2}+\tfrac{(1-g)^T}{g}$, with $C_1=\tfrac14+\tfrac{1}{16\pi(1)^2}>0$, $g=\gamma^X_R\in(0,1]$, $M,T$ fixed.

\noindent\emph{Monotonicity.} Termwise, $\tfrac{d}{dg}\tfrac{(1-g)(2-g)}{g^2}=\tfrac{3g-4}{g^3}<0$, $\tfrac{d}{dg}\tfrac{1}{g^2}=-\tfrac{2}{g^3}<0$, and $\tfrac{d}{dg}\tfrac{(1-g)^T}{g}=-\tfrac{(1-g)^{T-1}((T-1)g+1)}{g^2}\le0$ on $(0,1]$. Hence $U'(g)<0$: $U$ is strictly decreasing. Therefore $\hat\gamma\le\gamma\Rightarrow U(\hat\gamma)\ge U(\gamma)\ge A_R(T)$ on the $1-\delta$ event, i.e.\ under-estimating the gap is conservative.

\noindent\emph{Lipschitz sensitivity.} On $[g_0,1]$, $|U_2'|\le 4C_1/g_0^3$, $|U_3'|\le 2/(Tg_0^3)$, and (using $\sup_{n\ge0}n x^n\le \tfrac{1}{e\ln(1/x)}\le\tfrac{1}{eg}$ for $x=1-g$) $|U_4'|\le (1+e^{-1})/g_0^2$, so $|U(\hat g)-U(g)|\le L(g_0)|\hat g-g|$ with $L(g_0)=\tfrac{4C_1+2/T}{g_0^3}+\tfrac{1+e^{-1}}{g_0^2}=\Theta(g_0^{-3})$. The bias $b(T,g)=U(g)-M^{-1/2}$ has the same envelope, so the closed-form $M(\gamma)=\log(2/\delta)/(c(\varepsilon-b)^2)$ obeys $|M(\hat\gamma)-M(\gamma)|\le \tfrac{2\log(2/\delta)}{c\,m_0^3}L(g_0)|\hat\gamma-\gamma|$ for $m_0=\varepsilon-\max(b(T,\hat\gamma),b(T,\gamma))>0$; and $\hat\gamma\le\gamma\Rightarrow M(\hat\gamma)\ge M(\gamma)$ (under-estimation only raises the prescribed $M$).

\noindent\emph{Validity when the output is not a two-state chain.} If $\hat\gamma\le\gamma_{\mathrm{input}}$, that is, the two-state fit under-states the input chain's gap, then by Lemma~\ref{lem:gap-transfer} $|\mathrm{Cov}(Y_0,Y_t)|\le\mathrm{Var}_\pi(f)(1-\gamma_{\mathrm{input}})^t\le\mathrm{Var}_\pi(f)(1-\hat\gamma)^t$ with $\mathrm{Var}_\pi(f)=\pi(1)\pi(0)\le\tfrac14$, so the autocovariance-tail terms hold with $\gamma:=\hat\gamma$ and $U(\hat\gamma)$ is a valid conservative bound, with no assumption that the bug indicator is itself a Markov chain.

\noindent\emph{What is estimated under drift.} Under piecewise homogeneity, $\sigma_w^2=\mathrm{Var}_{\pi_w}(f)\phi(\gamma_w)$, $\phi(\gamma)=\tfrac{2-\gamma}{\gamma}$, $|\phi'|=2/\gamma^2$. For $\bar\sigma^2=\sum_w a_w\sigma_w^2$ and any $w^\star$, $|\bar\sigma^2-\sigma_{w^\star}^2|\le\max_w|\sigma_w^2-\sigma_{w^\star}^2|\le \tfrac{2\bar V}{g_0^2}|\Delta\gamma|+\phi(g_0)V_\Delta$ with $\bar V=\max_w\mathrm{Var}_{\pi_w}(f)\le\tfrac14$ and $V_\Delta=\max_{w,w'}|\mathrm{Var}_{\pi_w}(f)-\mathrm{Var}_{\pi_{w'}}(f)|$. Applying Theorem~\ref{thm:main_body} within each window and a union bound over the $n_{\mathrm{win}}$ windows gives $A_R^{(w)}\le U(\gamma_w)$ simultaneously w.p.\ $1-\delta$, hence aggregate deviation $\le\max_w U(\gamma_w)=U(g_0)$ by monotonicity. (Shorter windows reduce the drift bias but inflate $U$ through the $1/W$ and $(1-\gamma)^W$ horizon terms, a bias/variance trade-off in $W$.)

\subsection{Bug Detection Rate of Splitting}

\noindent Both identities are pathwise: they hold for the realized tree, with no assumption on how the fork times were chosen. Setting $k_t\in\{1,2\}$ recovers the two-way fork, for which the first identity reads $\frac1{2T}\sum_{t\in A}(Y_{2,t}-Y_{1,t})$ with $A=\{t:k_t=2\}$.

\fi

\smallskip
\noindent Two consequences. First, the bound depends on the tree only through the ancestry times $r_{\ell j}$, so it covers any shape: $K$ stages, branching factor $B$, up to $B^{K}$ concurrent leaves, and descendants that persist to the horizon. Second, $\kappa$ makes the effect of the starting point explicit: a fork occurs at a bug-trigger state, whose law is not $\pi$, and on the two-state surrogate $\kappa=1/\pi(B)$. Setting $K=1$, $B=2$ with a single fork and $\kappa=1$ recovers Theorem~\ref{thm:splitting_var} and hence the closed-form rule $p<\sqrt2-1$.